\newcolumntype{Y}{>{\centering\arraybackslash}X}
\newcommand\notsotiny{\@setfontsize\notsotiny{6.31415}{7.1828}}
\DeclareMathOperator*{\argmin}{arg\,min}
\newtheorem{theorem}{Theorem}
\newtheorem{lemma}{Lemma}
\newtheorem{definition}{Definition}
\newtheorem{proposition}{Proposition}
\newtheorem{corollary}{Corollary}
\newtheorem{remark}{Remark}
\newtheorem{assumption}{Assumption}
\newcounter{term}[section]
\renewcommand\theterm{\alph{term}}
\newcommand{\vast}{\bBigg@{4}}
\newcommand{\Vast}{\bBigg@{5}}
\definecolor{myblue}{RGB}{240,255,240}
\begin{document}
\title{Dynamic D2D-Assisted Federated Learning  over O-RAN:  
Performance Analysis, MAC Scheduler,\\ and Asymmetric User Selection}
\author{Payam Abdisarabshali,~\IEEEmembership{Student~Member,~IEEE}, Kwang Taik Kim,~\IEEEmembership{Senior~Member,~IEEE},
\\ Michael Langberg,~\IEEEmembership{Fellow,~IEEE}, Weifeng Su,~\IEEEmembership{Fellow,~IEEE}, and Seyyedali Hosseinalipour,~\IEEEmembership{Senior Member,~IEEE}
 \IEEEcompsocitemizethanks{\IEEEcompsocthanksitem P. Abdisarabshali, M. Langberg, W. Su, and S. Hosseinalipour are with the Department of Electrical Engineering, University at Buffalo-SUNY, NY, USA (emails: \{payamabd, mikel, weifeng, alipour\}@buffalo.edu). K. T. Kim is with the department of Electrical and Computer Engineering, Purdue University, IN, USA (email:kimkt@purdue.edu))\\
 This work was supported by the U.S. National Science Foundation (NSF) under Grant ECCS 2512911.\\
Corresponding Author: S. Hosseinalipour}
}

\maketitle
\setulcolor{red}
\setul{red}{2pt}
\setstcolor{red}


\begin{abstract}
Existing studies on federated learning (FL) are mostly focused on system orchestration for~\textit{static~snapshots}~of~the~network~and~making~\textit{static control decisions} (e.g., spectrum allocation). 
However, real-world wireless networks are susceptible to \textit{temporal variations} of wireless channel capacity and users' datasets. In this paper, we study the impacts of the dynamics of (i) wireless channels and (ii) users' datasets on the FL execution. The former is captured by introducing a set of discrete time events while the latter is characterized by a novel~\textit{ordinary differential equation} and the metric of~\textit{dynamic~model~drift},~formulated~via~a~\textit{partial~differential~inequality}, drawing concrete analytical connections between
the dynamics of users' datasets and FL accuracy. 
We then propose \underline{d}ynamic \underline{c}ooperative F\underline{L} with dedicated \underline{M}AC schedulers ({\tt DCLM}), exploiting the unique features of open radio access network (O-RAN) to execute FL. {\tt DCLM} entails (i) a hierarchical device-to-device (D2D)-assisted model training, (ii) dynamic~control~decisions~through~dedicated~O-RAN~MAC~schedulers,~and~(iii) asymmetric~user~selection.~We provide extensive theoretical analysis to study the convergence~of~{\tt DCLM} and then aim to optimize its degrees of freedom (e.g., user selection and spectrum allocation) through a  non-convex optimization problem. 
We develop a systematic and generic approach to obtain the solution for this problem.
We finally show the efficiency of {\tt DCLM} via numerical simulations and provide a series of future directions.
\end{abstract}
\vspace{-3.8mm}
\begin{IEEEkeywords}
Federated learning, system dynamics, user selection, open RAN, MAC scheduler, performance analysis.
\end{IEEEkeywords}

\IEEEpeerreviewmaketitle
\section{Introduction}\label{sec:intro}
\noindent\IEEEPARstart{F}{ederated} learning (FL)
is an emerging distributed machine learning (ML) paradigm that enables mobile users to collaboratively train models \textit{without sharing raw data}, making it valuable for various Internet-of-Things (IoT) applications, such as smart healthcare and autonomous driving~\cite{xu2023edge}.
FL involves model training through multiple global rounds, each comprising: (i) a server broadcasts an ML model, called \textit{global model} (GM), to users; (ii) users perform \textit{local model} (LM) training using their  data and the received GM (e.g., via the stochastic gradient descent (SGD)); (iii) users transfer their trained LMs to the server, and the server aggregates the received LMs (e.g., via  weighted averaging) into a new GM.\par 


\subsection{Motivation and Solution Overview} 
{
FL is expected to be implemented on resource-constrained devices (e.g., smartphones and IoT sensors)~\cite{8737464}, 
posing challenges due to modern ML models' large parameter sizes, which demand significant local computation power and incur high communication overhead during model transfers.
Motivated by this, research has focused on (i) handling device' computation/communication heterogeneities~\cite{yang2022federated, su2025joint}; (ii) optimizing resource (e.g., spectrum) allocation~\cite{10061474,9337204,8664630,8737464} (iii) optimizing the  periodicity of client-to-server communications~\cite{10061474,9345723}; (iv) smart client sampling~\cite{9345723, 9337204}; and (v) leveraging inter-user cooperation to reduce training costs (e.g., energy usage) \cite{10061474}. 

Despite the significance of the above works, they share a common yet restrictive assumption, which we term \textit{snapshot-based analysis}: they assume that users' datasets and client-to-server wireless channels remain static throughout each global round of FL. Consequently, they perform snapshot-based analysis of the network, treating each snapshot as a representation of the network and users at the beginning of each global round. They then perform control decisions, such as spectrum allocation and user recruitment for each snapshot.

However, such snapshot-based analysis undermines the practicality of the existing studies, as in real-world scenarios, wireless channels and user datasets experience \textit{temporal fluctuations} during each FL global round, demanding \textit{adaptive ML and wireless control strategies}.
In this paper, for the first time, we relax this key assumption by incorporating (i) dynamic wireless channels of users and (ii) dynamic datasets of users into FL. We capture the former by introducing a set of events that occur at discrete time instants, which we refer to as $\mathscr{D}$-Events. To capture the later, we develop two models. First, we model the dynamics of each user's dataset, in terms of dataset size, via an \textit{ordinary differential equation}. Second, through a \textit{partial differential inequality}, we introduce a new metric called \textit{dynamic model drift} to measure the impact of temporal variation of users' datasets on FL accuracy. 
\textit{We demonstrate theoretically that these dynamics impact the resource allocation decisions, user sampling/recruitment, training latency, and energy consumption, as well as FL model convergence.} These theoretical results highlight the importance of performing dynamic resource allocation -- a task conducted by MAC schedulers of radio access networks (RANs) -- tailored to the requirements of FL. Based on these findings, we introduce \underline{d}ynamic \underline{c}ooperative F\underline{L} with dedicated \underline{M}AC schedulers ({\tt DCLM}), which will be deployed at RAN, to perform dynamic resource (e.g., spectrum/power) allocation and user recruitment.



 We note that traditional RANs, such as distributed RAN used in 4G networks rely on \textit{black-box MAC schedulers} to perform dynamic spectrum/power allocation for all types of users~\cite{9076124}. These black-box schedulers are proprietary, meaning only the hardware vendors can modify them, which limits the network’s flexibility and adaptability to diverse service requirements. In the context of FL, effective resource management must account for FL-specific factors, such as model prediction accuracy and client recruitment costs, alongside QoS metrics such as energy consumption and model training latency. These FL-specific considerations and limitations of traditional RANs in addressing them motivate us to develop our {\tt DCLM} over Open RAN (O-RAN), an emerging RAN used in 5G-and-beyond networks.

O-RAN (see Fig. \ref{fig:ORAN}) disaggregates elements (i.e., 3GPP protocol stack) of each base station (BS) into (i) radio unit (O-RU), (ii) distributed unit (O-DU), and (iii) centralized unit (O-CU)~\cite{9846950}. O-RAN provides a software-based infrastructure that allows the design of network functionalities tailored to the specific resource requirements of various services, a feature known as \textit{programmability}. For instance, users with ultra-reliable low-latency communication requirement can benefit from a tailored MAC scheduler. O-RAN enables this level of flexibility through (i) RAN slicing (creating virtual RANs for each service) and (ii) RAN intelligent controllers, which host applications known as rApps and xApps. These applications define network management policies based on service requirements of  users of each slice and the network conditions (e.g., channel gains). A detailed explanation on the evolution of RAN  is provided in Appendix~\ref{apx:ran_evolution}. In this paper, we formulate {\tt DCLM} as an optimization that can be encapsulated into rApps and xApps and deployed at O-RAN to establish MAC scheduling and user recruitment policies tailored to the requirements of FL (e.g., FL model accuracy), dynamics of wireless channels, and time-varying datasets of users.}


{\tt DCLM} orchestrates FL at two levels of \textit{agility} through breaking down each FL global round into multiple smaller time instants,
called \textit{fine granular time instants} (FGTIs). In particular, (i) ML-related control decisions are performed at each global round, e.g., recruiting FLUs, and (ii) dynamic wireless spectrum/power allocation are performed at each FGTI. 

\vspace{-2.5mm}
\subsection{Related Work}\label{related_work}
\subsubsection{FL and O-RAN}
To predict data traffic of O-RAN slices, \cite{9771700} proposes a slicing-based user selection FL algorithm. Further, a hierarchical FL-based resource allocation scheme and network slicing for O-RAN has been proposed in~\cite{qiao2025resource}.
However, these works differ from ours since they \textit{utilize FL for O-RAN orchestration} while we use the \textit{potentials of O-RAN to facilitate the execution of FL} in dynamic wireless networks.
\vspace{-0.15mm}

\subsubsection{MAC schedulers} 
Authors in~\cite{zhao2025adaslicing} propose AdaSlicing, an adaptive O-RAN slicing framework with real-time AI-driven orchestration. Further, \cite{dai2024ran} proposes an O‑RAN scheduling framework that uses inter-slice control via xApps and slice-aware MAC scheduling to meet users' service level agreements. Extending this literature, we are the first to design dedicated MAC schedulers for the execution of FL over O-RAN. 
\vspace{-0.15mm}
\subsubsection{User recruitment}
Several studies \cite{9345723, 9337204} investigate improving the efficiency of FL via proper user selection. Specifically, \cite{9345723} chooses users that exhibit a low degree of data heterogeneity and \cite{9337204} groups users with similar data types under one user to achieve near i.i.d. data distribution.
 Our work contributes to this literature by proposing a methodology for FL user recruitment over O-RAN under system dynamics. 
\vspace{-0.15mm}
\subsubsection{FL system orchestration}
Authors in \cite{10061474} proposed an FL system orchestrator aimed at minimizing energy consumption via spectrum allocation and device-to-device (D2D) communications.
Another research \cite{guo2021dynamic} employs a sequential wireless resource scheduling by establishing a queue to determine the order in which users transmit their models to the server.
To our knowledge, the existing FL orchestrations, including those mentioned, neither consider O-RAN nor orchestrate FL under system dynamics, which we achieve both in this work.
\subsubsection{Decentralized FL}
{ Authors in \cite{hu2019decentralized} propose a gossip approach for decentralized FL, where each user partitions its large ML model and shares different segments with neighboring users during FL training rounds. Similarly, \cite{Heged} highlights the effectiveness of gossip learning under uniform data distribution across the clients. 
Our work introduces the first D2D-assisted FL orchestration framework over O-RAN, which extends the existing decentralized FL approaches by proposing multi-timescale system orchestration and dedicated MAC schedulers while accounting for fine-granular system dynamics.

\newcommand\sbullet[1][.5]{\mathbin{\vcenter{\hbox{\scalebox{#1}{$\bullet$}}}}}

\begin{figure}[t]
\centering
\includegraphics[width=.40\textwidth,trim=20 5 20 25,clip]{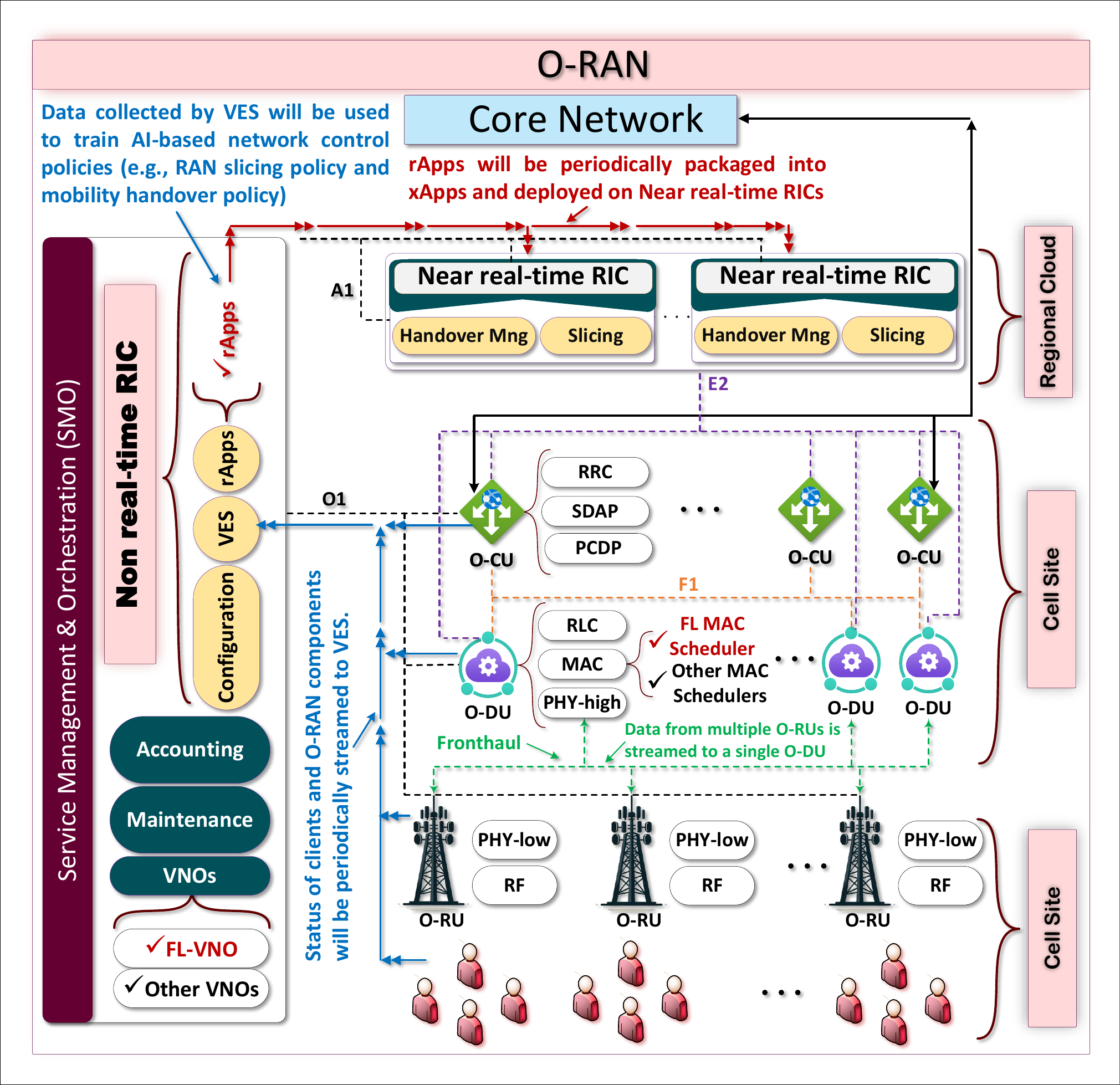}
\vspace{-2.5mm}
\caption{O-RAN architecture and integration of dedicated FL-related functionalities.  O-RAN components interact via standard open interfaces, labeled with E2, F1, open fronthaul, A1, and O1 interfaces, which facilitate interoperability between network elements from different manufacturers.}
\label{fig:ORAN}
\vspace{-1.1mm}
\end{figure}
 
\vspace{-2.2mm}
\subsection{Summary of Contributions}\label{sec:summary_of_contribution}
Our main contributions can be summarized as follows:
\begin{description}[font=$\bullet$~\normalfont,leftmargin=4.3mm]
 \itemsep-0.04em 
    \item We incorporate the dynamics of wireless channels of FL users (FLUs) and their datasets into FL orchestration. To this end, we first introduce a discrete-time model comprising a set of $\mathscr{D}$-Events. Afterwards, we model the dynamics of each FLU's dataset size as an \textit{ordinary differential equation} and introduce a metric named \textit{dynamic model drift}, which is modeled as a \textit{partial differential inequality}. These analytical models  draw one of the first connections between the impact of temporal variation of FLUs' datasets and FL accuracy.
    \item Utilizing the potential of O-RAN, we introduce {\tt DCLM}, a D2D-assisted multi time-scale system orchestrator for FL operating at two levels of agility: (i) global FL training rounds (coarse time-scale) and (ii) FGTIs (fine time-scale).
     \item In {\tt DCLM}, we introduce (i) the dynamic hierarchical FL, (ii) an asymmetric FLU recruitment strategy, and (iii) dispersed collaborative/cooperative communication (DCC) for FL. 
    \item Leveraging the programmability of O-RAN, we design dedicated MAC schedulers for {\tt DCLM} to conduct dynamic spectrum and transmit power allocation for FLU-to-O-RU (i.e., uplink/downlink) and FLU-to-FLU (i.e., D2D) communications tailored to dynamic wireless channel and datasets of FLUs during FGTIs.
    \item We conduct extensive theoretical analysis of {\tt DCLM} and derive a set of new FL convergence bounds.
    \item We provide both theoretical and experimental answers to the following fundamental open research questions:
    \begin{enumerate}[label={\textbf{(Q\arabic*})},leftmargin=6.3mm]
        \item \label{Q1} How dynamic wireless channel, time-varying datasets of FLUs, non-i.i.d. FLUs' datasets, and the dissimilarity of data points within the dataset of an FLU jointly influence decisions regarding FLU recruitment?
        \item \label{Q3} What is the impact of dynamic wireless control decisions on (i) the FL model performance, (ii) model training latency, and (iii) energy consumption?
    \end{enumerate}
    \item We~demonstrate~the~efficiency~of~{\tt DCLM}~in~terms~of~ML model performance and network resource savings via numerical simulations considering real-world ML datasets. 
\end{description}


\vspace{-1mm}
\section{System Model and Design Consideration}\label{sec:system_model}
\noindent  Below, we describe the main aspects of our system model. 
The major notations and acronyms used in the paper are listed in Table~\ref{table:notations} and Table~\ref{table:acronyms} in Appendix~\ref{app:notaions}. Also, one of our goals is to introduce a variety of future directions for this underexplored field of research, which is done in Appendix~\ref{app:future_research}. 

\begin{figure}[t]
 \centering
\includegraphics[width=8.0cm,trim=8 8 8 8,clip]{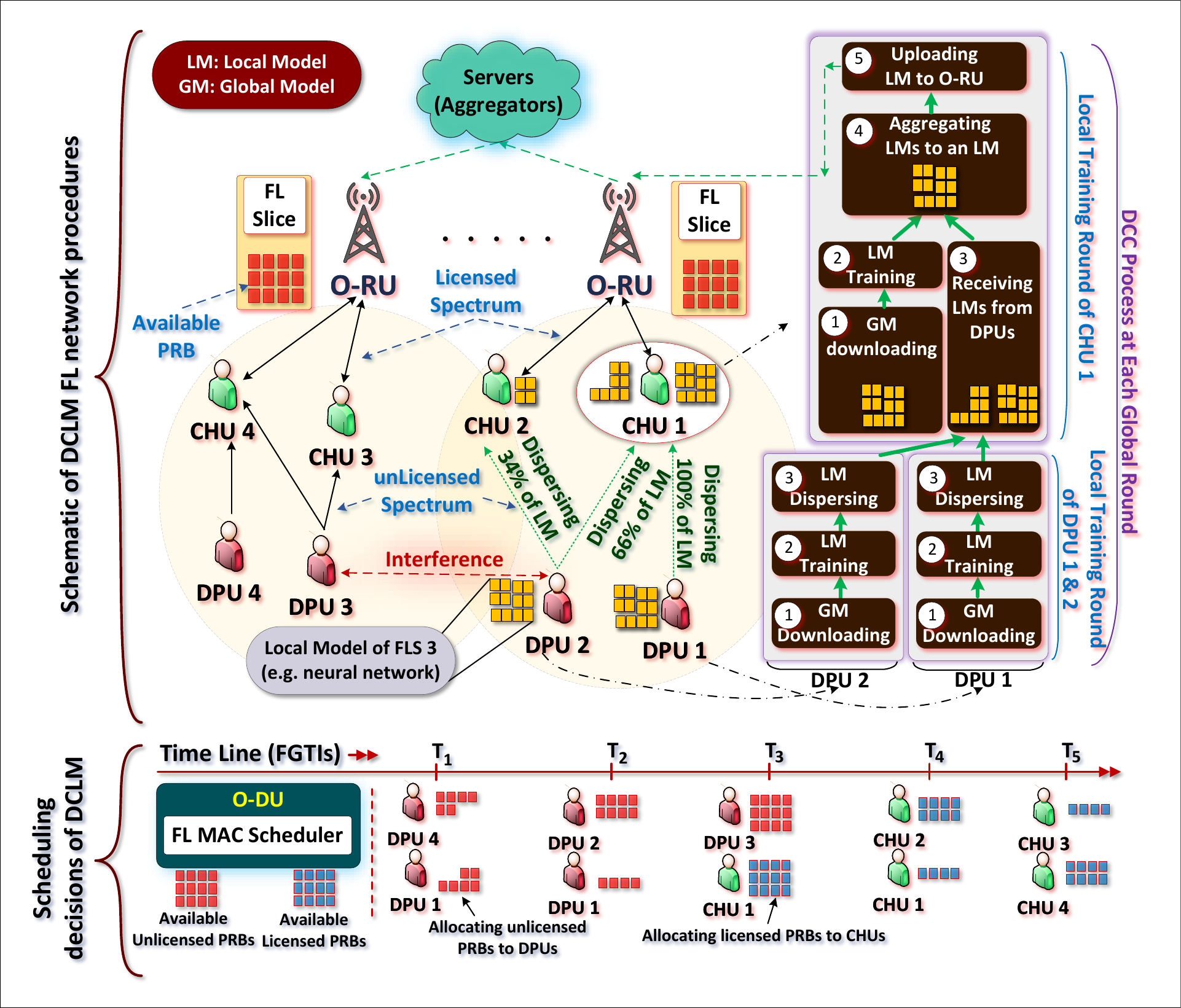} 
\caption{The FL MAC scheduler establishes multi-channel connections between DPUs, CHUs, and O-RUs. {\tt DCLM} uses DCC to perform a hierarchical ML model transmission/aggregation from DPUs to CHUs to O-RUs, and the server.}
\label{fig:systemModel}
\end{figure}

\vspace{-1mm}
\subsection{System Architecture}\label{sec:service_heterogeneity}

{ Referring to Fig.~\ref{fig:systemModel}, we consider an \textit{FL service}\footnote{For research directions on the coexistence of multiple FL services with conflicting interests refer to Appendix~\ref{app:future_research}.} \cite{9082655} performed by a set of geo-distributed FLUs across various O-RUs in O-RAN, gathered by set ${\Omega}$. The model training is coordinated by a server through a sequence of global rounds~$k\in\mathcal{K}=\{0,\cdots,K\}$. The set of FLUs within the coverage of each O-RU $b{\in}\Omega$ is collected via  $\mathcal{U}_{b}$. We assume a prior knowledge on the presence of FLUs in the system, supported by prior works \cite{8737464, su2025joint, 10061474, guo2021dynamic}. Further extensions to stochastic FLU arrival and departure are discussed in Appendix~\ref{app:future_research}. We consider \textit{heterogeneous communication/computation} capabilities across FLUs, where an FLU $u{\in}\mathcal{U}_{b}$ has a maximum transmit power of $P^{\mathsf{max}}_{u}$ and time-varying available CPU frequency of $f^{(k)}_{u}$.

We consider data transmissions scheduled in group(s) of $12$ sub-carriers, known as physical resource blocks (PRBs). To support  heterogeneous services, 5G new radio adopts a set of \textit{numerologies}~\cite{7744816} that tune the bandwidth of PRBs. Each numerology is characterized by an integer~$\gamma$, enforcing the subcarrier spacing to be $2^{\gamma}\times 15 \textrm{ KHz}$, where $15$ KHz is the reference subcarrier spacing. In this paper, we consider two types of PRBs, including (i) licensed PRBs used for communication between O-RUs and FLUs and (ii) unlicensed PRBs exploited for  device-to-device (D2D) communications (i.e., FLU-to-FLU communications). We refer to the bandwidth of licensed and unlicensed PRBs of each O-RU by $B{=}12{\times} 15 {\times} 2^{\gamma_1}$ KHz and $\overline{B}{=}12{\times} 15 {\times} 2^{\gamma_2}$ KHz, respectively, where $\gamma_1$ and $\gamma_2$ are their respective numerologies. Moreover, we consider a realistic scenario where multiple PRBs can be assigned to each user, enabling high-speed communications. 


We consider a RAN slice for FL at each O-RU $b$, provisioned with a maximum downlink transmit power of $P^{\mathsf{max}}_b$. Further, we presume the worst-case scenario where FL slices of all O-RUs have access to the same sets of licensed and unlicensed PRBs denoted by $\mathcal{R}_{b}$ and $\overline{\mathcal{R}}_{b}$, $b{\in} \Omega$. Thus, FLUs experience (i) intra-cell interference from other FLUs in the same O-RU that use the same PRBs and (ii) inter-cell interference from FLUs using the same PRBs in nearby O-RUs. This calls for adaptive interference management and transmit power control, which we develop in this work.\footnote{Research directions on \textit{elastic RAN slicing} (i.e., scale down/up of resources provisioned to O-RAN slices) are provided in Appendix~\ref{app:future_research}.} 
We model both downlink and uplink communications for FL. In the downlink, the server broadcasts the GM to recruited FLUs via licensed PRBs of O-RUs. In the uplink, to upload FLUs' LMs, we propose the \textit{dispersed collaborative/cooperative communication (DCC)} mode (see Fig.~\ref{fig:systemModel}), splitting FLUs into (i) communication head FLUs (CHUs) and (ii) deprived FLUs (DPUs). In particular, DPUs disperse unequal fractions of their LMs to multiple CHUs via low-power D2D communications on unlicensed PRBs instead of communicating directly with O-RUs. Afterward, CHUs aggregate their LMs with those received from DPUs into a single LM, which is transferred to O-RUs via licensed PRBs. DCC conserves DPUs' resources (e.g., transmit power) and network resources (e.g., licensed PRBs), as shown in Sec.~\ref{numerical_evaluation}.}

\vspace{-1mm}
\subsection{System Dynamic}\label{sec:system_dynamic}

In {\tt DCLM}, we migrate from the snapshot-based analysis of FL, described in Sec.~\ref{sec:intro}, by taking the first step toward incorporating system dynamics into FL, which in turn calls for \textit{dynamic control decisions}. Specifically, to capture the system dynamics in {\tt DCLM}, for each global round $k{\in}\mathcal{K}$, let $[T^{(k{-}1)}, T^{(k)}]$ be a wall-clock time window (measured in seconds), where $T^{(k)}$ denotes the time when global round $k$ is finished and $T^{(-1)}{=}0$. 
We consider two types of multi-granular dynamics that impact the performance of FL: (M1) discrete-time dynamic FLUs' resource requirements (e.g., wireless spectrum) and (M2) continuous-time dynamic FLUs' local datasets. Below, we focus on (M1) and  model (M2) later in Sec.~\ref{sec:local_dataset}.


\textbf{Discrete-time dynamic FLUs' resource requirements.}
For brevity, we refer to the processes associated with transmitting ML models (i.e., uploading LMs of FLUs and downloading the GM) as communication tasks (CTs).
Further, we define the following variables: (i) $\tau_{b}^{\downarrow,{(k)}}$: Broadcast (downlink) latency of O-RU $b$ to transmit the GM to the recruited FLUs over the licensed PRBs; (ii) $\tau_{u}^{\mathsf{LC},{(k)}}$: Local computation/training time of FLU $u{\in} \mathcal{U}_b$; (iii) $\overline{\tau}_{u}^{\uparrow,{(k)}}$: LM upload latency of DPU $u{\in} \mathcal{U}_b$ over the unlicensed PRBs; (iv) $\tau_{u}^{\mathsf{W},{(k)}}$: Waiting time of CHU $u{\in} \mathcal{U}_b$ for receiving all LMs from its associated DPUs in DCC; (v) $\tau_{u}^{\uparrow,{(k)}}$: LM upload latency of CHU $u{\in} \mathcal{U}_b$ over the licensed PRBs. Accordingly, below, we introduce a set of \textit{discrete-time events}, called $\mathscr{D}$-Events, which cover all the discrete-time events in which the resource requirement of the users may experience a change.\footnote{We presume a deterministic scenario, where $\mathscr{D}$-Events are non-probabilistic and discuss research directions on probabilistic systems in Appendix~\ref{app:future_research}.} We further define a function $\Psi(\cdot){\in}[T^{(k{-}1)},T^{(k)})$, which takes a $\mathscr{D}$-Event as an input and outputs its occurrence time. 
\vspace{-0.5mm}
\begin{enumerate}[label={\textbf{(E\arabic*)}}] 
\itemsep-0.05em 
    \item \label{E4} $\mathscr{D}_{u}^{\downarrow,(k)}$: Downlink CT arrival event of FLU $u{\in} \mathcal{U}_b$ (i.e., when $u$ starts receiving the GM from its O-RU $b$), the occurrence time of which is $\Psi(\mathscr{D}_{u}^{\downarrow,(k)}){=}T^{(k{-}1)}$ since it only happens at the start of each global round. 
    \item \label{E5} $\mathscr{D}_{u}^{\ndownarrow,(k)}$: Downlink CT departure event of FLU $u{\in} \mathcal{U}_b$ (i.e., when $u$ receives the GM completely from its O-RU), which occurs at time $\Psi(\mathscr{D}_{u}^{\ndownarrow,(k)}){=}\Psi(\mathscr{D}_{u}^{\downarrow,(k)}){+}\tau_{b}^{\downarrow,{(k)}}$.
    \item \label{E6} $\overline{\mathscr{D}}_{u}^{\uparrow,(k)}$: Uplink CT arrival event of DPU $u\in \mathcal{U}_b$ over the unlicensed PRBs (i.e., when the DPU starts uploading its LM to its associated CHU), the occurrence time of which is $\Psi(\overline{\mathscr{D}}_{u}^{\uparrow,(k)})=\Psi(\mathscr{D}_{u}^{\ndownarrow,(k)})+\tau_{u}^{\mathsf{LC},{(k)}}$.
    \item \label{E7} $\overline{\mathscr{D}}_{u}^{\nuparrow,(k)}$: Uplink CT departure event of DPU $u\in \mathcal{U}_b$ over the unlicensed PRBs (i.e., when $u$ finishes uploading its LM to its associated CHU), the occurrence time of which is $\Psi(\overline{\mathscr{D}}_{u}^{\nuparrow,(k)})=\Psi(\overline{\mathscr{D}}_{u}^{\uparrow,(k)})+\overline{\tau}_{u}^{\uparrow,{(k)}}$.
    \item \label{E8} $\mathscr{D}_{u}^{\uparrow,\mathsf{C},(k)}$: Uplink CT arrival event of CHU $u\in \mathcal{U}_b$ over the licensed PRBs (i.e., when $u$ starts uploading its aggregated LM to its associated O-RU), which occurs at time $\Psi(\mathscr{D}_{u}^{\uparrow,(k)})=\Psi(\mathscr{D}_{u}^{\ndownarrow,(k)})+\max\{\tau_{u}^{\mathsf{LC},{(k)}}, \tau_{u}^{\mathsf{W},{(k)}}\}$.\footnote{$\max\{\tau_{u}^{\mathsf{LC},{(k)}},\tau_{u}^{\mathsf{W},{(k)}}\}$ captures the wait bottleneck: CHU $u$ must finish LM training and wait for all DPUs’ LMs before uploading it aggregated LM.}
    \item \label{E9} $\mathscr{D}_{u}^{\nuparrow,\mathsf{C},(k)}$: Uplink CT departure event of CHU $u{\in} \mathcal{U}_b$ over the licensed PRBs (i.e., when the CHU finishes uploading its aggregated LM to its associated O-RU), the occurrence time of which is $\Psi(\mathscr{D}_{u}^{\nuparrow,(k)})=\Psi(\mathscr{D}_{u}^{\uparrow,(k)})+\tau_{u}^{\uparrow,{(k)}}$.
\end{enumerate}
$\mathscr{D}$-Events build the pillars of analysis of the system dynamics in {\tt DCLM}. These events can cause fluctuations in the resource requirements of FL slices during model training. For example, occurrence of $\mathscr{D}_{u}^{\downarrow,(k)}$ can increase the licensed PRB requirement of FL slices, while $\mathscr{D}_{u}^{\ndownarrow,(k)}$ may decrease it. 
$\mathscr{D}$-Events further motivate us to incorporate a novel notion into FL, where \textit{virtualization} and \textit{RAN slicing} in O-RAN provide a unique platform for harnessing $\mathscr{D}$-Events. This notion is \textit{dynamic MAC scheduler}, which allocates resources (e.g., PRB and transmit power) of each virtually crafted RAN slice to the FLUs. Details of formalizing this notion will be presented in Sec.~\ref{sec:dynamic_control_decisions}.


While $\mathscr{D}$-Events make the modeling more realistic, they significantly increase the complexity of controlling network elements for FL.
{\tt DCLM} 
addresses this complexity by employing a multi-time scale management system. Specifically, in
{\tt DCLM}, we split the duration of each global round $k$ (i.e., {\small$[T^{(k{-}1)}, T^{(k)})$}) into multiple \textit{fine granular time instants} (FGTIs) to tackle system dynamics caused by $\mathscr{D}$-Events, allowing~for~a~continuum~of decisions (e.g., dynamic PRB allocation by MAC schedulers).
Formally, we divide {\small$[T^{(k{-}1)}, T^{(k)})$} into multiple FGTIs and introduce decision variable {\small$t_x{\in}[T^{(k{-}1)}, T^{(k)})$} to refer to a specific FGTI. Here, {\small$x{\in}\mathcal{N}^{(k)}$}, where {\small $\mathcal{N}^{(k)}{=}\{\hspace{-0.5mm}(N^{(k{-}1)}{+}1),\cdots\hspace{-0.5mm},N^{(k)}\hspace{-0.5mm}\}$}, is the set of {\small$N^{(k)}{-}N^{(k{-}1)}{>}0$} integer values (i.e., the indices of FGTIs); {\small$N^{(-1)}=0$}. 
Also {\small$t_{x{+}1}>t_{x}$}, {\small$\forall x\in\cup_{k\in\mathcal{K}}\mathcal{N}^{(k)}\setminus\{N^{(K)}\}$}. The impact of $t_x$ on system performance and optimization of its occurrence time are presented in Secs.~\ref{sec:dynamic_control_decisions}\&\ref{sec:computation_communication_latencies_energy} and~\ref{sec:PF}.

\vspace{-1mm}
\subsection{ Integration of {\tt DCLM} into O-RAN}\label{sec:intergation}
\vspace{-.01mm}
O-RAN enables the development of dedicated MAC schedulers through \textit{non-real-time RAN intelligent controllers (Non-RT RICs)}, which define network management policies. In particular, data (e.g., users' positions), generated by users and the RAN nodes (e.g., O-RUs) stream periodically via the O1 interface to the virtual event streaming (VES). The collected data is utilized to build non-RT applications known as rApps (see Fig.~\ref{fig:ORAN}), which handle long-term tasks such as training ML models and solving optimization problems. These rApps will be packaged into applications known as xApps and deployed on near-RT RICs. A key use case of xApp is defining resource allocation policy of the MAC schedulers at O-DUs. Specifically, unlike black box MAC schedulers used in traditional RANs (e.g., distributed RAN) for all services, xApps offer a flexible white box, software-based infrastructure for designing a dedicated MAC scheduler tailored to a specific network service. \textit{In {\tt DCLM}, we formulate a dedicated MAC scheduler for FL as an optimization problem in Sec.~\ref{sec:PF}, which will be packaged within an rApp in non-RT RIC (see Fig.~\ref{fig:ORAN}). The solution of this problem will subsequently be packaged into an xApp to define MAC scheduling policy}. This MAC scheduler manages three tasks: (i) scheduling O-RUs at different FGTIs and assigning licensed PRBs for broadcasting the GM to FLUs; (ii) managing D2D operations by scheduling DPUs at different FGTIs and allocating unlicensed PRBs to the scheduled DPUs for dispersing their LMs across CHUs; and (iii) scheduling CHUs at different FGTIs and allocating licensed PRBs to the scheduled CHUs for transmitting their aggregated LMs to O-RUs. To facilitate understanding, we visualize the protocol flow of {\tt DCLM} in Fig.~\ref{fig:protocol_flow} in Appendix~\ref{app:protocol_flow}, illustrating the interactions among the {\tt DCLM} MAC scheduler, O-RUs, CHUs, and DPUs.
In the following, we present the steps involved in {\tt DCLM}.

\section{{\tt DCLM} Distributed Machine Learning Model}\label{sec:FL_model}
\noindent 
In {\tt DCLM}, each global round $k\in\mathcal{K}$ engages/recruits a subset of FLUs (including DPUs and CHUs). To determine the recruitment status of FLU $u$, we utilize $\widehat{\lambda}_{u}^{(k)}{=}\lambda_{u}^{(k)}{+}\overline{\lambda}_{u}^{(k)},~\forall u{\in} \mathcal{U}_{b}$, where $\lambda_{u}^{(k)}{\in}\{0,1\}$ and $\overline{\lambda}_{u}^{(k)}{\in}\{0,1\}$ are  decision variables (optimized in~Sec.~\ref{sec:PF}): FLU $u$ is a CHU if $\lambda_{u}^{(k)}{=}1$, it is a DPU if $\overline{\lambda}_{u}^{(k)}{=}1$, and
it is not recruited at global round $k$ if $\widehat{\lambda}_{u}^{(k)}{=}0$. To prevent FLU $u$ from being recruited as both a CHU and DPU simultaneously, we impose $\widehat{\lambda}_{u}^{(k)}{=}\lambda_{u}^{(k)} {+} \overline{\lambda}_{u}^{(k)} {\le} 1$. 

In {\tt DCLM}, the following seven steps are conducted in each global round $k$. {(s-i)\label{FL_step1}} \textit{Broadcasting a GM}: the server broadcasts a GM, characterized by model parameter $\bm{\omega}^{(k)}\in \mathbb{R}^M$, to the recruited FLUs through O-RUs. {(s-ii)\label{FL_step2}} \textit{LM training}: each recruited CHUs and DPUs initializes its LM with the received GM and starts training its LM using its dataset via stochastic gradient descent (SGD). {(s-iii)\label{FL_step3}} \textit{Dispersion of gradient vector (GV) of DPUs}: each recruited DPU disperses unequal fractions/parts of its GV among CHUs (optimized in Sec. \ref{sec:GM_GPS_transmissions}). {(s-iv)\label{FL_step4}} \textit{GV aggregation at CHUs}: each recruited CHU aggregates the GVs received from its associated DPUs and its own GV into one GV. {(s-v)\label{FL_step5}} \textit{Dispatching of GVs of CHUs}: each recruited CHU uploads different fractions of its aggregated GV to an O-RU over multiple PRBs (CHU to O-RU assignment and PRB allocation are formulated in Sec. \ref{sec:GM_GPS_transmissions}). {(s-vi)\label{FL_step6}} \textit{GV aggregation at O-RUs}: each O-RU aggregates the received GVs from its associated CHUs into one GV and transfers it to the server. {(s-vii)\label{FL_step7}} \textit{GM aggregation}: the server aggregates the received GVs from O-RUs to a new GM $\bm{\omega}^{(k+1)}$, utilized for round $k{+}1$.


Below, we formulate different aspects of {\tt DCLM}: dynamics of FLUs' local datasets (Sec. \ref{sec:local_dataset}), distributed training objective of {\tt DCLM} (Sec.~\ref{sec:GM}), LM training (Sec.~\ref{sec:LM}), LM dispersion by DPUs (Sec.~\ref{sec:dispersion}), and aggregation procedure (Sec.~\ref{sec:aggregation}).
\vspace{-1mm}
\subsection{Continuous-Time Dynamic FLUs' Local Datasets}\label{sec:local_dataset}
In contrast to the majority of works on FL that assume FLUs' datasets are static\cite{10061474,9337204,8664630,8737464}, we consider a realistic scenario where FLUs' datasets change over time when not engaged in model training. Let $T^{\mathsf{train},(k)}_u{\triangleq} [\Psi(\mathscr{D}_{u}^{\ndownarrow,(k)}),\Psi(\mathscr{D}_{u}^{\ndownarrow,(k)}){+}\tau_{u}^{\mathsf{LC},{(k)}}]$ be the training time window of FLU $u$ at global round~$k$ and $T^{\mathsf{idle},(k)}_u{\triangleq} [T^{(k{-}1)}, T^{(k)}) {\setminus} T^{\mathsf{train},(k)}_u$ denote the times that FLU $u$ is idle (i.e., it does not conduct training). Below, we characterize the continuous-time dynamics of FLUs' datasets (in terms of size and distribution) (Sec. \ref{sec:dynamic_dataset}), dynamic global/local loss functions (Sec. \ref{sec:global_local_loss}), and dynamic model drift (Sec. \ref{sec:model_drift}).
\subsubsection{Dynamic dataset}\label{sec:dynamic_dataset}
We model the dynamics of dataset size of FLU $u$ during global round $k$ (i.e., during time window {\small$[T^{(k{-}1)}, T^{(k)})$}) via the following differential equation:
\begin{equation}\label{eq:dynamic_dataset_size}
\frac{d|\Upsilon_{u}(t)|}{dt}{=}
\begin{cases}
   G^{(k)}_u(t)&t{\in}T^{\mathsf{idle},(k)}_u,\\
   0&t{\in}T^{\mathsf{train},(k)}_u,
\end{cases}
\end{equation}
where {\small$\Upsilon_{u}(t)$} refers to the collected dataset of FLU $u$ with {\small$|\Upsilon_{u}(t)|$} data points at time {\small$t{\in}[T^{(k{-}1)}, T^{(k)})$}. Further, in \eqref{eq:dynamic_dataset_size}, {\small$G^{(k)}_u(t){<}0$} ({\small$G^{(k)}_u(t){>}0$}) implies a decrease (increase) in the dataset size over time. We also use {\small$\Upsilon(t){=}\cup_{b\in \Omega,u\in\mathcal{U}_{b}}\Upsilon_u(t)$} and {\small$\Upsilon^{\mathsf{s}}(t){=}\cup_{b\in \Omega,u\in\mathcal{U}_{b}}\widehat{\lambda}_{u}^{(k)} \Upsilon_u(t)$} to refer to the cumulative dataset of all FLUs and recruited FLUs at time $t$, respectively. 

We also model the change in the distribution of data points in the FLUs' datasets over time (e.g., changes in the distribution of collected data of sensors across different days). To this end, we use {\small$\xi{=}(\bm{\xi}, y)\in \Upsilon_{u}(t)$} to denote an arbitrary data point with feature {\small$\bm{\xi}$} and label $y$, and then define {\small$\sigma_{u}(t){=}\sqrt{\frac{1}{|\Upsilon_{u}(t)|{-}1}\sum_{\xi\in \Upsilon_{u}(t)} \Vert \bm{\xi}{-} \bm{\mu}_{u}(t)\Vert^2}$} and {\small $\bm{\mu}_{u}(t){=}\sum_{\xi\in \Upsilon_{u}(t)} {\bm{\xi}}\big/{|\Upsilon_{u}(t)|}$} as the time-varying standard deviation and the mean of data points inside $\Upsilon_{u}(t)$, respectively.

\subsubsection{Dynamic local/global loss functions}\label{sec:global_local_loss}
Let ${f}(\bm{\omega},\xi)$ denote the \textit{loss function} (see Table 1 in~\cite{8664630}) that quantifies the model performance for data point {\small$\xi {\in} \Upsilon_{u}(t)$} under ML model {\small$\bm{\omega}{\in}\mathbb{R}^{M}$} with $M$ parameters. Also let {\small$\mathfrak{L}_{u}(\bm{\omega}| \Upsilon_u(t)){=}\sum_{\xi {\in} \Upsilon_{u}(t)} {{f}(\bm{\omega},\xi)}\big/{|\Upsilon_{u}(t)|}$} denote the local loss at FLU $u$. The global loss function at time $t$ is then given by:
\begin{equation}
\begin{aligned}
  \mathfrak{L}(\bm{\omega}| \Upsilon(t)){=}\sum_{b\in\Omega}\sum_{u\in \mathcal{U}_{b}} \frac{|\Upsilon_{u}(t)|}{|\Upsilon(t)|} \mathfrak{L}_{u}(\bm{\omega}| \Upsilon_u(t)).
\end{aligned}
\end{equation}

We next use a measure called model drift to capture how the dynamics of datasets formulated in Sec. \ref{sec:dynamic_dataset} affect the accuracy of an ML model in terms of local loss $\mathfrak{L}_{u}(\bm{\omega}| \Upsilon_u(t))$.
\subsubsection{Dynamic model drift}\label{sec:model_drift}
Few works on FL have looked into model drift \cite{ganguly2023online} (e.g., to capture the variation of data between two consecutive global rounds). However, none of them can be applied to {\tt DCLM} as the continuous-time dynamics of FLUs' datasets (Sec. \ref{sec:dynamic_dataset}) lead to a continuous-time variation of data throughout the entire FL processes (i.e., within $[0, T^{(K)}]$). Motivated by this, we introduce a new definition for model drift, which is in the form of a \textit{partial differential inequality}. This allows us to accurately quantify the impact of fluctuations in the data distribution \textit{during} each global round $k$. Our later analysis in Sec.~\ref{sec:conv} will highlight the impact of this metric on the performance of {\tt DCLM} and reveal a set of novel insights.

\vspace{-0.8mm}
\begin{definition} [Dynamic Model Drift]\label{def:cons}
Let $\bm{\omega}^{(k)}{\in}\mathbb{R}^M$ be a GM with $M$ elements. We formulate dynamic model drift of FLU $u$ at global round $k$ via a partial differential inequality:
\begin{equation}\label{eq:conceptDrift}
\hspace{-4mm}
 \displaystyle \frac{\partial}{\partial t}\left(\frac{|\Upsilon_u(t)|}{|\Upsilon(t)|}\mathfrak{L}_u(\bm{\omega}^{(k)}|\Upsilon_u{(t)})\right){\leq} \mathfrak{D}_u(t),~t{\in} T^{\mathsf{idle},(k)}_u.
\hspace{-4mm}
\end{equation}
\end{definition}
As model drift value {\small$\mathfrak{D}_u(t)$} increases, it becomes difficult to track the best model parameters since the loss function rapidly increases. Moreover, \eqref{eq:conceptDrift} also includes situations where the old model becomes more fitting for current data when {\small$\mathfrak{D}_u(t){<}0$}. The definition in \eqref{eq:conceptDrift} generalizes the discrete model drift in~\cite{ganguly2023online} and extend it to capture the rate of change in model performance over continuous time, allowing for the use of continuous time function approximations in capturing the model drift. This is particularly suitable for applications that generate streaming data with rapidly evolving distributions~\cite{jin2021budget}.


Considering the above system dynamics, in {\tt DCLM}, we aim to perform dynamic ML training as described below.


\vspace{-1mm}
\subsection{ Distributed Training Objective of {\tt DCLM}}\label{sec:GM}
At the $k$-th global round, let {\small $\mathfrak{L}_{u}^{(k)}(\bm{\omega}){\triangleq}  \mathfrak{L}_{u}(\bm{\omega}| \Upsilon_u(T^{(k)}))$} denote the local loss of FLU $u$  and {\small$\mathfrak{L}^{(k)}(\bm{\omega}){\triangleq}  \mathfrak{L}(\bm{\omega}| \Upsilon(T^{(k)}))$} represent the global loss for an ML model {\small$\bm{\omega}$} at time {\small$T^{(k)}$}. Due to the dynamics of local datasets, the optimal global model is time-varying and forms a sequence {\small $\big\{\bm{\omega}^{{(k)}^\star}\big\}_{k=1}^{K}$}, where {\small$\bm{\omega}^{{(k)}^\star}={\argmin} \;  \mathfrak{L}^{(k)}(\bm{\omega}), \bm{\omega}\in \mathbb{R}^M,~\forall k\in\mathcal{K}$}. To track this sequence while taking into account the energy consumption and training latency of FL,  {\tt DCLM} leverages O-RAN programmability and introduces a dedicated MAC scheduler for FL to perform optimal dynamic resource (i.e., PRB and transmit power) allocation. Note that, as dynamic model drift {\small$\mathfrak{D}_u(t)$} across FLUs increases, the training procedure (including LM training and model transmitting) must be conducted faster to track the optimal global model. Also, devices with larger values of {\small$\mathfrak{D}_u(t)$} should be recruited more frequently since neglecting them will lead to having an outdated global model. \textit{These intuitions are later concretized and verified by our theoretical analysis} in Sec.~\ref{sec:conv}.


\vspace{-1mm}
\subsection{Local Model Training} \label{sec:LM}
To obtain~{\small$\big\{\bm{\omega}^{{(k)}^\star}\big\}_{k=1}^{K}$}, 
the server broadcasts a 
GM (i.e., {\small$\bm{\omega}^{(0)}\in\mathbb{R}^{M}$}) among recruited FLUs, followed by broadcasting of {\small$\bm{\omega}^{(k)}\in\mathbb{R}^{M}$} at the beginning of each global round $k\in\mathcal{K}$.
Let {\small$\widetilde{\tau}_{b}^{\downarrow,{(k)}}{=}T^{(k{-}1)}{+}\tau_{b}^{\downarrow,{(k)}}$} refer to the time when GM broadcasting of O-RU $b$ at global round $k$ is finished. At  {\small$\widetilde{\tau}_{b}^{\downarrow,{(k)}}$}, each recruited FLU $u\in\mathcal{U}_b$ first synchronizes its LM with {\small$\bm{\omega}^{(k)}$}. It then trains it LM through {\small$\ell^{(k)}_{u}$} SGD iterations over mini-batches of data points from its dataset denoted by {\small$\mathcal{B}^{\ell}_{u}(\widetilde{\tau}_{b}^{\downarrow,{(k)}})\subseteq \Upsilon_{u}(\widetilde{\tau}_{b}^{\downarrow,{(k)}})$},\footnote{Note that according to \eqref{eq:dynamic_dataset_size}, FLU $u$ has a stationary dataset {\small$\Upsilon_{u}(\widetilde{\tau}_{b}^{\downarrow,{(k)}})$} during model training (i.e., {\small$T^{\mathsf{train}}_u$}).} where {\small$\ell\in\{1,\cdots,\ell^{(k)}_{u}\}$} is the SGD iteration index.\footnote{To cope with computation heterogeneities of FLUs, we presume that FLUs conduct different numbers of SGD iterations under various mini-batch sizes.} Each mini-batch {\small$\mathcal{B}^{\ell}_{u}(\widetilde{\tau}_{b}^{\downarrow,{(k)}})$} is obtained via random data sampling without replacement. Further, the mini-batch size is determined by {\small$|\mathcal{B}^{\ell}_{u}(\widetilde{\tau}_{b}^{\downarrow,{(k)}})|{=}{B}_{u}(\widetilde{\tau}_{b}^{\downarrow,{(k)}})=\varsigma^{(k)}_u |\Upsilon_{u}(\widetilde{\tau}_{b}^{\downarrow,{(k)}})|$},~$\forall \ell$, where {\small$\varsigma^{(k)}_u\in(0,1]$} represents the fraction of local dataset contained in the mini-batch. Letting {\small$\eta_{_k}$} denote the step-size, the evolution of LM of FLU $u$ is given by: 
\begin{equation}\label{eq:WeightupdateStrat}
\hspace{-13.5mm}
\resizebox{0.42\textwidth}{!}{  $
   \bm{\omega}_{u}^{(k),\ell}\hspace{-0.5mm}{=}\bm{\omega}^{(k),\ell-1}_{u}\hspace{-0.5mm} {-} {\eta_{_k}} \hspace{-0.5mm}\sum_{\xi\in \mathcal{B}^{\ell}_{u}(\widetilde{\tau}_{b}^{\downarrow,{(k)}})} \hspace{-1mm} {{\nabla  f_{u}(\bm{\omega}^{(k),\ell-1}_{u},\xi)}\big/{{B}_{u}(\widetilde{\tau}_{b}^{\downarrow,{(k)}})}}\hspace{-0.1mm},
    \hspace{-9mm}$}
    \vspace{-0.5mm}
\end{equation}
where $\bm{\omega}^{(k),0}_{u}=\bm{\omega}^{(k)}$ is the GM received from the server. 

\vspace{-1mm}
\subsection{Dynamic Gradient Vector (GV) Dispersion by DPU}\label{sec:dispersion}

After completing {\small$\ell^{(k)}_{u}$} SGD iterations, as described above, each FLU $u$ first calculates its cumulative GV as: {\small$\widetilde{\nabla {\mathfrak{L}}}_{u}^{(k)} {=} \hspace{-.1mm}(\bm{\omega}_u^{(k),0}{-}\bm{\omega}_{u}^{(k),\ell^{(k)}_{u}})\big/\eta_{_k}=\hspace{-.1mm}(\bm{\omega}^{(k)}{-}\bm{\omega}_{u}^{(k),\ell^{(k)}_{u}})\big/\eta_{_k}$}. 
Each DPU $u'$ then \textit{partitions} its GV {\small$\widetilde{\nabla {\mathfrak{L}}}_{u'}^{(k)}$} into multiple \textit{chunks}, parallelly transferred to multiple CHUs in D2D mode. Specifically, let {\small$\widetilde{\nabla {\mathfrak{L}}}_{u',u,r}(t_x){\in}\mathbb{R}^{M}$} denote a gradient chunk of DPU $u'\in\mathcal{U}_{b}$ received at CHU $u$ over PRB $r$ at FGTI $t_x$, with {\small$\overline{\psi}_{u',u,r}(t_x) M$} elements of which selected from {\small$\widetilde{\nabla {\mathfrak{L}}}_{u'}^{(k)}$} and zero elsewhere.\footnote{Note that when transmitting {\small$\widetilde{\nabla {\mathfrak{L}}}_{u',u,r}(t_x)$}, only non-zero elements and their indices are sent, reducing communication overhead.} 
 Here, {\small$\overline{\psi}_{u',u,r}(t_x){\in}[0,1]$} is a decision variable (optimized in Sec.~\ref{sec:PF}) indicating the proportion of the gradient of DPU $u'$ that will be sent to CHU $u$ over PRB $r$.
Note that, for each $u'\in\mathcal{U}_{b}$, we have {\small$\sum_{x\in\mathcal{N}^{(k)}}\sum_{u\in \mathcal{U}_b}\sum_{r\in \overline{R}_{b}}\widetilde{\nabla {\mathfrak{L}}}_{u',u,r}(t_x)=\widetilde{\nabla {\mathfrak{L}}}_{u}^{(k)}$}, and {\small$\sum_{x\in\mathcal{N}^{(k)}}\sum_{u\in \mathcal{U}_b}\sum_{r\in \overline{R}_{b}} \overline{\psi}_{u',u,r}(t_x) =1$}.



\vspace{-1mm}
\subsection{Temporal and Hierarchical FL Model Aggregation}\label{sec:aggregation}
Considering DCC depicted in Fig.~\ref{fig:systemModel}, {\tt DCLM} establishes a 4-tier uplink communication hierarchy. The server is situated at the top (layer~1), O-RUs are located at layer~2, followed by CHUs at layer~3, and DPUs at layer~4. Subsequently, at each global round $k$, the process of uploading local GVs in {\tt DCLM} occurs in three hierarchical stages, starting with DPUs to CHUs, then from CHUs to O-RUs, and finally from O-RUs to the server. In this scheme, components of each layer utilize a designated aggregation strategy to aggregate GVs received from the lower layer into a single GV, which will be transmitted to the upper layer. Specifically, as explained below, the aggregation process starts with performing \textit{dynamic fragmented local GVs aggregation} at CHUs during FGTIs of global round $k$ (i.e., layer~3).\footnote{Here, the term ``{dynamic}" refers to gradual GVs aggregation carried out by CHUs during FGTIs, while the term ``{fragmented}" refers to the fact that GVs of each DPU is spread across multiple CHUs.} Then, \textit{integrated local GVs aggregation} is done at O-RUs (i.e., layer~2). Lastly, \textit{global model aggregation} is performed at the server (i.e., layer~1). 

\subsubsection{Dynamic fragmented local GVs aggregation}\label{sec:model_aggregation}
Each CHU $u$ aggregates its own GV and the GVs received from its DPUs into an aggregated GV  as follows:
\begin{equation}\label{eq:normalLocalCHU}
\hspace{-6mm}
\begin{aligned}
    &\widetilde{\nabla {\mathfrak{L}}}_{u}^{\mathsf{frag},(k)}\hspace{-1mm}=\underbrace{\big(\lambda_{u}^{(k)}\times|\Upsilon_{u}(\widetilde{\tau}_{b}^{\downarrow,{(k)}})|\times\widetilde{\nabla {\mathfrak{L}}}_{u}^{(k)}\big)\big/\ell^{(k)}_{u}}_{(a)}\\[-.3em]
    &{+}\hspace{-2.5mm}\underbrace{\sum_{x{\in}\mathcal{N}^{(k)}}\hspace{-0.5mm}\sum_{u'\in \mathcal{U}_{b}}\hspace{-0.5mm}\sum_{r{\in}\overline{\mathcal{R}}_{b}}\hspace{-2mm}\Big(\hspace{-0.5mm}\overline{\lambda}_{u'}^{(k)} {\times}|\Upsilon_{u'}(\widetilde{\tau}_{b}^{\downarrow,{(k)}})|{\times}\widetilde{\nabla {\mathfrak{L}}}_{u',u,r}(t_x)\hspace{-0.5mm}\Big)\hspace{-0.7mm}{
    \Big/}\ell^{(k)}_{u'}}_{(b)}, 
\end{aligned}
\hspace{-4.5mm}
\vspace{-0.3mm}
\end{equation}
where $(a)$ is the scaled cumulative GV of CHU $u$ during local training and $(b)$ accounts for the scaled cumulative GV of DPUs that transmit a portion of their GV to CHU $u$ during FGTIs of global round $k$. In \eqref{eq:normalLocalCHU}, a normalization is done based on SGD iterations $\ell^{(k)}_{u}$ and $\ell^{(k)}_{u'}$ to compensate for the variety in the numbers of FLUs' SGDs. This prevents model bias towards FLUs that conduct higher SGD iterations~\cite{9521822}. Each CHU $u$ then transmits {\small$\widetilde{\nabla \mathfrak{L}}_{u}^{\mathsf{frag},(k)}$} to its associated O-RU.
\subsubsection{Integrated local GV aggregation}
Each O-RU $b$ aggregates its CHUs' GVs into an integrated\footnote{The term ``integrated" captures that all fragments of GVs of DPUs dispersed among CHUs will be bring together at O-RUs.} GV as follows:
\begin{equation}\label{eq:normalLocalORU}
\resizebox{0.36\textwidth}{!}{  $
        \widetilde{\nabla {\mathfrak{L}}}_{b}^{(k)}=(1/|\Upsilon^{\mathsf{s}}(\widetilde{\bm{\tau}}^{\downarrow,{(k)}})|)\sum_{u\in \mathcal{U}_{b}}\lambda_{u}^{(k)}\widetilde{\nabla {\mathfrak{L}}}_{u}^{\mathsf{frag},(k)},$}
\end{equation}
where {\small$|\Upsilon^{\mathsf{s}}(\widetilde{\bm{\tau}}^{\downarrow,{(k)}})|{=}\sum_{b\in \Omega}\sum_{u\in \mathcal{U}_{b}}\widehat{\lambda}_{u}^{(k)}|\Upsilon_{u}(\widetilde{\tau}_{b}^{\downarrow,{(k)}})|$} is the cumulative dataset size of recruited FLUs, where  {\small$\widetilde{\bm{\tau}}^{\downarrow,{(k)}}{=}[\widetilde{\tau}_{b}^{\downarrow,{(k)}}]_{b\in \Omega}$}. Each O-RU $b$ then transmits {\small$\widetilde{\nabla {\mathfrak{L}}}_{b}^{(k)}$} to the server.
\subsubsection{Global model aggregation}
The server performs global aggregation, to form a global GV as {\small$\widetilde{\nabla {\mathfrak{L}}}^{(k)}{=} \mathfrak{B}_{k}\sum_{b{\in} \Omega}\widetilde{\nabla {\mathfrak{L}}}_{b}^{(k)}$}, where {\small$\mathfrak{B}_{k}$} is a boosting term which can accelerate the training. The server then obtains the new GM {\small$\bm{\omega}^{(k+1)}$} as follows:
\begin{equation}\label{eq:mainupdateWeight}
    \bm{\omega}^{(k+1)}=\bm{\omega}^{(k)}-\eta_{_k}\widetilde{\nabla {\mathfrak{L}}}^{(k)},
    \vspace{-1.1mm}
\end{equation}
which is broadcast to recruited FLUs of round $k{+}1$. We next focus on the wireless control decisions in {\tt DCLM}, detailing \textit{how} model transmissions occur during the above processes.

\vspace{-0.5mm}
\section{Dynamic Wireless Control Decisions}\label{sec:dynamic_control_decisions}
\noindent 
To orchestrate the procedure outlined in the previous section while taking into account the latency and energy overhead of FL in a dynamic environment, dynamic wireless control decisions become essential. However, $\mathscr{D}$-Events complicate these decisions by perturbing the instantaneous resource demands of FLUs. To address this challenge, {\tt DCLM} formalizes a suite of dynamic wireless control decisions (Fig. \ref{fig:dynamic_control_decision}), including dynamic MAC scheduler (Sec.~\ref{sec:mac_scheduler}) and dynamic GM/GV transmissions (Sec.~\ref{sec:GM_GPS_transmissions}). These decisions are later  optimized in Sec.~\ref{sec:PF}.

In a nutshell, the MAC scheduler is in charge of performing fine-granular uplink/downlink and D2D resource allocation at each FGTI (i.e., licensed/unlicensed PRB and transmit power allocation). Also, dynamic GM/GV transmissions using DCC mode enable controlling the broadcast of GM from O-RUs to all  FLUs, dispersion of local GVs of DPUs across CHUs, and dispatching of local GVs of CHUs to O-RUs. 

\begin{figure}[t]
 \centering
\noindent\includegraphics[width=8.9cm,trim= 25 30 5 90, clip]{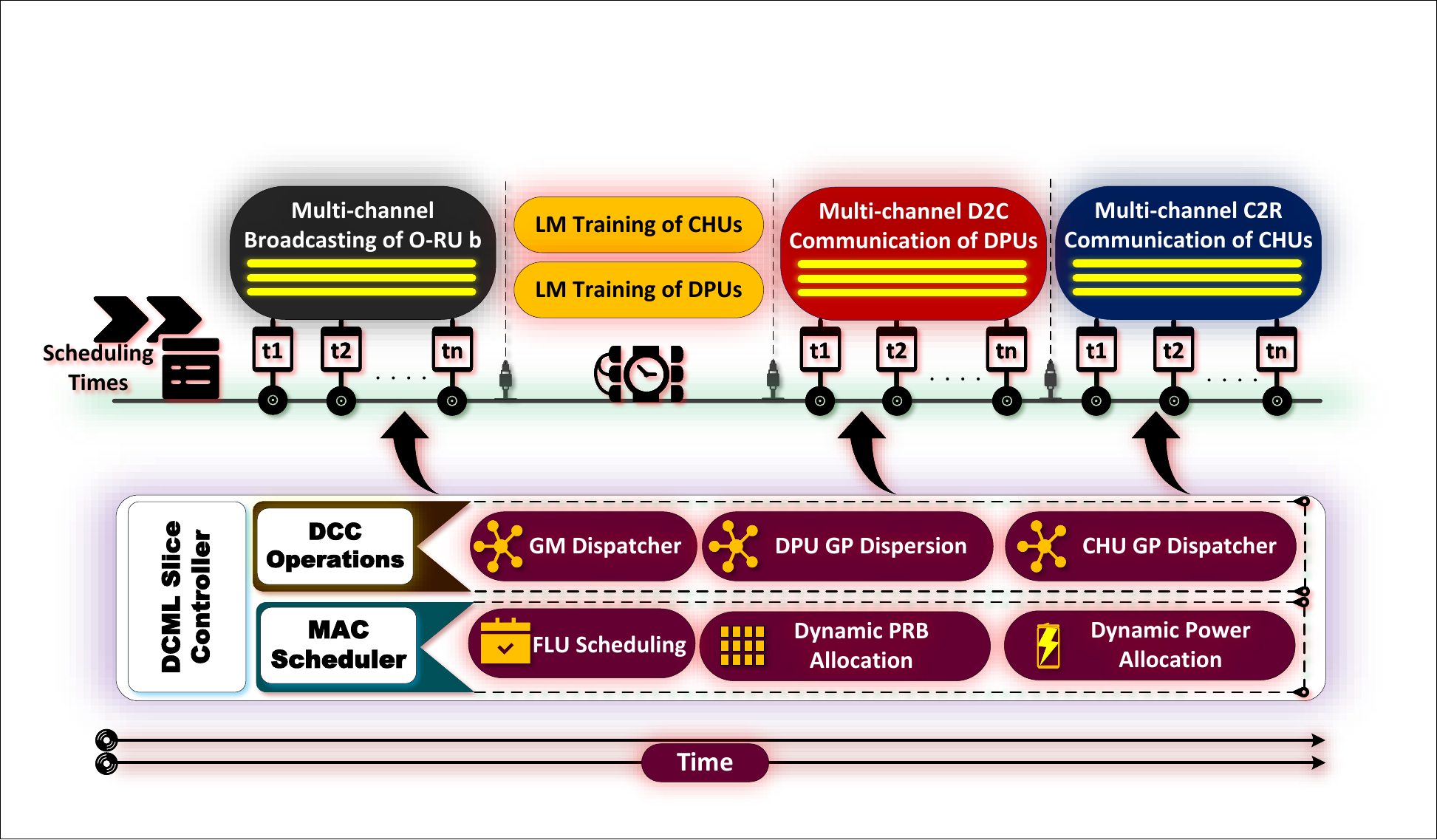} 
\vspace{-5.75mm}
\caption{The {\tt DCLM} slice controller utilizes DCC and a dedicated MAC scheduler over O-RAN to effectively schedule FLUs for model transmission.}
\label{fig:dynamic_control_decision}
\vspace{-2.1mm}
\end{figure}
\vspace{-1mm}
\subsection{Dynamic MAC Scheduler for an FL Slice}\label{sec:mac_scheduler}
In this section, we develop one of the first dedicated MAC scheduler for an FL slice $\mathcal{S}^{(k)}_{b}$ at O-RU $b$, which allocates licensed/unlicensed PRBs to FLUs and transmit power of FLUs and O-RUs to PRBs tailored to dynamic channel capacities during FGTIs of global round $k$. Specifically, the MAC scheduler operates in three main steps. First, it makes scheduling decisions (SDs) for downlink/uplink communication modes at each FGTI (Sec.~\ref{sec:scheduling_decisions}). Second, it allocates PRBs to the scheduled FLUs (Sec.~\ref{sec:resource_allocation_mac}). Third, it allocates transmit power to PRBs (Sec.~\ref{sec:power_allocation}). We next discuss these  steps.

\subsubsection{Scheduling decisions (SDs)}\label{sec:scheduling_decisions} 
We define three SD variables. (i) \textit{Broadcast scheduling decision variable $\beta^{\downarrow}_{b}\hspace{-0.3mm}(t_{x}){\in}\{0,1\}$}, determining whether O-RU $b$ is scheduled to broadcast the GM to the recruited FLUs over licensed PRBs of slice $\mathcal{S}^{(k)}_{b}$ at FGTI $t_{x}$. (ii) \textit{DPU scheduling decision variable $\overline{\beta}^{\uparrow}_u(t_x){\in}\{0,1\}$}, which determines whether DPU $u$ is scheduled to upload its LM over unlicensed PRBs at FGTI $t_x$. (iii) \textit{CHU scheduling decision variable $\beta^{\uparrow}_u(t_{x}){\in} \{0,1\}$}, specifying whether CHU $u$ is scheduled to upload its LM over licensed PRBs at FGTI~$t_{x}$.

To ensure proper scheduling, we next impose important SD constraints (SDCs) on the above SD variables.
\vspace{-0.5mm}
\begin{remark}
In the following, we will face conditions that can be expressed as if/else statements. Nevertheless, designing optimal control decisions with respect to if/else statements pushes the problem to the realm of integer programming, which is often intractable. Instead, as part of our contributions, we will demonstrate how to transform these conditions to a set of \textit{continuous non-convex representations} (CNRs), enabling us to develop a tractable approach to optimize the control decisions.    
\end{remark}
\vspace{-0.5mm}
\textbf{$\bm{\mathsf{SDC}_1}$: GM broadcasting by O-RUs.}  
SD variable $\beta^{\downarrow}_{b}\hspace{-0.3mm}(t_{x})$, $\forall x{\in}\mathcal{N}^{(k)}$, can only be $1$ (i.e., O-RU $b$ can broadcast the GM to the recruited FLUs) if at least one FLU is recruited from O-RU $b$ (i.e., $\max_{u\in\mathcal{U}_b}\{\widehat{\lambda}^{(k)}_{u}\}=1$) and $t_{x}{\in}[T^{(k{-}1)},T^{(k{-}1)}+\tau_{b}^{\downarrow,{(k)}})$.
We describe these conditions through the following CNR: 
\begin{equation}\label{cons:mac_1}
   \hspace{-1mm}
   \underbrace{\big(((T^{(k{-}1)}{+}\tau_{b}^{\downarrow,{(k)}}){/}t_{x})\max_{u\in\mathcal{U}_b}\{\widehat{\lambda}^{(k)}_{u}\}{-}1\big)}_{(a)}{\times} {\big(1{-}\beta^{\downarrow}_{b}\hspace{-0.3mm}(t_{x})\big)} {\leq}  0.
   \hspace{-1mm}
   \vspace{-1mm}
\end{equation}
In \eqref{cons:mac_1}, if {\small$t_{x}{\in}[T^{(k{-}1)},T^{(k{-}1)}{+}\tau_{b}^{\downarrow,{(k)}})$} and {\small$\max_{u{\in}\mathcal{U}_b}\{\widehat{\lambda}^{(k)}_{u}\}{=}1$}, term $(a)$ becomes positive, forcing {\small$\beta^{\downarrow}_{b}\hspace{-0.3mm}(t_{x}){=}1$}. However, \eqref{cons:mac_1} is not capable of imposing {\small$\beta^{\downarrow}_{b}\hspace{-0.3mm}(t_{x}){=}0$} if either {\small$t_x\ge T^{(k{-}1)}{+}\tau_{b}^{\downarrow,{(k)}}$ or $\max_{u{\in}\mathcal{U}_b}\{\widehat{\lambda}^{(k)}_{u}\}{=}0$}. To resolve this issue, we complement \eqref{cons:mac_1} with the following CNR if {\small$t_x{>}T^{(k{-}1)}{+}\tau_{b}^{\downarrow,{(k)}}$}:
\begin{equation}\label{cons:mac_2}
    \big(1{-}((T^{(k{-}1)}+\tau_{b}^{\downarrow,{(k)}})/t_{x})\max_{u\in\mathcal{U}_b}\{\widehat{\lambda}^{(k)}_{u}\}\big)\times\beta^{\downarrow}_{b}\hspace{-0.3mm}(t_{x})\le 0.
\vspace{-0.5mm}
\end{equation}
We also set $\beta^{\downarrow}_{b}\hspace{-0.3mm}(t_{x}){=}0$ if $t_{x}=T^{(k{-}1)}+\tau_{b}^{\downarrow,{(k)}}$.


\textbf{$\bm{\mathsf{SDC}_2}$: Local GV uploading by DPUs.} SD variable $\overline{\beta}^{\uparrow}_u(t_x)$, $\forall x{\in} \mathcal{N}^{(k)}$, can only be $1$ if FLU $u{\in} \mathcal{U}_{b}$ is recruited as a DPU (i.e., $\overline{\lambda}_{u}^{(k)}{=}1$) and $t_x$ is in the interval of LM uploading of DPU $u$ (i.e., $t_x\in(\Psi(\overline{\mathscr{D}}_{u}^{\uparrow,(k)}), \Psi(\overline{\mathscr{D}}_{u}^{\nuparrow,(k)}))$). We represent these conditions as the following CNR:
\begin{equation}\label{cons:mac_4}
\hspace{-2mm}
 \resizebox{0.45\textwidth}{!}{  $
   \big(\overline{\lambda}_{u}^{(k)}\Psi(\overline{\mathscr{D}}_{u}^{\nuparrow,(k)}){/}t_x{-}1\big)\big(1{-}\overline{\lambda}_{u}^{(k)}\Psi(\overline{\mathscr{D}}_{u}^{\uparrow,(k)}){/}t_x\big)\big(1{-}\overline{\beta}^{\uparrow}_u(t_x)\big){\leq} 0,$}
\hspace{-2mm}
\end{equation}
which can guarantee that $\overline{\beta}^{\uparrow}_u(t_x)=1$ at correct FGTIs $t_x$ and when $\overline{\lambda}_{u}^{(k)}{=}1$; however, it cannot enforce $\overline{\beta}^{\uparrow}_u(t_x)=0$ if DPU $u$ has not been recruited as a DPU ($\overline{\lambda}_{u}^{(k)}{=}0$), 
has not started its local training ($t_x\leq \Psi(\overline{\mathscr{D}}_{u}^{\uparrow,(k)})$), or
has already finished its LM uploading ($t_x\geq \Psi(\overline{\mathscr{D}}_{u}^{\nuparrow,(k)})$). We ensure these conditions through the following succinct CNR:
\begin{equation}\label{cons:mac_5}
\hspace{-5mm}
 \resizebox{0.46\textwidth}{!}{  $
 \max\big\{\overline{\lambda}_{u}^{(k)}\Psi(\overline{\mathscr{D}}_{u}^{\uparrow,(k)}){/}t_x{-}1,1{-}\overline{\lambda}_{u}^{(k)}\Psi(\overline{\mathscr{D}}_{u}^{\nuparrow,(k)}){/}t_x\big\}\overline{\beta}^{\uparrow}_u(t_x){\leq} 0.$}
 \hspace{-4mm}
\end{equation}
We also set $\overline{\beta}^{\uparrow}_u(t_x)=0$ if $t_x=\Psi(\overline{\mathscr{D}}_{u}^{\uparrow,(k)})$ or $t_x=\Psi(\overline{\mathscr{D}}_{u}^{\nuparrow,(k)})$.

\par\textbf{$\bm{\mathsf{SDC}_3}$: Local GV uploading by CHUs.} SD variable $\beta^{\uparrow}_u(t_{x})$, $\forall x{\in}\mathcal{N}^{(k)}$, can only be $1$ if FLU $u{\in} \mathcal{U}_{b}$ (i) has been recruited as a CHU (i.e., $\lambda_{u}^{(k)}{=}1$), (ii) has finished its local training, (iii) has received all of the LMs from its associated DPUs, and (iv) has not finished its upload operation, which we capture via
\begin{equation}\label{cons:mac_7}
   \hspace{-5mm} \resizebox{0.46\textwidth}{!}{$
   \big(\hspace{-0.5mm}\lambda_{u}^{(k)}\Psi(\mathscr{D}_{u}^{\nuparrow,(k)}){/}t_{x}{-}1\hspace{-0.5mm}\big)\big(\hspace{-0.5mm}1{-}\lambda_{u}^{(k)}\Psi(\mathscr{D}_{u}^{\uparrow,(k)}){/}t_{x}\hspace{-0.5mm}\big)\left(\hspace{-0.5mm}1{-}\beta^{\uparrow}_u(t_{x})\hspace{-0.5mm}\right){\leq}0,$}
   \hspace{-4mm}
\end{equation}
using the same interpretations as \eqref{cons:mac_4}. Similar to \eqref{cons:mac_5}, we restrict CHU $u$ from being scheduled at time $t_{x}$ via CNR
\begin{equation}\label{cons:mac_8}
\hspace{-4mm}
\resizebox{0.45\textwidth}{!}{  $
 \max\big\{\lambda_{u}^{(k)}\Psi(\mathscr{D}_{u}^{\uparrow,(k)}){/}t_{x}{-}1,1{-}\lambda_{u}^{(k)}\Psi(\mathscr{D}_{u}^{\nuparrow,(k)}){/}t_{x}\big\}\beta^{\uparrow}_u(t_{x}){\leq} 0.$}
\hspace{-4mm}
\end{equation}
We also set $\beta^{\uparrow}_u(t_x)=0$ if $t_x=\Psi(\mathscr{D}_{u}^{\uparrow,(k)})$ or $t_x=\Psi(\mathscr{D}_{u}^{\nuparrow,(k)})$.

Considering the above SDs and SDCs, we next present dynamic PRB allocation strategy of the MAC scheduler.

\subsubsection{Dynamic PRB allocation}\label{sec:resource_allocation_mac}
We define decision variable $\overline{\varrho}_{u,u',r}(t_x){\in}\{0,1\}$, $\forall x\in \mathcal{N}^{(k)}$, to denote whether unlicensed PRB $r {\in} \overline{\mathcal{R}}_{b}$ is allocated to DPU $u{\in}\mathcal{U}_{b}$ to upload a fraction of its LM to CHU $u'{\in}\mathcal{U}_{b}$ at time $t_x$ (i.e., $\overline{\varrho}_{u,u',r}(t_x){=}1$) or not (i.e., $\overline{\varrho}_{u,u',r}(t_x){=}0$). $\overline{\varrho}_{u,u',r}(t_x)$ can be $1$ only if (i) $u$ is recruited as a DPU (i.e., $\overline{\lambda}_{u}^{(k)}=1$), (ii) $u$ is scheduled at time $t_x$ for uplink  (i.e., $\overline{\beta}^{\uparrow}_u(t_x)=1$), and (iii) $u'$ is recruited as a CHU (i.e., $\lambda_{u}^{(k)}=1$), which we collectively capture via 
\begin{equation}\label{cons:mac_r_1}
        \overline{\varrho}_{u,u',r}(t_x){-}\min\big\{\overline{\lambda}_{u}^{(k)},\overline{\beta}^{\uparrow}_u(t_x),\lambda_{u}^{(k)}\big\} {\le} 0,~x{\in}\mathcal{N}^{(k)}.
\end{equation}
Further, {\small$\overline{\varrho}_{u,u',r}(t_x)$} can be $1$ if CHU $u'$ has not started its upload (i.e., when {\small$\beta^{\uparrow}_u(t_{z}){=}0, \forall z{\in}[\mathcal{N}^{(k{-}1)}{+}1,x]$}), thus we get:
\begin{equation}\label{cons:mac_r_2}
        \overline{\varrho}_{u,u',r}(t_x){+}\max_{z{\in}[\mathcal{N}^{(k{-}1)}{+}1,x]}\{\beta^{\uparrow}_u(t_{z})\} {\le} 1,~x{\in}\mathcal{N}^{(k)}.
\end{equation}
Moreover, to guarantee that there is at least one DPU dispersing its LM to CHU~$u'$, we introduce the following CNR:
\begin{equation}\label{cons:mac_r_3}
\resizebox{0.38\textwidth}{!}{  $
     \sum_{u{\in} \mathcal{U}_{b}}\sum_{r{\in}\overline{\mathcal{R}}_{b}} \overline{\varrho}_{u,u',r}(t_x)\ge 1,~x{\in}\mathcal{N}^{(k)},~ u'{\in} \mathcal{U}_{b}.
$}
\end{equation}
It is worth mentioning that $\overline{\varrho}_{u,u',r}(t_x)=1$ implicitly dictates the connectivity between DPU $u$ and CHU $u'$. 

We also introduce a decision variable $\varrho_{u,r}\hspace{-0.3mm}(t_{x}){\in}\{0,1\}$ to determine whether licensed PRB $r {\in} \mathcal{R}_{b}$ is allocated to CHU $u$ to transmit a fraction of its LM to O-RU $b$ at time $t_{x}$. For $\varrho_{u,r}\hspace{-0.3mm}(t_{x})$ to be $1$, $u$ must be recruited as a CHU (i.e., $\lambda_{u}^{(k)}{=}1$) and scheduled for uplink (i.e., $\beta^{\uparrow}_u(t_{x}){=}1$), captured via
\begin{equation}\label{cons:mac_r_4}
        \varrho_{u,r}(t_{x}){-}\min\big\{\lambda_{u}^{(k)},\beta^{\uparrow}_u(t_{x})\big\}{\le}0,~x{\in}\mathcal{N}^{(k)}.
\end{equation}
We next present power allocation strategy of the MAC scheduler.
\vspace{-1mm}
\subsubsection{Dynamic power allocation}\label{sec:power_allocation}
We formulate dynamic transmit power allocation under DCC for three communication modes utilized in {\tt DCLM}, including (i) O-RU to FLU (R2F), (ii) CHU to O-RU (C2R), and (iii) DPU to CHU (D2C). 

\textbf{O-RU to FLU ($\bm{\mathsf{R2F}}$).} We use a decision variable $\rho^{\downarrow}_{b,r}\hspace{-0.5mm}(t_{x}){\in} [0,1]$ to capture the percentage of the maximum transmit power of O-RU $b$, i.e., $P^{\mathsf{max}}_{b}$, utilized for GM broadcasting over PRB $r {\in} \mathcal{R}_{b}$ at global round $k$, where {\small$\sum_{r {\in} \mathcal{R}_{b}} \rho^{\downarrow}_{b,r}\hspace{-0.5mm}(t_{x}) {\le} 1$}. We define the following CNR to ensure that power allocation for R2F communication mode is done if and only if O-RU $b$ is scheduled for broadcasting the GM to the FLUs: 
\vspace{-.5mm}
\begin{equation}\label{cons:mac_p_1}
    \begin{aligned}
        0\le \beta^{\downarrow}_{b}\hspace{-0.3mm}(t_{x})-\rho^{\downarrow}_{b,r}\hspace{-0.5mm}(t_{x})<1,~x{\in}\mathcal{N}^{(k)}.
    \end{aligned}
\end{equation}

\textbf{CHU to O-RU ($\bm{\mathsf{C2R}}$).} We introduce a decision variable $\rho^{\uparrow}_{u,r}\hspace{-0.3mm}(t_{x}){\in}[0,1]$ to refer to the percentage of the transmit power of CHU $u{\in} \mathcal{U}_{b}$ (i.e., $P^{\mathsf{max}}_{u}$) allocated to PRB $r {\in} \mathcal{R}_{b}$ for uplink  at time $t_{x}$, where $\sum_{r {\in} \mathcal{R}_{b}} \rho^{\uparrow}_{u,r}\hspace{-0.3mm}(t_{x}) {\le} 1$. $\rho^{\uparrow}_{u,r}\hspace{-0.3mm}(t_{x})$ can take values larger than $0$ if PRB $r$ is allocated to CHU $u$ at FGTI $t_{x}$ (i.e., $\varrho_{u,r}\hspace{-0.3mm}(t_{x}){=}1$), captured via 
\vspace{-.5mm}
\begin{equation}\label{cons:mac_p_2}
    \begin{aligned}
      0\le \varrho_{u,r}\hspace{-0.3mm}(t_{x})-\rho^{\uparrow}_{u,r}\hspace{-0.3mm}(t_{x})< 1,~x{\in}\mathcal{N}^{(k)}.
    \end{aligned}
\end{equation}

\textbf{DPU to CHU ($\bm{\mathsf{D2C}}$).} We use a continuous decision variable $\overline{\rho}^{\uparrow}_{u,r}(t_x)\in[0,1]$ to capture the percentage of transmit power of DPU $u\in \mathcal{U}_{b}$ (i.e., $P^{\mathsf{max}}_{u}$) allocated to unlicensed PRB $r {\in} \overline{\mathcal{R}}_{b}$ for communication at time $t_x$, where $\sum_{r {\in} \overline{\mathcal{R}}_{b}} \overline{\rho}^{\uparrow}_{u,r}(t_x) {\le} 1$. $\overline{\rho}^{\uparrow}_{u,r}(t_x)$ can be larger than $0$ if and only if PRB $r$ is allocated to DPU $u$ (i.e., $\overline{\varrho}_{u,u',r}(t_x){=}1$), captured via
\vspace{-0.5mm}
\begin{equation}\label{cons:mac_p_3}
    \begin{aligned}
      0\le \overline{\varrho}_{u,u',r}(t_x)-\overline{\rho}^{\uparrow}_{u,r}(t_x)< 1,~x{\in}\mathcal{N}^{(k)}.
    \end{aligned}
\end{equation}
\vspace{-4mm}
\subsection{Dynamic GM and GV Transmissions}\label{sec:GM_GPS_transmissions}
\textbf{O-RUs GM broadcasting/dispatching.} We introduce a decision variable $\varphi_{b,r}\hspace{-0.3mm}(t_{x}){\in}[0,1]$ to capture the percentage of the GM that will be broadcast to FLUs over PRB $r$ from O-RU $b$ at~$t_{x}$. We define the following CNR to ensure that $\varphi_{b,r}\hspace{-0.3mm}(t_{x})>0$ if and only if O-RU $b$ is scheduled for broadcasting the GM: 
\vspace{-0.5mm}
\begin{equation}\label{cons:LGPD0}
    0\le \beta^{\downarrow}_{b}\hspace{-0.3mm}(t_{x})-\varphi_{b,r}\hspace{-0.3mm}(t_{x})<1,~x{\in}\mathcal{N}^{(k)}.
\end{equation}

\textbf{DPUs local GV dispersion.} We utilize a continuous decision variable $\overline{\psi}_{u,u',r}(t_x){\in}[0,1]$ to capture the percentage of GV of DPU $u{\in} \mathcal{U}_{b}$ offloaded to CHU $u'{\in} \mathcal{U}_{b}$ over unlicensed PRB $r {\in} \overline{\mathcal{R}}_{b}$ at time $t_x$. $\overline{\psi}_{u,u',r}(t_x)$ can be greater than zero only if PRB $r$ is allocated to DPU $u$, i.e., $\overline{\varrho}_{u,u',r}(t_x){=}1$. Similarly, if $\overline{\varrho}_{u,u',r}(t_x){=}1$, DPU $u$ should disperse a fraction of its LM to CHU $u'$ over PRB $r$, i.e., $\overline{\psi}_{u,u',r}(t_x)>0$. We capture these conditions via defining the following CNR:
\vspace{-0.5mm}
\begin{equation}\label{cons:LGPD1}
\begin{aligned}
   0 \le \overline{\varrho}_{u,u',r}(t_x)-\overline{\psi}_{u,u',r}(t_x)< 1,~x{\in}\mathcal{N}^{(k)}.
\end{aligned}
\end{equation}
Since DPU $u$ cannot disperse its GV to itself, we also impose
\begin{equation}\label{cons:LGPD2}
\begin{aligned}
   \overline{\psi}_{u,u,r}(t_x)= 0,~~x{\in}\mathcal{N}^{(k)},\forall r \in \overline{\mathcal{R}}_{b}.
\end{aligned}
\vspace{-.5mm}
\end{equation}

\textbf{CHUs local GV dispatching.} We use a decision variable $\psi_{u,r}\hspace{-0.3mm}(t_{x}){\in}[0,1]$ to denote the percentage of GV of CHU $u{\in} \mathcal{U}_{b}$ dispatched to O-RU $b$ over PRB $r {\in} \mathcal{R}_{b}$~at time~$t_{x}$. Further, with the same interpretations as~\eqref{cons:LGPD1}, we impose
\vspace{-.1mm}
\begin{equation}\label{cons:LMD1}
        0\le \varrho_{u,r}\hspace{-0.3mm}(t_{x})-\psi_{u,r}\hspace{-0.3mm}(t_{x})< 1,~x{\in}\mathcal{N}^{(k)}.
\end{equation}

\vspace{-1mm}
\section{Latency and Energy Consumption}\label{sec:computation_communication_latencies_energy}
\noindent In this section, we calculate (i) dynamic licensed/unlicensed data  rates (Sec.~\ref{sec:data_transmission_rate}), (ii) licensed/unlicensed uplink and downlink latencies (Sec. \ref{sec:computation_communication_latencies}), and (iii) consumed energy of FLUs and O-RUs (Sec. \ref{sec:energy_consumption}). 
In nutshell, these derivations are introduced to obtain closed-form expression of the latency and energy consumption of ML model training with respect to the dynamic wireless control decisions at each FGTI $t_x$, including dynamic MAC scheduler (e.g., scheduling decision $\beta^{\downarrow}_{b}\hspace{-0.3mm}(t_{x})$) and GM/GV  transmission decisions (e.g., GM broadcasting decision $\varphi_{b,r}\hspace{-0.3mm}(t_{x})$). These closed-form expressions will be later integrated into the ML convergence bounds (Sec.~\ref{sec:conv}) and enable a coherent formulation of DCLM optimization (Sec.~\ref{sec:PF}).


\vspace{-1.0mm}
\subsection{Dynamic Data Transmission Rates}\label{sec:data_transmission_rate}
\textbf{$\bm{\mathsf{R2F}}$ transmission rate.}
Let $\mathfrak{R}^{\downarrow}_{b,u,r}\hspace{-0.5mm}(t_{x})$ be the maximum downlink data rate of FLU $u{\in} \mathcal{U}_{b}$ over PRB $r$ through O-RU $b$ at FGTI $t_x$. The broadcasting data rate of O-RU $b$ over PRB $r$ at FGTI $t_x$, denoted as $ \mathfrak{R}^{\downarrow}_{b,r}\hspace{-0.5mm}(t_{x})$, is then given by
\vspace{-0.5mm}
\begin{equation}\label{eq:broadcast_data_rate}
\hspace{-5mm}
    \mathfrak{R}^{\downarrow}_{b,r}\hspace{-0.5mm}(t_{x}){=}\min_{u\in\mathcal{U}_b}\big\{\mathfrak{R}^{\downarrow}_{b,u,r}\hspace{-0.5mm}(t_{x})|\widehat{\lambda}^{(k)}_u{=} 1,x{\in}\mathcal{N}^{(k)},\forall u{\in} \mathcal{U}_{b}\big\},
    \vspace{-0.3mm}
\hspace{-5mm}
\end{equation}
where the $\min\{\cdot\}$ function is used to ensure that the FLUs do not receive data at a rate exceeding their channel capacity and
\vspace{-0.6mm}
\begin{equation} \label{cons:DR_R2F1}
\mathfrak{R}^{\downarrow}_{b,u,r}\hspace{-0.5mm}(t_{x})= B \log_2\big(1+\Gamma^{\downarrow}_{b,u,r}\hspace{-0.5mm}(t_{x})\big),
\end{equation}
with $B$ denoting the bandwidth of each licensed PRB {\small $r {\in} \mathcal{R}_{b}$}, and {\small$\Gamma^{\downarrow}_{b,u,r}\hspace{-0.5mm}(t_{x})$} representing the signal-to-interference-plus-noise-ratio (SINR) of FLU $u$ when downloading over $r$, given by
\vspace{-0.5mm}
\begin{equation}\label{eq:SINR_broadcast}
\hspace{-5mm}
    \Gamma^{\downarrow}_{b,u,r}\hspace{-0.5mm}(t_{x}){=}\frac{\vert\xi_{b,u}\hspace{-0.3mm}(t_{x})\vert^2 \rho^{\downarrow}_{b,r}\hspace{-0.5mm}(t_{x}) P^{\mathsf{max}}_{b}}{\displaystyle\sum_{b'{\in}\Omega\setminus \{b\}}\hspace{-2mm} |\xi_{b',u}\hspace{-0.3mm}(t_{x})|^2 \hspace{-0.5mm}\rho^{\downarrow}_{b',r}\hspace{-0.5mm}(t_{x}) P^{\mathsf{max}}_{b'}{+} BN_{0}}.
    \vspace{-0.5mm}
\hspace{-5mm}
\end{equation}
In \eqref{eq:SINR_broadcast}, the summation is interference caused by GM broadcasting of all O-RUs $b' {\in} \Omega{\setminus} \{b\}$ at FGTI $t_{x}$, and $N_{0}$ is the power spectral density of Gaussian noise. Additionally, $|\xi_{b,u}\hspace{-0.3mm}(t_{x})|^2$ is the channel gain between FLU $u{\in}\mathcal{U}_{b}$ and O-RU $b$ at time $t_x$.

\textbf{$\bm{\mathsf{C2R}}$ transmission rate.}
Dynamic uplink data rate of CHU $u{\in} \mathcal{U}_{b}$ over PRB $r {\in} \mathcal{R}_{b}$ at time $t_{x}$ is given by
 \vspace{-0.5mm}
\begin{equation}\label{cons:DR_F2R1} 
\mathfrak{R}^{\uparrow}_{u,r}\hspace{-0.3mm}(t_{x})= B \log_2\big(1+\Gamma^{\uparrow}_{u,r}\hspace{-0.3mm}(t_{x})\big),~x{\in}\mathcal{N}^{(k)},
\end{equation}
where $\Gamma^{\uparrow}_{u,r}\hspace{-0.3mm}(t_{x})$ is the uplink SINR of CHU $u{\in}\mathcal{U}_{b}$ to upload data to O-RU $b$ over PRB $r {\in} \mathcal{R}_{b}$ at time $t_{x}$, which is
\begin{equation}\label{eq:SINR_uplink}
\hspace{-5mm}
    \Gamma^{\uparrow}_{u,r}\hspace{-0.3mm}(t_{x}){=}\frac{\vert\xi_{b,u}\hspace{-0.3mm}(t_{x})\vert^2 \rho^{\uparrow}_{u,r}\hspace{-0.3mm}(t_{x})P^{\mathsf{max}}_{u}}{\hspace{-1mm}\displaystyle\sum_{b' \in \Omega} \sum_{u'\in \mathcal{U}_{b'}\setminus \{u\}}\hspace{-1mm}|\xi_{b,u'}\hspace{-0.3mm}(t_{x})|^2 \hspace{-0.8mm}\rho^{\uparrow}_{u',r}\hspace{-0.3mm}(t_{x})P^{\mathsf{max}}_{u'}{+} B N_{0}}\hspace{-0.5mm}.
    \vspace{-.1mm}
\hspace{-4mm}
\end{equation}
In \eqref{eq:SINR_uplink}, the summation in the denominator is the interference caused by upload operations of CHUs of all O-RUs $b' {\in} \Omega$. To prevent interference/overlap between R2F and C2R communications, we impose the following constraint:
 \vspace{-0.5mm}
\begin{equation}\label{cons:r2cc2r1}
T^{(k-1)}+\tau_{b}^{\downarrow,{(k)}}{\leq }~\min_{b\in\Omega}\min_{u\in \mathcal{U}_b}\{\Psi(\mathscr{D}_{u}^{\uparrow,(k)})\}.
\end{equation}

\textbf{$\bm{\mathsf{D2C}}$ transmission rate.} 
The  data rate between DPU $u{\in} \mathcal{U}_{b}$ and CHU $u'{\in} \mathcal{U}_{b}$ over PRB $r {\in} \overline{\mathcal{R}}_{b}$ at $t_x$ is 
 \vspace{-0.7mm}
\begin{equation} \label{cons:DR_Fh2F}
\overline{\mathfrak{R}}^{\uparrow}_{u,u',r}(\hspace{-0.5mm}t_x\hspace{-0.5mm}){=}\overline{B} \log_2\left(1+\overline{\Gamma}^{\uparrow}_{u,u',r}(t_x)\right),~x{\in}\mathcal{N}^{(k)},
\end{equation}
where $\overline{B}$ is the bandwidth of unlicensed PRB $r {\in} \overline{\mathcal{R}}_{b}$, and $\overline{\Gamma}^{\uparrow}_{u,u',r}(\hspace{-0.2mm}t_x\hspace{-0.5mm})$ is the uplink SINR of DPU $u{\in} \mathcal{U}_{b}$ to CHU $u'$ over unlicensed PRB $r {\in} \overline{\mathcal{R}}_{b}$ at time $t_x$, which is given by
\begin{equation}\label{eq:SINR_up_DPU}
 \hspace{-4.3mm} \resizebox{0.46\textwidth}{!}{  $
    \overline{\Gamma}^{\uparrow}_{u,u',r}(\hspace{-0.2mm}t_x\hspace{-0.5mm}){=}\frac{{\displaystyle\vert\xi_{u,u'}\hspace{-0.3mm}(t_{x})\vert^2 \overline{\rho}^{\uparrow}_{u,r}(\hspace{-0.2mm}t_x\hspace{-0.5mm})P^{\mathsf{max}}_{u}}}{\hspace{-0.2mm}\displaystyle\sum_{b' \in \Omega} \sum_{\widehat{u},\widehat{u}'\in \mathcal{U}_{b'}\setminus \{u,u'\}}\hspace{-6mm}|\xi_{\widehat{u},u'}\hspace{-0.3mm}(t_{x})|^2 \hspace{-0.7mm}\rho_{\widehat{u},\widehat{u}',r}(\hspace{-0.2mm}t_x\hspace{-0.5mm})P^{\mathsf{max}}_{\widehat{u}}{+}\overline{B}N_{0}}.$}
     \hspace{-3mm} 
    \vspace{-0.1mm}
\end{equation}
In \eqref{eq:SINR_up_DPU}, $|\xi_{u,u'}\hspace{-0.3mm}(t_{x})|^2$ is the channel gain between DPU $u$ and CHU $u'$ at FGTI $t_{x}$, and the denominator is the interference caused by DPUs of other O-RUs (i.e., $b' {\in} \Omega$) over PRB $r$. 

\vspace{-1mm}
\subsection{Computation and Communication Latencies}\label{sec:computation_communication_latencies}
We next calculate computation and communication latencies, utilized to calculate the occurrence time of $\mathscr{D}$-Events. 
\subsubsection{Local computation}\label{sec:computation_latency}
Let us define $a_{u}$ as the number of CPU cycles used to process one data sample at FLU~$u{\in}\mathcal{U}_{b}$. At global round $k$, the \textit{computation time} of  FLU~$u$ to execute $\ell^{(k)}_{u}$ local SGD iterations based on~\eqref{eq:WeightupdateStrat} is given by
\vspace{-.5mm}
\begin{equation}\label{cons:LT1}
    \begin{aligned}
     \tau_{u}^{\mathsf{LC},{(k)}}= \ell^{(k)}_{u} a_{u} {B}_{u}(\widetilde{\tau}_{b}^{\downarrow,{(k)}}) {/}f^{(k)}_{u},
    \end{aligned}
    \vspace{-.15mm}
\end{equation}
where ${B}_{u}(\widetilde{\tau}_{b}^{\downarrow,{(k)}})$ denotes the mini-batch size of FLU $u$ (see Sec.~\ref{sec:LM}), and $f^{(k)}_{u}$ is the CPU frequency of FLU $u$.

\subsubsection{Communication latencies}\label{sec:communication_latency}
In the following, we present four propositions to calculate $\tau_{b}^{\downarrow,{(k)}}$, $\overline{\tau}_{u}^{\uparrow,(k)}$, $\tau_{u}^{\uparrow,(k)}$, and $\tau_{u}^{\mathsf{W},{(k)}}$. 


\vspace{-0.5mm}
\begin{proposition}[GM Broadcasting Latency]\label{propo:broadcast}
Let $\alpha_{\bm{\omega}} M$ refer to the total size of the model that needs to be broadcast through O-RUs, where $\alpha_{\bm{\omega}} $ is the number of bits required to represent one element of the GM vector $\bm{\omega}^{(k)}$ with size $M$. Further, assume $\sum_{r \in \mathcal{R}_{b}} \varphi_{b,r}\hspace{-0.3mm}(t_{x})= 1$. The downlink latency of O-RU $b$ to broadcast the GM $\bm{\omega}^{(k)}$ to its recruited FLUs is given by
\begin{equation}\label{eq:TDownload}
\hspace{-6mm}\resizebox{0.46\textwidth}{!}{  $
   \tau_{b}^{\downarrow,{(k)}}=\hspace{-1mm}\sum_{x{=}N^{(k{-}1)}{+}1}^{N^{(k)}}\max_{r{\in}\mathcal{R}_{b}}\Big\{ \min\big\{\tau_{b,r}^{\downarrow}(t_{x}),\left(t_{x{+}1}-t_{x}\right)\big\}\Big\},$}
\hspace{-5mm}
\end{equation}    
where {\small$t_{N^{(K)}{+}1}{>}t_{N^{(K)}}$}, and {\small$\tau_{b,r}^{\downarrow}\hspace{-0.3mm}(t_{x})$} is calculated as follows:
\begin{equation}\label{eq:broadcast_communication_latency}
\hspace{-6mm}\resizebox{0.39\textwidth}{!}{  $
\hspace{-5mm}
\tau_{b,r}^{\downarrow}\hspace{-0.3mm}(t_{x}){=}\big(\hspace{-0.3mm}\alpha_{\bm{\omega}}  M{-}M^{\downarrow}_{b}(x)\hspace{-0.3mm}\big)\varphi_{b,r}\hspace{-0.3mm}(t_{x}){/}\big(\mathfrak{R}^{\downarrow}_{b,r}\hspace{-0.3mm}(t_{x}){+}1{-}\beta^{\downarrow}_{b}\hspace{-0.3mm}(t_{x})\hspace{-0.3mm}\big)\hspace{-0.3mm}.
\hspace{-5mm}$}
\vspace{-.5mm}
 \end{equation}
In \eqref{eq:broadcast_communication_latency}, $M^{\downarrow}_{b}(x)$ is the transmitted data until $t_{x}$ given by 

\vspace{-3.5mm}
\begin{equation}
\hspace{-5mm}
   M^{\downarrow}_{b}(x){=}\hspace{-5mm}\sum_{z{=}N^{(k{-}1)}{+}1}^{x}\sum_{r'
   {\in}\mathcal{R}_{b}}\hspace{-2mm}\min\hspace{-1mm}\big\{\tau_{b,r'}^{\downarrow}(t_{z})\hspace{-0.5mm},\hspace{-0.5mm}\left(t_{z{+}1}{-}t_{z}\right)\hspace{-0.5mm}\big\}\mathfrak{R}^{\downarrow}_{b,r'}(t_{z}).
\hspace{-5mm}
\end{equation} 
\end{proposition}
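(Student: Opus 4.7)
My plan is to derive \eqref{eq:TDownload} by first analyzing the number of bits pushed during a single FGTI $t_x$ on a single PRB $r$, then aggregating over PRBs via a $\max$ operator (since the broadcast completes only when the slowest PRB finishes), and finally summing across the FGTIs of global round $k$. The starting observation is that, because dynamic control decisions are piecewise-constant on each FGTI, the transmission rate $\mathfrak{R}^{\downarrow}_{b,r}(t_x)$ is constant throughout $[t_x, t_{x+1}]$, and the hypothesis $\sum_{r\in\mathcal{R}_b}\varphi_{b,r}(t_x)=1$ guarantees that no portion of the residual GM goes unassigned at any FGTI in which O-RU $b$ is scheduled.

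First, I would track the ``remaining bits'' to be broadcast just before FGTI $t_x$, which by the recursive definition of $M^{\downarrow}_b(x)$ equals $\alpha_{\bm{\omega}} M - M^{\downarrow}_b(x)$. At that FGTI, a fraction $\varphi_{b,r}(t_x)$ of this remainder is assigned to PRB $r$, so the wall-clock time needed on PRB $r$ to finish its assigned bits at the current rate is $(\alpha_{\bm{\omega}}M - M^{\downarrow}_b(x))\varphi_{b,r}(t_x)/\mathfrak{R}^{\downarrow}_{b,r}(t_x)$. This recovers \eqref{eq:broadcast_communication_latency} modulo the additive $1-\beta^{\downarrow}_b(t_x)$ in the denominator, whose role is purely to keep the expression well-defined when the O-RU is not scheduled: when $\beta^{\downarrow}_b(t_x)=0$, constraint \eqref{cons:LGPD0} forces $\varphi_{b,r}(t_x)=0$, so the numerator vanishes and the ratio correctly evaluates to zero.

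Next, since PRB $r$ can transmit for at most duration $t_{x+1}-t_x$ within FGTI $t_x$, its actual busy time equals $\min\{\tau_{b,r}^{\downarrow}(t_x), t_{x+1}-t_x\}$; multiplying by the instantaneous rate and summing over $r$ and over past FGTIs gives precisely the recursion defining $M^{\downarrow}_b(x)$. Because the GM broadcast from O-RU $b$ completes only once every PRB has pushed its assigned share, the FGTI-level latency contribution is the slowest PRB's busy time, i.e., $\max_{r\in\mathcal{R}_b}\{\min\{\tau_{b,r}^{\downarrow}(t_x),t_{x+1}-t_x\}\}$; adding these contributions over $x\in\mathcal{N}^{(k)}$ yields \eqref{eq:TDownload}.

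The principal obstacle, I anticipate, is a careful induction on $x$ showing that the $\max$-over-PRBs together with the $\min$-capping correctly handles the edge cases: a PRB that exhausts its allocation mid-FGTI must remain idle until the next FGTI (where a fresh $\varphi_{b,r}(\cdot)$ is issued against the updated remainder $\alpha_{\bm{\omega}} M - M^{\downarrow}_b(x{+}1)$), whereas a PRB that does not finish its share carries its residual into the next FGTI through the same updated remainder. Showing the invariant ``transmitted so far equals $M^{\downarrow}_b(x)$'' is preserved across both regimes, with no bit double-counted or lost, is the delicate step; once this invariant is in place the remainder of the derivation reduces to straightforward arithmetic.
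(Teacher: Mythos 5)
Your proposal follows essentially the same accounting argument as the paper's proof in Appendix~\ref{app:propo:broadcast}: track the residual $\alpha_{\bm{\omega}}M - M^{\downarrow}_{b}(x)$, divide the per-PRB share $\varphi_{b,r}(t_x)$ by the rate to get \eqref{eq:broadcast_communication_latency}, cap each PRB's busy time by $t_{x+1}-t_x$ via the $\min$, recover $M^{\downarrow}_{b}(x)$ by multiplying busy time by rate and summing, and take the $\max$ over PRBs before summing over FGTIs. Your additional remarks (the role of the $1-\beta^{\downarrow}_{b}(t_x)$ term via \eqref{cons:LGPD0}, and the invariant to be checked by induction) are consistent with, and slightly more explicit than, the paper's case analysis.
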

\vspace{-1.5mm}
\begin{proof}
   The proof is presented in Appendix~\ref{app:propo:broadcast}.
\end{proof}
\vspace{-2mm}
\vspace{-0.8mm}
\begin{proposition}[GV Dispersion Latency of DPUs]\label{propo:DPU_uplink}
Assume $\sum_{u'{\in} \mathcal{U}_{b}}\sum_{r{\in}\overline{\mathcal{R}}_{b}}\overline{\psi}_{u,u',r}(t_x){=}1,~x\in\mathcal{N}^{(k)}$. The dispersion latency of DPU $u{\in}\mathcal{U}_{b}$ is given by
\vspace{-.5mm}
\begin{equation}
\hspace{-5mm}
    \overline{\tau}_{u}^{\uparrow,(k)}{=}\hspace{-3mm}\sum_{x{=}N^{(k{-}1)}{+}1}^{N^{(k)}}\max_{\substack{u'{\in}\mathcal{U}_{b},\\r{\in}\overline{\mathcal{R}}_{b}}}\Big\{\min\big\{\overline{\tau}_{u,u',r}^{\uparrow}(t_{x}),\left(t_{x{+}1}{-}t_{x}\right)\big\}\hspace{-0.7mm}\Big\},
\hspace{-5mm}
\vspace{-.5mm}
\end{equation}   
where {\small$t_{N^{(K)}{+}1}{>}t_{N^{(K)}}$}, and {\small$\overline{\tau}_{u,u',r}^{\uparrow}(t_x)$} is calculated as follows:
\vspace{-1.5mm}
\begin{equation}\label{eq:DPU_communication_latency}
\overline{\tau}_{u,u',r}^{\uparrow}(t_x){=}\frac{(\alpha_{\bm{\omega}} M{-}\overline{M}^{\uparrow}_{u}(x))\overline{\psi}_{u,u',r}(t_x)}{\overline{\mathfrak{R}}^{\uparrow}_{u,u',r}\hspace{-0.5mm}(t_x\hspace{-0.5mm}){+}1{-}\overline{\beta}^{\uparrow}_u(t_x)}.
\vspace{-1.5mm}
\end{equation}
 In \eqref{eq:DPU_communication_latency}, $\overline{M}^{\uparrow}_{u}(x)$ is the transmitted data until time $t_{x}$ given by
 \vspace{-0.5mm}
\begin{align}
    \overline{M}^{\uparrow}_{u}(x)=\hspace{-3mm}\sum_{z=N^{(k{-}1)}{+}1}^{x}&\sum_{u'{\in}\mathcal{U}_{b}}\sum_{r'{\in}\overline{\mathcal{R}}_{b}}\min\Big\{\overline{\tau}_{u,u',r'}^{\uparrow}(t_{z})\nonumber\\
    &,\left(t_{z{+}1}{-}t_{z}\right)\hspace{-1mm}\Big\}\overline{\mathfrak{R}}^{\uparrow}_{u,u',r'}(t_{z}).
\end{align}
\end{proposition}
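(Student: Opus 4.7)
The plan is to mirror the derivation of Proposition~\ref{propo:broadcast}, adapting it to the D2C mode. The essential difference is that each DPU $u$ fragments its gradient across a two-dimensional grid of destinations indexed by the CHU $u'\in\mathcal{U}_b$ and the unlicensed PRB $r\in\overline{\mathcal{R}}_b$, so the per-FGTI wall-clock cost is the maximum over this grid rather than over a single PRB index. First, I would fix DPU $u$ and observe that its total payload is $\alpha_{\bm{\omega}}M$ bits. The normalization hypothesis $\sum_{u'}\sum_{r}\overline{\psi}_{u,u',r}(t_x)=1$ at each FGTI, together with the definition of $\overline{M}^{\uparrow}_{u}(x)$ as the cumulative bits already dispersed prior to $t_x$, implies that the fragment sent on link $(u',r)$ during $t_x$ equals $(\alpha_{\bm{\omega}}M-\overline{M}^{\uparrow}_{u}(x))\overline{\psi}_{u,u',r}(t_x)$ bits.

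Next, I would translate bits into time through the D2C rate in~\eqref{cons:DR_Fh2F}. Dividing the fragment size by $\overline{\mathfrak{R}}^{\uparrow}_{u,u',r}(t_x)$ gives the nominal transmission time $\overline{\tau}_{u,u',r}^{\uparrow}(t_x)$; the $+\,1-\overline{\beta}^{\uparrow}_u(t_x)$ padding in the denominator is purely a well-definedness device that keeps the ratio finite whenever DPU $u$ is not scheduled (in which case $\overline{\beta}^{\uparrow}_u(t_x)=0$ and, by \eqref{cons:mac_r_1}--\eqref{cons:LGPD1}, $\overline{\psi}_{u,u',r}(t_x)=0$, so the numerator vanishes and the value is $0$). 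I would then argue that, because the destinations $(u',r)$ are served in parallel within the FGTI while each link cannot consume more wall-clock than the FGTI duration $t_{x+1}-t_x$, the time DPU $u$ actually spends transmitting during $t_x$ is $\max_{u',r}\min\{\overline{\tau}_{u,u',r}^{\uparrow}(t_x),t_{x+1}-t_x\}$. Summing over $x=N^{(k-1)}+1,\ldots,N^{(k)}$ then yields the stated formula.

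A short induction on $x$ would verify that the recursion defining $\overline{M}^{\uparrow}_{u}(x)$ is consistent with the interpretation above: in each preceding FGTI $t_z$, the bits actually delivered on link $(u',r')$ equal the time spent, $\min\{\overline{\tau}_{u,u',r'}^{\uparrow}(t_z),t_{z+1}-t_z\}$, multiplied by the instantaneous rate $\overline{\mathfrak{R}}^{\uparrow}_{u,u',r'}(t_z)$, so the cumulative quantity matches what \eqref{eq:DPU_communication_latency} requires.

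The main obstacle I expect is bookkeeping the corner cases rather than any deep inequality: specifically, showing that (i) when an FGTI is too short to complete a planned fragment, the residual bits are correctly carried into $\overline{M}^{\uparrow}_{u}(x+1)$ via the $\min$-cap, and (ii) when DPU $u$ is idle at some FGTI ($\overline{\beta}^{\uparrow}_u(t_x)=0$), the padded denominator together with the scheduling constraints \eqref{cons:mac_4}--\eqref{cons:mac_5} and the dispersion constraint \eqref{cons:LGPD1} jointly force the contribution of $t_x$ to vanish. Handling these two cases uniformly, so that the outer $\max$ and inner $\min$ compose correctly with the scheduling indicators, is the subtle step, and it is exactly the device that makes \eqref{eq:DPU_communication_latency} a continuous (rather than piecewise) expression suitable for integration into the optimization in Sec.~\ref{sec:PF}.
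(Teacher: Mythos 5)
Your proposal is correct and follows essentially the same route as the paper's proof: partition the remaining payload $\alpha_{\bm{\omega}}M-\overline{M}^{\uparrow}_{u}(x)$ via the normalized $\overline{\psi}_{u,u',r}(t_x)$, convert to time through the D2C rate with the $1-\overline{\beta}^{\uparrow}_u(t_x)$ padding, cap each link by the FGTI duration via the $\min$, take the $\max$ over the parallel $(u',r)$ grid, and sum over FGTIs while the recursion for $\overline{M}^{\uparrow}_{u}(x)$ accounts for bits actually delivered. Your extra bookkeeping of the idle-FGTI and residual-bit corner cases is a harmless elaboration of what the paper leaves implicit.
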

\vspace{-3.5mm}
\begin{proof}
    The proof is presented in Appendix~\ref{app:propo:DPU_uplink}.
\end{proof}
\vspace{-1mm}
\begin{proposition}[GV Dispatching Latency of CHUs]\label{propo:CHU_uplink}
Assume $\sum_{r {\in} \mathcal{R}_{b}} \psi_{u,r}\hspace{-0.3mm}(t_{x})=1,~x\in\mathcal{N}^{(k)}$. The dispatching latency of CHU $u\in\mathcal{U}_{b}$ to dispatch its GV to O-RU $b$ is given by
\begin{equation}
\hspace{-2.5mm}\resizebox{0.46\textwidth}{!}{  $
    \tau_{u}^{\uparrow,{(k)}}{=}\hspace{-1mm}\sum_{x{=}N^{(k{-}1)}{+}1}^{N^{(k)}}\max_{r{\in}\mathcal{R}_{b}}\Big\{\hspace{-1mm}\min\left\{\tau_{u,r}^{\uparrow}\hspace{-0.3mm}(t_{x}),\left(t_{x{+}1}-t_{x}\right)\right\}\hspace{-1mm}\Big\},$}
\hspace{-2mm}
\end{equation}
where {\small$t_{N^{(K)}{+}1}{>}t_{N^{(K)}}$}, and {\small$\tau_{u,r}^{\uparrow}\hspace{-0.3mm}(t_{x})$} is calculated as follows:
\begin{equation}\label{eq:CHU_communication_latency}
\hspace{-5mm}
\tau_{u,r}^{\uparrow}\hspace{-0.3mm}(t_{x}){=}\hspace{-0.5mm}\big(\hspace{-0.2mm}\alpha_{\bm{\omega}} M{-}M^{\uparrow}_{u}(x)\hspace{-0.2mm}\big)\psi_{u,r}\hspace{-0.3mm}(t_{x})\hspace{-0.5mm}{/}\hspace{-0.5mm}\big(\hspace{-0.5mm}\mathfrak{R}^{\uparrow}_{u,r}\hspace{-0.3mm}(t_{x}){+}1{-}\beta^{\uparrow}_u(t_x)\hspace{-0.5mm}\big)\hspace{-0.5mm}.
\hspace{-5mm}
\vspace{-0.15mm}
 \end{equation}
In \eqref{eq:CHU_communication_latency}, $M^{\uparrow}_{u}(x)$ is transmitted data until time $t_{x}$ given by
\vspace{-.0mm}

\hspace{-3.5mm}\begin{minipage}{1\linewidth}
\hspace{-7mm}
\begin{equation}
\hspace{-7mm}
    M^{\uparrow}_{u}(x){=}\hspace{-5mm}\sum_{z{=}N^{(k{-}1)}{+}1}^{x}\sum_{r'{\in}\mathcal{R}_{b}}\hspace{-2mm}\min\hspace{-1mm}\big\{\hspace{-0.5mm}\tau_{u,r'}^{\uparrow}\hspace{-0.5mm}(t_{z})\hspace{-0.5mm},\hspace{-0.5mm}\left(t_{z{+}1}{-}t_{z}\right)\hspace{-1mm}\big\}\mathfrak{R}^{\uparrow}_{u,r'}\hspace{-0.5mm}(t_{z}\hspace{-0.3mm}).
\hspace{-6mm}
\vspace{-1mm}
\end{equation}
\end{minipage}

\end{proposition}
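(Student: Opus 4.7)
The plan is to mirror the argument developed for Proposition~\ref{propo:broadcast}, since the structural situation is identical: a CHU uses multiple licensed PRBs in parallel to push its aggregated GV to the O-RU across several FGTIs, the only differences being the direction of the link, the scheduling gate ($\beta^{\uparrow}_u(t_x)$ rather than $\beta^{\downarrow}_{b}(t_x)$), and the per-PRB rate $\mathfrak{R}^{\uparrow}_{u,r}(t_x)$ introduced in Sec.~\ref{sec:data_transmission_rate}. First I would set up the bookkeeping of progress. The total payload is $\alpha_{\bm{\omega}} M$ bits; for each FGTI index $x$, I would interpret $M^{\uparrow}_u(x)$ as the cumulative number of bits that have already left CHU $u$ strictly before FGTI $x$. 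Within any single FGTI the channel, power, and SINR are held fixed, so throughput on PRB $r'$ at FGTI $z$ equals (time spent transmitting) times (instantaneous rate), i.e.\ $\min\{\tau_{u,r'}^{\uparrow}(t_z),(t_{z+1}-t_z)\}\,\mathfrak{R}^{\uparrow}_{u,r'}(t_z)$. Summing this contribution over $r'\in\mathcal{R}_b$ and over $z$ from $N^{(k{-}1)}{+}1$ up to $x$ reproduces the recursion for $M^{\uparrow}_u(x)$ in the statement, which I would prove by a routine induction on $x$.

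Next I would derive the per-PRB time $\tau_{u,r}^{\uparrow}(t_x)$. At the start of FGTI $x$, the remaining data is $\alpha_{\bm{\omega}} M-M^{\uparrow}_u(x)$. Using the hypothesis $\sum_{r\in\mathcal{R}_b}\psi_{u,r}(t_x)=1$, the fraction $\psi_{u,r}(t_x)$ of that remaining mass is assigned to PRB $r$, so the payload that must traverse PRB $r$ within this FGTI is $(\alpha_{\bm{\omega}} M-M^{\uparrow}_u(x))\psi_{u,r}(t_x)$; dividing by the instantaneous rate yields the nominal per-PRB time. The $+1-\beta^{\uparrow}_u(t_x)$ in the denominator plays a purely regularizing role, and I would handle it by a two-case argument on the scheduling variable: when $\beta^{\uparrow}_u(t_x)=1$ it vanishes and the expression is the usual data-over-rate; when $\beta^{\uparrow}_u(t_x)=0$, constraints~\eqref{cons:mac_r_4} and~\eqref{cons:LMD1} force $\varrho_{u,r}(t_x)=\psi_{u,r}(t_x)=0$ while~\eqref{cons:mac_p_2} may simultaneously drive $\mathfrak{R}^{\uparrow}_{u,r}(t_x)$ to zero through the power allocation, and the $+1-\beta^{\uparrow}_u(t_x)$ offset yields a well-defined $\tau_{u,r}^{\uparrow}(t_x)=0$ instead of a $0/0$ ambiguity.

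The final step is to reconcile the two ways transmission of FGTI $x$ can terminate on PRB $r$: either the assigned fraction clears the channel in $\tau_{u,r}^{\uparrow}(t_x)$ seconds with the remainder of the FGTI idle, or the FGTI itself elapses first and the leftover bits are carried over through the update of $M^{\uparrow}_u$. In either case the wall-clock time that PRB $r$ is busy during FGTI $x$ is exactly $\min\{\tau_{u,r}^{\uparrow}(t_x),(t_{x+1}-t_x)\}$. Because PRBs are used in parallel, FGTI $x$ is not complete from the CHU's point of view until the slowest active PRB has finished, giving the $\max_{r\in\mathcal{R}_b}$ at each FGTI; summing over $x=N^{(k{-}1)}{+}1,\dots,N^{(k)}$ yields the stated closed form for $\tau^{\uparrow,(k)}_u$.

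The main obstacle I anticipate is clean bookkeeping of the degenerate FGTIs, namely those on which $\beta^{\uparrow}_u(t_x)=0$ (CHU not yet started its upload or already finished, cf.~\eqref{cons:mac_7}–\eqref{cons:mac_8}) and those on which $M^{\uparrow}_u(x)$ has already reached $\alpha_{\bm{\omega}} M$. The formula must contribute exactly zero on these FGTIs without dividing by zero and without double-counting bits that have already been delivered. I expect that pairing the $+1-\beta^{\uparrow}_u(t_x)$ regularizer with the SDC constraints of Sec.~\ref{sec:scheduling_decisions} neutralizes both issues; verifying this rigorously, rather than the algebra of the rate-times-time decomposition itself, is where the bulk of the proof's care will go.
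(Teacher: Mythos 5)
Your proposal is correct and follows essentially the same route as the paper's proof in Appendix~\ref{app:propo:CHU_uplink}: identify the remaining payload $\alpha_{\bm{\omega}}M-M^{\uparrow}_{u}(x)$, split it across PRBs via $\psi_{u,r}(t_x)$, divide by the rate with the $1-\beta^{\uparrow}_u(t_x)$ term guarding against division by zero, use the two-case ($\min$) argument for whether the assigned fraction or the FGTI expires first, and then take the $\max$ over parallel PRBs and sum over FGTIs. Your extra care about the degenerate FGTIs via the SDC constraints goes slightly beyond what the paper writes down but is consistent with it and not a deviation in method.
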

\begin{proof}
    The proof is presented in Appendix~\ref{app:propo:CHU_uplink}.
\end{proof}
\vspace{-2mm}
\begin{proposition}[Waiting Time of CHUs]\label{propo:waiting_time}
    The waiting time of CHU $u{\in}\mathcal{U}_{b}$ to receive LMs from its associated DPUs is
    \vspace{-1mm}
\begin{equation}\label{waiting_time}
\hspace{-5mm}
    \tau_{u}^{\mathsf{W},{(k)}}=\max_{u'\in\mathcal{U}_{b}}\hspace{-0.5mm}\Big\{ \tau_{u'}^{\mathsf{LC},{(k)}}{+}\hspace{-2mm}\sum_{x{=}N^{(k{-}1)}{+}1}^{N^{(k)}}\max_{r{\in}\overline{\mathcal{R}}_{b}}\left\{L_{u,u',r}(t_x)\right\}\Big\},
\hspace{-5mm}
\end{equation}
where $L_{u,u',r}(t_x){=}\hspace{-0.5mm}\min\hspace{-1mm}\big\{\hspace{-0.5mm}\overline{\tau}_{u',u,r}^{\uparrow}(t_{x})\hspace{-0.5mm},\hspace{-0.5mm}\left(\hspace{-0.5mm}t_{x{+}1}{-}t_{x}\hspace{-0.5mm}\right)\hspace{-1mm}\big\}$, and $\overline{\tau}_{u',u,r}^{\uparrow}(\hspace{-0.5mm}t_{x}\hspace{-0.5mm})$ is given by \eqref{eq:DPU_communication_latency}.
\end{proposition}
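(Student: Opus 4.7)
The plan is to reduce Proposition~\ref{propo:waiting_time} to a per-DPU decomposition that can then be aggregated by a single outer maximum. First, I would fix an arbitrary candidate DPU $u'\in\mathcal{U}_b$ and trace its contribution to the CHU's waiting time: by the event definitions \ref{E6}--\ref{E7}, the earliest instant at which $u'$ may begin D2C transmission is $\Psi(\overline{\mathscr{D}}_{u'}^{\uparrow,(k)}) = \Psi(\mathscr{D}_{u'}^{\ndownarrow,(k)}) + \tau_{u'}^{\mathsf{LC},(k)}$, i.e., only after $u'$ finishes its $\ell^{(k)}_{u'}$ SGD iterations. Since $\tau_{u'}^{\mathsf{LC},(k)}$ is computed in~\eqref{cons:LT1} and is independent of any scheduling decision, the contribution of $u'$ to $\tau_u^{\mathsf{W},(k)}$ splits additively into (i) the local-training lag $\tau_{u'}^{\mathsf{LC},(k)}$ and (ii) the dispersion latency from $u'$ to this specific CHU $u$.

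For the dispersion piece I would reuse the construction from Proposition~\ref{propo:DPU_uplink}, but restricted to the single receiver $u$. At each FGTI $t_x$, PRBs $r\in\overline{\mathcal{R}}_b$ are used in \emph{parallel} for the D2C link from $u'$ to $u$, so the per-FGTI latency is the pointwise maximum over $r$ of the time spent on that PRB. Because the raw transmission time $\overline{\tau}_{u',u,r}^{\uparrow}(t_x)$ from~\eqref{eq:DPU_communication_latency} may exceed the FGTI window $(t_{x+1}-t_x)$, one must truncate each summand to the FGTI length, yielding $L_{u,u',r}(t_x)=\min\{\overline{\tau}_{u',u,r}^{\uparrow}(t_x),t_{x+1}-t_x\}$. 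The total dispersion latency from $u'$ to $u$ is then the sum over FGTIs $x\in\mathcal{N}^{(k)}$ of $\max_{r\in\overline{\mathcal{R}}_b} L_{u,u',r}(t_x)$. This mirrors exactly the argument in Proposition~\ref{propo:DPU_uplink}; the only change is that the inner maximum runs only over $r$ (and not jointly over $(u',r)$), since here we fix the destination CHU.

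Finally, to get $\tau_u^{\mathsf{W},(k)}$ I would observe that CHU $u$ cannot begin its uplink to the O-RU until \emph{every} DPU assigned to it has finished its D2C dispersion (see the $\max\{\tau_{u}^{\mathsf{LC},(k)},\tau_{u}^{\mathsf{W},(k)}\}$ clause in \ref{E8}); the effective candidate set is encoded by the allocation variables $\overline{\varrho}_{u',u,r}(t_x)$, and non-contributing DPUs contribute zero to the inner sum because $L_{u,u',r}(t_x)=0$ whenever $\overline{\psi}_{u',u,r}(t_x)=0$ (via~\eqref{cons:LGPD1} and~\eqref{eq:DPU_communication_latency}). Taking the outer $\max_{u'\in\mathcal{U}_b}$ of the sum of the two pieces then yields~\eqref{waiting_time}.

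The main obstacle I anticipate is the bookkeeping around the scheduling indicators $\overline{\beta}^{\uparrow}_{u'}(t_x)$ and $\overline{\varrho}_{u',u,r}(t_x)$: one must verify that the ``$+1-\overline{\beta}^{\uparrow}_{u'}(t_x)$'' guard in~\eqref{eq:DPU_communication_latency} correctly zeroes out FGTIs in which $u'$ is not scheduled, and that the truncation at $(t_{x+1}-t_x)$ composes correctly with the cumulative residual $\overline{M}^{\uparrow}_{u'}(x)$ so that no data is double-counted across FGTIs. Once this consistency is established, the proposition follows directly by additive decomposition and an outer max over $u'$.
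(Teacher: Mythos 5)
Your proposal is correct and follows essentially the same route as the paper's proof: decompose the waiting time per DPU $u'$ into its local-training lag $\tau_{u'}^{\mathsf{LC},(k)}$ plus its cumulative D2C dispersion latency (summed over FGTIs, with an inner $\max$ over parallel PRBs of the FGTI-truncated transmission time from Proposition~\ref{propo:DPU_uplink}), then take the outer $\max$ over all DPUs associated with CHU $u$. The extra detail you supply on the scheduling/allocation guards zeroing out non-contributing DPUs is consistent with \eqref{eq:DPU_communication_latency} and only elaborates what the paper leaves implicit.
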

\vspace{-2mm}
\begin{proof}
    The proof is presented in Appendix~\ref{app:propo:waiting_time}.
\end{proof}\vspace{-2mm}

\textbf{Completion time.} Considering the above propositions, the completion time of global round $k$ is given by
\begin{equation}\label{TK}
    T^{(k)}= T^{(k{-}1)}+\max_{b\in \Omega,u\in \mathcal{U}_{b}}\{\Psi(\mathscr{D}_{u}^{\nuparrow,(k)})\},
\end{equation}
where {\small$\Psi(\mathscr{D}_{u}^{\nuparrow,(k)}){=}\lambda_{u}^{(k)}\left(\tau_{b}^{\downarrow,{(k)}}{+}\max\{\tau_{u}^{\mathsf{LC},{(k)}}, \tau_{u}^{\mathsf{W},{(k)}}\}{+} \tau_{u}^{\uparrow,{(k)}}\right)$} is the occurrence time of event {\small$\mathscr{D}_{u}^{\nuparrow,(k)}$} introduced in Sec.~\ref{sec:system_dynamic}.{ We subsequently impose the following constraint:
\begin{equation}\label{const:latency_max}
    T^{(k)}\le T^{\mathsf{max}},
\end{equation}
where $\small T^{\mathsf{max}}$ is maximum allowable latency per global round.}

\vspace{-1mm}
\subsection{Energy Consumption}\label{sec:energy_consumption}
We next calculate energy consumption of {\tt DCLM}, encompassing (i) GM broadcasting of O-RUs (Sec.~\ref{sec:GM_broadcasting}), (ii) LM training of FLUs (Sec.~\ref{sec:En_LM_training}), (iii) GV dispersing of DPUs (Sec.~\ref{sec:En_DPU}), and (iv) GV dispatching of CHUs (Sec.~\ref{sec:En_CHU}). 
\subsubsection{GM broadcasting of O-RUs}\label{sec:GM_broadcasting}
Total energy consumed by O-RU $b$ at global round $k$ to broadcast the GM to FLUs can be calculated as the cumulative consumed energy for broadcasting during the FGTIs of global round $k$, which is given by
  \vspace{-.7mm}
\begin{equation}\label{eq:EN_broad}
\hspace{-5mm}
    E_{b}^{\downarrow,{(k)}}{=}\hspace{-6mm}\sum_{x{=}N^{(k{-}1)}{+}1}^{N^{(k)}}\sum_{r{\in} \mathcal{R}_{b}}\hspace{-1.4mm}{\min\hspace{-1mm}\big\{\hspace{-0.3mm}\tau_{b,r}^{\downarrow}(\hspace{-0.3mm}t_{x}\hspace{-0.3mm}),\hspace{-0.5mm}\left(\hspace{-0.3mm}t_{x{+}1}{-}t_{x}\hspace{-0.3mm}\right)
     \hspace{-0.7mm}\big\}\rho^{\downarrow}_{b,r}\hspace{-1mm}\left(\hspace{-0.3mm}t_{x}\hspace{-0.3mm}\right)\hspace{-1mm} P^{\mathsf{max}}_{b}},
\hspace{-5mm}
\vspace{-.17mm}
\end{equation}
where {\small$t_{N^{(K)}{+}1}{>}t_{N^{(K)}}$}, and the summand term is the energy usage of broadcasting a fraction of the GM to FLUs over PRB $r$ during $[t_{x},t_{x{+}1})$. Further, $\tau_{b,r}^{\downarrow}\hspace{-0.3mm}(t_{x})$ is given in \eqref{eq:broadcast_communication_latency}.
\subsubsection{LM training} \label{sec:En_LM_training}
Let $\frac{\alpha_{u}}{2}$ be the effective chipset capacitance of FLUs~\cite{8737464}. The total energy consumed by FLU $u$ to perform LM training at global round $k$ is given by 
\begin{equation}\label{eq:EN_LC}
    E_{u}^{\mathsf{LC},{(k)}}= (f^{(k)}_{u})^{3}\tau_{u}^{\mathsf{LC},{(k)}}{\alpha_{u}}/{2}, \forall u\in\mathcal{U}_{b}, \forall b\in\Omega,
\end{equation}
where $\tau_{u}^{\mathsf{LC},{(k)}}$ is the training latency of FLU $u$ given in~\eqref{cons:LT1}. 
\subsubsection{GV dispersion by DPUs}\label{sec:En_DPU}
Total energy consumed by DPU $u\in\mathcal{U}_{b}$ to disperse its GV among CHUs at global round $k$ is 
\vspace{-.7mm}
\begin{equation}\label{eq:EN_DPU_UP}
\hspace{-5mm}
    \overline{E}_{u}^{\uparrow,(k)}{=}\hspace{-4mm}\sum_{x{=}N^{(k{-}1)}{+}1}^{N^{(k)}}\sum_{u'{\in}\mathcal{U}_{b}}\sum_{r{\in} \overline{\mathcal{R}}_{b}}\hspace{-0.3mm}{H_{u,u',r}(t_x)\overline{\rho}^{\uparrow}_{u,r}\hspace{-0.5mm}(\hspace{-0.5mm}t_x\hspace{-0.5mm})P^{\mathsf{max}}_{u}},
\hspace{-5mm}
\vspace{-.0mm}
\end{equation}
where {\small$H_{u,u',r}(t_x){=}\hspace{-0.5mm}\min\hspace{-1mm}\big\{\hspace{-0.1mm}\overline{\tau}_{u,u',r}^{\uparrow}(\hspace{-0.1mm}t_x\hspace{-0.1mm}),\hspace{-0.1mm}\left(\hspace{-0.1mm}t_{x{+}1}{-}t_{x}\hspace{-0.1mm}\right)\hspace{-1mm}\big\}$}, {\small$t_{N^{(K)}{+}1}{>}t_{N^{(K)}}$} ({\small$\overline{\tau}_{u,u',r}^{\uparrow}(t_x)$} is given in \eqref{eq:DPU_communication_latency}), and the summand captures the consumed energy of DPU $u$ to upload a fraction of its GV to CHU $u'$ over PRB $r$ during time window {\small$[t_{x},t_{x{+}1})$}. We impose the following constraint on the energy consumption of DPUs:
\begin{equation}\label{const:energy:max_DPU}
    \overline{E}_{u}^{\uparrow,(k)}{+}E_{u}^{\mathsf{LC},{(k)}}\le E^{\mathsf{max}}_{u},
\end{equation}
where $E^{\mathsf{max}}_{u}$ is the maximum battery capacity of FLU $u$.

\subsubsection{GV dispatchment by CHUs}\label{sec:En_CHU}
Following the same analogy as DPU LM uploading, the consumed energy of CHU $u$ to dispatch its GV to O-RU $b$ at global round $k$ is 
\vspace{-0.75mm}
\begin{equation}\label{eq:EN_CHU_UP}
\hspace{-5mm}
    E_{u}^{\uparrow,(k)}{=}\hspace{-5.5mm}\sum_{x{=}N^{(k{-}1)}{+}1}^{N^{(k)}}\sum_{r{\in}\mathcal{R}_{b}}\hspace{-1.2mm}\min\hspace{-.9mm}\big\{\hspace{-0.7mm}\tau_{u,r}^{\uparrow}\hspace{-0.4mm}(\hspace{-0.5mm}t_{x}\hspace{-0.5mm}),\hspace{-0.7mm}\left(t_{x{+}1}{-}t_{x}\right)\hspace{-1.4mm}\big\}\hspace{-0.5mm}\rho^{\uparrow}_{u,r}\hspace{-0.4mm}(\hspace{-0.5mm}t_{x}\hspace{-0.5mm})\hspace{-0.4mm}P^{\mathsf{max}}_{u}\hspace{-0.5mm},
\hspace{-5mm}
\end{equation}
where {\small$t_{N^{(K)}{+}1}{>}t_{N^{(K)}}$}, and {\small$\tau_{u,r}^{\uparrow}\hspace{-0.3mm}(t_{x})$} is given in \eqref{eq:CHU_communication_latency}. We impose the constraint below on energy consumption of CHUs:
\begin{equation}\label{const:energy:max_CHU}
    E_{u}^{\uparrow,(k)}{+}E_{u}^{\mathsf{LC},{(k)}}\le E^{\mathsf{max}}_{u}.
\end{equation}

\begin{remark}
The expressions above and the four propositions in Sec.~\ref{sec:communication_latency}, along with the derivations in Sec.~\ref{sec:dynamic_control_decisions}, provide answers to parts (ii) and (iii) of question \ref{Q3} in Sec. \ref{sec:summary_of_contribution}, pertaining to the impact of dynamic wireless control decisions on ML training latency and energy consumption. In particular, the above-derived expressions show a nested connection between data rates, transmit powers, PRB allocations, scheduling decisions, and the occurrence time of $\mathscr{D}$-Events. These expressions are functions of the downlink and uplink data rates of FLUs and O-RUs (Sec.~\ref{sec:data_transmission_rate}), which are in turn determined by the transmit power decision variables  (Sec.~\ref{sec:data_transmission_rate}) --- themselves functions of the PRB allocation decisions made by the MAC scheduler (Sec.~\ref{sec:power_allocation}). Further, the MAC scheduler conducts PRB allocations according to the scheduling decision variables (Sec.~\ref{sec:resource_allocation_mac}), which are in turn dependent on the occurrence time of $\mathscr{D}$-Events and recruitment indicators of FLUs (Sec.~\ref{sec:scheduling_decisions}). These results are all later used in Sec.~\ref{sec:PF} to obtain the optimal orchestration of network elements.
\end{remark}
Below, we focus on assessing the convergence behavior of the global ML model as a means of gauging its performance.

\vspace{-1mm}
\section{Convergence Analysis of {\tt DCLM}}\label{sec:conv}
\noindent Before conducting convergence analysis of {\tt DCLM}, it is imperative to make two standard assumptions~\cite{8737464,8664630}.
\vspace{-1mm}
\begin{assumption}[$\beta$-smoothness of Global/Local Loss Functions]\label{Assup:lossFun}
    Local loss function $\mathfrak{L}^{({k})}_{u}$ is $\beta$-smooth,~$\forall u,k$:
    \footnote{\vspace{-1mm}For brevity in notations we utilize $\nabla$ to denote $\nabla_{\bm{\omega}}$}
    \begin{equation}
       \hspace{-1.5mm} \Vert \nabla \mathfrak{L}^{({k})}_{u}(\bm{\omega})- \nabla \mathfrak{L}^{({k})}_{u}(\bm{\omega}') \Vert \hspace{-.4mm}\leq \hspace{-.4mm} \beta \Vert \bm{\omega}-\bm{\omega}' \Vert,~ \hspace{-.8mm}\forall \bm{\omega},\bm{\omega}'\hspace{-.4mm}\in\hspace{-.4mm}\mathbb{R}^M\hspace{-.7mm}, \hspace{-2.3mm} 
    \end{equation}
which results in $\beta$-smoothness of global loss function $\mathfrak{L}^{({k})}$.  
\end{assumption}
 \vspace{-3mm} 
\begin{assumption}[Heterogeneity of FLUs' Datasets]\label{Assup:Dissimilarity} 
For any set of coefficients {\small$\{s_{u}{\geq} 0\}_{u{\in} \mathcal{U}_{b},b\in\Omega,}$}, where {\small$\sum_{b\in\Omega}\sum_{u{\in} \mathcal{U}_{b}} s_{u}{=}1$}, there exist constants {\small$\mathfrak{X}_1{\geq} 1,\mathfrak{X}_2 {\geq} 0$}, such that, {\small$\forall k,\bm{\omega}$}, we have
   \begin{equation}
   \hspace{-3mm}
        \sum_{b\in\Omega}\hspace{-0.5mm}\sum_{u\in \mathcal{U}_{b}}\hspace{-2mm} s_{u} \Vert \nabla \mathfrak{L}^{({k})}_{u}(\bm{\omega}) \Vert^2 \hspace{-1.2mm} \leq\hspace{-0.4mm} \mathfrak{X}_1 \hspace{-0.1mm} \Big\Vert  \sum_{b\in\Omega}\hspace{-0.5mm}\sum_{u\in \mathcal{U}_{b}}\hspace{-2mm}s_{u} \nabla \mathfrak{L}^{({k})}_{u}(\bm{\omega}) \Big\Vert^2 \hspace{-1.2mm} {+}\hspace{-0.2mm}\mathfrak{X}_2.
    \hspace{-4mm}
   \end{equation}
\end{assumption}

Assumption~\ref{Assup:Dissimilarity} uses $\mathfrak{X}_1$ and $\mathfrak{X}_2$ to determine the degree of dissimilarity (i.e., non-i.i.d.-ness) among the datasets of the FLUs. Larger values of these parameters indicate greater data heterogeneity, while, in the case of homogeneous/i.i.d datasets, the minimum values of $\mathfrak{X}_1{=}1$ and $\mathfrak{X}_2{=}0$ are achieved.




\begin{table*}[t!]
\vspace{-1mm}
\begin{minipage}{0.99\textwidth}
{\scriptsize
\begin{align}\label{eq:gen_conv}
    &\frac{1}{K} \sum_{k=0}^{K-1}\mathbb{E}\left[\left\Vert\nabla{\mathfrak{L}^{({k})}(\bm{\omega}^{(k)})}\right\Vert^2\right] {\leq} \frac{4}{K} \sum_{k=0}^{K-1}\Bigg(\underbrace{\frac{\mathbb{E}_k\left[\mathfrak{L}^{(k-1)}(\bm{\omega}^{(k)})\right]-\mathbb{E}_k\left[\mathfrak{L}^{({k})}(\bm{\omega}^{(k+1)})\right]}{\eta_{_k}\mathfrak{B}_k\left(1-\zeta^{(k)}\right)}}_{(a)}+\underbrace{\frac{\sum_{b \in \Omega}\sum_{u\in \mathcal{U}_{b}} \mathfrak{D}^{(k)}_u\left(\left(\Delta T^{(k)}_u-\widehat{\lambda}_{u}^{(k)}\tau_{u}^{\mathsf{LC},{(k)}}\right)\right) }{\eta_{_k}\mathfrak{B}_k\left(1-\zeta^{(k)}\right)}}_{(b)}\Bigg)\nonumber\\[-.25em]
    &+\frac{8}{K} \sum_{k=0}^{K-1}\Bigg[\frac{1}{{\left(1-\zeta^{(k)}\right)}}\underbrace{\Bigg({\beta^2\Theta^2 \eta_k^2}\sum_{b \in \Omega} \sum_{u\in \mathcal{U}_{b}}\frac{|\Upsilon_{u}(\widetilde{\tau}_{b}^{\downarrow,{(k)}})|}{|\Upsilon(\widetilde{\bm{\tau}}^{\downarrow,{(k)}})|}\frac{\left(\ell_u^{(k)}-1\right)}{1- 4\eta_k^2\beta^2 \ell_u^{(k)}\left(\ell_u^{(k)}-1\right)}\left(1-\frac{{B}_{u}(\widetilde{\tau}_{b}^{\downarrow,{(k)}})}{|\Upsilon_{u}(\widetilde{\tau}_{b}^{\downarrow,{(k)}})|} \right)  \frac{{(|\Upsilon_{u}(\widetilde{\tau}_{b}^{\downarrow,{(k)}})|-1)}\left(\sigma_{u}(\widetilde{\tau}_{b}^{\downarrow,{(k)}})\right)^2}{|\Upsilon_{u}(\widetilde{\tau}_{b}^{\downarrow,{(k)}})|{B}_{u}(\widetilde{\tau}_{b}^{\downarrow,{(k)}})}\Bigg)}_{(c)}\nonumber\\[-.25em]
    & + \frac{1}{\left(1-\zeta^{(k)}\right)}\Bigg(\underbrace{\frac{\mathfrak{X}_2 \eta_k^2\beta^2 \left(\ell_{\mathsf{max}}^{(k)}\right)\left(\ell_{\mathsf{max}}^{(k)}-1\right)}{1- 4\eta_k^2\beta^2\ell_{\mathsf{max}}^{(k)}\left(\ell_{\mathsf{max}}^{(k)}-1\right)}}_{(d)}+ \underbrace{\frac{\Theta^2\beta\eta_{_k}\mathfrak{B}_k}{2} \sum_{b \in \Omega} \sum_{u\in \mathcal{U}_{b}}\frac{\left(\widehat{\lambda}_{u}^{(k)}|\Upsilon_{u}(\widetilde{\tau}_{b}^{\downarrow,{(k)}})|\right)^2}{\left(|\Upsilon^{\mathsf{s}}(\widetilde{\bm{\tau}}^{\downarrow,{(k)}})|\right)^2 \ell^{(k)}_{u}}\left(1-\frac{{B}_{u}(\widetilde{\tau}_{b}^{\downarrow,{(k)}})}{|\Upsilon_{u}(\widetilde{\tau}_{b}^{\downarrow,{(k)}})|} \right) \frac{(|\Upsilon_{u}(\widetilde{\tau}_{b}^{\downarrow,{(k)}})|-1)\left(\sigma_{u}(\widetilde{\tau}_{b}^{\downarrow,{(k)}})\right)^2}{|\Upsilon_{u}(\widetilde{\tau}_{b}^{\downarrow,{(k)}})|{B}_{u}(\widetilde{\tau}_{b}^{\downarrow,{(k)}})}}_{(e)}\Bigg)\nonumber\\[-.25em]
    &+\frac{24}{\left(1-\zeta^{(k)}\right)}\underbrace{\left (\frac{|\Upsilon(\widetilde{\bm{\tau}}^{\downarrow,{(k)}})|-|\Upsilon^{\mathsf{s}}(\widetilde{\bm{\tau}}^{\downarrow,{(k)}})|}{|\Upsilon(\widetilde{\bm{\tau}}^{\downarrow,{(k)}})|}\right)^2\sum_{b \in \Omega} \sum_{u\in \mathcal{U}_{b}}\Bigg(\frac{\Theta^2 \beta^2 \eta_k^2 \left(\ell_u^{(k)}-1\right)}{1- 4\eta_k^2\beta^2 \ell_u^{(k)}\left(\ell_u^{(k)}-1\right)} \left(1-\frac{{B}_{u}(\widetilde{\tau}_{b}^{\downarrow,{(k)}})}{|\Upsilon_{u}(\widetilde{\tau}_{b}^{\downarrow,{(k)}})|} \right)  \frac{{(|\Upsilon_{u}(\widetilde{\tau}_{b}^{\downarrow,{(k)}})|-1)}\left(\sigma_{u}(\widetilde{\tau}_{b}^{\downarrow,{(k)}})\right)^2}{|\Upsilon_{u}(\widetilde{\tau}_{b}^{\downarrow,{(k)}})|{B}_{u}(\widetilde{\tau}_{b}^{\downarrow,{(k)}})}\Bigg)}_{(f)}\nonumber\\[-.25em]
    &+\frac{6}{\left(1-\zeta^{(k)}\right)}\underbrace{\Bigg (\frac{|\Upsilon(\widetilde{\bm{\tau}}^{\downarrow,{(k)}})|-|\Upsilon^{\mathsf{s}}(\widetilde{\bm{\tau}}^{\downarrow,{(k)}})|}{|\Upsilon(\widetilde{\bm{\tau}}^{\downarrow,{(k)}})|}\Bigg)^2\frac{|\Upsilon(\widetilde{\bm{\tau}}^{\downarrow,{(k)}})|}{|\Upsilon_{\mathsf{min}}(\widetilde{\bm{\tau}}^{\downarrow,{(k)}})|}\Bigg(\frac{\mathfrak{X}_2}{1- 4\eta_k^2\beta^2 \ell_{\mathsf{max}}^{(k)}\left(\ell_{\mathsf{max}}^{(k)}-1\right)}\Bigg)}_{(g)}\Bigg]
 \end{align}
  \vspace{-.1mm}
 }
 \hrule
 \end{minipage}
 \vspace{-6.5mm}
\end{table*}
We next introduce \textit{local data dissimilarity} to evaluate the level of dissimilarity of data points in the dataset of each FLU:
\vspace{-2mm}
\begin{definition}[Local Data Dissimilarity]\label{Assump:DataVariabilit}
    Local data dissimilarity at FLU $u{\in}\mathcal{U}_{b}$ is measured via $\Theta_{u}{\geq} 0$, which, $\forall \bm{\omega}\in\mathbb{R}^M$ and $\forall \xi,\xi'\in\Upsilon_{u}(\widetilde{\tau}_{b}^{\downarrow,{(k)}})$, satisfies
    \begin{equation}
    \hspace{-5mm}
   \Vert \nabla f_{u}\hspace{-.4mm}(\bm{\omega},\xi) \hspace{-.4mm}-\hspace{-.4mm} \nabla f_{u}\hspace{-.4mm}(\bm{\omega},\xi')\Vert  \hspace{-.4mm} \leq \hspace{-.4mm} \Theta_{u} \Vert \xi\hspace{-.4mm}-\hspace{-.4mm}\xi' \Vert.
   \hspace{-5mm}
    \end{equation}
\end{definition}

We next obtain the general convergence behavior of {\tt DCLM}.
\vspace{-1mm}
\begin{theorem}[General Convergence Behavior of {\tt DCLM}]\label{th:main}
Let us define {\small$\widetilde{\tau}_{b}^{\downarrow,{(k)}}{=}T^{(k{-}1)}{+}\tau_{b}^{\downarrow,{(k)}}$},~{\small$\widetilde{\bm{\tau}}^{\downarrow,{(k)}}{=}[\widetilde{\tau}_{b}^{\downarrow,{(k)}}]_{b\in \Omega}$}, {\small$|\Upsilon_{\mathsf{min}}(\widetilde{\bm{\tau}}^{\downarrow,{(k)}})|=\min_{b{\in}\Omega,u{\in} \mathcal{U}_{b}}\{|\Upsilon_{u}(\widetilde{\tau}_{b}^{\downarrow,{(k)}})|\}$}, {\small$\Theta = \max_{b\in\Omega,u\in \mathcal{U}_{b}}\{\Theta_{u} \}$}, {\small$\ell^{(k)}_{\mathsf{max}}=\max_{b\in\Omega,u{\in} \mathcal{U}_{b}}\{\ell^{(k)}_u\}$}, {\small$\mathfrak{D}_u^{(k)}=\max_{t{\in}T_u^{\mathsf{Idle},(k)}} \mathfrak{D}_u(t)$}. Let {\small $\Delta T_u^{(k)} \triangleq \big(T^{(k)} - T^{(k-1)}\big) 
+ \Psi\!\big(\mathscr{D}_u^{\ndownarrow,(k+1)}\big) - \Psi\!\big(\mathscr{D}_u^{\ndownarrow,(k)}\big)$}. Assume that {\small$\eta_k$} satisfies {\small$\eta_k \leq \min \big\{\frac{1}{2\beta} \sqrt{(\zeta^{(k)}{-} 48\mathfrak{X}_1\mathscr{R}^{(k)}){/}\Delta}, \frac{1}{2\beta}\big\}$}, where {\small$\zeta^{(k)}<1$}, {\small$\Delta=\ell_{\mathsf{max}}^{(k)}\big(\ell_{\mathsf{max}}^{(k)}{-}1\big)\big(2\mathfrak{X}_1{+}\zeta^{(k)}\big)$}, and 
{
\begin{equation}
\resizebox{0.49\textwidth}{!}{  $
    \mathscr{R}^{(k)}={\left(|\Upsilon(\widetilde{\bm{\tau}}^{\downarrow,{(k)}})|{-}|\Upsilon^{\mathsf{s}}(\widetilde{\bm{\tau}}^{\downarrow,{(k)}})|\right)^2}\Big/\left({|\Upsilon(\widetilde{\bm{\tau}}^{\downarrow,{(k)}})||\Upsilon_{\mathsf{min}}(\widetilde{\bm{\tau}}^{\downarrow,{(k)}})|}\right)\nonumber.$}
\end{equation}
}
The upper bound in~\eqref{eq:gen_conv} is satisfied by the cumulative average of the gradient of the global loss function in {\tt DCLM}.
\end{theorem}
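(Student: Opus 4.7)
The plan is to follow a non-convex FL convergence roadmap, adapted to handle the three non-standard ingredients of {\tt DCLM}: (i) continuous-time dataset dynamics through the partial differential inequality in Definition~\ref{def:cons}, (ii) hierarchical DCC aggregation with per-PRB gradient chunking, and (iii) asymmetric FLU recruitment. First, I would apply Assumption~\ref{Assup:lossFun} ($\beta$-smoothness) to the update rule \eqref{eq:mainupdateWeight} to get a one-step descent inequality of the form $\mathfrak{L}^{(k)}(\bm{\omega}^{(k+1)}) \le \mathfrak{L}^{(k)}(\bm{\omega}^{(k)}) - \eta_k\mathfrak{B}_k\langle \nabla\mathfrak{L}^{(k)}(\bm{\omega}^{(k)}), \widetilde{\nabla\mathfrak{L}}^{(k)}\rangle + \tfrac{\beta}{2}\eta_k^2\mathfrak{B}_k^2\|\widetilde{\nabla\mathfrak{L}}^{(k)}\|^2$. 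The key algebraic move is then to rewrite $\mathfrak{L}^{(k)}(\bm{\omega}^{(k)})$ as $\mathfrak{L}^{(k-1)}(\bm{\omega}^{(k)}) + [\mathfrak{L}^{(k)}(\bm{\omega}^{(k)})-\mathfrak{L}^{(k-1)}(\bm{\omega}^{(k)})]$ and integrate the drift bound \eqref{eq:conceptDrift} over $t\in T_u^{\mathsf{idle},(k)}$ (whose length is $(T^{(k)}-T^{(k-1)})-\widehat{\lambda}_u^{(k)}\tau_u^{\mathsf{LC},(k)}$), which produces exactly term (b).

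Next I would decompose the aggregated gradient $\widetilde{\nabla\mathfrak{L}}^{(k)}$ against the true gradient $\nabla\mathfrak{L}^{(k)}(\bm{\omega}^{(k)})$. Because the DCC hierarchy in Sec.~\ref{sec:aggregation} is designed so that $\sum_{x,u,r}\widetilde{\nabla\mathfrak{L}}_{u',u,r}(t_x)=\widetilde{\nabla\mathfrak{L}}_{u'}^{(k)}$, the fragmented/integrated aggregation at O-RUs reconstructs the data-size-weighted sum over \emph{recruited} FLUs, so the chunking contributes no extra error and can be collapsed. The remaining discrepancy splits into (I) the local-drift error $\widetilde{\nabla\mathfrak{L}}_u^{(k)}-\ell_u^{(k)}\nabla\mathfrak{L}_u^{(k)}(\bm{\omega}^{(k)})$ from running $\ell_u^{(k)}$ SGD steps \eqref{eq:WeightupdateStrat} away from $\bm{\omega}^{(k)}$, and (II) the sampling bias of using $\Upsilon^{\mathsf{s}}$ rather than $\Upsilon$. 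For (I), the standard trick is to unroll the SGD recursion, apply $\beta$-smoothness to bound $\|\bm{\omega}_u^{(k),\ell}-\bm{\omega}^{(k)}\|$, and then invoke Assumption~\ref{Assup:Dissimilarity} with weights $s_u=|\Upsilon_u|/|\Upsilon|$; the factor $1/(1-4\eta_k^2\beta^2\ell_{\mathsf{max}}^{(k)}(\ell_{\mathsf{max}}^{(k)}-1))$ in (c)--(g) is the signature of solving the resulting self-bounded recursion, which fixes the admissible step size $\eta_k$. The mini-batch variance computation, combined with Definition~\ref{Assump:DataVariabilit} to translate $\|\xi-\xi'\|$-dissimilarity into gradient variance via $\Theta_u$, yields the $(1-{B}_u/|\Upsilon_u|)(|\Upsilon_u|-1)\sigma_u^2/(|\Upsilon_u|{B}_u)$ factor that populates terms (c) and (e); (d) is the residual heterogeneity piece coming from the $\mathfrak{X}_2$ slack in Assumption~\ref{Assup:Dissimilarity}.

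For (II), I would write $\nabla\mathfrak{L}^{(k)}(\bm{\omega}^{(k)})=\sum_u (|\Upsilon_u|/|\Upsilon|)\nabla\mathfrak{L}_u^{(k)}$ and compare it to the recruited-user average weighted by $|\Upsilon^{\mathsf{s}}|$; the leading factor $\bigl((|\Upsilon|-|\Upsilon^{\mathsf{s}}|)/|\Upsilon|\bigr)^2$ appearing in (f) and (g) is the squared ``missing-mass'' from unrecruited FLUs, combined with $|\Upsilon|/|\Upsilon_{\mathsf{min}}|$ once Cauchy--Schwarz is applied to re-express ratios of data-size weights. Using Young's inequality ($\|a+b+c\|^2\le 3\|a\|^2+3\|b\|^2+3\|c\|^2$, or the four-term version) at the right places is what generates the constants $4,8,24,6$ in front of the blocks. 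Finally, I would rearrange to isolate $\eta_k\mathfrak{B}_k(1-\zeta^{(k)})\mathbb{E}\|\nabla\mathfrak{L}^{(k)}(\bm{\omega}^{(k)})\|^2$ on one side (absorbing the $\beta\eta_k^2\mathfrak{B}_k^2\mathbb{E}\|\widetilde{\nabla\mathfrak{L}}^{(k)}\|^2$ term into the $\zeta^{(k)}$ slack, which dictates the step-size bound involving $\zeta^{(k)}-48\mathfrak{X}_1\mathscr{R}^{(k)}$), divide by $K$, and telescope over $k=0,\dots,K-1$ so that the $(\mathbb{E}[\mathfrak{L}^{(k-1)}(\bm{\omega}^{(k)})]-\mathbb{E}[\mathfrak{L}^{(k)}(\bm{\omega}^{(k+1)})])$ differences in (a) appear in their un-collapsed form (they do \emph{not} telescope neatly because the loss functions themselves change round-to-round, which is precisely why (b) accompanies (a)).

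The main obstacle is the accounting between the continuously-drifting loss $\mathfrak{L}^{(k)}$ and the discrete update: standard FL proofs rely on $\mathfrak{L}$ being constant across rounds so that $\mathbb{E}[\mathfrak{L}(\bm{\omega}^{(k+1)})]-\mathbb{E}[\mathfrak{L}(\bm{\omega}^{(k)})]$ telescopes to a finite constant. Here one must carefully match each round's descent inequality to its own time-stamped loss $\mathfrak{L}^{(k)}(\cdot)$ evaluated at $T^{(k)}$, and use \eqref{eq:conceptDrift} to convert the mismatch $\mathfrak{L}^{(k)}-\mathfrak{L}^{(k-1)}$ into an integrated drift over the idle window, which is why term (b) must sit inside the per-round sum rather than being absorbed. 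Getting the time arguments $\widetilde{\tau}_b^{\downarrow,(k)}$ (the instant mini-batches are drawn, i.e., right after GM reception) consistent throughout the variance and dissimilarity terms is the other bookkeeping pitfall; everything else reduces to standard SGD-plus-heterogeneity manipulations.
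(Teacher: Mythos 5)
Your proposal follows essentially the same route as the paper's proof in Appendix~\ref{app:th:main}: a $\beta$-smoothness descent step on \eqref{eq:mainupdateWeight}, decomposition of the aggregated gradient into SGD-sampling noise and recruitment noise, unrolling the local SGD recursion to obtain the $1/(1-4\eta_k^2\beta^2\ell_u^{(k)}(\ell_u^{(k)}-1))$ self-bounding factor, the without-replacement variance formula combined with Definition~\ref{Assump:DataVariabilit} to produce the $\Theta^2\sigma_u^2$ terms, Assumption~\ref{Assup:Dissimilarity} for the $\mathfrak{X}_1,\mathfrak{X}_2$ pieces, and the integration of \eqref{eq:conceptDrift} over the idle window to convert $\mathfrak{L}^{(k)}-\mathfrak{L}^{(k-1)}$ into term~$(b)$. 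The plan, including the step-size condition arising from absorbing the gradient-coefficient terms into the $\zeta^{(k)}$ slack, is correct and matches the paper's argument.
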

\vspace{-2mm}
\begin{proof}
The proof is carried out in Appendix~\ref{app:th:main}.
\end{proof}
\vspace{-2mm}
Note that {\small$\Upsilon(\hspace{-0.2mm}T^{(k)}\hspace{-0.3mm}){=}\hspace{-0.3mm}\cup_{b{\in} \Omega,u{\in}\mathcal{U}_{b}}\hspace{-1.3mm}\Upsilon_u(\hspace{-0.3mm}T^{(k)}\hspace{-0.3mm})$} refers to the cumulative datasets of all FLUs after performing $k$-th global aggregation (i.e., at time $T^{(k)}$), which is different from the cumulative dataset of recruited FLUs utilized for obtaining global model $\bm{\omega}^{(k)}$, referred to as {\small$\Upsilon^{\mathsf{s}}(\widetilde{\bm{\tau}}^{\downarrow,{(k)}}){=}\hspace{-0.5mm}\cup_{b{\in}\Omega,u{\in}\mathcal{U}_{b}}\hspace{-0.5mm}\lambda^{(k)}_u \Upsilon_u(\widetilde{\tau}_{b}^{\downarrow,{(k)}})$}. Accordingly, in term $(a)$ of ~\eqref{eq:gen_conv}, {\small${\mathfrak{L}^{({k-1})}(\bm{\omega}^{(k)})}{=}{\mathfrak{L}(\bm{\omega}^{(k)}|\Upsilon(T^{(k{-}1)})})$} denotes the global loss under GM $\bm{\omega}^{(k)}$ and dataset $\Upsilon(T^{(k{-}1)})$, and {\small$\mathfrak{L}^{({k})}(\bm{\omega}^{(k{+}1)}){=}{\mathfrak{L}(\bm{\omega}^{(k{+}1)}|\Upsilon(T^{(k)})})$} denotes the global loss under GM $\bm{\omega}^{(k{+}1)}$ and dataset $\Upsilon(T^{(k)})$. Also, $T^{(-1)}{=}0$, and {\small$\mathfrak{L}^{(-1)}(\bm{\omega}^{(0)}){=}\mathfrak{L}(\bm{\omega}^{(0)}|\Upsilon(T^{(-1)}))$} denotes the \textit{initial} loss of the algorithm before the start of any model training.

Below, we explain how the ML-related control decisions (Sec.~\ref{sec:FL_model}) and dynamic wireless control decisions (Sec.~\ref{sec:dynamic_control_decisions}) utilized in {\tt DCLM} affect the convergence bound in~\eqref{eq:gen_conv}.

\textbf{Interpretation 1: ML-related control decisions.}
Referring to \eqref{eq:gen_conv},  term $(a)$ captures the effect of consecutive loss function gains during ML training. This term is inversely proportional to $\eta_k$, while $(c)$, $(d)$, and $(f)$ are quadratically proportional to $\eta_k$, and $(e)$ is linearly proportional to $\eta_k$. Terms $(c)$, $(e)$, $(f)$ capture the impact of mini-batch sizes {\small${B}_{u}(\widetilde{\tau}_{b}^{\downarrow,{(k)}})$} used at the FLUs: these terms become zero when full batch size is employed (i.e., {\small ${B}_{u}(\widetilde{\tau}_{b}^{\downarrow,{(k)}}){=}|\Upsilon_{u}(\widetilde{\tau}_{b}^{\downarrow,{(k)}})|$}). The heterogeneity of the dataset is captured by $\mathfrak{X}_2$ in $(d)$ and $(g)$, and by $\mathfrak{X}_1$ in the condition on $\eta_{_k}$. Larger Dataset heterogeneity leads to a larger upper bound due to $(d)$ and $(g)$. Also, larger local data dissimilarity implies a larger bound ($\Theta$ in $(c)$, $(e)$, and $(f)$). Terms $(c)$, $(d)$, $(e)$, $(f)$, and $(g)$ capture the impact of the number of SGD iterations among FLUs (via $\ell^{(k)}_{\mathsf{max}}$ and $\ell^{(k)}_n$). If  $\ell^{(k)}_n{=}1$, $\forall n,k$, $(c)$, $(d)$, and $(f)$ become zero, resulting in a bound that show 1-epoch distributed ML with non-uniform SGD sampling and asymmetric FLU recruitment (term $(g)$). 

\textbf{Interpretation 2: Asymmetric FLU recruitment.} The bound reveals the relationship between FLU recruitment (via {\small$|\Upsilon(\widetilde{\bm{\tau}}^{\downarrow,{(k)}})|{-}|\Upsilon^{\mathsf{s}}(\widetilde{\bm{\tau}}^{\downarrow,{(k)}})|$}) and both the dataset heterogeneity of FLUs  (captured via $\mathfrak{X}_2$ in $(g)$) and their local data dissimilarity (captured via $\Theta$ in $(f)$): terms $(g)$ and $(f)$ imply that if $\mathfrak{X}_2$  and $\Theta$ are large, more FLUs with larger dataset must be recruited. Moreover, if {\small$|\Upsilon^{\mathsf{s}}(\widetilde{\bm{\tau}}^{\downarrow,{(k)}})|{=}|\Upsilon(\widetilde{\bm{\tau}}^{\downarrow,{(k)}})|$}, terms $(f)$ and $(g)$ become zero, obtaining the bound for full participant of FLUs. Further, the relationship between FLU recruitment and model drift is captured via $\widetilde{\lambda}^{(k)}_u$ and $\mathfrak{D}_{u}^{(k)}$ in  $(b)$, which implies that recruiting more FLUs reduces the error caused by model drift. Also, $(b)$ reveals that recruiting FLUs with larger model drift leads to a smaller bound (i.e., faster convergence). 
\vspace{-.25mm}

\textbf{Interpretation 3: Dynamic dataset and model drift.}  Terms $(f)$ and $(g)$ demonstrate that to reduce the error caused by FLU recruitment, we need to recruit FLUs whose datasets' size are larger at time $T^{(k{-}1)}$ and have a higher growth rate during GM broadcasting (i.e., for {\small$t{\in}[T^{(k{-}1)}, \widetilde{\tau}_{b}^{\downarrow,{(k)}}]$}); otherwise {\small$|\Upsilon(t)|$} grows faster than {\small$|\Upsilon^{\mathsf{s}}(t)|$}, leading to a larger value of {\small$\frac{|\Upsilon(t)|-|\Upsilon^{\mathsf{s}}(t)|}{|\Upsilon(t)|}$} at time {\small$t=\widetilde{\bm{\tau}}^{\downarrow,{(k)}}$} (i.e., the start time of LM training at global round $k$) and  increased errors caused by terms $(f)$ and $(g)$. Bound~\eqref{eq:gen_conv}, via {\small$T^{(k)}{-}T^{(k{-}1)}$} in term $(b)$, also implies that to mitigate the error caused by model drift (i.e., {\small$\mathfrak{D}^{(k)}_u, \forall u\in\mathcal{U}_{b}$}), the GM broadcasting and FLUs' GV uploading must be performed faster as model drift {\small$\mathfrak{D}^{(k)}_u$} increases. 

Interpretations 2 and 3 answer question \eqref{Q1} in Sec. \ref{sec:summary_of_contribution}. 
\vspace{-.25mm}

\textbf{Interpretation 4: Dynamic wireless control decisions.} 
As evident from \eqref{eq:gen_conv}, terms $(c)$, $(d)$, $(e)$, $(f)$, and $(g)$ are functions of the completion time of GM broadcasting, i.e., $\widetilde{\tau}_{b}^{\downarrow,{(k)}}{=}T^{(k{-}1)}{+}\tau_{b}^{\downarrow,{(k)}}$. Also, term $(b)$ is a function of $T^{(k)}$ and $T^{(k{-}1)}$ calculated in \eqref{TK}. $T^{(k)}$ and $T^{(k{-}1)}$ incorporate the occurrence times of $\mathscr{D}$-events, calculated through propositions in Sec.~\ref{sec:communication_latency}, into bound~\eqref{eq:gen_conv}. These propositions are directly dependent on the dynamic wireless control decisions conducted at FGTIs (see Sec. \ref{sec:dynamic_control_decisions}). Subsequently, bound~\eqref{eq:gen_conv} shows that DCC communication mode, scheduling decisions and resource allocation of MAC scheduler, transmit power allocations, and interference on PRBs experienced from nearby O-RUs directly affect the convergence of {\tt DCLM}. This interpretation answers part (i) of question \eqref{Q3} in Sec. \ref{sec:summary_of_contribution} regarding the impact of dynamic wireless control decisions on ML model accuracy.

\textbf{Conclusion of Theorem~\ref{th:main}.} Considering the above interpretations, Theorem~\ref{th:main} provides a deep understanding of the impact of wireless network conditions on FL.
While existing studies~\cite{ganguly2023multi,9261995, 8737464} have investigated how wireless control decisions affect model training latency and energy consumption, our findings are the first to reveal that such decisions also significantly influence the convergence behavior of FL. This is mainly because prior works have focused on static snapshots of the network, without considering how the network and its users evolve over time.


In the following theorem, we provide the sufficient conditions under which convergence of {\tt DCLM} is guaranteed.

\vspace{-2mm}
\begin{theorem}[Convergence under Sufficient Conditions]\label{th:sufficient_conditions}
In addition to the conditions in Theorem~\ref{th:main}, further assume that (i) {\small$\widehat{\ell}_{\mathsf{min}} \leq \ell^{(k)}_{\mathsf{sum}}\leq  \widehat{\ell}_{\mathsf{max}}$} for two finite positive constants {\small$\widehat{\ell}_{\mathsf{min}}$} and {\small$\widehat{\ell}_{\mathsf{max}}$}, {\small$\forall k$}, (ii) {\small$\max_{k{\in}\mathcal{K}}\{ \ell^{(k)}_{\mathsf{max}}\}\leq \ell_{\mathsf{max}}$} and {\small$\max_{k{\in}\mathcal{K}}\{ \ell^{(k)}_{\mathsf{min}}\}\leq \ell_{\mathsf{min}}$}, (iii)  {\small$\max_{k{\in}\mathcal{K}} \big\{\zeta^{(k)}\big\} \leq \zeta_{\mathsf{max}}<1$}, (iv) {\small$N=\sum_{b\in\Omega}U_{b}$}, (v) {\small$\max_{k{\in}\mathcal{K}} \big\{\zeta^{(k)}\big\} \leq \zeta_{\mathsf{max}}<1$}, (vi) {\small$\widehat{\zeta}^{(k)}\le\zeta^{(k)}$}, and (vii) {\small$\max_{k{\in}\mathcal{K}}\{(|\Upsilon_{\mathsf{min}}(\widetilde{\bm{\tau}}^{\downarrow,{(k)}})|)/(|\Upsilon(\widetilde{\bm{\tau}}^{\downarrow,{(k)}})|)\}{\le} |\Upsilon_{\mathsf{max}}|$} for a constant {\small$|\Upsilon_{\mathsf{max}}|$}. Also, consider two constants {\small$|\varpi|{<}1$} and {\small$|\vartheta|{<}1$}. {\tt DCLM} satisfies {\small$\frac{1}{K} \sum_{k=0}^{K-1}\mathbb{E}\big[\big\Vert\nabla{\mathfrak{L}^{({k})}(\bm{\omega}^{(k)})}\big\Vert^2\big]=\mathcal{O}\big(1/\sqrt{K}\big)$}, 
implying  {\small$\lim_{K\rightarrow \infty}\frac{1}{K} \sum_{k=0}^{K-1}\mathbb{E}\big[\big\Vert\nabla{\mathfrak{L}^{({k})}(\bm{\omega}^{(k)})}\big\Vert^2\big]{=}0$}, 
if the following \textit{sufficient conditions} are met:
\begin{enumerate}[leftmargin=4.5mm]
    \item $\bm{\mathfrak{C}^{(L)}}\rightarrow$ \textbf{Sufficient Condition on Training Latency:} 
        \begin{align}\label{suf:main:gamma_upper}
            \hspace{-5mm}&T^{(k)}-T^{(k{-}1)} \le \frac{\eta_{_k}\mathfrak{B}_k\left(1-\zeta^{(k)}\right)\vartheta^k}{4\sum_{b \in \Omega}\sum_{u\in \mathcal{U}_{b}} \mathfrak{D}^{(k)}_u}+\nonumber\\
            \hspace{-5mm}&\frac{\sum_{b \in \Omega}\sum_{u\in \mathcal{U}_{b}} \mathfrak{D}^{(k)}_u(\widehat{\lambda}_{u}^{(k)}\tau_{u}^{\mathsf{LC},{(k)}}{-}\Psi\!\big(\mathscr{D}_u^{\ndownarrow,(k+1)}\big) {+} \Psi\!\big(\mathscr{D}_u^{\ndownarrow,(k)}\big))}{\sum_{b \in \Omega}\sum_{u\in \mathcal{U}_{b}} \mathfrak{D}^{(k)}_u}.
        \end{align}
    \item $\bm{\mathfrak{C}^{(\Upsilon)}}\rightarrow$ \textbf{Sufficient Condition on $|\Upsilon^{\mathsf{s}}(\widetilde{\bm{\tau}}^{\downarrow,{(k)}})|$:}
        \begin{equation}\label{suf:main:recruitment222}
        \hspace{-8mm}
        \resizebox{0.43\textwidth}{!}{$
        \begin{aligned}
            &\left(\hspace{-0.5mm}|\Upsilon(\widetilde{\bm{\tau}}^{\downarrow,{(k)}})|{-}|\Upsilon^{\mathsf{s}}(\widetilde{\bm{\tau}}^{\downarrow,{(k)}})|\hspace{-0.5mm}\right)^2\hspace{-2mm}\\
            &{<}\hspace{-0.3mm}|\Upsilon(\widetilde{\bm{\tau}}^{\downarrow,{(k)}})|\min\hspace{-0.5mm}\left\{\hspace{-0.5mm}\widehat{\zeta}^{(k)}|\Upsilon_{\mathsf{min}}(\widetilde{\bm{\tau}}^{\downarrow,{(k)}})|{/}(48\mathfrak{X}_1),\mathfrak{N}^{(k)}{/}(48\mathfrak{X}_2)\hspace{-0.5mm}\right\}\hspace{-0.5mm},
            \hspace{-4mm}
        \end{aligned}
        $}
        \end{equation}
        where
       $
            \mathfrak{N}^{(k)}=\varpi^k \left(1-\zeta^{(k)}\right)|\Upsilon_{\mathsf{min}}(\widetilde{\bm{\tau}}^{\downarrow,{(k)}})|\times\left(1- 4\eta_k^2\beta^2 \ell_{\mathsf{max}}^{(k)}\left(\ell_{\mathsf{max}}^{(k)}-1\right)\right).
        $
    \item $\bm{\mathfrak{C}^{(\eta)}}\rightarrow$ \textbf{Sufficient Condition on Step Size ($\eta_k$):}
        \begin{equation}\label{suf:main:eta}
        \eta_k = \alpha \big /{\sqrt{K \ell^{(k)}_{\mathsf{sum}}/N}},
        \end{equation}
        where $\ell^{(k)}_{\mathsf{sum}}=\sum_{b\in\Omega}\sum_{u\in \mathcal{U}_{b}} \ell_u^{(k)}$. Moreover, $\alpha$ must be chosen to satisfy {\small $\alpha<\sqrt{(\widehat{\ell}_{\mathsf{min}} K)/(4 N\beta^2 \ell_{\mathsf{max}}\left(\ell_{\mathsf{max}}-1\right))}$} and {\small $\eta_k \leq \min \{\frac{1}{2\beta} \sqrt{(\zeta^{(k)}- \widehat{\zeta}^{(k)})/\Delta}, \frac{1}{2\beta}\}$}, where {\small$\Delta=\ell_{\mathsf{max}}^{(k)}(\ell_{\mathsf{max}}^{(k)}-1)(2\mathfrak{X}_1+\zeta^{(k)})$}.
    \item $\bm{\mathfrak{C}^{(\sigma)}}\rightarrow$ \textbf{Sufficient Condition on Gradient Sampling Noise:}
        \begin{equation}\label{suf:main:sigma}
     \hspace{-7mm}\resizebox{0.45\textwidth}{!}{$
\max_{\substack{k\in\mathcal{K},\\b{\in}\Omega,\\u{\in}\mathcal{U}_{b}}}\vast\{\hspace{-1mm}\frac{\Big(1{-}\frac{{B}_{u}(\widetilde{\tau}_{b}^{\downarrow,{(k)}})}{|\Upsilon_{u}(\widetilde{\tau}_{b}^{\downarrow,{(k)}})|} \Big){(|\Upsilon_{u}(\widetilde{\tau}_{b}^{\downarrow,{(k)}})|{-}1)}\left(\sigma_{u}(\widetilde{\tau}_{b}^{\downarrow,{(k)}})\right)^2}{|\Upsilon_{u}(\widetilde{\tau}_{b}^{\downarrow,{(k)}})|{B}_{u}(\widetilde{\tau}_{b}^{\downarrow,{(k)}})}\hspace{-1mm}\vast\}{\leq} \sigma_{\mathsf{max}}.$}\hspace{-3mm}
        \end{equation}
\end{enumerate}
\end{theorem}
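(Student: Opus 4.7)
The plan is to start from the general convergence bound \eqref{eq:gen_conv} established in Theorem~\ref{th:main} and substitute the four sufficient conditions to reduce each term to an $\mathcal{O}(1/\sqrt{K})$ contribution after summing from $k=0$ to $K-1$ and dividing by $K$. With the step-size choice $\eta_k=\alpha/\sqrt{K\ell^{(k)}_{\mathsf{sum}}/N}$ from $\mathfrak{C}^{(\eta)}$, we have $\eta_k=\Theta(1/\sqrt{K})$ (up to the bounded ratio $\ell^{(k)}_{\mathsf{sum}}/N$ controlled by assumptions (i),(ii),(iv)), so the target rate will follow by a dominant-term analysis in which every summand contributes at most $\mathcal{O}(\eta_k)=\mathcal{O}(1/\sqrt{K})$ after division by $K$.

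First, I would handle the telescoping loss-improvement term $(a)$. The numerator $\mathbb{E}_k[\mathfrak{L}^{(k{-}1)}(\bm{\omega}^{(k)})]-\mathbb{E}_k[\mathfrak{L}^{(k)}(\bm{\omega}^{(k{+}1)})]$ is not a pure telescope because the loss index shifts, but adding and subtracting $\mathbb{E}[\mathfrak{L}^{(k)}(\bm{\omega}^{(k)})]$ splits it into a genuine telescope plus a residual $\mathbb{E}[\mathfrak{L}^{(k{-}1)}(\bm{\omega}^{(k)})-\mathfrak{L}^{(k)}(\bm{\omega}^{(k)})]$, which by the model-drift inequality \eqref{eq:conceptDrift} is bounded by $\sum_u\mathfrak{D}_u^{(k)}(T^{(k)}-T^{(k-1)})$. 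Both pieces yield $\mathcal{O}(1)$ cumulative mass, so after dividing by $K\eta_k(1-\zeta^{(k)})\mathfrak{B}_k=\Theta(\sqrt{K})$ the term $(a)$ contributes $\mathcal{O}(1/\sqrt{K})$. For the drift term $(b)$, condition $\mathfrak{C}^{(L)}$ is designed so that $\sum_{b,u}\mathfrak{D}_u^{(k)}\big((T^{(k)}{-}T^{(k{-}1)}){-}\widehat{\lambda}_u^{(k)}\tau_u^{\mathsf{LC},(k)}\big)\le \eta_k\mathfrak{B}_k(1-\zeta^{(k)})\vartheta^k/4$, which collapses $(b)$ into the geometric quantity $\vartheta^k/K$; summing a geometric series gives $\mathcal{O}(1/K)$.

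Next, I would address the sampling-noise terms $(c),(e),(f)$ and the heterogeneity terms $(d),(g)$. Condition $\mathfrak{C}^{(\sigma)}$ uniformly bounds the mini-batch variance factor by $\sigma_{\mathsf{max}}$, removing all data-dependence from $(c),(e),(f)$. Condition $\mathfrak{C}^{(\Upsilon)}$ on the recruited dataset size makes $\big(|\Upsilon(\widetilde{\bm{\tau}}^{\downarrow,(k)})|-|\Upsilon^{\mathsf{s}}(\widetilde{\bm{\tau}}^{\downarrow,(k)})|\big)^2$ small enough so that $(f)$ and $(g)$ decay geometrically via $\varpi^k$; in particular the $\mathfrak{X}_1$-branch of \eqref{suf:main:recruitment222} guarantees $48\mathfrak{X}_1\mathscr{R}^{(k)}\le \widehat{\zeta}^{(k)}\le \zeta^{(k)}$, which is exactly what the step-size admissibility in Theorem~\ref{th:main} demands. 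With $\eta_k=\Theta(1/\sqrt{K})$, the terms $(c),(d),(f)$ (quadratic in $\eta_k$) are $\mathcal{O}(1/K)$ per round and term $(e)$ (linear in $\eta_k$) is $\mathcal{O}(1/\sqrt{K})$ per round, giving the desired average rate after dividing by $K$. The heterogeneity constant $\mathfrak{X}_2$ appearing in $(d),(g)$ is absorbed into the constants because $(1-\zeta^{(k)})^{-1}\le (1-\zeta_{\mathsf{max}})^{-1}$ by assumption~(iii).

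The main obstacle I anticipate is verifying that the admissibility constraint $\eta_k\le \tfrac{1}{2\beta}\sqrt{(\zeta^{(k)}-48\mathfrak{X}_1\mathscr{R}^{(k)})/\Delta}$ from Theorem~\ref{th:main} remains compatible with the $1/\sqrt{K}$-scaling in $\mathfrak{C}^{(\eta)}$ uniformly in $k$. This is where $\mathfrak{C}^{(\Upsilon)}$ does double duty: by forcing $48\mathfrak{X}_1\mathscr{R}^{(k)}\le \zeta^{(k)}-\widehat{\zeta}^{(k)}$ through the first branch of \eqref{suf:main:recruitment222}, the admissibility bound simplifies to $\eta_k\le \tfrac{1}{2\beta}\sqrt{\widehat{\zeta}^{(k)}/\Delta}$. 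Invoking $\Delta=\mathcal{O}(\ell_{\mathsf{max}}^2)$ together with the uniform bounds from assumptions (ii)--(vi), this becomes a constant-order constraint automatically satisfied for sufficiently large $K$ once $\alpha<\sqrt{\widehat{\ell}_{\mathsf{min}}K/(4N\beta^2\ell_{\mathsf{max}}(\ell_{\mathsf{max}}-1))}$. Combining the $\mathcal{O}(1/\sqrt{K})$ contributions from $(a),(e)$ and the $\mathcal{O}(1/K)$ contributions from $(b),(c),(d),(f),(g)$ yields the claimed rate, and taking $K\to\infty$ gives the limiting statement $\lim_{K\to\infty}\tfrac{1}{K}\sum_{k=0}^{K-1}\mathbb{E}\big[\|\nabla\mathfrak{L}^{(k)}(\bm{\omega}^{(k)})\|^2\big]=0$.
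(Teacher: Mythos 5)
Your proposal follows essentially the same route as the paper's proof in Appendix~H: bound the drift term by $\vartheta^k$ and the recruitment term by $\varpi^k$ to extract $\bm{\mathfrak{C}^{(L)}}$ and $\bm{\mathfrak{C}^{(\Upsilon)}}$, telescope term $(a)$, substitute $\eta_k=\alpha/\sqrt{K\ell^{(k)}_{\mathsf{sum}}/N}$, and uniformly bound the sampling-noise factors via $\bm{\mathfrak{C}^{(\sigma)}}$, yielding the $\mathcal{O}(1/\sqrt{K})$ rate. Two cosmetic remarks: the numerator of $(a)$ in \eqref{eq:gen_conv} already telescopes exactly as written (the positive summand at round $k{+}1$ cancels the negative summand at round $k$), so your add-and-subtract step is unnecessary; and the $\mathfrak{X}_1$-branch of \eqref{suf:main:recruitment222} gives $48\mathfrak{X}_1\mathscr{R}^{(k)}<\widehat{\zeta}^{(k)}\le\zeta^{(k)}$, so the admissible step size from Theorem~\ref{th:main} is implied by $\eta_k\le\frac{1}{2\beta}\sqrt{(\zeta^{(k)}-\widehat{\zeta}^{(k)})/\Delta}$ rather than by $\frac{1}{2\beta}\sqrt{\widehat{\zeta}^{(k)}/\Delta}$, which is exactly the form stated in $\bm{\mathfrak{C}^{(\eta)}}$.
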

\begin{proof}
The proof is carried out in Appendix~\ref{app:th:sufficient_conditions}.
\end{proof}
\vspace{-2mm}
\textbf{Key interpretations.}
One of the key findings of Theorem~\ref{th:sufficient_conditions} is that for {\tt DCLM} to converge, we must recruit FLUs whose cumulative dataset size (i.e., $|\Upsilon^{\mathsf{s}}(\widetilde{\bm{\tau}}^{\downarrow,{(k)}})|$) satisfies inequality~\eqref{suf:main:recruitment222} at each global round $k$. Another key finding pertains to sufficient condition $\bm{\mathfrak{C}^{(L)}}$. Specifically, if $\mathfrak{D}^{(k)}_u{\rightarrow} 0,~\forall u{\in}\mathcal{U}_{b}$, the right-hand side of~\eqref{suf:main:gamma_upper} approaches infinity. However, for a large value of $\mathfrak{D}^{(k)}_u,~\forall u{\in}\mathcal{U}_{b}$, the second fraction in the right-hand side of~\eqref{suf:main:gamma_upper} becomes small. In this situation, for \eqref{suf:main:gamma_upper} to hold, as the model drifts of FLUs increase, more FLUs should be recruited and the communication latencies, integrated in the left hand side of \eqref{suf:main:gamma_upper} via~\eqref{TK}, should decrease, implying faster communications.


Corollary \ref{cor:sufficient_condition} in Appendix~\ref{app:cor:sufficient_condition} provides the convergence bound of {\tt DCLM} under sufficient conditions given in Theorem~\ref{th:sufficient_conditions}.

\vspace{-1mm}
\section{{\tt DCLM} Network Orchestration}\label{sec:PF}
 \vspace{-1.1mm}
\noindent In {\tt DCLM}, we aim to jointly optimize the \textit{ML-related control decisions} (Sec.~\ref{sec:FL_model}) and \textit{dynamic wireless control decisions} (Sec.~\ref{sec:dynamic_control_decisions}), which, to our knowledge, is among the most generic network optimization objectives for FL. Specifically, we formulate {\tt DCLM} as the following  optimization problem $\bm{\mathcal{P}}$:
 \vspace{-0.5mm}
 \begin{align}\label{opt:main}
     &\footnotesize (\bm{\mathcal{P}}):\min \bigg[\underbrace{c_1 \hspace{-.1mm}\frac{1}{K}\sum_{k=0}^{K-1}\mathbb E\left[ \big\Vert \nabla \mathfrak{L}^{({k})}(\bm{\omega}^{({k})})\big\Vert^2\right]}_{(a)}+\underbrace{c_2\sum_{k=0}^{K-1}\sum_{b \in \Omega} E_{b}^{\downarrow,{(k)}}}_{(b)}\nonumber\\[-.2em] 
     &\footnotesize {+}\underbrace{c_2\sum_{k=0}^{K-1}\sum_{b \in \Omega}\sum_{u{\in}\mathcal{U}_{b}}\overline{\lambda}_{u}^{(k)}\big( E_{u}^{\mathsf{LC},{(k)}}+\overline{E}_{u}^{\uparrow,(k)}\big)+\lambda_{u}^{(k)}\big(E_{u}^{\mathsf{LC},{(k)}}+E_{u}^{\uparrow,(k)}\big)}_{(c)}\bigg] \\[-.7em]
     &\hspace{-1mm}\textrm{{\textbf{s.t.}}}\nonumber\\[-0.2em]
&\hspace{-1mm}\textrm{\underline{\textit{\textbf{Constraints}}}:}\nonumber\\[-0.2em]
     &\sbullet[0.7]\text{\small Scheduling Decisions} \textrm{: }{\small\eqref{cons:mac_1},\eqref{cons:mac_2},  \eqref{cons:mac_4}, \eqref{cons:mac_5},  \eqref{cons:mac_7}, \eqref{cons:mac_8}}\nonumber\\[-0.25em]
     &\sbullet[0.7]\text{\small PRB/Power Allocation} \textrm{: } {\small\eqref{cons:mac_r_1}, \eqref{cons:mac_r_2}, \eqref{cons:mac_r_3}, \eqref{cons:mac_r_4}, \eqref{cons:mac_p_1}, \eqref{cons:mac_p_2}, \eqref{cons:mac_p_3}}\nonumber\\[-0.25em]
     &\sbullet[0.7]\text{\small GM/GV Transmissions} \textrm{: }{\small\eqref{cons:LGPD1},\eqref{cons:LGPD2},\eqref{cons:LMD1}},\eqref{cons:r2cc2r1}\nonumber\\[-0.25em]
     &\sbullet[0.7]\text{\small Training Latency and Energy Consumption} \textrm{: }
     \eqref{const:latency_max},\eqref{const:energy:max_DPU},\eqref{const:energy:max_CHU}\nonumber\\[-0.25em]
     &\sbullet[0.7]\text{\small Sufficient ML Convergence Conditions} \textrm{: }{\small \eqref{suf:main:gamma_upper}, \eqref{suf:main:recruitment222}, \eqref{suf:main:eta}, \eqref{suf:main:sigma} } \nonumber\\[-0.25em]
&\hspace{-1mm}\textrm{\underline{\textit{\textbf{Variables}}}:}\nonumber\\[-0.25em]
     &\sbullet[0.7]\textrm{\small\textit{FLU Recruitment, SGD Mini-batch and Iterations, CPU Frequency}:}\nonumber\\[-0.2em]
     &{\footnotesize \hspace{-1mm}\Big\{\hspace{-0.5mm}[\lambda_{u}^{(k)}]_{\hspace{-0.3mm}u{\in} \mathcal{U}_{b}},\hspace{-0.5mm}[\overline{\lambda}_{u}^{(k)}]_{\hspace{-0.3mm}u{\in} \mathcal{U}_{b}}, \hspace{-0.5mm}[\varsigma^{(k)}_u]_{\hspace{-0.3mm}u{\in} \mathcal{U}_{b}}, \hspace{-0.5mm}[\ell^{(k)}_{u}]_{\hspace{-0.3mm}u{\in} \mathcal{U}_{b}}, \hspace{-0.5mm}[f^{(k)}_{u}]_{u{\in} \mathcal{U}_{b}}\hspace{-0.5mm}\Big\}_{\hspace{-0.5mm}b{\in}\Omega,k{\in}\mathcal{K}}}\nonumber\\[-0.1em]
     &\sbullet[0.7]\textrm{\small\textit{Scheduling Decisions of MAC scheduler}:}\nonumber\\[-0.4em]
     &~~~~~~\footnotesize \bigg\{t_x,\Big\{\beta^{\downarrow}_{b}\hspace{-0.3mm}(t_{x}),[\overline{\beta}^{\uparrow}_u(t_x)]_{u{\in} \mathcal{U}_{b}},[\beta^{\uparrow}_u(t_x)]_{u{\in} \mathcal{U}_{b}}\Big\}_{b{\in}\Omega}\bigg\}_{x{\in}\mathcal{N}^{(k)},k{\in}\mathcal{K}}\nonumber\\[-0.0em]
     &\sbullet[0.7]\textrm{\small\textit{PRB and Transmit Power Allocation through MAC Scheduler}:}\nonumber\\[-0.4em]
     &\Big\{[\overline{\varrho}_{u,u',r}(t_x)]_{u,u'{\in}\mathcal{U}_{b},r{\in}\overline{\mathcal{R}}_{b}},[\varrho_{u,r}(t_x)]_{u{\in}\mathcal{U}_{b},r{\in}\mathcal{R}_{b}},[\rho^{\downarrow}_{b,r}\hspace{-0.5mm}(t_{x})]_{r{\in}\mathcal{R}_{b}}\nonumber\\[-0.2em]
     &~~~~,[\overline{\rho}^{\uparrow}_{u,r}\hspace{-0.5mm}(t_{x})]_{u{\in\mathcal{U}_{b}},r{\in}\overline{\mathcal{R}}_{b}},[\rho^{\uparrow}_{u,r}\hspace{-0.5mm}(t_{x})]_{u{\in\mathcal{U}_{b}},r{\in}\mathcal{R}_{b}}\Big\}_{b{\in}\Omega,x{\in}\mathcal{N}^{(k)},k{\in}\mathcal{K}}\nonumber\\[-0.2em]
     &\sbullet[0.7]\textrm{\small\textit{The GM and GV Transmissions}:}\nonumber\\[-0.4em]
     &~~~~\Big\{[\varphi_{b,r}\hspace{-0.3mm}(t_{x})]_{r{\in}\mathcal{R}_{b}},[\overline{\psi}_{u,u',r}(t_x)]_{u,u'{\in}\mathcal{U}_{b},r{\in}\overline{\mathcal{R}}_{b}}\nonumber\\[-0.2em]
     &~~~~~~~~~~~~~~~~~~~~~~~,[\psi_{u,r}(t_x)]_{u{\in}\mathcal{U}_{b},r{\in}\mathcal{R}_{b}}\hspace{-1mm}\Big\}_{b{\in}\Omega,x{\in}\mathcal{N}^{(k)},k{\in}\mathcal{K}}\nonumber\hspace{-10mm}
 \end{align}

\textbf{Objective Function of $\bm{\mathcal{P}}$.} The objective function of $\bm{\mathcal{P}}$ draws a tradeoff between the ML performance (term $(a)$ given by \eqref{eq:gen_conv}), the total energy consumption of O-RAN (term $(b)$, where $E_{b}^{\downarrow,{(k)}}$ is given by \eqref{eq:EN_broad}), and the total energy consumption of DPUs and CHUs (term $(c)$, where $E_{u}^{\mathsf{LC},{(k)}}$, $\overline{E}_{u}^{\uparrow,(k)}$, and $E_{u}^{\uparrow,(k)}$, are given by \eqref{eq:EN_LC}, \eqref{eq:EN_DPU_UP}, and \eqref{eq:EN_CHU_UP}). These (potentially) competing objectives are weighted via non-negative coefficients $c_1\&c_2$, respectively. To achieve this tradeoff, $\bm{\mathcal{P}}$ aims to find the optimal values for the  variables mentioned above while satisfying a set of constraints classified into five categories, pertaining to \textit{scheduling decisions}, \textit{PRB and power allocation}, \textit{GM and GV transmissions}, \textit{training latency and energy consumption}, and \textit{sufficient ML convergence conditions}. 

\vspace{-2mm}
\begin{remark}
   Optimization problem $\bm{\mathcal{P}}$ is highly non-trivial to solve, and it will require numerous mathematical steps to obtain its solution. To improve the paper's readability, we do not delve into all the mathematics involved in solving the problem and refer the interested reader to Appendix~\ref{app:optTransform}. Instead, below, we discuss different aspects of $\bm{\mathcal{P}}$ that are of particular interest.
\end{remark} 

 \vspace{-2mm}
\textbf{Challenges Faced in Solving $\bm{\mathcal{P}}$.} The objective function and constraints of $\bm{\mathcal{P}}$ are highly non-convex and intricate. This is because of several factors, including (i) terms with negative signs in the convergence bound \eqref{eq:gen_conv}, (ii) multiplication between different terms appearing in the objective function and the constraints of $\bm{\mathcal{P}}$ (e.g., in the convergence bound \eqref{eq:gen_conv} and energy consumption calculations in Sec.~\ref{sec:energy_consumption}), (iii) logarithmic functions in the calculations of the data  rates (see Sec.~\ref{sec:data_transmission_rate}), (iv) recursive functions such as the global round completion time (see Sec.~\ref{sec:communication_latency}) and scheduling decision constraints (see Sec.~\ref{sec:mac_scheduler}), (v) and non-convex nested min/max functions involved in the constraints of $\bm{\mathcal{P}}$ and the calculations of the communication latencies of {\tt DCLM} (see Sec.~\ref{sec:communication_latency}).

\begin{figure}[t!]
    \centering
    \begin{subfigure}[t]{0.23\textwidth}
       \noindent\includegraphics[height=1.26in]{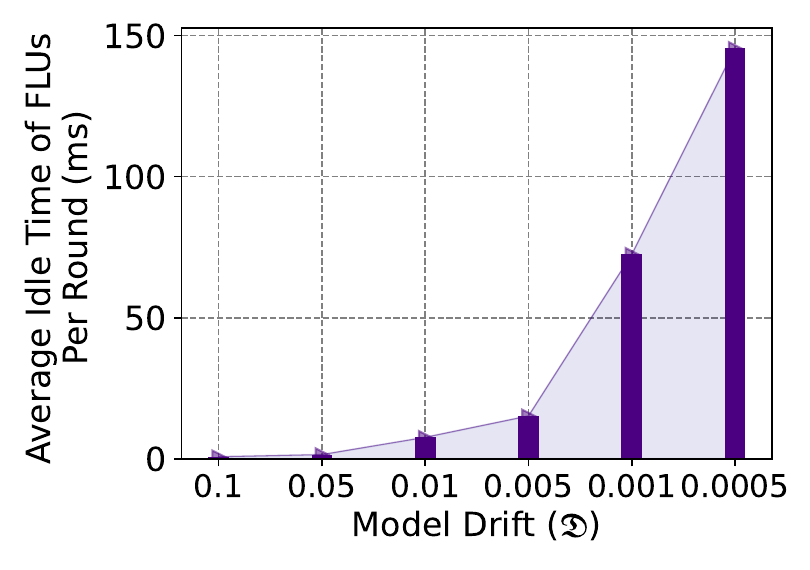} 
    \end{subfigure}%
    ~ 
    \begin{subfigure}[t]{0.23\textwidth}
        \centering
        \noindent\includegraphics[height=1.26in]{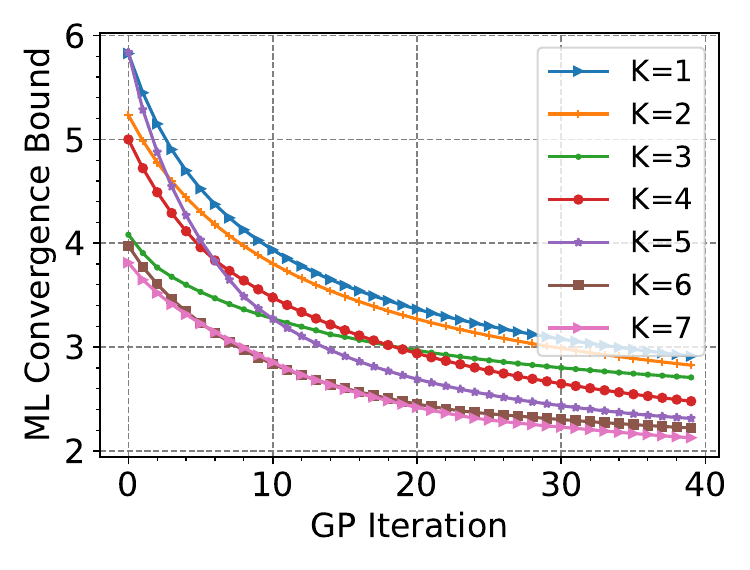} 
    \end{subfigure}
    \vspace{-1.5mm}
    \caption{Left plot: { As model drift decreases (e.g., $\mathfrak{D} = 0.0005$), data variations among FLUs diminish, allowing for longer idle times for FLUs (captured by $\sum_{b \in \Omega}\sum_{u\in \mathcal{U}_{b}} \mathfrak{D}^{(k)}_u$ in the condition $\bm{\mathfrak{C}^{(L)}}$ in Theorem~\ref{th:sufficient_conditions}}). Right plot: { The relationship between the number of global rounds $K$ and the ML convergence upper bound. An increase in $K$ results in a decrease in convergence bound}.}
    \label{fig:MD_omega_K}
    \vspace{-1.5mm}
\end{figure}

\textbf{Solution Design.} In Appendix~\ref{app:optTransform}, we draw a resemblance between $\bm{\mathcal{P}}$ and \textit{signomial programming} (SP) problems, which are highly non-convex and, in general, NP-hard. We then introduce and exploit a variety of techniques and approximations to transform the objective function and the constraints of $\bm{\mathcal{P}}$ into a set of \textit{monomials} and \textit{posynomials}, which transforms the problem into the format of \textit{geometric programming} (GP).  Using a logarithmic change of variables, GP, in its standard form, can be transformed into a convex optimization that can be efficiently solved with software such as CVXPY~\cite{diamond2016cvxpy}. \textit{Given the generality of $\bm{\mathcal{P}}$, the steps involved in our methodology are generalizable and can be applied to a wider range of optimization formulations under the umbrella of network-aware FL.} Nevertheless, they are quite lengthy and are moved to Appendix~\ref{app:optTransform}. In a nutshell, the optimization solver discussed in Appendix~\ref{app:optTransform} is an iterative solver, which leverages a combination of the arithmetic-geometric mean inequality \eqref{eq:approxPosMonMain}, Taylor-power approximation \eqref{eq:taylor_e}, and sum-power approximations, e.g., $\min\{A, B\}\approx (A^{-p}+B^{-p})^{-\frac{1}{p}}$ and $\max\{A, B\}\approx (A^{p}+B^{p})^{-\frac{1}{p}}$. 
Furthermore, we describe the dynamics of the dataset size of FLU $u$ through the following piece-wise ordinary differential equation, the closed form solution of which is derived in \eqref{app:eq:dataset_size_compact_closed_form} in Appendix~\ref{app:optTransform}:
\begin{equation}\label{app:eq:dynamic_dataset_size_ODE_main}
\frac{d|\Upsilon_{u}(t)|}{dt}{=}
\begin{cases}
   C_{u}^{\downarrow,(k)},&t{\in}[T^{(k{-}1)}, T^{(k{-}1)}{+}\tau_{b}^{\downarrow,{(k)}}),\\
   0,&t{\in}[T^{(k{-}1)}{+}\tau_{b}^{\downarrow,{(k)}},\Psi^{(k)}_{u}],\\
   C_{u}^{\uparrow,(k)},&t{\in}(\Psi^{(k)}_{u},T^{(k)}].
\end{cases}
\end{equation}
where $C_{u}^{\downarrow,(k)}$ and $C_{u}^{\uparrow,(k)}$ are the rate of growth and decay, respectively. Further, $\Psi^{(k)}_{u}$ is the time at which LM training of FLU $u$ is completed, which is given in \eqref{app:eq:psi_u_k} in Appendix~\ref{app:optTransform}. We also demonstrate that our GP solver (i.e. Algorithm~\ref{alg:cent} in Appendix~\ref{app:cons:sudo}) generates a series of solutions that converge to Karush–Kuhn–Tucker (KKT) conditions of $\bm{\mathcal{P}}$ (see Proposition~\ref{propo:KKT} in Appendix~\ref{app:optTransform}). { Also, complexity analysis and feasibility of solving our optimization problem $\bm{\mathcal{P}}$ using CVXPY are presented in Appendix~\ref{app:complexity}.}
\begin{figure}[t]
\centering
\noindent\includegraphics[width=9cm]{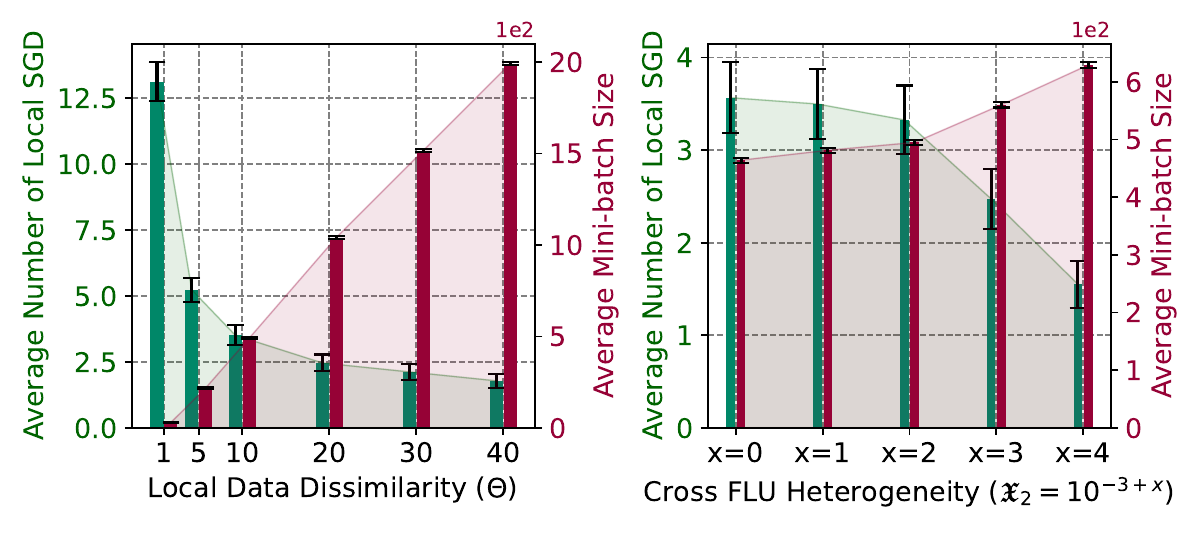} 
\vspace{-7.5mm}
\caption{Impact of $\Theta$ and $\mathfrak{X}_2$  on SGD iterations and mini-batch size of FLUs. Small values of $\Theta$ and $\mathfrak{X}_2$ lead to more local SGD iterations and a smaller mini-batch. Conversely, increasing $\Theta$ and $\mathfrak{X}_2$ prompts {\tt DCLM} to reduce SGD iterations and increase mini-batch size to avoid model bias.  { This aligns with the theoretical results discussed in \textit{Interpretation 1} in Sec.~\ref{sec:conv}.}}
\label{fig:zeta2_epochs}
\vspace{-1.5mm}
\end{figure}

In the following, we conduct numerical evaluations on the solution of $\bm{\mathcal{P}}$ obtained by our methodology.

\vspace{-1.0mm}
\section{Numerical Evaluation}\label{numerical_evaluation}
\noindent 
To conduct simulations, we use PyTorch \cite{paszke2019pytorch} and three Nvidia Tesla V100 GPUs, each equipped with 32 GB VRAM. The solution of $\bm{\mathcal{P}}$, derived in \eqref{opt:main}, is obtained using primal–dual interior-point method implemented in CVXPY \cite{diamond2016cvxpy}. We consider ML tasks on (i) MNIST \cite{9492755} dataset and (ii) CIFAR10 \cite{9492755} dataset. { The network configuration and the channel models used in the simulations are provided in Appendix \ref{app:network_settings}.}

\vspace{-1mm}
\subsection{Dynamic Network-Aware {\tt DCLM}: Ablation Study}
Examining $\boldsymbol{\mathcal{P}}$ directly is challenging due to the interdependencies between the optimization variables and the bound in~\eqref{eq:gen_conv}. We thus conduct an ablation study to systematically assess the isolated impacts of key optimization and scaling variables in $\boldsymbol{\mathcal{P}}$. The results in Secs. \ref{subsubsub:2}-\ref{sec:DCC} represent the averaged outcomes over $10$ Monte-Carlo iterations. To avoid repetition of results, in this section, we only consider MNIST dataset and use a 1200-kilobit ML model (convolutional neural network), later utilized in Sec.~\ref{sec:training_performance} for model training.


\subsubsection{Impacts of dynamic model drift}\label{subsubsub:2}
In the left plot of Fig.~\ref{fig:MD_omega_K}, we examine the influence of dynamic model drift {\small $\mathfrak{D}^{(k)}_{u}$} on the time that FLUs are idle. Here, we assume a constant model drift across all FLUs (i.e., $\mathfrak{D}^{(k)}_{u} = \mathfrak{D}, \forall u, k$). The figure shows that when model drift is high (e.g., $\mathfrak{D} = 0.1$), {\tt DCLM} reduces the idle time of FLUs in each global round to capture large changes in data distributions, as governed by term $(b)$ in \eqref{eq:gen_conv}. Conversely, lower model drifts (e.g., $\mathfrak{D} = 0.0005$) allow for longer idle times for FLUs.

\vspace{-.1mm}
\subsubsection{Impacts of number of global rounds}\label{subsubsub:4}
In the right plot of Fig.~\ref{fig:MD_omega_K}, we illustrate how an increase in the number of global rounds ($K$) reduces the ML convergence bound, aligning with the convergence guarantee of {\tt DCLM} from Theorem \ref{th:sufficient_conditions}. 

\begin{figure}[t]
\centering
\noindent\includegraphics[width=8cm]{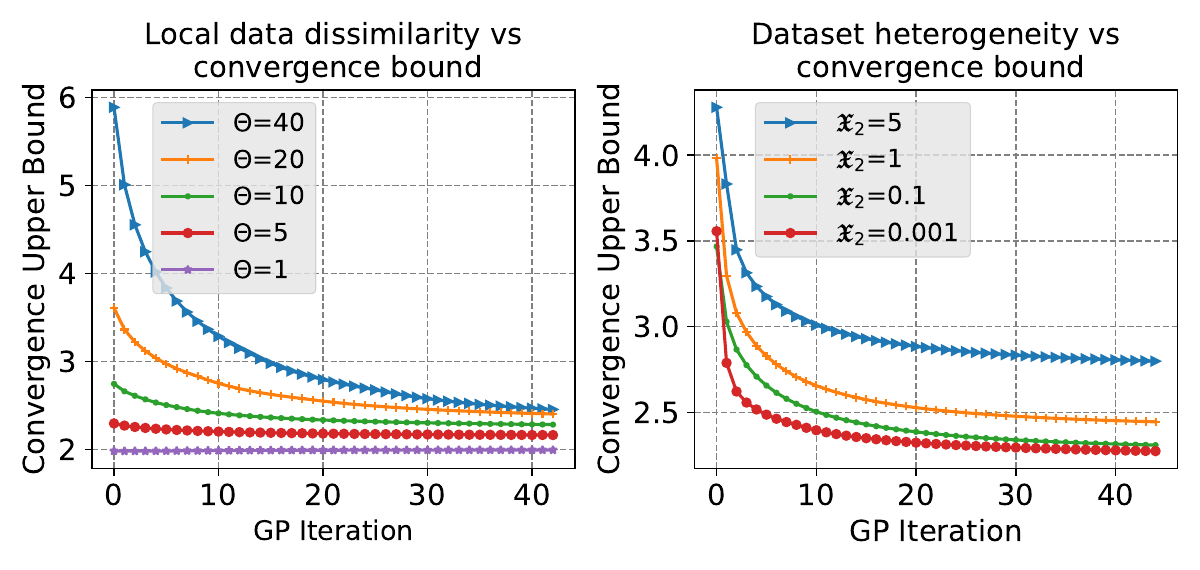} 
\vspace{-1.5mm}
\caption{Impact of $\Theta$ and $\mathfrak{X}_2$ on the ML convergence bound over the GP solver iterations. Larger values of $\Theta$ and $\mathfrak{X}_2$  are associated with slower convergence of the GP solver and an increased ML convergence upper bound. { This is because convergence bound \eqref{eq:gen_conv} is directly proportional to  $\Theta$ and $\mathfrak{X}_2$ as reflected in terms $(c), (d), (e), (f)$ and $(g)$ in the bound.} }
\label{fig:zeta2_epochs2}
\vspace{-1.1mm}
\end{figure}

\subsubsection{Local data dissimilarity and cross-FLU data heterogeneity}\label{subsubsub:3}
 In Fig. \ref{fig:zeta2_epochs}, we vary local data dissimilarity and cross-FLU data heterogeneity to demonstrate their impact on the number of SGD iterations and mini-batch size across FLUs. This figure shows that when local data dissimilarity of FLUs $\Theta$ and dataset heterogeneity across FLUs $\mathfrak{X}_2$ are small, {\tt DCLM} maximizes ML performance by having more local SGD iterations at FLUs and a smaller mini-batch. Also, as $\Theta$ and $\mathfrak{X}_2$ increase, {\tt DCLM} reduces the number of SGD iterations and increases the mini-batch size of FLUs to avoid local model bias. Further, the influence of $\mathfrak{X}_2$ and $\Theta$ on the ML convergence upper bound and the convergence behavior of our GP solver is depicted in Fig. \ref{fig:zeta2_epochs2}, showing that larger values of $\mathfrak{X}_2$ and $\Theta$ result in slower convergence of the GP solver and a larger convergence bound.

\vspace{-.1mm}
\subsubsection{Impacts of MAC scheduler}\label{subsubsub:5}
In the left plot of Fig. \ref{fig:scheduling_recruitment}, we demonstrate how  MAC scheduler can enhance the overall system performance: an increase in the number of FGTIs significantly reduces energy consumption of FLUs. This is because upon  increasing the number of FGTIs, fewer FLUs transfer their models simultaneously on a shared PRB, thereby reducing interference on communication channels, which, in turn, reduces energy consumption.

\subsubsection{Cross FLU data heterogeneity vs FLU recruitment}\label{subsubsub:6}
Considering the ML convergence sufficient condition $\bm{\mathfrak{C}^{(\Upsilon)}}$ in \eqref{suf:main:recruitment222}, cross-FLU data heterogeneity, captured via $\mathfrak{X}_1$, directly influences the maximum number of FLUs allowed to remain unrecruited. We show this in the right plot of Fig.~\ref{fig:scheduling_recruitment} by varying $\mathfrak{X}_1$ for different total numbers of FLUs in each O-RU (with $\mathcal{U}_b\in[10, 20,40]$). As observed, an increase in $\mathfrak{X}_1$ reduces the number of unrecruited FLUs, aligning with \eqref{suf:main:recruitment222}. Further, an increase in the total number of FLUs in each O-RU raises the maximum allowable number of unrecruited FLUs. { Additionally, in Appendix~\ref{app:fairness}, we study the fairness of recruiting FLUs as CHUs in terms of energy consumption.}

\begin{figure}[t!]
    \centering
    \begin{subfigure}[t]{0.23\textwidth}
       \noindent\includegraphics[height=1.26in,trim=5 5 5 5, clip]{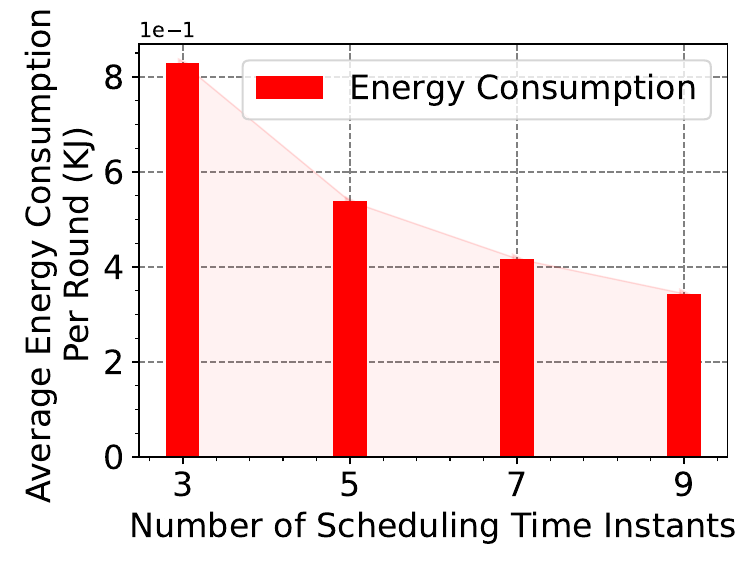} 
    \end{subfigure}%
    ~ 
    \begin{subfigure}[t]{0.23\textwidth}
        \centering
        \noindent\includegraphics[height=1.26in,trim=5 5 5 5, clip]{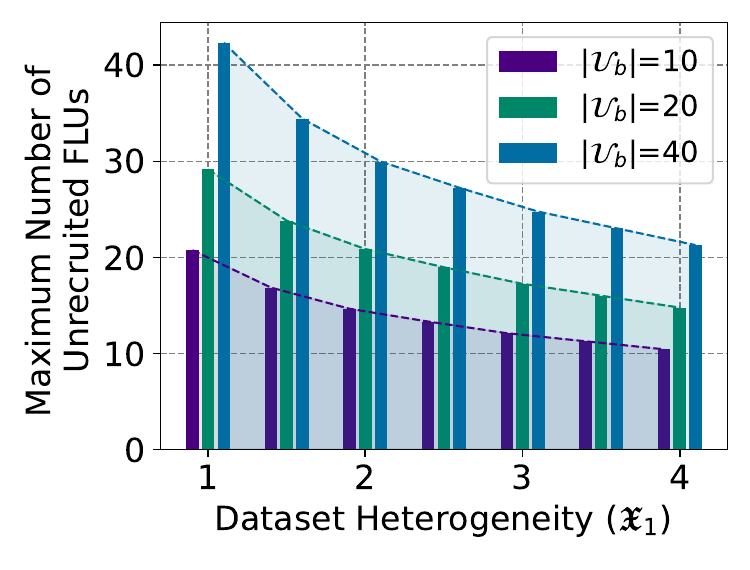} 
    \end{subfigure}
    \vspace{-1.5mm}
    \caption{Left plot: Increasing the number of FGTIs leads to a substantial reduction in average energy consumption of FLUs per global round { as highers number of FGTIs allows for scheduling FLUs at non-overlapping FGTIs.} Right plot: An increase in the number of FLUs in each O-RU ($|\mathcal{U}_b|$) allows for a higher number of unrecruited FLUs. Conversely, an increase in $\mathfrak{X}_1$ leads to a reduction in the number of unrecruited FLUs. { This is because larger values of $\mathfrak{X}_1$ (i.e., higher data heterogeneity) calls for a larger number of recruited FLUs as reflected in condition $\bm{\mathfrak{C}^{(\Upsilon)}}$ in Theorem~\ref{th:sufficient_conditions}.}}
    \label{fig:scheduling_recruitment}
\end{figure}
\subsubsection{Impacts of DCC on energy consumption}\label{sec:DCC}
{ To demonstrate the effectiveness of DCC, we evaluate {\tt DCLM} under different combinations of the numbers of CHUs and DPUs  while keeping the total number of FLUs constant at $30$ (i.e., six FLUs per O-RU).
Specifically, in Fig.~\ref{fig:ext:no_DPU}, we examine six FLU combinations per O-RU: (a) 1 CHU and 5 DPUs, (c) 2 CHUs and 4 DPUs, (d) 3 CHUs and 3 DPUs, (e) 4 CHUs and 2 DPUs, (f) 5 CHUs and 1 DPU, (g) 6 CHUs and 0 DPU.
The extreme case with 6 CHUs and no DPU (i.e., the right-most  bar) significantly increases energy consumption. This is primarily because of two reasons. First, without DPUs communicating over unlicensed PRBs, the licensed PRBs must be shared among more CHUs, resulting in increased traffic and interference over the uplink channels to O-RUs. Second, with the inclusion of DPUs, users located far from O-RUs can transmit their local models to nearby CHUs with lower-power D2D communications over unlicensed PRBs; however, without DPUs, these users must use more transmit power to transfer their local models to distant O-RUs. A similar result is observed in the extreme case with only 1 CHU and 5 DPUs, where all DPUs -- regardless of their closeness to the CHU -- must transmit their models to the selected CHU in D2D communication mode, which is not efficient upon having long distances between the CHU and DPUs. The best combination is when there are 3 CHUs and 3 DPUs, which efficiently distributes the communication burden over the licensed and unlicensed PRBs via taking the full advantage of the closeness of DPUs to CHUs and the proximity of CHUs to the O-RUs.}
\begin{figure}[!t]
\centering
\noindent\includegraphics[height=1.26in]{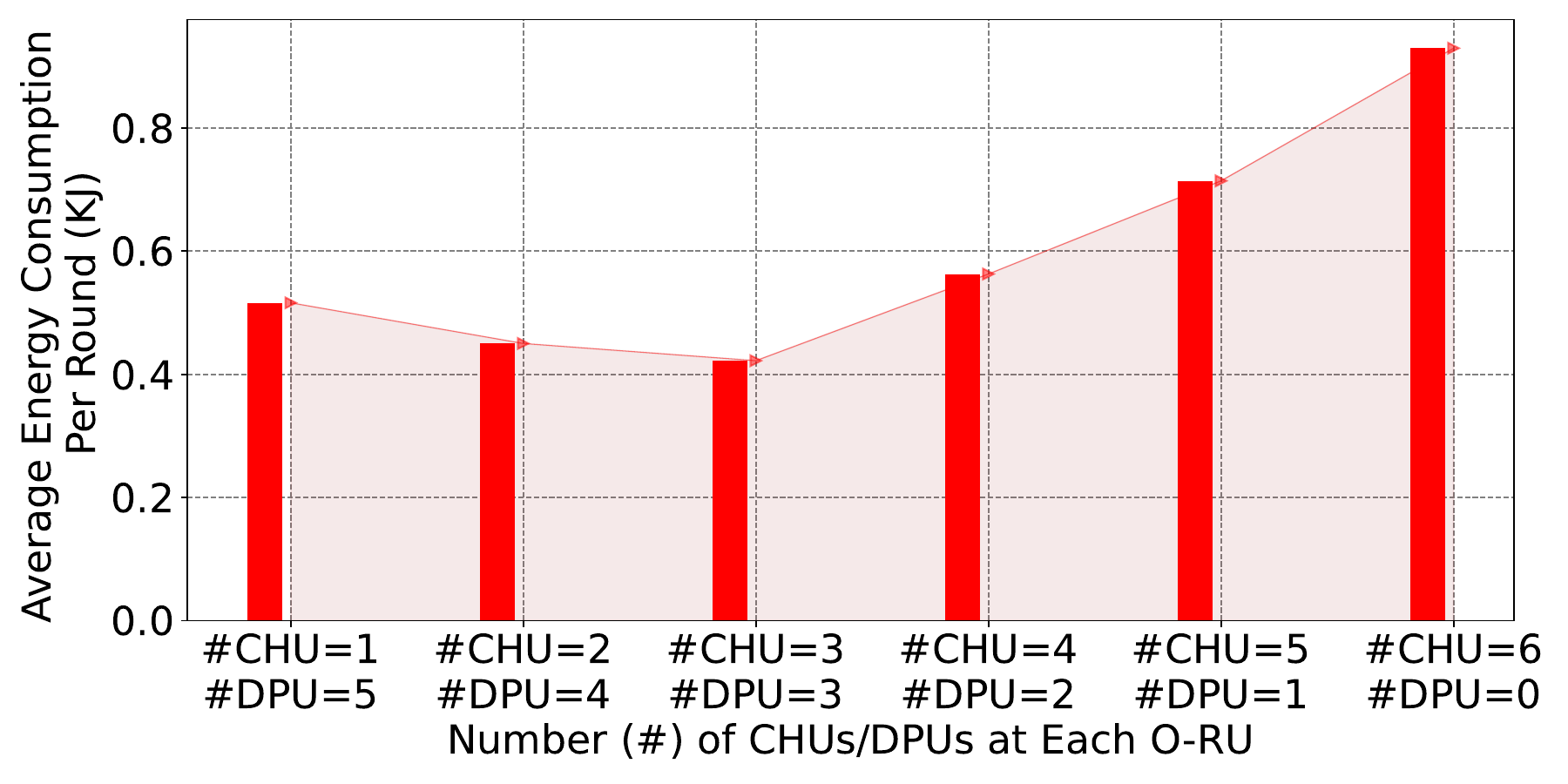} 
\vspace{-1mm}
\caption{ Effect of the CHU and DPU configuration on the energy consumption. The extreme case with 6 CHUs and no DPU leads to the highest energy consumption due to increased traffic and interference on licensed PRBs. The best configuration is having 3 CHUs and 3 DPUs, which effectively distributes the communication load to licensed and unlicensed PRBs through DCC.}
\label{fig:ext:no_DPU}
\vspace{-1.1mm}
\end{figure}

\vspace{-1mm}
\subsection{{\tt DCLM} Model Training Performance}\label{sec:training_performance}
\subsubsection{Simulation setup} We compare {\tt DCLM} with three baselines: FL-EOCD~\cite{10061474}, DISCO~\cite{guo2021dynamic}, and AHFLP~\cite{su2025joint}. FL-EOCD is an optimization theory-based approach that minimizes energy consumption through PRB allocation and a D2D communication model but lacks a MAC scheduler and multi-channel ML model dispersion. DISCO employs a sequential wireless resource scheduling policy by establishing a queue using the Lyapunov optimization framework to determine the order in which users transmit their LMs to the server (the  available wireless spectrum is allocated to the users in the queue, one at a time).  AHFLP is also an optimization theory-based approach and adopts a hierarchical aggregation strategy, applying different aggregation frequencies $l_1$ and $l_2$ at the edge and cloud levels, respectively. It performs joint optimization of $l_1$, $l_2$, bandwidth allocation, and CPU frequency of users, to minimize FL loss under time and energy constraints, but does not support D2D communication. We note that FL-EOCD, DISCO, and AHFLP are not directly comparable to our method: FL-EOCD and DISCO do not account for communication interference and consider only a single O-RU, while AHFLP supports multiple O-RUs but neglects interference. To ensure a fair comparison, we evaluate {\tt DCLM} against extended versions of baselines, adapted to support communication through multiple O-RUs in the presence of interference. Also, to demonstrate the effectiveness of the dedicated FL MAC scheduler of {\tt DCLM}, we consider two different variants of {\tt DCLM}, including (i) {\tt DCLM(1)}:  resource allocation is performed at only one FGTI at the beginning of each global round, (ii) {\tt DCLM(7)}: resource allocation is performed at $7$ FGTIs during each global round.

\subsubsection{Energy consumption}
{ Fig.~\ref{fig:energy_consumption} shows the energy consumption of {\tt DCLM(7)}, {\tt DCLM(1)}, FL-EOCD, and DISCO for MNIST (left plot) and CIFAR10 (right plot). As shown, both {\tt DCLM(1)} and {\tt DCLM(7)} consume less energy compared to FL-EOCD and DISCO. This is because the {\tt DCLM} variants effectively minimize interference, thereby reducing energy consumption, through DCC multi-channel ML model dispersion over licensed and unlicensed PRBs. Additionally, {\tt DCLM(7)} has the lowest energy consumption since, in addition to DCC, it uses a dedicated MAC scheduler to schedule FLUs at multiple non-overlapping FGTIs, further reducing interference.}
\begin{figure}[t!]
    \centering
    \begin{subfigure}[t]{0.235\textwidth}
       \noindent\includegraphics[height=1.26in,trim=5 5 5 5, clip]{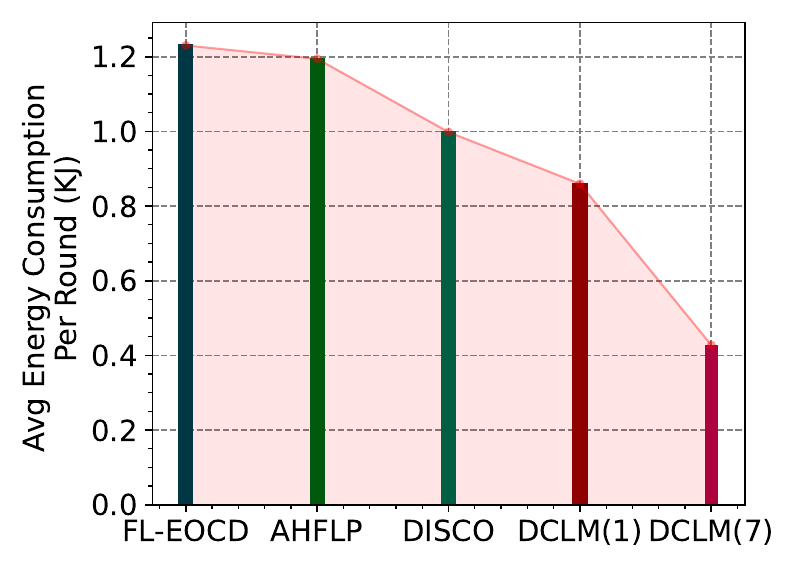} 
    \end{subfigure}%
    ~ 
    \begin{subfigure}[t]{0.235\textwidth}
        \centering
        \noindent\includegraphics[height=1.26in,trim=5 5 5 5, clip]{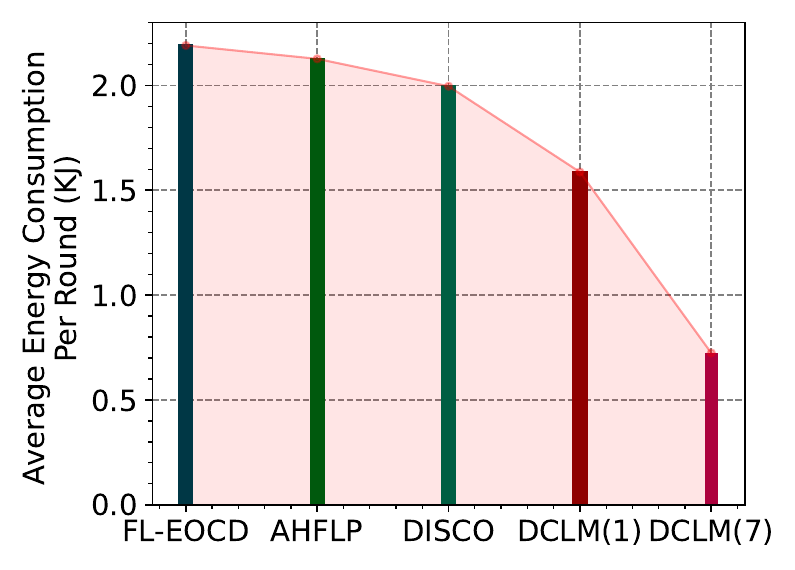} 
    \end{subfigure}
    \vspace{-1mm}
    \caption{ Energy consumption of {\tt DCLM(7)}, {\tt DCLM(1)}, FL-EOCD, and DISCO for MNIST (left) and CIFAR10 (right). {\tt DCLM} variants consume less energy by minimizing interference through DCC model dispersion, with {\tt DCLM(7)} achieving the lowest consumption due to its multi-time MAC scheduler.}
    \label{fig:energy_consumption}
    \vspace{-1.5mm}
\end{figure}

\subsubsection{FL model accuracy}
{ Tables~\ref{tbl:training_proof_mnist}\&\ref{tbl:training_proof_CFIAR10} present the average ML test accuracy of {\tt DCLM(7)}, {\tt DCLM(1)}, FL-EOCD, and DISCO on MNIST and CIFAR10 datasets after $40$ global rounds. We employ a Dirichlet distribution~\cite{li2021model} with varying concentration parameters $\alpha^{\mathsf{D}} \in \{5, 0.5, 0.05\}$ to partition the MNIST and CIFAR-10 datasets among FLUs, allowing us to study how different levels of data heterogeneity affect {\tt DCLM}'s performance. Here, $\alpha^{\mathsf{D}}{=}0.05$ introduces extreme data heterogeneity~\cite{li2021model}. Fig.~\ref{fig:ext:het} in Appendix~\ref{app:datahetrogenity} visualizes class distributions under each $\alpha^{\mathsf{D}}$ value.
In this simulation, following the first realization of FLUs' datasets, in the subsequent global rounds, FLUs obtain a new dataset of size dictated by the piece-wise differential equation in~\eqref{app:eq:dynamic_dataset_size_ODE_main}.
The second column of both tables shows the average accuracy and standard deviation across 20 runs of {\tt DCLM(7)}. For {\tt DCLM(1)}, AHFLP, FL-EOCD, and DISCO, each column provides: (i) the average accuracy and standard deviation and (ii) the \textit{P-value from the Wilcoxon test}\footnote{The Wilcoxon test is a non-parametric statistical test used to compare paired samples and assess whether their population distributions differ significantly~\cite{10750029}. In this paper, it evaluates the significance of differences in ML test accuracy between {\tt DCLM(7)} and the other three algorithms.} comparing {\tt DCLM(7)} with the respective algorithm at a significance level of 0.05~\cite{10750029}. Referring to Tables~\ref{tbl:training_proof_mnist}\&\ref{tbl:training_proof_CFIAR10}, {\tt DCLM(7)} statistically outperforms  {\tt DCLM(1)}, AHFLP, FL-EOCD, and DISCO for the three levels of data heterogeneity on both MNIST and CIFAR10. This is because, in {\tt DCLM(7)}, FLUs consume significantly less energy and can participate in more global rounds. For instance, AHFLP and FL-EOCD consume twice as much energy as {\tt DCLM(7)} (see Fig.~\ref{fig:energy_consumption}). This highlights the effectiveness of (i) multi-channel model transmission over licensed/unlicensed PRBs using the DCC, and (ii) scheduling FLUs at non-overlapping FGTIs through the dedicated MAC scheduler of {\tt DCLM}.}

\setlength{\tabcolsep}{2.6pt}
\begin{table}[t]
 
\centering
\scriptsize
\caption{  The table presents the average ML test accuracy and standard deviation over 20 independent runs on \textbf{MNIST} for {\tt DCLM(7)}, {\tt DCLM(1)}, FL-EOCD, and DISCO. Additionally, it includes the Wilcoxon test comparing {\tt DCLM(7)} with the other three algorithms. An algorithm with a mean test accuracy lower than {\tt DCLM(7)} and $p \le 0.05$ is statistically worse than {\tt DCLM(7)}.}
\vspace{-1.5mm}
\begin{tabular}{|c||c|c|c|c|c|}
\hline
 {\cellcolor[HTML]{c7e8ff} $\alpha^{\mathsf{D}}$ }  &{\cellcolor[HTML]{c7e8ff}DCLM(7)} & \cellcolor[HTML]{c7e8ff}DCLM(1)& \cellcolor[HTML]{c7e8ff}AHFLP~\cite{su2025joint}& \cellcolor[HTML]{c7e8ff}FL-EOCD~\cite{10061474} & \cellcolor[HTML]{c7e8ff}DISCO~\cite{guo2021dynamic} \\ \hline
$5$ & \makecell{\bm{$97\%$}\\\bm{${\pm}0.0013$}} & \makecell{$95\%{\pm}0.0018$\\$(p=0.0019)$}& \makecell{$94\%{\pm}0.0021$\\$(p=0.0019)$}&\makecell{$94\%{\pm}0.0055$\\$(p=0.0019)$} & \makecell{$94\%{\pm}0.0025$\\$(p=0.0019)$} \\ \hline
$0.5$ & \makecell{$\bm{96\%}$\\$\bm{{\pm}0.0018}$}&\makecell{$94\%{\pm}0.0035$\\$(p=0.0019)$}&\makecell{$93\%{\pm}0.0012$\\$(p=0.0019)$}  & \makecell{$92\%{\pm}0.0037$\\$(p=0.0019)$}& \makecell{$93\%{\pm}0.0052$\\$(p=0.0019)$}  \\ \hline
$0.05$ & \makecell{$\bm{90\%}$\\$\bm{{\pm}0.022}$}&\makecell{$85\%{\pm}0.031$\\$(p=0.0488)$}&\makecell{$81\%{\pm}0.062$\\$(p=0.0253)$}   &\makecell{$81\%{\pm}0.0325$\\$(p=0.0018)$} & \makecell{$82\%{\pm}0.0425$\\$(p=0.005)$}\\
\hline  
\end{tabular}
\label{tbl:training_proof_mnist}
\vspace{-1mm}
\end{table}
\begin{table}[t]
\centering
\scriptsize
\caption{ The table presents the average ML test accuracy and standard deviation over 20 independent runs on \textbf{CIFAR10} for {\tt DCLM(7)}, {\tt DCLM(1)}, FL-EOCD, and DISCO. Additionally, it includes the Wilcoxon test comparing {\tt DCLM(7)} with the other three algorithms. An algorithm with a mean test accuracy lower than {\tt DCLM(7)} and $p \le 0.05$ is statistically worse than {\tt DCLM(7)}. }
\vspace{-1.5mm}
\begin{tabular}{|c||c|c|c|c|c|}
\hline
 {\cellcolor[HTML]{c7e8ff} } $\alpha^{\mathsf{D}}$&{\cellcolor[HTML]{c7e8ff}DCLM(7)} & \cellcolor[HTML]{c7e8ff}DCLM(1)& \cellcolor[HTML]{c7e8ff}AHFLP~\cite{su2025joint}& \cellcolor[HTML]{c7e8ff}FL-EOCD~\cite{10061474} & \cellcolor[HTML]{c7e8ff}DISCO~\cite{guo2021dynamic} \\ \hline
$5$ & \makecell{\bm{$58\%$}\\\bm{${\pm}0.0063$}} & \makecell{$43\%{\pm}0.0098$\\$(p=0.0019)$}& \makecell{$35\%{\pm}0.0028$\\$(p=0.0019)$}&\makecell{$34\%{\pm}0.0075$\\$(p=0.0019)$} & \makecell{$37\%{\pm}0.0054$\\$(p=0.0019)$} \\ \hline
$0.5$ & \makecell{$\bm{54\%}$\\$\bm{{\pm}0.0088}$}&\makecell{$40\%{\pm}0.0085$\\$(p=0.0012)$}&\makecell{$27\%{\pm}0.0015$\\$(p=0.0024)$}  & \makecell{$24\%{\pm}0.0072$\\$(p=0.0003)$}& \makecell{$32\%{\pm}0.0097$\\$(p=0.0019)$}  \\ \hline
$0.05$ & \makecell{$\bm{31\%}$\\$\bm{{\pm}0.003}$}&\makecell{$23\%{\pm}0.016$\\$(p=0.0132)$}&\makecell{$17\%{\pm}0.036$\\$(p=0.0012)$}   &\makecell{$16\%{\pm}0.0225$\\$(p=0.0020)$} & \makecell{$20\%{\pm}0.0150$\\$(p=0.0014)$}  \\ \hline
\end{tabular}
\label{tbl:training_proof_CFIAR10}
\vspace{-1.5mm}
\end{table}

\section{Conclusion and Future Work}\label{conclusion}
\noindent We introduced {\tt DCLM}, incorporating dynamic channels and time-varying FLUs' datasets into FL. {\tt DCLM} introduces dynamic ML and wireless control decisions, which are optimized via introducing \textit{dedicated FL MAC schedulers} over O-RAN and efficient implementation of DCC mode. This was carried out through formulating a non-convex optimization problem and proposing a generalizable solution methodology. Our theoretical analysis and analytical optimization solver form the baseline for studying dynamic resource allocation and MAC scheduler for FL over O-RAN. Specifically, with AI/ML techniques integrated into O-RAN intelligent controllers (i.e., RICs), we expect them to be compared to our analytical approach. 
Furthermore, a comprehensive set of future works were presented in the paper and Appendix~\ref{app:future_research}.


 

\bibliographystyle{IEEEtran}
\bibliography{ref}

\vspace{-12.15mm}
\begin{IEEEbiographynophoto}{Payam Abdisarabshali}[S] is currently a PhD student at EE Department at the University at Buffalo--SUNY, USA.
\end{IEEEbiographynophoto}
\vspace{-12.15mm}
\begin{IEEEbiographynophoto}{Kwang Taik Kim}[SM] is currently a Research Assistant Professor with the Elmore Family School of ECE at Purdue University, USA.
\end{IEEEbiographynophoto}
\vspace{-12.15mm}
\begin{IEEEbiographynophoto}{Michael Langberg}[F] is currently a Professor with the EE Department at the University at Buffalo--SUNY, USA.
\end{IEEEbiographynophoto}
\vspace{-12.15mm}
\begin{IEEEbiographynophoto}{Weifeng Su}[F] is currently a Professor with the EE Department at the University at Buffalo--SUNY, USA.
\end{IEEEbiographynophoto}
\vspace{-12.15mm}
\begin{IEEEbiographynophoto}{Seyyedali Hosseinalipour}[SM] is currently an Assistant Professor with the EE Department at the University at Buffalo--SUNY, USA.
\end{IEEEbiographynophoto}


\vfill
\begingroup
\onecolumn

\appendices
\newpage
\section{Notations Table}\label{app:notaions}

\setlength{\tabcolsep}{4pt}
\begin{table}[h]
\small
\centering
\caption{ { Major notations used in the paper.}}
\label{table:notations}
{ 
\begin{tabular}{|c|p{6.4cm}||c|p{6.4cm}|}
\hline
\rowcolor[HTML]{848884} 
\textbf{Notation} & \textbf{Description} & \textbf{Notation} & \textbf{Description}  \\ 
\hline
\hline
{ $\Omega$} & { Set of O-RUs}  & $b$ & O-RU index \\ \hline
 \rowcolor[HTML]{D3D3D3} { $r$} & { PRB index} &{ $u$} & { FLU index} \\ \hline
{ $k$} & { Global round index} & $\mathcal{R}_{b}$ & Set of licensed PRBs of O-RU $b$ \\ \hline
\rowcolor[HTML]{D3D3D3}$\overline{\mathcal{R}}_{b}$ & Set of unlicensed PRBs of O-RU $b$ & { $P^{\mathsf{max}}_{u}$} & { Maximum transmit power of FLU $u$} \\ \hline
{$P^{\mathsf{max}}_b$} & { Maximum downlink transmit power of O-RU b} & { $f^{(k)}_{u}$} & { Available CPU frequency of FLU $u$} \\ \hline
\rowcolor[HTML]{D3D3D3}$\tau_{b}^{\downarrow,{(k)}}$ & Broadcast (downlink) latency of O-RU $b$ & {  $B$} & { Bandwidth of each licensed PRB} \\ \hline
{ $\overline{B}$} & { Bandwidth of each unlicensed PRB} & { $\gamma_1$} & { Numerologies of licensed PRBs} \\ \hline
\rowcolor[HTML]{D3D3D3} { $\gamma_2$} & { Numerologies of unlicensed PRBs} & $\mathcal{U}_b$ & Set of FLUs of O-RU $b$ \\ \hline
$\tau_{u}^{\mathsf{LC},{(k)}}$ & Local computation/training time of FLU $u$ & $\mathcal{K}$ & Set of global training rounds \\ \hline
\rowcolor[HTML]{D3D3D3}$T^{(k)}$ & Completion time of global round $k$ & { $T^{\mathsf{max}}$} & { Maximum allowable latency per global round} \\ \hline
{ $\Psi(\cdot)$} & { Occurrence time of a $\mathscr{D}$-Event} & $\overline{\tau}_{u}^{\uparrow,{(k)}}$ & LM upload latency of DPU $u$ \\ \hline
\rowcolor[HTML]{D3D3D3}$\tau_{u}^{\mathsf{W},{(k)}}$ & Waiting time of CHU $u$ & $\tau_{u}^{\uparrow,{(k)}}$ & LM upload latency of CHU $u$ \\ \hline
$t_x$ & Fine granular time instant (FGTI) & $\mathcal{N}^{(k)}$ & Set of the indices of FGTIs \\ \hline
\rowcolor[HTML]{D3D3D3}$\widehat{\lambda}_{u}^{(k)}$ & Recruitment status of FLU $u$ & {$\lambda_{u}^{(k)}$} & { Recruitment decision variable of CHU $u$} \\ \hline
{$\overline{\lambda}_{u}^{(k)}$} & { Recruitment decision variable of DPU $u$} & $\bm{\omega}^{(k)}$ & Global model parameters \\ \hline
\rowcolor[HTML]{D3D3D3}$\Upsilon_{u}(t)$ & Collected dataset of FLU $u$ & { $G^{(k)}_u(t)$} & { Rate of change in dataset size for FLU $u$} \\ \hline
{ $T^{\mathsf{train},(k)}_u$} & { Local training time window of FLU $u$} & { $T^{\mathsf{idle},(k)}_u$} & { Idle time of FLU $u$} \\ \hline
$\Upsilon(t)$ & Cumulative dataset of all FLUs & $\Upsilon^{\mathsf{s}}(t)$ & Cumulative dataset of recruited FLUs \\ \hline
\rowcolor[HTML]{D3D3D3}$\mathfrak{L}_{u}^{(k)}(\bm{\omega})$ & Local loss of FLU $u$ & $\mathfrak{L}^{(k)}(\bm{\omega})$ & Global loss \\ \hline
$\mathfrak{D}_u(t)$ & Model drift value of FLU $u$ & {$\widetilde{\tau}_{b}^{\downarrow,{(k)}}$} & { Completion time of model broadcasting at O-RU $b$} \\ \hline
\rowcolor[HTML]{D3D3D3}$\ell^{(k)}_{u}$ & Number of SGD iterations of FLU $u$ & ${B}_{u}(\widetilde{\tau}_{b}^{\downarrow,{(k)}})$ & Mini-batch size of FLU $u$ \\ \hline
$\eta_{_k}$ & Step size at global round $k$ & $\widetilde{\nabla {\mathfrak{L}}}_{u}^{(k)}$ & Cumulative gradient of FLU $u$ \\ \hline
\rowcolor[HTML]{D3D3D3}$\widetilde{\nabla {\mathfrak{L}}}^{(k)}$ & Global gradient vector & {$\beta^{\downarrow}_{b}\hspace{-0.3mm}(t_{x})$} & { Broadcast scheduling decision variable of O-RU $b$} \\ \hline
{$\overline{\beta}^{\uparrow}_u(t_x)$} & { Uplink scheduling decision variable of DPU $u$} & {$\beta^{\uparrow}_u(t_{x})$} &{ Uplink scheduling decision variable of CHU $u$} \\ \hline
\rowcolor[HTML]{D3D3D3}{$\overline{\varrho}_{u,u',r}(t_x)$} &{ Allocation decision of assigning unlicensed PRB $r$ to DPU $u$ to transfer a fraction of its LM to CHU $u'$} & {$\varrho_{u,r}\hspace{-0.3mm}(t_{x})$} & {Allocation decision to assign licensed PRB $r$ to CHU $u$ to transmit a fraction of its LM to its corresponding O-RU} \\ \hline
{$\rho^{\downarrow}_{b,r}\hspace{-0.5mm}(t_{x})$} & {Power allocation decision of PRB $r$ at O-RU $b$} & {$\rho^{\uparrow}_{u,r}\hspace{-0.3mm}(t_{x})$} & {Power allocation decision of PRB $r$  at CHU $u$} \\ \hline
\rowcolor[HTML]{D3D3D3}{$\overline{\rho}^{\uparrow}_{u,r}(t_x)$} &{ Power allocation decision of PRB $r$ at DPU $u$} & {$\varphi_{b,r}\hspace{-0.3mm}(t_{x})$} & {Dispatching decision of O-RU $b$ to broadcast a fraction of the GM to FLUs over PRB $r$ } \\ \hline
{$\overline{\psi}_{u,u',r}(t_x)$} & { Dispersion decision of DPU $u$ to transmit a fraction of its LM to CHU $u'$ over PRB $r$} & {$\psi_{u,r}\hspace{-0.3mm}(t_{x})$} & { Dispatching decision of CHU $u$ to transmit a fraction of its LM to its corresponding O-RU over PRB $r$} \\ \hline
{ $\mathfrak{R}^{\downarrow}_{b,u,r}\hspace{-0.5mm}(t_{x})$} & { Maximum downlink transmission rate from O-RU $b$ to FLU $u$ over PRB $r$} & { $\mathfrak{R}^{\downarrow}_{b,r}\hspace{-0.5mm}(t_{x})$} & { Broadcast transmission rate of O-RU $b$ over PRB $r$} \\ \hline
\rowcolor[HTML]{D3D3D3} { $\mathfrak{R}^{\uparrow}_{u,r}\hspace{-0.3mm}(t_{x})$} & { Uplink transmission rate from CHU $u$ to its corresponding O-RU over PRB $r$} & { $\overline{\mathfrak{R}}^{\uparrow}_{u,u',r}(\hspace{-0.5mm}t_x\hspace{-0.5mm})$} & { Uplink transmission rate from DPU $u$ to CHU $u'$ over PRB $r$} \\ \hline
{ $|\xi_{b,u}\hspace{-0.3mm}(t_{x})|^2$} & { Channel gain between FLU $u$ and O-RU $b$}& { $|\xi_{u,u'}\hspace{-0.3mm}(t_{x})|^2$} & { Channel gain between DPU $u$ and CHU $u'$} \\ \hline
\rowcolor[HTML]{D3D3D3}{ $\alpha_{\bm{\omega}} M$} & { Global model size} & { $E_{b}^{\downarrow,{(k)}}$} & { Energy consumption of model broadcasting at O-RU $b$} \\ \hline
{ $E_{u}^{\mathsf{LC},{(k)}}$} & { Energy consumption of local training at FLU $u$} & { $\overline{E}_{u}^{\uparrow,(k)}$} & { Energy consumption of model dispersion at DPU $u$} \\ \hline
\rowcolor[HTML]{D3D3D3}{ $E_{u}^{\uparrow,(k)}$} & { Energy consumption of model dispatching at CHU $u$} & { $E^{\mathsf{max}}_{u}$} & {Maximum battery capacity of FLU $u$} \\ \hline
\end{tabular}%
}
\end{table}

\begin{table}[h]
\centering
\caption{ {  Acronyms used in the paper.}}
\small
\label{table:acronyms}
{  
\begin{tabular}{|c|c||c|c|}
\hline
\rowcolor[HTML]{848884} 
\textbf{Acronym} & \textbf{Description} & \textbf{Acronym} & \textbf{Description}  \\ 
\hline
\hline
\rowcolor[HTML]{D3D3D3} FL &  Federated learning & {\tt DCLM} & Dynamic cooperative FL with dedicated MAC schedulers  \\ \hline
ML &   Machine learning &  O-RAN &   Open radio access network\\ \hline
\rowcolor[HTML]{D3D3D3} D2D &  device-to-device &  SGD &  stochastic gradient descent\\ \hline
LM &    Local model &  GM &  Global model\\ \hline
\rowcolor[HTML]{D3D3D3} BS &   Base station &  O-RU &  Radio unit\\ \hline
O-DU &  Distributed unit &  O-CU & Central unit\\ \hline
\rowcolor[HTML]{D3D3D3} FLU &    Federated learning user&  CHU&  Communication head FLU \\ \hline
DPU&  Deprived FLU &  DCC &  Dispersed cooperative communication\\ \hline
\rowcolor[HTML]{D3D3D3} FGTI &  Fine granular time instant & PRB  &  Physical resource block  \\ \hline
CT & Communication task &  Non-RT RIC &  Non real-time RAN intelligent controller \\ \hline
\rowcolor[HTML]{D3D3D3} Near-RT RIC  &  Near real-time RAN intelligent controller &  VES &  Virtual event streaming \\ \hline
FL-VNO &  FL virtual network operator &  GV & Gradient vector \\ \hline
\rowcolor[HTML]{D3D3D3} SD &  Scheduling decision &  SDC & Scheduling decision constraint \\ \hline
CNR&  Continuous non-convex representation &  R2F & O-RU to FLU\\ \hline
\rowcolor[HTML]{D3D3D3} C2R&  CHU to O-RU &  D2C & DPU to CHU \\ \hline
SINR &  Signal-to-interference-plus-noise-ratio &  SP & Signomial programming \\ \hline
\rowcolor[HTML]{D3D3D3} GP &  Geometric programming &  KKT & Karush–Kuhn–Tucker \\ \hline
\end{tabular}%
}
\end{table}
\newpage
\section{Future/Open Research Directions}\label{app:future_research}
\begin{remark}[Multiple FL Services and Concurrent Task/Service Processing] \label{remark:non_FLS}
    Our model can be easily extended to cover coexistence of multiple FLSs and other types of services, such as URLLC and mMTC, studying which is left as future work.
\end{remark}
\begin{remark}[FL Service Competition and Unbalanced Execution]\label{remark:multi_FLS}
    This paper builds the theoretical pillars for an unexplored research direction on modeling the coexistence of multiple FLSs in the system with conflicts of interest, which have to compete with each other to recruit FLUs and acquire network resources (e.g., wireless spectrum) from different O-RUs~\cite{abdisarabshali2023synergies}. This situation may further lead to overcrowdedness in specific O-RUs that contain low-cost (e.g., in terms of recruitment cost) and high-quality (e.g., in terms of the quality of taken images for image-oriented FLSs) FLUs~\cite{abdisarabshali2023synergies}. Further modeling and investigations of such scenarios are left as future work.
\end{remark}
\begin{remark}[Stochastic System Analysis]\label{remark:stochastic1}
    If either users' locations or their computation capabilities are stochastic, the values of $\Psi(\mathscr{D}_{u}^{\downarrow,(k)})$, $\Psi(\mathscr{D}_{u}^{\ndownarrow,(k)})$, $\Psi(\overline{\mathscr{D}}_{u}^{\uparrow,(k)})$, $\Psi(\overline{\mathscr{D}}_{u}^{\nuparrow,(k)})$, $\Psi(\mathscr{D}_{u}^{\uparrow,(k)})$, and $\Psi(\mathscr{D}_{u}^{\nuparrow,(k)})$ will become stochastic, which calls for stochastic dynamic optimal control strategies studying of which are left as future work. 
\end{remark}
\begin{remark}[Non-Deterministic MAC Scheduler Design]\label{remark:stochastic2}
   As mentioned in Sec \ref{sec:system_dynamic}, for deterministic systems, the resource requirements (i.e., PRBs and transmit power requirements) of FLS slices are precisely known as the channel conditions change at known discrete-time instants, which can be accomplished through the channel quality estimation procedure commonly used in contemporary cellular systems. This enables both provisioning wireless resources (i.e., RAN slice scaling) to FLS slices and allocating these resources by MAC schedulers of each slice to their users according to the users' exact requirements at each FGTI. However, if the system is stochastic (e.g., abrupt environment changes, stochastic arrival/departure of FLUs, and stochastic availability of FLUs' computation resources) the future state of the system becomes uncertain, and the resource requirements of the slices cannot be accurately determined. In such cases, resource allocation by MAC schedulers and slice scaling must be performed at different time scales. This necessitates a coarse granular time scale for slice scaling operation comprising multiple FGTIs. This situation requires stochastic analysis and ML-assisted methods to predict the future state of the system at predetermined time instants, studying which is a promising research direction and left for future work. 
\end{remark}
\begin{remark}[Considering More Comprehensive O-RAN Features in the Design]\label{remark:NFV}
In general, an O-RAN slice consists of (i) a set of PRBs, (ii) transmit power, and (iii) a set of virtual network functions (VNFs) executed on computation resources of O-DUs and O-CUs to process the traffic flow of PRBs. In this paper, we focus on a scenario in which an FLS slice consists of unlicensed/licensed PRBs and transmit power. Considering scenarios with VNFs are left for future work. In particular, designing of VNFs will require mathematical modeling of the traffic flow over the fronthaul and midhaul of the network.
\end{remark}

\begin{remark}[Convergence Acceleration under D2D Data Transmission]\label{remark:D2D}
In situations where privacy concerns are not paramount for data collected at the FLUs (such as visual data of traffic signs gathered by smart cars), there is potential for data transmission between FLUs over O-RAN via D2D communications. In such cases, optimizing data transmission is essential to minimize data heterogeneity across FLUs. This necessitates the {\tt DCML} MAC scheduler to handle data transmission while considering system dynamics alongside ML model parameter transmission, requiring dynamic control decision variables. Exploring these scenarios is a subject for future research.
\end{remark}

\begin{remark}[Convergence Acceleration under D2D Model Consensus]\label{remark:D2S}
{ Building on Remark~\ref{remark:D2D}, we can also consider the possibility of model censuses among DPUs through D2D communications. In this framework, DPUs can create localized networks and conduct local consensus (e.g., through distributed averaging mechanisms used in decentralized FL literature) to de-bias their local models, and then only a small fraction of them engage in model transmission to CHUs. Further, CHUs can also create such localized networks and engage in D2D-assisted model consensus, after which only a small fraction of CHUs will engage in uplink transmissions to the O-RU. Detailed modeling and formulation of these aspects are deferred to future work.}
\end{remark}

\begin{remark}[Asynchronous Model Training]\label{remark:Async}
In this study, our focus is on synchronized FL training, the most widely recognized structure for ML model training in FL. In this setup, the server waits for models from a large subset of FLUs before aggregating them to form a global model. However, one can envision a more flexible architecture where the server updates the global model more rapidly. For instance, upon receiving each model from every FLU, it could immediately update the global model. Such update rules shift the focus to asynchronous FL over O-RAN requiring redesigning dedicated FL orchestrator system over O-RAN slices, a topic left for future investigation.
\end{remark}


\begin{remark}[Distributed Multi-Modal Learning]\label{remark:multi-modal}
   In this paper, similar to the prevailing approach in FL research, our focus was on scenarios where FLUs possess data in a single modality. However, FLUs can gather data across multiple modalities (e.g., video, image, and audio) in dynamic environments, necessitating appropriate scheduling strategies and data processing load balancing. Exploring such scenarios will extend the design to the domain of multi-modal FL over O-RAN, a topic that remains largely unexplored. Notably, the data modality collected by users can influence recruitment decisions and network operations. Investigating these scenarios presents intriguing avenues for future research.
\end{remark}
\newpage
\section{Network Characteristics Utilized for the Simulations}\label{app:network_settings}
We use the following network characteristics for our simulations, parameter of which are summarized in Table. \ref{tab:sim_network_params}.\\
{ \textbf{Stochastic channel model and user mobility:}
The channel between FLU $u$ and O-RU $b$ during FGTI $t_x$, referred to by $\xi_{b,u}\hspace{-0.3mm}(t_{x})$, used in \eqref{eq:SINR_broadcast} and \eqref{eq:SINR_uplink}, is assumed to experience the following evolution over time

\begin{equation}\label{eq:channel}
    \xi_{b,u}\hspace{-0.3mm}(t_{x}) = \sqrt{\beta_{b,u}(t_x)}\times h_{b,u}(t_x),
\end{equation}
where $h_{b,u}(t_x) = \mu_u h_{b,u}(t_{x-1}) + \sqrt{(1-\mu_u^2)} \times n_{b,u}(t_x)$ represents the small-scale fading factor modeled as a first-order time-varying Gauss-Markov process~\cite{10330597}. The perturbation terms {$n_{b,u}(t_x)$} over various FGTIs $t_x$ are zero-mean, unit-variance complex Gaussian random variables that are i.i.d. over $u$, $b$, and $t_x$, i.e., they follow $CN(0,1)$ distribution. 
At time $t_0$, we assume $h_{b,u}(0) \sim CN(0,1)$ and independent of $n_{b,u}(t_1)$. The correlation coefficient $\mu_u$ for FLU $u$ is given by $\mu_u = J_0\left(2\pi\frac{v_u}{c}f_cT_e\right)$, where $J_0(\cdot)$ is the Bessel function of the first kind of order zero, $v_k$ is the velocity of FLU $u$, $f_c$ is the carrier frequency, $T_e=t_x-t_{x-1}$ is the channel instantiation interval, and $c=3\times 10^8$ m/s is the speed of light. The magnitude of $h_{b,u}(t_x)$ is designed to follow a Rayleigh distribution, effectively capturing the characteristics of a dense scattering wireless environment. The term $\beta_{b,u}(t_x)$ in \eqref{eq:channel} represents the large-scale fading factor, which is inversely proportional to the distance between FLU $u$ and O-RU $b$ at time $t_x$. A similar formulation is applied to compute the channel gain between DPU $u$ and CHU $u'$ at FGTI $t_{x}$, denoted by $|\xi_{u,u'}(t_{x})|^2$ and used in \eqref{eq:SINR_up_DPU}.

\noindent\textbf{FL-related parameters:}
The growth and decay rates of FLU dataset size (i.e., $C_{u}^{\uparrow,(k)}$ and $C_{u}^{\downarrow,(k)}$) are randomly selected from the range $[3,5]$. Local data dissimilarity $\Theta$ and cross-FLU data heterogeneity captured via $\mathfrak{X}_1$ and $\mathfrak{X}_2$ are $3$, $1$, and $10^{-3}$, respectively. The effective chipset capacitance of FLUs, $\frac{\alpha_{u}}{2}$, is chosen to be $\frac{10^{-27}}{2}$~\cite{8737464}. We choose the step size coefficient $\alpha$ in \eqref{suf:main:eta} to be $0.1$. The number of CPU cycles needed to process one data sample at FLUs, $a_{u}$, is set to $4000$. 
The importance weights for the ML convergence bound $c_1$ and energy consumption $c_2$ are set to $10$ and $10^{-2}$, respectively. We use two convolutional neural networks (CNNs) with sizes of 1,200 kilobits for MNIST and 2,000 kilobits for CIFAR10, which translate to the representation of 37,500 and 62,500 model parameters, respectively, under 32 bit quantization.

\noindent\textbf{Network and FLU parameters:}
We consider an O-RAN comprising five O-RUs, one O-DU, one O-CU, one near-RT RIC, and one non-RT RIC, where all O-RUs are connected to the O-DU, which is connected to the O-CU. O-RUs utilize the same set of 10 licensed and 10 unlicensed PRBs. For unlicensed and licensed PRBs, we consider numerologies $\gamma_1=1$ and $\gamma_2=0$, resulting in a bandwidth allocation of $360$~KHz and $180$~KHz, respectively~\cite{7744816}. The power spectral density of white Gaussian noise $N_0$ is $-174$dBm/Hz.  Maximum transmit powers of O-RUs $P^{\mathsf{max}}_{b}$, maximum transmit powers of FLUs $P^{\mathsf{max}}_{u}$, and CPU frequencies of FLUs $f^{(k)}_{u}$ are uniformly drawn from [$3$,~$4$]~W, [$500$,~$800$]~mW, and [$1.5$,~$2$]~GHz, respectively~\cite{10330597}. 
The available battery capacities of FLUs allocated/dedicated for FL training, denoted as $E^{\mathsf{max}}_{u}$ in~\eqref{const:energy:max_DPU}, are randomly chosen from the range $[0.5, 1]$ KJ. We consider the five O-RUs to be located at coordinates [(0, 0), (200, 200), (400, 400), (600, 600), (800, 800)], measured in meters~\cite{10330597}, and consider the presence of $30$ FLUs in the system. Centered around each O-RU, six FLUs are initially distributed randomly within a circular area with a radius of 50m. We set the maximum allowable latency per global round ($T^{\mathsf{max}}$ in~\eqref{const:latency_max}) to $2$ seconds. }

\begin{table*}[h]
\vspace{1.5mm}
\caption{Network characteristics used for ablation study. The experiments all use the network values herein, unless indicated otherwise.}
\vspace{-1.5mm}
\centering
\label{tab:sim_network_params} 
{
\begin{tabular}{|c c|c c|c c|c c|} 
\hline
\textbf{Param} & \textbf{Value} & \textbf{Param} & \textbf{Value} & \textbf{Param} & \textbf{Value}& \textbf{Param} & \textbf{Value} \\
\hline
$\alpha$ & $0.1$ & $P^{\mathsf{max}}_{b}$ & $[3,4]$W & $C_{u}^{\uparrow,(k)}$ & $[3,5]$ & $c_1$ & $10$  \\
$f^{(k)}_{u}$ & $[1.5, 2.0]$GHz & $P^{\mathsf{max}}_{u}$ & $[500,800]$mW & $C_{u}^{\downarrow,(k)}$ & $[3,5]$& $c_2$ & $10^{-2}$\\
$a_{u}$ & $3960$ & $N_0$ & $-174$dBm/Hz & $\mathfrak{X}_1$ & $1$ & $E^{\mathsf{max}}_{u}$ & [0.5~ 1]~ KJ  \\
$\alpha_u$& $10^{-27}$ & $\gamma_1$ & $1$ & $\mathfrak{X}_2$ & $10^{-3}$ & $T^{\mathsf{max}}$ & 2 Sec \\
$\gamma_2$ & $0$ & $\Theta$& $3$&   &  & &  \\
\hline
\end{tabular}
}
\vspace{-2mm}
\end{table*}

\newpage
\section{The Protocol Flow of {\tt DCLM}}\label{app:protocol_flow}

\begin{figure*}[!h]
    \centering 
    \includegraphics[width=\linewidth,trim=80 5 5 5,clip]{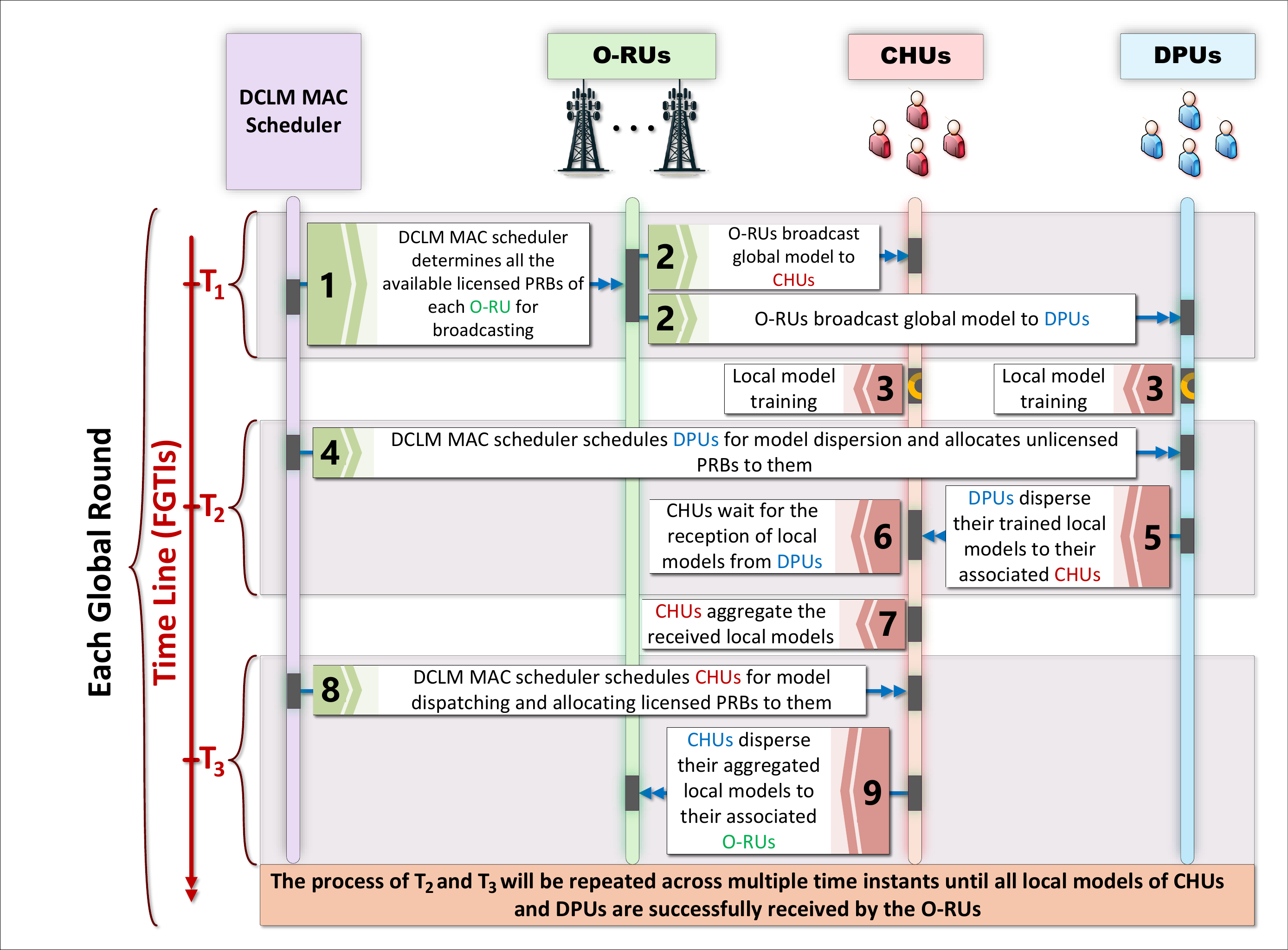}
    \caption{{ The protocol flow of the {\tt DCLM} MAC scheduler, showcasing the interactions between {\tt DCLM} MAC scheduler, O-RUs, CHUs, and DPUs.}}
    \label{fig:protocol_flow}
\end{figure*}

\newpage
\section{Fairness of Recruiting FLUs as CHU}\label{app:fairness}

\begin{figure}[!h]
\centering
\noindent\includegraphics[width=7cm]{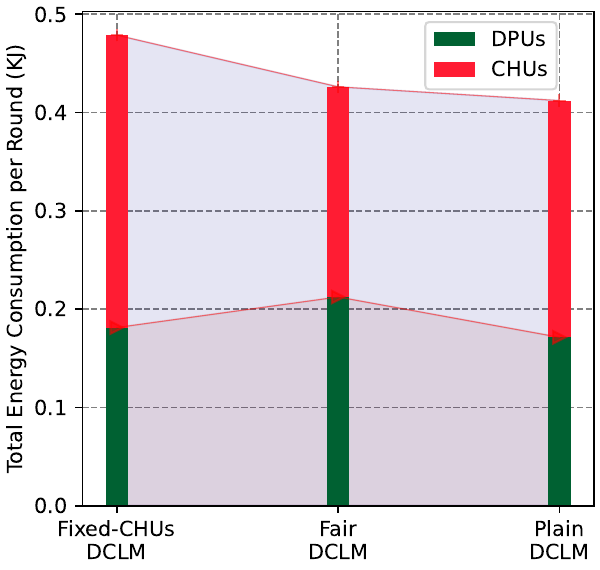} 
\caption{ {\tt Plain DCLM} achieves the lowest overall energy consumption compared to both {\tt fixed-CHUs DCLM} and {\tt fair DCLM}. Additionally, {\tt fixed-CHUs DCLM} results in the highest total energy consumption and the most unfair energy consumption of CHUs vs DPUs due to its inability to adapt the selection of FLUs as CHUs to the evolving network conditions. An interesting insight is that while {\tt fair DCLM} achieves the most balanced energy consumption between CHUs and DPUs (i.e., the fairest energy consumption among the FLUs), it does not lead to the most energy-efficient operation of the overall system. In other words, the superior energy efficiency of {\tt Plain DCLM} is achieved by slightly compromising the fairness of energy consumption between DPUs and CHUs.}
\label{fig:ext:fairness2}
\vspace{-.1mm}
\end{figure}
{ In this section, we examine the fairness of {\tt DCLM} by comparing {\tt DCLM(7)}, here referred to as {\tt plain DCLM}, with two modified versions. The first is {\tt Fixed-CHUs DCLM}, where the optimization is constrained to select the same set of FLUs as CHUs across all global rounds. The second is {\tt fair DCLM}, which incorporates a fairness term -- defined as the modified standard deviation of the energy consumption of FLUs -- into the objective function of the optimization problem $\bm{\mathcal{P}}$. Mathematically, we add the following term to the objective function of $\bm{\mathcal{P}}$:
\begin{equation}\label{eq:fairness}
   \mathscr{F}= \frac{\sum_{b \in \Omega,u{\in}\mathcal{U}_{b}} \left((\sum_{k=0}^{K-1}\lambda_{u}^{(k)}E_{u}^{\uparrow,(k)}+\overline{\lambda}_{u}^{(k)}\overline{E}_{u}^{\uparrow,(k)}) - M_{E}\right)^2}{\sum_{b \in \Omega}|\mathcal{U}_{b}|},
\end{equation}
where $M_{E}$ is the average energy consumption of FLUs, given by: 
\begin{equation}
   M_{E}= \frac{ \sum_{b \in \Omega,u{\in}\mathcal{U}_{b}}  (\sum_{k=0}^{K-1}\lambda_{u}^{(k)}E_{u}^{\uparrow,(k)}+\overline{\lambda}_{u}^{(k)}\overline{E}_{u}^{\uparrow,(k)})}{\sum_{b \in \Omega}|\mathcal{U}_{b}|}.
\end{equation}
Here, $\overline{E}_{u}^{\uparrow,(k)}$ and $E_{u}^{\uparrow,(k)}$ are computed in \eqref{eq:EN_DPU_UP} and \eqref{eq:EN_CHU_UP}, respectively.
Note that for each FLU $u$, we either have $\lambda_{u}^{(k)}=1$ or $\overline{\lambda}_{u}^{(k)}=1$ and not both (since the FLU is either a CHU or a DPU and not both). Subsequently, inspecting the numerator of $\mathscr{F}$,  minimization of this term results in CHUs and DPUs having the same level of energy consumption (i.e., only when the overall energy consumption of CHUs is equivalent to the overall energy consumption of DPUs, which makes them both equivalent to the average energy consumption of FLUs, term $\mathscr{F}$ would be equal to zero).

We would like to emphasize that the default version of our optimization, i.e., {\tt plain DCLM}, naturally results in changing/flipping the CHUs over time since the goal of the optimization is to optimize the selection of CHUs to minimize the overall energy consumption of the system, while taking into account the time-varying channels of FLUs.

Fig.~\ref{fig:ext:fairness2} illustrates that {\tt plain DCLM} achieves the lowest overall energy consumption compared to both {\tt fixed-CHUs DCLM} and {\tt fair DCLM}, while slightly compromising the fairness of energy consumption between DPUs and CHUs. In this context, fairness implies that the average energy consumption per round for both DPUs and CHUs is nearly equal (i.e., although CHUs conduct uplink transmissions to the O-RUs, which are often more energy demanding than D2D communications conducted by DPUs, the overall energy consumption of CHUs is kept at the same level of DPUs). 

The reduction in the overall energy consumption of FLUs achieved via {\tt plain DCLM} is primarily because the channel quality of FLUs fluctuates over FGTIs, and {\tt plain DCLM} minimizes the overall energy consumption of FLUs -- dictated by terms (c) in the objective function of $\bm{\mathcal{P}}$ -- by dynamically selecting FLUs as CHUs with favorable channel quality to both DPUs and O-RUs in each global round (see Appendix~C for details on the dynamic channel model used in this paper).

Note that the total energy consumption of FLUs in {\tt fair DCLM} is slightly higher than in {\tt plain DCLM}, as the optimization problem favors balancing the total energy consumption of FLUs with the standard deviation of their energy consumption (i.e., it favors having equal amounts of energy consumption over DPUs and CHUs although this may result in an increase in the overall energy consumption). Interestingly, this highlights that inducing fairness in terms of energy consumption across the FLUs does not always align with the optimal energy-efficient solution for the overall system. Additionally, the figure shows that {\tt fixed-CHUs DCLM} results in the highest total energy consumption and the most unfair strategy for selecting CHUs in terms of energy consumption, due to its inability to adapt the selection of FLUs as CHUs to the evolving network conditions.}

\newpage
\section{The Evolution of RAN: From Distributed RAN to O-RAN}\label{apx:ran_evolution}
{ RAN technologies have evolved significantly since their invention as traditional RANs to accommodate the ever-changing wireless network demands (Fig. \ref{fig:ORANEvolution} depicts the evolution of RAN technologies) \cite{chih14}. In particular, the 3GPP 4G LTE RAN (Long-Term Evolution Radio Access Network), e.g., distributed RAN (D-RAN) shown in the left part of Fig.~\ref{fig:ORANEvolution}, is a traditional, centralized network architecture that relies on proprietary hardware and software from specific vendors, focusing on enhanced mobile broadband (eMBB) services. It employs a tightly integrated infrastructure with base stations (eNBs) communicating directly with the core network. Specifically, in D-RAN, the functionalities of the  3GPP protocol stack -- that is, (i) physical (PHY) low in the remote radio head (RRH), and (ii) PHY-high, medium access control (MAC), radio link control (RLC), packet data convergence protocol (PCDP), and radio resource control (RRC) in the base band unit (BBU) -- are placed at each base station near the users. While effective for 4G networks, this architecture is costly (i.e., in terms of management and energy consumption costs) and lacks the scalability required for 5G-and-beyond networks \cite{chih14}. Compared with D-RAN, cloud RAN (C-RAN), middle part of Fig.~\ref{fig:ORANEvolution}, exploits cloud computing services and realizes the sharing of processing resources by pooling BBUs into centralized units named BBU pools, which can effectively reduce energy consumption and management costs~\cite{chih14}. However, C-RAN requires a tremendous amount of fronthaul capacity from the BBU pools to the RRHs~\cite{chih14}.
\begin{figure*}[!t]
    \centering 
    \includegraphics[width=\linewidth,trim=5 5 5 5,clip]{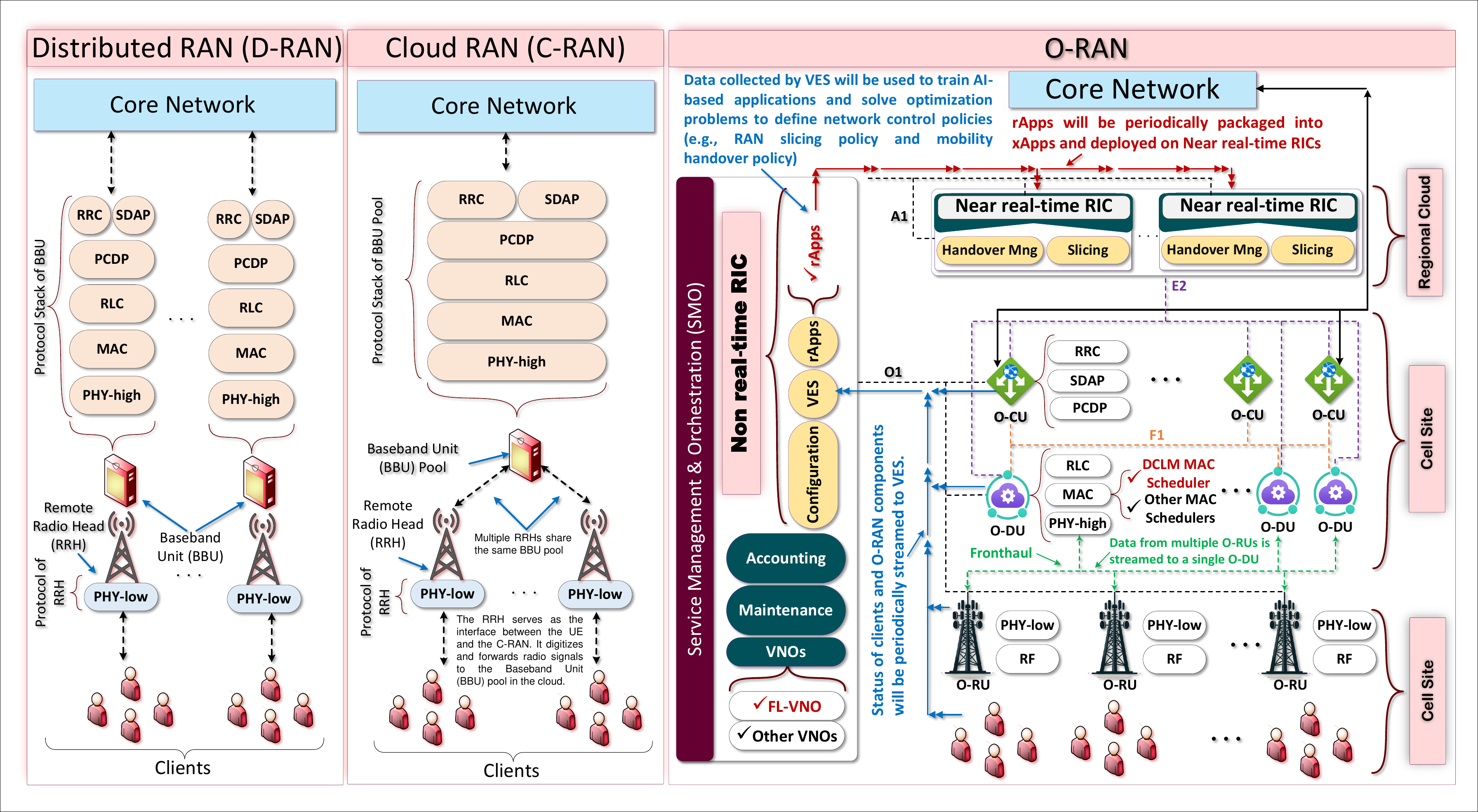}
    \caption{\footnotesize The evolution of RAN: Distributed RAN (D-RAN) $\rightarrow$ Cloud RAN (C-RAN) $\rightarrow$ Open RAN (O-RAN)}
    \label{fig:ORANEvolution}
\end{figure*}

In contrast to the aforementioned two traditional RAN architectures, in O-RAN, right part of Fig.~\ref{fig:ORANEvolution}, the elements of the base station are disaggregated into three distinct components: (i) radio unit (O-RU), responsible for the PHY-low and radio frequency (RF) functionalities, (ii) distributed unit (O-DU), responsible for the PHY-high, MAC, and RLC functionalities, and (iii)  centralized unit (O-CU) responsible for the remainder of the 3GPP stack (i.e., RRC, service data adaption protocol (SDAP), and PCDP) \cite{9846950}. O-RAN also introduces the concept of non-real-time (non-RT) RAN intelligent controllers (RICs) and near-RT RICs which orchestrate RAN operations (e.g., resource allocation and interference management) via software applications called rApps and xApps~\cite{9846950}.

The O-RAN components (i.e., O-RUs, O-DUs, O-CUs, and RICs) interact with each other via standard open interfaces, facilitating interoperability between network elements from different manufacturers. Specifically, each O-CU provides RRC, SDAP, and PCDP functionalities for multiple O-DUs through standard F1 interface. Each O-DU is also associated with multiple O-RUs via standard open fronthaul interface and provides PHY-high, MAC, and RLC functionalities for them. Correspondingly, each O-RU delivers network connection for users of various network services (i.e., FL, mobile broadband, ultra-reliable low-latency communications, and massive machine-type communications). The interaction between RICs and other elements of O-RAN is explained below.

\textbf{O-RAN intelligent orchestration system.} In O-RAN, data (e.g., key performance measurements (KPMs)~\cite{9846950}), generated by the RAN nodes, i.e., O-CUs, O-DUs, and O-RUs, stream periodically via the O1 interface to the virtual event streaming (VES). The data collected by VES is utilized by rApps in non-RT RICs to establish policies for wireless network management. These policies are packaged into xApps and instantiated on demand on the near-RT RICs, utilized to fine-tune the behavior of RAN via performing near-RT RAN operations, such as mobility management and RAN slicing. In particular,  RAN slicing refers to partitioning the physical RAN resources (e.g., radio resources), governed by a network operator, into various isolated logical slices, each leased to a virtual network operator (VNO). VNOs tailor their RAN slices to meet their service requirements by adopting appropriate RAN control strategies (e.g., MAC scheduling and mobility management). Here,  \textit{a MAC scheduler is an essential part of O-DU}, where real-time radio resources allocation is performed according to the current network conditions \cite{9846950}. MAC scheduler also enforces the necessary QoS requirements for the users' various services.

In traditional closed RANs (e.g., 3GPP LTE D-RAN), network operator cannot form/alter the network functionalities (e.g., a rigid black box MAC scheduler in 4G is utilized for all the users' services). To address this limitation, O-RAN offers a flexible infrastructure for designing dedicated functionalities for each virtual RAN slice, a feature referred to as \textit{programmability}. In {\tt DCLM}, we take advantage of the programmability feature of O-RAN to design a dedicated MAC scheduler for dynamic resource allocation in FL. This scheduler, deployed at the MAC layer (as labeled {\tt DCLM} MAC scheduler in the right part of Fig.~\ref{fig:ORANEvolution}), schedules CHUs and DPUs across different fine-granular time instants (FGTIs). By dynamically allocating licensed and unlicensed PRBs to CHUs and DPUs at different FGTIs, the {\tt DCLM} MAC scheduler effectively reduces interference across FLUs, enhancing resource utilization and reducing energy consumption.
It is worth mentioning that D2D communications have been proposed as a part of O-RAN~\cite{linsalata2024addressing}; however, this work is the first to take a unique advantage of D2D communications in O-RAN for FL. In particular, in this work, D2D communications is used between the DPUs and CHUs to save the licensed spectrum, mitigate the interference, and reduce the energy consumption and latency of model aggregations in FL over O-RAN.
}

\section{Data Heterogeneity Visualization}\label{app:datahetrogenity}

{ Fig.~\ref{fig:ext:het} illustrates how varying degrees of data heterogeneity, modeled through the Dirichlet distribution parameter $\alpha^{\mathsf{D}}$, influence the class distribution across five O-RUs and their covered $30$ FLUs. The color-coded bars represent the relative class distribution per O-RU, with the vertical axis indicating the number of samples and the horizontal axis covering normalized class frequency. 
The three sub-figures correspond to different $\alpha^{\mathsf{D}}$ values: (i) Left subfigure (Low Heterogeneity, $\alpha^{\mathsf{D}} = 5$): Each O-RU and its FLs receive a relatively balanced mix of all ten classes, indicating a nearly i.i.d. (independent and identically distributed) setting. (ii) Middle subfigure (Moderate Heterogeneity, $\alpha^{\mathsf{D}} = 0.5$): The data becomes more skewed across O-RUs and their FLUs. This setup reflects a moderately non-IID scenario. (iii) Right subfigure (Extreme Heterogeneity, $\alpha^{\mathsf{D}} = 0.05$): Most O-RUs and FLUs receive data samples from only a small subset of the classes, showcasing an extremely non-IID distribution.}

\begin{figure}[!b]
\centering
\noindent\includegraphics[width=.79\textwidth]{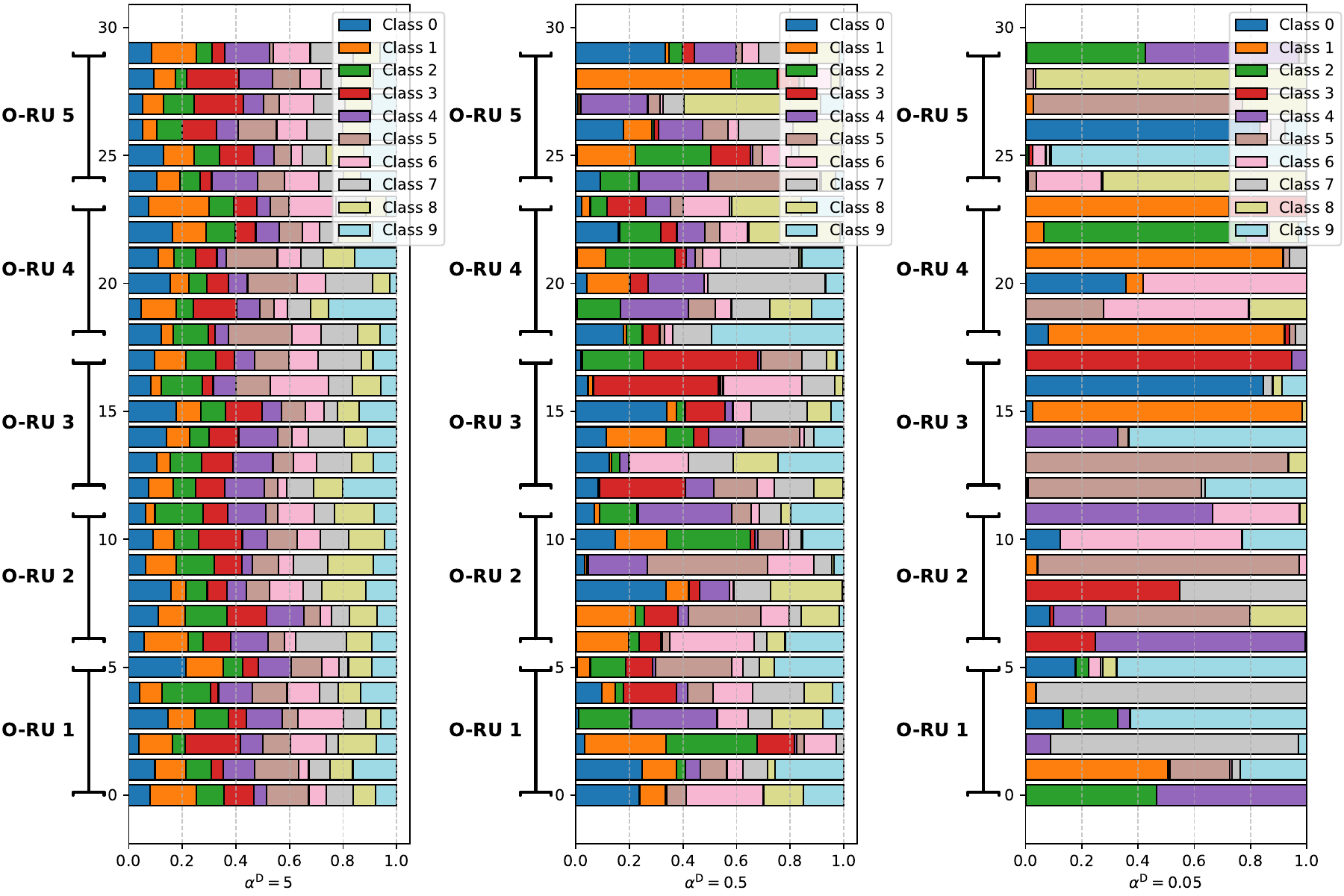} 
\caption{ The selected values of $\alpha^{\mathsf{D}}$ represent three levels of data heterogeneity: low ($\alpha^{\mathsf{D}}{=}5$), moderate ($\alpha^{\mathsf{D}}{=}0.5$), and extreme ($\alpha^{\mathsf{D}}{=}0.05$). Smaller $\alpha^{\mathsf{D}}$ values lead to more non-IID (skewed) distributions. In particular, as shown in the rightmost subfigure, at $\alpha^{\mathsf{D}} = 0.05$, most FLUs receive data from only a few classes, reflecting an extreme non-IID setting.}
\label{fig:ext:het}
\end{figure}
\newpage
\section{Proof of Proposition \ref{propo:broadcast}}\label{app:propo:broadcast}
\noindent Let $\alpha_{\bm{\omega}}M$ refer to the total size of the data that needs to be broadcast through O-RUs, where $\alpha_{\bm{\omega}}$ is the number of bits required to represent one element of the GM vector $\bm{\omega}^{(k)}$ with size $M$. Further, let $M^{\downarrow}_{b}(x)$ be the total size of the transmitted data of GM vector $\bm{\omega}^{(k)}$ by O-RU $b$ until $t_{x}$. Thus, $\alpha_{\bm{\omega}}-M^{\downarrow}_{b}(x)$ is the size of the remaining data at time $t_{x}$ that O-RU $b$ must broadcast to its recruited FLUs. Moreover, assume $\sum_{r \in \mathcal{R}_{b}} \varphi_{b,r}\hspace{-0.3mm}(t_{x})= 1$ (i.e., O-RU $b$ attempts to broadcast all its remaining GM among its recruited FLUs at time $t_{x}$). Accordingly, at time $t_{x}$, the size of data that O-RU $b$ will broadcast to its recruited FLUs over PRB $r$ is $\big(\alpha_{\bm{\omega}} M{-}M^{\downarrow}_{b}(x)\big)\varphi_{b,r}\hspace{-0.3mm}(t_{x})$, the latency of which can be computed as follows:
\begin{equation}\label{eq:broadcast_communication_latency1}
\hspace{-5mm}
\tau_{b,r}^{\downarrow}\hspace{-0.3mm}(t_{x}){=}\frac{\big(\alpha_{\bm{\omega}} M{-}M^{\downarrow}_{b}(x)\big)\varphi_{b,r}\hspace{-0.3mm}(t_{x})}{\big(\mathfrak{R}^{\downarrow}_{b,r}\hspace{-0.3mm}(t_{x}){+}1{-}\beta^{\downarrow}_{b}\hspace{-0.3mm}(t_{x})\big)},
\hspace{-5mm}
\end{equation}
where $1{-}\beta^{\downarrow}_{b}\hspace{-0.3mm}(t_{x})$ is added to the denominator of \eqref{eq:broadcast_communication_latency1} to avoid division by zero. At time $t_{x{+}1}$, MAC schedulers may perform rescheduling of users and reallocation of available resources (i.e., licensed PRBs and transmit power). Therefore, we have to recalculate the latency of broadcasting remaining data over PRBs. To this end, We consider the two potential scenarios that could occur at time $t_{x{+}1}$: 

\begin{enumerate}[label={\textbf{\arabic*})}]
    \item $\tau_{b,r}^{\downarrow}\hspace{-0.3mm}(t_{x})<\left(t_{x{+}1}-t_{x}\right)$: in this situation, the size of data transmitted by O-RU $b$ to its recruited FLUs over PRB $r$ during time window $[t_{x},t_{x{+}1}]$ is given by $\tau_{b,r}^{\downarrow}\hspace{-0.3mm}(t_{x})\mathfrak{R}^{\downarrow}_{b,r}\hspace{-0.3mm}(t_{x})$
    \item $\tau_{b,r}^{\downarrow}\hspace{-0.3mm}(t_{x})\ge\left(t_{x{+}1}-t_{x}\right)$: in this situation, the size of data transmitted by O-RU $b$ to its recruited FLUs over PRB $r$ during time window $[t_{x},t_{x{+}1}]$ is given by $\left(t_{x{+}1}-t_{x}\right)\mathfrak{R}^{\downarrow}_{b,r}\hspace{-0.3mm}(t_{x})$. 
\end{enumerate}
These two cases can be captured compactly through $\min\left\{\tau_{b,r}^{\downarrow}\hspace{-0.3mm}(t_{x}),t_{x{+}1}-t_{x}\right\} \mathfrak{R}^{\downarrow}_{b,r}\hspace{-0.3mm}(t_{x})$. On account of this, the total size of data  transmitted by O-RU $b$ to its recruited FLUs until FGTI $t_{x{+}1}$ is given by
\begin{equation}
\hspace{-5mm}
   M^{\downarrow}_{b}(x{+}1){=}\hspace{-5mm}\sum_{z{=}N^{(k{-}1)}{+}1}^{x{+}1}\sum_{r'
   {\in}\mathcal{R}_{b}}\min\hspace{-1mm}\big\{\tau_{b,r'}^{\downarrow}(t_{z}),\left(t_{z{+}1}{-}t_{z}\right)\big\}\mathfrak{R}^{\downarrow}_{b,r'}(t_{z}).
\hspace{-5mm}
\end{equation} 
Similarly, $M^{\downarrow}_{b}(x)$ is given by
\begin{equation}
\hspace{-5mm}
   M^{\downarrow}_{b}(x){=}\hspace{-5mm}\sum_{z{=}N^{(k{-}1)}{+}1}^{x}\sum_{r'
   {\in}\mathcal{R}_{b}}\min\hspace{-1mm}\big\{\tau_{b,r'}^{\downarrow}(t_{z}),\left(t_{z{+}1}{-}t_{z}\right)\big\}\mathfrak{R}^{\downarrow}_{b,r'}(t_{z}).
\hspace{-5mm}
\end{equation} 
Consequently, the downlink communication latency of O-RU $b$ to broadcast the GM $\bm{\omega}^{(k)}$ to its recruited FLUs can be computed as the cumulative broadcasting latency during FGTIs of global round $k$, which is given by
\begin{equation}\label{eq:TDownload}
\hspace{-5mm}
   \tau_{b}^{\downarrow,{(k)}}=\hspace{-3mm}\sum_{x{=}N^{(k{-}1)}{+}1}^{N^{(k)}{-}1}\max_{r{\in}\mathcal{R}_{b}}\Big\{\min\big\{\tau_{b,r}^{\downarrow}(t_{x}),\left(t_{x{+}1}-t_{x}\right)\big\}\Big\},
\hspace{-5mm}
\end{equation}    
where $\tau_{b,r}^{\downarrow}\hspace{-0.3mm}(t_{x})$ is given in \eqref{eq:broadcast_communication_latency1}, which concludes the proof.

\newpage
\section{Proof of Proposition~\ref{propo:DPU_uplink}}\label{app:propo:DPU_uplink}
\noindent 
The proof is similar to that of Appendix \ref{app:propo:broadcast}. In particular, let $\alpha_{\bm{\omega}}M$ be the size of the data that must be dispersed among CHUs by DPU $u$, where $\alpha_{\bm{\omega}}$ is the number of bits required to represent one element of the gradient vector $\widetilde{\nabla {\mathfrak{L}}}_u^{(k)}$ with size $M$. Further, let $\overline{M}^{\uparrow}_{u}(x)$ be the size of data (i.e., $\alpha_{\bm{\omega}}M$) of DPU $u$ dispersed among its associated CHUs until FGTI $t_{x}$. Thus, $\alpha_{\bm{\omega}}M-\overline{M}^{\uparrow}_{u}(x)$ is the size of remaining data of DPU $u$ at time $t_{x}$ that must be dispersed to its associated CHUs. Moreover, assume $\sum_{u'{\in} \mathcal{U}_{b}}\sum_{r{\in}\overline{\mathcal{R}}_{b}}\overline{\psi}_{u,u',r}(t_x){=}1,~x\in\mathcal{N}^{(k)}$ (i.e., DPU $u$ attempts to disperse all its remaining GPs among its associated CHUs at time $t_{x}$). Accordingly, at FGTI $t_{x}$, the size of data that DPU $u$ will upload to CHU $u'$ over PRB $r$ is $\big(\alpha_{\bm{\omega}}M-\overline{M}^{\uparrow}_{u}(x)\big)\overline{\psi}_{u,u',r}(t_x)$, the latency of which can be computed as follows:
\begin{equation}\label{eq:DPU_communication_latency1}
\overline{\tau}_{u,u',r}^{\uparrow}(t_x){=}\frac{(\alpha_{\bm{\omega}} M{-}\overline{M}^{\uparrow}_{u}(t_x))\overline{\psi}_{u,u',r}(t_x)}{\overline{\mathfrak{R}}^{\uparrow}_{u,u',r}(t_x){+}1{-}\overline{\beta}^{\uparrow}_u(t_x)},
\end{equation}
where $1{-}\overline{\beta}^{\uparrow}_u(t_x)$ is added to the denominator of \eqref{eq:DPU_communication_latency1} to avoid division by zero.
At time $t_{x{+}1}$, MAC schedulers may perform rescheduling of users and reallocation of available resources (i.e., unlicensed PRBs and transmit power). Therefore, we have to recalculate the latencies of dispersing remaining data of DPU $u$ over PRBs. To this end, we have to consider two potential scenarios that could occur at time $t_{x{+}1}$:
\begin{enumerate}[label={\textbf{\arabic*})}]
    \item $\overline{\tau}_{u,u',r}^{\uparrow}(t_x)<\left(t_{x{+}1}-t_{x}\right)$: in this situation, the size of the data of DPU $u$ transmitted to CHU $u'$ over PRB $r$ during time window $\left(t_{x{+}1}-t_{x}\right)$ is calculated as $\overline{\tau}_{u,u',r}^{\uparrow}(t_x)\overline{\mathfrak{R}}^{\uparrow}_{u,u',r}(t_x)$
    \item $\overline{\tau}_{u,u',r}^{\uparrow}(t_x)\ge\left(t_{x{+}1}-t_{x}\right)$: in this situation, the size of the data of DPU $u$ transmitted to CHU $u'$ over PRB $r$ during time window $\left(t_{x{+}1}-t_{x}\right)$ is $\left(t_{x{+}1}-t_{x}\right)\overline{\mathfrak{R}}^{\uparrow}_{u,u',r}(t_x)$.
\end{enumerate}


These two cases can be captured compactly through $\min\left\{\overline{\tau}_{u,u',r}^{\uparrow}(t_x),\left(t_{x{+}1}-t_{x}\right)\right\}\overline{\mathfrak{R}}^{\uparrow}_{u,u',r}(t_x)$. On account of this, the total size of data of DPU $u$ disperses among its associated CHUs until FGTI $t_{x{+}1}$ is given by
\begin{align}
    \overline{M}^{\uparrow}_{u}(t_{x+1})=\hspace{-3mm}\sum_{z=N^{(k{-}1)}{+}1}^{x{+}1}&\sum_{u'{\in}\mathcal{U}_{b}}\sum_{r'{\in}\overline{\mathcal{R}}_{b}}\min\Big\{\overline{\tau}_{u,u',r'}^{\uparrow}(t_{z}),\left(t_{z{+}1}-t_{z}\right)\hspace{-1mm}\Big\}\overline{\mathfrak{R}}^{\uparrow}_{u,u',r'}(t_{z}).
\end{align}
Similarly, $M^{\downarrow}_{b}(x)$ is given by
\begin{align}
    \overline{M}^{\uparrow}_{u}(t_x)=\hspace{-3mm}\sum_{z=N^{(k{-}1)}{+}1}^{x}&\sum_{u'{\in}\mathcal{U}_{b}}\sum_{r'{\in}\overline{\mathcal{R}}_{b}}\min\Big\{\overline{\tau}_{u,u',r'}^{\uparrow}(t_{z}),\left(t_{z{+}1}-t_{z}\right)\hspace{-1mm}\Big\}\overline{\mathfrak{R}}^{\uparrow}_{u,u',r'}(t_{z}).
\end{align}
Consequently, the dispersion latency of DPU $u{\in}\mathcal{U}_{b}$ can be calculated as its cumulative dispersion latency during FGTIs of global round $k$, given by
\begin{equation}
\hspace{-5mm}
    \overline{\tau}_{u}^{\uparrow,(k)}{=}\hspace{-3mm}\sum_{x{=}N^{(k{-}1)}{+}1}^{N^{(k)}{-}1}\max_{\substack{u'{\in}\mathcal{U}_{b},\\r{\in}\overline{\mathcal{R}}_{b}}}\Big\{\min\big\{\overline{\tau}_{u,u',r}^{\uparrow}(t_{x}),\left(t_{x{+}1}{-}t_{x}\right)\big\}\hspace{-0.7mm}\Big\},
\hspace{-5mm}
\end{equation}   
where $\overline{\tau}_{u,u',r}^{\uparrow}(t_x)$ is given in \eqref{eq:DPU_communication_latency1}. This concludes the proof.

\newpage
\section{Proof of Proposition~\ref{propo:CHU_uplink}}\label{app:propo:CHU_uplink}
\noindent The proof is similar to that of Appendix \ref{app:propo:broadcast}. In particular, let $\alpha_{\bm{\omega}}M$ be the size of the data that must be dispatched to O-RU $b$, where $\alpha_{\bm{\omega}}$ is the number of bits required to represent one element of the gradient vector $\widetilde{\nabla {\mathfrak{L}}}_u^{(k)}$ with size $M$. Further, let $M^{\uparrow}_{u}(x)$ be the size of data (i.e., $\alpha_{\bm{\omega}}M$) of CHU $u$ dispatched to O-RU $b$ until FGTI $t_{x}$. Thus, $\alpha_{\bm{\omega}}M-M^{\uparrow}_{u}(x)$ is the size of the remaining data of CHU $u$ at time $t_{x}$ that must be dispatched to O-RU $b$. Moreover, assume $\sum_{r {\in} \mathcal{R}_{b}} \psi_{u,r}\hspace{-0.3mm}(t_{x})=1,~x\in\mathcal{N}^{(k)}$ (i.e., CHU $u$ attempts to dispatch all its remaining GPs to O-RU $b$ at time $t_{x}$). Accordingly, at FGTI $t_{x}$, the size of data that CHU $u$ will upload to O-RU $b$ over PRB $r$ is $\big(\alpha_{\bm{\omega}}M-M^{\uparrow}_{u}(x)\big)\psi_{u,r}\hspace{-0.3mm}(t_{x})$, the latency of transmission of which is given by
\begin{equation}\label{eq:CHU_communication_latency1}
\hspace{-5mm}
\tau_{u,r}^{\uparrow}\hspace{-0.3mm}(t_{x}){=}\frac{\big(\alpha_{\bm{\omega}} M{-}M^{\uparrow}_{u}(x)\big)\psi_{u,r}\hspace{-0.3mm}(t_{x})}{\mathfrak{R}^{\uparrow}_{u,r}\hspace{-0.3mm}(t_{x}){+}1{-}\beta^{\uparrow}_u(t_x)},
\hspace{-5mm}
 \end{equation}
where $1{-}\beta^{\uparrow}_u(t_x)$ is added to the denominator of \eqref{eq:CHU_communication_latency1} to avoid division by zero. At time $t_{x{+}1}$, MAC schedulers may perform rescheduling of users and reallocation of available resources (i.e., licensed PRBs and transmit powers). Therefore, we have to recalculate the latencies of dispatching the remaining data of CHU $u$ over PRBs. To this end, we have to consider two potential scenarios that could occur at time $t_{x{+}1}$:
\begin{enumerate}[label={\textbf{\arabic*})}]
    \item $\tau_{u,r}^{\uparrow}\hspace{-0.3mm}(t_{x})<\left(t_{x{+}1}-t_{x}\right)$: in this situation, the size of the data of CHU $u$ transmitted to O-RU $b$ over PRB $r$ during time window $\left(t_{x{+}1}-t_{x}\right)$ is calculated as $\tau_{u,r}^{\uparrow}\hspace{-0.3mm}(t_{x})\mathfrak{R}^{\uparrow}_{u,r}\hspace{-0.3mm}(t_{x})$.
    \item $\tau_{u,r}^{\uparrow}\hspace{-0.3mm}(t_{x})\ge\left(t_{x{+}1}-t_{x}\right)$: in this situation, the size of the data of CHU $u$ transmitted to O-RU $b$ over PRB $r$ during time window $\left(t_{x{+}1}-t_{x}\right)$ is $\left(t_{x{+}1}-t_{x}\right)\mathfrak{R}^{\uparrow}_{u,r}\hspace{-0.3mm}(t_{x})$.
\end{enumerate}
These two cases can be captured compactly through $\min\left\{\tau_{u,r}^{\uparrow}\hspace{-0.3mm}(t_{x}),\left(t_{x{+}1}-t_{x}\right)\right\}\mathfrak{R}^{\uparrow}_{u,r}\hspace{-0.3mm}(t_{x})$. On account of this, the total size of data of CHU $u$ dispatched to O-RU $b$ until FGTI $t_{x{+}1}$ is given by
\begin{equation}
\hspace{-5mm}
    M^{\uparrow}_{u}(x{+}1){=}\hspace{-5mm}\sum_{z{=}N^{(k{-}1)}{+}1}^{x{+}1}\sum_{r'{\in}\mathcal{R}_{b}}\hspace{-2mm}\min\hspace{-1mm}\big\{\tau_{u,r'}^{\uparrow}(t_{z}),\left(t_{z{+}1}{-}t_{z}\right)\hspace{-1mm}\big\}\mathfrak{R}^{\uparrow}_{u,r'}(t_{z}\hspace{-0.3mm}).
\hspace{-5mm}
\end{equation}
Similarly, $M^{\uparrow}_{u}(x)$ is given by
\begin{equation}
\hspace{-5mm}
    M^{\uparrow}_{u}(x){=}\sum_{z{=}N^{(k{-}1)}{+}1}^{x}\sum_{r'{\in}\mathcal{R}_{b}}\min\hspace{-1mm}\big\{\tau_{u,r'}^{\uparrow}(t_{z}),\left(t_{z{+}1}{-}t_{z}\right)\hspace{-1mm}\big\}\mathfrak{R}^{\uparrow}_{u,r'}(t_{z}\hspace{-0.3mm}).
\hspace{-5mm}
\end{equation}
Consequently, the dispatch latency of CHU $u\in\mathcal{U}_{b}$ to transmit its GPs to O-RU $b$ through multiple PRBs can be calculated as the cumulative dispatch latency during FGTIs of global round $k$, given by 
\begin{equation}
    \tau_{u}^{\uparrow,{(k)}}{=}\hspace{-3mm}\sum_{x{=}N^{(k{-}1)}{+}1}^{N^{(k)}{-}1}\max_{r{\in}\mathcal{R}_{b}}\Big\{\hspace{-1mm}\min\left\{\tau_{u,r}^{\uparrow}\hspace{-0.3mm}(t_{x}),\left(t_{x{+}1}-t_{x}\right)\right\}\hspace{-1mm}\Big\},
\end{equation}
where $\tau_{u,r}^{\uparrow}\hspace{-0.3mm}(t_{x})$ is given in \eqref{eq:CHU_communication_latency1}. This concludes the proof.

\newpage
\section{Proof of Proposition~\ref{propo:waiting_time}}\label{app:propo:waiting_time}
\noindent The waiting time of CHU $u$ to receive the LM of DPU $u'$ (if it is connected to CHU $u$) is the cumulative latency of transmitting LM from DPU $u'$ to CHU $u$ over multiple PRBs during all FGTIs of global round $k$, captured through 
\begin{equation}
    \sum_{x{=}N^{(k{-}1)}{+}1}^{N^{(k)}{-}1}\max_{r{\in}\overline{\mathcal{R}}_{b}}\left\{\min\hspace{-1mm}\big\{\overline{\tau}_{u',u,r}^{\uparrow}(t_{x}),\left(t_{x{+}1}{-}t_{x}\right)\hspace{-1mm}\big\}\right\},
\end{equation}
where $\overline{\tau}_{u',u,r}^{\uparrow}(t_{x})$ is given in \eqref{eq:DPU_communication_latency1}. Furthermore, since each CHU $u$ can act as the communication head for multiple DPUs, the waiting time of CHU $u$, referred to as $\tau_{u}^{\mathsf{W},{(k)}}$, is the maximum of  waiting times needed to receive the LMs of all DPUs associated with CHU $u$, given by
\begin{equation}
\hspace{-5mm}
    \tau_{u}^{\mathsf{W},{(k)}}=\max_{u'\in\mathcal{U}_{b}}\Big\{ \tau_{u'}^{\mathsf{LC},{(k)}}{+}\hspace{-2mm}\sum_{x{=}N^{(k{-}1)}{+}1}^{N^{(k)}{-}1}\max_{r{\in}\overline{\mathcal{R}}_{b}}\left\{L_{u,u',r}(t_x)\right\}\Big\},
\hspace{-5mm}
\end{equation}
where $L_{u,u',r}(t_x){=}\min\hspace{-1mm}\big\{\overline{\tau}_{u',u,r}^{\uparrow}(t_{x}),\left(t_{x{+}1}{-}t_{x}\right)\hspace{-1mm}\big\}$, and $x{\in}\mathcal{N}^{(k)}_b$.

\newpage
\newpage
\section{Proof of Theorem \ref{th:main}}\label{app:th:main}
\noindent Recalling from Sec.~\ref{sec:model_aggregation}, we have 
\begin{align}\label{app:eq:normalLocalCHU}
    &\widetilde{\nabla {\mathfrak{L}}}_{u}^{\mathsf{Frag},(k)}=\frac{\lambda_{u}^{(k)}|\Upsilon_{u}(\widetilde{\tau}_{b}^{\downarrow,{(k)}})|}{\ell^{(k)}_{u}} \widetilde{\nabla {\mathfrak{L}}}_{u}^{(k)}+\sum_{x{\in}\mathcal{N}^{(k)}}\sum_{u'\in \mathcal{U}_{b}}\sum_{r{\in}\overline{\mathcal{R}}_{b}}\frac{\overline{\lambda}_{u'}^{(k)} |\Upsilon_{u'}(\widetilde{\tau}_{b}^{\downarrow,{(k)}})|}{ \ell^{(k)}_{u'}}\widetilde{\nabla {\mathfrak{L}}}_{u',u,r}^{(k)}(t_{x}), 
\end{align}
and 
\begin{equation}\label{app:eq:normalLocalORU}
        \widetilde{\nabla {\mathfrak{L}}}_{b}^{(k)}=\frac{1}{|\Upsilon^{\mathsf{s}}(\widetilde{\bm{\tau}}^{\downarrow,{(k)}})|}\sum_{u\in \mathcal{U}_{b}}\widetilde{\nabla {\mathfrak{L}}}_{u}^{\mathsf{Frag}(k)}.
\end{equation}
Substituting \eqref{app:eq:normalLocalCHU} in \eqref{app:eq:normalLocalORU} gives us
\begin{equation}\label{eq:normalLocal}
    \widetilde{\nabla {\mathfrak{L}}}_{b}^{(k)}=\sum_{u\in \mathcal{U}_{b}}\frac{\widehat{\lambda}_{u}^{(k)}|\Upsilon_{u}(\widetilde{\tau}_{b}^{\downarrow,{(k)}})|}{|\Upsilon^{\mathsf{s}}(\widetilde{\bm{\tau}}^{\downarrow,{(k)}})| \ell^{(k)}_{u}}\widetilde{\nabla {\mathfrak{L}}}_{u}^{(k)}.
\end{equation}
Consequently, $\widetilde{\nabla {\mathfrak{L}}}^{(k)}$ can be rewritten as follows:
\begin{equation}
    \widetilde{\nabla {\mathfrak{L}}}^{(k)}=\mathfrak{B}_k\times\sum_{b\in \Omega}\sum_{u\in \mathcal{U}_{b}}\frac{\widehat{\lambda}_{u}^{(k)}|\Upsilon_{u}(\widetilde{\tau}_{b}^{\downarrow,{(k)}})|}{|\Upsilon^{\mathsf{s}}(\widetilde{\bm{\tau}}^{\downarrow,{(k)}})| \ell^{(k)}_{u}}\widetilde{\nabla {\mathfrak{L}}}_{u}^{(k)},
\end{equation}
where
\begin{equation}
    \mathfrak{B}_k = \sum_{b'\in \Omega}\sum_{u'\in \mathcal{U}_{b'}}\frac{|\Upsilon^{\mathsf{s}}(\widetilde{\bm{\tau}}^{\downarrow,{(k)}})| \ell^{(k)}_{u'}}{\widehat{\lambda}_{u'}^{(k)}|\Upsilon_{u'}(\widetilde{\tau}_{b'}^{\downarrow,{(k)}})|}.
\end{equation}
Furthermore, consider the following lemma which is a direct result of $\beta$-smoothness of a function.
\begin{lemma}[Smooth Function Characteristics~\cite{2200000050}]\label{lemma:smooth}
    Let $f$ be a $\beta$-smooth function on $\mathbb{R}^n$. Then for any $\bm{x},\bm{y}\in \mathbb{R}^n$, we have
    \begin{equation}
        f(\bm{x})\le f(\bm{y})+\left\langle\nabla f(\bm{y}), (\bm{x}-\bm{y})\right\rangle +\frac{\beta}{2}\left\Vert \bm{x}-\bm{y} \right\Vert^2.
    \end{equation}
\end{lemma}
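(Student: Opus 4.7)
The plan is to derive this quadratic upper bound directly from the Lipschitz-gradient condition (i.e., $\beta$-smoothness of $f$) via the fundamental theorem of calculus along the line segment joining $\bm{y}$ and $\bm{x}$. First I would parametrize that segment by $\bm{z}(t) = \bm{y} + t(\bm{x}-\bm{y})$ for $t \in [0,1]$ and set $g(t) = f(\bm{z}(t))$, so that by the chain rule $g'(t) = \langle \nabla f(\bm{z}(t)), \bm{x}-\bm{y}\rangle$. Integrating $g'$ over $[0,1]$ and using $g(1) - g(0) = f(\bm{x}) - f(\bm{y})$ yields the identity
\begin{equation*}
f(\bm{x}) - f(\bm{y}) = \int_0^1 \langle \nabla f(\bm{y} + t(\bm{x}-\bm{y})), \bm{x}-\bm{y}\rangle\, dt.
\end{equation*}

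Next I would isolate the linear term $\langle \nabla f(\bm{y}), \bm{x}-\bm{y}\rangle$ by adding and subtracting it inside the integrand; since it is constant in $t$, its integral over $[0,1]$ equals itself. Rearranging gives
\begin{equation*}
f(\bm{x}) - f(\bm{y}) - \langle \nabla f(\bm{y}), \bm{x}-\bm{y}\rangle = \int_0^1 \langle \nabla f(\bm{y} + t(\bm{x}-\bm{y})) - \nabla f(\bm{y}), \bm{x}-\bm{y}\rangle\, dt.
\end{equation*}
Then I would bound the integrand in absolute value by Cauchy--Schwarz, obtaining $\lVert \nabla f(\bm{y}+t(\bm{x}-\bm{y})) - \nabla f(\bm{y})\rVert \cdot \lVert \bm{x}-\bm{y}\rVert$, and invoke the $\beta$-smoothness hypothesis (the Lipschitz property of $\nabla f$ with constant $\beta$) to replace the first norm by $\beta\lVert t(\bm{x}-\bm{y})\rVert = \beta t\lVert \bm{x}-\bm{y}\rVert$.

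Finally, I would integrate $\beta t \lVert \bm{x}-\bm{y}\rVert^2$ from $0$ to $1$, producing the factor $\tfrac{1}{2}$ that gives the stated $\tfrac{\beta}{2}\lVert \bm{x}-\bm{y}\rVert^2$ bound. There is no genuine obstacle here: the argument is a standard descent-lemma derivation, and the only care needed is in the bookkeeping of signs when upgrading the inner-product bound via its absolute value (which is valid since we are establishing an upper bound on $f(\bm{x}) - f(\bm{y}) - \langle \nabla f(\bm{y}), \bm{x}-\bm{y}\rangle$). The steps rely only on continuous differentiability of $f$ on $\mathbb{R}^n$, which is implied by the Lipschitz continuity of $\nabla f$, so the identity and the bound hold for all $\bm{x},\bm{y} \in \mathbb{R}^n$ as claimed.
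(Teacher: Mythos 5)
Your proof is correct and is the standard descent-lemma argument (line-segment parametrization, fundamental theorem of calculus, Cauchy--Schwarz, Lipschitz bound on the gradient, then integration of $\beta t$ to produce the factor $\tfrac12$). The paper does not prove this lemma itself --- it cites it from a reference --- and the argument you give is exactly the textbook derivation that reference would contain, so there is nothing to reconcile.
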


\noindent Using the $\beta$-smoothness of the global loss function (Assumption~\ref{Assup:lossFun}) and considering Lemma~\ref{lemma:smooth}, we have 
\begin{equation}
 \mathfrak{L}^{({k})}(\bm{\omega}^{(k+1)}) \leq \mathfrak{L}^{({k})}(\bm{\omega}^{(k)}) +  \left\langle\nabla{\mathfrak{L}^{({k})}(\bm{\omega}^{(k)})},\left( \bm{\omega}^{(k+1)} - \bm{\omega}^{(k)}\right)\right\rangle+ \frac{\beta}{2} \left\Vert \bm{\omega}^{(k+1)} - \bm{\omega}^{(k)}\right\Vert^2.
\end{equation}
Replacing the updating rule for $\bm{\omega}^{(k+1)}$ and taking the conditional expectation (with respect to randomized data sampling at the last aggregation) from both hand sides yields
\begin{align}\label{ineq:main1}
 \mathbb{E}_k\left[\mathfrak{L}^{({k})}(\bm{\omega}^{(k+1)})\right] &\leq \mathbb{E}_k\left[\mathfrak{L}^{({k})}(\bm{\omega}^{(k)}) -  \left\langle\nabla{\mathfrak{L}^{({k})}(\bm{\omega}^{(k)})},\eta_{_k}\widetilde{\nabla \mathfrak{L}}^{(k)}\right\rangle+ \frac{\beta}{2} \left\Vert \eta_{_k}\widetilde{\nabla \mathfrak{L}}^{(k)}\right\Vert^2\right]\nonumber\\
 &= \mathfrak{L}^{({k})}(\bm{\omega}^{(k)}) - \eta_{_k}\mathfrak{B}_k\mathbb{E}_k\left[\left\langle\nabla{\mathfrak{L}^{({k})}(\bm{\omega}^{(k)})},\sum_{b \in \Omega} \sum_{u\in \mathcal{U}_{b}}\frac{\widehat{\lambda}_{u}^{(k)}|\Upsilon_{u}(\widetilde{\tau}_{b}^{\downarrow,{(k)}})|}{|\Upsilon^{\mathsf{s}}(\widetilde{\bm{\tau}}^{\downarrow,{(k)}})| \ell^{(k)}_{u}}\widetilde{\nabla \mathfrak{L}}_{u}^{(k)}\right\rangle\right]\nonumber\\
 &~~~~~~~~~+ \frac{\beta\eta_{_k}^2\mathfrak{B}_k^2}{2} \mathbb{E}_k\left[\left\Vert\sum_{b \in \Omega} \sum_{u\in \mathcal{U}_{b}}\frac{\widehat{\lambda}_{u}^{(k)}|\Upsilon_{u}(\widetilde{\tau}_{b}^{\downarrow,{(k)}})|}{|\Upsilon^{\mathsf{s}}(\widetilde{\bm{\tau}}^{\downarrow,{(k)}})| \ell^{(k)}_{u}}\widetilde{\nabla \mathfrak{L}}_{u}^{(k)}\right\Vert^2\right].
\end{align}
Since $\widetilde{\nabla \mathfrak{L}}_{u}^{(k)} = \frac{1}{\eta_{_k}}\left(\bm{\omega}^{(k)}-\bm{\omega}_{u}^{(k),\ell^{(k)}_{u}}\right)$, via recursive expansion of the updating rule in~\eqref{eq:WeightupdateStrat}, we get
\begin{equation}\label{eq:nablabarF}
\widetilde{\nabla \mathfrak{L}}_{u}^{(k)} = \sum_{\ell=1}^{\ell^{(k)}_{u}} \sum_{\xi\in \mathcal{B}^{(k),\ell}_{u}} \hspace{-3mm} {\frac{\nabla  f_u(\bm{\omega}^{(k),\ell-1}_{u},\xi)}{{B}_{u}(\widetilde{\tau}_{b}^{\downarrow,{(k)}})}}.
\end{equation}
Replacing the above result back in \eqref{ineq:main1} leads to
\begin{equation}
\begin{aligned}
    \mathbb{E}_k\left[\mathfrak{L}^{({k})}(\bm{\omega}^{(k+1)})\right] &\leq  \mathfrak{L}^{({k})}(\bm{\omega}^{(k)}) - \eta_{_k} \mathfrak{B}_k\mathbb{E}_k\left[\left\langle\nabla{\mathfrak{L}^{({k})}(\bm{\omega}^{(k)})},\sum_{b \in \Omega} \sum_{u\in \mathcal{U}_{b}}\frac{\widehat{\lambda}_{u}^{(k)}|\Upsilon_{u}(\widetilde{\tau}_{b}^{\downarrow,{(k)}})|}{|\Upsilon^{\mathsf{s}}(\widetilde{\bm{\tau}}^{\downarrow,{(k)}})| \ell^{(k)}_{u}}\sum_{\ell=1}^{\ell^{(k)}_{u}} \sum_{\xi\in \mathcal{B}^{(k),\ell}_{u}} \hspace{-3mm} {\frac{\nabla  f_u(\bm{\omega}^{(k),\ell-1}_{u},\xi)}{{B}_{u}(\widetilde{\tau}_{b}^{\downarrow,{(k)}})}}\right\rangle\right]\nonumber\\
    &+ \frac{\beta\eta_{_k}^2\mathfrak{B}_k^2}{2} \mathbb{E}_k\left[\left\Vert \sum_{b \in \Omega} \sum_{u\in \mathcal{U}_{b}}\frac{\widehat{\lambda}_{u}^{(k)}|\Upsilon_{u}(\widetilde{\tau}_{b}^{\downarrow,{(k)}})|}{|\Upsilon^{\mathsf{s}}(\widetilde{\bm{\tau}}^{\downarrow,{(k)}})| \ell^{(k)}_{u}}\widetilde{\nabla \mathfrak{L}}_{u}^{(k)}\right\Vert^2\right].
\end{aligned}
\end{equation}
Let $\mathscr{N}_u^{\mathsf{G},(k),\ell}=\sum_{\xi\in \mathcal{B}^{(k),\ell}_{u}}{\frac{\nabla  f_u(\bm{\omega}^{(k),\ell-1}_{u},\xi)}{{B}_{u}(\widetilde{\tau}_{b}^{\downarrow,{(k)}})}}-{\nabla  \mathfrak{L}^{(k)}_u(\bm{\omega}^{(k),\ell-1}_{u})}$ denote the noise of SGD of user $u$, where ${\nabla  \mathfrak{L}^{(k)}_u(\bm{\omega}^{(k),\ell-1}_{u})}=\sum_{\xi\in\Upsilon_{u}(\widetilde{\tau}_{b}^{\downarrow,{(k)}})}\frac{\nabla  f_u(\bm{\omega}^{(k),\ell-1}_{u},\xi)}{|\Upsilon_{u}(\widetilde{\tau}_{b}^{\downarrow,{(k)}})|}$. The above inequality can be written as follows:
\begin{align}\label{ineq:main2}
    \mathbb{E}_k&\left[\mathfrak{L}^{({k})}(\bm{\omega}^{(k+1)})\right] \leq  \mathfrak{L}^{({k})}(\bm{\omega}^{(k)})- \eta_{_k}\mathfrak{B}_k \mathbb{E}_k\left[\left\langle\nabla{\mathfrak{L}^{({k})}(\bm{\omega}^{(k)})},\sum_{b \in \Omega} \sum_{u\in \mathcal{U}_{b}}\frac{\widehat{\lambda}_{u}^{(k)}|\Upsilon_{u}(\widetilde{\tau}_{b}^{\downarrow,{(k)}})|}{|\Upsilon^{\mathsf{s}}(\widetilde{\bm{\tau}}^{\downarrow,{(k)}})| \ell^{(k)}_{u}}\sum_{\ell=1}^{\ell^{(k)}_u}  \left({\nabla  \mathfrak{L}^{(k)}_u(\bm{\omega}^{(k),\ell-1}_{u})}+\mathscr{N}_u^{\mathsf{G},(k),\ell}\right)\right\rangle\right]\nonumber\\
    &~~~~~~~~+ \frac{\beta\eta_{_k}^2\mathfrak{B}_k^2}{2} \mathbb{E}_k\left[\left\Vert \sum_{b \in \Omega} \sum_{u\in \mathcal{U}_{b}}\frac{\widehat{\lambda}_{u}^{(k)}|\Upsilon_{u}(\widetilde{\tau}_{b}^{\downarrow,{(k)}})|}{|\Upsilon^{\mathsf{s}}(\widetilde{\bm{\tau}}^{\downarrow,{(k)}})| \ell^{(k)}_{u}}\widetilde{\nabla \mathfrak{L}}_{u}^{(k)}\right\Vert^2\right]\nonumber\\
    &=\mathfrak{L}^{({k})}(\bm{\omega}^{(k)}) - \eta_{_k}\mathfrak{B}_k \mathbb{E}_k\left[\left\langle\nabla{\mathfrak{L}^{({k})}(\bm{\omega}^{(k)})},\sum_{b \in \Omega} \sum_{u\in \mathcal{U}_{b}}\frac{\widehat{\lambda}_{u}^{(k)}|\Upsilon_{u}(\widetilde{\tau}_{b}^{\downarrow,{(k)}})|}{|\Upsilon^{\mathsf{s}}(\widetilde{\bm{\tau}}^{\downarrow,{(k)}})| \ell^{(k)}_{u}}\sum_{\ell=1}^{\ell^{(k)}_u}  {\nabla  \mathfrak{L}^{(k)}_u(\bm{\omega}^{(k),\ell-1}_{u})}\right\rangle\right]\nonumber\\
    &~~~~~~~-\underbrace{\eta_{_k}\mathfrak{B}_k \mathbb{E}_k\left[\left\langle\nabla{\mathfrak{L}^{({k})}(\bm{\omega}^{(k)})},\sum_{b \in \Omega} \sum_{u\in \mathcal{U}_{b}}\frac{\widehat{\lambda}_{u}^{(k)}|\Upsilon_{u}(\widetilde{\tau}_{b}^{\downarrow,{(k)}})|}{|\Upsilon^{\mathsf{s}}(\widetilde{\bm{\tau}}^{\downarrow,{(k)}})| \ell^{(k)}_{u}}\sum_{\ell=1}^{\ell^{(k)}_u}\mathscr{N}_u^{\mathsf{G},(k),\ell} \right\rangle\right]}_{(x_1)}+ \frac{\beta\eta_{_k}^2\mathfrak{B}_k^2}{2} \mathbb{E}_k\left[\left\Vert \sum_{b \in \Omega} \sum_{u\in \mathcal{U}_{b}}\frac{\widehat{\lambda}_{u}^{(k)}|\Upsilon_{u}(\widetilde{\tau}_{b}^{\downarrow,{(k)}})|}{|\Upsilon^{\mathsf{s}}(\widetilde{\bm{\tau}}^{\downarrow,{(k)}})| \ell^{(k)}_{u}}\widetilde{\nabla \mathfrak{L}}_{u}^{(k)}\right\Vert^2\right]\nonumber\\
    &\overset{(i)}{=}\mathfrak{L}^{({k})}(\bm{\omega}^{(k)}) - \eta_{_k}\mathfrak{B}_k \mathbb{E}_k\left[\left\langle\nabla{\mathfrak{L}^{({k})}(\bm{\omega}^{(k)})},\sum_{b \in \Omega} \sum_{u\in \mathcal{U}_{b}}\frac{\widehat{\lambda}_{u}^{(k)}|\Upsilon_{u}(\widetilde{\tau}_{b}^{\downarrow,{(k)}})|}{|\Upsilon^{\mathsf{s}}(\widetilde{\bm{\tau}}^{\downarrow,{(k)}})| \ell^{(k)}_{u}}\sum_{\ell=1}^{\ell^{(k)}_u}  {\nabla  \mathfrak{L}^{(k)}_u(\bm{\omega}^{(k),\ell-1}_{u})}\right\rangle\right]\nonumber\\
    &~~~~~~~+ \frac{\beta \eta_{_k}^2\mathfrak{B}_k^2}{2} \mathbb{E}_k\left[\left\Vert \sum_{b \in \Omega} \sum_{u\in \mathcal{U}_{b}}\frac{\widehat{\lambda}_{u}^{(k)}|\Upsilon_{u}(\widetilde{\tau}_{b}^{\downarrow,{(k)}})|}{|\Upsilon^{\mathsf{s}}(\widetilde{\bm{\tau}}^{\downarrow,{(k)}})| \ell^{(k)}_{u}}\widetilde{\nabla \mathfrak{L}}_{u}^{(k)}\right\Vert^2\right],
\end{align}
where $(i)$ uses the fact that $(x_1)$ is zero since the noise of gradient estimation across the mini-batches are independent and zero mean. Further, let 
\begin{equation}\label{eq:user_selection_noise}
    \mathscr{U}^{(k)}_{b}= \sum_{u\in \mathcal{U}_{b}}\frac{\widehat{\lambda}_{u}^{(k)}|\Upsilon_{u}(\widetilde{\tau}_{b}^{\downarrow,{(k)}})|}{|\Upsilon^{\mathsf{s}}(\widetilde{\bm{\tau}}^{\downarrow,{(k)}})| \ell^{(k)}_{u}}\sum_{\ell=1}^{\ell^{(k)}_{u}} {\nabla  \mathfrak{L}^{(k)}_u(\bm{\omega}^{(k),\ell-1}_{u})}- \sum_{u\in \mathcal{U}_{b}}\frac{|\Upsilon_{u}(\widetilde{\tau}_{b}^{\downarrow,{(k)}})|}{|\Upsilon(\widetilde{\bm{\tau}}^{\downarrow,{(k)}})| \ell^{(k)}_{u}}\sum_{\ell=1}^{\ell^{(k)}_{u}} {\nabla  \mathfrak{L}^{(k)}_u(\bm{\omega}^{(k),\ell-1}_{u})} 
\end{equation}
denote the noise of user recruitment at O-RU $b$. Inequality \eqref{ineq:main2} can be rewritten as follows:
\begin{align}
    \mathbb{E}_k&\left[\mathfrak{L}^{({k})}(\bm{\omega}^{(k+1)})\right] \leq \mathfrak{L}^{({k})}(\bm{\omega}^{(k)}) - \eta_{_k} \mathfrak{B}_k\mathbb{E}_k\left[\left\langle\nabla{\mathfrak{L}^{({k})}(\bm{\omega}^{(k)})},\sum_{b \in \Omega} \sum_{u\in \mathcal{U}_{b}}\frac{|\Upsilon_{u}(\widetilde{\tau}_{b}^{\downarrow,{(k)}})|}{|\Upsilon(\widetilde{\bm{\tau}}^{\downarrow,{(k)}})| \ell^{(k)}_{u}}\sum_{\ell=1}^{\ell^{(k)}_u}  {\nabla  \mathfrak{L}^{(k)}_u(\bm{\omega}^{(k),\ell-1}_{u})}+ \mathscr{U}^{(k)}_{b} \right\rangle\right]\nonumber\\
    &~~~~~~~~~~~+ \frac{\beta\eta_{_k}^2\mathfrak{B}_k^2}{2} \mathbb{E}_k\left[\left\Vert \sum_{b \in \Omega} \sum_{u\in \mathcal{U}_{b}}\frac{\widehat{\lambda}_{u}^{(k)}|\Upsilon_{u}(\widetilde{\tau}_{b}^{\downarrow,{(k)}})|}{|\Upsilon^{\mathsf{s}}(\widetilde{\bm{\tau}}^{\downarrow,{(k)}})| \ell^{(k)}_{u}}\widetilde{\nabla \mathfrak{L}}_{u}^{(k)}\right\Vert^2\right]\nonumber\\
    &=\mathfrak{L}^{({k})}(\bm{\omega}^{(k)}) - \eta_{_k} \mathfrak{B}_k\mathbb{E}_k\left[\left\langle\nabla{\mathfrak{L}^{({k})}(\bm{\omega}^{(k)})},\sum_{b \in \Omega} \sum_{u\in \mathcal{U}_{b}}\frac{|\Upsilon_{u}(\widetilde{\tau}_{b}^{\downarrow,{(k)}})|}{|\Upsilon(\widetilde{\bm{\tau}}^{\downarrow,{(k)}})| \ell^{(k)}_{u}}\sum_{\ell=1}^{\ell^{(k)}_u}  {\nabla  \mathfrak{L}^{(k)}_u(\bm{\omega}^{(k),\ell-1}_{u})}\right\rangle\right]\nonumber\\
    &+ \eta_{_k}\mathfrak{B}_k \mathbb{E}_k\left[\left\langle\nabla{\mathfrak{L}^{({k})}(\bm{\omega}^{(k)})},-\sum_{b \in \Omega}\mathscr{U}^{(k)}_{b} \right\rangle\right]+ \frac{\beta\eta_{_k}^2\mathfrak{B}_k^2}{2} \mathbb{E}_k\Bigg[\Bigg\Vert \sum_{b \in \Omega} \sum_{u\in \mathcal{U}_{b}}\frac{\widehat{\lambda}_{u}^{(k)}|\Upsilon_{u}(\widetilde{\tau}_{b}^{\downarrow,{(k)}})|}{|\Upsilon^{\mathsf{s}}(\widetilde{\bm{\tau}}^{\downarrow,{(k)}})| \ell^{(k)}_{u}}\widetilde{\nabla \mathfrak{L}}_{u}^{(k)} \Bigg\Vert^2\Bigg].
\end{align}
Similarly, writing the last term of the above inequality in terms of $\mathscr{N}_u^{\mathsf{G},(k),\ell}$ gives us
\begin{align}
    \mathbb{E}_k&\left[\mathfrak{L}^{({k})}(\bm{\omega}^{(k+1)})\right] \leq \mathfrak{L}^{({k})}(\bm{\omega}^{(k)}) - \eta_{_k} \mathfrak{B}_k\mathbb{E}_k\left[\left\langle\nabla{\mathfrak{L}^{({k})}(\bm{\omega}^{(k)})},\sum_{b \in \Omega} \sum_{u\in \mathcal{U}_{b}}\frac{|\Upsilon_{u}(\widetilde{\tau}_{b}^{\downarrow,{(k)}})|}{|\Upsilon(\widetilde{\bm{\tau}}^{\downarrow,{(k)}})| \ell^{(k)}_{u}}\sum_{\ell=1}^{\ell^{(k)}_u}  {\nabla  \mathfrak{L}^{(k)}_u(\bm{\omega}^{(k),\ell-1}_{u})}\right\rangle\right]\nonumber\\
    &+ \eta_{_k}\mathfrak{B}_k \mathbb{E}_k\left[\left\langle\nabla{\mathfrak{L}^{({k})}(\bm{\omega}^{(k)})},-\sum_{b \in \Omega}\mathscr{U}^{(k)}_{b} \right\rangle\right]+ \frac{\beta\eta_{_k}^2\mathfrak{B}_k^2}{2} \mathbb{E}_k\Bigg[\Bigg\Vert \sum_{b \in \Omega} \sum_{u\in \mathcal{U}_{b}}\frac{\widehat{\lambda}_{u}^{(k)}|\Upsilon_{u}(\widetilde{\tau}_{b}^{\downarrow,{(k)}})|}{|\Upsilon^{\mathsf{s}}(\widetilde{\bm{\tau}}^{\downarrow,{(k)}})| \ell^{(k)}_{u}}\sum_{\ell=1}^{\ell^{(k)}_{u}} \sum_{\xi\in \mathcal{B}^{(k),\ell}_{u}} \hspace{-3mm} {\frac{\nabla  f_u(\bm{\omega}^{(k),\ell-1}_{u},\xi)}{{B}_{u}(\widetilde{\tau}_{b}^{\downarrow,{(k)}})}} \Bigg\Vert^2\Bigg]\nonumber\\
    &=\mathfrak{L}^{({k})}(\bm{\omega}^{(k)}) - \eta_{_k} \mathfrak{B}_k\mathbb{E}_k\left[\left\langle\nabla{\mathfrak{L}^{({k})}(\bm{\omega}^{(k)})},\sum_{b \in \Omega} \sum_{u\in \mathcal{U}_{b}}\frac{|\Upsilon_{u}(\widetilde{\tau}_{b}^{\downarrow,{(k)}})|}{|\Upsilon(\widetilde{\bm{\tau}}^{\downarrow,{(k)}})| \ell^{(k)}_{u}}\sum_{\ell=1}^{\ell^{(k)}_u}  {\nabla  \mathfrak{L}^{(k)}_u(\bm{\omega}^{(k),\ell-1}_{u})}\right\rangle\right]\nonumber\\
    &+ \eta_{_k}\mathfrak{B}_k \mathbb{E}_k\left[\left\langle\nabla{\mathfrak{L}^{({k})}(\bm{\omega}^{(k)})},-\sum_{b \in \Omega}\mathscr{U}^{(k)}_{b} \right\rangle\right]+ \frac{\beta\eta_{_k}^2\mathfrak{B}_k^2}{2} \mathbb{E}_k\Bigg[\Bigg\Vert \sum_{b \in \Omega} \sum_{u\in \mathcal{U}_{b}}\frac{\widehat{\lambda}_{u}^{(k)}|\Upsilon_{u}(\widetilde{\tau}_{b}^{\downarrow,{(k)}})|}{|\Upsilon^{\mathsf{s}}(\widetilde{\bm{\tau}}^{\downarrow,{(k)}})| \ell^{(k)}_{u}}\sum_{\ell=1}^{\ell^{(k)}_{u}} \left({\nabla  \mathfrak{L}^{(k)}_u(\bm{\omega}^{(k),\ell-1}_{u})}+\mathscr{N}_u^{\mathsf{G},(k),\ell}\right) \Bigg\Vert^2\Bigg]\nonumber\\
    &=\mathfrak{L}^{({k})}(\bm{\omega}^{(k)}) - \eta_{_k}\mathfrak{B}_k \mathbb{E}_k\left[\left\langle\nabla{\mathfrak{L}^{({k})}(\bm{\omega}^{(k)})},\sum_{b \in \Omega} \sum_{u\in \mathcal{U}_{b}}\frac{|\Upsilon_{u}(\widetilde{\tau}_{b}^{\downarrow,{(k)}})|}{|\Upsilon(\widetilde{\bm{\tau}}^{\downarrow,{(k)}})| \ell^{(k)}_{u}}\sum_{\ell=1}^{\ell^{(k)}_u}  {\nabla  \mathfrak{L}^{(k)}_u(\bm{\omega}^{(k),\ell-1}_{u})}\right\rangle\right]\nonumber\\
    &~~+ \eta_{_k}\mathfrak{B}_k \mathbb{E}_k\left[\left\langle\nabla{\mathfrak{L}^{({k})}(\bm{\omega}^{(k)})},-\sum_{b \in \Omega}\mathscr{U}^{(k)}_{b} \right\rangle\right]\nonumber\\
    &~~+ \frac{\beta\eta_{_k}^2\mathfrak{B}_k^2}{2} \mathbb{E}_k\left[\left\Vert \sum_{b \in \Omega} \sum_{u\in \mathcal{U}_{b}}\frac{\widehat{\lambda}_{u}^{(k)}|\Upsilon_{u}(\widetilde{\tau}_{b}^{\downarrow,{(k)}})|}{|\Upsilon^{\mathsf{s}}(\widetilde{\bm{\tau}}^{\downarrow,{(k)}})| \ell^{(k)}_{u}}\sum_{\ell=1}^{\ell^{(k)}_{u}} {\nabla  \mathfrak{L}^{(k)}_u(\bm{\omega}^{(k),\ell-1}_{u})}+\left(\sum_{b \in \Omega} \sum_{u\in \mathcal{U}_{b}}\frac{\widehat{\lambda}_{u}^{(k)}|\Upsilon_{u}(\widetilde{\tau}_{b}^{\downarrow,{(k)}})|}{|\Upsilon^{\mathsf{s}}(\widetilde{\bm{\tau}}^{\downarrow,{(k)}})| \ell^{(k)}_{u}}\sum_{\ell=1}^{\ell^{(k)}_{u}}\mathscr{N}_u^{\mathsf{G},(k),\ell}\right) \right\Vert^2\right]\nonumber\\
    &\overset{(i)}{=}\mathfrak{L}^{({k})}(\bm{\omega}^{(k)}) - \eta_{_k}\mathfrak{B}_k \mathbb{E}_k\left[\left\langle\nabla{\mathfrak{L}^{({k})}(\bm{\omega}^{(k)})},\sum_{b \in \Omega} \sum_{u\in \mathcal{U}_{b}}\frac{|\Upsilon_{u}(\widetilde{\tau}_{b}^{\downarrow,{(k)}})|}{|\Upsilon(\widetilde{\bm{\tau}}^{\downarrow,{(k)}})| \ell^{(k)}_{u}}\sum_{\ell=1}^{\ell^{(k)}_u}  {\nabla  \mathfrak{L}^{(k)}_u(\bm{\omega}^{(k),\ell-1}_{u})}\right\rangle\right]\nonumber\\
    &~~+ \eta_{_k}\mathfrak{B}_k \mathbb{E}_k\left[\left\langle\nabla{\mathfrak{L}^{({k})}(\bm{\omega}^{(k)})},-\sum_{b \in \Omega}\mathscr{U}^{(k)}_{b} \right\rangle\right]\nonumber\\
    &~~+ \frac{\beta\eta_{_k}^2\mathfrak{B}_k^2}{2} \mathbb{E}_k\Bigg[\Bigg\Vert \sum_{b \in \Omega} \sum_{u\in \mathcal{U}_{b}}\frac{\widehat{\lambda}_{u}^{(k)}|\Upsilon_{u}(\widetilde{\tau}_{b}^{\downarrow,{(k)}})|}{|\Upsilon^{\mathsf{s}}(\widetilde{\bm{\tau}}^{\downarrow,{(k)}})| \ell^{(k)}_{u}}\sum_{\ell=1}^{\ell^{(k)}_{u}} {\nabla  \mathfrak{L}^{(k)}_u(\bm{\omega}^{(k),\ell-1}_{u})}\Bigg\Vert^2\Bigg]+\frac{\beta\eta_{_k}^2\mathfrak{B}_k^2}{2}\mathbb{E}_k\Bigg[\Bigg\Vert\sum_{b \in \Omega} \sum_{u\in \mathcal{U}_{b}}\frac{\widehat{\lambda}_{u}^{(k)}|\Upsilon_{u}(\widetilde{\tau}_{b}^{\downarrow,{(k)}})|}{|\Upsilon^{\mathsf{s}}(\widetilde{\bm{\tau}}^{\downarrow,{(k)}})| \ell^{(k)}_{u}}\sum_{\ell=1}^{\ell^{(k)}_{u}}\mathscr{N}_u^{\mathsf{G},(k),\ell}\Bigg\Vert^2\Bigg]\nonumber\\
    &~~+\underbrace{\beta\eta_{_k}^2\mathfrak{B}_k^2 \mathbb{E}_k\Bigg[\left\langle \sum_{b \in \Omega} \sum_{u\in \mathcal{U}_{b}}\frac{\widehat{\lambda}_{u}^{(k)}|\Upsilon_{u}(\widetilde{\tau}_{b}^{\downarrow,{(k)}})|}{|\Upsilon^{\mathsf{s}}(\widetilde{\bm{\tau}}^{\downarrow,{(k)}})| \ell^{(k)}_{u}}\sum_{\ell=1}^{\ell^{(k)}_{u}} {\nabla  \mathfrak{L}^{(k)}_u(\bm{\omega}^{(k),\ell-1}_{u})} ,\sum_{b \in \Omega} \sum_{u\in \mathcal{U}_{b}}\frac{\widehat{\lambda}_{u}^{(k)}|\Upsilon_{u}(\widetilde{\tau}_{b}^{\downarrow,{(k)}})|}{|\Upsilon^{\mathsf{s}}(\widetilde{\bm{\tau}}^{\downarrow,{(k)}})| \ell^{(k)}_{u}}\sum_{\ell=1}^{\ell^{(k)}_{u}}\mathscr{N}_u^{\mathsf{G},(k),\ell}\right\rangle\Bigg]}_{(x_2)}\nonumber\\
    &\overset{(ii)}{=}\mathfrak{L}^{({k})}(\bm{\omega}^{(k)}) - \eta_{_k}\mathfrak{B}_k \mathbb{E}_k\left[\left\langle\nabla{\mathfrak{L}^{({k})}(\bm{\omega}^{(k)})},\sum_{b \in \Omega} \sum_{u\in \mathcal{U}_{b}}\frac{|\Upsilon_{u}(\widetilde{\tau}_{b}^{\downarrow,{(k)}})|}{|\Upsilon(\widetilde{\bm{\tau}}^{\downarrow,{(k)}})| \ell^{(k)}_{u}}\sum_{\ell=1}^{\ell^{(k)}_u}  {\nabla  \mathfrak{L}^{(k)}_u(\bm{\omega}^{(k),\ell-1}_{u})}\right\rangle\right]\nonumber\\
    &~+ \eta_{_k}\mathfrak{B}_k \mathbb{E}_k\left[\left\langle\nabla{\mathfrak{L}^{({k})}(\bm{\omega}^{(k)})},-\sum_{b \in \Omega}\mathscr{U}^{(k)}_{b} \right\rangle\right]\nonumber\\
    &~+ \frac{\beta\eta_{_k}^2\mathfrak{B}_k^2}{2} \mathbb{E}_k\Bigg[\Bigg\Vert \sum_{b \in \Omega} \sum_{u\in \mathcal{U}_{b}}\frac{\widehat{\lambda}_{u}^{(k)}|\Upsilon_{u}(\widetilde{\tau}_{b}^{\downarrow,{(k)}})|}{|\Upsilon^{\mathsf{s}}(\widetilde{\bm{\tau}}^{\downarrow,{(k)}})| \ell^{(k)}_{u}}\sum_{\ell=1}^{\ell^{(k)}_{u}} {\nabla  \mathfrak{L}^{(k)}_u(\bm{\omega}^{(k),\ell-1}_{u})}\Bigg\Vert^2\Bigg]+\frac{\beta\eta_{_k}^2\mathfrak{B}_k^2}{2}\mathbb{E}_k\Bigg[\Bigg\Vert\sum_{b \in \Omega} \sum_{u\in \mathcal{U}_{b}}\frac{\widehat{\lambda}_{u}^{(k)}|\Upsilon_{u}(\widetilde{\tau}_{b}^{\downarrow,{(k)}})|}{|\Upsilon^{\mathsf{s}}(\widetilde{\bm{\tau}}^{\downarrow,{(k)}})| \ell^{(k)}_{u}}\sum_{\ell=1}^{\ell^{(k)}_{u}}\mathscr{N}_u^{\mathsf{G},(k),\ell}\Bigg\Vert^2\Bigg],
\end{align}
where $(i)$ uses the fact that for any two real valued vectors $\bm{a}$ and $\bm{b}$ with the same length, we have $\Vert \bm{a}+\bm{b}\Vert^2=\Vert \bm{a}\Vert^2+\Vert \bm{b}\Vert^2+2\bm{a}^\top\bm{b}$. Further, $(ii)$ in the above inequality is due to the fact that $(x_2)$ is zero since the noise of gradient estimation of users are independent and zero mean. Using $\Vert \sum_{i=1}^{n}\bm{x}_i\Vert^2=\sum_{i=1}^{n}\Vert\bm{x}_i\Vert^2+\sum_{\substack{j=1\\j\neq i}}^{n}\left\langle\bm{x}_i,\bm{x}_j\right\rangle$, where $\bm{x}\in \mathbb{R}^m$, and performing some algebraic manipulation, we expand the last term of the above inequality as follows: 
\begin{align}\label{ineq:main3}
    \mathbb{E}_k&\left[\mathfrak{L}^{({k})}(\bm{\omega}^{(k+1)})\right] \leq \mathfrak{L}^{({k})}(\bm{\omega}^{(k)}) - \eta_{_k}\mathfrak{B}_k \mathbb{E}_k\left[\left\langle\nabla{\mathfrak{L}^{({k})}(\bm{\omega}^{(k)})},\sum_{b \in \Omega} \sum_{u\in \mathcal{U}_{b}}\frac{|\Upsilon_{u}(\widetilde{\tau}_{b}^{\downarrow,{(k)}})|}{|\Upsilon(\widetilde{\bm{\tau}}^{\downarrow,{(k)}})| \ell^{(k)}_{u}}\sum_{\ell=1}^{\ell^{(k)}_u}  {\nabla  \mathfrak{L}^{(k)}_u(\bm{\omega}^{(k),\ell-1}_{u})}\right\rangle\right]\nonumber\\
    &~~~~+ \eta_{_k}\mathfrak{B}_k \mathbb{E}_k\left[\left\langle\nabla{\mathfrak{L}^{({k})}(\bm{\omega}^{(k)})},-\sum_{b \in \Omega}\mathscr{U}^{(k)}_{b} \right\rangle\right]+ \frac{\beta\eta_{_k}^2\mathfrak{B}_k^2}{2} \sum_{b \in \Omega} \sum_{u\in \mathcal{U}_{b}}\mathbb{E}_k\Bigg[\left(\frac{\widehat{\lambda}_{u}^{(k)}|\Upsilon_{u}(\widetilde{\tau}_{b}^{\downarrow,{(k)}})|}{|\Upsilon^{\mathsf{s}}(\widetilde{\bm{\tau}}^{\downarrow,{(k)}})| \ell^{(k)}_{u}}\right)^2\Bigg\Vert \sum_{\ell=1}^{\ell^{(k)}_{u}}\mathscr{N}_u^{\mathsf{G},(k),\ell}\Bigg\Vert^2\Bigg]\nonumber\\
    &+\underbrace{\sum_{b \in \Omega} \sum_{u\in \mathcal{U}_{b}} \sum_{b' \in \Omega} \sum_{u'\in \mathcal{U}_{b'}\setminus u}\frac{\beta\eta_{_k}^2\mathfrak{B}_k^2}{2}\mathbb{E}_k\Bigg[\left\langle \frac{\widehat{\lambda}_{u}^{(k)}|\Upsilon_{u}(\widetilde{\tau}_{b}^{\downarrow,{(k)}})|}{|\Upsilon^{\mathsf{s}}(\widetilde{\bm{\tau}}^{\downarrow,{(k)}})| \ell^{(k)}_{u}}\sum_{\ell=1}^{\ell^{(k)}_{u}}\mathscr{N}_u^{\mathsf{G},(k),\ell}, \frac{\widehat{\lambda}_{u'}^{(k)}|\Upsilon_{u'}(\widetilde{\tau}_{b'}^{\downarrow,{(k)}})|}{|\Upsilon^{\mathsf{s}}(\widetilde{\bm{\tau}}^{\downarrow,{(k)}})| \ell^{(k)}_{u'}}\sum_{\ell=1}^{\ell^{(k)}_{u'}}\mathscr{N}_{u'}^{\mathsf{G},(k),\ell}\right\rangle\Bigg]}_{(x_3)}\nonumber\\
    &~~~~+ \frac{\beta\eta_{_k}^2\mathfrak{B}_k^2}{2} \mathbb{E}_k\Bigg[\Bigg\Vert \sum_{b \in \Omega} \sum_{u\in \mathcal{U}_{b}}\frac{\widehat{\lambda}_{u}^{(k)}|\Upsilon_{u}(\widetilde{\tau}_{b}^{\downarrow,{(k)}})|}{|\Upsilon^{\mathsf{s}}(\widetilde{\bm{\tau}}^{\downarrow,{(k)}})| \ell^{(k)}_{u}}\sum_{\ell=1}^{\ell^{(k)}_{u}} {\nabla  \mathfrak{L}^{(k)}_u(\bm{\omega}^{(k),\ell-1}_{u})}\Bigg\Vert^2\Bigg]\nonumber\\
    &\overset{(i)}{=}\mathfrak{L}^{({k})}(\bm{\omega}^{(k)}) - \eta_{_k}\mathfrak{B}_k \mathbb{E}_k\left[\left\langle\nabla{\mathfrak{L}^{({k})}(\bm{\omega}^{(k)})},\sum_{b \in \Omega} \sum_{u\in \mathcal{U}_{b}}\frac{|\Upsilon_{u}(\widetilde{\tau}_{b}^{\downarrow,{(k)}})|}{|\Upsilon(\widetilde{\bm{\tau}}^{\downarrow,{(k)}})| \ell^{(k)}_{u}}\sum_{\ell=1}^{\ell^{(k)}_u}  {\nabla  \mathfrak{L}^{(k)}_u(\bm{\omega}^{(k),\ell-1}_{u})}\right\rangle\right]\nonumber\\
    &~~~~+ \eta_{_k}\mathfrak{B}_k \underbrace{\mathbb{E}_k\left[\left\langle\nabla{\mathfrak{L}^{({k})}(\bm{\omega}^{(k)})},-\sum_{b \in \Omega}\mathscr{U}^{(k)}_{b} \right\rangle\right]}_{(a)}+ \frac{\beta\eta_{_k}^2\mathfrak{B}_k^2}{2} \sum_{b \in \Omega} \sum_{u\in \mathcal{U}_{b}}\mathbb{E}_k\Bigg[\left(\frac{\widehat{\lambda}_{u}^{(k)}|\Upsilon_{u}(\widetilde{\tau}_{b}^{\downarrow,{(k)}})|}{|\Upsilon^{\mathsf{s}}(\widetilde{\bm{\tau}}^{\downarrow,{(k)}})| \ell^{(k)}_{u}}\right)^2\Bigg\Vert \sum_{\ell=1}^{\ell^{(k)}_{u}}\mathscr{N}_u^{\mathsf{G},(k),\ell}\Bigg\Vert^2\Bigg]\nonumber\\
    &~~~~+ \frac{\beta\eta_{_k}^2\mathfrak{B}_k^2}{2} \mathbb{E}_k\Bigg[\Bigg\Vert \underbrace{\sum_{b \in \Omega} \sum_{u\in \mathcal{U}_{b}}\frac{\widehat{\lambda}_{u}^{(k)}|\Upsilon_{u}(\widetilde{\tau}_{b}^{\downarrow,{(k)}})|}{|\Upsilon^{\mathsf{s}}(\widetilde{\bm{\tau}}^{\downarrow,{(k)}})| \ell^{(k)}_{u}}\sum_{\ell=1}^{\ell^{(k)}_{u}} {\nabla  \mathfrak{L}^{(k)}_u(\bm{\omega}^{(k),\ell-1}_{u})}}_{(b)}\Bigg\Vert^2\Bigg],
\end{align}
where $(i)$ uses the fact that the noise of gradient estimation of a specific user is independent and zero mean. Using Cauchy-Schwartz and Young's inequalities, for two real valued vectors $\bm{a}$ and $\bm{b}$, we have
\begin{equation}
    \bm{a}^\top\bm{b}\le \frac{\alpha}{2}\Vert \bm{a}\Vert^2+\frac{1}{2\alpha}\Vert \bm{b}\Vert^2,~ \alpha \in \mathbb{R}^{++},
\end{equation}
implying $\bm{a}^\top\bm{b}\le \Vert \bm{a}\Vert^2+\frac{1}{4}\Vert \bm{b}\Vert^2$. Using this inequality to expand $(a)$ in \eqref{ineq:main3} and also rewriting $(b)$ in \eqref{ineq:main3} in terms of $\mathscr{U}^{(k)}_{b}$, we can further bound \eqref{ineq:main3} as follows:
\begin{align}\label{ineq:main3_1}
    \mathbb{E}_k&\left[\mathfrak{L}^{({k})}(\bm{\omega}^{(k+1)})\right] \leq \mathfrak{L}^{({k})}(\bm{\omega}^{(k)}) - \eta_{_k}\mathfrak{B}_k \mathbb{E}_k\left[\Bigg\langle\nabla{\mathfrak{L}^{({k})}(\bm{\omega}^{(k)})},\sum_{b \in \Omega} \sum_{u\in \mathcal{U}_{b}}\frac{|\Upsilon_{u}(\widetilde{\tau}_{b}^{\downarrow,{(k)}})|}{|\Upsilon(\widetilde{\bm{\tau}}^{\downarrow,{(k)}})| \ell^{(k)}_{u}}\sum_{\ell=1}^{\ell^{(k)}_u}  {\nabla  \mathfrak{L}^{(k)}_u(\bm{\omega}^{(k),\ell-1}_{u})}\Bigg\rangle\right]\nonumber\\
    &+ \frac{\beta\eta_{_k}^2\mathfrak{B}_k^2}{2} \sum_{b \in \Omega} \sum_{u\in \mathcal{U}_{b}}\left(\frac{\widehat{\lambda}_{u}^{(k)}|\Upsilon_{u}(\widetilde{\tau}_{b}^{\downarrow,{(k)}})|}{|\Upsilon^{\mathsf{s}}(\widetilde{\bm{\tau}}^{\downarrow,{(k)}})| \ell^{(k)}_{u}}\right)^2\mathbb{E}_k\Bigg[\Bigg\Vert \sum_{\ell=1}^{\ell^{(k)}_{u}}\mathscr{N}_u^{\mathsf{G},(k),\ell}\Bigg\Vert^2\Bigg]\nonumber\\
    &+ \frac{\beta\eta_{_k}^2\mathfrak{B}_k^2}{2} \mathbb{E}_k\Bigg[\Bigg\Vert \sum_{b \in \Omega} \sum_{u\in \mathcal{U}_{b}}\frac{|\Upsilon_{u}(\widetilde{\tau}_{b}^{\downarrow,{(k)}})|}{|\Upsilon(\widetilde{\bm{\tau}}^{\downarrow,{(k)}})| \ell^{(k)}_{u}}\sum_{\ell=1}^{\ell^{(k)}_{u}} {\nabla  \mathfrak{L}^{(k)}_u(\bm{\omega}^{(k),\ell-1}_{u})}+\sum_{b \in \Omega}\mathscr{U}^{(k)}_{b}\Bigg\Vert^2\Bigg]+ \eta_{_k}\mathfrak{B}_k \mathbb{E}_k\left[\left\langle\nabla{\mathfrak{L}^{({k})}(\bm{\omega}^{(k)})},-\sum_{b \in \Omega}\mathscr{U}^{(k)}_{b} \right\rangle\right]\nonumber\\
    &\le \mathfrak{L}^{({k})}(\bm{\omega}^{(k)}) - \eta_{_k}\mathfrak{B}_k \mathbb{E}_k\left[\Bigg\langle\nabla{\mathfrak{L}^{({k})}(\bm{\omega}^{(k)})},\sum_{b \in \Omega} \sum_{u\in \mathcal{U}_{b}}\frac{|\Upsilon_{u}(\widetilde{\tau}_{b}^{\downarrow,{(k)}})|}{|\Upsilon(\widetilde{\bm{\tau}}^{\downarrow,{(k)}})| \ell^{(k)}_{u}}\sum_{\ell=1}^{\ell^{(k)}_u}  {\nabla  \mathfrak{L}^{(k)}_u(\bm{\omega}^{(k),\ell-1}_{u})}\Bigg\rangle\right]\nonumber\\
    &+ \frac{\beta\eta_{_k}^2\mathfrak{B}_k^2}{2} \sum_{b \in \Omega} \sum_{u\in \mathcal{U}_{b}}\left(\frac{\widehat{\lambda}_{u}^{(k)}|\Upsilon_{u}(\widetilde{\tau}_{b}^{\downarrow,{(k)}})|}{|\Upsilon^{\mathsf{s}}(\widetilde{\bm{\tau}}^{\downarrow,{(k)}})| \ell^{(k)}_{u}}\right)^2\mathbb{E}_k\Bigg[\Bigg\Vert \sum_{\ell=1}^{\ell^{(k)}_{u}}\mathscr{N}_u^{\mathsf{G},(k),\ell}\Bigg\Vert^2\Bigg]\nonumber\\
    &+ \frac{\beta\eta_{_k}^2\mathfrak{B}_k^2}{2} \mathbb{E}_k\Bigg[\underbrace{\Bigg\Vert \sum_{b \in \Omega} \sum_{u\in \mathcal{U}_{b}}\frac{|\Upsilon_{u}(\widetilde{\tau}_{b}^{\downarrow,{(k)}})|}{|\Upsilon(\widetilde{\bm{\tau}}^{\downarrow,{(k)}})| \ell^{(k)}_{u}}\sum_{\ell=1}^{\ell^{(k)}_{u}} {\nabla  \mathfrak{L}^{(k)}_u(\bm{\omega}^{(k),\ell-1}_{u})}+\sum_{b \in \Omega}\mathscr{U}^{(k)}_{b}\Bigg\Vert^2}_{(a)}\Bigg]\nonumber\\
    &~~~~~~~~~~~~~~~~~~~~~~~~~~~~~+ \frac{\eta_{_k}\mathfrak{B}_k}{4} \mathbb{E}_k\left[\left\Vert\nabla{\mathfrak{L}^{({k})}(\bm{\omega}^{(k)})} \right\Vert^2\right]+ \eta_{_k}\mathfrak{B}_k \mathbb{E}_k\left[\left\Vert\sum_{b \in \Omega}\mathscr{U}^{(k)}_{b} \right\Vert^2\right]\nonumber\\
    &\overset{(i)}{\le}\mathfrak{L}^{({k})}(\bm{\omega}^{(k)}) - \eta_{_k} \mathfrak{B}_k\mathbb{E}_k\left[\underbrace{\left\langle\nabla{\mathfrak{L}^{({k})}(\bm{\omega}^{(k)})},\sum_{b \in \Omega} \sum_{u\in \mathcal{U}_{b}}\frac{|\Upsilon_{u}(\widetilde{\tau}_{b}^{\downarrow,{(k)}})|}{|\Upsilon(\widetilde{\bm{\tau}}^{\downarrow,{(k)}})| \ell^{(k)}_{u}}\sum_{\ell=1}^{\ell^{(k)}_u}  {\nabla  \mathfrak{L}^{(k)}_u(\bm{\omega}^{(k),\ell-1}_{u})}\right\rangle}_{(b)}\right]\nonumber\\
    &+ \frac{\beta\eta_{_k}^2\mathfrak{B}_k^2}{2} \sum_{b \in \Omega} \sum_{u\in \mathcal{U}_{b}}\left(\frac{\widehat{\lambda}_{u}^{(k)}|\Upsilon_{u}(\widetilde{\tau}_{b}^{\downarrow,{(k)}})|}{|\Upsilon^{\mathsf{s}}(\widetilde{\bm{\tau}}^{\downarrow,{(k)}})| \ell^{(k)}_{u}}\right)^2\mathbb{E}_k\Bigg[\Bigg\Vert \sum_{\ell=1}^{\ell^{(k)}_{u}}\mathscr{N}_u^{\mathsf{G},(k),\ell}\Bigg\Vert^2\Bigg]+ \beta\eta_{_k}^2 \mathfrak{B}_k^2\mathbb{E}_k\Bigg[\Bigg\Vert \sum_{b \in \Omega} \sum_{u\in \mathcal{U}_{b}}\frac{|\Upsilon_{u}(\widetilde{\tau}_{b}^{\downarrow,{(k)}})|}{|\Upsilon(\widetilde{\bm{\tau}}^{\downarrow,{(k)}})| \ell^{(k)}_{u}}\sum_{\ell=1}^{\ell^{(k)}_{u}} {\nabla  \mathfrak{L}^{(k)}_u(\bm{\omega}^{(k),\ell-1}_{u})}\Bigg\Vert^2\Bigg]\nonumber\\
    &+ \beta\eta_{_k}^2 \mathfrak{B}_k^2\mathbb{E}_k\left[\left\Vert\sum_{b \in \Omega}\mathscr{U}^{(k)}_{b}\right\Vert^2\right]+ \frac{\eta_{_k}\mathfrak{B}_k}{4} \mathbb{E}_k\left[\left\Vert\nabla{\mathfrak{L}^{({k})}(\bm{\omega}^{(k)})} \right\Vert^2\right]+ \eta_{_k} \mathfrak{B}_k\mathbb{E}_k\left[\left\Vert\sum_{b \in \Omega}\mathscr{U}^{(k)}_{b} \right\Vert^2\right],
\end{align}
where in inequality $(i)$ we have used the Cauchy-Schwarz inequality $\Vert \mathbf{a}+\mathbf{b} \Vert^2\leq 2 \Vert \mathbf{a} \Vert^2+2\Vert \mathbf{b} \Vert^2$ to bound $(a)$ in \eqref{ineq:main3_1}. Using $2\bm{a}^\top\bm{b}=\Vert \bm{a}\Vert^2+\Vert \bm{b}\Vert^2-\Vert \bm{a}-\bm{b}\Vert^2$ to expand $(b)$ in \eqref{ineq:main3_1} and performing some algebraic manipulation, we simplify the above inequality as follows:
\begin{align}
    \mathbb{E}_k&\left[\mathfrak{L}^{({k})}(\bm{\omega}^{(k+1)})\right] \leq \mathfrak{L}^{({k})}(\bm{\omega}^{(k)}) - \frac{\eta_{_k}\mathfrak{B}_k}{2} \mathbb{E}_k\Bigg[\Bigg\Vert\nabla{\mathfrak{L}^{({k})}(\bm{\omega}^{(k)})}\Bigg\Vert^2+\Bigg\Vert\sum_{b \in \Omega} \sum_{u\in \mathcal{U}_{b}}\frac{|\Upsilon_{u}(\widetilde{\tau}_{b}^{\downarrow,{(k)}})|}{|\Upsilon(\widetilde{\bm{\tau}}^{\downarrow,{(k)}})| \ell^{(k)}_{u}}\sum_{\ell=1}^{\ell^{(k)}_u}  {\nabla  \mathfrak{L}^{(k)}_u(\bm{\omega}^{(k),\ell-1}_{u})}\Bigg\Vert^2\nonumber\\
    &~~~~~~~~~~~~~~-\Bigg\Vert \nabla{\mathfrak{L}^{({k})}(\bm{\omega}^{(k)})}- \sum_{b \in \Omega} \sum_{u\in \mathcal{U}_{b}}\frac{|\Upsilon_{u}(\widetilde{\tau}_{b}^{\downarrow,{(k)}})|}{|\Upsilon(\widetilde{\bm{\tau}}^{\downarrow,{(k)}})| \ell^{(k)}_{u}}\sum_{\ell=1}^{\ell^{(k)}_u}  {\nabla  \mathfrak{L}^{(k)}_u(\bm{\omega}^{(k),\ell-1}_{u})}\Bigg\Vert^2\Bigg]\nonumber\\
    &+ \frac{\beta\eta_{_k}^2}{2} \sum_{b \in \Omega} \sum_{u\in \mathcal{U}_{b}}\left(\frac{\widehat{\lambda}_{u}^{(k)}|\Upsilon_{u}(\widetilde{\tau}_{b}^{\downarrow,{(k)}})|}{|\Upsilon^{\mathsf{s}}(\widetilde{\bm{\tau}}^{\downarrow,{(k)}})| \ell^{(k)}_{u}}\right)^2\mathbb{E}_k\Bigg[\Bigg\Vert \sum_{\ell=1}^{\ell^{(k)}_{u}}\mathscr{N}_u^{\mathsf{G},(k),\ell}\Bigg\Vert^2\Bigg]+ \beta\eta_{_k}^2\mathfrak{B}_k^2 \mathbb{E}_k\Bigg[\Bigg\Vert \sum_{b \in \Omega} \sum_{u\in \mathcal{U}_{b}}\frac{|\Upsilon_{u}(\widetilde{\tau}_{b}^{\downarrow,{(k)}})|}{|\Upsilon(\widetilde{\bm{\tau}}^{\downarrow,{(k)}})| \ell^{(k)}_{u}}\sum_{\ell=1}^{\ell^{(k)}_{u}} {\nabla  \mathfrak{L}^{(k)}_u(\bm{\omega}^{(k),\ell-1}_{u})}\Bigg\Vert^2\Bigg]\nonumber\\
    &+ \beta\eta_{_k}^2 \mathfrak{B}_k^2\mathbb{E}_k\left[\left\Vert\sum_{b \in \Omega}\mathscr{U}^{(k)}_{b}\right\Vert^2\right]+ \frac{\eta_{_k}\mathfrak{B}_k}{4} \mathbb{E}_k\left[\left\Vert\nabla{\mathfrak{L}^{({k})}(\bm{\omega}^{(k)})} \right\Vert^2\right]+ \eta_{_k}\mathfrak{B}_k  \mathbb{E}_k\left[\left\Vert\sum_{b \in \Omega}\mathscr{U}^{(k)}_{b} \right\Vert^2\right]\nonumber\\
    &=\mathfrak{L}^{({k})}(\bm{\omega}^{(k)}) - \frac{\eta_{_k}\mathfrak{B}_k}{2} \mathbb{E}_k\Bigg[\Bigg\Vert\nabla{\mathfrak{L}^{({k})}(\bm{\omega}^{(k)})}\Bigg\Vert^2\Bigg]- \frac{\eta_{_k}\mathfrak{B}_k}{2} \mathbb{E}_k\Bigg[\Bigg\Vert\sum_{b \in \Omega} \sum_{u\in \mathcal{U}_{b}}\frac{|\Upsilon_{u}(\widetilde{\tau}_{b}^{\downarrow,{(k)}})|}{|\Upsilon(\widetilde{\bm{\tau}}^{\downarrow,{(k)}})| \ell^{(k)}_{u}}\sum_{\ell=1}^{\ell^{(k)}_u}  {\nabla  \mathfrak{L}^{(k)}_u(\bm{\omega}^{(k),\ell-1}_{u})}\Bigg\Vert^2\Bigg]\nonumber\\
    &~~~~~~~~~~~~~~+ \frac{\eta_{_k}\mathfrak{B}_k}{2} \mathbb{E}_k\Bigg[\Bigg\Vert \nabla{\mathfrak{L}^{({k})}(\bm{\omega}^{(k)})}- \sum_{b \in \Omega} \sum_{u\in \mathcal{U}_{b}}\frac{|\Upsilon_{u}(\widetilde{\tau}_{b}^{\downarrow,{(k)}})|}{|\Upsilon(\widetilde{\bm{\tau}}^{\downarrow,{(k)}})| \ell^{(k)}_{u}}\sum_{\ell=1}^{\ell^{(k)}_u}  {\nabla  \mathfrak{L}^{(k)}_u(\bm{\omega}^{(k),\ell-1}_{u})}\Bigg\Vert^2\Bigg]\nonumber\\
    &+ \frac{\beta\eta_{_k}^2\mathfrak{B}_k^2}{2} \sum_{b \in \Omega} \sum_{u\in \mathcal{U}_{b}}\left(\frac{\widehat{\lambda}_{u}^{(k)}|\Upsilon_{u}(\widetilde{\tau}_{b}^{\downarrow,{(k)}})|}{|\Upsilon^{\mathsf{s}}(\widetilde{\bm{\tau}}^{\downarrow,{(k)}})| \ell^{(k)}_{u}}\right)^2\mathbb{E}_k\Bigg[\Bigg\Vert \sum_{\ell=1}^{\ell^{(k)}_{u}}\mathscr{N}_u^{\mathsf{G},(k),\ell}\Bigg\Vert^2\Bigg]+ \beta\eta_{_k}^2\mathfrak{B}_k^2 \mathbb{E}_k\Bigg[\Bigg\Vert \sum_{b \in \Omega} \sum_{u\in \mathcal{U}_{b}}\frac{|\Upsilon_{u}(\widetilde{\tau}_{b}^{\downarrow,{(k)}})|}{|\Upsilon(\widetilde{\bm{\tau}}^{\downarrow,{(k)}})| \ell^{(k)}_{u}}\sum_{\ell=1}^{\ell^{(k)}_{u}} {\nabla  \mathfrak{L}^{(k)}_u(\bm{\omega}^{(k),\ell-1}_{u})}\Bigg\Vert^2\Bigg]\nonumber\\
    &+ \beta\eta_{_k}^2\mathfrak{B}_k^2 \mathbb{E}_k\left[\left\Vert\sum_{b \in \Omega}\mathscr{U}^{(k)}_{b}\right\Vert^2\right]+ \frac{\eta_{_k}\mathfrak{B}_k}{4} \mathbb{E}_k\left[\left\Vert\nabla{\mathfrak{L}^{({k})}(\bm{\omega}^{(k)})} \right\Vert^2\right]+ \eta_{_k} \mathfrak{B}_k \mathbb{E}_k\left[\left\Vert\sum_{b \in \Omega}\mathscr{U}^{(k)}_{b} \right\Vert^2\right]\nonumber\\
    &=\mathfrak{L}^{({k})}(\bm{\omega}^{(k)}) - \frac{\eta_{_k}\mathfrak{B}_k}{4} \mathbb{E}_k\Bigg[\Bigg\Vert\nabla{\mathfrak{L}^{({k})}(\bm{\omega}^{(k)})}\Bigg\Vert^2\Bigg]+\underbrace{\eta_{_k}\left(\beta\eta_{_k}-\frac{1}{2}\right)\mathfrak{B}_k \mathbb{E}_k\Bigg[\Bigg\Vert\sum_{b \in \Omega} \sum_{u\in \mathcal{U}_{b}}\frac{|\Upsilon_{u}(\widetilde{\tau}_{b}^{\downarrow,{(k)}})|}{|\Upsilon(\widetilde{\bm{\tau}}^{\downarrow,{(k)}})| \ell^{(k)}_{u}}\sum_{\ell=1}^{\ell^{(k)}_u}  {\nabla  \mathfrak{L}^{(k)}_u(\bm{\omega}^{(k),\ell-1}_{u})}\Bigg\Vert^2\Bigg]}_{(a)}\nonumber\\
    &~~~~~~~~~~~~~~+ \frac{\eta_{_k}\mathfrak{B}_k}{2} \mathbb{E}_k\Bigg[\Bigg\Vert \nabla{\mathfrak{L}^{({k})}(\bm{\omega}^{(k)})}- \sum_{b \in \Omega} \sum_{u\in \mathcal{U}_{b}}\frac{|\Upsilon_{u}(\widetilde{\tau}_{b}^{\downarrow,{(k)}})|}{|\Upsilon(\widetilde{\bm{\tau}}^{\downarrow,{(k)}})| \ell^{(k)}_{u}}\sum_{\ell=1}^{\ell^{(k)}_u}  {\nabla  \mathfrak{L}^{(k)}_u(\bm{\omega}^{(k),\ell-1}_{u})}\Bigg\Vert^2\Bigg]\nonumber\\
    &+ \frac{\beta\eta_{_k}^2\mathfrak{B}_k^2}{2} \sum_{b \in \Omega} \sum_{u\in \mathcal{U}_{b}}\left(\frac{\widehat{\lambda}_{u}^{(k)}|\Upsilon_{u}(\widetilde{\tau}_{b}^{\downarrow,{(k)}})|}{|\Upsilon^{\mathsf{s}}(\widetilde{\bm{\tau}}^{\downarrow,{(k)}})| \ell^{(k)}_{u}}\right)^2\mathbb{E}_k\Bigg[\Bigg\Vert \sum_{\ell=1}^{\ell^{(k)}_{u}}\mathscr{N}_u^{\mathsf{G},(k),\ell}\Bigg\Vert^2\Bigg]+ \eta_{_k}\left(1+\beta\eta_{_k}\right)\mathfrak{B}_k \mathbb{E}_k\left[\left\Vert\sum_{b \in \Omega}\mathscr{U}^{(k)}_{b}\right\Vert^2\right].
\end{align}
Assuming 
\begin{equation}\label{ineq:eta_cond1}
 \eta_{_k}\le\frac{1}{2\beta}   
\end{equation}
makes $(a)$ in the above expression negative and thus can be removed from the bound. Moreover, $\eta_{_k}\le\frac{1}{2\beta}$ implies $1+\beta\eta_{_k}\le\frac{3}{2}$. Applying these result to the above bound gives us
\begin{align}\label{ineq:main4_1}
    \mathbb{E}_k&\left[\mathfrak{L}^{({k})}(\bm{\omega}^{(k+1)})\right] \leq \mathfrak{L}^{({k})}(\bm{\omega}^{(k)}) - \frac{\eta_{_k}\mathfrak{B}_k}{4} \mathbb{E}_k\Bigg[\Bigg\Vert\nabla{\mathfrak{L}^{({k})}(\bm{\omega}^{(k)})}\Bigg\Vert^2\Bigg]\nonumber\\
    &~~~~~~~~~~~~~~+ \frac{\eta_{_k}\mathfrak{B}_k}{2} \mathbb{E}_k\Bigg[\Bigg\Vert \nabla{\mathfrak{L}^{({k})}(\bm{\omega}^{(k)})}- \sum_{b \in \Omega} \sum_{u\in \mathcal{U}_{b}}\frac{|\Upsilon_{u}(\widetilde{\tau}_{b}^{\downarrow,{(k)}})|}{|\Upsilon(\widetilde{\bm{\tau}}^{\downarrow,{(k)}})| \ell^{(k)}_{u}}\sum_{\ell=1}^{\ell^{(k)}_u}  {\nabla  \mathfrak{L}^{(k)}_u(\bm{\omega}^{(k),\ell-1}_{u})}\Bigg\Vert^2\Bigg]\nonumber\\
    &+ \frac{\beta\eta_{_k}^2\mathfrak{B}_k^2}{2} \sum_{b \in \Omega} \sum_{u\in \mathcal{U}_{b}}\left(\frac{\widehat{\lambda}_{u}^{(k)}|\Upsilon_{u}(\widetilde{\tau}_{b}^{\downarrow,{(k)}})|}{|\Upsilon^{\mathsf{s}}(\widetilde{\bm{\tau}}^{\downarrow,{(k)}})| \ell^{(k)}_{u}}\right)^2\mathbb{E}_k\Bigg[\underbrace{\Bigg\Vert \sum_{\ell=1}^{\ell^{(k)}_{u}}\mathscr{N}_u^{\mathsf{G},(k),\ell}\Bigg\Vert^2}_{(a)}\Bigg]+ \frac{3}{2}\eta_{_k}\mathfrak{B}_k \mathbb{E}_k\left[\left\Vert\sum_{b \in \Omega}\mathscr{U}^{(k)}_{b}\right\Vert^2\right].
\end{align}
Using $\Vert \sum_{i=1}^{n}\bm{x}_i\Vert^2=\sum_{i=1}^{n}\Vert\bm{x}_i\Vert^2+\sum_{\substack{j=1\\j\neq i}}^{n}\left\langle\bm{x}_i,\bm{x}_j\right\rangle$ to expand $(a)$ in \eqref{ineq:main4_1} and performing some algebraic manipulation, we rewrite \eqref{ineq:main4_1} as follows: 
\begin{align}\label{ineq:main4}
    \mathbb{E}_k&\left[\mathfrak{L}^{({k})}(\bm{\omega}^{(k+1)})\right] \leq \mathfrak{L}^{({k})}(\bm{\omega}^{(k)}) - \frac{\eta_{_k}\mathfrak{B}_k}{4} \mathbb{E}_k\Bigg[\Bigg\Vert\nabla{\mathfrak{L}^{({k})}(\bm{\omega}^{(k)})}\Bigg\Vert^2\Bigg]\nonumber\\
    &~~~~~~~~~~~~~~+ \frac{\eta_{_k}\mathfrak{B}_k}{2} \mathbb{E}_k\Bigg[\Bigg\Vert \nabla{\mathfrak{L}^{({k})}(\bm{\omega}^{(k)})}- \sum_{b \in \Omega} \sum_{u\in \mathcal{U}_{b}}\frac{|\Upsilon_{u}(\widetilde{\tau}_{b}^{\downarrow,{(k)}})|}{|\Upsilon(\widetilde{\bm{\tau}}^{\downarrow,{(k)}})| \ell^{(k)}_{u}}\sum_{\ell=1}^{\ell^{(k)}_u}  {\nabla  \mathfrak{L}^{(k)}_u(\bm{\omega}^{(k),\ell-1}_{u})}\Bigg\Vert^2\Bigg]+ \frac{3}{2}\eta_{_k}\mathfrak{B}_k \mathbb{E}_k\left[\left\Vert\sum_{b \in \Omega}\mathscr{U}^{(k)}_{b}\right\Vert^2\right]\nonumber\\
    &+ \frac{\beta\eta_{_k}^2\mathfrak{B}_k^2}{2} \sum_{b \in \Omega} \sum_{u\in \mathcal{U}_{b}}\left(\frac{\widehat{\lambda}_{u}^{(k)}|\Upsilon_{u}(\widetilde{\tau}_{b}^{\downarrow,{(k)}})|}{|\Upsilon^{\mathsf{s}}(\widetilde{\bm{\tau}}^{\downarrow,{(k)}})| \ell^{(k)}_{u}}\right)^2\left(\mathbb{E}_k\Bigg[\sum_{\ell=1}^{\ell^{(k)}_{u}}\Bigg\Vert\mathscr{N}_u^{\mathsf{G},(k),\ell}\Bigg\Vert^2\Bigg]+\underbrace{\mathbb{E}_k\Bigg[\sum_{\ell=1}^{\ell^{(k)}_{u}}\sum_{\substack{\ell'=1,\\\ell'\neq \ell}}^{\ell^{(k)}_{u}}\left\langle\mathscr{N}_u^{\mathsf{G},(k),\ell},\mathscr{N}_u^{\mathsf{G},(k),\ell'}\right\rangle\Bigg]}_{(x_4)}\right)\nonumber\\
    &\overset{(i)}{=}\mathfrak{L}^{({k})}(\bm{\omega}^{(k)}) - \frac{\eta_{_k}\mathfrak{B}_k}{4} \mathbb{E}_k\Bigg[\Bigg\Vert\nabla{\mathfrak{L}^{({k})}(\bm{\omega}^{(k)})}\Bigg\Vert^2\Bigg]\nonumber\\
    &~~~~~~~~~~~~~~+ \frac{\eta_{_k}\mathfrak{B}_k}{2} \mathbb{E}_k\Bigg[\Bigg\Vert \nabla{\mathfrak{L}^{({k})}(\bm{\omega}^{(k)})}- \sum_{b \in \Omega} \sum_{u\in \mathcal{U}_{b}}\frac{|\Upsilon_{u}(\widetilde{\tau}_{b}^{\downarrow,{(k)}})|}{|\Upsilon(\widetilde{\bm{\tau}}^{\downarrow,{(k)}})| \ell^{(k)}_{u}}\sum_{\ell=1}^{\ell^{(k)}_u}  {\nabla  \mathfrak{L}^{(k)}_u(\bm{\omega}^{(k),\ell-1}_{u})}\Bigg\Vert^2\Bigg]+ \frac{3}{2}\eta_{_k}\mathfrak{B}_k \mathbb{E}_k\left[\left\Vert\sum_{b \in \Omega}\mathscr{U}^{(k)}_{b}\right\Vert^2\right]\nonumber\\
    &+ \frac{\beta\eta_{_k}^2\mathfrak{B}_k^2}{2} \sum_{b \in \Omega} \sum_{u\in \mathcal{U}_{b}}\left(\frac{\widehat{\lambda}_{u}^{(k)}|\Upsilon_{u}(\widetilde{\tau}_{b}^{\downarrow,{(k)}})|}{|\Upsilon^{\mathsf{s}}(\widetilde{\bm{\tau}}^{\downarrow,{(k)}})| \ell^{(k)}_{u}}\right)^2\sum_{\ell=1}^{\ell^{(k)}_{u}}\mathbb{E}_k\Bigg[\Bigg\Vert\mathscr{N}_u^{\mathsf{G},(k),\ell}\Bigg\Vert^2\Bigg],
\end{align}
where $(i)$ uses the fact that $(x_4)$ is zero since the noise of gradient estimation of a specific user is independent and zero mean. We next aim to simplify $\mathbb{E}_k\left[\left\Vert\mathscr{N}^{\mathsf{G},(k)}_{u}\right\Vert^2\right]$. Considering the result of~\cite{lohr2019sampling} (Chapter 3, Eq. (3.5)), we have 
\begin{align}\label{eq:varGrad}
    \mathbb{E}_k\left[\left\Vert\mathscr{N}^{\mathsf{G},(k)}_{u}\right\Vert^2\right]=\mathbb{E}_k\left[\left\Vert \sum_{\xi\in \mathcal{B}^{(k),\ell}_{u}} \hspace{-3mm} {\frac{\nabla  f_u(\bm{\omega}^{(k),\ell-1}_{u},\xi)}{{B}_{u}(\widetilde{\tau}_{b}^{\downarrow,{(k)}})}} - \sum_{\xi\in\Upsilon_{u}(\widetilde{\tau}_{b}^{\downarrow,{(k)}})} \hspace{-3mm} {\frac{\nabla  f_u(\bm{\omega}^{(k),\ell-1}_{u},\xi)}{|\Upsilon_{u}(\widetilde{\tau}_{b}^{\downarrow,{(k)}})|}}\right\Vert^2\right]=\left(1-\frac{{B}_{u}(\widetilde{\tau}_{b}^{\downarrow,{(k)}})}{|\Upsilon_{u}(\widetilde{\tau}_{b}^{\downarrow,{(k)}})|} \right) \frac{\left(\sigma_{u}^{\ell-1}(\widetilde{\tau}_{b}^{\downarrow,{(k)}})\right)^2}{{B}_{u}(\widetilde{\tau}_{b}^{\downarrow,{(k)}})},
\end{align}
where $\sigma_{u}^{\ell-1}(\widetilde{\tau}_{b}^{\downarrow,{(k)}})$ denotes the \textit{variance of the gradients} evaluated at the particular local descent iteration $\ell-1$ for the parameter realization $\bm{\omega}^{(k),\ell-1}_{u}$, and $\left(\sigma_{u}^{\ell{-}1}(\widetilde{\tau}_{b}^{\downarrow,{(k)}})\right)^2$ is calculated as follows:
\begin{align}\label{eq:dataVar0}
     \left(\sigma_{u}^{\ell{-}1}(\widetilde{\tau}_{b}^{\downarrow,{(k)}})\right)^2&= \frac{\sum_{\xi\in\Upsilon_{u}(\widetilde{\tau}_{b}^{\downarrow,{(k)}}) } \Big\Vert \nabla  f_u(\bm{\omega}^{(k),\ell{-}1}_{u},\xi){-}{\sum_{\tilde{\xi}\in\Upsilon_{u}(\widetilde{\tau}_{b}^{\downarrow,{(k)}})} }\frac{\nabla  f_u(\bm{\omega}^{(k),\ell{-}1}_{u},\tilde{\xi})}{|\Upsilon_{u}(\widetilde{\tau}_{b}^{\downarrow,{(k)}})|}\Big\Vert^2}{|\Upsilon_{u}(\widetilde{\tau}_{b}^{\downarrow,{(k)}})|{-}1}
     \nonumber\\
     &=\frac{\sum_{\xi\in\Upsilon_{u}(\widetilde{\tau}_{b}^{\downarrow,{(k)}}) }\frac{1}{\left(|\Upsilon_{u}(\widetilde{\tau}_{b}^{\downarrow,{(k)}})|\right)^2} \Big\Vert |\Upsilon_{u}(\widetilde{\tau}_{b}^{\downarrow,{(k)}})|\nabla  f_u(\bm{\omega}^{(D),\ell{-}1}_{u},\xi){-}{\sum_{\tilde{\xi}\in\Upsilon_{u}(\widetilde{\tau}_{b}^{\downarrow,{(k)}})} }{\nabla  f_u(\bm{\omega}^{(k),\ell{-}1}_{u},\tilde{\xi})}\Big\Vert^2}{|\Upsilon_{u}(\widetilde{\tau}_{b}^{\downarrow,{(k)}})|{-}1}.
\end{align}
Using the Cauchy-Schwarz inequality, we can bound \eqref{eq:dataVar0} as follows:
 \begin{align}\label{eq:dataVar}
     &\left(\sigma_{u}^{\ell{-}1}(\widetilde{\tau}_{b}^{\downarrow,{(k)}})\right)^2\leq \frac{\sum_{\xi\in\Upsilon_{u}(\widetilde{\tau}_{b}^{\downarrow,{(k)}}) }\frac{|\Upsilon_{u}(\widetilde{\tau}_{b}^{\downarrow,{(k)}})|{-}1}{\left(|\Upsilon_{u}(\widetilde{\tau}_{b}^{\downarrow,{(k)}})|\right)^2} \sum_{\tilde{\xi}\in\Upsilon_{u}(\widetilde{\tau}_{b}^{\downarrow,{(k)}})} \Big\Vert \nabla  f_u(\bm{\omega}^{(k),\ell{-}1}_{u},\xi){-}{\nabla  f_u(\bm{\omega}^{(k),\ell{-}1}_{u},\tilde{\xi})}\Big\Vert^2}{|\Upsilon_{u}(\widetilde{\tau}_{b}^{\downarrow,{(k)}})|{-}1}
     \nonumber\\&
     \leq \frac{\sum_{\xi\in\Upsilon_{u}(\widetilde{\tau}_{b}^{\downarrow,{(k)}}) }\frac{(|\Upsilon_{u}(\widetilde{\tau}_{b}^{\downarrow,{(k)}})|{-}1)\Theta^2}{\left(|\Upsilon_{u}(\widetilde{\tau}_{b}^{\downarrow,{(k)}})|\right)^2} \sum_{\tilde{\xi}\in\Upsilon_{u}(\widetilde{\tau}_{b}^{\downarrow,{(k)}})} \Big\Vert \bm{\xi}{-}\tilde{\bm{\xi}}\Big\Vert^2}{|\Upsilon_{u}(\widetilde{\tau}_{b}^{\downarrow,{(k)}})|{-}1}\nonumber\\
     &= \frac{(|\Upsilon_{u}(\widetilde{\tau}_{b}^{\downarrow,{(k)}})|{-}1)\Theta^2}{\left(|\Upsilon_{u}(\widetilde{\tau}_{b}^{\downarrow,{(k)}})|\right)^2}\frac{\sum_{\xi\in\Upsilon_{u}(\widetilde{\tau}_{b}^{\downarrow,{(k)}}) } \sum_{\tilde{\xi}\in\Upsilon_{u}(\widetilde{\tau}_{b}^{\downarrow,{(k)}})} \Big\Vert \bm{\xi}{-}\tilde{\bm{\xi}}{+}\bm{\mu}_{u}(\widetilde{\tau}_{b}^{\downarrow,{(k)}}){-}\bm{\mu}_{u}(\widetilde{\tau}_{b}^{\downarrow,{(k)}})\Big\Vert^2}{|\Upsilon_{u}(\widetilde{\tau}_{b}^{\downarrow,{(k)}})|{-}1}\nonumber\\
     &=\frac{(|\Upsilon_{u}(\widetilde{\tau}_{b}^{\downarrow,{(k)}})|{-}1)\Theta^2}{\left(|\Upsilon_{u}(\widetilde{\tau}_{b}^{\downarrow,{(k)}})|\right)^2}\frac{\displaystyle\sum_{\xi\in\Upsilon_{u}(\widetilde{\tau}_{b}^{\downarrow,{(k)}}) } \sum_{\tilde{\xi}\in\Upsilon_{u}(\widetilde{\tau}_{b}^{\downarrow,{(k)}})} \left[\Big\Vert \bm{\xi}{-}  \bm{\mu}_{u}(\widetilde{\tau}_{b}^{\downarrow,{(k)}}) \Big\Vert^2 {+} \Big\Vert \tilde{\bm{\xi}}{-}  \bm{\mu}_{u}(\widetilde{\tau}_{b}^{\downarrow,{(k)}})\Big\Vert^2 {-} 2\left\langle\bm{\xi}{-}  \bm{\mu}_{u}(\widetilde{\tau}_{b}^{\downarrow,{(k)}}),\tilde{\bm{\xi}}{-}  \bm{\mu}_{u}(\widetilde{\tau}_{b}^{\downarrow,{(k)}})\right\rangle \right]}{|\Upsilon_{u}(\widetilde{\tau}_{b}^{\downarrow,{(k)}})|{-}1}\nonumber\\
     &\overset{(ii)}{=}\frac{(|\Upsilon_{u}(\widetilde{\tau}_{b}^{\downarrow,{(k)}})|{-}1)\Theta^2}{\left(|\Upsilon_{u}(\widetilde{\tau}_{b}^{\downarrow,{(k)}})|\right)^2}\frac{ |\Upsilon_{u}(\widetilde{\tau}_{b}^{\downarrow,{(k)}})| \sum_{\xi\in\Upsilon_{u}(\widetilde{\tau}_{b}^{\downarrow,{(k)}}) } \Big\Vert \bm{\xi}{-}  \bm{\mu}_{u}(\widetilde{\tau}_{b}^{\downarrow,{(k)}}) \Big\Vert^2 {+}  |\Upsilon_{u}(\widetilde{\tau}_{b}^{\downarrow,{(k)}})| \sum_{\tilde{\xi}\in\Upsilon_{u}(\widetilde{\tau}_{b}^{\downarrow,{(k)}})} \Big\Vert \tilde{\bm{\xi}}{-}  \bm{\mu}_{u}(\widetilde{\tau}_{b}^{\downarrow,{(k)}})\Big\Vert^2}{|\Upsilon_{u}(\widetilde{\tau}_{b}^{\downarrow,{(k)}})|{-}1}\nonumber\\
     &=\frac{2(|\Upsilon_{u}(\widetilde{\tau}_{b}^{\downarrow,{(k)}})|{-}1)\Theta^2}{|\Upsilon_{u}(\widetilde{\tau}_{b}^{\downarrow,{(k)}})|}\left(\sigma_{u}(\widetilde{\tau}_{b}^{\downarrow,{(k)}})\right)^2, 
\end{align}
where $\bm{\mu}_{u}(\widetilde{\tau}_{b}^{\downarrow,{(k)}})$ and $\sigma_{u}(\widetilde{\tau}_{b}^{\downarrow,{(k)}})$ denote the mean and sample variance of data points in dataset $\mathcal|\Upsilon_{u}(\widetilde{\tau}_{b}^{\downarrow,{(k)}})|$, which are gradient independent. Further, $\Theta {=} \max_{b\in\Omega,u{\in}\mathcal{U}_{b}}\{\Theta_{u} \}$. Also, $\bm{\xi}$ refers to the feature vector of data point $\xi$. Furthermore, $(ii)$ used the fact that $\sum_{\xi\in\Upsilon_{u}(\widetilde{\tau}_{b}^{\downarrow,{(k)}}) } (\bm{\xi}-  \bm{\mu}_{u}(\widetilde{\tau}_{b}^{\downarrow,{(k)}})) =\bm{0}$. Replacing the above result in~\eqref{eq:varGrad}, inequality \eqref{ineq:main4} can be written as follows:
\begin{align}\label{ineq:main5}
    \mathbb{E}_k&\left[\mathfrak{L}^{({k})}(\bm{\omega}^{(k+1)})\right] \leq \mathfrak{L}^{({k})}(\bm{\omega}^{(k)}) - \frac{\eta_{_k}\mathfrak{B}_k}{4} \mathbb{E}_k\Bigg[\Bigg\Vert\nabla{\mathfrak{L}^{({k})}(\bm{\omega}^{(k)})}\Bigg\Vert^2\Bigg]\nonumber\\
    &~~~~+ \frac{\eta_{_k}\mathfrak{B}_k}{2} \underbrace{\mathbb{E}_k\Bigg[\Bigg\Vert \nabla{\mathfrak{L}^{({k})}(\bm{\omega}^{(k)})}- \sum_{b \in \Omega} \sum_{u\in \mathcal{U}_{b}}\frac{|\Upsilon_{u}(\widetilde{\tau}_{b}^{\downarrow,{(k)}})|}{|\Upsilon(\widetilde{\bm{\tau}}^{\downarrow,{(k)}})| \ell^{(k)}_{u}}\sum_{\ell=1}^{\ell^{(k)}_u}  {\nabla  \mathfrak{L}^{(k)}_u(\bm{\omega}^{(k),\ell-1}_{u})}\Bigg\Vert^2\Bigg]}_{(d)}+ \frac{3}{2}\eta_{_k}\mathfrak{B}_k \mathbb{E}_k\left[\left\Vert\sum_{b \in \Omega}\mathscr{U}^{(k)}_{b}\right\Vert^2\right]\nonumber\\
    &~~~~+ \frac{\beta\eta_{_k}^2\mathfrak{B}_k^2}{2} \sum_{b \in \Omega} \sum_{u\in \mathcal{U}_{b}}\frac{\left(\widehat{\lambda}_{u}^{(k)}|\Upsilon_{u}(\widetilde{\tau}_{b}^{\downarrow,{(k)}})|\right)^2}{\left(|\Upsilon^{\mathsf{s}}(\widetilde{\bm{\tau}}^{\downarrow,{(k)}})|\right)^2 \ell^{(k)}_{u}}\left(1-\frac{{B}_{u}(\widetilde{\tau}_{b}^{\downarrow,{(k)}})}{|\Upsilon_{u}(\widetilde{\tau}_{b}^{\downarrow,{(k)}})|} \right) \frac{2(|\Upsilon_{u}(\widetilde{\tau}_{b}^{\downarrow,{(k)}})|-1)\Theta^2\left(\sigma_{u}(\widetilde{\tau}_{b}^{\downarrow,{(k)}})\right)^2}{|\Upsilon_{u}(\widetilde{\tau}_{b}^{\downarrow,{(k)}})|{B}_{u}(\widetilde{\tau}_{b}^{\downarrow,{(k)}})}.
\end{align}
In the following, we aim to bound term $(d)$.
\begin{align}\label{eq:res3}
        (d) &=\mathbb{E}_k\left[\left\Vert \nabla{\mathfrak{L}^{(k)}(\bm{\omega}^{(k)})}-\sum_{b \in \Omega} \sum_{u\in \mathcal{U}_{b}}\frac{|\Upsilon_{u}(\widetilde{\tau}_{b}^{\downarrow,{(k)}})|}{|\Upsilon(\widetilde{\bm{\tau}}^{\downarrow,{(k)}})| \ell^{(k)}_u} \sum_{\ell=1}^{\ell^{(k)}_u}  {\nabla  \mathfrak{L}^{(k)}_u(\bm{\omega}^{(k),\ell-1}_{u})}\right\Vert^2\right]\nonumber\\
    &\overset{(i)}{\leq}   \sum_{b \in \Omega} \sum_{u\in \mathcal{U}_{b}}\frac{|\Upsilon_{u}(\widetilde{\tau}_{b}^{\downarrow,{(k)}})|}{|\Upsilon(\widetilde{\bm{\tau}}^{\downarrow,{(k)}})| }\mathbb{E}_k\left[\left\Vert \nabla{\mathfrak{L}^{(k)}_u(\bm{\omega}^{(k)})}-  \frac{1}{\ell^{(k)}_u} \sum_{\ell=1}^{\ell^{(k)}_u}  {\nabla  \mathfrak{L}^{(k)}_u(\bm{\omega}^{(k),\ell-1}_{u})} \right\Vert^2\right]\nonumber\\
    &\overset{(ii)}{\leq} \sum_{b \in \Omega} \sum_{u\in \mathcal{U}_{b}}\frac{|\Upsilon_{u}(\widetilde{\tau}_{b}^{\downarrow,{(k)}})|}{|\Upsilon(\widetilde{\bm{\tau}}^{\downarrow,{(k)}})| \ell_u^{(k)}}\sum_{\ell=1}^{\ell^{(k)}_u} \mathbb{E}_k\left[\left\Vert
   \nabla{\mathfrak{L}^{(k)}_u(\bm{\omega}^{(k)})}- {\nabla  \mathfrak{L}^{(k)}_u(\bm{\omega}^{(k),\ell-1}_{u})}\right\Vert^2\right] \nonumber\\
   &\overset{(iii)}{\leq} \beta^2\sum_{b \in \Omega} \sum_{u\in \mathcal{U}_{b}}\frac{|\Upsilon_{u}(\widetilde{\tau}_{b}^{\downarrow,{(k)}})|}{|\Upsilon(\widetilde{\bm{\tau}}^{\downarrow,{(k)}})| \ell_u^{(k)}}\sum_{\ell=1}^{\ell^{(k)}_u} \mathbb{E}_k\left[\left\Vert\bm{\omega}^{(k)}-\bm{\omega}^{(k),\ell-1}_{u}\right\Vert^2\right],
\end{align}
where in inequalities $(i)$ and $(ii)$ in~\eqref{eq:res3} we used Jenson's inequality. Inequality $(iii)$ is the result of Assumption \eqref{Assup:lossFun}. To bound $\mathbb{E}_k\left[\left\Vert\bm{\omega}^{(k)}-\bm{\omega}^{(k),\ell-1}_{u}\right\Vert^2\right]$, we take the following steps.
\begin{align}\label{eq:res4}
    &\mathbb{E}_k\left[\left\Vert\bm{\omega}^{(k)}-\bm{\omega}^{(k),\ell-1}_{u}\right\Vert^2\right]\overset{(i)}{=}\eta_k^2  \mathbb{E}_k\left[\left\Vert
    \sum_{\ell'=1}^{\ell-1}\sum_{\xi\in \mathcal{B}^{(k),\ell'}_{u}} \hspace{-1mm} {\frac{\nabla  f_u(\bm{\omega}^{(k),\ell'-1}_{u},\xi)}{{B}_{u}(\widetilde{\tau}_{b}^{\downarrow,{(k)}})}}\right\Vert^2 \right]
    \nonumber\\
    &= \eta_k^2 \mathbb{E}_k\vast[\Bigg\Vert\sum_{\ell'=1}^{\ell-1} \sum_{\xi\in \mathcal{B}^{(k),\ell'}_{u}} \hspace{-1mm} {\frac{\nabla  f_u(\bm{\omega}^{(k),\ell'-1}_{u},\xi)}{{B}_{u}(\widetilde{\tau}_{b}^{\downarrow,{(k)}})}}\nonumber-\frac{1}{|\Upsilon_{u}(\widetilde{\tau}_{b}^{\downarrow,{(k)}})|} \sum_{\ell'=1}^{\ell-1} \sum_{\xi\in\Upsilon_{u}(\widetilde{\tau}_{b}^{\downarrow,{(k)}})}\nabla  f_u(\bm{\omega}^{(k),\ell'-1}_{u},\xi)\nonumber\\
    &~~~~~~~~~~~~~~~~~~~~~~~~~~~~~~~~~~~~~~~~~~~~~~~~~~~~~~+ \frac{1}{|\Upsilon_{u}(\widetilde{\tau}_{b}^{\downarrow,{(k)}})|} \sum_{\ell'=1}^{\ell-1}\sum_{\xi\in\Upsilon_{u}(\widetilde{\tau}_{b}^{\downarrow,{(k)}})}\nabla  f_u(\bm{\omega}^{(k),\ell'-1}_{u},\xi)\Bigg\Vert^2 \vast]\nonumber \nonumber\\
    & \overset{(ii)}{\leq }2 \eta_k^2  \mathbb{E}_k\left[\Bigg\Vert\sum_{\ell'=1}^{\ell-1} \sum_{\xi\in \mathcal{B}^{(k),\ell'}_{u}} \hspace{-1mm} {\frac{\nabla  f_u(\bm{\omega}^{(k),\ell'-1}_{u},\xi)}{{B}_{u}(\widetilde{\tau}_{b}^{\downarrow,{(k)}})}} - \frac{1}{|\Upsilon_{u}(\widetilde{\tau}_{b}^{\downarrow,{(k)}})|} \sum_{\ell'=1}^{\ell-1}\sum_{\xi\in\Upsilon_{u}(\widetilde{\tau}_{b}^{\downarrow,{(k)}})}\nabla f_u(\bm{\omega}^{(k),\ell'-1}_{u},\xi)\Bigg\Vert^2\right] \nonumber\\
    &~~~~~~~~~~~~~~~~~~~~~~~~~~~~~~~~~~~~~~~~~+2 \eta_k^2 \mathbb{E}_k\left[\Bigg\Vert  \frac{1}{|\Upsilon_{u}(\widetilde{\tau}_{b}^{\downarrow,{(k)}})|} \sum_{\ell'=1}^{\ell-1}\sum_{\xi\in\Upsilon_{u}(\widetilde{\tau}_{b}^{\downarrow,{(k)}})}\nabla  f_u(\bm{\omega}^{(k),\ell'-1}_{u},\xi)\Bigg\Vert^2 \right]\nonumber \nonumber\\
    &\overset{(iii)}{=} \underbrace{2 \eta_k^2 \sum_{\ell'=1}^{\ell-1}\mathbb{E}_k\left[\Bigg\Vert\sum_{\xi\in \mathcal{B}^{(k),\ell'}_{u}} \hspace{-1mm} {\frac{\nabla  f_u(\bm{\omega}^{(k),\ell'-1}_{u},\xi)}{{B}_{u}(\widetilde{\tau}_{b}^{\downarrow,{(k)}})}} - \frac{1}{|\Upsilon_{u}(\widetilde{\tau}_{b}^{\downarrow,{(k)}})|} \sum_{\xi\in\Upsilon_{u}(\widetilde{\tau}_{b}^{\downarrow,{(k)}})}\nabla f_u(\bm{\omega}^{(k),\ell'-1}_{u},\xi)\Bigg\Vert^2\right]}_{(e)} \nonumber\\
    &~~~~~~~~~~~~~~~~~~~~~~~~~~~~~~~~~~~~~~~~~+\underbrace{2 \eta_k^2 \mathbb{E}_k\left[\Bigg\Vert  \frac{1}{|\Upsilon_{u}(\widetilde{\tau}_{b}^{\downarrow,{(k)}})|} \sum_{\ell'=1}^{\ell-1}\sum_{\xi\in\Upsilon_{u}(\widetilde{\tau}_{b}^{\downarrow,{(k)}})}\nabla  f_u(\bm{\omega}^{(k),\ell'-1}_{u},\xi)\Bigg\Vert^2\right]}_{(f)},
\end{align}
where to obtain~\eqref{eq:res4}, in equality $(i)$ we used~\eqref{eq:WeightupdateStrat}, in inequality $(ii)$ we used  Cauchy–Schwarz inequality, and $(ii)$
uses the fact that each local gradient estimation is unbiased (i.e., zero mean) conditioned on its own local parameter and the law of total expectation (across the mini-batches $\ell'$). Similar to~\eqref{eq:varGrad}, using~\eqref{eq:dataVar} we upper bound term $(e)$ in~\eqref{eq:res4} as follows:
\begin{equation}\label{eq:A2}
    (e) \leq 4 \Theta^2 \eta_k^2 \sum_{\ell'=1}^{\ell-1} \left(1-\frac{{B}_{u}(\widetilde{\tau}_{b}^{\downarrow,{(k)}})}{|\Upsilon_{u}(\widetilde{\tau}_{b}^{\downarrow,{(k)}})|} \right)  \frac{{(|\Upsilon_{u}(\widetilde{\tau}_{b}^{\downarrow,{(k)}})|-1)}
     \left(\sigma_{u}(\widetilde{\tau}_{b}^{\downarrow,{(k)}})\right)^2}{|\Upsilon_{u}(\widetilde{\tau}_{b}^{\downarrow,{(k)}})|{B}_{u}(\widetilde{\tau}_{b}^{\downarrow,{(k)}})}.
\end{equation}
Also, for term $(f)$, we have
\begin{align}\label{eq:B2}
   (f)&
   \overset{(i)}{\leq} 2 \eta_k^2 (\ell-1)\sum_{\ell'=1}^{\ell-1}\mathbb{E}_k\left[\Bigg\Vert \frac{1}{|\Upsilon_{u}(\widetilde{\tau}_{b}^{\downarrow,{(k)}})|} \sum_{\xi\in\Upsilon_{u}(\widetilde{\tau}_{b}^{\downarrow,{(k)}})}\nabla  f_u(\bm{\omega}^{(k),\ell'-1}_{u},\xi) - \nabla \mathfrak{L}^{(k)}_u(\bm{\omega}^{(k)})+ \nabla \mathfrak{L}^{(k)}_u(\bm{\omega}^{(k)})
    \Bigg\Vert^2 \right]
   \nonumber \nonumber\\
    &\overset{(ii)}{\leq} 4 \eta_k^2 (\ell-1)\sum_{\ell'=1}^{\ell-1}\mathbb{E}_k\left[\Bigg\Vert \nabla  \mathfrak{L}^{(k)}_u(\bm{\omega}^{(k),\ell'-1}_{u}) - \nabla \mathfrak{L}^{(k)}_u(\bm{\omega}^{(k)})\Bigg\Vert^2\right] + 4 \eta_k^2 (\ell-1)\sum_{\ell'=1}^{\ell-1} \Bigg\Vert\nabla \mathfrak{L}^{(k)}_u(\bm{\omega}^{(k)})
    \Bigg\Vert^2 
    \nonumber \nonumber\\
    &\leq 4\eta_k^2\beta^2 (\ell-1) \sum_{\ell'=1}^{\ell-1}\mathbb{E}_k\left[\Bigg\Vert \bm{\omega}^{(k),\ell'-1}_{u} - \bm{\omega}^{(k)}\Bigg\Vert^2\right]+ 4 \eta_k^2 (\ell-1)\sum_{\ell'=1}^{\ell-1} \Bigg\Vert\nabla \mathfrak{L}^{(k)}_u(\bm{\omega}^{(k)})
    \Bigg\Vert^2 ,
\end{align}
where inequalities $(i)$ and $(ii)$ are obtained via Cauchy-Schwarz inequality. Replacing the result of~\eqref{eq:A2} and~\eqref{eq:B2} back in~\eqref{eq:res4} we have
\begin{align}
  \mathbb{E}_k\left[  \left\Vert
  \bm{\omega}^{(k)}-
     \bm{\omega}^{(k),\ell-1}_{u}
    \right\Vert^2\right] \leq& 4 \Theta^2 \eta_k^2 \sum_{\ell'=1}^{\ell-1} \left(1-\frac{{B}_{u}(\widetilde{\tau}_{b}^{\downarrow,{(k)}})}{|\Upsilon_{u}(\widetilde{\tau}_{b}^{\downarrow,{(k)}})|} \right)  \frac{{(|\Upsilon_{u}(\widetilde{\tau}_{b}^{\downarrow,{(k)}})|-1)}
     \left(\sigma_{u}(\widetilde{\tau}_{b}^{\downarrow,{(k)}})\right)^2}{|\Upsilon_{u}(\widetilde{\tau}_{b}^{\downarrow,{(k)}})|{B}_{u}(\widetilde{\tau}_{b}^{\downarrow,{(k)}})} \nonumber \nonumber\\
     & + 4\eta_k^2\beta^2 (\ell-1) \sum_{\ell'=1}^{\ell-1}\mathbb{E}_k\left[\Bigg\Vert \bm{\omega}^{(k),\ell'-1}_{u} - \bm{\omega}^{(k)}\Bigg\Vert^2\right]+ 4 \eta_k^2 (\ell-1)\sum_{\ell'=1}^{\ell-1} \Bigg\Vert\nabla \mathfrak{L}^{(k)}_u(\bm{\omega}^{(k)})
    \Bigg\Vert^2, 
\end{align}
which implies
\begin{align}
   &\sum_{\ell=1}^{\ell_u^{(k)}} \mathbb{E}_k\left[\left\Vert\bm{\omega}^{(k)}-\bm{\omega}^{(k),\ell-1}_{u}\right\Vert^2\right] \leq 4 \Theta^2 \eta_k^2\sum_{\ell=1}^{\ell_u^{(k)}} \sum_{\ell'=1}^{\ell-1} \left(1-\frac{{B}_{u}(\widetilde{\tau}_{b}^{\downarrow,{(k)}})}{|\Upsilon_{u}(\widetilde{\tau}_{b}^{\downarrow,{(k)}})|} \right)  \frac{{(|\Upsilon_{u}(\widetilde{\tau}_{b}^{\downarrow,{(k)}})|-1)} \left(\sigma_{u}(\widetilde{\tau}_{b}^{\downarrow,{(k)}})\right)^2}{|\Upsilon_{u}(\widetilde{\tau}_{b}^{\downarrow,{(k)}})|{B}_{u}(\widetilde{\tau}_{b}^{\downarrow,{(k)}})} \nonumber \\
     & + 4\eta_k^2\beta^2 \sum_{\ell=1}^{\ell_u^{(k)}}(\ell-1) \sum_{\ell'=1}^{\ell-1}\mathbb{E}_k\left[\Bigg\Vert \bm{\omega}^{(k),\ell'-1}_{u} - \bm{\omega}^{(k)}\Bigg\Vert^2\right]+ 4 \eta_k^2 \sum_{\ell=1}^{\ell_u^{(k)}}(\ell-1)\sum_{\ell'=1}^{\ell-1} \Bigg\Vert\nabla \mathfrak{L}^{(k)}_u(\bm{\omega}^{(k)}) \Bigg\Vert^2\nonumber\\
     &\leq 4 \Theta^2 \eta_k^2 \left(\ell_u^{(k)}\right)\left(\ell_u^{(k)}-1\right) \left(1-\frac{{B}_{u}(\widetilde{\tau}_{b}^{\downarrow,{(k)}})}{|\Upsilon_{u}(\widetilde{\tau}_{b}^{\downarrow,{(k)}})|} \right)  \frac{{(|\Upsilon_{u}(\widetilde{\tau}_{b}^{\downarrow,{(k)}})|-1)}\left(\sigma_{u}(\widetilde{\tau}_{b}^{\downarrow,{(k)}})\right)^2}{|\Upsilon_{u}(\widetilde{\tau}_{b}^{\downarrow,{(k)}})|{B}_{u}(\widetilde{\tau}_{b}^{\downarrow,{(k)}})} \nonumber\\
     & + 4\eta_k^2\beta^2 \left(\ell_u^{(k)}\right)\left(\ell_u^{(k)}-1\right) \sum_{\ell=1}^{\ell_u^{(k)}}\mathbb{E}_k\left[\Bigg\Vert \bm{\omega}^{(k),\ell-1}_{u} - \bm{\omega}^{(k)}\Bigg\Vert^2\right]+ 4 \eta_k^2 \left(\ell_u^{(k)}\right)\left(\ell_u^{(k)}-1\right) \sum_{\ell=1}^{\ell_u^{(k)}} \Bigg\Vert\nabla \mathfrak{L}^{(k)}_u(\bm{\omega}^{(k)})\Bigg\Vert^2.
\end{align}
Assuming $\eta_k \leq \left(2\beta\sqrt{\ell_u^{(k)}(\ell_u^{(k)}-1)}\right)^{-1},\forall u$, the above inequality implies
\begin{align}\label{eq:f_bound}
    \sum_{\ell=1}^{\ell_u^{(k)}} \mathbb{E}_k\left[\left\Vert\bm{\omega}^{(k)}- \bm{\omega}^{(k),\ell-1}_{u} \right\Vert^2\right] \leq&\frac{4 \Theta^2 \eta_k^2 \ell_u^{(k)}\left(\ell_u^{(k)}-1\right)}{1- 4\eta_k^2\beta^2 \ell_u^{(k)}\left(\ell_u^{(k)}-1\right)} \left(1-\frac{{B}_{u}(\widetilde{\tau}_{b}^{\downarrow,{(k)}})}{|\Upsilon_{u}(\widetilde{\tau}_{b}^{\downarrow,{(k)}})|} \right)  \frac{{(|\Upsilon_{u}(\widetilde{\tau}_{b}^{\downarrow,{(k)}})|-1)}\left(\sigma_{u}(\widetilde{\tau}_{b}^{\downarrow,{(k)}})\right)^2}{|\Upsilon_{u}(\widetilde{\tau}_{b}^{\downarrow,{(k)}})|{B}_{u}(\widetilde{\tau}_{b}^{\downarrow,{(k)}})}\nonumber\\
     & + \frac{4 \eta_k^2 \left(\ell_u^{(k)}\right)^2\left(\ell_u^{(k)}-1\right)}{1- 4\eta_k^2\beta^2 \ell_u^{(k)}\left(\ell_u^{(k)}-1\right)}  \Bigg\Vert\nabla \mathfrak{L}^{(k)}_u(\bm{\omega}^{(k)})\Bigg\Vert^2.
\end{align}
Replacing this result back in~\eqref{eq:res3} we have
\begin{align}\label{eq:res5}
    (d) \leq &{4 \beta^2\Theta^2 \eta_k^2}\sum_{b \in \Omega} \sum_{u\in \mathcal{U}_{b}}\frac{|\Upsilon_{u}(\widetilde{\tau}_{b}^{\downarrow,{(k)}})|}{|\Upsilon(\widetilde{\bm{\tau}}^{\downarrow,{(k)}})| \ell_u^{(k)}}\frac{\left(\ell_u^{(k)}\right)\left(\ell_u^{(k)}-1\right)}{1- 4\eta_k^2\beta^2 \ell_u^{(k)}\left(\ell_u^{(k)}-1\right)}\left(1-\frac{{B}_{u}(\widetilde{\tau}_{b}^{\downarrow,{(k)}})}{|\Upsilon_{u}(\widetilde{\tau}_{b}^{\downarrow,{(k)}})|} \right)  \frac{{(|\Upsilon_{u}(\widetilde{\tau}_{b}^{\downarrow,{(k)}})|-1)}
    \left(\sigma_{u}(\widetilde{\tau}_{b}^{\downarrow,{(k)}})\right)^2}{|\Upsilon_{u}(\widetilde{\tau}_{b}^{\downarrow,{(k)}})|{B}_{u}(\widetilde{\tau}_{b}^{\downarrow,{(k)}})}\nonumber  \nonumber\\
    &+ {4 \eta_k^2\beta^2 }\sum_{b \in \Omega} \sum_{u\in \mathcal{U}_{b}}\frac{|\Upsilon_{u}(\widetilde{\tau}_{b}^{\downarrow,{(k)}})|}{|\Upsilon(\widetilde{\bm{\tau}}^{\downarrow,{(k)}})| \ell_u^{(k)}}\frac{\left(\ell_u^{(k)}\right)^2\left(\ell_u^{(k)}-1\right)}{{1- 4\eta_k^2\beta^2 \ell_u^{(k)}\left(\ell_u^{(k)}-1\right)}}  \Bigg\Vert\nabla \mathfrak{L}^{(k)}_u(\bm{\omega}^{(k)}) \Bigg\Vert^2.
    \end{align}
Assuming $\ell^{(k)}_{\mathsf{max}}=\max_{b\in\Omega,u\in \mathcal{U}_{b}}\{\ell^{(k)}_u\}$, and imposing 
 \begin{align}\label{eq:firstConStep}
     1- 4\eta_k^2\beta^2 \ell_{\mathsf{max}}^{(k)}\left(\ell_{\mathsf{max}}^{(k)}-1\right) \geq 0\Rightarrow \eta_k \leq \frac{1}{2\beta} \sqrt{\frac{1}{ \ell_{\mathsf{max}}^{(k)}\left(\ell_{\mathsf{max}}^{(k)}-1\right)}}
 \end{align}
on $\eta_k$ result in  
 \begin{align}\label{eq:res09444}
    (d) \leq &{4 \beta^2\Theta^2 \eta_k^2}\sum_{b \in \Omega} \sum_{u\in \mathcal{U}_{b}}\frac{|\Upsilon_{u}(\widetilde{\tau}_{b}^{\downarrow,{(k)}})|}{|\Upsilon(\widetilde{\bm{\tau}}^{\downarrow,{(k)}})| \ell_u^{(k)}}\frac{\left(\ell_u^{(k)}\right)\left(\ell_u^{(k)}-1\right)}{1- 4\eta_k^2\beta^2 \ell_u^{(k)}\left(\ell_u^{(k)}-1\right)}\left(1-\frac{{B}_{u}(\widetilde{\tau}_{b}^{\downarrow,{(k)}})}{|\Upsilon_{u}(\widetilde{\tau}_{b}^{\downarrow,{(k)}})|} \right)  \frac{{(|\Upsilon_{u}(\widetilde{\tau}_{b}^{\downarrow,{(k)}})|-1)}
    \left(\sigma_{u}(\widetilde{\tau}_{b}^{\downarrow,{(k)}})\right)^2}{|\Upsilon_{u}(\widetilde{\tau}_{b}^{\downarrow,{(k)}})|{B}_{u}(\widetilde{\tau}_{b}^{\downarrow,{(k)}})}\nonumber  \nonumber\\
    &+ \frac{4 \eta_k^2\beta^2 \left(\ell_{\mathsf{max}}^{(k)}\right)\left(\ell_{\mathsf{max}}^{(k)}-1\right)}{1- 4\eta_k^2\beta^2 \ell_{\mathsf{max}}^{(k)}\left(\ell_{\mathsf{max}}^{(k)}-1\right)}\left(\sum_{b \in \Omega} \sum_{u\in \mathcal{U}_{b}}\frac{|\Upsilon_{u}(\widetilde{\tau}_{b}^{\downarrow,{(k)}})|}{|\Upsilon(\widetilde{\bm{\tau}}^{\downarrow,{(k)}})| }\Bigg\Vert \nabla \mathfrak{L}^{(k)}_u(\bm{\omega}^{(k)})
    \Bigg\Vert^2\right).
\end{align}
Using the bounded dissimilarity assumption among the local gradients (Assumption~\ref{Assup:Dissimilarity}), we get
\begin{align}\label{eq:res4444}
        (d) 
    \leq &{4 \beta^2\Theta^2 \eta_k^2}\sum_{b \in \Omega} \sum_{u\in \mathcal{U}_{b}}\frac{|\Upsilon_{u}(\widetilde{\tau}_{b}^{\downarrow,{(k)}})|}{|\Upsilon(\widetilde{\bm{\tau}}^{\downarrow,{(k)}})| \ell_u^{(k)}}\frac{\left(\ell_u^{(k)}\right)\left(\ell_u^{(k)}-1\right)}{1- 4\eta_k^2\beta^2 \ell_u^{(k)}\left(\ell_u^{(k)}-1\right)}\left(1-\frac{{B}_{u}(\widetilde{\tau}_{b}^{\downarrow,{(k)}})}{|\Upsilon_{u}(\widetilde{\tau}_{b}^{\downarrow,{(k)}})|} \right)  \frac{{(|\Upsilon_{u}(\widetilde{\tau}_{b}^{\downarrow,{(k)}})|-1)}
    \left(\sigma_{u}(\widetilde{\tau}_{b}^{\downarrow,{(k)}})\right)^2}{|\Upsilon_{u}(\widetilde{\tau}_{b}^{\downarrow,{(k)}})|{B}_{u}(\widetilde{\tau}_{b}^{\downarrow,{(k)}})}\nonumber\\
    &+ \frac{4 \eta_k^2\beta^2 \left(\ell_{\mathsf{max}}^{(k)}\right)\left(\ell_{\mathsf{max}}^{(k)}-1\right)}{1- 4\eta_k^2\beta^2 \ell_{\mathsf{max}}^{(k)}\left(\ell_{\mathsf{max}}^{(k)}-1\right)}\left(\mathfrak{X}_1\Bigg\Vert \sum_{b \in \Omega} \sum_{u\in \mathcal{U}_{b}}\frac{|\Upsilon_{u}(\widetilde{\tau}_{b}^{\downarrow,{(k)}})|}{|\Upsilon(\widetilde{\bm{\tau}}^{\downarrow,{(k)}})| }\nabla \mathfrak{L}^{(k)}_u(\bm{\omega}^{(k)})
    \Bigg\Vert^2+\mathfrak{X}_2\right)\nonumber\\
    = &{4 \beta^2\Theta^2 \eta_k^2}\sum_{b \in \Omega} \sum_{u\in \mathcal{U}_{b}}\frac{|\Upsilon_{u}(\widetilde{\tau}_{b}^{\downarrow,{(k)}})|}{|\Upsilon(\widetilde{\bm{\tau}}^{\downarrow,{(k)}})| \ell_u^{(k)}}\frac{\left(\ell_u^{(k)}\right)\left(\ell_u^{(k)}-1\right)}{1- 4\eta_k^2\beta^2 \ell_u^{(k)}\left(\ell_u^{(k)}-1\right)}\left(1-\frac{{B}_{u}(\widetilde{\tau}_{b}^{\downarrow,{(k)}})}{|\Upsilon_{u}(\widetilde{\tau}_{b}^{\downarrow,{(k)}})|} \right)  \frac{{(|\Upsilon_{u}(\widetilde{\tau}_{b}^{\downarrow,{(k)}})|-1)}\left(\sigma_{u}(\widetilde{\tau}_{b}^{\downarrow,{(k)}})\right)^2}{|\Upsilon_{u}(\widetilde{\tau}_{b}^{\downarrow,{(k)}})|{B}_{u}(\widetilde{\tau}_{b}^{\downarrow,{(k)}})}\nonumber\\
    &+ \frac{4 \eta_k^2\beta^2 \left(\ell_{\mathsf{max}}^{(k)}\right)\left(\ell_{\mathsf{max}}^{(k)}-1\right)}{1- 4\eta_k^2\beta^2\ell_{\mathsf{max}}^{(k)}\left(\ell_{\mathsf{max}}^{(k)}-1\right)}\left(\mathfrak{X}_1   \Bigg\Vert\nabla \mathfrak{L}^{(k)}(\bm{\omega}^{(k)})\Bigg\Vert^2 + \mathfrak{X}_2 \right).
\end{align}
 Replacing the above result back in~\eqref{ineq:main5} and gathering the terms leads to
 \begin{align}\label{ineq:main6}
    \mathbb{E}_k&\left[\mathfrak{L}^{({k})}(\bm{\omega}^{(k+1)})\right] {\leq} \mathfrak{L}^{({k})}(\bm{\omega}^{(k)}) +\frac{\eta_{_k}\mathfrak{B}_k}{4}\left(\frac{8 \mathfrak{X}_1 \eta_k^2\beta^2 \left(\ell_{\mathsf{max}}^{(k)}\right)\left(\ell_{\mathsf{max}}^{(k)}-1\right)}{1- 4\eta_k^2\beta^2\ell_{\mathsf{max}}^{(k)}\left(\ell_{\mathsf{max}}^{(k)}-1\right)}-1\right) \left\Vert\nabla{\mathfrak{L}^{({k})}(\bm{\omega}^{(k)})}\right\Vert^2  \nonumber\\
    &+{2 \beta^2\Theta^2 \eta_k^3 \mathfrak{B}_k}\sum_{b \in \Omega} \sum_{u\in \mathcal{U}_{b}}\frac{|\Upsilon_{u}(\widetilde{\tau}_{b}^{\downarrow,{(k)}})|}{|\Upsilon(\widetilde{\bm{\tau}}^{\downarrow,{(k)}})| \ell_u^{(k)}}\frac{\left(\ell_u^{(k)}\right)\left(\ell_u^{(k)}-1\right)}{1- 4\eta_k^2\beta^2 \ell_u^{(k)}\left(\ell_u^{(k)}-1\right)}\left(1-\frac{{B}_{u}(\widetilde{\tau}_{b}^{\downarrow,{(k)}})}{|\Upsilon_{u}(\widetilde{\tau}_{b}^{\downarrow,{(k)}})|} \right)  \frac{{(|\Upsilon_{u}(\widetilde{\tau}_{b}^{\downarrow,{(k)}})|-1)}\left(\sigma_{u}(\widetilde{\tau}_{b}^{\downarrow,{(k)}})\right)^2}{|\Upsilon_{u}(\widetilde{\tau}_{b}^{\downarrow,{(k)}})|{B}_{u}(\widetilde{\tau}_{b}^{\downarrow,{(k)}})}\nonumber\\
    & + \frac{2 \mathfrak{X}_2 \eta_k^3\beta^2 \mathfrak{B}_k \left(\ell_{\mathsf{max}}^{(k)}\right)\left(\ell_{\mathsf{max}}^{(k)}-1\right)}{1- 4\eta_k^2\beta^2\ell_{\mathsf{max}}^{(k)}\left(\ell_{\mathsf{max}}^{(k)}-1\right)}+\frac{3}{2}\eta_{_k}\mathfrak{B}_k\underbrace{\mathbb{E}_k\left[\left\Vert\sum_{b \in \Omega}\mathscr{U}^{(k)}_{b}\right\Vert^2\right]}_{(g)}\nonumber\\
    &+ \frac{\beta\eta_{_k}^2\mathfrak{B}_k^2}{2} \sum_{b \in \Omega} \sum_{u\in \mathcal{U}_{b}}\frac{\left(\widehat{\lambda}_{u}^{(k)}|\Upsilon_{u}(\widetilde{\tau}_{b}^{\downarrow,{(k)}})|\right)^2}{\left(|\Upsilon^{\mathsf{s}}(\widetilde{\bm{\tau}}^{\downarrow,{(k)}})|\right)^2 \ell^{(k)}_{u}}\left(1-\frac{{B}_{u}(\widetilde{\tau}_{b}^{\downarrow,{(k)}})}{|\Upsilon_{u}(\widetilde{\tau}_{b}^{\downarrow,{(k)}})|} \right) \frac{2(|\Upsilon_{u}(\widetilde{\tau}_{b}^{\downarrow,{(k)}})|-1)\Theta^2\left(\sigma_{u}(\widetilde{\tau}_{b}^{\downarrow,{(k)}})\right)^2}{|\Upsilon_{u}(\widetilde{\tau}_{b}^{\downarrow,{(k)}})|{B}_{u}(\widetilde{\tau}_{b}^{\downarrow,{(k)}})}.
\end{align}
Considering the definition of the noise of FLU recruitment at O-RU $b$ presented in \eqref{eq:user_selection_noise}, we next aim to bound $(g)$ bellow.
\begin{align}
    \mathbb{E}_k\left[\left\Vert\sum_{b \in \Omega}\mathscr{U}^{(k)}_{b}\right\Vert^2\right]&= \mathbb{E}_k\left[\left\Vert\sum_{b \in \Omega} \sum_{u\in \mathcal{U}_{b}}\frac{\widehat{\lambda}_{u}^{(k)}|\Upsilon_{u}(\widetilde{\tau}_{b}^{\downarrow,{(k)}})|}{|\Upsilon^{\mathsf{s}}(\widetilde{\bm{\tau}}^{\downarrow,{(k)}})| \ell^{(k)}_{u}}\sum_{\ell=1}^{\ell^{(k)}_{u}} {\nabla  \mathfrak{L}^{(k)}_u(\bm{\omega}^{(k),\ell-1}_{u})}-\sum_{b \in \Omega} \sum_{u\in \mathcal{U}_{b}}\frac{|\Upsilon_{u}(\widetilde{\tau}_{b}^{\downarrow,{(k)}})|}{|\Upsilon(\widetilde{\bm{\tau}}^{\downarrow,{(k)}})| \ell^{(k)}_{u}}\sum_{\ell=1}^{\ell^{(k)}_{u}} {\nabla  \mathfrak{L}^{(k)}_u(\bm{\omega}^{(k),\ell-1}_{u})} \right\Vert^2\right]\nonumber\\
    &=\mathbb{E}_k\Bigg[\Bigg\Vert\sum_{b \in \Omega} \sum_{u\in \mathcal{U}_{b}}\widehat{\lambda}_{u}^{(k)}\frac{\left(|\Upsilon(\widetilde{\bm{\tau}}^{\downarrow,{(k)}})|-|\Upsilon^{\mathsf{s}}(\widetilde{\bm{\tau}}^{\downarrow,{(k)}})|\right)|\Upsilon_{u}(\widetilde{\tau}_{b}^{\downarrow,{(k)}})|}{|\Upsilon(\widetilde{\bm{\tau}}^{\downarrow,{(k)}})||\Upsilon^{\mathsf{s}}(\widetilde{\bm{\tau}}^{\downarrow,{(k)}})| \ell^{(k)}_{u}}\sum_{\ell=1}^{\ell^{(k)}_{u}} {\nabla  \mathfrak{L}^{(k)}_u(\bm{\omega}^{(k),\ell-1}_{u})}\nonumber\\
    &~~~~~~~~~~~~~~~~~~~~~~~~-\sum_{b \in \Omega} \sum_{u\in \mathcal{U}_{b}}\left(1-\widehat{\lambda}_{u}^{(k)}\right)\frac{|\Upsilon_{u}(\widetilde{\tau}_{b}^{\downarrow,{(k)}})|}{|\Upsilon(\widetilde{\bm{\tau}}^{\downarrow,{(k)}})| \ell^{(k)}_{u}}\sum_{\ell=1}^{\ell^{(k)}_{u}} {\nabla  \mathfrak{L}^{(k)}_u(\bm{\omega}^{(k),\ell-1}_{u})} \Bigg\Vert^2\Bigg]\nonumber\\
    &\overset{(i)}{\le} \mathbb{E}_k\left[\left (\sum_{b \in \Omega} \sum_{u\in \mathcal{U}_{b}}\widehat{\lambda}_{u}^{(k)}\frac{\left(|\Upsilon(\widetilde{\bm{\tau}}^{\downarrow,{(k)}})|-|\Upsilon^{\mathsf{s}}(\widetilde{\bm{\tau}}^{\downarrow,{(k)}})|\right)|\Upsilon_{u}(\widetilde{\tau}_{b}^{\downarrow,{(k)}})|}{|\Upsilon(\widetilde{\bm{\tau}}^{\downarrow,{(k)}})||\Upsilon^{\mathsf{s}}(\widetilde{\bm{\tau}}^{\downarrow,{(k)}})|}\max_{b\in\Omega,u\in\mathcal{U}_{b}}\left(\frac{\left\Vert\sum_{\ell=1}^{\ell^{(k)}_{u}} {\nabla  \mathfrak{L}^{(k)}_u(\bm{\omega}^{(k),\ell-1}_{u})}\right\Vert}{\ell^{(k)}_{u}}\right)\right.\right.\nonumber\\
    &~~~~~~~~~~~~~~~~~~~~~~~~+\left.\left.\sum_{b \in \Omega} \sum_{u\in \mathcal{U}_{b}}\left(1-\widehat{\lambda}_{u}^{(k)}\right)\frac{|\Upsilon_{u}(\widetilde{\tau}_{b}^{\downarrow,{(k)}})|}{|\Upsilon(\widetilde{\bm{\tau}}^{\downarrow,{(k)}})|}\max_{b\in\Omega,u\in\mathcal{U}_{b}}\left(\frac{\left\Vert\sum_{\ell=1}^{\ell^{(k)}_{u}} {\nabla  \mathfrak{L}^{(k)}_u(\bm{\omega}^{(k),\ell-1}_{u})}\right\Vert}{\ell^{(k)}_{u}}\right) \right)^2\right]\nonumber\\
    &= \mathbb{E}_k\left[\left (\frac{\left(|\Upsilon(\widetilde{\bm{\tau}}^{\downarrow,{(k)}})|-|\Upsilon^{\mathsf{s}}(\widetilde{\bm{\tau}}^{\downarrow,{(k)}})|\right)|\Upsilon^{\mathsf{s}}(\widetilde{\bm{\tau}}^{\downarrow,{(k)}})|}{|\Upsilon(\widetilde{\bm{\tau}}^{\downarrow,{(k)}})||\Upsilon^{\mathsf{s}}(\widetilde{\bm{\tau}}^{\downarrow,{(k)}})|}\max_{b\in\Omega,u\in\mathcal{U}_{b}}\left(\frac{\left\Vert\sum_{\ell=1}^{\ell^{(k)}_{u}} {\nabla  \mathfrak{L}^{(k)}_u(\bm{\omega}^{(k),\ell-1}_{u})}\right\Vert}{\ell^{(k)}_{u}}\right)\right.\right.\nonumber\\
    &~~~~~~~~~~~~~~~~~~~~~~~~+\left.\left.\frac{|\Upsilon(\widetilde{\bm{\tau}}^{\downarrow,{(k)}})|-|\Upsilon^{\mathsf{s}}(\widetilde{\bm{\tau}}^{\downarrow,{(k)}})|}{|\Upsilon(\widetilde{\bm{\tau}}^{\downarrow,{(k)}})|}\max_{b\in\Omega,u\in\mathcal{U}_{b}}\left(\frac{\left\Vert\sum_{\ell=1}^{\ell^{(k)}_{u}} {\nabla  \mathfrak{L}^{(k)}_u(\bm{\omega}^{(k),\ell-1}_{u})}\right\Vert}{\ell^{(k)}_{u}}\right) \right)^2\right]\nonumber\\
    &= 4\left (\frac{|\Upsilon(\widetilde{\bm{\tau}}^{\downarrow,{(k)}})|-|\Upsilon^{\mathsf{s}}(\widetilde{\bm{\tau}}^{\downarrow,{(k)}})|}{|\Upsilon(\widetilde{\bm{\tau}}^{\downarrow,{(k)}})|}\right)^2\mathbb{E}_k\left[\left (\max_{b\in\Omega,u\in\mathcal{U}_{b}}\left(\frac{1}{(\ell^{(k)}_{u})}\left\Vert\sum_{\ell=1}^{\ell^{(k)}_{u}} {\nabla  \mathfrak{L}^{(k)}_u(\bm{\omega}^{(k),\ell-1}_{u})}\right\Vert\right) \right)^2\right]\nonumber\\
    &= 4\left (\frac{|\Upsilon(\widetilde{\bm{\tau}}^{\downarrow,{(k)}})|-|\Upsilon^{\mathsf{s}}(\widetilde{\bm{\tau}}^{\downarrow,{(k)}})|}{|\Upsilon(\widetilde{\bm{\tau}}^{\downarrow,{(k)}})|}\right)^2\mathbb{E}_k\left[\max_{b\in\Omega,u\in\mathcal{U}_{b}}\left(\frac{1}{(\ell^{(k)}_{u})^2}\underbrace{\left\Vert\sum_{\ell=1}^{\ell^{(k)}_{u}} {\nabla  \mathfrak{L}^{(k)}_u(\bm{\omega}^{(k),\ell-1}_{u})}\right\Vert^2}_{(a)}\right)\right],
\end{align}
where $(i)$ is obtained using triangle inequality. Using the same technique utilized in \eqref{eq:B2} and performing some algebraic manipulation give us an upper bound for $(a)$ in the above inequality. Using this result, we can further bound $\mathbb{E}_k\left[\left\Vert\sum_{b \in \Omega}\mathscr{U}^{(k)}_{b}\right\Vert^2\right]$ as follows:  
\begin{align}
    \hspace{-2.5mm}\mathbb{E}_k\left[\left\Vert\sum_{b \in \Omega}\mathscr{U}^{(k)}_{b}\right\Vert^2\right]&\le 4\left (\frac{|\Upsilon(\widetilde{\bm{\tau}}^{\downarrow,{(k)}})|-|\Upsilon^{\mathsf{s}}(\widetilde{\bm{\tau}}^{\downarrow,{(k)}})|}{|\Upsilon(\widetilde{\bm{\tau}}^{\downarrow,{(k)}})|}\right)^2\mathbb{E}_k\left[\max_{b\in\Omega,u\in\mathcal{U}_{b}}\left(\frac{2\beta^2}{(\ell^{(k)}_{u})} \sum_{\ell=1}^{\ell^{(k)}_{u}}\Bigg\Vert \bm{\omega}^{(k)}-\bm{\omega}^{(k),\ell-1}_{u}\Bigg\Vert^2+ \frac{2}{(\ell^{(k)}_{u})}\sum_{\ell=1}^{\ell^{(k)}_{u}} \Bigg\Vert\nabla \mathfrak{L}^{(k)}_u(\bm{\omega}^{(k)})\Bigg\Vert^2\right) \right]\nonumber\\
    &\le 8\left (\frac{|\Upsilon(\widetilde{\bm{\tau}}^{\downarrow,{(k)}})|-|\Upsilon^{\mathsf{s}}(\widetilde{\bm{\tau}}^{\downarrow,{(k)}})|}{|\Upsilon(\widetilde{\bm{\tau}}^{\downarrow,{(k)}})|}\right)^2\sum_{b \in \Omega} \sum_{u\in \mathcal{U}_{b}}\left( \frac{\beta^2}{(\ell^{(k)}_{u})}\sum_{\ell=1}^{\ell^{(k)}_{u}}\mathbb{E}_k\left[\Bigg\Vert \bm{\omega}^{(k)}-\bm{\omega}^{(k),\ell-1}_{u}\Bigg\Vert^2 \right]+ \Bigg\Vert\nabla \mathfrak{L}^{(k)}_u(\bm{\omega}^{(k)})\Bigg\Vert^2\right).
\end{align}
Substituting $\sum_{\ell=1}^{\ell^{(k)}_{u}}\mathbb{E}_k\left[\left\Vert \bm{\omega}^{(k)}-\bm{\omega}^{(k),\ell-1}_{u}\right\Vert^2 \right]$ from \eqref{eq:f_bound} into the above inequality and performing some algebraic operations give us 
\begin{align}
    &\mathbb{E}_k\left[\left\Vert\sum_{b \in \Omega}\mathscr{U}^{(k)}_{b}\right\Vert^2\right]\nonumber\\
    &\le 8\left (\frac{|\Upsilon(\widetilde{\bm{\tau}}^{\downarrow,{(k)}})|-|\Upsilon^{\mathsf{s}}(\widetilde{\bm{\tau}}^{\downarrow,{(k)}})|}{|\Upsilon(\widetilde{\bm{\tau}}^{\downarrow,{(k)}})|}\right)^2\sum_{b \in \Omega} \sum_{u\in \mathcal{U}_{b}}\Bigg( \frac{\beta^2}{(\ell^{(k)}_{u})}\frac{4 \Theta^2 \eta_k^2 \ell_u^{(k)}\left(\ell_u^{(k)}-1\right)}{1- 4\eta_k^2\beta^2 \ell_u^{(k)}\left(\ell_u^{(k)}-1\right)} \left(1-\frac{{B}_{u}(\widetilde{\tau}_{b}^{\downarrow,{(k)}})}{|\Upsilon_{u}(\widetilde{\tau}_{b}^{\downarrow,{(k)}})|} \right)  \frac{{(|\Upsilon_{u}(\widetilde{\tau}_{b}^{\downarrow,{(k)}})|-1)}\left(\sigma_{u}(\widetilde{\tau}_{b}^{\downarrow,{(k)}})\right)^2}{|\Upsilon_{u}(\widetilde{\tau}_{b}^{\downarrow,{(k)}})|{B}_{u}(\widetilde{\tau}_{b}^{\downarrow,{(k)}})}\nonumber\\
     & ~~~~~~~~~~~~+ \frac{\beta^2}{(\ell^{(k)}_{u})}\frac{4 \eta_k^2 \left(\ell_u^{(k)}\right)^2\left(\ell_u^{(k)}-1\right)}{1- 4\eta_k^2\beta^2 \ell_u^{(k)}\left(\ell_u^{(k)}-1\right)}  \Bigg\Vert\nabla \mathfrak{L}^{(k)}_u(\bm{\omega}^{(k)})\Bigg\Vert^2+ \Bigg\Vert\nabla \mathfrak{L}^{(k)}_u(\bm{\omega}^{(k)})\Bigg\Vert^2\Bigg)\nonumber\\
     &\le 8\left (\frac{|\Upsilon(\widetilde{\bm{\tau}}^{\downarrow,{(k)}})|-|\Upsilon^{\mathsf{s}}(\widetilde{\bm{\tau}}^{\downarrow,{(k)}})|}{|\Upsilon(\widetilde{\bm{\tau}}^{\downarrow,{(k)}})|}\right)^2\sum_{b \in \Omega} \sum_{u\in \mathcal{U}_{b}}\Bigg( \frac{\beta^2}{(\ell^{(k)}_{u})}\frac{4 \Theta^2 \eta_k^2 \ell_u^{(k)}\left(\ell_u^{(k)}-1\right)}{1- 4\eta_k^2\beta^2 \ell_u^{(k)}\left(\ell_u^{(k)}-1\right)} \left(1-\frac{{B}_{u}(\widetilde{\tau}_{b}^{\downarrow,{(k)}})}{|\Upsilon_{u}(\widetilde{\tau}_{b}^{\downarrow,{(k)}})|} \right)  \frac{{(|\Upsilon_{u}(\widetilde{\tau}_{b}^{\downarrow,{(k)}})|-1)}\left(\sigma_{u}(\widetilde{\tau}_{b}^{\downarrow,{(k)}})\right)^2}{|\Upsilon_{u}(\widetilde{\tau}_{b}^{\downarrow,{(k)}})|{B}_{u}(\widetilde{\tau}_{b}^{\downarrow,{(k)}})}\nonumber\\
     & ~~~~~~~~~~~~+ \left(\frac{4 \eta_k^2 \beta^2 \ell_u^{(k)}\left(\ell_u^{(k)}-1\right)}{1- 4\eta_k^2\beta^2 \ell_u^{(k)}\left(\ell_u^{(k)}-1\right)}+1\right)  \Bigg\Vert\nabla \mathfrak{L}^{(k)}_u(\bm{\omega}^{(k)})\Bigg\Vert^2\Bigg)\nonumber\\
     &\le 8\left (\frac{|\Upsilon(\widetilde{\bm{\tau}}^{\downarrow,{(k)}})|-|\Upsilon^{\mathsf{s}}(\widetilde{\bm{\tau}}^{\downarrow,{(k)}})|}{|\Upsilon(\widetilde{\bm{\tau}}^{\downarrow,{(k)}})|}\right)^2\sum_{b \in \Omega} \sum_{u\in \mathcal{U}_{b}}\Bigg( \frac{\beta^2}{(\ell^{(k)}_{u})}\frac{4 \Theta^2 \eta_k^2 \ell_u^{(k)}\left(\ell_u^{(k)}-1\right)}{1- 4\eta_k^2\beta^2 \ell_u^{(k)}\left(\ell_u^{(k)}-1\right)} \left(1-\frac{{B}_{u}(\widetilde{\tau}_{b}^{\downarrow,{(k)}})}{|\Upsilon_{u}(\widetilde{\tau}_{b}^{\downarrow,{(k)}})|} \right)  \frac{{(|\Upsilon_{u}(\widetilde{\tau}_{b}^{\downarrow,{(k)}})|-1)}\left(\sigma_{u}(\widetilde{\tau}_{b}^{\downarrow,{(k)}})\right)^2}{|\Upsilon_{u}(\widetilde{\tau}_{b}^{\downarrow,{(k)}})|{B}_{u}(\widetilde{\tau}_{b}^{\downarrow,{(k)}})}\nonumber\\
     & ~~~~~~~~~~~~+ \left(\frac{4 \eta_k^2 \beta^2 \ell_u^{(k)}\left(\ell_u^{(k)}-1\right)+1- 4\eta_k^2\beta^2 \ell_u^{(k)}\left(\ell_u^{(k)}-1\right)}{1- 4\eta_k^2\beta^2 \ell_u^{(k)}\left(\ell_u^{(k)}-1\right)}\right)  \Bigg\Vert\nabla \mathfrak{L}^{(k)}_u(\bm{\omega}^{(k)})\Bigg\Vert^2\Bigg)\nonumber\\
     &\le 8\left (\frac{|\Upsilon(\widetilde{\bm{\tau}}^{\downarrow,{(k)}})|-|\Upsilon^{\mathsf{s}}(\widetilde{\bm{\tau}}^{\downarrow,{(k)}})|}{|\Upsilon(\widetilde{\bm{\tau}}^{\downarrow,{(k)}})|}\right)^2\sum_{b \in \Omega} \sum_{u\in \mathcal{U}_{b}}\Bigg( \frac{\beta^2}{(\ell^{(k)}_{u})}\frac{4 \Theta^2 \eta_k^2 \ell_u^{(k)}\left(\ell_u^{(k)}-1\right)}{1- 4\eta_k^2\beta^2 \ell_u^{(k)}\left(\ell_u^{(k)}-1\right)} \left(1-\frac{{B}_{u}(\widetilde{\tau}_{b}^{\downarrow,{(k)}})}{|\Upsilon_{u}(\widetilde{\tau}_{b}^{\downarrow,{(k)}})|} \right)  \frac{{(|\Upsilon_{u}(\widetilde{\tau}_{b}^{\downarrow,{(k)}})|-1)}\left(\sigma_{u}(\widetilde{\tau}_{b}^{\downarrow,{(k)}})\right)^2}{|\Upsilon_{u}(\widetilde{\tau}_{b}^{\downarrow,{(k)}})|{B}_{u}(\widetilde{\tau}_{b}^{\downarrow,{(k)}})}\nonumber\\
     & ~~~~~~~~~~~~+ \left(\frac{1}{1- 4\eta_k^2\beta^2 \ell_u^{(k)}\left(\ell_u^{(k)}-1\right)}\right)  \Bigg\Vert\nabla \mathfrak{L}^{(k)}_u(\bm{\omega}^{(k)})\Bigg\Vert^2\Bigg)\nonumber\\
     &\le 8\left (\frac{|\Upsilon(\widetilde{\bm{\tau}}^{\downarrow,{(k)}})|-|\Upsilon^{\mathsf{s}}(\widetilde{\bm{\tau}}^{\downarrow,{(k)}})|}{|\Upsilon(\widetilde{\bm{\tau}}^{\downarrow,{(k)}})|}\right)^2\sum_{b \in \Omega} \sum_{u\in \mathcal{U}_{b}}\Bigg(\frac{4 \Theta^2 \beta^2 \eta_k^2 \left(\ell_u^{(k)}-1\right)}{1- 4\eta_k^2\beta^2 \ell_u^{(k)}\left(\ell_u^{(k)}-1\right)} \left(1-\frac{{B}_{u}(\widetilde{\tau}_{b}^{\downarrow,{(k)}})}{|\Upsilon_{u}(\widetilde{\tau}_{b}^{\downarrow,{(k)}})|} \right)  \frac{{(|\Upsilon_{u}(\widetilde{\tau}_{b}^{\downarrow,{(k)}})|-1)}\left(\sigma_{u}(\widetilde{\tau}_{b}^{\downarrow,{(k)}})\right)^2}{|\Upsilon_{u}(\widetilde{\tau}_{b}^{\downarrow,{(k)}})|{B}_{u}(\widetilde{\tau}_{b}^{\downarrow,{(k)}})}\Bigg)\nonumber\\
     &+8\left (\frac{|\Upsilon(\widetilde{\bm{\tau}}^{\downarrow,{(k)}})|-|\Upsilon^{\mathsf{s}}(\widetilde{\bm{\tau}}^{\downarrow,{(k)}})|}{|\Upsilon(\widetilde{\bm{\tau}}^{\downarrow,{(k)}})|}\right)^2\sum_{b \in \Omega} \sum_{u\in \mathcal{U}_{b}}\left(\frac{1}{1- 4\eta_k^2\beta^2 \ell_u^{(k)}\left(\ell_u^{(k)}-1\right)}\right)\frac{|\Upsilon_{u}(\widetilde{\tau}_{b}^{\downarrow,{(k)}})|}{|\Upsilon(\widetilde{\bm{\tau}}^{\downarrow,{(k)}})|}\frac{|\Upsilon(\widetilde{\bm{\tau}}^{\downarrow,{(k)}})|}{|\Upsilon_{u}(\widetilde{\tau}_{b}^{\downarrow,{(k)}})|}\Bigg\Vert\nabla \mathfrak{L}^{(k)}_u(\bm{\omega}^{(k)})\Bigg\Vert^2.
\end{align}
Assuming $\ell^{(k)}_{\mathsf{max}}=\max_{b\in\Omega,u\in\mathcal{U}_{b}}\{\ell^{(k)}_u\}$, $|\Upsilon_{\mathsf{min}}(\widetilde{\bm{\tau}}^{\downarrow,{(k)}})|=\min_{b\in\Omega,u\in\mathcal{U}_{b}}\{|\Upsilon_{u}(\widetilde{\tau}_{b}^{\downarrow,{(k)}})|\}$ and considering \eqref{eq:firstConStep}, we get
\begin{align}
    &\mathbb{E}_k\left[\left\Vert\sum_{b \in \Omega}\mathscr{U}^{(k)}_{b}\right\Vert^2\right]\nonumber\\
    &\le 8\left (\frac{|\Upsilon(\widetilde{\bm{\tau}}^{\downarrow,{(k)}})|-|\Upsilon^{\mathsf{s}}(\widetilde{\bm{\tau}}^{\downarrow,{(k)}})|}{|\Upsilon(\widetilde{\bm{\tau}}^{\downarrow,{(k)}})|}\right)^2\sum_{b \in \Omega} \sum_{u\in \mathcal{U}_{b}}\Bigg(\frac{4 \Theta^2 \beta^2 \eta_k^2 \left(\ell_u^{(k)}-1\right)}{1- 4\eta_k^2\beta^2 \ell_u^{(k)}\left(\ell_u^{(k)}-1\right)} \left(1-\frac{{B}_{u}(\widetilde{\tau}_{b}^{\downarrow,{(k)}})}{|\Upsilon_{u}(\widetilde{\tau}_{b}^{\downarrow,{(k)}})|} \right)  \frac{{(|\Upsilon_{u}(\widetilde{\tau}_{b}^{\downarrow,{(k)}})|-1)}\left(\sigma_{u}(\widetilde{\tau}_{b}^{\downarrow,{(k)}})\right)^2}{|\Upsilon_{u}(\widetilde{\tau}_{b}^{\downarrow,{(k)}})|{B}_{u}(\widetilde{\tau}_{b}^{\downarrow,{(k)}})}\Bigg)\nonumber\\
     &+8\left (\frac{|\Upsilon(\widetilde{\bm{\tau}}^{\downarrow,{(k)}})|-|\Upsilon^{\mathsf{s}}(\widetilde{\bm{\tau}}^{\downarrow,{(k)}})|}{|\Upsilon(\widetilde{\bm{\tau}}^{\downarrow,{(k)}})|}\right)^2\sum_{b \in \Omega} \sum_{u\in \mathcal{U}_{b}}\left(\frac{1}{1- 4\eta_k^2\beta^2 \ell_{\mathsf{max}}^{(k)}\left(\ell_{\mathsf{max}}^{(k)}-1\right)}\right)\frac{|\Upsilon_{u}(\widetilde{\tau}_{b}^{\downarrow,{(k)}})|}{|\Upsilon(\widetilde{\bm{\tau}}^{\downarrow,{(k)}})|}\frac{|\Upsilon(\widetilde{\bm{\tau}}^{\downarrow,{(k)}})|}{|\Upsilon_{\mathsf{min}}(\widetilde{\bm{\tau}}^{\downarrow,{(k)}})|}\Bigg\Vert\nabla \mathfrak{L}^{(k)}_u(\bm{\omega}^{(k)})\Bigg\Vert^2\nonumber\\
     &\le 8\left (\frac{|\Upsilon(\widetilde{\bm{\tau}}^{\downarrow,{(k)}})|-|\Upsilon^{\mathsf{s}}(\widetilde{\bm{\tau}}^{\downarrow,{(k)}})|}{|\Upsilon(\widetilde{\bm{\tau}}^{\downarrow,{(k)}})|}\right)^2\sum_{b \in \Omega} \sum_{u\in \mathcal{U}_{b}}\Bigg(\frac{4 \Theta^2 \beta^2 \eta_k^2 \left(\ell_u^{(k)}-1\right)}{1- 4\eta_k^2\beta^2 \ell_u^{(k)}\left(\ell_u^{(k)}-1\right)} \left(1-\frac{{B}_{u}(\widetilde{\tau}_{b}^{\downarrow,{(k)}})}{|\Upsilon_{u}(\widetilde{\tau}_{b}^{\downarrow,{(k)}})|} \right)  \frac{{(|\Upsilon_{u}(\widetilde{\tau}_{b}^{\downarrow,{(k)}})|-1)}\left(\sigma_{u}(\widetilde{\tau}_{b}^{\downarrow,{(k)}})\right)^2}{|\Upsilon_{u}(\widetilde{\tau}_{b}^{\downarrow,{(k)}})|{B}_{u}(\widetilde{\tau}_{b}^{\downarrow,{(k)}})}\Bigg)\nonumber\\
     &+8\left (\frac{|\Upsilon(\widetilde{\bm{\tau}}^{\downarrow,{(k)}})|-|\Upsilon^{\mathsf{s}}(\widetilde{\bm{\tau}}^{\downarrow,{(k)}})|}{|\Upsilon(\widetilde{\bm{\tau}}^{\downarrow,{(k)}})|}\right)^2\frac{|\Upsilon(\widetilde{\bm{\tau}}^{\downarrow,{(k)}})|}{|\Upsilon_{\mathsf{min}}(\widetilde{\bm{\tau}}^{\downarrow,{(k)}})|}\left(\frac{1}{1- 4\eta_k^2\beta^2 \ell_{\mathsf{max}}^{(k)}\left(\ell_{\mathsf{max}}^{(k)}-1\right)}\right)\sum_{b \in \Omega} \sum_{u\in \mathcal{U}_{b}}\frac{|\Upsilon_{u}(\widetilde{\tau}_{b}^{\downarrow,{(k)}})|}{|\Upsilon(\widetilde{\bm{\tau}}^{\downarrow,{(k)}})|}\Bigg\Vert\nabla \mathfrak{L}^{(k)}_u(\bm{\omega}^{(k)})\Bigg\Vert^2.
\end{align}
Using the bounded dissimilarity assumption among the local gradients (Assumption~\ref{Assup:Dissimilarity}) results in
\begin{align}
    &\mathbb{E}_k\left[\left\Vert\sum_{b \in \Omega}\mathscr{U}^{(k)}_{b}\right\Vert^2\right]\nonumber\\
    &\le 8\left (\frac{|\Upsilon(\widetilde{\bm{\tau}}^{\downarrow,{(k)}})|-|\Upsilon^{\mathsf{s}}(\widetilde{\bm{\tau}}^{\downarrow,{(k)}})|}{|\Upsilon(\widetilde{\bm{\tau}}^{\downarrow,{(k)}})|}\right)^2\sum_{b \in \Omega} \sum_{u\in \mathcal{U}_{b}}\Bigg(\frac{4 \Theta^2 \beta^2 \eta_k^2 \left(\ell_u^{(k)}-1\right)}{1- 4\eta_k^2\beta^2 \ell_u^{(k)}\left(\ell_u^{(k)}-1\right)} \left(1-\frac{{B}_{u}(\widetilde{\tau}_{b}^{\downarrow,{(k)}})}{|\Upsilon_{u}(\widetilde{\tau}_{b}^{\downarrow,{(k)}})|} \right)  \frac{{(|\Upsilon_{u}(\widetilde{\tau}_{b}^{\downarrow,{(k)}})|-1)}\left(\sigma_{u}(\widetilde{\tau}_{b}^{\downarrow,{(k)}})\right)^2}{|\Upsilon_{u}(\widetilde{\tau}_{b}^{\downarrow,{(k)}})|{B}_{u}(\widetilde{\tau}_{b}^{\downarrow,{(k)}})}\Bigg)\nonumber\\
     &+8\left (\frac{|\Upsilon(\widetilde{\bm{\tau}}^{\downarrow,{(k)}})|-|\Upsilon^{\mathsf{s}}(\widetilde{\bm{\tau}}^{\downarrow,{(k)}})|}{|\Upsilon(\widetilde{\bm{\tau}}^{\downarrow,{(k)}})|}\right)^2\frac{|\Upsilon(\widetilde{\bm{\tau}}^{\downarrow,{(k)}})|}{|\Upsilon_{\mathsf{min}}(\widetilde{\bm{\tau}}^{\downarrow,{(k)}})|}\left(\frac{1}{1- 4\eta_k^2\beta^2 \ell_{\mathsf{max}}^{(k)}\left(\ell_{\mathsf{max}}^{(k)}-1\right)}\right)\left(\mathfrak{X}_1\Bigg\Vert \sum_{b \in \Omega} \sum_{u\in \mathcal{U}_{b}}\frac{|\Upsilon_{u}(\widetilde{\tau}_{b}^{\downarrow,{(k)}})|}{|\Upsilon(\widetilde{\bm{\tau}}^{\downarrow,{(k)}})| }\nabla \mathfrak{L}^{(k)}_u(\bm{\omega}^{(k)})
    \Bigg\Vert^2+\mathfrak{X}_2\right)\nonumber\\
    &\le 8\left (\frac{|\Upsilon(\widetilde{\bm{\tau}}^{\downarrow,{(k)}})|-|\Upsilon^{\mathsf{s}}(\widetilde{\bm{\tau}}^{\downarrow,{(k)}})|}{|\Upsilon(\widetilde{\bm{\tau}}^{\downarrow,{(k)}})|}\right)^2\sum_{b \in \Omega} \sum_{u\in \mathcal{U}_{b}}\Bigg(\frac{4 \Theta^2 \beta^2 \eta_k^2 \left(\ell_u^{(k)}-1\right)}{1- 4\eta_k^2\beta^2 \ell_u^{(k)}\left(\ell_u^{(k)}-1\right)} \left(1-\frac{{B}_{u}(\widetilde{\tau}_{b}^{\downarrow,{(k)}})}{|\Upsilon_{u}(\widetilde{\tau}_{b}^{\downarrow,{(k)}})|} \right)  \frac{{(|\Upsilon_{u}(\widetilde{\tau}_{b}^{\downarrow,{(k)}})|-1)}\left(\sigma_{u}(\widetilde{\tau}_{b}^{\downarrow,{(k)}})\right)^2}{|\Upsilon_{u}(\widetilde{\tau}_{b}^{\downarrow,{(k)}})|{B}_{u}(\widetilde{\tau}_{b}^{\downarrow,{(k)}})}\Bigg)\nonumber\\
     &+8\left (\frac{|\Upsilon(\widetilde{\bm{\tau}}^{\downarrow,{(k)}})|-|\Upsilon^{\mathsf{s}}(\widetilde{\bm{\tau}}^{\downarrow,{(k)}})|}{|\Upsilon(\widetilde{\bm{\tau}}^{\downarrow,{(k)}})|}\right)^2\frac{|\Upsilon(\widetilde{\bm{\tau}}^{\downarrow,{(k)}})|}{|\Upsilon_{\mathsf{min}}(\widetilde{\bm{\tau}}^{\downarrow,{(k)}})|}\left(\frac{1}{1- 4\eta_k^2\beta^2 \ell_{\mathsf{max}}^{(k)}\left(\ell_{\mathsf{max}}^{(k)}-1\right)}\right)\left(\mathfrak{X}_1   \Bigg\Vert\nabla \mathfrak{L}^{(k)}(\bm{\omega}^{(k)})\Bigg\Vert^2 + \mathfrak{X}_2 \right)\nonumber\\
     &\le 8\left (\frac{|\Upsilon(\widetilde{\bm{\tau}}^{\downarrow,{(k)}})|-|\Upsilon^{\mathsf{s}}(\widetilde{\bm{\tau}}^{\downarrow,{(k)}})|}{|\Upsilon(\widetilde{\bm{\tau}}^{\downarrow,{(k)}})|}\right)^2\sum_{b \in \Omega} \sum_{u\in \mathcal{U}_{b}}\Bigg(\frac{4 \Theta^2 \beta^2 \eta_k^2 \left(\ell_u^{(k)}-1\right)}{1- 4\eta_k^2\beta^2 \ell_u^{(k)}\left(\ell_u^{(k)}-1\right)} \left(1-\frac{{B}_{u}(\widetilde{\tau}_{b}^{\downarrow,{(k)}})}{|\Upsilon_{u}(\widetilde{\tau}_{b}^{\downarrow,{(k)}})|} \right)  \frac{{(|\Upsilon_{u}(\widetilde{\tau}_{b}^{\downarrow,{(k)}})|-1)}\left(\sigma_{u}(\widetilde{\tau}_{b}^{\downarrow,{(k)}})\right)^2}{|\Upsilon_{u}(\widetilde{\tau}_{b}^{\downarrow,{(k)}})|{B}_{u}(\widetilde{\tau}_{b}^{\downarrow,{(k)}})}\Bigg)\nonumber\\
     &~~~~~~~~+8\left (\frac{|\Upsilon(\widetilde{\bm{\tau}}^{\downarrow,{(k)}})|-|\Upsilon^{\mathsf{s}}(\widetilde{\bm{\tau}}^{\downarrow,{(k)}})|}{|\Upsilon(\widetilde{\bm{\tau}}^{\downarrow,{(k)}})|}\right)^2\frac{|\Upsilon(\widetilde{\bm{\tau}}^{\downarrow,{(k)}})|}{|\Upsilon_{\mathsf{min}}(\widetilde{\bm{\tau}}^{\downarrow,{(k)}})|}\left(\frac{\mathfrak{X}_1 }{1- 4\eta_k^2\beta^2 \ell_{\mathsf{max}}^{(k)}\left(\ell_{\mathsf{max}}^{(k)}-1\right)}\right)\Bigg\Vert\nabla \mathfrak{L}^{(k)}(\bm{\omega}^{(k)})\Bigg\Vert^2\nonumber\\
     &~~~~~~~~+8\left (\frac{|\Upsilon(\widetilde{\bm{\tau}}^{\downarrow,{(k)}})|-|\Upsilon^{\mathsf{s}}(\widetilde{\bm{\tau}}^{\downarrow,{(k)}})|}{|\Upsilon(\widetilde{\bm{\tau}}^{\downarrow,{(k)}})|}\right)^2\frac{|\Upsilon(\widetilde{\bm{\tau}}^{\downarrow,{(k)}})|}{|\Upsilon_{\mathsf{min}}(\widetilde{\bm{\tau}}^{\downarrow,{(k)}})|}\left(\frac{\mathfrak{X}_2}{1- 4\eta_k^2\beta^2 \ell_{\mathsf{max}}^{(k)}\left(\ell_{\mathsf{max}}^{(k)}-1\right)}\right).
\end{align}

Replacing the above result back in \eqref{ineq:main6} yields 
 \begin{align}\label{ineq:main7}
     \hspace{-2.5mm} &\mathbb{E}_k\left[\mathfrak{L}^{({k})}(\bm{\omega}^{(k+1)})\right] {\leq} \mathfrak{L}^{({k})}(\bm{\omega}^{(k)}) +\frac{\eta_{_k}\mathfrak{B}_k}{4}\left(\frac{8 \mathfrak{X}_1 \eta_k^2\beta^2 \left(\ell_{\mathsf{max}}^{(k)}\right)\left(\ell_{\mathsf{max}}^{(k)}-1\right)}{1- 4\eta_k^2\beta^2\ell_{\mathsf{max}}^{(k)}\left(\ell_{\mathsf{max}}^{(k)}-1\right)}-1\right) \left\Vert\nabla{\mathfrak{L}^{({k})}(\bm{\omega}^{(k)})}\right\Vert^2  \nonumber\\
    \hspace{-2.5mm}  &+{2 \beta^2\Theta^2 \eta_k^3\mathfrak{B}_k}\sum_{b \in \Omega} \sum_{u\in \mathcal{U}_{b}}\frac{|\Upsilon_{u}(\widetilde{\tau}_{b}^{\downarrow,{(k)}})|}{|\Upsilon(\widetilde{\bm{\tau}}^{\downarrow,{(k)}})| \ell_u^{(k)}}\frac{\left(\ell_u^{(k)}\right)\left(\ell_u^{(k)}-1\right)}{1- 4\eta_k^2\beta^2 \ell_u^{(k)}\left(\ell_u^{(k)}-1\right)}\left(1-\frac{{B}_{u}(\widetilde{\tau}_{b}^{\downarrow,{(k)}})}{|\Upsilon_{u}(\widetilde{\tau}_{b}^{\downarrow,{(k)}})|} \right)  \frac{{(|\Upsilon_{u}(\widetilde{\tau}_{b}^{\downarrow,{(k)}})|-1)}\left(\sigma_{u}(\widetilde{\tau}_{b}^{\downarrow,{(k)}})\right)^2}{|\Upsilon_{u}(\widetilde{\tau}_{b}^{\downarrow,{(k)}})|{B}_{u}(\widetilde{\tau}_{b}^{\downarrow,{(k)}})}\nonumber\\
     \hspace{-2.5mm} & + \frac{2 \mathfrak{X}_2 \eta_k^3\beta^2 \mathfrak{B}_k \left(\ell_{\mathsf{max}}^{(k)}\right)\left(\ell_{\mathsf{max}}^{(k)}-1\right)}{1- 4\eta_k^2\beta^2\ell_{\mathsf{max}}^{(k)}\left(\ell_{\mathsf{max}}^{(k)}-1\right)}+ \frac{\beta\eta_{_k}^2\mathfrak{B}_k^2}{2} \sum_{b \in \Omega} \sum_{u\in \mathcal{U}_{b}}\frac{\left(\widehat{\lambda}_{u}^{(k)}|\Upsilon_{u}(\widetilde{\tau}_{b}^{\downarrow,{(k)}})|\right)^2}{\left(|\Upsilon^{\mathsf{s}}(\widetilde{\bm{\tau}}^{\downarrow,{(k)}})|\right)^2 \ell^{(k)}_{u}}\left(1-\frac{{B}_{u}(\widetilde{\tau}_{b}^{\downarrow,{(k)}})}{|\Upsilon_{u}(\widetilde{\tau}_{b}^{\downarrow,{(k)}})|} \right) \frac{2(|\Upsilon_{u}(\widetilde{\tau}_{b}^{\downarrow,{(k)}})|-1)\Theta^2\left(\sigma_{u}(\widetilde{\tau}_{b}^{\downarrow,{(k)}})\right)^2}{|\Upsilon_{u}(\widetilde{\tau}_{b}^{\downarrow,{(k)}})|{B}_{u}(\widetilde{\tau}_{b}^{\downarrow,{(k)}})}\nonumber\\
    \hspace{-2.5mm}  &+\frac{24\eta_{_k}\mathfrak{B}_k}{2}\left (\frac{|\Upsilon(\widetilde{\bm{\tau}}^{\downarrow,{(k)}})|-|\Upsilon^{\mathsf{s}}(\widetilde{\bm{\tau}}^{\downarrow,{(k)}})|}{|\Upsilon(\widetilde{\bm{\tau}}^{\downarrow,{(k)}})|}\right)^2\sum_{b \in \Omega} \sum_{u\in \mathcal{U}_{b}}\Bigg(\frac{4 \Theta^2 \beta^2 \eta_k^2 \left(\ell_u^{(k)}-1\right)}{1- 4\eta_k^2\beta^2 \ell_u^{(k)}\left(\ell_u^{(k)}-1\right)} \left(1-\frac{{B}_{u}(\widetilde{\tau}_{b}^{\downarrow,{(k)}})}{|\Upsilon_{u}(\widetilde{\tau}_{b}^{\downarrow,{(k)}})|} \right)  \frac{{(|\Upsilon_{u}(\widetilde{\tau}_{b}^{\downarrow,{(k)}})|-1)}\left(\sigma_{u}(\widetilde{\tau}_{b}^{\downarrow,{(k)}})\right)^2}{|\Upsilon_{u}(\widetilde{\tau}_{b}^{\downarrow,{(k)}})|{B}_{u}(\widetilde{\tau}_{b}^{\downarrow,{(k)}})}\Bigg)\nonumber\\
    \hspace{-2.5mm}   &+\frac{24\eta_{_k}\mathfrak{B}_k}{2}\left (\frac{|\Upsilon(\widetilde{\bm{\tau}}^{\downarrow,{(k)}})|-|\Upsilon^{\mathsf{s}}(\widetilde{\bm{\tau}}^{\downarrow,{(k)}})|}{|\Upsilon(\widetilde{\bm{\tau}}^{\downarrow,{(k)}})|}\right)^2\frac{|\Upsilon(\widetilde{\bm{\tau}}^{\downarrow,{(k)}})|}{|\Upsilon_{\mathsf{min}}(\widetilde{\bm{\tau}}^{\downarrow,{(k)}})|}\left(\frac{\mathfrak{X}_1 }{1- 4\eta_k^2\beta^2 \ell_{\mathsf{max}}^{(k)}\left(\ell_{\mathsf{max}}^{(k)}-1\right)}\right)\Bigg\Vert\nabla \mathfrak{L}^{(k)}(\bm{\omega}^{(k)})\Bigg\Vert^2\nonumber\\
     \hspace{-2.5mm}  &+\frac{24\eta_{_k}\mathfrak{B}_k}{2}\left (\frac{|\Upsilon(\widetilde{\bm{\tau}}^{\downarrow,{(k)}})|-|\Upsilon^{\mathsf{s}}(\widetilde{\bm{\tau}}^{\downarrow,{(k)}})|}{|\Upsilon(\widetilde{\bm{\tau}}^{\downarrow,{(k)}})|}\right)^2\frac{|\Upsilon(\widetilde{\bm{\tau}}^{\downarrow,{(k)}})|}{|\Upsilon_{\mathsf{min}}(\widetilde{\bm{\tau}}^{\downarrow,{(k)}})|}\left(\frac{\mathfrak{X}_2}{1- 4\eta_k^2\beta^2 \ell_{\mathsf{max}}^{(k)}\left(\ell_{\mathsf{max}}^{(k)}-1\right)}\right)\nonumber\\
    \hspace{-2.5mm}  &=\mathfrak{L}^{({k})}(\bm{\omega}^{(k)}) +\frac{\eta_{_k}\mathfrak{B}_k}{4}\left(\frac{8 \mathfrak{X}_1 \eta_k^2\beta^2 \left(\ell_{\mathsf{max}}^{(k)}\right)\left(\ell_{\mathsf{max}}^{(k)}-1\right)+48\mathfrak{X}_1\left (\frac{|\Upsilon(\widetilde{\bm{\tau}}^{\downarrow,{(k)}})|-|\Upsilon^{\mathsf{s}}(\widetilde{\bm{\tau}}^{\downarrow,{(k)}})|}{|\Upsilon(\widetilde{\bm{\tau}}^{\downarrow,{(k)}})|}\right)^2\frac{|\Upsilon(\widetilde{\bm{\tau}}^{\downarrow,{(k)}})|}{|\Upsilon_{\mathsf{min}}(\widetilde{\bm{\tau}}^{\downarrow,{(k)}})|}}{1- 4\eta_k^2\beta^2\ell_{\mathsf{max}}^{(k)}\left(\ell_{\mathsf{max}}^{(k)}-1\right)}-1\right) \left\Vert\nabla{\mathfrak{L}^{({k})}(\bm{\omega}^{(k)})}\right\Vert^2  \nonumber\\
     \hspace{-2.5mm} &+{2 \beta^2\Theta^2 \eta_k^3\mathfrak{B}_k}\sum_{b \in \Omega} \sum_{u\in \mathcal{U}_{b}}\frac{|\Upsilon_{u}(\widetilde{\tau}_{b}^{\downarrow,{(k)}})|}{|\Upsilon(\widetilde{\bm{\tau}}^{\downarrow,{(k)}})| \ell_u^{(k)}}\frac{\left(\ell_u^{(k)}\right)\left(\ell_u^{(k)}-1\right)}{1- 4\eta_k^2\beta^2 \ell_u^{(k)}\left(\ell_u^{(k)}-1\right)}\left(1-\frac{{B}_{u}(\widetilde{\tau}_{b}^{\downarrow,{(k)}})}{|\Upsilon_{u}(\widetilde{\tau}_{b}^{\downarrow,{(k)}})|} \right)  \frac{{(|\Upsilon_{u}(\widetilde{\tau}_{b}^{\downarrow,{(k)}})|-1)}\left(\sigma_{u}(\widetilde{\tau}_{b}^{\downarrow,{(k)}})\right)^2}{|\Upsilon_{u}(\widetilde{\tau}_{b}^{\downarrow,{(k)}})|{B}_{u}(\widetilde{\tau}_{b}^{\downarrow,{(k)}})}\nonumber\\
     \hspace{-2.5mm} & + \frac{2 \mathfrak{X}_2 \eta_k^3\beta^2 \mathfrak{B}_k \left(\ell_{\mathsf{max}}^{(k)}\right)\left(\ell_{\mathsf{max}}^{(k)}-1\right)}{1- 4\eta_k^2\beta^2\ell_{\mathsf{max}}^{(k)}\left(\ell_{\mathsf{max}}^{(k)}-1\right)}+ \frac{\beta\eta_{_k}^2\mathfrak{B}_k^2}{2} \sum_{b \in \Omega} \sum_{u\in \mathcal{U}_{b}}\frac{\left(\widehat{\lambda}_{u}^{(k)}|\Upsilon_{u}(\widetilde{\tau}_{b}^{\downarrow,{(k)}})|\right)^2}{\left(|\Upsilon^{\mathsf{s}}(\widetilde{\bm{\tau}}^{\downarrow,{(k)}})|\right)^2 \ell^{(k)}_{u}}\left(1-\frac{{B}_{u}(\widetilde{\tau}_{b}^{\downarrow,{(k)}})}{|\Upsilon_{u}(\widetilde{\tau}_{b}^{\downarrow,{(k)}})|} \right) \frac{2(|\Upsilon_{u}(\widetilde{\tau}_{b}^{\downarrow,{(k)}})|-1)\Theta^2\left(\sigma_{u}(\widetilde{\tau}_{b}^{\downarrow,{(k)}})\right)^2}{|\Upsilon_{u}(\widetilde{\tau}_{b}^{\downarrow,{(k)}})|{B}_{u}(\widetilde{\tau}_{b}^{\downarrow,{(k)}})}\nonumber\\
      \hspace{-2.5mm}&+\frac{24\eta_{_k}\mathfrak{B}_k}{2}\left (\frac{|\Upsilon(\widetilde{\bm{\tau}}^{\downarrow,{(k)}})|-|\Upsilon^{\mathsf{s}}(\widetilde{\bm{\tau}}^{\downarrow,{(k)}})|}{|\Upsilon(\widetilde{\bm{\tau}}^{\downarrow,{(k)}})|}\right)^2\sum_{b \in \Omega} \sum_{u\in \mathcal{U}_{b}}\Bigg(\frac{4 \Theta^2 \beta^2 \eta_k^2 \left(\ell_u^{(k)}-1\right)}{1- 4\eta_k^2\beta^2 \ell_u^{(k)}\left(\ell_u^{(k)}-1\right)} \left(1-\frac{{B}_{u}(\widetilde{\tau}_{b}^{\downarrow,{(k)}})}{|\Upsilon_{u}(\widetilde{\tau}_{b}^{\downarrow,{(k)}})|} \right)  \frac{{(|\Upsilon_{u}(\widetilde{\tau}_{b}^{\downarrow,{(k)}})|-1)}\left(\sigma_{u}(\widetilde{\tau}_{b}^{\downarrow,{(k)}})\right)^2}{|\Upsilon_{u}(\widetilde{\tau}_{b}^{\downarrow,{(k)}})|{B}_{u}(\widetilde{\tau}_{b}^{\downarrow,{(k)}})}\Bigg)\nonumber\\
      \hspace{-2.5mm} &~~~~~~~~+\frac{24\eta_{_k}\mathfrak{B}_k}{2}\left (\frac{|\Upsilon(\widetilde{\bm{\tau}}^{\downarrow,{(k)}})|-|\Upsilon^{\mathsf{s}}(\widetilde{\bm{\tau}}^{\downarrow,{(k)}})|}{|\Upsilon(\widetilde{\bm{\tau}}^{\downarrow,{(k)}})|}\right)^2\frac{|\Upsilon(\widetilde{\bm{\tau}}^{\downarrow,{(k)}})|}{|\Upsilon_{\mathsf{min}}(\widetilde{\bm{\tau}}^{\downarrow,{(k)}})|}\left(\frac{\mathfrak{X}_2}{1- 4\eta_k^2\beta^2 \ell_{\mathsf{max}}^{(k)}\left(\ell_{\mathsf{max}}^{(k)}-1\right)}\right).
\end{align}
Assuming
\begin{align}
    \frac{8 \mathfrak{X}_1 \eta_k^2\beta^2 \left(\ell_{\mathsf{max}}^{(k)}\right)\left(\ell_{\mathsf{max}}^{(k)}-1\right)+48\mathfrak{X}_1\left (\frac{|\Upsilon(\widetilde{\bm{\tau}}^{\downarrow,{(k)}})|-|\Upsilon^{\mathsf{s}}(\widetilde{\bm{\tau}}^{\downarrow,{(k)}})|}{|\Upsilon(\widetilde{\bm{\tau}}^{\downarrow,{(k)}})|}\right)^2\frac{|\Upsilon(\widetilde{\bm{\tau}}^{\downarrow,{(k)}})|}{|\Upsilon_{\mathsf{min}}(\widetilde{\bm{\tau}}^{\downarrow,{(k)}})|}}{1- 4\eta_k^2\beta^2\ell_{\mathsf{max}}^{(k)}\left(\ell_{\mathsf{max}}^{(k)}-1\right)}<\zeta^{(k)}<1
\end{align}
yields
\begin{align}\label{eta_cond}
    \eta_k<\frac{1}{2\beta} \sqrt{\frac{\zeta^{(k)}- 48\mathfrak{X}_1\left (\frac{|\Upsilon(\widetilde{\bm{\tau}}^{\downarrow,{(k)}})|-|\Upsilon^{\mathsf{s}}(\widetilde{\bm{\tau}}^{\downarrow,{(k)}})|}{|\Upsilon(\widetilde{\bm{\tau}}^{\downarrow,{(k)}})|}\right)^2\frac{|\Upsilon(\widetilde{\bm{\tau}}^{\downarrow,{(k)}})|}{|\Upsilon_{\mathsf{min}}(\widetilde{\bm{\tau}}^{\downarrow,{(k)}})|}}{\ell_{\mathsf{max}}^{(k)}\left(\ell_{\mathsf{max}}^{(k)}-1\right)\left(2\mathfrak{X}_1+\zeta^{(k)}\right)}}.
\end{align}
Applying the above condition on \eqref{ineq:main7} and performing some algebraic manipulations give us
\begin{equation}\label{ineq:main8}
\footnotesize
\begin{aligned}
     \hspace{-2.5mm} &\left\Vert\nabla{\mathfrak{L}^{({k})}(\bm{\omega}^{(k)})}\right\Vert^2 {\leq} \frac{\mathbb{E}_k\left[\mathfrak{L}^{({k})}(\bm{\omega}^{(k)})\right]-\mathbb{E}_k\left[\mathfrak{L}^{({k})}(\bm{\omega}^{(k+1)})\right]}{\frac{\eta_{_k}\mathfrak{B}_k}{4}\left(1-\zeta^{(k)}\right)}\\
     \hspace{-2.5mm} &+\frac{1}{{\frac{\eta_{_k}\mathfrak{B}_k}{4}\left(1-\zeta^{(k)}\right)}}\left({2 \beta^2\Theta^2 \eta_k^3}\sum_{b \in \Omega} \sum_{u\in \mathcal{U}_{b}}\frac{|\Upsilon_{u}(\widetilde{\tau}_{b}^{\downarrow,{(k)}})|}{|\Upsilon(\widetilde{\bm{\tau}}^{\downarrow,{(k)}})| \ell_u^{(k)}}\frac{\left(\ell_u^{(k)}\right)\left(\ell_u^{(k)}-1\right)}{1- 4\eta_k^2\beta^2 \ell_u^{(k)}\left(\ell_u^{(k)}-1\right)}\left(1-\frac{{B}_{u}(\widetilde{\tau}_{b}^{\downarrow,{(k)}})}{|\Upsilon_{u}(\widetilde{\tau}_{b}^{\downarrow,{(k)}})|} \right)  \frac{{(|\Upsilon_{u}(\widetilde{\tau}_{b}^{\downarrow,{(k)}})|-1)}\left(\sigma_{u}(\widetilde{\tau}_{b}^{\downarrow,{(k)}})\right)^2}{|\Upsilon_{u}(\widetilde{\tau}_{b}^{\downarrow,{(k)}})|{B}_{u}(\widetilde{\tau}_{b}^{\downarrow,{(k)}})}\right)\\
     \hspace{-2.5mm} & + \frac{1}{\frac{\eta_{_k}\mathfrak{B}_k}{4}\left(1-\zeta^{(k)}\right)}\left(\frac{2 \mathfrak{X}_2 \eta_k^3\beta^2 \mathfrak{B}_k \left(\ell_{\mathsf{max}}^{(k)}\right)\left(\ell_{\mathsf{max}}^{(k)}-1\right)}{1- 4\eta_k^2\beta^2\ell_{\mathsf{max}}^{(k)}\left(\ell_{\mathsf{max}}^{(k)}-1\right)}+ \frac{\beta\eta_{_k}^2\mathfrak{B}_k^2}{2} \sum_{b \in \Omega} \sum_{u\in \mathcal{U}_{b}}\frac{\left(\widehat{\lambda}_{u}^{(k)}|\Upsilon_{u}(\widetilde{\tau}_{b}^{\downarrow,{(k)}})|\right)^2}{\left(|\Upsilon^{\mathsf{s}}(\widetilde{\bm{\tau}}^{\downarrow,{(k)}})|\right)^2 \ell^{(k)}_{u}}\left(1-\frac{{B}_{u}(\widetilde{\tau}_{b}^{\downarrow,{(k)}})}{|\Upsilon_{u}(\widetilde{\tau}_{b}^{\downarrow,{(k)}})|} \right) \frac{2(|\Upsilon_{u}(\widetilde{\tau}_{b}^{\downarrow,{(k)}})|-1)\Theta^2\left(\sigma_{u}(\widetilde{\tau}_{b}^{\downarrow,{(k)}})\right)^2}{|\Upsilon_{u}(\widetilde{\tau}_{b}^{\downarrow,{(k)}})|{B}_{u}(\widetilde{\tau}_{b}^{\downarrow,{(k)}})}\right)\\
     \hspace{-2.5mm} &+\frac{1}{\frac{\eta_{_k}\mathfrak{B}_k}{4}\left(1-\zeta^{(k)}\right)}\left(\frac{24\eta_{_k}\mathfrak{B}_k}{2}\left (\frac{|\Upsilon(\widetilde{\bm{\tau}}^{\downarrow,{(k)}})|-|\Upsilon^{\mathsf{s}}(\widetilde{\bm{\tau}}^{\downarrow,{(k)}})|}{|\Upsilon(\widetilde{\bm{\tau}}^{\downarrow,{(k)}})|}\right)^2\hspace{-1mm}\sum_{b \in \Omega} \sum_{u\in \mathcal{U}_{b}}\hspace{-1mm}\Bigg(\hspace{-1mm}\frac{4 \Theta^2 \beta^2 \eta_k^2 \left(\ell_u^{(k)}-1\right)}{1- 4\eta_k^2\beta^2 \ell_u^{(k)}\left(\ell_u^{(k)}-1\right)} \hspace{-1mm}\left(1-\frac{{B}_{u}(\widetilde{\tau}_{b}^{\downarrow,{(k)}})}{|\Upsilon_{u}(\widetilde{\tau}_{b}^{\downarrow,{(k)}})|} \right)\hspace{-1mm}  \frac{{(|\Upsilon_{u}(\widetilde{\tau}_{b}^{\downarrow,{(k)}})|-1)}\left(\sigma_{u}(\widetilde{\tau}_{b}^{\downarrow,{(k)}})\right)^2}{|\Upsilon_{u}(\widetilde{\tau}_{b}^{\downarrow,{(k)}})|{B}_{u}(\widetilde{\tau}_{b}^{\downarrow,{(k)}})}\hspace{-1mm}\Bigg)\hspace{-1mm}\right)\\
     \hspace{-2.5mm} &+\frac{1}{\frac{\eta_{_k}\mathfrak{B}_k}{4}\left(1-\zeta^{(k)}\right)}\left(\frac{24\eta_{_k}\mathfrak{B}_k}{2}\left (\frac{|\Upsilon(\widetilde{\bm{\tau}}^{\downarrow,{(k)}})|-|\Upsilon^{\mathsf{s}}(\widetilde{\bm{\tau}}^{\downarrow,{(k)}})|}{|\Upsilon(\widetilde{\bm{\tau}}^{\downarrow,{(k)}})|}\right)^2\frac{|\Upsilon(\widetilde{\bm{\tau}}^{\downarrow,{(k)}})|}{|\Upsilon_{\mathsf{min}}(\widetilde{\bm{\tau}}^{\downarrow,{(k)}})|}\left(\frac{\mathfrak{X}_2}{1- 4\eta_k^2\beta^2 \ell_{\mathsf{max}}^{(k)}\left(\ell_{\mathsf{max}}^{(k)}-1\right)}\right)\right).
\end{aligned}
\end{equation}
Taking total expectation and averaging across global aggregations leads to
\begin{equation}\label{eq:final_one_step}
\footnotesize
  \hspace{-2.5mm}
\begin{aligned}
      \hspace{-2.5mm}&\frac{1}{K} \sum_{k=0}^{K-1}\mathbb{E}\left[\left\Vert\nabla{\mathfrak{L}^{({k})}(\bm{\omega}^{(k)})}\right\Vert^2\right] {\leq} \frac{1}{K} \sum_{k=0}^{K-1}\underbrace{\left[\frac{\mathbb{E}_k\left[\mathfrak{L}^{({k})}(\bm{\omega}^{(k)})\right]-\mathbb{E}_k\left[\mathfrak{L}^{({k})}(\bm{\omega}^{(k+1)})\right]}{\frac{\eta_{_k}\mathfrak{B}_k}{4}\left(1-\zeta^{(k)}\right)}\right]}_{(a)}\\
     \hspace{-2.5mm} &+\frac{1}{K} \sum_{k=0}^{K-1}\vast[\frac{1}{{\frac{\eta_{_k}\mathfrak{B}_k}{4}\left(1-\zeta^{(k)}\right)}}\left({2 \beta^2\Theta^2 \eta_k^3\mathfrak{B}_k}\sum_{b \in \Omega} \sum_{u\in \mathcal{U}_{b}}\frac{|\Upsilon_{u}(\widetilde{\tau}_{b}^{\downarrow,{(k)}})|}{|\Upsilon(\widetilde{\bm{\tau}}^{\downarrow,{(k)}})| \ell_u^{(k)}}\frac{\left(\ell_u^{(k)}\right)\left(\ell_u^{(k)}-1\right)}{1- 4\eta_k^2\beta^2 \ell_u^{(k)}\left(\ell_u^{(k)}-1\right)}\left(1-\frac{{B}_{u}(\widetilde{\tau}_{b}^{\downarrow,{(k)}})}{|\Upsilon_{u}(\widetilde{\tau}_{b}^{\downarrow,{(k)}})|} \right)  \frac{{(|\Upsilon_{u}(\widetilde{\tau}_{b}^{\downarrow,{(k)}})|-1)}\left(\sigma_{u}(\widetilde{\tau}_{b}^{\downarrow,{(k)}})\right)^2}{|\Upsilon_{u}(\widetilde{\tau}_{b}^{\downarrow,{(k)}})|{B}_{u}(\widetilde{\tau}_{b}^{\downarrow,{(k)}})}\right)\\
      \hspace{-2.5mm}& + \frac{1}{\frac{\eta_{_k}\mathfrak{B}_k}{4}\left(1-\zeta^{(k)}\right)}\left(\frac{2 \mathfrak{X}_2 \eta_k^3\beta^2 \mathfrak{B}_k\left(\ell_{\mathsf{max}}^{(k)}\right)\left(\ell_{\mathsf{max}}^{(k)}-1\right)}{1- 4\eta_k^2\beta^2\ell_{\mathsf{max}}^{(k)}\left(\ell_{\mathsf{max}}^{(k)}-1\right)}+ \frac{\beta\eta_{_k}^2\mathfrak{B}_k^2}{2} \sum_{b \in \Omega} \sum_{u\in \mathcal{U}_{b}}\frac{\left(\widehat{\lambda}_{u}^{(k)}|\Upsilon_{u}(\widetilde{\tau}_{b}^{\downarrow,{(k)}})|\right)^2}{\left(|\Upsilon^{\mathsf{s}}(\widetilde{\bm{\tau}}^{\downarrow,{(k)}})|\right)^2 \ell^{(k)}_{u}}\left(1-\frac{{B}_{u}(\widetilde{\tau}_{b}^{\downarrow,{(k)}})}{|\Upsilon_{u}(\widetilde{\tau}_{b}^{\downarrow,{(k)}})|} \right) \frac{2(|\Upsilon_{u}(\widetilde{\tau}_{b}^{\downarrow,{(k)}})|-1)\Theta^2\left(\sigma_{u}(\widetilde{\tau}_{b}^{\downarrow,{(k)}})\right)^2}{|\Upsilon_{u}(\widetilde{\tau}_{b}^{\downarrow,{(k)}})|{B}_{u}(\widetilde{\tau}_{b}^{\downarrow,{(k)}})}\right)\\
     \hspace{-2.5mm} &+\frac{1}{\frac{\eta_{_k}\mathfrak{B}_k}{4}\left(1-\zeta^{(k)}\right)}\left(\frac{24\eta_{_k}\mathfrak{B}_k}{2}\hspace{-1mm}\left (\frac{|\Upsilon(\widetilde{\bm{\tau}}^{\downarrow,{(k)}})|-|\Upsilon^{\mathsf{s}}(\widetilde{\bm{\tau}}^{\downarrow,{(k)}})|}{|\Upsilon(\widetilde{\bm{\tau}}^{\downarrow,{(k)}})|}\right)^2\hspace{-1mm}\sum_{b \in \Omega} \sum_{u\in \mathcal{U}_{b}}\hspace{-1mm}\Bigg(\hspace{-1mm}\frac{4 \Theta^2 \beta^2 \eta_k^2 \left(\ell_u^{(k)}-1\right)}{1- 4\eta_k^2\beta^2 \ell_u^{(k)}\left(\ell_u^{(k)}-1\right)} \left(1-\frac{{B}_{u}(\widetilde{\tau}_{b}^{\downarrow,{(k)}})}{|\Upsilon_{u}(\widetilde{\tau}_{b}^{\downarrow,{(k)}})|} \right)  \frac{{(|\Upsilon_{u}(\widetilde{\tau}_{b}^{\downarrow,{(k)}})|-1)}\left(\sigma_{u}(\widetilde{\tau}_{b}^{\downarrow,{(k)}})\right)^2}{|\Upsilon_{u}(\widetilde{\tau}_{b}^{\downarrow,{(k)}})|{B}_{u}(\widetilde{\tau}_{b}^{\downarrow,{(k)}})}\hspace{-1mm}\Bigg)\hspace{-1mm}\right)\\
     \hspace{-2.5mm} &+\frac{1}{\frac{\eta_{_k}\mathfrak{B}_k}{4}\left(1-\zeta^{(k)}\right)}\left(\frac{24\eta_{_k}\mathfrak{B}_k}{2}\left (\frac{|\Upsilon(\widetilde{\bm{\tau}}^{\downarrow,{(k)}})|-|\Upsilon^{\mathsf{s}}(\widetilde{\bm{\tau}}^{\downarrow,{(k)}})|}{|\Upsilon(\widetilde{\bm{\tau}}^{\downarrow,{(k)}})|}\right)^2\frac{|\Upsilon(\widetilde{\bm{\tau}}^{\downarrow,{(k)}})|}{|\Upsilon_{\mathsf{min}}(\widetilde{\bm{\tau}}^{\downarrow,{(k)}})|}\left(\frac{\mathfrak{X}_2}{1- 4\eta_k^2\beta^2 \ell_{\mathsf{max}}^{(k)}\left(\ell_{\mathsf{max}}^{(k)}-1\right)}\right)\right)\vast].
\end{aligned}
\end{equation}
We next focus on the first term on the right hand side of~\eqref{eq:final_one_step} and aim to upper bound it. Consider wall-clock time window $[T^{(k{-}1)}, T^{(k)})$, where $T^{(k)}$ denoting the duration of global round $k$. Further, consider $T^{\mathsf{idle},(k)}_u{\triangleq} [T^{(k{-}1)}, T^{(k)}) {\setminus} T^{\mathsf{train},(k)}_u$, where $T^{\mathsf{train},(k)}_u{\triangleq} T^{(k{-}1)}{+}\Psi(\mathscr{D}_{u}^{\ndownarrow,(k)}){+}[0,\tau_{u}^{\mathsf{LC},{(k)}}]$. We assume that the data arrival/departures only happen during the idle time, so during the local training period of each FLU $u$, the dataset of $u$ is stationary. Noting that $\mathfrak{L}^{(k)}(\bm{\omega}^{(k)})$ is the loss function when global round $k$ is finished, and we bound it in terms of $\mathfrak{L}^{(k-1)}(\bm{\omega}^{(k)})$, i.e., the loss of the GM of global round $k$ (i.e., $\bm{\omega}^{(k)}$) using the total dataset at the end of global round $k-1$ (i.e., at time $T^{(k-1)}$).
\begin{align}\label{eq:driftLoss}
    \mathfrak{L}^{(k)}(\bm{\omega}^{(k)})&= \mathfrak{L}(\bm{\omega}^{(k)} | \Upsilon(T^{(k)}{+}\Psi(\mathscr{D}_{u}^{\ndownarrow,(k+1)})))\nonumber\\
    &=\sum_{b \in \Omega} \sum_{u\in \mathcal{U}_{b}} \frac{|\Upsilon_u(T^{(k)}{+}\Psi(\mathscr{D}_{u}^{\ndownarrow,(k+1)}))|}{|\Upsilon(T^{(k)}{+}\Psi(\mathscr{D}_{u}^{\ndownarrow,(k+1)}))|}\mathfrak{L}_u(\bm{\omega}^{(k)}|\Upsilon_u{(T^{(k)}{+}\Psi(\mathscr{D}_{u}^{\ndownarrow,(k+1)}))})\nonumber\\
    &=\sum_{b \in \Omega} \sum_{u\in \mathcal{U}_{b}} \frac{|\Upsilon_u(T^{(k-1)}{+}\Psi(\mathscr{D}_{u}^{\ndownarrow,(k)}))|}{|\Upsilon(T^{(k-1)}{+}\Psi(\mathscr{D}_{u}^{\ndownarrow,(k)}))|}\mathfrak{L}_u(\bm{\omega}^{(k)}|\Upsilon_u{(T^{(k-1)}{+}\Psi(\mathscr{D}_{u}^{\ndownarrow,(k)}))})\nonumber\\
    &+ \sum_{b \in \Omega}\sum_{u\in \mathcal{U}_{b}} (\lambda_{u}^{(k)}+\overline{\lambda}_{u}^{(k)})\left(\int_{t=\substack{T^{(k{-}1)}{+}\Psi(\mathscr{D}_{u}^{\ndownarrow,(k)})\\+t_{u}^{\mathsf{LC},{(k)}}}}^{T^{(k)}{+}\Psi(\mathscr{D}_{u}^{\ndownarrow,(k+1)})}\mathfrak{D}_u(t)dt\right)\nonumber\\
    &+ \sum_{b \in \Omega}\sum_{u\in \mathcal{U}_{b}} (1-(\lambda_{u}^{(k)}+\overline{\lambda}_{u}^{(k)}))\left(\int_{t=\substack{T^{(k{-}1)}{+}\Psi(\mathscr{D}_{u}^{\ndownarrow,(k)})}}^{T^{(k)}{+}\Psi(\mathscr{D}_{u}^{\ndownarrow,(k+1)})}\mathfrak{D}_u(t)dt\right)\nonumber\\
    &\leq \mathfrak{L}^{(k-1)}(\bm{\omega}^{(k)})\nonumber\\
    &+ \sum_{b \in \Omega}\sum_{u\in \mathcal{U}_{b}} (\lambda_{u}^{(k)}+\overline{\lambda}_{u}^{(k)})\left(\int_{t=\substack{T^{(k{-}1)}{+}\Psi(\mathscr{D}_{u}^{\ndownarrow,(k)})\\+t_{u}^{\mathsf{LC},{(k)}}}}^{T^{(k)}{+}\Psi(\mathscr{D}_{u}^{\ndownarrow,(k+1)})}\mathfrak{D}^{(k)}_u dt\right)\nonumber\\
    &+ \sum_{b \in \Omega}\sum_{u\in \mathcal{U}_{b}} (1-(\lambda_{u}^{(k)}+\overline{\lambda}_{u}^{(k)}))\left(\int_{t=\substack{T^{(k{-}1)}{+}\Psi(\mathscr{D}_{u}^{\ndownarrow,(k)})}}^{T^{(k)}{+}\Psi(\mathscr{D}_{u}^{\ndownarrow,(k+1)})}\mathfrak{D}^{(k)}_u dt\right)\nonumber\\
    &= \mathfrak{L}^{(k-1)}(\bm{\omega}^{(k)})\nonumber\\
    &+ \sum_{b \in \Omega}\sum_{u\in \mathcal{U}_{b}} (\lambda_{u}^{(k)}+\overline{\lambda}_{u}^{(k)})\left( (T^{(k)}{+}\Psi(\mathscr{D}_{u}^{\ndownarrow,(k+1)}) - T^{(k{-}1)}{-}\Psi(\mathscr{D}_{u}^{\ndownarrow,(k)})-t_{u}^{\mathsf{LC},{(k)}})      \mathfrak{D}^{(k)}_u \right)\nonumber\\
    &+ \sum_{b \in \Omega}\sum_{u\in \mathcal{U}_{b}} (1-(\lambda_{u}^{(k)}+\overline{\lambda}_{u}^{(k)}))\left((T^{(k)}{+}\Psi(\mathscr{D}_{u}^{\ndownarrow,(k+1)}) - T^{(k{-}1)}{-}\Psi(\mathscr{D}_{u}^{\ndownarrow,(k)})) \mathfrak{D}^{(k)}_u\right) \\
    &=\mathfrak{L}^{(k-1)}(\bm{\omega}^{(k)}) \\
    &+ \sum_{b\in\Omega} \sum_{u\in\mathcal{U}_b} \big(\lambda_u^{(k)}+\overline{\lambda}_u^{(k)}\big)\mathfrak{D}^{(k)}_u \left[ T^{(k)} + \Psi\!\big(\mathscr{D}_u^{\ndownarrow,(k+1)}\big) - \big( T^{(k-1)} + \Psi\!\big(\mathscr{D}_u^{\ndownarrow,(k)}\big) + t_u^{\mathsf{LC},(k)} \big) \right] \nonumber\\ 
    & + \sum_{b\in\Omega} \sum_{u\in\mathcal{U}_b} \big(1 - (\lambda_u^{(k)}+\overline{\lambda}_u^{(k)})\big) \mathfrak{D}^{(k)}_u \left[ T^{(k)} + \Psi\!\big(\mathscr{D}_u^{\ndownarrow,(k+1)}\big) - \big( T^{(k-1)} + \Psi\!\big(\mathscr{D}_u^{\ndownarrow,(k)}\big) \big) \right] \nonumber\\
\end{align} 

Let $\Delta T_u^{(k)} \triangleq \big(T^{(k)} - T^{(k-1)}\big) 
+ \Psi\!\big(\mathscr{D}_u^{\ndownarrow,(k+1)}\big) - \Psi\!\big(\mathscr{D}_u^{\ndownarrow,(k)}\big)$, then the above expression simplifies to
\begin{align}
\mathfrak{L}^{(k)}(\bm{\omega}^{(k)})&\leq  \mathfrak{L}^{(k-1)}(\bm{\omega}^{(k)})+\sum_{b\in\Omega} \sum_{u\in\mathcal{U}_b}
\big(\lambda_u^{(k)}+\overline{\lambda}_u^{(k)}\big)
\mathfrak{D}^{(k)}_u \big( \Delta T_u^{(k)} - t_u^{\mathsf{LC},(k)} \big) \nonumber\\
&\quad + \sum_{b\in\Omega} \sum_{u\in\mathcal{U}_b}
\big(1 - (\lambda_u^{(k)}+\overline{\lambda}_u^{(k)})\big)
\mathfrak{D}^{(k)}_u \, \Delta T_u^{(k)} \nonumber\\
&= \mathfrak{L}^{(k-1)}(\bm{\omega}^{(k)})+ \sum_{b\in\Omega} \sum_{u\in\mathcal{U}_b} 
\mathfrak{D}^{(k)}_u 
\left[ (\lambda_u^{(k)}+\overline{\lambda}_u^{(k)})\Delta T_u^{(k)} - (\lambda_u^{(k)}+\overline{\lambda}_u^{(k)}) t_u^{\mathsf{LC},(k)}
+ \Delta T_u^{(k)} - (\lambda_u^{(k)}+\overline{\lambda}_u^{(k)})\Delta T_u^{(k)} \right] \nonumber\\
&= \mathfrak{L}^{(k-1)}(\bm{\omega}^{(k)})+\sum_{b\in\Omega} \sum_{u\in\mathcal{U}_b} 
\mathfrak{D}^{(k)}_u 
\left[ \Delta T_u^{(k)} - (\lambda_u^{(k)}+\overline{\lambda}_u^{(k)}) t_u^{\mathsf{LC},(k)} \right]\\
&= \mathfrak{L}^{(k-1)}(\bm{\omega}^{(k)})+\sum_{b\in\Omega} \sum_{u\in\mathcal{U}_b} 
\mathfrak{D}^{(k)}_u 
\left[ \Delta T_u^{(k)} - \widehat{\lambda}_{u}^{(k)} t_u^{\mathsf{LC},(k)} \right].
\end{align}

where $\mathfrak{D}^{(k)}_u=\max_{t\in T^{\mathsf{Idle},(k)}_u} \mathfrak{D}_u(t)$, $\forall u$. Moreover, $\mathfrak{L}^{(-1)}(\bm{\omega}^{(0)})$ denotes the initial loss of the algorithm before model training starts. Also using the above bound recursively in term $(a)$ of~\eqref{eq:final_one_step}, we get
{\small
\begin{align}\label{eq:final_one_step2}
      \hspace{-2.5mm}&\frac{1}{K} \sum_{k=0}^{K-1}\mathbb{E}\left[\left\Vert\nabla{\mathfrak{L}^{({k})}(\bm{\omega}^{(k)})}\right\Vert^2\right] {\leq} \frac{1}{K} \sum_{k=0}^{K-1}\frac{\mathbb{E}_k\left[\mathfrak{L}^{(k-1)}(\bm{\omega}^{(k)})\right]-\mathbb{E}_k\left[\mathfrak{L}^{({k})}(\bm{\omega}^{(k+1)})\right]}{\frac{\eta_{_k}\mathfrak{B}_k}{4}\left(1-\zeta^{(k)}\right)}+\frac{\sum_{b \in \Omega}\sum_{u\in \mathcal{U}_{b}} \mathfrak{D}^{(k)}_u\left(\Delta T_u^{(k)}-\widehat{\lambda}_{u}^{(k)}t_{u}^{\mathsf{LC},{(k)}}\right) }{\frac{\eta_{_k}\mathfrak{B}_k}{4}\left(1-\zeta^{(k)}\right)}\nonumber\\
      \hspace{-2.5mm}&+\frac{1}{K} \sum_{k=0}^{K-1}\vast[\frac{1}{{\frac{\eta_{_k}}{4}\left(1-\zeta^{(k)}\right)}}\left({2 \beta^2\Theta^2 \eta_k^3}\sum_{b \in \Omega} \sum_{u\in \mathcal{U}_{b}}\frac{|\Upsilon_{u}(\widetilde{\tau}_{b}^{\downarrow,{(k)}})|}{|\Upsilon(\widetilde{\bm{\tau}}^{\downarrow,{(k)}})| }\frac{\left(\ell_u^{(k)}-1\right)}{1- 4\eta_k^2\beta^2 \ell_u^{(k)}\left(\ell_u^{(k)}-1\right)}\left(1-\frac{{B}_{u}(\widetilde{\tau}_{b}^{\downarrow,{(k)}})}{|\Upsilon_{u}(\widetilde{\tau}_{b}^{\downarrow,{(k)}})|} \right)  \frac{{(|\Upsilon_{u}(\widetilde{\tau}_{b}^{\downarrow,{(k)}})|-1)}\left(\sigma_{u}(\widetilde{\tau}_{b}^{\downarrow,{(k)}})\right)^2}{|\Upsilon_{u}(\widetilde{\tau}_{b}^{\downarrow,{(k)}})|{B}_{u}(\widetilde{\tau}_{b}^{\downarrow,{(k)}})}\right)\nonumber\\
     \hspace{-2.5mm} & + \frac{1}{\frac{\eta_{_k}}{4}\left(1-\zeta^{(k)}\right)}\left(\frac{2 \mathfrak{X}_2 \eta_k^3\beta^2 \left(\ell_{\mathsf{max}}^{(k)}\right)\left(\ell_{\mathsf{max}}^{(k)}-1\right)}{1- 4\eta_k^2\beta^2\ell_{\mathsf{max}}^{(k)}\left(\ell_{\mathsf{max}}^{(k)}-1\right)}+ \frac{\beta\eta_{_k}^2\mathfrak{B}_k}{2} \sum_{b \in \Omega} \sum_{u\in \mathcal{U}_{b}}\frac{\left(\widehat{\lambda}_{u}^{(k)}|\Upsilon_{u}(\widetilde{\tau}_{b}^{\downarrow,{(k)}})|\right)^2}{\left(|\Upsilon^{\mathsf{s}}(\widetilde{\bm{\tau}}^{\downarrow,{(k)}})|\right)^2 \ell^{(k)}_{u}}\left(1-\frac{{B}_{u}(\widetilde{\tau}_{b}^{\downarrow,{(k)}})}{|\Upsilon_{u}(\widetilde{\tau}_{b}^{\downarrow,{(k)}})|} \right) \frac{2(|\Upsilon_{u}(\widetilde{\tau}_{b}^{\downarrow,{(k)}})|-1)\Theta^2\left(\sigma_{u}(\widetilde{\tau}_{b}^{\downarrow,{(k)}})\right)^2}{|\Upsilon_{u}(\widetilde{\tau}_{b}^{\downarrow,{(k)}})|{B}_{u}(\widetilde{\tau}_{b}^{\downarrow,{(k)}})}\right)\nonumber\\
     \hspace{-2.5mm} &+\frac{1}{\frac{\eta_{_k}}{4}\left(1-\zeta^{(k)}\right)}\left(\frac{24}{2}\eta_{_k}\hspace{-1mm}\left (\frac{|\Upsilon(\widetilde{\bm{\tau}}^{\downarrow,{(k)}})|-|\Upsilon^{\mathsf{s}}(\widetilde{\bm{\tau}}^{\downarrow,{(k)}})|}{|\Upsilon(\widetilde{\bm{\tau}}^{\downarrow,{(k)}})|}\right)^2\hspace{-1mm}\sum_{b \in \Omega} \sum_{u\in \mathcal{U}_{b}}\hspace{-1mm}\Bigg(\hspace{-1mm}\frac{4 \Theta^2 \beta^2 \eta_k^2 \left(\ell_u^{(k)}-1\right)}{1- 4\eta_k^2\beta^2 \ell_u^{(k)}\left(\ell_u^{(k)}-1\right)} \left(1-\frac{{B}_{u}(\widetilde{\tau}_{b}^{\downarrow,{(k)}})}{|\Upsilon_{u}(\widetilde{\tau}_{b}^{\downarrow,{(k)}})|} \right)  \frac{{(|\Upsilon_{u}(\widetilde{\tau}_{b}^{\downarrow,{(k)}})|-1)}\left(\sigma_{u}(\widetilde{\tau}_{b}^{\downarrow,{(k)}})\right)^2}{|\Upsilon_{u}(\widetilde{\tau}_{b}^{\downarrow,{(k)}})|{B}_{u}(\widetilde{\tau}_{b}^{\downarrow,{(k)}})}\hspace{-1mm}\Bigg)\hspace{-1mm}\right)\nonumber\\
     \hspace{-2.5mm} &+\frac{1}{\frac{\eta_{_k}}{4}\left(1-\zeta^{(k)}\right)}\left(\frac{24}{2}\eta_{_k}\left (\frac{|\Upsilon(\widetilde{\bm{\tau}}^{\downarrow,{(k)}})|-|\Upsilon^{\mathsf{s}}(\widetilde{\bm{\tau}}^{\downarrow,{(k)}})|}{|\Upsilon(\widetilde{\bm{\tau}}^{\downarrow,{(k)}})|}\right)^2\frac{|\Upsilon(\widetilde{\bm{\tau}}^{\downarrow,{(k)}})|}{|\Upsilon_{\mathsf{min}}(\widetilde{\bm{\tau}}^{\downarrow,{(k)}})|}\left(\frac{\mathfrak{X}_2}{1- 4\eta_k^2\beta^2 \ell_{\mathsf{max}}^{(k)}\left(\ell_{\mathsf{max}}^{(k)}-1\right)}\right)\right)\vast].
\end{align}
}%
Performing some algebraic operations leads to
{\small
\begin{align}\label{eq:final_one_step3}
    &\frac{1}{K} \sum_{k=0}^{K-1}\mathbb{E}\left[\left\Vert\nabla{\mathfrak{L}^{({k})}(\bm{\omega}^{(k)})}\right\Vert^2\right] {\leq} \frac{4}{K} \sum_{k=0}^{K-1}\frac{\mathbb{E}_k\left[\mathfrak{L}^{(k-1)}(\bm{\omega}^{(k)})\right]-\mathbb{E}_k\left[\mathfrak{L}^{({k})}(\bm{\omega}^{(k+1)})\right]}{\eta_{_k}\mathfrak{B}_k\left(1-\zeta^{(k)}\right)}+\frac{\sum_{b \in \Omega}\sum_{u\in \mathcal{U}_{b}} \mathfrak{D}^{(k)}_u\left(\Delta T_u^{(k)}-\widehat{\lambda}_{u}^{(k)}t_{u}^{\mathsf{LC},{(k)}}\right) }{\eta_{_k}\mathfrak{B}_k\left(1-\zeta^{(k)}\right)}\nonumber\\
    &+\frac{8}{K} \sum_{k=0}^{K-1}\vast[\frac{1}{{\left(1-\zeta^{(k)}\right)}}\left({\beta^2\Theta^2 \eta_k^2}\sum_{b \in \Omega} \sum_{u\in \mathcal{U}_{b}}\frac{|\Upsilon_{u}(\widetilde{\tau}_{b}^{\downarrow,{(k)}})|}{|\Upsilon(\widetilde{\bm{\tau}}^{\downarrow,{(k)}})| }\frac{\left(\ell_u^{(k)}-1\right)}{1- 4\eta_k^2\beta^2 \ell_u^{(k)}\left(\ell_u^{(k)}-1\right)}\left(1-\frac{{B}_{u}(\widetilde{\tau}_{b}^{\downarrow,{(k)}})}{|\Upsilon_{u}(\widetilde{\tau}_{b}^{\downarrow,{(k)}})|} \right)  \frac{{(|\Upsilon_{u}(\widetilde{\tau}_{b}^{\downarrow,{(k)}})|-1)}\left(\sigma_{u}(\widetilde{\tau}_{b}^{\downarrow,{(k)}})\right)^2}{|\Upsilon_{u}(\widetilde{\tau}_{b}^{\downarrow,{(k)}})|{B}_{u}(\widetilde{\tau}_{b}^{\downarrow,{(k)}})}\right)\nonumber\\
    & + \frac{1}{\left(1-\zeta^{(k)}\right)}\left(\frac{\mathfrak{X}_2 \eta_k^2\beta^2 \left(\ell_{\mathsf{max}}^{(k)}\right)\left(\ell_{\mathsf{max}}^{(k)}-1\right)}{1- 4\eta_k^2\beta^2\ell_{\mathsf{max}}^{(k)}\left(\ell_{\mathsf{max}}^{(k)}-1\right)}+ \frac{\beta\eta_{_k}\mathfrak{B}_k}{2} \sum_{b \in \Omega} \sum_{u\in \mathcal{U}_{b}}\frac{\left(\widehat{\lambda}_{u}^{(k)}|\Upsilon_{u}(\widetilde{\tau}_{b}^{\downarrow,{(k)}})|\right)^2}{\left(|\Upsilon^{\mathsf{s}}(\widetilde{\bm{\tau}}^{\downarrow,{(k)}})|\right)^2 \ell^{(k)}_{u}}\left(1-\frac{{B}_{u}(\widetilde{\tau}_{b}^{\downarrow,{(k)}})}{|\Upsilon_{u}(\widetilde{\tau}_{b}^{\downarrow,{(k)}})|} \right) \frac{(|\Upsilon_{u}(\widetilde{\tau}_{b}^{\downarrow,{(k)}})|-1)\Theta^2\left(\sigma_{u}(\widetilde{\tau}_{b}^{\downarrow,{(k)}})\right)^2}{|\Upsilon_{u}(\widetilde{\tau}_{b}^{\downarrow,{(k)}})|{B}_{u}(\widetilde{\tau}_{b}^{\downarrow,{(k)}})}\right)\nonumber\\
    &+\frac{24}{\left(1-\zeta^{(k)}\right)}\left(\left (\frac{|\Upsilon(\widetilde{\bm{\tau}}^{\downarrow,{(k)}})|-|\Upsilon^{\mathsf{s}}(\widetilde{\bm{\tau}}^{\downarrow,{(k)}})|}{|\Upsilon(\widetilde{\bm{\tau}}^{\downarrow,{(k)}})|}\right)^2\sum_{b \in \Omega} \sum_{u\in \mathcal{U}_{b}}\Bigg(\frac{\Theta^2 \beta^2 \eta_k^2 \left(\ell_u^{(k)}-1\right)}{1- 4\eta_k^2\beta^2 \ell_u^{(k)}\left(\ell_u^{(k)}-1\right)} \left(1-\frac{{B}_{u}(\widetilde{\tau}_{b}^{\downarrow,{(k)}})}{|\Upsilon_{u}(\widetilde{\tau}_{b}^{\downarrow,{(k)}})|} \right)  \frac{{(|\Upsilon_{u}(\widetilde{\tau}_{b}^{\downarrow,{(k)}})|-1)}\left(\sigma_{u}(\widetilde{\tau}_{b}^{\downarrow,{(k)}})\right)^2}{|\Upsilon_{u}(\widetilde{\tau}_{b}^{\downarrow,{(k)}})|{B}_{u}(\widetilde{\tau}_{b}^{\downarrow,{(k)}})}\Bigg)\right)\nonumber\\
    &+\frac{6}{\left(1-\zeta^{(k)}\right)}\left(\left (\frac{|\Upsilon(\widetilde{\bm{\tau}}^{\downarrow,{(k)}})|-|\Upsilon^{\mathsf{s}}(\widetilde{\bm{\tau}}^{\downarrow,{(k)}})|}{|\Upsilon(\widetilde{\bm{\tau}}^{\downarrow,{(k)}})|}\right)^2\frac{|\Upsilon(\widetilde{\bm{\tau}}^{\downarrow,{(k)}})|}{|\Upsilon_{\mathsf{min}}(\widetilde{\bm{\tau}}^{\downarrow,{(k)}})|}\left(\frac{\mathfrak{X}_2}{1- 4\eta_k^2\beta^2 \ell_{\mathsf{max}}^{(k)}\left(\ell_{\mathsf{max}}^{(k)}-1\right)}\right)\right)\vast].
\end{align}
}%
which concludes the proof.
\newpage

\section{Proof of Theorem~\ref{th:sufficient_conditions}}\label{app:th:sufficient_conditions}
Considering \eqref{eq:final_one_step3}, we have
\begin{equation}\label{eq:final_one_stepP3}
\small
\begin{aligned}
     \hspace{-2.5mm} &\frac{1}{K} \sum_{k=0}^{K-1}\mathbb{E}\left[\left\Vert\nabla{\mathfrak{L}^{({k})}(\bm{\omega}^{(k)})}\right\Vert^2\right] {\leq} \frac{4}{K} \sum_{k=0}^{K-1}\left[\frac{\mathbb{E}_k\left[\mathfrak{L}^{(k-1)}(\bm{\omega}^{(k)})\right]-\mathbb{E}_k\left[\mathfrak{L}^{({k})}(\bm{\omega}^{(k+1)})\right]}{\eta_{_k}\mathfrak{B}_k\left(1-\zeta^{(k)}\right)}\right]\\
     \hspace{-2.5mm} &+\underbrace{\sum_{k=0}^{K-1}\left[\frac{4}{K} \frac{\sum_{b \in \Omega}\sum_{u\in \mathcal{U}_{b}} \mathfrak{D}^{(k)}_u\left(\Delta T_u^{(k)}-\widehat{\lambda}_{u}^{(k)}t_{u}^{\mathsf{LC},{(k)}}\right)}{\eta_{_k}\mathfrak{B}_k\left(1-\zeta^{(k)}\right)}\right]}_{(a)}\\
      \hspace{-2.5mm}&+\frac{8}{K} \sum_{k=0}^{K-1}\vast[\frac{1}{{\left(1-\zeta^{(k)}\right)}}\left({\beta^2\Theta^2 \eta_k^2}\sum_{b \in \Omega} \sum_{u\in \mathcal{U}_{b}}\frac{|\Upsilon_{u}(\widetilde{\tau}_{b}^{\downarrow,{(k)}})|}{|\Upsilon(\widetilde{\bm{\tau}}^{\downarrow,{(k)}})| \ell_u^{(k)}}\frac{\left(\ell_u^{(k)}\right)\left(\ell_u^{(k)}-1\right)}{1- 4\eta_k^2\beta^2 \ell_u^{(k)}\left(\ell_u^{(k)}-1\right)}\left(1-\frac{{B}_{u}(\widetilde{\tau}_{b}^{\downarrow,{(k)}})}{|\Upsilon_{u}(\widetilde{\tau}_{b}^{\downarrow,{(k)}})|} \right)  \frac{{(|\Upsilon_{u}(\widetilde{\tau}_{b}^{\downarrow,{(k)}})|-1)}\left(\sigma_{u}(\widetilde{\tau}_{b}^{\downarrow,{(k)}})\right)^2}{|\Upsilon_{u}(\widetilde{\tau}_{b}^{\downarrow,{(k)}})|{B}_{u}(\widetilde{\tau}_{b}^{\downarrow,{(k)}})}\right)\\
     \hspace{-2.5mm} & + \frac{1}{\left(1-\zeta^{(k)}\right)}\left(\frac{\mathfrak{X}_2 \eta_k^2\beta^2 \left(\ell_{\mathsf{max}}^{(k)}\right)\left(\ell_{\mathsf{max}}^{(k)}-1\right)}{1- 4\eta_k^2\beta^2\ell_{\mathsf{max}}^{(k)}\left(\ell_{\mathsf{max}}^{(k)}-1\right)}+ \frac{\beta\eta_{_k}\mathfrak{B}_k}{2} \sum_{b \in \Omega} \sum_{u\in \mathcal{U}_{b}}\frac{\left(\widehat{\lambda}_{u}^{(k)}|\Upsilon_{u}(\widetilde{\tau}_{b}^{\downarrow,{(k)}})|\right)^2}{\left(|\Upsilon^{\mathsf{s}}(\widetilde{\bm{\tau}}^{\downarrow,{(k)}})|\right)^2 \ell^{(k)}_{u}}\left(1-\frac{{B}_{u}(\widetilde{\tau}_{b}^{\downarrow,{(k)}})}{|\Upsilon_{u}(\widetilde{\tau}_{b}^{\downarrow,{(k)}})|} \right) \frac{(|\Upsilon_{u}(\widetilde{\tau}_{b}^{\downarrow,{(k)}})|-1)\Theta^2\left(\sigma_{u}(\widetilde{\tau}_{b}^{\downarrow,{(k)}})\right)^2}{|\Upsilon_{u}(\widetilde{\tau}_{b}^{\downarrow,{(k)}})|{B}_{u}(\widetilde{\tau}_{b}^{\downarrow,{(k)}})}\right)\\
      \hspace{-2.5mm}&+\frac{24}{\left(1-\zeta^{(k)}\right)}\left(\left (\frac{|\Upsilon(\widetilde{\bm{\tau}}^{\downarrow,{(k)}})|-|\Upsilon^{\mathsf{s}}(\widetilde{\bm{\tau}}^{\downarrow,{(k)}})|}{|\Upsilon(\widetilde{\bm{\tau}}^{\downarrow,{(k)}})|}\right)^2\sum_{b \in \Omega} \sum_{u\in \mathcal{U}_{b}}\Bigg(\frac{\Theta^2 \beta^2 \eta_k^2 \left(\ell_u^{(k)}-1\right)}{1- 4\eta_k^2\beta^2 \ell_u^{(k)}\left(\ell_u^{(k)}-1\right)} \left(1-\frac{{B}_{u}(\widetilde{\tau}_{b}^{\downarrow,{(k)}})}{|\Upsilon_{u}(\widetilde{\tau}_{b}^{\downarrow,{(k)}})|} \right)  \frac{{(|\Upsilon_{u}(\widetilde{\tau}_{b}^{\downarrow,{(k)}})|-1)}\left(\sigma_{u}(\widetilde{\tau}_{b}^{\downarrow,{(k)}})\right)^2}{|\Upsilon_{u}(\widetilde{\tau}_{b}^{\downarrow,{(k)}})|{B}_{u}(\widetilde{\tau}_{b}^{\downarrow,{(k)}})}\Bigg)\right)\vast]\\
      \hspace{-2.5mm}&+\underbrace{\sum_{k=0}^{K-1}\left[\frac{48}{K \left(1-\zeta^{(k)}\right)} \left (\frac{|\Upsilon(\widetilde{\bm{\tau}}^{\downarrow,{(k)}})|-|\Upsilon^{\mathsf{s}}(\widetilde{\bm{\tau}}^{\downarrow,{(k)}})|}{|\Upsilon(\widetilde{\bm{\tau}}^{\downarrow,{(k)}})|}\right)^2\frac{|\Upsilon(\widetilde{\bm{\tau}}^{\downarrow,{(k)}})|}{|\Upsilon_{\mathsf{min}}(\widetilde{\bm{\tau}}^{\downarrow,{(k)}})|}\left(\frac{\mathfrak{X}_2}{1- 4\eta_k^2\beta^2 \ell_{\mathsf{max}}^{(k)}\left(\ell_{\mathsf{max}}^{(k)}-1\right)}\right)\right]}_{(b)}.
\end{aligned}
\end{equation}

Using the first $k$ terms of geometric series we bound $(a)$ in \eqref{eq:final_one_stepP3} as follows:
\begin{align}\label{error1}
    \frac{1}{K}\sum_{k=0}^{K-1}\left[ \frac{4\sum_{b \in \Omega}\sum_{u\in \mathcal{U}_{b}} \mathfrak{D}^{(k)}_u\left(\Delta T_u^{(k)}-\widehat{\lambda}_{u}^{(k)}t_{u}^{\mathsf{LC},{(k)}}\right)  }{\eta_{_k}\mathfrak{B}_k\left(1-\zeta^{(k)}\right)}\right]\le\frac{1}{K}\sum_{k=0}^{K-1} \vartheta^k=\frac{1}{K}\frac{\vartheta^{K}-1}{\vartheta-1},
\end{align}
where $|\vartheta|<1$. Considering $\Delta T_u^{(k)} \triangleq \big(T^{(k)} - T^{(k-1)}\big) + \Psi\!\big(\mathscr{D}_u^{\ndownarrow,(k+1)}\big) - \Psi\!\big(\mathscr{D}_u^{\ndownarrow,(k)}\big)$, we have
\begin{align}\label{error1222}
     \frac{\left(4\sum_{b \in \Omega}\sum_{u\in \mathcal{U}_{b}} \mathfrak{D}^{(k)}_u \big(T^{(k)} - T^{(k-1)}\big) + 4\sum_{b \in \Omega}\sum_{u\in \mathcal{U}_{b}} \mathfrak{D}^{(k)}_u\left(\Psi\!\big(\mathscr{D}_u^{\ndownarrow,(k+1)}\big) - \Psi\!\big(\mathscr{D}_u^{\ndownarrow,(k)}\big)\right) -4\sum_{b \in \Omega}\sum_{u\in \mathcal{U}_{b}} \mathfrak{D}^{(k)}_u\widehat{\lambda}_{u}^{(k)}t_{u}^{\mathsf{LC},{(k)}}\right)  }{\eta_{_k}\mathfrak{B}_k\left(1-\zeta^{(k)}\right)}\le \vartheta^k.
\end{align}
Solving the above inequality for the duration of global training round $k$ (i.e., $T^{(k)}-T^{(k{-}1)}$) give us the following \textit{sufficient condition} ($\bm{\mathfrak{C}^{(\mathsf{L})}}$) for training latency of global round $k$ (i.e., $T^{(k)}-T^{(k{-}1)}$):
\begin{align}\label{necess:gamma_upper}
    T^{(k)}-T^{(k{-}1)} \le \frac{4\sum_{b \in \Omega}\sum_{u\in \mathcal{U}_{b}} \mathfrak{D}^{(k)}_u \left(\widehat{\lambda}_{u}^{(k)}t_{u}^{\mathsf{LC},{(k)}}-\Psi\!\big(\mathscr{D}_u^{\ndownarrow,(k+1)}\big) + \Psi\!\big(\mathscr{D}_u^{\ndownarrow,(k)}\big)\right)+ \eta_{_k}\mathfrak{B}_k\left(1-\zeta^{(k)}\right)\vartheta^k}{4\sum_{b \in \Omega}\sum_{u\in \mathcal{U}_{b}} \mathfrak{D}^{(k)}_u}.
\end{align}

Subsequently, if sufficient condition $\bm{\mathfrak{C}^{(\mathsf{L})}}$ in \eqref{necess:gamma_upper} holds, taking limit from \eqref{error1} when $K\rightarrow \infty$ leads to the following equality:
\begin{align}\label{error1111}
    \lim_{K\rightarrow \infty}\frac{1}{K}\sum_{k=0}^{K-1}\left[\frac{4\sum_{b \in \Omega}\sum_{u\in \mathcal{U}_{b}} \mathfrak{D}^{(k)}_u\left(\Delta T_u^{(k)}-\widehat{\lambda}_{u}^{(k)}t_{u}^{\mathsf{LC},{(k)}}\right)}{\eta_{_k}\mathfrak{B}_k\left(1-\zeta^{(k)}\right)}\right]\le\lim_{K\rightarrow \infty}\frac{1}{K}\sum_{k=0}^{K-1} \vartheta^k=\lim_{K\rightarrow \infty}\frac{1}{K}\frac{\vartheta^{K}-1}{\vartheta-1}= 0,
\end{align}
which concludes the proof of sufficient condition $\bm{\mathfrak{C}^{(\mathsf{L})}}$ on $T^{(k)}-T^{(k{-}1)}$.

Similarly, using the first $k$ terms of geometric series we bound term $(b)$ in \eqref{eq:final_one_stepP3} as follows:
\begin{equation}\label{error2}
    \frac{1}{K}\sum_{k=0}^{K-1}\left[\frac{48}{\left(1-\zeta^{(k)}\right)} \left (\frac{|\Upsilon(\widetilde{\bm{\tau}}^{\downarrow,{(k)}})|-|\Upsilon^{\mathsf{s}}(\widetilde{\bm{\tau}}^{\downarrow,{(k)}})|}{|\Upsilon(\widetilde{\bm{\tau}}^{\downarrow,{(k)}})|}\right)^2\frac{|\Upsilon(\widetilde{\bm{\tau}}^{\downarrow,{(k)}})|}{|\Upsilon_{\mathsf{min}}(\widetilde{\bm{\tau}}^{\downarrow,{(k)}})|}\left(\frac{\mathfrak{X}_2}{1- 4\eta_k^2\beta^2 \ell_{\mathsf{max}}^{(k)}\left(\ell_{\mathsf{max}}^{(k)}-1\right)}\right)\right]\le\frac{1}{K}\sum_{k=0}^{K-1} \varpi^k= \frac{1}{K}\frac{\varpi^{K}-1}{\varpi-1},
\end{equation}
where $|\varpi|<1$. Accordingly, we have:
\begin{equation}\label{error244444}
    \frac{48}{\left(1-\zeta^{(k)}\right)} \left (\frac{|\Upsilon(\widetilde{\bm{\tau}}^{\downarrow,{(k)}})|-|\Upsilon^{\mathsf{s}}(\widetilde{\bm{\tau}}^{\downarrow,{(k)}})|}{|\Upsilon(\widetilde{\bm{\tau}}^{\downarrow,{(k)}})|}\right)^2\frac{|\Upsilon(\widetilde{\bm{\tau}}^{\downarrow,{(k)}})|}{|\Upsilon_{\mathsf{min}}(\widetilde{\bm{\tau}}^{\downarrow,{(k)}})|}\left(\frac{\mathfrak{X}_2}{1- 4\eta_k^2\beta^2 \ell_{\mathsf{max}}^{(k)}\left(\ell_{\mathsf{max}}^{(k)}-1\right)}\right)\le \varpi^k,
\end{equation}
which give us the following \textit{sufficient condition} ($\bm{\mathfrak{C}^{(\Upsilon)}}$) for cumulative data size of recruited FLUs at global training round $k$ (i.e., $|\Upsilon^{\mathsf{s}}(\widetilde{\bm{\tau}}^{\downarrow,{(k)}})|$)
\begin{align}\label{necess:recruitment_upper}
    \left (|\Upsilon(\widetilde{\bm{\tau}}^{\downarrow,{(k)}})|-|\Upsilon^{\mathsf{s}}(\widetilde{\bm{\tau}}^{\downarrow,{(k)}})|\right)^2\le \frac{|\Upsilon(\widetilde{\bm{\tau}}^{\downarrow,{(k)}})|\varpi^k  \left(1-\zeta^{(k)}\right)|\Upsilon_{\mathsf{min}}(\widetilde{\bm{\tau}}^{\downarrow,{(k)}})|\left(1- 4\eta_k^2\beta^2 \ell_{\mathsf{max}}^{(k)}\left(\ell_{\mathsf{max}}^{(k)}-1\right)\right)}{48\mathfrak{X}_2}.
\end{align}
Subsequently, if sufficient condition $\bm{\mathfrak{C}^{(\Upsilon)}}$ in \eqref{necess:recruitment_upper} holds, taking limit from \eqref{error2} when $K\rightarrow \infty$ leads to the following equality:
\begin{align}\label{error22}
    &\lim_{K\rightarrow \infty}\frac{1}{K}\sum_{k=0}^{K-1}\left[\frac{48}{\left(1-\zeta^{(k)}\right)} \left (\frac{|\Upsilon(\widetilde{\bm{\tau}}^{\downarrow,{(k)}})|-|\Upsilon^{\mathsf{s}}(\widetilde{\bm{\tau}}^{\downarrow,{(k)}})|}{|\Upsilon(\widetilde{\bm{\tau}}^{\downarrow,{(k)}})|}\right)^2\frac{|\Upsilon(\widetilde{\bm{\tau}}^{\downarrow,{(k)}})|}{|\Upsilon_{\mathsf{min}}(\widetilde{\bm{\tau}}^{\downarrow,{(k)}})|}\left(\frac{\mathfrak{X}_2}{1- 4\eta_k^2\beta^2 \ell_{\mathsf{max}}^{(k)}\left(\ell_{\mathsf{max}}^{(k)}-1\right)}\right)\right]\nonumber\\
    &~~~~~~~~~~~~~~~~~~~~~~~~~~~~~~~~~~~~~~~~~~~~~~~~~~~~~~~~~~~~~~~~~~~~\le\lim_{K\rightarrow \infty}\frac{1}{K}\sum_{k=0}^{K-1} \varpi^k= \lim_{K\rightarrow \infty}\frac{1}{K}\frac{\varpi^{K}-1}{\varpi-1}=0.
\end{align}
Furthermore, for $\zeta^{(k)}>\widehat{\zeta}^{(k)}$, inequality \eqref{eta_cond} imposes the following condition on $|\Upsilon^{\mathsf{s}}(\widetilde{\bm{\tau}}^{\downarrow,{(k)}})|$:
\begin{equation}\label{eq:suffi_Upsi}
    \left (|\Upsilon(\widetilde{\bm{\tau}}^{\downarrow,{(k)}})|-|\Upsilon^{\mathsf{s}}(\widetilde{\bm{\tau}}^{\downarrow,{(k)}})|\right)^2<\frac{|\Upsilon(\widetilde{\bm{\tau}}^{\downarrow,{(k)}})|\widehat{\zeta}^{(k)}|\Upsilon_{\mathsf{min}}(\widetilde{\bm{\tau}}^{\downarrow,{(k)}})|}{48\mathfrak{X}_1}.
\end{equation}
Considering \eqref{necess:recruitment_upper} and \eqref{eq:suffi_Upsi}, we have 
\begin{equation}\label{suf:main:recruitment}
    \begin{aligned}
        &\left(\hspace{-0.5mm}|\Upsilon(\widetilde{\bm{\tau}}^{\downarrow,{(k)}})|{-}|\Upsilon^{\mathsf{s}}(\widetilde{\bm{\tau}}^{\downarrow,{(k)}})|\hspace{-0.5mm}\right)^2{<}\hspace{-0.5mm}|\Upsilon(\widetilde{\bm{\tau}}^{\downarrow,{(k)}})|\min\hspace{-0.5mm}\left\{\hspace{-0.5mm}\frac{\widehat{\zeta}^{(k)}|\Upsilon_{\mathsf{min}}(\widetilde{\bm{\tau}}^{\downarrow,{(k)}})|}{(48\mathfrak{X}_1)},\frac{\mathfrak{N}^{(k)}}{(48\mathfrak{X}_2)}\hspace{-0.5mm}\right\}\hspace{-0.5mm},
        \hspace{-4mm}
    \end{aligned}
\end{equation}
where
\begin{align}
    \mathfrak{N}^{(k)}&=\varpi^k \left(1-\zeta^{(k)}\right)|\Upsilon_{\mathsf{min}}(\widetilde{\bm{\tau}}^{\downarrow,{(k)}})|\left(1- 4\eta_k^2\beta^2 \ell_{\mathsf{max}}^{(k)}\left(\ell_{\mathsf{max}}^{(k)}-1\right)\right),
\end{align}
which concludes the proof of sufficient condition $\bm{\mathfrak{C}^{(\Upsilon)}}$ on $|\Upsilon^{\mathsf{s}}(\widetilde{\bm{\tau}}^{\downarrow,{(k)}})|$.

Moreover, \eqref{eq:suffi_Upsi} leads to the following condition on $\eta_k$.
\begin{equation}\label{eta_final_cond222}
    \eta_k \leq \min \left\{\frac{1}{2\beta} \sqrt{\frac{\zeta^{(k)}- \widehat{\zeta}^{(k)}}{\Delta}}, \frac{1}{2\beta}\right\},
\end{equation}
where $\Delta=\ell_{\mathsf{max}}^{(k)}(\ell_{\mathsf{max}}^{(k)}-1)(2\mathfrak{X}_1+\zeta^{(k)})$. Assume $\widehat{\ell}_{\mathsf{min}} \leq \ell^{(k)}_{\mathsf{sum}}\leq  \widehat{\ell}_{\mathsf{max}}$, $\max_{k}\{ \ell^{(k)}_{\mathsf{max}}\}\leq \ell_{\mathsf{max}}$, $\min_{k}\{ \ell^{(k)}_{\mathsf{min}}\}\leq \ell_{\mathsf{min}} $. Further, assume sufficient condition $\bm{\mathfrak{C}^{(\eta)}}$ holds which stats that 
\begin{equation}\label{eq:eta_value}
\eta_k = \frac{\alpha}{\sqrt{K \ell^{(k)}_{\mathsf{sum}}/N}},  
\end{equation}
where $\ell^{(k)}_{\mathsf{sum}}=\sum_{b\in\Omega}\sum_{u\in \mathcal{U}_{b}} \ell_u^{(k)}$, and $\alpha$ must be chosen to satisfy \eqref{eta_final_cond222} and 
\begin{equation}\label{alpha}
    \alpha<\sqrt{\frac{\widehat{\ell}_{\mathsf{min}} K}{4 N\beta^2 \ell_{\mathsf{max}}\left(\ell_{\mathsf{max}}-1\right)}}.
\end{equation}

Substituting $\eta_k$ from \eqref{eq:final_one_stepP3} into term $(b)$ in \eqref{eq:final_one_stepP3} and using the same technique utilized in inequality \eqref{error2}, term $(b)$ in \eqref{eq:final_one_stepP3} can be upper bounded as follows:
\begin{align}\label{error22}
    &\frac{1}{K}\sum_{k=0}^{K-1}\left[\frac{48}{\left(1-\zeta^{(k)}\right)} \left (\frac{|\Upsilon(\widetilde{\bm{\tau}}^{\downarrow,{(k)}})|-|\Upsilon^{\mathsf{s}}(\widetilde{\bm{\tau}}^{\downarrow,{(k)}})|}{|\Upsilon(\widetilde{\bm{\tau}}^{\downarrow,{(k)}})|}\right)^2\frac{|\Upsilon(\widetilde{\bm{\tau}}^{\downarrow,{(k)}})|}{|\Upsilon_{\mathsf{min}}(\widetilde{\bm{\tau}}^{\downarrow,{(k)}})|}\left(\frac{\mathfrak{X}_2}{1- 4\eta_k^2\beta^2 \ell_{\mathsf{max}}^{(k)}\left(\ell_{\mathsf{max}}^{(k)}-1\right)}\right)\right]\nonumber\\
    &{\le} \frac{1}{K}\sum_{k=0}^{K-1}\left[\frac{48}{ \left(1-\zeta^{(k)}\right)} \hspace{-1mm}\left(\frac{|\Upsilon(\widetilde{\bm{\tau}}^{\downarrow,{(k)}})|-|\Upsilon^{\mathsf{s}}(\widetilde{\bm{\tau}}^{\downarrow,{(k)}})|}{|\Upsilon(\widetilde{\bm{\tau}}^{\downarrow,{(k)}})|}\right)^2\hspace{-1mm}\frac{|\Upsilon(\widetilde{\bm{\tau}}^{\downarrow,{(k)}})|}{|\Upsilon_{\mathsf{min}}(\widetilde{\bm{\tau}}^{\downarrow,{(k)}})|}\hspace{-1mm}\left(\frac{\mathfrak{X}_2 \ell_{\mathsf{max}} }{\widehat{\ell}_{\mathsf{min}} K-4\alpha^2 N \beta^2 \ell_{\mathsf{max}}\left(\ell_{\mathsf{max}}-1\right)}\right)\hspace{-1mm}\right]{\le}\frac{1}{K}\sum_{k=0}^{K-1} \varpi^k= \frac{1}{K}\frac{\varpi^{K}-1}{\varpi-1}.
\end{align}
Subsequently, we have
\begin{align}\label{error224422}
    \left (\frac{|\Upsilon(\widetilde{\bm{\tau}}^{\downarrow,{(k)}})|-|\Upsilon^{\mathsf{s}}(\widetilde{\bm{\tau}}^{\downarrow,{(k)}})|}{|\Upsilon(\widetilde{\bm{\tau}}^{\downarrow,{(k)}})|}\right)^2\le \frac{\varpi^k\left(1-\zeta^{(k)}\right)|\Upsilon_{\mathsf{min}}(\widetilde{\bm{\tau}}^{\downarrow,{(k)}})|\left(\widehat{\ell}_{\mathsf{min}} K-4\alpha^2 N \beta^2 \ell_{\mathsf{max}}\left(\ell_{\mathsf{max}}-1\right)\right)}{48|\Upsilon(\widetilde{\bm{\tau}}^{\downarrow,{(k)}})| \mathfrak{X}_2 \ell_{\mathsf{max}} }.
\end{align}
Assuming sufficient conditions $\bm{\mathfrak{C}^{(\mathsf{L})}}$ and $\bm{\mathfrak{C}^{(\Upsilon)}}$, given in \eqref{necess:gamma_upper} and \eqref{necess:recruitment_upper}, hold, assuming  $\max_{k{\in}\mathcal{K}} \big\{\zeta^{(k)}\big\} \leq \zeta_{\mathsf{max}}<1$, $\eta_k = \frac{\alpha}{\sqrt{{\ell}^{(k)}_{\mathsf{sum}} K/N}}$, assuming \eqref{alpha} holds, and substituting \eqref{error1}, \eqref{error2}, and \eqref{error224422} into \eqref{eq:final_one_stepP3} give us the following upper bound:
{\small
\begin{align}\label{eq:final_one_stepP344}
      \hspace{-2.5mm}&\frac{1}{K} \sum_{k=0}^{K-1}\mathbb{E}\left[\left\Vert\nabla{\mathfrak{L}^{({k})}(\bm{\omega}^{(k)})}\right\Vert^2\right] {\leq}  \frac{4 \sqrt{\widehat{\ell}_{\mathsf{max}}}\left(\mathfrak{L}^{(-1)}(\bm{\omega}^{(0)}) - \mathfrak{L}^{(K)^\star}\right)}{\mathfrak{B}_k\left(1-\zeta_{\mathsf{max}}\right)\alpha \sqrt{KN}}+\frac{1}{K}\frac{\vartheta^{K}-1}{\vartheta-1}+\frac{1}{K}\frac{\varpi^{K}-1}{\varpi-1}\nonumber\\
     \hspace{-2.5mm} &+\frac{8}{K\left(1-\zeta_{\mathsf{max}}\right)} \sum_{k=0}^{K-1}\vast[\left(\sum_{b \in \Omega} \sum_{u\in \mathcal{U}_{b}}\frac{|\Upsilon_{u}(\widetilde{\tau}_{b}^{\downarrow,{(k)}})|}{|\Upsilon(\widetilde{\bm{\tau}}^{\downarrow,{(k)}})| \ell_u^{(k)}}\frac{{\beta^2\Theta^2 \alpha^2 N}\left(\ell_u^{(k)}\right)\left(\ell_u^{(k)}-1\right)}{{\ell}^{(k)}_{\mathsf{sum}} K- 4 \alpha^2 N\beta^2 \ell_u^{(k)}\left(\ell_u^{(k)}-1\right)}\left(1-\frac{{B}_{u}(\widetilde{\tau}_{b}^{\downarrow,{(k)}})}{|\Upsilon_{u}(\widetilde{\tau}_{b}^{\downarrow,{(k)}})|} \right)  \frac{{(|\Upsilon_{u}(\widetilde{\tau}_{b}^{\downarrow,{(k)}})|-1)}\left(\sigma_{u}(\widetilde{\tau}_{b}^{\downarrow,{(k)}})\right)^2}{|\Upsilon_{u}(\widetilde{\tau}_{b}^{\downarrow,{(k)}})|{B}_{u}(\widetilde{\tau}_{b}^{\downarrow,{(k)}})}\right)\nonumber\\
     \hspace{-2.5mm} &~~~~+ \left(\frac{\mathfrak{X}_2 \alpha^2 N\beta^2 \left(\ell_{\mathsf{max}}^{(k)}\right)\left(\ell_{\mathsf{max}}^{(k)}-1\right)}{{\ell}^{(k)}_{\mathsf{sum}} K- 4\alpha^2 N\beta^2\ell_{\mathsf{max}}^{(k)}\left(\ell_{\mathsf{max}}^{(k)}-1\right)}+ \frac{\beta\alpha\sqrt{N}\mathfrak{B}_k}{2\sqrt{{\ell}^{(k)}_{\mathsf{sum}} K}} \sum_{b \in \Omega} \sum_{u\in \mathcal{U}_{b}}\frac{\left(\widehat{\lambda}_{u}^{(k)}|\Upsilon_{u}(\widetilde{\tau}_{b}^{\downarrow,{(k)}})|\right)^2}{\left(|\Upsilon^{\mathsf{s}}(\widetilde{\bm{\tau}}^{\downarrow,{(k)}})|\right)^2 \ell^{(k)}_{u}}\left(1-\frac{{B}_{u}(\widetilde{\tau}_{b}^{\downarrow,{(k)}})}{|\Upsilon_{u}(\widetilde{\tau}_{b}^{\downarrow,{(k)}})|} \right) \frac{(|\Upsilon_{u}(\widetilde{\tau}_{b}^{\downarrow,{(k)}})|-1)\Theta^2\left(\sigma_{u}(\widetilde{\tau}_{b}^{\downarrow,{(k)}})\right)^2}{|\Upsilon_{u}(\widetilde{\tau}_{b}^{\downarrow,{(k)}})|{B}_{u}(\widetilde{\tau}_{b}^{\downarrow,{(k)}})}\right)\nonumber\\
      \hspace{-2.5mm}&~~~~+\left(1-\zeta_{\mathsf{max}}\right)\Bigg(\frac{\varpi^k|\Upsilon_{\mathsf{min}}(\widetilde{\bm{\tau}}^{\downarrow,{(k)}})|\left(\widehat{\ell}_{\mathsf{min}} K-4\alpha^2 N \beta^2 \ell_{\mathsf{max}}\left(\ell_{\mathsf{max}}-1\right)\right)}{2|\Upsilon(\widetilde{\bm{\tau}}^{\downarrow,{(k)}})| \mathfrak{X}_2 \ell_{\mathsf{max}} }\nonumber\\
     \hspace{-2.5mm} &~~~~~~~~~~~~~~~~~~~~~\times\sum_{b \in \Omega} \sum_{u\in \mathcal{U}_{b}}\frac{\Theta^2 \beta^2 \alpha^2 N\left(\ell_u^{(k)}-1\right)}{{\ell}^{(k)}_{\mathsf{sum}} K- 4\alpha^2 N\beta^2 \ell_u^{(k)}\left(\ell_u^{(k)}-1\right)} \left(1-\frac{{B}_{u}(\widetilde{\tau}_{b}^{\downarrow,{(k)}})}{|\Upsilon_{u}(\widetilde{\tau}_{b}^{\downarrow,{(k)}})|} \right)  \frac{{(|\Upsilon_{u}(\widetilde{\tau}_{b}^{\downarrow,{(k)}})|-1)}\left(\sigma_{u}(\widetilde{\tau}_{b}^{\downarrow,{(k)}})\right)^2}{|\Upsilon_{u}(\widetilde{\tau}_{b}^{\downarrow,{(k)}})|{B}_{u}(\widetilde{\tau}_{b}^{\downarrow,{(k)}})}\Bigg)\vast].
\end{align}
}%
Assume sufficient condition $\bm{\mathfrak{S}^{(\sigma)}}$ holds which stats that 
\begin{equation}\label{sigma}
    \displaystyle\max_{\substack{k\in\mathcal{K},\\b{\in}\Omega,\\u{\in}\mathcal{U}_{b}}}\left\{\left(1-\frac{{B}_{u}(\widetilde{\tau}_{b}^{\downarrow,{(k)}})}{|\Upsilon_{u}(\widetilde{\tau}_{b}^{\downarrow,{(k)}})|} \right)  \frac{{(|\Upsilon_{u}(\widetilde{\tau}_{b}^{\downarrow,{(k)}})|-1)}\left(\sigma_{u}(\widetilde{\tau}_{b}^{\downarrow,{(k)}})\right)^2}{|\Upsilon_{u}(\widetilde{\tau}_{b}^{\downarrow,{(k)}})|{B}_{u}(\widetilde{\tau}_{b}^{\downarrow,{(k)}})}\right\}\leq \sigma_{\mathsf{max}}.
\end{equation}
Assuming $\ell^{(k)}_{\mathsf{max}}=\max_{b\in\Omega,u\in\mathcal{U}_{b}}\{\ell^{(k)}_u\}$, $\ell^{(k)}_{\mathsf{min}}=\min_{b\in\Omega,u\in\mathcal{U}_{b}}\{\ell^{(k)}_u\}$, and substituting \eqref{sigma} in \eqref{eq:final_one_stepP344}, further bound \eqref{eq:final_one_stepP344} as follows:
\begin{align}\label{eq:final_one_stepP323}
    &\frac{1}{K} \sum_{k=0}^{K-1}\mathbb{E}\left[\left\Vert\nabla{\mathfrak{L}^{({k})}(\bm{\omega}^{(k)})}\right\Vert^2\right] {\leq}  \frac{4 \sqrt{\widehat{\ell}_{\mathsf{max}}}\left(\mathfrak{L}^{(-1)}(\bm{\omega}^{(0)}) - \mathfrak{L}^{(K)^\star}\right)}{\mathfrak{B}_k\left(1-\zeta_{\mathsf{max}}\right)\alpha \sqrt{KN}}+\frac{1}{K}\frac{\vartheta^{K}-1}{\vartheta-1}+\frac{1}{K}\frac{\varpi^{K}-1}{\varpi-1}\nonumber\\
    &+\frac{8}{K\left(1-\zeta_{\mathsf{max}}\right)} \sum_{k=0}^{K-1}\vast[\left({\beta^2\Theta^2 \alpha^2 N}\sum_{b \in \Omega} \sum_{u\in \mathcal{U}_{b}}\frac{|\Upsilon_{u}(\widetilde{\tau}_{b}^{\downarrow,{(k)}})|}{|\Upsilon(\widetilde{\bm{\tau}}^{\downarrow,{(k)}})| }\frac{\left(\ell_{\mathsf{max}}^{(k)}-1\right)}{{\ell}^{(k)}_{\mathsf{sum}} K- 4 \alpha^2 N\beta^2 \ell_{\mathsf{max}}^{(k)}\left(\ell_{\mathsf{max}}^{(k)}-1\right)}\sigma_{\mathsf{max}}\right)\nonumber\\
    &~~~~+ \left(\frac{\mathfrak{X}_2 \alpha^2 N\beta^2 \left(\ell_{\mathsf{max}}^{(k)}\right)\left(\ell_{\mathsf{max}}^{(k)}-1\right)}{{\ell}^{(k)}_{\mathsf{sum}} K- 4\alpha^2 N\beta^2\ell_{\mathsf{max}}^{(k)}\left(\ell_{\mathsf{max}}^{(k)}-1\right)}+ \frac{\beta\alpha\sqrt{N}\mathfrak{B}_k}{2\sqrt{{\ell}^{(k)}_{\mathsf{sum}} K}} \sum_{b \in \Omega} \sum_{u\in \mathcal{U}_{b}}\frac{\left(\widehat{\lambda}_{u}^{(k)}|\Upsilon_{u}(\widetilde{\tau}_{b}^{\downarrow,{(k)}})|\right)^2 \Theta^2}{\left(|\Upsilon^{\mathsf{s}}(\widetilde{\bm{\tau}}^{\downarrow,{(k)}})|\right)^2 \ell^{(k)}_{u}}\sigma_{\mathsf{max}}\right)\nonumber\\
    &~~~~+\left(1-\zeta_{\mathsf{max}}\right)\Bigg(\frac{\varpi^k|\Upsilon_{\mathsf{min}}(\widetilde{\bm{\tau}}^{\downarrow,{(k)}})|\left(\widehat{\ell}_{\mathsf{min}} K-4\alpha^2 N \beta^2 \ell_{\mathsf{max}}\left(\ell_{\mathsf{max}}-1\right)\right)}{2|\Upsilon(\widetilde{\bm{\tau}}^{\downarrow,{(k)}})| \mathfrak{X}_2 \ell_{\mathsf{max}} }\nonumber\\
    &~~~~~~~~~~~~~~~~~~~~~\times\sum_{b \in \Omega} \sum_{u\in \mathcal{U}_{b}}\frac{\Theta^2 \beta^2 \alpha^2 N\left(\ell_{\mathsf{max}}^{(k)}-1\right)}{{\ell}^{(k)}_{\mathsf{sum}} K- 4\alpha^2 N\beta^2 \ell_{\mathsf{max}}^{(k)}\left(\ell_{\mathsf{max}}^{(k)}-1\right)} \sigma_{\mathsf{max}}\Bigg)\vast].
\end{align}
Assuming $\widehat{\ell}_{\mathsf{min}} \leq \ell^{(k)}_{\mathsf{sum}}\leq  \widehat{\ell}_{\mathsf{max}}$, $\max_{k\in\mathcal{K}}\{ \ell^{(k)}_{\mathsf{max}}\}\leq \ell_{\mathsf{max}}$, $\min_{k\in\mathcal{K}}\{ \ell^{(k)}_{\mathsf{min}}\}\leq \ell_{\mathsf{min}} $, $ \widehat{\ell}_{\mathsf{min}} \leq \ell^{(k)}_{\mathsf{sum}}\leq  \widehat{\ell}_{\mathsf{max}}$, $\max_{k{\in}\mathcal{K}} \big\{\zeta^{(k)}\big\} \leq \zeta_{\mathsf{max}}<1$, and $\max_{k\in\mathcal{K}}\left\{\frac{|\Upsilon_{\mathsf{min}}(\widetilde{\bm{\tau}}^{\downarrow,{(k)}})|}{|\Upsilon(\widetilde{\bm{\tau}}^{\downarrow,{(k)}})|}\right\}\le |\Upsilon_{\mathsf{max}}|$, and performing some algebraic operations give us
\begin{align}\label{eq:final_one_stepP323ddg}
    &\frac{1}{K} \sum_{k=0}^{K-1}\mathbb{E}\left[\left\Vert\nabla{\mathfrak{L}^{({k})}(\bm{\omega}^{(k)})}\right\Vert^2\right] {\leq}  \frac{4 \sqrt{\widehat{\ell}_{\mathsf{max}}}\left(\mathfrak{L}^{(-1)}(\bm{\omega}^{(0)}) - \mathfrak{L}^{(K)^\star}\right)}{\mathfrak{B}_k\left(1-\zeta_{\mathsf{max}}\right)\alpha \sqrt{KN}}+\frac{4\beta\alpha\sqrt{N}\mathfrak{B}_k\sigma_{\mathsf{max}}\Theta^2}{\left(1-\zeta_{\mathsf{max}}\right)\sqrt{\widehat{\ell}_{\mathsf{min}} K} \ell_{\mathsf{min}}}+\frac{1}{K}\frac{\vartheta^{K}-1}{\vartheta-1}+\frac{1}{K}\frac{\varpi^{K}-1}{\varpi-1}\nonumber\\
    &+\frac{8\beta^2\Theta^2 \alpha^2 N\left(\ell_{\mathsf{max}}-1\right)\sigma_{\mathsf{max}}}{\left(1-\zeta_{\mathsf{max}}\right)\left(\widehat{\ell}_{\mathsf{min}}K - 4 \alpha^2 N\beta^2 \ell_{\mathsf{max}}\left(\ell_{\mathsf{max}}-1\right)\right)}+ \frac{8\mathfrak{X}_2 \alpha^2 N\beta^2 \left(\ell_{\mathsf{max}}\right)\left(\ell_{\mathsf{max}}-1\right)}{\left(1-\zeta_{\mathsf{max}}\right)\left(\widehat{\ell}_{\mathsf{min}} K- 4\alpha^2 N\beta^2\ell_{\mathsf{max}}\left(\ell_{\mathsf{max}}-1\right)\right)} \nonumber\\
    &+\Bigg(\frac{4|\Upsilon_{\mathsf{max}}|\left(\widehat{\ell}_{\mathsf{min}} K-4\alpha^2 N \beta^2 \ell_{\mathsf{max}}\left(\ell_{\mathsf{max}}-1\right)\right)\Theta^2 \beta^2 \alpha^2 N^2\left(\ell_{\mathsf{max}}-1\right)\sigma_{\mathsf{max}}}{ \mathfrak{X}_2 \ell_{\mathsf{max}}\left(\widehat{\ell}_{\mathsf{min}} K- 4\alpha^2 N\beta^2 \ell_{\mathsf{max}}\left(\ell_{\mathsf{max}}-1\right)\right)} \Bigg)\frac{1}{K} \sum_{k=0}^{K-1}\varpi^k.
\end{align}
Since we assume $|\varpi|<1$, $\frac{1}{K}\sum_{k=0}^{K-1}\varpi^k=\frac{1}{K}\frac{\varpi^{K}-1}{\varpi-1}$. Therefore, the above inequality can be further bounded as follows:
\begin{align}\label{eq:final_one_stepP323ddds}
    &\frac{1}{K} \sum_{k=0}^{K-1}\mathbb{E}\left[\left\Vert\nabla{\mathfrak{L}^{({k})}(\bm{\omega}^{(k)})}\right\Vert^2\right] {\leq}  \underbrace{\frac{4 \sqrt{\widehat{\ell}_{\mathsf{max}}}\left(\mathfrak{L}^{(-1)}(\bm{\omega}^{(0)}) - \mathfrak{L}^{(K)^\star}\right)}{\mathfrak{B}_k\left(1-\zeta_{\mathsf{max}}\right)\alpha \sqrt{KN}}}_{(a)}+\underbrace{\frac{4\beta\alpha\sqrt{N}\mathfrak{B}_k\sigma_{\mathsf{max}}\Theta^2}{\left(1-\zeta_{\mathsf{max}}\right)\sqrt{\widehat{\ell}_{\mathsf{min}} K} \ell_{\mathsf{min}}}}_{(b)}+\underbrace{\frac{1}{K}\frac{\vartheta^{K}-1}{\vartheta-1}}_{(c)}+\underbrace{\frac{1}{K}\frac{\varpi^{K}-1}{\varpi-1}}_{(d)}\nonumber\\
    &+\underbrace{\frac{8\beta^2\Theta^2 \alpha^2 N\left(\ell_{\mathsf{max}}-1\right)\sigma_{\mathsf{max}}}{\left(1-\zeta_{\mathsf{max}}\right)\left(\widehat{\ell}_{\mathsf{min}}K - 4 \alpha^2 N\beta^2 \ell_{\mathsf{max}}\left(\ell_{\mathsf{max}}-1\right)\right)}}_{(e)}+ \underbrace{\frac{8\mathfrak{X}_2 \alpha^2 N\beta^2 \left(\ell_{\mathsf{max}}\right)\left(\ell_{\mathsf{max}}-1\right)}{\left(1-\zeta_{\mathsf{max}}\right)\left(\widehat{\ell}_{\mathsf{min}} K- 4\alpha^2 N\beta^2\ell_{\mathsf{max}}\left(\ell_{\mathsf{max}}-1\right)\right)}}_{(f)} \nonumber\\
    &+\underbrace{\Bigg(\frac{4|\Upsilon_{\mathsf{max}}|\Theta^2 \beta^2 \alpha^2 N^2\left(\ell_{\mathsf{max}}-1\right)\sigma_{\mathsf{max}}(\varpi^{K}-1)}{ \mathfrak{X}_2 \ell_{\mathsf{max}}K(\varpi-1)} \Bigg)}_{(g)}.
\end{align}
Subsequently, we have
\begin{align}\label{eq:final_one_stepP323dddsdsdds}
    &\frac{1}{K} \sum_{k=0}^{K-1}\mathbb{E}\left[\left\Vert\nabla{\mathfrak{L}^{({k})}(\bm{\omega}^{(k)})}\right\Vert^2\right] {=}  \mathcal{O}\left(\frac{1}{\sqrt{K}}\right)+\mathcal{O}\left(\frac{1}{K}\right)+\mathcal{O}\left(\frac{1}{K}\right)+\mathcal{O}\left(\frac{1}{K}\right)+ \mathcal{O}\left(\frac{1}{K}\right)+\mathcal{O}\left(\frac{1}{K}\right)= \mathcal{O}\left(\frac{1}{\sqrt{K}}\right).
\end{align}

Moreover, taking limit from both hand sides of \eqref{eq:final_one_stepP323ddds} when $K\rightarrow\infty$, $(a)$, $(b)$, $(c)$, $(d)$, $(e)$, $(f)$, and $(g)$ in the \eqref{eq:final_one_stepP323ddds} become $0$. Consequently, we have 
\begin{align}
    &\lim_{K\rightarrow \infty}\frac{1}{K} \sum_{k=0}^{K-1}\mathbb{E}\left[\left\Vert\nabla{\mathfrak{L}^{({k})}(\bm{\omega}^{(k)})}\right\Vert^2\right]=0,
\end{align}
which guarantees the convergence of {\tt DCLM} and concludes the proof of Theorem~\ref{th:sufficient_conditions}.

\newpage
\section{Convergence under Sufficient Conditions Presented in Theorem \ref{th:sufficient_conditions}}\label{app:cor:sufficient_condition}
\begin{corollary}[Convergence under Sufficient Conditions]\label{cor:sufficient_condition}
Under sufficient conditions given in Theorem~\ref{th:sufficient_conditions}, the  cumulative average of the global loss function gradient for {\tt DCLM} satisfies:
\begin{equation}\label{eq:cor2}
\begin{aligned}
    &\frac{1}{K} \sum_{k=0}^{K-1}\mathbb{E}\left[\left\Vert\nabla{\mathfrak{L}^{({k})}(\bm{\omega}^{(k)})}\right\Vert^2\right] {\leq}  \frac{4 \sqrt{\widehat{\ell}_{\mathsf{max}}}\left(\mathfrak{L}^{(-1)}(\bm{\omega}^{(0)}) - \mathfrak{L}^{(K)^\star}\right)}{\mathfrak{B}_k\left(1-\zeta_{\mathsf{max}}\right)\alpha \sqrt{KN}}+\frac{4\beta\alpha\sqrt{N}\mathfrak{B}_k\sigma_{\mathsf{max}}\Theta^2}{\left(1-\zeta_{\mathsf{max}}\right)\sqrt{\widehat{\ell}_{\mathsf{min}} K} \ell_{\mathsf{min}}}+\frac{1}{K}\frac{\vartheta^{K}-1}{\vartheta-1}+\frac{1}{K}\frac{\varpi^{K}-1}{\varpi-1}\nonumber\\
    &+\frac{8\beta^2\Theta^2 \alpha^2 N\left(\ell_{\mathsf{max}}-1\right)\sigma_{\mathsf{max}}}{\left(1-\zeta_{\mathsf{max}}\right)\left(\widehat{\ell}_{\mathsf{min}}K - 4 \alpha^2 N\beta^2 \ell_{\mathsf{max}}\left(\ell_{\mathsf{max}}-1\right)\right)}+ \frac{8\mathfrak{X}_2 \alpha^2 N\beta^2 \left(\ell_{\mathsf{max}}\right)\left(\ell_{\mathsf{max}}-1\right)}{\left(1-\zeta_{\mathsf{max}}\right)\left(\widehat{\ell}_{\mathsf{min}} K- 4\alpha^2 N\beta^2\ell_{\mathsf{max}}\left(\ell_{\mathsf{max}}-1\right)\right)} \nonumber\\
    &+\Bigg(\frac{4|\Upsilon_{\mathsf{max}}|\Theta^2 \beta^2 \alpha^2 N^2\left(\ell_{\mathsf{max}}-1\right)\sigma_{\mathsf{max}}(\varpi^{K}-1)}{ \mathfrak{X}_2 \ell_{\mathsf{max}}K(\varpi-1)} \Bigg).
\end{aligned}
\end{equation}
\end{corollary}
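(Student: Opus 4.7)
The plan is to start from the general convergence bound established in Theorem~\ref{th:main} (equation~\eqref{eq:gen_conv}) and show that under the four sufficient conditions $\bm{\mathfrak{C}^{(L)}}, \bm{\mathfrak{C}^{(\Upsilon)}}, \bm{\mathfrak{C}^{(\eta)}}, \bm{\mathfrak{C}^{(\sigma)}}$, each of the seven terms $(a)$--$(g)$ on the right-hand side of~\eqref{eq:gen_conv} either telescopes across global rounds or decays at a rate of at least $\mathcal{O}(1/\sqrt{K})$. The overall strategy is: (i) rewrite~\eqref{eq:gen_conv} after substituting the step-size rule from $\bm{\mathfrak{C}^{(\eta)}}$; (ii) use the model-drift condition $\bm{\mathfrak{C}^{(L)}}$ to control term $(b)$ via a geometric-series majorization; (iii) use the recruitment condition $\bm{\mathfrak{C}^{(\Upsilon)}}$ similarly for term $(g)$; (iv) apply the uniform variance bound $\bm{\mathfrak{C}^{(\sigma)}}$ to terms $(c)$, $(e)$, $(f)$; and (v) telescope term $(a)$ using $\mathfrak{L}^{(k)}\geq\mathfrak{L}^{(K)^\star}$.

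First, I would handle term $(b)$. Because $\bm{\mathfrak{C}^{(L)}}$ bounds $T^{(k)}-T^{(k-1)}-$ (weighted) local-training-time above by $\tfrac{1}{4}\eta_k \mathfrak{B}_k(1-\zeta^{(k)})\vartheta^k/\sum_{b,u}\mathfrak{D}^{(k)}_u$, the summand in $(b)$ is majorized by $\vartheta^k$, and thus $\tfrac{1}{K}\sum_{k=0}^{K-1}\vartheta^k=\tfrac{1}{K}\cdot\tfrac{\vartheta^{K}-1}{\vartheta-1}=\mathcal{O}(1/K)$ since $|\vartheta|<1$. A symmetric argument, this time using $\bm{\mathfrak{C}^{(\Upsilon)}}$, converts term $(g)$ into a $\varpi^k$-bounded summand and yields another $\mathcal{O}(1/K)$ contribution. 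Importantly, I must also verify that the second branch of the min in~\eqref{suf:main:recruitment222} (the one with $\widehat{\zeta}^{(k)}$) is exactly what makes the step-size condition $\bm{\mathfrak{C}^{(\eta)}}$ implementable by reducing $\zeta^{(k)}-48\mathfrak{X}_1\mathscr{R}^{(k)}$ to the strictly positive quantity $\zeta^{(k)}-\widehat{\zeta}^{(k)}$; this is the bridge between the convergence conditions and the Theorem~\ref{th:main} constraint on $\eta_k$.

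Next, I would substitute $\eta_k=\alpha/\sqrt{K\ell^{(k)}_{\mathsf{sum}}/N}$ into the remaining terms. Term $(a)$ telescopes: stacking $\mathbb{E}_k[\mathfrak{L}^{(k-1)}(\bm{\omega}^{(k)})]-\mathbb{E}_k[\mathfrak{L}^{(k)}(\bm{\omega}^{(k+1)})]$ over $k$ (after accounting for the drift-induced gap already absorbed by $(b)$) and using the lower bound $\mathfrak{L}^{(k)}\geq\mathfrak{L}^{(K)^\star}$, together with $\ell^{(k)}_{\mathsf{sum}}\leq\widehat{\ell}_{\mathsf{max}}$ and $\zeta^{(k)}\leq\zeta_{\mathsf{max}}$, produces a $\mathcal{O}(\sqrt{\widehat{\ell}_{\mathsf{max}}/(KN)})$ bound. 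For terms $(c)$--$(f)$, the $\eta_k^2$ factor cancels the $K$ in the denominator of $\eta_k^2=\alpha^2 N/(K\ell^{(k)}_{\mathsf{sum}})$, leaving either $\mathcal{O}(1/K)$ (for $(c),(d),(e),(f)$) after using the uniform variance bound $\sigma_{\mathsf{max}}$, $\ell^{(k)}_{\mathsf{sum}}\geq\widehat{\ell}_{\mathsf{min}}$, and the lower bound $\widehat{\ell}_{\mathsf{min}}K-4\alpha^2N\beta^2\ell_{\mathsf{max}}(\ell_{\mathsf{max}}-1)>0$ guaranteed by the bound on $\alpha$. Term $(b)$ involving $\beta\eta_k\mathfrak{B}_k$ (the $\eta_k$-linear piece of $(e)$) contributes the slowest $\mathcal{O}(1/\sqrt{K})$ rate, matching the leading order of $(a)$.

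The main obstacle will be the bookkeeping: the bound~\eqref{eq:gen_conv} contains seven coupled terms with different dependencies on $\eta_k$, $\ell^{(k)}_{\mathsf{max}}$, $|\Upsilon^{\mathsf{s}}(\widetilde{\bm{\tau}}^{\downarrow,{(k)}})|$, and $\mathfrak{D}^{(k)}_u$, so I must simultaneously satisfy the Theorem~\ref{th:main} constraint $\eta_k\leq\tfrac{1}{2\beta}\sqrt{(\zeta^{(k)}-48\mathfrak{X}_1\mathscr{R}^{(k)})/\Delta}$ and the explicit form in $\bm{\mathfrak{C}^{(\eta)}}$. Reconciling these demands that $\bm{\mathfrak{C}^{(\Upsilon)}}$ be strong enough to make $\mathscr{R}^{(k)}\le \widehat\zeta^{(k)}/(48\mathfrak{X}_1)$ (that is why the min in~\eqref{suf:main:recruitment222} carries both branches), while simultaneously forcing $(g)$ to decay as $\varpi^k$. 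Once these compatibility checks are handled, taking the limit $K\to\infty$ sends every telescoped, $\mathcal{O}(1/K)$, and $\mathcal{O}(1/\sqrt{K})$ term to zero, establishing $\lim_{K\to\infty}\tfrac{1}{K}\sum_{k=0}^{K-1}\mathbb{E}[\|\nabla\mathfrak{L}^{(k)}(\bm{\omega}^{(k)})\|^2]=0$ and the $\mathcal{O}(1/\sqrt{K})$ rate asserted in the theorem.
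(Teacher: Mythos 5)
Your proposal follows essentially the same route as the paper's proof: the paper derives this bound inside the proof of Theorem~\ref{th:sufficient_conditions} by majorizing the drift term with a geometric series in $\vartheta^k$ via $\bm{\mathfrak{C}^{(L)}}$, majorizing the recruitment term with $\varpi^k$ via $\bm{\mathfrak{C}^{(\Upsilon)}}$, telescoping the loss-difference term down to $\mathfrak{L}^{(-1)}(\bm{\omega}^{(0)})-\mathfrak{L}^{(K)^\star}$, substituting $\eta_k=\alpha/\sqrt{K\ell^{(k)}_{\mathsf{sum}}/N}$, and applying the uniform bounds $\sigma_{\mathsf{max}}$, $\widehat{\ell}_{\mathsf{min}}\le\ell^{(k)}_{\mathsf{sum}}\le\widehat{\ell}_{\mathsf{max}}$, $\zeta^{(k)}\le\zeta_{\mathsf{max}}$ to the remaining terms, which is exactly your plan, including the compatibility check that $\bm{\mathfrak{C}^{(\Upsilon)}}$ reduces the Theorem~\ref{th:main} step-size constraint to $\eta_k\le\frac{1}{2\beta}\sqrt{(\zeta^{(k)}-\widehat{\zeta}^{(k)})/\Delta}$.
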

\vspace{-3.4mm}
\begin{proof}
The proof is the same as the proof for Theorem \ref{th:sufficient_conditions} (presented in Appendix~\ref{app:th:sufficient_conditions}). The conclusion for Corollary~\ref{cor:sufficient_condition} can be founded in inequality \eqref{eq:final_one_stepP323ddds}.
\end{proof}
\vspace{-2mm}

\newpage
\section{Transforming the Optimization Problem of {\tt DCML}}\label{app:optTransform}
In the following, we provide a tractable technique to solve $\bm{\mathcal{P}}$ through utilizing \textit{geometric programming} (GP)~\cite{chiang2005geometric}. Our approach relies on a set of approximations and constraint modifications to solve the problem in an iterative manner, which enjoys strong convergence guarantees.


\subsection{Geometric Programming (GP)} \label{sec:GPtransConv}
Consider the following two key definitions: \ref{sec:GPtransConv}
\begin{definition}[Monomial]
    A {monomial} is a function $f: \mathbb{R}^n_{++}\rightarrow \mathbb{R}$:\footnote{$\mathbb{R}^n_{++}$ denotes the strictly positive quadrant of $n$-dimensional Euclidean space.} $f(\bm{y})=d y_1^{\alpha^{(1)}} y_2^{\alpha^{(2)}} \cdots y_n ^{\alpha^{(n)}}$, where $d> 0$, $\bm{y}=[y_1,\cdots,y_n]$, and $\alpha^{(j)}\in \mathbb{R}$, $\forall j$.
\end{definition}
\begin{definition}[Posynomial]
     A posynomial $g$ is defined as a sum of monomials: $g(\bm{y})= \sum_{m=1}^{M} d_m y_1^{\alpha^{(1)}_m} y_2^{\alpha^{(2)}_m} \cdots y_n ^{\alpha^{(n)}_m}$, where $d_m> 0$, $\bm{y}=[y_1,\cdots,y_n]$, and $\alpha_m^{(j)}\in \mathbb{R}$, $\forall j$.
\end{definition}
Accordingly, a standard GP is a non-convex optimization problem defined as minimizing a posynomial subject to inequality constraints on posynomials (and monomials) and equality constraints on monomials~\cite{chiang2005geometric}:
\begin{equation}\label{eq:GPformat}
\begin{aligned}
    &\min_{\bm{y}} f_0 (\bm{y})\nonumber\\
    &\textrm{s.t.} ~~~ f_i(\bm{y})\leq 1, \;\; i=1,\cdots,I,\nonumber\\
    &~~~~~~ h_l(\bm{y})=1, \;\; l=1,\cdots,L,
\end{aligned}
\end{equation}
where  $f_i(\bm{y})=\sum_{m=1}^{M_i} d_{i,m} y_1^{\alpha^{(1)}_{i,m}} y_2^{\alpha^{(2)}_{i,m}} \cdots y_n ^{\alpha^{(n)}_{i,m}}$, $\forall i$, $h_l(\bm{y})= d_l y_1^{\alpha^{(1)}_l} y_2^{\alpha^{(2)}_l} \cdots y_n ^{\alpha^{(n)}_l}$, $\forall l$, $d_{i,m}>0~\forall i,m$, and $d_l>0,\forall l$. Due to the fact that the log-sum-exp function $f(\bm{y}) = \ln \sum_{j=1}^n e^{y_j}$ is convex with the following change of variables $z_i=\ln(y_i)$, $b_{i,k}=\ln(d_{i,k})$, $b_l=\ln (d_l)$  the GP can be converted into the following convex format
\begin{equation}~\label{GPtoConvex}
\begin{aligned}
    &\min_{\bm{z}} \;\ln \sum_{m=1}^{M_0} e^{\left(\bm{\alpha}^{\top}_{0,m}\bm{z}+ b_{0,m}\right)}\nonumber\\
    &\textrm{s.t.} ~~~ \ln \sum_{m=1}^{M_i} e^{\left(\bm{\alpha}^{\top}_{i,m}\bm{z}+ b_{i,m}\right)}\leq 0 \;\; i=1,\cdots,I,\nonumber\\
    &~~~~~~~ \bm{\alpha}_l^\top \bm{z}+b_l =0\;\; l=1,\cdots,L,
\end{aligned}
\end{equation}
where $\bm{z}=[z_1,\cdots,z_n]^\top$, $\bm{\alpha}_{i,k}=\left[\alpha_{i,k}^{(1)},\alpha_{i,k}^{(2)}\cdots, \alpha_{i,k}^{(n)}\right]^\top$, $\forall i,k$, and $\bm{\alpha}_{l}=\left[\alpha_{l}^{(1)},a_{l}^{(2)}\cdots, \alpha_{l}^{(n)}\right]^\top$, $\forall l$.

Nevertheless, $\bm{\mathcal{P}}$ does not admit the format of GP due to several factors which violate the definition of posynomials, including (i) negative terms in the convergence bound \eqref{eq:gen_conv}, (ii) logarithmic functions in the calculations of the data transmission rates (see Sec.~\ref{sec:data_transmission_rate}), (iii) recursive functions such as the global round completion time (see Sec.~\ref{sec:communication_latency}) and scheduling decision constraints (see Sec.~\ref{sec:mac_scheduler}), (iv) and non-convex nested min/max functions involved in the constraints of $\bm{\mathcal{P}}$ and the calculations of the communication latencies of {\tt DCLM} (see Sec.~\ref{sec:communication_latency}). In order to address these violations, we utilize a series of approximations and constraint modifications. Among these techniques, we frequently use the arithmetic-geometric mean inequality presented in the following lemma to upper bound a posynomial with a larger monomial in value.

\begin{lemma}[\textbf{Arithmetic-geometric mean inequality}~\cite{duffin1972reversed}]\label{Lemma:ArethmaticGeometric}
    Consider a posynomial function $g(\bm{y})=\sum_{i=1}^{i'} u_i(\bm{y})$, where $u_i(\bm{y})$ is a monomial, $\forall i$. The following inequality holds:
    \begin{equation}\label{eq:approxPosMonMain}
     g(\bm{y})\geq \hat{g}(\bm{y})\triangleq \prod_{i=1}^{i'}\left( \frac{u_i(\bm{y})}{\alpha_i(\bm{z})}\right)^{\alpha_i(\bm{z})},
     \vspace{-1.5mm}
    \end{equation}
    where $\alpha_i(\bm{z})=u_i(\bm{z})/g(\bm{z})$, $\forall i$, and $\bm{z}>0$ is a fixed point.
\end{lemma}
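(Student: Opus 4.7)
The plan is to derive the bound as a direct application of the weighted arithmetic-geometric mean (AM-GM) inequality, after recognizing that the coefficients $\alpha_i(\bm{z})$ form a valid probability vector with respect to which $g(\bm{y})$ can be rewritten as a weighted average of the scaled monomials $u_i(\bm{y})/\alpha_i(\bm{z})$.

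First, I would verify the admissibility of $\{\alpha_i(\bm{z})\}_{i=1}^{i'}$ as weights. Since each $u_i$ is a monomial with a strictly positive coefficient and $\bm{z} \in \mathbb{R}^n_{++}$, we have $u_i(\bm{z}) > 0$ for every $i$, so $\alpha_i(\bm{z}) > 0$. Moreover, by construction
\begin{equation*}
\sum_{i=1}^{i'} \alpha_i(\bm{z}) \;=\; \sum_{i=1}^{i'} \frac{u_i(\bm{z})}{g(\bm{z})} \;=\; \frac{g(\bm{z})}{g(\bm{z})} \;=\; 1,
\end{equation*}
so $\{\alpha_i(\bm{z})\}$ is a valid weight vector, fixed once $\bm{z}$ is chosen.

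Next, I would rewrite $g(\bm{y})$ trivially as
\begin{equation*}
g(\bm{y}) \;=\; \sum_{i=1}^{i'} u_i(\bm{y}) \;=\; \sum_{i=1}^{i'} \alpha_i(\bm{z}) \cdot \frac{u_i(\bm{y})}{\alpha_i(\bm{z})},
\end{equation*}
which expresses $g(\bm{y})$ as a convex combination of the positive quantities $u_i(\bm{y})/\alpha_i(\bm{z})$. Applying the weighted AM-GM inequality with weights $\alpha_i(\bm{z})$ and values $u_i(\bm{y})/\alpha_i(\bm{z})$ yields
\begin{equation*}
\sum_{i=1}^{i'} \alpha_i(\bm{z}) \cdot \frac{u_i(\bm{y})}{\alpha_i(\bm{z})} \;\geq\; \prod_{i=1}^{i'}\left( \frac{u_i(\bm{y})}{\alpha_i(\bm{z})}\right)^{\alpha_i(\bm{z})} \;=\; \hat{g}(\bm{y}),
\end{equation*}
which is precisely the claimed inequality.

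There is no real obstacle here; the argument is a two-line application of weighted AM-GM once the normalization $\sum_i \alpha_i(\bm{z}) = 1$ is observed. For use in the GP-based solver (as referenced in the surrounding discussion), it is worth remarking that equality holds when $u_i(\bm{y})/\alpha_i(\bm{z})$ is the same for all $i$, which occurs at $\bm{y} = \bm{z}$ since then $u_i(\bm{y})/\alpha_i(\bm{z}) = g(\bm{z})$ is independent of $i$. Hence $\hat{g}$ is a tight monomial lower bound to the posynomial $g$ at the linearization point $\bm{z}$, which is the property that will drive the convergence of the iterative GP approximation procedure used to solve $\bm{\mathcal{P}}$.
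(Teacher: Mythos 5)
Your proof is correct: the paper itself does not prove this lemma but simply cites it to the GP literature, and your two-step argument (verifying that the $\alpha_i(\bm{z})$ are positive and sum to one, then applying weighted AM--GM to the convex-combination rewriting of $g(\bm{y})$) is exactly the standard derivation of this monomial lower bound. Your closing remark about tightness at $\bm{y}=\bm{z}$ is also accurate and is precisely the property the paper relies on for the convergence of its iterative GP solver (Proposition~\ref{propo:KKT}).
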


In the following sections, we transform the objective function and constraints of optimization problem $\bm{\mathcal{P}}$ into standard GP format. Specifically, first, we focus on transforming data transmission rates (Sec.~\ref{app:data_transformation_rate}) and computation/communication latencies (Sec.~\ref{app:computation_communication_latencies}), which appear in all terms of the objective function and different constraints of the optimization problem $\bm{\mathcal{P}}$ (see \eqref{opt:main}). Next, we transform energy consumption equations into standard GP format (Sec.~\ref{app:energy_consumption}). We then present the transformation of the objective function of $\bm{\mathcal{P}}$ (Sec.~\ref{app:objective_fucntion}). Finally, we transform constraints of $\bm{\mathcal{P}}$ into standard GP format. This includes transforming (i) sufficient conditions of ML convergence (Sec.~\ref{app:objective_fucntion}), (ii) scheduling decisions constraints (Sec.~\ref{app:cons:scheduling_decisions}), (iii) PRB/Power allocations constraints (Sec.~\ref{app:cons:PRB_Power_allocation}), and (iv) GM/GPs transmissions constraints (Sec.~\ref{app:cons:GM_GP_transmissions}). 

Note that, for brevity of notation, we refer to the vector of all decision variables as $\bm{v}$.

\newpage
\subsection{Transformation of Data Transmission Rates}\label{app:data_transformation_rate}

\noindent\textbf{$\bm{\mathsf{R2F}}$ transmission rate.} Referring to Sec.~\ref{sec:data_transmission_rate}, let us revisit the data transmission rate equation of $\mathsf{R2F}$ communication mode as follows:



\begin{equation}\label{app:eq:DTR_R2F_C1}
    \mathfrak{R}^{\downarrow}_{b,r}(t_{x}){=}\min_{u\in\mathcal{U}_b}\big\{\mathfrak{R}^{\downarrow}_{b,u,r}(t_{x})|\widehat{\lambda}^{(k)}_u{=} 1,x{\in}\mathcal{N}^{(k)},\forall u{\in} \mathcal{U}_{b}\big\},
\end{equation}
where
\begin{equation} \label{app:eq:DTR_R2F_C2}
\mathfrak{R}^{\downarrow}_{b,u,r}(t_{x})= B_{b} \log_2\big(1+\Gamma^{\downarrow}_{b,u,r}(t_{x})\big),
\end{equation}
In \eqref{app:eq:DTR_R2F_C2}, $\Gamma^{\downarrow}_{b,u,r}(t_{x})$ is given by
\begin{equation}\label{app:eq:DTR_R2F_C3}
    \Gamma^{\downarrow}_{b,u,r}(t_{x}){=}\frac{\vert\xi_{b,u}(t_{x})\vert^2 \rho^{\downarrow}_{b,r}(t_{x}) P^{\mathsf{max}}_{b}}{\displaystyle\sum_{b'{\in}\Omega\setminus \{b\}} |\xi_{b',u}(t_{x})|^2 \rho^{\downarrow}_{b',r}(t_{x}) P^{\mathsf{max}}_{b'}{+} B_{b}N_{0}}.
\end{equation}

In the following, we transform \eqref{app:eq:DTR_R2F_C1} into the standard form of GP. To this end, we first aim to approximate \eqref{app:eq:DTR_R2F_C2}, through defining $\mathfrak{R}^{\downarrow}_{b,u,r}(t_{x})$ as an auxiliary decision variable which must satisfy the following constraint:

\begin{align}
    &\mathfrak{R}^{\downarrow}_{b,u,r}(t_{x})= B_{b} \log_2\big(1+\Gamma^{\downarrow}_{b,u,r}(t_{x})\big)\nonumber\\
    &{\Rightarrow} \frac{\ln(2)\mathfrak{R}^{\downarrow}_{b,u,r}(t_{x})}{B_{b}}=  \ln\big(1+\Gamma^{\downarrow}_{b,u,r}(t_{x})\big)\nonumber\\
    &{\Rightarrow} \exp\left(\frac{\ln(2)\mathfrak{R}^{\downarrow}_{b,u,r}(t_{x})}{B_{b}}\right)= \exp\left(\ln\big(1+\Gamma^{\downarrow}_{b,u,r}(t_{x})\big)\right)\nonumber\\
    &{\Rightarrow} \exp\left(\frac{C\ln(2)\mathfrak{R}^{\downarrow}_{b,u,r}(t_{x})}{C B_{b}}\right)= 1+\Gamma^{\downarrow}_{b,u,r}(t_{x})\nonumber\\
    &{\Rightarrow} \left(\exp\left(\frac{\ln(2)\mathfrak{R}^{\downarrow}_{b,u,r}(t_{x})}{C B_{b}}\right)\right)^C= 1+\Gamma^{\downarrow}_{b,u,r}(t_{x}).\nonumber
\end{align}
Using the first three terms of the Maclaurin series of the exponential function $e^x=1+x+\frac{x^2}{2!}+\frac{x^3}{3!}+\cdots$, we get
\begin{align}\label{ttttt}
    &{\Rightarrow} \left(1+\left(\frac{\ln(2)\mathfrak{R}^{\downarrow}_{b,u,r}(t_{x})}{CB_{b}}\right)+\frac{\left(\frac{\ln(2)\mathfrak{R}^{\downarrow}_{b,u,r}(t_{x})}{CB_{b}}\right)^2}{2}\right)^C= 1+\Gamma^{\downarrow}_{b,u,r}(t_{x})\nonumber\\
    &{\Rightarrow} \frac{\left(1+\left(\frac{\ln(2)\mathfrak{R}^{\downarrow}_{b,u,r}(t_{x})}{CB_{b}}\right)+\left(\frac{\ln(2)\mathfrak{R}^{\downarrow}_{b,u,r}(t_{x})}{\sqrt{2}CB_{b}}\right)^2\right)^C}{1+\Gamma^{\downarrow}_{b,u,r}(t_{x})}= 1.
\end{align}
Where $C$ is a constant utilized to regulate the accuracy of the Maclaurin series expansion of $e^x$. Note that the above approximation is the result of the following  relation:
\begin{align}\label{eq:taylor_e}
e^x=\lim_{C \rightarrow \infty}{\left(e^{\frac{x}{C}}\right)^C} \geq \lim_{C \rightarrow \infty}{\left(1+\frac{x}{C}+\frac{\left(\frac{x}{C}\right)^2}{2}\right)^C} \geq \lim_{C \rightarrow \infty}\left(1+\frac{x}{C}\right)^C = e^x.
\end{align}
Replacing \eqref{app:eq:DTR_R2F_C3} into \eqref{ttttt} leads to
\begin{align}
    \frac{\left(1+\left(\frac{\ln(2)\mathfrak{R}^{\downarrow}_{b,u,r}(t_{x})}{CB_{b}}\right)+\left(\frac{\ln(2)\mathfrak{R}^{\downarrow}_{b,u,r}(t_{x})}{\sqrt{2}CB_{b}}\right)^2\right)^C}{1+\frac{\vert\xi_{b,u}(t_{x})\vert^2 \rho^{\downarrow}_{b,r}(t_{x}) P^{\mathsf{max}}_{b}}{\displaystyle\sum_{b'{\in}\Omega\setminus \{b\}} |\xi_{b',u}(t_{x})|^2 \rho^{\downarrow}_{b',r}(t_{x}) P^{\mathsf{max}}_{b'}{+} B_{b}N_{0}}}= 1.
\end{align}
Performing some algebraic manipulations gives us
\begin{align}
    \frac{\left(\displaystyle\sum_{b'{\in}\Omega\setminus \{b\}} |\xi_{b',u}(t_{x})|^2 \rho^{\downarrow}_{b',r}(t_{x}) P^{\mathsf{max}}_{b'}{+} B_{b}N_{0}\right)\left(1+\left(\frac{\ln(2)\mathfrak{R}^{\downarrow}_{b,u,r}(t_{x})}{CB_{b}}\right)+\left(\frac{\ln(2)\mathfrak{R}^{\downarrow}_{b,u,r}(t_{x})}{\sqrt{2}CB_{b}}\right)^2\right)^C}{\displaystyle\sum_{b'{\in}\Omega} |\xi_{b',u}(t_{x})|^2 \rho^{\downarrow}_{b',r}(t_{x}) P^{\mathsf{max}}_{b'}{+} B_{b}N_{0}}= 1.\nonumber
\end{align}
Defining auxiliary decision variable $\mathfrak{J}^{\downarrow}_{b,u,r}(t_{x})$ satisfying
\begin{equation}\label{app:eq:DTR_R2F_C2_1}
    \mathfrak{J}^{\downarrow}_{b,u,r}(t_{x})=1+\left(\frac{\ln(2)\mathfrak{R}^{\downarrow}_{b,u,r}(t_{x})}{CB_{b}}\right)+\left(\frac{\ln(2)\mathfrak{R}^{\downarrow}_{b,u,r}(t_{x})}{\sqrt{2}CB_{b}}\right)^2
\end{equation}
results in
\begin{align}
    \frac{\left(\mathfrak{J}^{\downarrow}_{b,u,r}(t_{x})\right)^{C}\left(\displaystyle\sum_{b'{\in}\Omega\setminus \{b\}} |\xi_{b',u}(t_{x})|^2 \rho^{\downarrow}_{b',r}(t_{x}) P^{\mathsf{max}}_{b'}{+} B_{b}N_{0}\right)}{\displaystyle\sum_{b'{\in}\Omega} |\xi_{b',u}(t_{x})|^2 \rho^{\downarrow}_{b',r}(t_{x}) P^{\mathsf{max}}_{b'}{+} B_{b}N_{0}}= 1.\nonumber
\end{align}
However, this constraint is not in the format of GP. Therefore, we transform it into the standard GP format via introducing the following three inequalities:
\begin{equation}\label{app:eq:DTR_R2F_C2_3}
    \frac{\left(\mathfrak{J}^{\downarrow}_{b,u,r}(t_{x})\right)^{C}\left(\displaystyle\sum_{b'{\in}\Omega\setminus \{b\}} |\xi_{b',u}(t_{x})|^2 \rho^{\downarrow}_{b',r}(t_{x}) P^{\mathsf{max}}_{b'}{+} B_{b}N_{0}\right)}{\displaystyle\sum_{b'{\in}\Omega} |\xi_{b',u}(t_{x})|^2 \rho^{\downarrow}_{b',r}(t_{x}) P^{\mathsf{max}}_{b'}{+} B_{b}N_{0}}\le 1,
\end{equation}
\begin{equation}\label{app:eq:DTR_R2F_C2_4}
    \frac{\left(\mathfrak{J}^{\downarrow}_{b,u,r}(t_{x})\right)^{-C}(\mathscr{B}^{\downarrow}_{b,u,r}(t_{x}))^{-1}\left(\displaystyle\sum_{b'{\in}\Omega} |\xi_{b',u}(t_{x})|^2 \rho^{\downarrow}_{b',r}(t_{x}) P^{\mathsf{max}}_{b'}{+} B_{b}N_{0}\right)}{\displaystyle \sum_{b'{\in}\Omega\setminus \{b\}} |\xi_{b',u}(t_{x})|^2 \rho^{\downarrow}_{b',r}(t_{x}) P^{\mathsf{max}}_{b'}{+} B_{b}N_{0}}\le 1,
\end{equation}
\begin{equation}
    \mathscr{B}^{\downarrow}_{b,u,r}(t_{x})\ge 1,
\end{equation}
where $\mathscr{B}^{\downarrow}_{b,u,r}(t_{x})$ is an auxiliary decision variable added with a large penalty term to the objective function to force $\mathscr{B}^{\downarrow}_{b,u,r}(t_{x}){\rightarrow}1^+$ at the optimal point. The fractions in~\eqref{app:eq:DTR_R2F_C2_3} and \eqref{app:eq:DTR_R2F_C2_4} still need transformation since they are inequalities with posynomials in their denominator, which are not posynomials. We thus exploit arithmetic-geometric mean inequality (Lemma~\ref{Lemma:ArethmaticGeometric}) to approximate the denominators of \eqref{app:eq:DTR_R2F_C2_3} and \eqref{app:eq:DTR_R2F_C2_4} with monomials. In doing so, we approximate the denominator in \eqref{app:eq:DTR_R2F_C2_3} as follows:
\begin{align}\label{app:eq:DTR_R2F_C2_5}
    F^{\downarrow}_{b,u,r}(\bm{v})&{=}\sum_{b'{\in}\Omega} |\xi_{b',u}(t_{x})|^2 \rho^{\downarrow}_{b',r}(t_{x}) P^{\mathsf{max}}_{b'}{+} B_{b}N_{0}\nonumber\\
    &{\geq} \widehat{F}^{\downarrow}_{b,u,r}(\bm{v};\ell) {\triangleq} \prod_{b'{\in}\Omega}\left(\frac{|\xi_{b',u}(t_{x})|^2 \rho^{\downarrow}_{b',r}(t_{x})  F^{\downarrow}_{b,u,r}([\bm{v}]^{(\ell-1)})}{\left[|\xi_{b',u}(t_{x})|^2 \rho^{\downarrow}_{b',r}(t_{x}) \right]^{(\ell-1)}}\right)^{\hspace{-2mm}\frac{\left[|\xi_{b',u}(t_{x})|^2 \rho^{\downarrow}_{b',r}(t_{x}) P^{\mathsf{max}}_{b'}\right]^{(\ell-1)}}{F^{\downarrow}_{b,u,r}([\bm{v}]^{(\ell-1)})}}\nonumber\\
    &~~~~~~~~~~~~~~~~~~~~~~~~~\times\left(F^{\downarrow}_{b,u,r}([\bm{v}]^{(\ell-1)})\right)^{\frac{\left[B_{b}N_{0}\right]^{(\ell-1)}}{F^{\downarrow}_{b,u,r}([\bm{v}]^{(\ell-1)})}},
\end{align}
which gives us the following approximation of~\eqref{app:eq:DTR_R2F_C2_3}:
\begin{equation}\label{app:eq:DTR_R2F_C2_6}
     \frac{\left(\mathfrak{J}^{\downarrow}_{b,u,r}(t_{x})\right)^{C}\left(\displaystyle\sum_{b'{\in}\Omega\setminus \{b\}} |\xi_{b',u}(t_{x})|^2 \rho^{\downarrow}_{b',r}(t_{x}) P^{\mathsf{max}}_{b'}{+} B_{b}N_{0}\right)}{\widehat{F}^{\downarrow}_{b,u,r}(\bm{v};\ell)}\le 1.
\end{equation}
Similarly, we approximate the denominator in \eqref{app:eq:DTR_R2F_C2_4} as follows:
\begin{align}\label{app:eq:DTR_R2F_C2_7}
    G^{\downarrow}_{b,u,r}(\bm{v}){=}& \sum_{b'{\in}\Omega\setminus \{b\}} |\xi_{b',u}(t_{x})|^2 \rho^{\downarrow}_{b',r}(t_{x}) P^{\mathsf{max}}_{b'}{+} B_{b}N_{0}\nonumber\vspace{-15mm}\\
    &{\geq} \widehat{G}^{\downarrow}_{b,u,r}(\bm{v};\ell) {\triangleq} \prod_{b'{\in}\Omega\setminus \{b\}}\left(\frac{|\xi_{b',u}(t_{x})|^2 \rho^{\downarrow}_{b',r}(t_{x})  G^{\downarrow}_{b,u,r}([\bm{v}]^{(\ell-1)})}{\left[|\xi_{b',u}(t_{x})|^2 \rho^{\downarrow}_{b',r}(t_{x}) \right]^{(\ell-1)}}\right)^{\hspace{-2mm}\frac{\left[|\xi_{b',u}(t_{x})|^2 \rho^{\downarrow}_{b',r}(t_{x}) P^{\mathsf{max}}_{b'}\right]^{(\ell-1)}}{G^{\downarrow}_{b,u,r}([\bm{v}]^{(\ell-1)})}}\nonumber\\
    &~~~~~~~~~~~~~~~~~~~~~~~~~\times\left(G^{\downarrow}_{b,u,r}([\bm{v}]^{(\ell-1)})\right)^{\frac{\left[B_{b}N_{0}\right]^{(\ell-1)}}{G^{\downarrow}_{b,u,r}([\bm{v}]^{(\ell-1)})}},
\end{align}
which gives the following approximation of~\eqref{app:eq:DTR_R2F_C2_4}:
\begin{equation}\label{app:eq:DTR_R2F_C2_8}
     \frac{\left(\mathfrak{J}^{\downarrow}_{b,u,r}(t_{x})\right)^{-C}(\mathscr{B}^{\downarrow}_{b,u,r}(t_{x}))^{-1}\left(\displaystyle\sum_{b'{\in}\Omega} |\xi_{b',u}(t_{x})|^2 \rho^{\downarrow}_{b',r}(t_{x}) P^{\mathsf{max}}_{b'}{+} B_{b}N_{0}\right)}{ \widehat{G}^{\downarrow}_{b,u,r}(\bm{v};\ell)}\le 1.
\end{equation}

We next aim to transform \eqref{app:eq:DTR_R2F_C2_1} into standard GP format. In doing so, let us rewrite \eqref{app:eq:DTR_R2F_C2_1} as follows:
\begin{equation}\label{app:eq:DTR_R2F_C2_9}
    \frac{2(CB_{b})^2\mathfrak{J}^{\downarrow}_{b,u,r}(t_{x})}{2(CB_{b})^2+2(CB_{b})^2\ln(2)\mathfrak{R}^{\downarrow}_{b,u,r}(t_{x})+\left(\ln(2)\mathfrak{R}^{\downarrow}_{b,u,r}(t_{x})\right)^2}=1.
\end{equation}
We transform this constraint into the format of GP by splitting it into the following three inequalities:
\begin{equation}\label{app:eq:DTR_R2F_C2_10}
    \frac{2(CB_{b})^2\mathfrak{J}^{\downarrow}_{b,u,r}(t_{x})}{2(CB_{b})^2+2(CB_{b})^2\ln(2)\mathfrak{R}^{\downarrow}_{b,u,r}(t_{x})+\left(\ln(2)\mathfrak{R}^{\downarrow}_{b,u,r}(t_{x})\right)^2}\le 1,
\end{equation}
\begin{equation}\label{app:eq:DTR_R2F_C2_11}
    \frac{(\mathscr{C}^{\downarrow}_{b,u,r}(t_{x}))^{-1}\left(2(CB_{b})^2+2(CB_{b})^2\ln(2)\mathfrak{R}^{\downarrow}_{b,u,r}(t_{x})+\left(\ln(2)\mathfrak{R}^{\downarrow}_{b,u,r}(t_{x})\right)^2\right)}{2(CB_{b})^2\mathfrak{J}^{\downarrow}_{b,u,r}(t_{x})}\le 1,
\end{equation}
\begin{equation}
    \mathscr{C}^{\downarrow}_{b,u,r}(t_{x})\ge 1
\end{equation}
where $\mathscr{C}^{\downarrow}_{b,u,r}(t_{x})$ is an auxiliary decision variable and will be added with a large penalty term to the objective function to force $\mathscr{C}^{\downarrow}_{b,u,r}(t_{x}){\rightarrow}1^+$ at the optimal point. The fraction in~\eqref{app:eq:DTR_R2F_C2_10} still needs transformation since it is inequality with posynomial in its denominator, which is not posynomial. We thus exploit arithmetic-geometric mean inequality (Lemma~\ref{Lemma:ArethmaticGeometric}) to approximate the denominator of \eqref{app:eq:DTR_R2F_C2_3} with monomial. 
\begin{align}\label{app:eq:DTR_R2F_C2_12}
    &H^{\downarrow}_{b,u,r}(\bm{v}){=} 2(CB_{b})^2{+}2(CB_{b})^2\ln(2)\mathfrak{R}^{\downarrow}_{b,u,r}(t_{x}){+}\left(\ln(2)\mathfrak{R}^{\downarrow}_{b,u,r}(t_{x})\right)^2 {\geq} \widehat{H}^{\downarrow}_{b,u,r}(\bm{v};\ell) {\triangleq} \left( H^{\downarrow}_{b,u,r}([\bm{v}]^{(\ell-1)})\right)^{\hspace{-1mm}\frac{\left[2(CB_{b})^2\right]^{(\ell-1)}}{H^{\downarrow}_{b,u,r}([\bm{v}]^{(\ell-1)})}}\nonumber\\
    &{\times}\left(\hspace{-1.5mm}\frac{\mathfrak{R}^{\downarrow}_{b,u,r}(t_{x}) H^{\downarrow}_{b,u,r}([\bm{v}]^{(\ell-1)})}{\left[\mathfrak{R}^{\downarrow}_{b,u,r}(t_{x})\right]^{(\ell-1)}}\hspace{-1.5mm}\right)^{\hspace{-2mm}\frac{\left[2(CB_{b})^2\ln(2)\mathfrak{R}^{\downarrow}_{b,u,r}(t_{x})\right]^{(\ell-1)}}{H^{\downarrow}_{b,u,r}([\bm{v}]^{(\ell-1)})}}{\times}\left(\hspace{-1.5mm}\frac{\left(\mathfrak{R}^{\downarrow}_{b,u,r}(t_{x})\right)^2 H^{\downarrow}_{b,u,r}([\bm{v}]^{(\ell-1)})}{\left[\left(\mathfrak{R}^{\downarrow}_{b,u,r}(t_{x})\right)^2\right]^{(\ell-1)}}\hspace{-1.5mm}\right)^{\hspace{-2mm}\frac{\left[\left(\ln(2)\mathfrak{R}^{\downarrow}_{b,u,r}(t_{x})\right)^2\right]^{(\ell-1)}}{H^{\downarrow}_{b,u,r}([\bm{v}]^{(\ell-1)})}},
\end{align}
which results in the following approximation of~\eqref{app:eq:DTR_R2F_C2_10}:
\begin{equation}\label{app:eq:DTR_R2F_C2_14}
     \frac{2(CB_{b})^2\mathfrak{J}^{\downarrow}_{b,u,r}(t_{x})}{ \widehat{H}^{\downarrow}_{b,u,r}(\bm{v};\ell)}\le 1.
\end{equation}

We finally approximate \eqref{app:eq:DTR_R2F_C2} as follows:
\begin{tcolorbox}[ams align]
     & \frac{\left(\mathfrak{J}^{\downarrow}_{b,u,r}(t_{x})\right)^{C}\left(\displaystyle\sum_{b'{\in}\Omega\setminus \{b\}} |\xi_{b',u}(t_{x})|^2 \rho^{\downarrow}_{b',r}(t_{x}) P^{\mathsf{max}}_{b'}{+} B_{b}N_{0}\right)}{\widehat{F}^{\downarrow}_{b,u,r}(\bm{v};\ell)}\le 1,\nonumber\\
     & \frac{\left(\mathfrak{J}^{\downarrow}_{b,u,r}(t_{x})\right)^{-C}(\mathscr{B}^{\downarrow}_{b,u,r}(t_{x}))^{-1}\left(\displaystyle\sum_{b'{\in}\Omega} |\xi_{b',u}(t_{x})|^2 \rho^{\downarrow}_{b',r}(t_{x}) P^{\mathsf{max}}_{b'}{+} B_{b}N_{0}\right)}{ \widehat{G}^{\downarrow}_{b,u,r}(\bm{v};\ell)}\le 1,\nonumber\\
     & \frac{(\mathscr{C}^{\downarrow}_{b,u,r}(t_{x}))^{-1}\left(2(CB_{b})^2+2(CB_{b})^2\ln(2)\mathfrak{R}^{\downarrow}_{b,u,r}(t_{x})+\left(\ln(2)\mathfrak{R}^{\downarrow}_{b,u,r}(t_{x})\right)^2\right)}{2(CB_{b})^2\mathfrak{J}^{\downarrow}_{b,u,r}(t_{x})}\le 1,~\frac{2(CB_{b})^2\mathfrak{J}^{\downarrow}_{b,u,r}(t_{x})}{ \widehat{H}^{\downarrow}_{b,u,r}(\bm{v};\ell)}\le 1,\nonumber\\
     &\frac{1}{\mathscr{B}^{\downarrow}_{b,u,r}(t_{x})}\le 1,~~~\frac{1}{\mathscr{C}^{\downarrow}_{b,u,r}(t_{x})}\le 1.\nonumber
\end{tcolorbox}

We next aim to transform \eqref{app:eq:DTR_R2F_C1} into standard GP format. In doing so, we rewrite \eqref{app:eq:DTR_R2F_C1} through defining $\mathfrak{R}^{\downarrow}_{b,r}(t_{x})$ as an auxiliary decision variable which must satisfy the following constraint:
\begin{equation}\label{app:eq:DTR_R2F_C1_2}
    \mathfrak{R}^{\downarrow}_{b,r}(t_{x}){=}\beta^{\downarrow}_{b}\hspace{-0.3mm}(t_{x})\min_{u\in\mathcal{U}_b}\left\{\underbrace{\frac{1+\epsilon}{\widehat{\lambda}^{(k)}_u+\epsilon}-1}_{(a)}+\mathfrak{R}^{\downarrow}_{b,u,r}(t_{x})\right\},
\end{equation}
where $x{\in}\mathcal{N}^{(k)}$. Term $(a)$ in \eqref{app:eq:DTR_R2F_C1_2} is a large value when $\widehat{\lambda}^{(k)}_u=0$ and is equal to $0$ when $\widehat{\lambda}^{(k)}_u=1$. Also, $\epsilon$ is added to the numerator and denominator of the fraction in $(a)$ to avoid division by zero. Furthermore, $\beta^{\downarrow}_{b}\hspace{-0.3mm}(t_{x})=0$ means that O-RU $b$ is not scheduled to broadcast data to the recruited FLUs and thus $\mathfrak{R}^{\downarrow}_{b,r}(t_{x})$ must be $0$, which is captured through multiplying the min function by $\beta^{\downarrow}_{b}\hspace{-0.3mm}(t_{x})$. Considering the above description, \eqref{app:eq:DTR_R2F_C1_2} is equivalent to \eqref{app:eq:DTR_R2F_C1}. To transform \eqref{app:eq:DTR_R2F_C1_2} into standard GP format, we use the approximation $\min\{A, B\}\approx (A^{-p}+B^{-p})^{-\frac{1}{p}}$, which is tight when $p \gg 1$. Accordingly, we approximate \eqref{app:eq:DTR_R2F_C1_2} as follows:
\begin{equation}\label{app:eq:DTR_R2F_C1_3}
\begin{aligned}
\mathfrak{R}^{\downarrow}_{b,r}(t_{x})&{=}\beta^{\downarrow}_{b}\hspace{-0.3mm}(t_{x})\left(\sum_{u\in\mathcal{U}_b}\left(\frac{1+\epsilon}{\widehat{\lambda}^{(k)}_u+\epsilon}-1+\mathfrak{R}^{\downarrow}_{b,u,r}(t_{x})+\underbrace{(1-\beta^{\downarrow}_{b}\hspace{-0.3mm}(t_{x}))}_{(b)}\right)^{-p}\right)^{-\frac{1}{p}}\\
&=\beta^{\downarrow}_{b}\hspace{-0.3mm}(t_{x})\left(\sum_{u\in\mathcal{U}_b}\left(\frac{1+\epsilon}{\widehat{\lambda}^{(k)}_u+\epsilon}-\beta^{\downarrow}_{b}\hspace{-0.3mm}(t_{x})+\mathfrak{R}^{\downarrow}_{b,u,r}(t_{x})\right)^{-p}\right)^{-\frac{1}{p}},
\end{aligned}
\end{equation}
where term $(b)$ is added to the above equation to avoid division by zero when scheduling decision $\beta^{\downarrow}_{b}\hspace{-0.3mm}(t_{x})$ is equal to $0$. We simplify \eqref{app:eq:DTR_R2F_C1_3} through introducing an auxiliary decision variable $J^{\downarrow}_{b,r}(t_{x},p)$ which must satisfy  
\begin{equation}\label{app:eq:DTR_R2F_C1_3_inter}
    J^{\downarrow}_{b,r}(t_{x},p)=\sum_{u\in\mathcal{U}_b}\left(\frac{1+\epsilon}{\widehat{\lambda}^{(k)}_u+\epsilon}-\beta^{\downarrow}_{b}\hspace{-0.3mm}(t_{x})+\mathfrak{R}^{\downarrow}_{b,u,r}(t_{x})\right)^{-p}.
\end{equation}
Replacing $J^{\downarrow}_{b,r}(t_{x},p)$ in \eqref{app:eq:DTR_R2F_C1_3} and adding $1$ to both sides lead to
\begin{equation}\label{app:eq:DTR_R2F_C1_4}
\mathfrak{R}^{\downarrow}_{b,r}(t_{x})+1{=} \beta^{\downarrow}_{b}\hspace{-0.3mm}(t_{x})\left(J^{\downarrow}_{b,r}(t_{x},p)\right)^{-\frac{1}{p}}+1,
\end{equation}
Dividing both sides by $\beta^{\downarrow}_{b}\hspace{-0.3mm}(t_{x})\left(J^{\downarrow}_{b,r}(t_{x},p)\right)^{-\frac{1}{p}}+1$ gives us
\begin{equation}\label{app:eq:DTR_R2F_C1_5}
\frac{\mathfrak{R}^{\downarrow}_{b,r}(t_{x})+1}{\beta^{\downarrow}_{b}\hspace{-0.3mm}(t_{x})\left(J^{\downarrow}_{b,r}(t_{x},p)\right)^{-\frac{1}{p}}+1}{=} 1,
\end{equation}
Note that we added $1$ to both sides of \eqref{app:eq:DTR_R2F_C1_4} to avoid division by zero in \eqref{app:eq:DTR_R2F_C1_5} when $\beta^{\downarrow}_{b}\hspace{-0.3mm}(t_{x})=0$. We transform \eqref{app:eq:DTR_R2F_C1_5} into the format of GP through introducing the following three inequalities:
\begin{equation}\label{app:eq:DTR_R2F_C1_6}
    \frac{\mathfrak{R}^{\downarrow}_{b,r}(t_{x})+1}{\beta^{\downarrow}_{b}\hspace{-0.3mm}(t_{x})\left(J^{\downarrow}_{b,r}(t_{x},p)\right)^{-\frac{1}{p}}+1}\le 1,
\end{equation}
\begin{equation}\label{app:eq:DTR_R2F_C1_7}
    \frac{\left(\mathscr{E}^{\downarrow}_{b,r}(t_{x})\right)^{-1}\beta^{\downarrow}_{b}\hspace{-0.3mm}(t_{x})\left(J^{\downarrow}_{b,r}(t_{x},p)\right)^{-\frac{1}{p}}+1}{\mathfrak{R}^{\downarrow}_{b,r}(t_{x})+1}\le 1,
\end{equation}
\begin{equation}
    \mathscr{E}^{\downarrow}_{b,r}(t_{x})\ge 1,
\end{equation}
where $\mathscr{E}^{\downarrow}_{b,r}(t_{x})$ is an auxiliary decision variable added with a large penalty term to the objective function to force $ \mathscr{E}^{\downarrow}_{b,r}(t_{x}){\rightarrow}1^+$ at the optimal point. The fractions in~\eqref{app:eq:DTR_R2F_C1_6} and \eqref{app:eq:DTR_R2F_C1_7} still need transformation since they are inequalities with a posynomials in their denominators, which are not posynomials. We thus exploit arithmetic-geometric mean inequality (Lemma~\ref{Lemma:ArethmaticGeometric}) to approximate the denominators of \eqref{app:eq:DTR_R2F_C1_6} and \eqref{app:eq:DTR_R2F_C1_7} with monomials. In doing so, we transform \eqref{app:eq:DTR_R2F_C1_6} into standard GP format as follows:
\begin{align}\label{app:eq:DTR_R2F_C1_8}
    L^{\downarrow}_{b,r}(\bm{v})=\beta^{\downarrow}_{b}\hspace{-0.3mm}(t_{x})\left(J^{\downarrow}_{b,r}(t_{x},p)\right)^{-\frac{1}{p}}+1&\geq \widehat{L}^{\downarrow}_{b,r}(\bm{v};\ell) \triangleq \left(\frac{\beta^{\downarrow}_{b}\hspace{-0.3mm}(t_{x})\left(J^{\downarrow}_{b,r}(t_{x},p)\right)^{-\frac{1}{p}} L^{\downarrow}_{b,r}([\bm{v}]^{(\ell-1)})}{\left[\beta^{\downarrow}_{b}\hspace{-0.3mm}(t_{x})\left(J^{\downarrow}_{b,r}(t_{x},p)\right)^{-\frac{1}{p}}\right]^{(\ell-1)}}\right)^{\frac{\left[\beta^{\downarrow}_{b}\hspace{-0.3mm}(t_{x})\left(J^{\downarrow}_{b,r}(t_{x},p)\right)^{-\frac{1}{p}}\right]^{(\ell-1)}}{L^{\downarrow}_{b,r}([\bm{v}]^{(\ell-1)})}}\nonumber\\
    &~~~~~~~~~~~~~~~~~\times\left(L^{\downarrow}_{b,r}([\bm{v}]^{(\ell-1)})\right)^{\frac{1}{L^{\downarrow}_{b,r}([\bm{v}]^{(\ell-1)})}},
\end{align}
which gives us an approximation of~\eqref{app:eq:DTR_R2F_C1_6} as follows:
\begin{equation}\label{app:eq:DTR_R2F_C1_9}
    \frac{\mathfrak{R}^{\downarrow}_{b,r}(t_{x})+1}{\widehat{L}^{\downarrow}_{b,r}(\bm{v};\ell)}\le 1.
\end{equation}
Similarly, we transform \eqref{app:eq:DTR_R2F_C1_7} into standard GP format as follows:
\begin{align}\label{app:eq:DTR_R2F_C1_10}
    N^{\downarrow}_{b,r}(\bm{v})=\mathfrak{R}^{\downarrow}_{b,r}(t_{x})+1&\geq \widehat{N}^{\downarrow}_{b,r}(\bm{v};\ell) \triangleq \left(\frac{\mathfrak{R}^{\downarrow}_{b,r}(t_{x}) N^{\downarrow}_{b,r}([\bm{v}]^{(\ell-1)})}{\left[\mathfrak{R}^{\downarrow}_{b,r}(t_{x})\right]^{(\ell-1)}}\right)^{\frac{\left[\mathfrak{R}^{\downarrow}_{b,r}(t_{x})\right]^{(\ell-1)}}{N^{\downarrow}_{b,r}([\bm{v}]^{(\ell-1)})}}\nonumber\\
    &~~~~~~~~~~~~~~~~~\times\left(N^{\downarrow}_{b,r}([\bm{v}]^{(\ell-1)})\right)^{\frac{1}{N^{\downarrow}_{b,r}([\bm{v}]^{(\ell-1)})}},
\end{align}
which gives us an approximation of~\eqref{app:eq:DTR_R2F_C1_6} as follows:
\begin{equation}\label{app:eq:DTR_R2F_C1_9}
    \frac{\left(\mathscr{E}^{\downarrow}_{b,r}(t_{x})\right)^{-1}\beta^{\downarrow}_{b}\hspace{-0.3mm}(t_{x})\left(J^{\downarrow}_{b,r}(t_{x},p)\right)^{-\frac{1}{p}}+1}{\widehat{N}^{\downarrow}_{b,r}(\bm{v};\ell)}\le 1,
\end{equation}

We next aim to transform constraint \eqref{app:eq:DTR_R2F_C1_3_inter} into the standard GP format. To this end, let us rewrite \eqref{app:eq:DTR_R2F_C1_3_inter} as follows:
\begin{equation}\label{app:eq:DTR_R2F_C1_3_inter_1}
    \frac{J^{\downarrow}_{b,r}(t_{x},p)}{\sum_{u\in\mathcal{U}_b}\left(Y_{b,u,r}(t_{x})\right)^{-p}}= 1,
\end{equation}
where $Y_{b,u,r}(t_{x})$ is an auxiliary decision variable satisfying the following equality constraint:
\begin{equation}\label{app:eq:DTR_R2F_C1_3_inter_inter_1}
    Y_{b,u,r}(t_{x}) = \frac{1+\epsilon}{\widehat{\lambda}^{(k)}_u+\epsilon}-\beta^{\downarrow}_{b}\hspace{-0.3mm}(t_{x})+\mathfrak{R}^{\downarrow}_{b,u,r}(t_{x}).
\end{equation}
We transform \eqref{app:eq:DTR_R2F_C1_3_inter_1} into GP format through splitting it into the following three inequalities:
\begin{equation}\label{app:eq:DTR_R2F_C1_3_inter_2}
    \frac{J^{\downarrow}_{b,r}(t_{x},p)}{\sum_{u\in\mathcal{U}_b}\left(Y_{b,u,r}(t_{x})\right)^{-p}}\le 1,
\end{equation}
\begin{equation}\label{app:eq:DTR_R2F_C1_3_inter_3}
    \frac{\left(\mathscr{F}^{\downarrow}_{b,r}(t_{x})\right)^{-1}\left(\sum_{u\in\mathcal{U}_b}\left(Y_{b,u,r}(t_{x})\right)^{-p}\right)}{J^{\downarrow}_{b,r}(t_{x},p)}\le 1,
\end{equation}
\begin{equation}
    \mathscr{F}^{\downarrow}_{b,r}(t_{x})\ge 1,
\end{equation}
where $\mathscr{F}^{\downarrow}_{b,r}(t_{x})$ is an auxiliary decision variable added with a large penalty term to the objective function to force $\mathscr{F}^{\downarrow}_{b,r}(t_{x}){\rightarrow}1^+$ at the optimal point. However, the fraction in~\eqref{app:eq:DTR_R2F_C1_3_inter_2} still needs transformation since it is an inequality with a posynomial in the denominator, which is not a posynomial. We thus exploit arithmetic-geometric mean inequality (Lemma~\ref{Lemma:ArethmaticGeometric}) to approximate the denominator with a monomial:
\begin{align}\label{app:eq:DTR_R2F_C1_3_inter_4}
    Q^{\downarrow}_{b,r}(\bm{v})=\sum_{u\in\mathcal{U}_b}\left(Y_{b,u,r}(t_{x})\right)^{-p}&\geq \widehat{Q}^{\downarrow}_{b,r}(\bm{v};\ell) \triangleq \prod_{u\in\mathcal{U}_b}\left(\frac{Y_{b,u,r}(t_{x}) Q^{\downarrow}_{b,r}([\bm{v}]^{(\ell-1)})}{\left[Y_{b,u,r}(t_{x})\right]^{(\ell-1)}}\right)^{\frac{\left[Y_{b,u,r}(t_{x})\right]^{(\ell-1)}}{Q^{\downarrow}_{b,r}([\bm{v}]^{(\ell-1)})}},
\end{align}
which gives us an approximation of~\eqref{app:eq:DTR_R2F_C1_3_inter_2} as follows:
\begin{equation}\label{app:eq:DTR_R2F_C1_3_inter_5}
    \frac{J^{\downarrow}_{b,r}(t_{x},p)}{\widehat{Q}^{\downarrow}_{b,r}(\bm{v};\ell)}\le 1.
\end{equation}
Using the same technique utilized above, we transform \eqref{app:eq:DTR_R2F_C1_3_inter_inter_1} into GP format. To this end, we rewrite \eqref{app:eq:DTR_R2F_C1_3_inter_inter_1} as follows: 
\begin{equation}\label{app:eq:DTR_R2F_C1_3_inter_inter_2}
    \frac{\widehat{\lambda}^{(k)}_u\times Y_{b,u,r}(t_{x})+ \widehat{\lambda}^{(k)}_u\times\beta^{\downarrow}_{b}\hspace{-0.3mm}(t_{x})+\epsilon \times Y_{b,u,r}(t_{x})+\epsilon \times\beta^{\downarrow}_{b}\hspace{-0.3mm}(t_{x})}{1+\epsilon+\widehat{\lambda}^{(k)}_u\times\mathfrak{R}^{\downarrow}_{b,u,r}(t_{x})+\epsilon\times\mathfrak{R}^{\downarrow}_{b,u,r}(t_{x})} = 1.
\end{equation}
We transform \eqref{app:eq:DTR_R2F_C1_3_inter_inter_2} into GP format through splitting it into two inequalities:
\begin{equation}\label{app:eq:DTR_R2F_C1_3_inter_inter_3}
    \frac{\widehat{\lambda}^{(k)}_u\times Y_{b,u,r}(t_{x})+ \widehat{\lambda}^{(k)}_u\times\beta^{\downarrow}_{b}\hspace{-0.3mm}(t_{x})+\epsilon \times Y_{b,u,r}(t_{x})+\epsilon \times\beta^{\downarrow}_{b}\hspace{-0.3mm}(t_{x})}{1+\epsilon+\widehat{\lambda}^{(k)}_u\times\mathfrak{R}^{\downarrow}_{b,u,r}(t_{x})+\epsilon\times\mathfrak{R}^{\downarrow}_{b,u,r}(t_{x})}\le 1,
\end{equation}
\begin{equation}\label{app:eq:DTR_R2F_C1_3_inter_inter_4}
    \frac{\left(\mathscr{H}^{\downarrow}_{b,u,r}(t_{x})\right)^{-1}\left(1+\epsilon+\widehat{\lambda}^{(k)}_u\times\mathfrak{R}^{\downarrow}_{b,u,r}(t_{x})+\epsilon\times\mathfrak{R}^{\downarrow}_{b,u,r}(t_{x})\right)}{\widehat{\lambda}^{(k)}_u\times Y_{b,u,r}(t_{x})+ \widehat{\lambda}^{(k)}_u\times\beta^{\downarrow}_{b}\hspace{-0.3mm}(t_{x})+\epsilon \times Y_{b,u,r}(t_{x})+\epsilon \times\beta^{\downarrow}_{b}\hspace{-0.3mm}(t_{x})}\le 1,
\end{equation}
\begin{equation}
    \mathscr{H}^{\downarrow}_{b,u,r}(t_{x})\ge 1,
\end{equation}
where $\mathscr{H}^{\downarrow}_{b,u,r}(t_{x})$ is an auxiliary decision variable added with a large penalty term to the objective function to force $\mathscr{H}^{\downarrow}_{b,u,r}(t_{x}){\rightarrow}1^+$ at the optimal point. However, the fractions in~\eqref{app:eq:DTR_R2F_C1_3_inter_inter_3} and \eqref{app:eq:DTR_R2F_C1_3_inter_inter_4} still need transformation since they are inequalities with posynomials in their denominators, which are not posynomials. We thus exploit arithmetic-geometric mean inequality (Lemma~\ref{Lemma:ArethmaticGeometric}) to approximate the denominators in \eqref{app:eq:DTR_R2F_C1_3_inter_inter_3} and \eqref{app:eq:DTR_R2F_C1_3_inter_inter_4} with monomials. To this end, we transform \eqref{app:eq:DTR_R2F_C1_3_inter_inter_3} into GP format as follows:
\begin{align}\label{app:eq:DTR_R2F_C1_3_inter_inter_5}
    R^{\downarrow}_{b,u,r}(\bm{v}){=}&1{+}\epsilon{+}\widehat{\lambda}^{(k)}_u\times\mathfrak{R}^{\downarrow}_{b,u,r}(t_{x}){+}\epsilon\times\mathfrak{R}^{\downarrow}_{b,u,r}(t_{x}){\geq} \widehat{R}^{\downarrow}_{b,u,r}(\bm{v};\ell) {\triangleq} \left(R^{\downarrow}_{b,u,r}([\bm{v}]^{(\ell-1)})\right)^{\frac{\left[1\right]^{(\ell-1)}}{R^{\downarrow}_{b,u,r}([\bm{v}]^{(\ell-1)})}}\times \left(R^{\downarrow}_{b,u,r}([\bm{v}]^{(\ell-1)})\right)^{\frac{\left[\epsilon\right]^{(\ell-1)}}{R^{\downarrow}_{b,u,r}([\bm{v}]^{(\ell-1)})}}\nonumber\\
    &\times \left(\frac{\widehat{\lambda}^{(k)}_u\times\mathfrak{R}^{\downarrow}_{b,u,r}(t_{x}) R^{\downarrow}_{b,u,r}([\bm{v}]^{(\ell-1)})}{\left[\widehat{\lambda}^{(k)}_u\times\mathfrak{R}^{\downarrow}_{b,u,r}(t_{x})\right]^{(\ell-1)}}\right)^{\frac{\left[\widehat{\lambda}^{(k)}_u\times\mathfrak{R}^{\downarrow}_{b,u,r}(t_{x})\right]^{(\ell-1)}}{R^{\downarrow}_{b,u,r}([\bm{v}]^{(\ell-1)})}}\times \left(\frac{\mathfrak{R}^{\downarrow}_{b,u,r}(t_{x}) R^{\downarrow}_{b,u,r}([\bm{v}]^{(\ell-1)})}{\left[\mathfrak{R}^{\downarrow}_{b,u,r}(t_{x})\right]^{(\ell-1)}}\right)^{\frac{\left[\epsilon\times\mathfrak{R}^{\downarrow}_{b,u,r}(t_{x})\right]^{(\ell-1)}}{R^{\downarrow}_{b,u,r}([\bm{v}]^{(\ell-1)})}},
\end{align}
which gives us an approximation of~\eqref{app:eq:DTR_R2F_C1_3_inter_inter_3} as follows:
\begin{equation}\label{app:eq:DTR_R2F_C1_3_inter_inter_6}
    \frac{\widehat{\lambda}^{(k)}_u\times Y_{b,u,r}(t_{x})+ \widehat{\lambda}^{(k)}_u\times\beta^{\downarrow}_{b}\hspace{-0.3mm}(t_{x})+\epsilon \times Y_{b,u,r}(t_{x})+\epsilon \times\beta^{\downarrow}_{b}\hspace{-0.3mm}(t_{x})}{\widehat{R}^{\downarrow}_{b,u,r}(\bm{v};\ell)}\le 1.
\end{equation}

Similarly, we transform \eqref{app:eq:DTR_R2F_C1_3_inter_inter_4} into GP format as follows:
\begin{align}\label{app:eq:DTR_R2F_C1_3_inter_inter_7}
    &S^{\downarrow}_{b,u,r}(\bm{v})=\widehat{\lambda}^{(k)}_u\times Y_{b,u,r}(t_{x})+ \widehat{\lambda}^{(k)}_u\times\beta^{\downarrow}_{b}\hspace{-0.3mm}(t_{x})+\epsilon \times Y_{b,u,r}(t_{x})+\epsilon \times\beta^{\downarrow}_{b}\hspace{-0.3mm}(t_{x})\nonumber\\
    &~~~\geq \widehat{S}^{\downarrow}_{b,u,r}(\bm{v};\ell) \triangleq \left(\frac{\widehat{\lambda}^{(k)}_u\times Y_{b,u,r}(t_{x}) S^{\downarrow}_{b,u,r}([\bm{v}]^{(\ell-1)})}{\left[\widehat{\lambda}^{(k)}_u\times Y_{b,u,r}(t_{x})\right]^{(\ell-1)}}\right)^{\frac{\left[\widehat{\lambda}^{(k)}_u\times Y_{b,u,r}(t_{x})\right]^{(\ell-1)}}{S^{\downarrow}_{b,u,r}([\bm{v}]^{(\ell-1)})}}\times \left(\frac{\widehat{\lambda}^{(k)}_u\times\beta^{\downarrow}_{b}\hspace{-0.3mm}(t_{x}) S^{\downarrow}_{b,u,r}([\bm{v}]^{(\ell-1)})}{\left[\widehat{\lambda}^{(k)}_u\times\beta^{\downarrow}_{b}\hspace{-0.3mm}(t_{x})\right]^{(\ell-1)}}\right)^{\frac{\left[\widehat{\lambda}^{(k)}_u\times\beta^{\downarrow}_{b}\hspace{-0.3mm}(t_{x})\right]^{(\ell-1)}}{S^{\downarrow}_{b,u,r}([\bm{v}]^{(\ell-1)})}}\nonumber\\
    &~~~\times \left(\frac{Y_{b,u,r}(t_{x}) S^{\downarrow}_{b,u,r}([\bm{v}]^{(\ell-1)})}{\left[Y_{b,u,r}(t_{x})\right]^{(\ell-1)}}\right)^{\frac{\left[\epsilon \times Y_{b,u,r}(t_{x})\right]^{(\ell-1)}}{S^{\downarrow}_{b,u,r}([\bm{v}]^{(\ell-1)})}}\times \left(\frac{\beta^{\downarrow}_{b}\hspace{-0.3mm}(t_{x}) S^{\downarrow}_{b,u,r}([\bm{v}]^{(\ell-1)})}{\left[\beta^{\downarrow}_{b}\hspace{-0.3mm}(t_{x})\right]^{(\ell-1)}}\right)^{\frac{\left[\epsilon \times\beta^{\downarrow}_{b}\hspace{-0.3mm}(t_{x})\right]^{(\ell-1)}}{S^{\downarrow}_{b,u,r}([\bm{v}]^{(\ell-1)})}},
\end{align} 
which gives us an approximation of~\eqref{app:eq:DTR_R2F_C1_3_inter_inter_4} as follows:
\begin{equation}\label{app:eq:DTR_R2F_C1_3_inter_inter_8}
    \frac{\left(\mathscr{H}^{\downarrow}_{b,u,r}(t_{x})\right)^{-1}\left(1+\epsilon+\widehat{\lambda}^{(k)}_u\times\mathfrak{R}^{\downarrow}_{b,u,r}(t_{x})+\epsilon\times\mathfrak{R}^{\downarrow}_{b,u,r}(t_{x})\right)}{\widehat{S}^{\downarrow}_{b,u,r}(\bm{v};\ell)}\le 1.
\end{equation}

We finally approximate constraint~\eqref{app:eq:DTR_R2F_C1} as follows:
\begin{tcolorbox}[ams align]
     & \frac{\mathfrak{R}^{\downarrow}_{b,r}(t_{x})+1}{\widehat{L}^{\downarrow}_{b,r}(\bm{v};\ell)}\le 1,~~~\frac{\left(\mathscr{E}^{\downarrow}_{b,r}(t_{x})\right)^{-1}\beta^{\downarrow}_{b}\hspace{-0.3mm}(t_{x})\left(J^{\downarrow}_{b,r}(t_{x},p)\right)^{-\frac{1}{p}}+1}{\widehat{N}^{\downarrow}_{b,r}(\bm{v};\ell)}\le 1,\nonumber\\
     &\frac{\left(\mathscr{F}^{\downarrow}_{b,r}(t_{x})\right)^{-1}\left(\sum_{u\in\mathcal{U}_b}\left(Y_{b,u,r}(t_{x})\right)^{-p}\right)}{J^{\downarrow}_{b,r}(t_{x},p)}\le 1,~~~\frac{J^{\downarrow}_{b,r}(t_{x},p)}{\widehat{Q}^{\downarrow}_{b,r}(\bm{v};\ell)}\le 1,\nonumber\\
     & \frac{\widehat{\lambda}^{(k)}_u\times Y_{b,u,r}(t_{x})+ \widehat{\lambda}^{(k)}_u\times\beta^{\downarrow}_{b}\hspace{-0.3mm}(t_{x})+\epsilon \times Y_{b,u,r}(t_{x})+\epsilon \times\beta^{\downarrow}_{b}\hspace{-0.3mm}(t_{x})}{\widehat{R}^{\downarrow}_{b,u,r}(\bm{v};\ell)}\le 1,\nonumber\\
     &\frac{\left(\mathscr{H}^{\downarrow}_{b,u,r}(t_{x})\right)^{-1}\left(1+\epsilon+\widehat{\lambda}^{(k)}_u\times\mathfrak{R}^{\downarrow}_{b,u,r}(t_{x})+\epsilon\times\mathfrak{R}^{\downarrow}_{b,u,r}(t_{x})\right)}{\widehat{S}^{\downarrow}_{b,u,r}(\bm{v};\ell)}\le 1,\nonumber\\
     &\frac{1}{\mathscr{F}^{\downarrow}_{b,r}(t_{x})}\le 1,~~~\frac{1}{\mathscr{E}^{\downarrow}_{b,r}(t_{x})}\le 1,~~~\frac{1}{\mathscr{H}^{\downarrow}_{b,u,r}(t_{x})}\le 1.\nonumber
\end{tcolorbox}

\textbf{$\bm{\mathsf{C2R}}$ transmission rate.} Referring to Sec.~\ref{sec:data_transmission_rate}, let us revisit the data transmission rate equation of $\mathsf{C2R}$ communication mode as follows:
\begin{equation}\label{app:eq:DR_C2R_C1} 
\mathfrak{R}^{\uparrow}_{u,r}(t_{x})= B_{b} \log_2\big(1+\Gamma^{\uparrow}_{u,r}(t_{x})\big),~x{\in}\mathcal{N}^{(k)},
\end{equation}
where
\begin{equation}\label{app:eq:DR_C2R_C2}
    \Gamma^{\uparrow}_{u,r}(t_{x}){=}\frac{\vert\xi_{b,u}(t_{x})\vert^2 \rho^{\uparrow}_{u,r}(t_{x})P^{\mathsf{max}}_{u}}{\displaystyle\sum_{b' \in \Omega} \sum_{u'\in \mathcal{U}_{b'}\setminus \{u\}}|\xi_{b,u'}(t_{x})|^2 \rho^{\uparrow}_{u',r}(t_{x})P^{\mathsf{max}}_{u'}{+} B_{b}N_{0}}.
\end{equation}
In the following, we transform \eqref{app:eq:DR_C2R_C1} into the standard form of GP. To this end, we define $\mathfrak{R}^{\uparrow}_{u,r}(t_{x})$ as an auxiliary decision variable which must satisfy the following constraint:
\begin{align}
    &\mathfrak{R}^{\uparrow}_{u,r}(t_{x})= B_{b} \log_2\big(1+\Gamma^{\uparrow}_{u,r}(t_{x})\big)\nonumber\\
    &{\Rightarrow} \frac{\ln(2)\mathfrak{R}^{\uparrow}_{u,r}(t_{x})}{B_{b}}=  \ln\big(1+\Gamma^{\uparrow}_{u,r}(t_{x})\big)\nonumber\\
    &{\Rightarrow} \exp\left(\frac{\ln(2)\mathfrak{R}^{\uparrow}_{u,r}(t_{x})}{B_{b}}\right)= \exp\left(\ln\big(1+\Gamma^{\uparrow}_{u,r}(t_{x})\big)\right)\nonumber\\
    &{\Rightarrow} \exp\left(\frac{C\ln(2)\mathfrak{R}^{\uparrow}_{u,r}(t_{x})}{C B_{b}}\right)= 1+\Gamma^{\uparrow}_{u,r}(t_{x})\nonumber\\
    &{\Rightarrow} \left(\exp\left(\frac{\ln(2)\mathfrak{R}^{\uparrow}_{u,r}(t_{x})}{C B_{b}}\right)\right)^C= 1+\Gamma^{\uparrow}_{u,r}(t_{x}).
\end{align}
Using the first three terms of the Maclaurin series of the exponential function $e^x=1+x+\frac{x^2}{2!}+\frac{x^3}{3!}+\cdots$, we get
\begin{align}\label{app:eq:DR_C2R_C1_temp}
    &{\Rightarrow} \left(1+\left(\frac{\ln(2)\mathfrak{R}^{\uparrow}_{u,r}(t_{x})}{CB_{b}}\right)+\frac{\left(\frac{\ln(2)\mathfrak{R}^{\uparrow}_{u,r}(t_{x})}{CB_{b}}\right)^2}{2}\right)^C= 1+\Gamma^{\uparrow}_{u,r}(t_{x})\nonumber\\
    &{\Rightarrow} \frac{\left(1+\left(\frac{\ln(2)\mathfrak{R}^{\uparrow}_{u,r}(t_{x})}{CB_{b}}\right)+\left(\frac{\ln(2)\mathfrak{R}^{\uparrow}_{u,r}(t_{x})}{\sqrt{2}CB_{b}}\right)^2\right)^C}{1+\Gamma^{\uparrow}_{u,r}(t_{x})}= 1.
\end{align}
In \eqref{app:eq:DR_C2R_C1_temp}, $C$ is a constant utilized to regulate the accuracy of the Maclaurin series expansion of $e^x$. Replacing \eqref{app:eq:DR_C2R_C1} into \eqref{app:eq:DR_C2R_C1_temp} leads to
\begin{align}
    \frac{\left(1+\left(\frac{\ln(2)\mathfrak{R}^{\uparrow}_{u,r}(t_{x})}{CB_{b}}\right)+\left(\frac{\ln(2)\mathfrak{R}^{\uparrow}_{u,r}(t_{x})}{\sqrt{2}CB_{b}}\right)^2\right)^C}{1+\frac{\vert\xi_{b,u}(t_{x})\vert^2 \rho^{\uparrow}_{u,r}(t_{x})P^{\mathsf{max}}_{u}}{\displaystyle\sum_{b' \in \Omega} \sum_{u'\in \mathcal{U}_{b'}\setminus\{u\}}|\xi_{b,u'}(t_{x})|^2 \rho^{\uparrow}_{u',r}(t_{x})P^{\mathsf{max}}_{u'}{+} B_{b}N_{0}}}= 1.
\end{align}
Performing some algebraic manipulations gives us
\begin{align}
     \frac{\displaystyle\left(\sum_{b' \in \Omega} \sum_{u'\in \mathcal{U}_{b'}\setminus\{u\}}|\xi_{b,u'}(t_{x})|^2 \rho^{\uparrow}_{u',r}(t_{x})P^{\mathsf{max}}_{u'}{+} B_{b}N_{0}\right)\left(1+\left(\frac{\ln(2)\mathfrak{R}^{\uparrow}_{u,r}(t_{x})}{CB_{b}}\right)+\left(\frac{\ln(2)\mathfrak{R}^{\uparrow}_{u,r}(t_{x})}{\sqrt{2}CB_{b}}\right)^2\right)^C}{\displaystyle\sum_{b' \in \Omega} \sum_{u'\in \mathcal{U}_{b'}\setminus\{u\}}|\xi_{b,u'}(t_{x})|^2 \rho^{\uparrow}_{u',r}(t_{x})P^{\mathsf{max}}_{u'}{+} B_{b}N_{0}+\vert\xi_{b,u}(t_{x})\vert^2 \rho^{\uparrow}_{u,r}(t_{x})P^{\mathsf{max}}_{u}}= 1.
\end{align}
Defining auxiliary decision variable $\mathfrak{J}^{\uparrow}_{u,r}(t_{x})$ as
\begin{equation}\label{app:eq:DR_C2R_C1_1}
    \mathfrak{J}^{\uparrow}_{u,r}(t_{x})=1+\left(\frac{\ln(2)\mathfrak{R}^{\uparrow}_{u,r}(t_{x})}{CB_{b}}\right)+\left(\frac{\ln(2)\mathfrak{R}^{\uparrow}_{u,r}(t_{x})}{\sqrt{2}CB_{b}}\right)^2
\end{equation}
results in the following constraint:
\begin{align}\label{app:eq:DR_C2R_C1_2}
    \frac{\left(\mathfrak{J}^{\uparrow}_{u,r}(t_{x})\right)^C\left(\displaystyle\sum_{b' \in \Omega} \sum_{u'\in \mathcal{U}_{b'}\setminus\{u\}}|\xi_{b,u'}(t_{x})|^2 \rho^{\uparrow}_{u',r}(t_{x})P^{\mathsf{max}}_{u'}{+} B_{b}N_{0}\right)}{\displaystyle\sum_{b' \in \Omega} \sum_{u'\in \mathcal{U}_{b'}\setminus\{u\}}|\xi_{b,u'}(t_{x})|^2 \rho^{\uparrow}_{u',r}(t_{x})P^{\mathsf{max}}_{u'}{+} B_{b}N_{0}+\vert\xi_{b,u}(t_{x})\vert^2 \rho^{\uparrow}_{u,r}(t_{x})P^{\mathsf{max}}_{u}}= 1.
\end{align}
However, this constraint is not in the format of GP. Therefore, we transform it into the standard GP format via introducing the following three inequalities:
\begin{equation}\label{app:eq:DR_C2R_C1_3}
     \frac{\left(\mathfrak{J}^{\uparrow}_{u,r}(t_{x})\right)^C\left(\displaystyle\sum_{b' \in \Omega} \sum_{u'\in \mathcal{U}_{b'}\setminus\{u\}}|\xi_{b,u'}(t_{x})|^2 \rho^{\uparrow}_{u',r}(t_{x})P^{\mathsf{max}}_{u'}{+} B_{b}N_{0}\right)}{\displaystyle\sum_{b' \in \Omega} \sum_{u'\in \mathcal{U}_{b'}\setminus\{u\}}|\xi_{b,u'}(t_{x})|^2 \rho^{\uparrow}_{u',r}(t_{x})P^{\mathsf{max}}_{u'}+\vert\xi_{b,u}(t_{x})\vert^2 \rho^{\uparrow}_{u,r}(t_{x})P^{\mathsf{max}}_{u}{+} B_{b}N_{0}}\le 1,
\end{equation}
\begin{equation}\label{app:eq:DR_C2R_C1_4}
    \frac{\displaystyle\left(\mathfrak{J}^{\uparrow}_{u,r}(t_{x})\right)^{-C}\left(\mathscr{B}^{\uparrow}_{u,r}(t_{x})\right)^{-1}\left(\sum_{b' \in \Omega} \sum_{u'\in \mathcal{U}_{b'}\setminus\{u\}}|\xi_{b,u'}(t_{x})|^2 \rho^{\uparrow}_{u',r}(t_{x})P^{\mathsf{max}}_{u'}{+} B_{b}N_{0}+\vert\xi_{b,u}(t_{x})\vert^2 \rho^{\uparrow}_{u,r}(t_{x})P^{\mathsf{max}}_{u}\right)}{\displaystyle\sum_{b' \in \Omega} \sum_{u'\in \mathcal{U}_{b'}\setminus\{u\}}|\xi_{b,u'}(t_{x})|^2 \rho^{\uparrow}_{u',r}(t_{x})P^{\mathsf{max}}_{u'}{+}  B_{b}N_{0}}\le 1,
\end{equation}
\begin{equation}
    \mathscr{B}^{\uparrow}_{u,r}(t_{x})\ge 1,
\end{equation}
where $\mathscr{B}^{\uparrow}_{u,r}(t_{x})$ is an auxiliary decision variable added with a large penalty term to the objective function to force $\mathscr{B}^{\uparrow}_{u,r}(t_{x}){\rightarrow}1^+$ at the optimal point. The fractions in~\eqref{app:eq:DR_C2R_C1_3} and \eqref{app:eq:DR_C2R_C1_4} still need transformation since they are inequalities with posynomials in their denominator, which are not posynomials. We thus exploit arithmetic-geometric mean inequality (Lemma~\ref{Lemma:ArethmaticGeometric}) to approximate the denominators of \eqref{app:eq:DR_C2R_C1_3} and \eqref{app:eq:DR_C2R_C1_4} with monomials. In doing so, we approximate the denominator in \eqref{app:eq:DR_C2R_C1_3} as follows:
\begin{align}\label{app:eq:DR_C2R_C1_5}
    F^{\uparrow}_{u,r}(\bm{v}){=}&\sum_{b' \in \Omega} \sum_{u'\in \mathcal{U}_{b'}\setminus\{u\}}|\xi_{b,u'}(t_{x})|^2 \rho^{\uparrow}_{u',r}(t_{x})P^{\mathsf{max}}_{u'}+\vert\xi_{b,u}(t_{x})\vert^2 \rho^{\uparrow}_{u,r}(t_{x})P^{\mathsf{max}}_{u} + B_{b}N_{0}\nonumber\\
    &{\geq} \widehat{F}^{\uparrow}_{u,r}(\bm{v};\ell) {\triangleq} \prod_{b' \in \Omega}\prod_{u'\in \mathcal{U}_{b'}\setminus\{u\}}\left(\frac{|\xi_{b,u'}(t_{x})|^2 \rho^{\uparrow}_{u',r}(t_{x}) F^{\uparrow}_{u,r}([\bm{v}]^{(\ell-1)})}{\left[|\xi_{b,u'}(t_{x})|^2 \rho^{\uparrow}_{u',r}(t_{x})\right]^{(\ell-1)}}\right)^{\hspace{-2mm}\frac{\left[|\xi_{b,u'}(t_{x})|^2 \rho^{\uparrow}_{u',r}(t_{x})P^{\mathsf{max}}_{u'}\right]^{(\ell-1)}}{F^{\uparrow}_{u,r}([\bm{v}]^{(\ell-1)})}}\nonumber\\
    &\times\left(\frac{\vert\xi_{b,u}(t_{x})\vert^2 \rho^{\uparrow}_{u,r}(t_{x}) F^{\uparrow}_{u,r}([\bm{v}]^{(\ell-1)})}{\left[\vert\xi_{b,u}(t_{x})\vert^2 \rho^{\uparrow}_{u,r}(t_{x})\right]^{(\ell-1)}}\right)^{\frac{\left[\vert\xi_{b,u}(t_{x})\vert^2 \rho^{\uparrow}_{u,r}(t_{x})P^{\mathsf{max}}_{u}\right]^{(\ell-1)}}{F^{\uparrow}_{u,r}([\bm{v}]^{(\ell-1)})}}\times\left(F^{\uparrow}_{u,r}([\bm{v}]^{(\ell-1)})\right)^{\frac{\left[B_{b}N_{0}\right]^{(\ell-1)}}{F^{\uparrow}_{u,r}([\bm{v}]^{(\ell-1)})}},
\end{align}
which gives us the following approximation of~\eqref{app:eq:DR_C2R_C1_3}:
\begin{equation}\label{app:eq:DR_C2R_C1_6}
     \frac{\left(\mathfrak{J}^{\uparrow}_{u,r}(t_{x})\right)^C\left(\displaystyle\sum_{b' \in \Omega} \sum_{u'\in \mathcal{U}_{b'}\setminus\{u\}}|\xi_{b,u'}(t_{x})|^2 \rho^{\uparrow}_{u',r}(t_{x})P^{\mathsf{max}}_{u'}{+} B_{b}N_{0}\right)}{\widehat{F}^{\uparrow}_{u,r}(\bm{v};\ell)}\le 1.
\end{equation}
Similarly, we approximate the denominator in \eqref{app:eq:DR_C2R_C1_4} as follows:
\begin{align}\label{app:eq:DR_C2R_C1_7}
    G^{\uparrow}_{u,r}(\bm{v}){=}& \sum_{b' \in \Omega} \sum_{u'\in \mathcal{U}_{b'}\setminus\{u\}}|\xi_{b,u'}(t_{x})|^2 \rho^{\uparrow}_{u',r}(t_{x})P^{\mathsf{max}}_{u'}{+}  B_{b}N_{0}\nonumber\\
    &{\geq} \widehat{G}^{\uparrow}_{u,r}(\bm{v};\ell) {\triangleq} \prod_{b' \in \Omega}\prod_{u'\in \mathcal{U}_{b'}\setminus\{u\}}\left(\frac{|\xi_{b',u}(t_{x})|^2 \rho^{\downarrow}_{b',r}(t_{x})  G^{\uparrow}_{u,r}([\bm{v}]^{(\ell-1)})}{\left[|\xi_{b',u}(t_{x})|^2 \rho^{\downarrow}_{b',r}(t_{x}) \right]^{(\ell-1)}}\right)^{\hspace{-2mm}\frac{\left[|\xi_{b',u}(t_{x})|^2 \rho^{\downarrow}_{b',r}(t_{x}) P^{\mathsf{max}}_{b'}\right]^{(\ell-1)}}{G^{\uparrow}_{u,r}([\bm{v}]^{(\ell-1)})}}\nonumber\\
    &~~~~~~~~~~~~~~~~~~~~~~~~~\times\left(G^{\uparrow}_{u,r}([\bm{v}]^{(\ell-1)})\right)^{\frac{\left[B_{b}N_{0}\right]^{(\ell-1)}}{G^{\uparrow}_{u,r}([\bm{v}]^{(\ell-1)})}},
\end{align}
which gives the following approximation of~\eqref{app:eq:DR_C2R_C1_4}:
\begin{equation}\label{app:eq:DR_C2R_C1_8}
     \frac{\displaystyle\left(\mathfrak{J}^{\uparrow}_{u,r}(t_{x})\right)^{-C}\left(\mathscr{B}^{\uparrow}_{u,r}(t_{x})\right)^{-1}\left(\sum_{b' \in \Omega} \sum_{u'\in \mathcal{U}_{b'}\setminus\{u\}}|\xi_{b,u'}(t_{x})|^2 \rho^{\uparrow}_{u',r}(t_{x})P^{\mathsf{max}}_{u'}{+} B_{b}N_{0}+\vert\xi_{b,u}(t_{x})\vert^2 \rho^{\uparrow}_{u,r}(t_{x})P^{\mathsf{max}}_{u}\right)}{ \widehat{G}^{\uparrow}_{u,r}(\bm{v};\ell)}\le 1.
\end{equation}

We next aim to transform \eqref{app:eq:DR_C2R_C1_1} into standard GP format. In doing so, let us rewrite \eqref{app:eq:DR_C2R_C1_1} as follows:
\begin{equation}\label{app:eq:DR_C2R_C1_9}
    \frac{2(CB_{b})^2\mathfrak{J}^{\uparrow}_{u,r}(t_{x})}{2(CB_{b})^2+2(CB_{b})^2\ln(2)\mathfrak{R}^{\uparrow}_{u,r}(t_{x})+\left(\ln(2)\mathfrak{R}^{\uparrow}_{u,r}(t_{x})\right)^2}=1.
\end{equation}
We transform this constraint into the format of GP by splitting it into the following three inequalities:
\begin{equation}\label{app:eq:DR_C2R_C1_10}
    \frac{2(CB_{b})^2\mathfrak{J}^{\uparrow}_{u,r}(t_{x})}{2(CB_{b})^2+2(CB_{b})^2\ln(2)\mathfrak{R}^{\uparrow}_{u,r}(t_{x})+\left(\ln(2)\mathfrak{R}^{\uparrow}_{u,r}(t_{x})\right)^2}\le 1,
\end{equation}
\begin{equation}\label{app:eq:DR_C2R_C1_11}
    \frac{(\mathscr{C}^{\uparrow}_{u,r}(t_{x}))^{-1}\left(2(CB_{b})^2+2(CB_{b})^2\ln(2)\mathfrak{R}^{\uparrow}_{u,r}(t_{x})+\left(\ln(2)\mathfrak{R}^{\uparrow}_{u,r}(t_{x})\right)^2\right)}{2(CB_{b})^2\mathfrak{J}^{\uparrow}_{u,r}(t_{x})}\le 1,
\end{equation}
\begin{equation}
    \mathscr{C}^{\uparrow}_{u,r}(t_{x})\ge 1,
\end{equation}
where $\mathscr{C}^{\uparrow}_{u,r}(t_{x})$ is an auxiliary decision variable and will be added with a large penalty term to the objective function to force $\mathscr{C}^{\uparrow}_{u,r}(t_{x}){\rightarrow}1^+$ at the optimal point. The fraction in~\eqref{app:eq:DR_C2R_C1_10} still needs transformation since it is inequality with posynomial in its denominator, which is not posynomial. We thus exploit arithmetic-geometric mean inequality (Lemma~\ref{Lemma:ArethmaticGeometric}) to approximate the denominator of \eqref{app:eq:DR_C2R_C1_3} with monomial. 
\begin{align}\label{app:eq:DR_C2R_C1_12}
    &H^{\uparrow}_{u,r}(\bm{v}){=} 2(CB_{b})^2+2(CB_{b})^2\ln(2)\mathfrak{R}^{\uparrow}_{u,r}(t_{x})+\left(\ln(2)\mathfrak{R}^{\uparrow}_{u,r}(t_{x})\right)^2 {\geq} \widehat{H}^{\uparrow}_{u,r}(\bm{v};\ell) {\triangleq} \left(H^{\uparrow}_{u,r}([\bm{v}]^{(\ell-1)})\right)^{\hspace{-1mm}\frac{\left[2(CB_{b})^2\right]^{(\ell-1)}}{H^{\uparrow}_{u,r}([\bm{v}]^{(\ell-1)})}}\nonumber\\
    &\times\left(\frac{\mathfrak{R}^{\uparrow}_{u,r}(t_{x}) H^{\uparrow}_{u,r}([\bm{v}]^{(\ell-1)})}{\left[\mathfrak{R}^{\uparrow}_{u,r}(t_{x})\right]^{(\ell-1)}}\right)^{\frac{\left[2(CB_{b})^2\ln(2)\mathfrak{R}^{\uparrow}_{u,r}(t_{x})\right]^{(\ell-1)}}{H^{\uparrow}_{u,r}([\bm{v}]^{(\ell-1)})}}\times\left(\frac{\left(\mathfrak{R}^{\uparrow}_{u,r}(t_{x})\right)^2 H^{\uparrow}_{u,r}([\bm{v}]^{(\ell-1)})}{\left[\left(\mathfrak{R}^{\uparrow}_{u,r}(t_{x})\right)^2\right]^{(\ell-1)}}\right)^{\frac{\left[\left(\ln(2)\mathfrak{R}^{\uparrow}_{u,r}(t_{x})\right)^2\right]^{(\ell-1)}}{H^{\uparrow}_{u,r}([\bm{v}]^{(\ell-1)})}},
\end{align}
which gives the following approximation of~\eqref{app:eq:DR_C2R_C1_10}:
\begin{equation}\label{app:eq:DR_C2R_C1_14}
     \frac{2(CB_{b})^2\mathfrak{J}^{\uparrow}_{u,r}(t_{x})}{ \widehat{H}^{\uparrow}_{u,r}(\bm{v};\ell)}\le 1.
\end{equation}

We finally approximate \eqref{app:eq:DR_C2R_C1} as follows:
\begin{tcolorbox}[ams align]
     & \frac{\left(\mathfrak{J}^{\uparrow}_{u,r}(t_{x})\right)^C\left(\displaystyle\sum_{b' \in \Omega} \sum_{u'\in \mathcal{U}_{b'}\setminus\{u\}}|\xi_{b,u'}(t_{x})|^2 \rho^{\uparrow}_{u',r}(t_{x})P^{\mathsf{max}}_{u'}{+} B_{b}N_{0}\right)}{\widehat{F}^{\uparrow}_{u,r}(\bm{v};\ell)}\le 1,\nonumber\\
     & \frac{\displaystyle\left(\mathfrak{J}^{\uparrow}_{u,r}(t_{x})\right)^{-C}\left(\mathscr{B}^{\uparrow}_{u,r}(t_{x})\right)^{-1}\left(\sum_{b' \in \Omega} \sum_{u'\in \mathcal{U}_{b'}\setminus\{u\}}|\xi_{b,u'}(t_{x})|^2 \rho^{\uparrow}_{u',r}(t_{x})P^{\mathsf{max}}_{u'}{+} B_{b}N_{0}+\vert\xi_{b,u}(t_{x})\vert^2 \rho^{\uparrow}_{u,r}(t_{x})P^{\mathsf{max}}_{u}\right)}{ \widehat{G}^{\uparrow}_{u,r}(\bm{v};\ell)}\le 1,\nonumber\\
     &\frac{(\mathscr{C}^{\uparrow}_{u,r}(t_{x}))^{-1}\left(2(CB_{b})^2+2(CB_{b})^2\ln(2)\mathfrak{R}^{\uparrow}_{u,r}(t_{x})+\left(\ln(2)\mathfrak{R}^{\uparrow}_{u,r}(t_{x})\right)^2\right)}{2(CB_{b})^2\mathfrak{J}^{\uparrow}_{u,r}(t_{x})}\le 1,~~\frac{2(CB_{b})^2\mathfrak{J}^{\uparrow}_{u,r}(t_{x})}{ \widehat{H}^{\uparrow}_{u,r}(\bm{v};\ell)}\le 1,\nonumber\\
     &\frac{1}{\mathscr{B}^{\uparrow}_{u,r}(t_{x})}\le 1,~~~\frac{1}{\mathscr{C}^{\uparrow}_{u,r}(t_{x})}\le 1.\nonumber
\end{tcolorbox}

\textbf{$\bm{\mathsf{D2C}}$ transmission rate.} Referring to Sec.~\ref{sec:data_transmission_rate}, let us revisit the data transmission rate equation of $\mathsf{D2C}$ communication mode as follows:
\begin{equation} \label{app:eq:DR_D2C_C1}
\overline{\mathfrak{R}}^{\uparrow}_{u,u',r}(t_{x}){=}\overline{B}_{b} \log_2\left(1+\overline{\Gamma}^{\uparrow}_{u,u',r}(t_{x})\right),~x{\in}\mathcal{N}^{(k)},
\end{equation}
where
\begin{equation}\label{app:eq:DR_D2C_C2}
    \overline{\Gamma}^{\uparrow}_{u,u',r}(t_{x}){=}\frac{\vert\xi_{u,u'}(t_{x})\vert^2 \overline{\rho}^{\uparrow}_{u,r}(t_{x})P^{\mathsf{max}}_{u}}{\displaystyle\sum_{b' \in \Omega} \sum_{\widehat{u},\widehat{u}'\in \mathcal{U}_{b'}\setminus \{u,u'\}}|\xi_{\widehat{u},u'}(t_{x})|^2 \rho_{\widehat{u},\widehat{u}',r}(t_{x})P^{\mathsf{max}}_{\widehat{u}}{+}\overline{B}_{b}N_{0}}.
\end{equation}
In the following, we transform \eqref{app:eq:DR_D2C_C1} into the standard form of GP. To this end, we define $\overline{\mathfrak{R}}^{\uparrow}_{u,u',r}(t_{x})$ as an auxiliary decision variable that must satisfy the following constraint:

\begin{align}
    &\overline{\mathfrak{R}}^{\uparrow}_{u,u',r}(t_{x})= \overline{B}_{b} \log_2\big(1+\overline{\Gamma}^{\uparrow}_{u,u',r}(t_{x})\big)\nonumber\\
    &{\Rightarrow} \frac{\ln(2)\overline{\mathfrak{R}}^{\uparrow}_{u,u',r}(t_{x})}{\overline{B}_{b}}=  \ln\big(1+\overline{\Gamma}^{\uparrow}_{u,u',r}(t_{x})\big)\nonumber\\
    &{\Rightarrow} \exp\left(\frac{\ln(2)\overline{\mathfrak{R}}^{\uparrow}_{u,u',r}(t_{x})}{\overline{B}_{b}}\right)= \exp\left(\ln\big(1+\overline{\Gamma}^{\uparrow}_{u,u',r}(t_{x})\big)\right)\nonumber\\
    &{\Rightarrow} \exp\left(\frac{C\ln(2)\overline{\mathfrak{R}}^{\uparrow}_{u,u',r}(t_{x})}{C \overline{B}_{b}}\right)= 1+\overline{\Gamma}^{\uparrow}_{u,u',r}(t_{x})\nonumber\\
    &{\Rightarrow} \left(\exp\left(\frac{\ln(2)\overline{\mathfrak{R}}^{\uparrow}_{u,u',r}(t_{x})}{C \overline{B}_{b}}\right)\right)^C= 1+\overline{\Gamma}^{\uparrow}_{u,u',r}(t_{x}).
\end{align}
Using the first three terms of the Maclaurin series of the exponential function $e^x=1+x+\frac{x^2}{2!}+\frac{x^3}{3!}+\cdots$, we get
\begin{align}\label{app:eq:DR_D2C_C1_temp}
    &{\Rightarrow} \left(1+\left(\frac{\ln(2)\overline{\mathfrak{R}}^{\uparrow}_{u,u',r}(t_{x})}{C\overline{B}_{b}}\right)+\frac{\left(\frac{\ln(2)\overline{\mathfrak{R}}^{\uparrow}_{u,u',r}(t_{x})}{C\overline{B}_{b}}\right)^2}{2}\right)^C= 1+\overline{\Gamma}^{\uparrow}_{u,u',r}(t_{x})\nonumber\\
    &{\Rightarrow} \frac{\left(1+\left(\frac{\ln(2)\overline{\mathfrak{R}}^{\uparrow}_{u,u',r}(t_{x})}{C\overline{B}_{b}}\right)+\left(\frac{\ln(2)\overline{\mathfrak{R}}^{\uparrow}_{u,u',r}(t_{x})}{\sqrt{2}C\overline{B}_{b}}\right)^2\right)^C}{1+\overline{\Gamma}^{\uparrow}_{u,u',r}(t_{x})}= 1.
\end{align}
In \eqref{app:eq:DR_D2C_C1_temp}, $C$ is a constant utilized to regulate the accuracy of the Maclaurin series expansion of $e^x$. Replacing \eqref{app:eq:DR_D2C_C1} into \eqref{app:eq:DR_D2C_C1_temp} leads to
\begin{align}
    \frac{\left(1+\left(\frac{\ln(2)\overline{\mathfrak{R}}^{\uparrow}_{u,u',r}(t_{x})}{C\overline{B}_{b}}\right)+\left(\frac{\ln(2)\overline{\mathfrak{R}}^{\uparrow}_{u,u',r}(t_{x})}{\sqrt{2}C\overline{B}_{b}}\right)^2\right)^C}{1+\frac{\vert\xi_{u,u'}(t_{x})\vert^2 \overline{\rho}^{\uparrow}_{u,r}(t_{x})P^{\mathsf{max}}_{u}}{\displaystyle\sum_{b' \in \Omega} \sum_{\widehat{u},\widehat{u}'\in \mathcal{U}_{b'}\setminus \{u,u'\}}|\xi_{\widehat{u},u'}(t_{x})|^2 \rho_{\widehat{u},\widehat{u}',r}(t_{x})P^{\mathsf{max}}_{\widehat{u}}{+}\overline{B}_{b}N_{0}}}= 1.
\end{align}
Performing some algebraic manipulations gives us
\begin{align}
     \frac{\left(\displaystyle\sum_{b' \in \Omega} \sum_{\widehat{u},\widehat{u}'\in \mathcal{U}_{b'}\setminus \{u,u'\}}|\xi_{\widehat{u},u'}(t_{x})|^2 \rho_{\widehat{u},\widehat{u}',r}(t_{x})P^{\mathsf{max}}_{\widehat{u}}{+}\overline{B}_{b}N_{0}\right)\left(1+\left(\frac{\ln(2)\overline{\mathfrak{R}}^{\uparrow}_{u,u',r}(t_{x})}{C\overline{B}_{b}}\right)+\left(\frac{\ln(2)\overline{\mathfrak{R}}^{\uparrow}_{u,u',r}(t_{x})}{\sqrt{2}C\overline{B}_{b}}\right)^2\right)^C}{\displaystyle\sum_{b' \in \Omega} \sum_{\widehat{u},\widehat{u}'\in \mathcal{U}_{b'}\setminus \{u,u'\}}|\xi_{\widehat{u},u'}(t_{x})|^2 \rho_{\widehat{u},\widehat{u}',r}(t_{x})P^{\mathsf{max}}_{\widehat{u}}{+}\overline{B}_{b}N_{0}{+} \vert\xi_{u,u'}(t_{x})\vert^2 \overline{\rho}^{\uparrow}_{u,r}(t_{x})P^{\mathsf{max}}_{u}}= 1.
\end{align}
Defining auxiliary decision variable $\overline{\mathfrak{J}}^{\uparrow}_{u,u',r}(t_{x})$ as
\begin{equation}\label{app:eq:DR_D2C_C1_1}
    \overline{\mathfrak{J}}^{\uparrow}_{u,u',r}(t_{x})=1+\left(\frac{\ln(2)\overline{\mathfrak{R}}^{\uparrow}_{u,u',r}(t_{x})}{C\overline{B}_{b}}\right)+\left(\frac{\ln(2)\overline{\mathfrak{R}}^{\uparrow}_{u,u',r}(t_{x})}{\sqrt{2}C\overline{B}_{b}}\right)^2
\end{equation}
results in the following constraint:
\begin{align}\label{app:eq:DR_D2C_C1_2}
    \frac{\left(\overline{\mathfrak{J}}^{\uparrow}_{u,u',r}(t_{x})\right)^C\left(\displaystyle\sum_{b' \in \Omega} \sum_{\widehat{u},\widehat{u}'\in \mathcal{U}_{b'}\setminus \{u,u'\}}|\xi_{\widehat{u},u'}(t_{x})|^2 \rho_{\widehat{u},\widehat{u}',r}(t_{x})P^{\mathsf{max}}_{\widehat{u}}{+}\overline{B}_{b}N_{0}\right)}{\displaystyle\sum_{b' \in \Omega} \sum_{\widehat{u},\widehat{u}'\in \mathcal{U}_{b'}\setminus \{u,u'\}}|\xi_{\widehat{u},u'}(t_{x})|^2 \rho_{\widehat{u},\widehat{u}',r}(t_{x})P^{\mathsf{max}}_{\widehat{u}}{+}\overline{B}_{b}N_{0}{+} \vert\xi_{u,u'}(t_{x})\vert^2 \overline{\rho}^{\uparrow}_{u,r}(t_{x})P^{\mathsf{max}}_{u}}= 1.
\end{align}
However, this constraint is not in the format of GP. Therefore, we transform it into the standard GP format via introducing the following three inequalities:
\begin{equation}\label{app:eq:DR_D2C_C1_3}
      \frac{\left(\overline{\mathfrak{J}}^{\uparrow}_{u,u',r}(t_{x})\right)^C\left(\displaystyle\sum_{b' \in \Omega} \sum_{\widehat{u},\widehat{u}'\in \mathcal{U}_{b'}\setminus \{u,u'\}}|\xi_{\widehat{u},u'}(t_{x})|^2 \rho_{\widehat{u},\widehat{u}',r}(t_{x})P^{\mathsf{max}}_{\widehat{u}}{+}\overline{B}_{b}N_{0}\right)}{\displaystyle\sum_{b' \in \Omega} \sum_{\widehat{u},\widehat{u}'\in \mathcal{U}_{b'}\setminus \{u,u'\}}|\xi_{\widehat{u},u'}(t_{x})|^2 \rho_{\widehat{u},\widehat{u}',r}(t_{x})P^{\mathsf{max}}_{\widehat{u}}{+}\overline{B}_{b}N_{0}{+} \vert\xi_{u,u'}(t_{x})\vert^2 \overline{\rho}^{\uparrow}_{u,r}(t_{x})P^{\mathsf{max}}_{u}}\le 1,
\end{equation}
\begin{equation}\label{app:eq:DR_D2C_C1_4}
     \frac{\left(\overline{\mathfrak{J}}^{\uparrow}_{u,u',r}(t_{x})\right)^{-C}\left(\mathscr{B}^{\uparrow}_{u,u',r}(t_{x})\right)^{-1}\left(\displaystyle\sum_{b' \in \Omega} \sum_{\widehat{u},\widehat{u}'\in \mathcal{U}_{b'}\setminus \{u,u'\}}|\xi_{\widehat{u},u'}(t_{x})|^2 \rho_{\widehat{u},\widehat{u}',r}(t_{x})P^{\mathsf{max}}_{\widehat{u}}{+}\overline{B}_{b}N_{0}{+} \vert\xi_{u,u'}(t_{x})\vert^2 \overline{\rho}^{\uparrow}_{u,r}(t_{x})P^{\mathsf{max}}_{u}\right)}{\displaystyle\sum_{b' \in \Omega} \sum_{\widehat{u},\widehat{u}'\in \mathcal{U}_{b'}\setminus \{u,u'\}}|\xi_{\widehat{u},u'}(t_{x})|^2 \rho_{\widehat{u},\widehat{u}',r}(t_{x})P^{\mathsf{max}}_{\widehat{u}}{+}\overline{B}_{b}N_{0}}\le 1,
\end{equation}
\begin{equation}
    \mathscr{B}^{\uparrow}_{u,u',r}(t_{x})\ge 1,
\end{equation}
where $\mathscr{B}^{\uparrow}_{u,u',r}(t_{x})$ is an auxiliary decision variable added with a large penalty term to the objective function to force $\mathscr{B}^{\uparrow}_{u,u',r}(t_{x}){\rightarrow}1^+$ at the optimal point. The fractions in~\eqref{app:eq:DR_D2C_C1_3} and \eqref{app:eq:DR_D2C_C1_4} still need transformation since they are inequalities with posynomials in their denominator, which are not posynomials. We thus exploit arithmetic-geometric mean inequality (Lemma~\ref{Lemma:ArethmaticGeometric}) to approximate the denominators of \eqref{app:eq:DR_D2C_C1_3} and \eqref{app:eq:DR_D2C_C1_4} with monomials. In doing so, we approximate the denominator in \eqref{app:eq:DR_D2C_C1_3} as follows:
\begin{align}\label{app:eq:DR_D2C_C1_5}
    F^{\uparrow}_{u,u',r}(\bm{v})&{=}\sum_{b' \in \Omega} \sum_{\widehat{u},\widehat{u}'\in \mathcal{U}_{b'}\setminus \{u,u'\}}|\xi_{\widehat{u},u'}(t_{x})|^2 \rho_{\widehat{u},\widehat{u}',r}(t_{x})P^{\mathsf{max}}_{\widehat{u}}{+} \vert\xi_{u,u'}(t_{x})\vert^2 \overline{\rho}^{\uparrow}_{u,r}(t_{x})P^{\mathsf{max}}_{u}{+}\overline{B}_{b}N_{0}\nonumber\\
    &{\geq} \widehat{F}^{\uparrow}_{u,u',r}(\bm{v};\ell) {\triangleq} \prod_{b' \in \Omega\setminus \{b\}}\prod_{\widehat{u},\widehat{u}'\in \mathcal{U}_{b'}}\left(\frac{|\xi_{\widehat{u},u'}(t_{x})|^2 \rho_{\widehat{u},\widehat{u}',r}(t_{x}) F^{\uparrow}_{u,u',r}([\bm{v}]^{(\ell-1)})}{\left[|\xi_{\widehat{u},u'}(t_{x})|^2 \rho_{\widehat{u},\widehat{u}',r}(t_{x})\right]^{(\ell-1)}}\right)^{\hspace{-2mm}\frac{\left[|\xi_{\widehat{u},u'}(t_{x})|^2 \rho_{\widehat{u},\widehat{u}',r}(t_{x})P^{\mathsf{max}}_{\widehat{u}}\right]^{(\ell-1)}}{F^{\uparrow}_{u,u',r}([\bm{v}]^{(\ell-1)})}}\nonumber\\
    &\times\left(\frac{ \vert\xi_{u,u'}(t_{x})\vert^2 \overline{\rho}^{\uparrow}_{u,r}(t_{x}) F^{\uparrow}_{u,u',r}([\bm{v}]^{(\ell-1)})}{\left[ \vert\xi_{u,u'}(t_{x})\vert^2 \overline{\rho}^{\uparrow}_{u,r}(t_{x})\right]^{(\ell-1)}}\right)^{\frac{\left[ \vert\xi_{u,u'}(t_{x})\vert^2 \overline{\rho}^{\uparrow}_{u,r}(t_{x})P^{\mathsf{max}}_{u}\right]^{(\ell-1)}}{F^{\uparrow}_{u,u',r}([\bm{v}]^{(\ell-1)})}}\times\left( F^{\uparrow}_{u,u',r}([\bm{v}]^{(\ell-1)})\right)^{\frac{\left[\overline{B}_{b}N_{0}\right]^{(\ell-1)}}{F^{\uparrow}_{u,u',r}([\bm{v}]^{(\ell-1)})}},
\end{align}
which gives us the following approximation of~\eqref{app:eq:DR_D2C_C1_3}:
\begin{equation}\label{app:eq:DR_D2C_C1_6}
     \frac{\left(\overline{\mathfrak{J}}^{\uparrow}_{u,u',r}(t_{x})\right)^C\left(\displaystyle\sum_{b' \in \Omega} \sum_{\widehat{u},\widehat{u}'\in \mathcal{U}_{b'}\setminus \{u,u'\}}|\xi_{\widehat{u},u'}(t_{x})|^2 \rho_{\widehat{u},\widehat{u}',r}(t_{x})P^{\mathsf{max}}_{\widehat{u}}{+}\overline{B}_{b}N_{0}\right)}{\widehat{F}^{\uparrow}_{u,u',r}(\bm{v};\ell)}\le 1,
\end{equation}
Similarly, we approximate the denominator in \eqref{app:eq:DR_D2C_C1_4} as follows:
\begin{align}\label{app:eq:DR_D2C_C1_7}
    G^{\uparrow}_{u,u',r}(\bm{v}){=}& \sum_{b' \in \Omega} \sum_{\widehat{u},\widehat{u}'\in \mathcal{U}_{b'}\setminus \{u,u'\}}|\xi_{\widehat{u},u'}(t_{x})|^2 \rho_{\widehat{u},\widehat{u}',r}(t_{x})P^{\mathsf{max}}_{\widehat{u}}{+}\overline{B}_{b}N_{0}\nonumber\\
    &{\geq} \widehat{G}^{\uparrow}_{u,u',r}(\bm{v};\ell) {\triangleq} \prod_{b' \in \Omega\setminus \{b\}}\prod_{\widehat{u},\widehat{u}'\in \mathcal{U}_{b'}}\left(\frac{|\xi_{\widehat{u},u'}(t_{x})|^2 \rho_{\widehat{u},\widehat{u}',r}(t_{x}) G^{\uparrow}_{u,u',r}([\bm{v}]^{(\ell-1)})}{\left[|\xi_{\widehat{u},u'}(t_{x})|^2 \rho_{\widehat{u},\widehat{u}',r}(t_{x})\right]^{(\ell-1)}}\right)^{\hspace{-2mm}\frac{\left[|\xi_{\widehat{u},u'}(t_{x})|^2 \rho_{\widehat{u},\widehat{u}',r}(t_{x})P^{\mathsf{max}}_{\widehat{u}}\right]^{(\ell-1)}}{G^{\uparrow}_{u,u',r}([\bm{v}]^{(\ell-1)})}}\nonumber\\
    &~~~~~~~~~~~~~~~~~~~~~~~~~\times\left(G^{\uparrow}_{u,u',r}([\bm{v}]^{(\ell-1)})\right)^{\frac{\left[\overline{B}_{b}N_{0}\right]^{(\ell-1)}}{G^{\uparrow}_{u,u',r}([\bm{v}]^{(\ell-1)})}},
\end{align}
which gives the following approximation of~\eqref{app:eq:DR_D2C_C1_4}:
\begin{equation}\label{app:eq:DR_D2C_C1_8}
     \frac{\left(\overline{\mathfrak{J}}^{\uparrow}_{u,u',r}(t_{x})\right)^{-C}\left(\mathscr{B}^{\uparrow}_{u,u',r}(t_{x})\right)^{-1}\left(\displaystyle\sum_{b' \in \Omega} \sum_{\widehat{u},\widehat{u}'\in \mathcal{U}_{b'}\setminus \{u,u'\}}|\xi_{\widehat{u},u'}(t_{x})|^2 \rho_{\widehat{u},\widehat{u}',r}(t_{x})P^{\mathsf{max}}_{\widehat{u}}{+}\overline{B}_{b}N_{0}{+} \vert\xi_{u,u'}(t_{x})\vert^2 \overline{\rho}^{\uparrow}_{u,r}(t_{x})P^{\mathsf{max}}_{u}\right)}{ \widehat{G}^{\uparrow}_{u,u',r}(\bm{v};\ell)}\le 1.
\end{equation}
We next aim to transform \eqref{app:eq:DR_D2C_C1_1} into standard GP format. In doing so, let us rewrite \eqref{app:eq:DR_D2C_C1_1} as follows:
\begin{equation}\label{app:eq:DR_D2C_C1_9}
    \frac{2(C\overline{B}_{b})^2\overline{\mathfrak{J}}^{\uparrow}_{u,u',r}(t_{x})}{2(C\overline{B}_{b})^2+2(C\overline{B}_{b})^2\ln(2)\overline{\mathfrak{R}}^{\uparrow}_{u,u',r}(t_{x})+\left(\ln(2)\overline{\mathfrak{R}}^{\uparrow}_{u,u',r}(t_{x})\right)^2}=1.
\end{equation}
We transform this constraint into the format of GP by splitting it into the following three inequalities:
\begin{equation}\label{app:eq:DR_D2C_C1_10}
    \frac{2(C\overline{B}_{b})^2\overline{\mathfrak{J}}^{\uparrow}_{u,u',r}(t_{x})}{2(C\overline{B}_{b})^2+2(C\overline{B}_{b})^2\ln(2)\overline{\mathfrak{R}}^{\uparrow}_{u,u',r}(t_{x})+\left(\ln(2)\overline{\mathfrak{R}}^{\uparrow}_{u,u',r}(t_{x})\right)^2}\le 1,
\end{equation}
\begin{equation}\label{app:eq:DR_D2C_C1_11}
    \frac{(\mathscr{C}^{\uparrow}_{u,u',r}(t_{x}))^{-1}2(C\overline{B}_{b})^2+(\mathscr{C}^{\uparrow}_{u,u',r}(t_{x}))^{-1}2(C\overline{B}_{b})^2\ln(2)\overline{\mathfrak{R}}^{\uparrow}_{u,u',r}(t_{x})+(\mathscr{C}^{\uparrow}_{u,u',r}(t_{x}))^{-1}\left(\ln(2)\overline{\mathfrak{R}}^{\uparrow}_{u,u',r}(t_{x})\right)^2}{2(C\overline{B}_{b})^2\overline{\mathfrak{J}}^{\uparrow}_{u,u',r}(t_{x})}\le 1,
\end{equation}
\begin{equation}
    \mathscr{C}^{\uparrow}_{u,u',r}(t_{x})\ge 1,
\end{equation}
where $\mathscr{C}^{\uparrow}_{u,u',r}(t_{x})$ is an auxiliary decision variable and will be added with a large penalty term to the objective function to force $\mathscr{C}^{\uparrow}_{u,u',r}(t_{x}){\rightarrow}1^+$ at the optimal point. The fraction in~\eqref{app:eq:DR_D2C_C1_10} still needs transformation since it is inequality with posynomial in its denominator, which is not posynomial. We thus exploit arithmetic-geometric mean inequality (Lemma~\ref{Lemma:ArethmaticGeometric}) to approximate the denominator of \eqref{app:eq:DR_D2C_C1_3} with monomial. 
\begin{align}\label{app:eq:DR_D2C_C1_12}
    &H^{\uparrow}_{u,u',r}(\bm{v}){=} 2(C\overline{B}_{b})^2{+}2(C\overline{B}_{b})^2\ln(2)\overline{\mathfrak{R}}^{\uparrow}_{u,u',r}(t_{x}){+}\left(\ln(2)\overline{\mathfrak{R}}^{\uparrow}_{u,u',r}(t_{x})\right)^2 {\geq} \widehat{H}^{\uparrow}_{u,u',r}(\bm{v};\ell) {\triangleq} \left(H^{\uparrow}_{u,u',r}([\bm{v}]^{(\ell-1)})\right)^{\hspace{-1mm}\frac{\left[2(C\overline{B}_{b})^2\right]^{(\ell-1)}}{H^{\uparrow}_{u,u',r}([\bm{v}]^{(\ell-1)})}}\nonumber\\
    &{\times}\hspace{-1.5mm}\left(\frac{\overline{\mathfrak{R}}^{\uparrow}_{u,u',r}(t_{x}) H^{\uparrow}_{u,u',r}([\bm{v}]^{(\ell-1)})}{\left[\overline{\mathfrak{R}}^{\uparrow}_{u,u',r}(t_{x})\right]^{(\ell-1)}}\right)^{\hspace{-2mm}\frac{\left[2(C\overline{B}_{b})^2\ln(2)\overline{\mathfrak{R}}^{\uparrow}_{u,u',r}(t_{x})\right]^{(\ell-1)}}{H^{\uparrow}_{u,u',r}([\bm{v}]^{(\ell-1)})}}\times\left(\frac{\left(\overline{\mathfrak{R}}^{\uparrow}_{u,u',r}(t_{x})\right)^2 H^{\uparrow}_{u,u',r}([\bm{v}]^{(\ell-1)})}{\left[\left(\overline{\mathfrak{R}}^{\uparrow}_{u,u',r}(t_{x})\right)^2\right]^{(\ell-1)}}\right)^{\hspace{-2mm}\frac{\left[\left(\ln(2)\overline{\mathfrak{R}}^{\uparrow}_{u,u',r}(t_{x})\right)^2\right]^{(\ell-1)}}{H^{\uparrow}_{u,u',r}([\bm{v}]^{(\ell-1)})}},
\end{align}
which results in the following approximation of~\eqref{app:eq:DR_D2C_C1_10}:
\begin{equation}\label{app:eq:DR_D2C_C1_14}
     \frac{2(C\overline{B}_{b})^2\overline{\mathfrak{J}}^{\uparrow}_{u,u',r}(t_{x})}{ \widehat{H}^{\uparrow}_{u,u',r}(\bm{v};\ell)}\le 1.
\end{equation}
We finally approximate \eqref{app:eq:DR_D2C_C1} as follows:
\begin{tcolorbox}[ams align]
     & \frac{\left(\overline{\mathfrak{J}}^{\uparrow}_{u,u',r}(t_{x})\right)^C\left(\displaystyle\sum_{b' \in \Omega} \sum_{\widehat{u},\widehat{u}'\in \mathcal{U}_{b'}\setminus \{u,u'\}}|\xi_{\widehat{u},u'}(t_{x})|^2 \rho_{\widehat{u},\widehat{u}',r}(t_{x})P^{\mathsf{max}}_{\widehat{u}}{+}\overline{B}_{b}N_{0}\right)}{\widehat{F}^{\uparrow}_{u,u',r}(\bm{v};\ell)}\le 1,\nonumber\\
     & \frac{\left(\overline{\mathfrak{J}}^{\uparrow}_{u,u',r}(t_{x})\right)^{-C}\hspace{-1mm}\left(\mathscr{B}^{\uparrow}_{u,u',r}(t_{x})\right)^{-1}\hspace{-1mm}\left(\displaystyle\sum_{b' \in \Omega} \sum_{\widehat{u},\widehat{u}'\in \mathcal{U}_{b'}\setminus \{u,u'\}}\hspace{-1mm} |\xi_{\widehat{u},u'}(t_{x})|^2\rho_{\widehat{u},\widehat{u}',r}(t_{x})P^{\mathsf{max}}_{\widehat{u}}{+}\overline{B}_{b}N_{0}{+} \vert\xi_{u,u'}(t_{x})\vert^2 \overline{\rho}^{\uparrow}_{u,r}(t_{x})P^{\mathsf{max}}_{u}\right)}{ \widehat{G}^{\uparrow}_{u,u',r}(\bm{v};\ell)}\le 1,\nonumber\\
     &  \frac{(\mathscr{C}^{\uparrow}_{u,u',r}(t_{x}))^{-1}2(C\overline{B}_{b})^2+(\mathscr{C}^{\uparrow}_{u,u',r}(t_{x}))^{-1}2(C\overline{B}_{b})^2\ln(2)\overline{\mathfrak{R}}^{\uparrow}_{u,u',r}(t_{x})+(\mathscr{C}^{\uparrow}_{u,u',r}(t_{x}))^{-1}\left(\ln(2)\overline{\mathfrak{R}}^{\uparrow}_{u,u',r}(t_{x})\right)^2}{2(C\overline{B}_{b})^2\overline{\mathfrak{J}}^{\uparrow}_{u,u',r}(t_{x})}\le 1,\nonumber\\
     &  \frac{2(C\overline{B}_{b})^2\overline{\mathfrak{J}}^{\uparrow}_{u,u',r}(t_{x})}{ \widehat{H}^{\uparrow}_{u,u',r}(\bm{v};\ell)}\le 1,\nonumber\\
     &\frac{1}{\mathscr{B}^{\uparrow}_{u,u',r}(t_{x})}\le 1,~~~~\frac{1}{\mathscr{C}^{\uparrow}_{u,u',r}(t_{x})}\le 1.\nonumber
\end{tcolorbox}

\newpage
\subsection{Computation and communication latencies}\label{app:computation_communication_latencies}
\noindent\textbf{Computation latency.} Referring to Sec.~\ref{sec:computation_latency}, let us revisit the equation of computation latency below:
\begin{equation}\label{app:cons:computation_latency_1}
     \tau_{u}^{\mathsf{LC},{(k)}}= \frac{\ell^{(k)}_{u} a_{u} {B}_{u}(\widetilde{\tau}_{b}^{\downarrow,{(k)}})}{f^{(k)}_{u}},
\end{equation}
In the following we transform \eqref{app:cons:computation_latency_1} into standard GP format. In doing so, we define $\tau_{u}^{\mathsf{LC},{(k)}}$ as an auxiliary decision variable that must satisfy the following constraint (derived through performing some algebraic manipulations on \eqref{app:cons:computation_latency_1}):
\begin{equation}\label{app:cons:computation_latency_2}
    \frac{f^{(k)}_{u}\tau_{u}^{\mathsf{LC},{(k)}}+f^{(k)}_{u}}{\ell^{(k)}_{u} a_{u} {B}_{u}(\widetilde{\tau}_{b}^{\downarrow,{(k)}})+f^{(k)}_{u}}= 1.
\end{equation}
We transform \eqref{app:cons:computation_latency_2} into the format of GP by splitting it into the following three inequalities:
\begin{equation}\label{app:cons:computation_latency_3}
    \frac{f^{(k)}_{u}\tau_{u}^{\mathsf{LC},{(k)}}+f^{(k)}_{u}}{\ell^{(k)}_{u} a_{u} {B}_{u}(\widetilde{\tau}_{b}^{\downarrow,{(k)}})+f^{(k)}_{u}}\le 1,
\end{equation}
\begin{equation}\label{app:cons:computation_latency_4}
    \frac{\left(A^{\mathsf{LC},{(k)}}\right)^{-1}\left(\ell^{(k)}_{u} a_{u} {B}_{u}(\widetilde{\tau}_{b}^{\downarrow,{(k)}})+f^{(k)}_{u}\right)}{f^{(k)}_{u}\tau_{u}^{\mathsf{LC},{(k)}}+f^{(k)}_{u}}\le 1,
\end{equation}
\begin{equation}
    A^{\mathsf{LC},{(k)}}\ge 1
\end{equation}
where $A^{\mathsf{LC},{(k)}}$ is an auxiliary decision variable and will be added with a large penalty term to the objective function to force $A^{\mathsf{LC},{(k)}}{\rightarrow}1^+$ at the optimal point. The fractions in~\eqref{app:cons:computation_latency_3} and \eqref{app:cons:computation_latency_4} still need transformation since they are inequalities with posynomials in their denominators, which are not posynomials. We thus exploit arithmetic-geometric mean inequality (Lemma~\ref{Lemma:ArethmaticGeometric}) to approximate the denominators of \eqref{app:cons:computation_latency_3} and \eqref{app:cons:computation_latency_4} with monomials. To this end, we transform \eqref{app:cons:computation_latency_3} into GP format as follows:
\begin{align}\label{app:cons:computation_latency_5}
    F^{\mathsf{LC},{(k)}}(\bm{v}){=}& \ell^{(k)}_{u} a_{u} {B}_{u}(\widetilde{\tau}_{b}^{\downarrow,{(k)}})+f^{(k)}_{u} {\geq} \widehat{F}^{\mathsf{LC},{(k)}}(\bm{v};\ell) {\triangleq} \left(\frac{\ell^{(k)}_{u} a_{u} {B}_{u}(\widetilde{\tau}_{b}^{\downarrow,{(k)}}) F^{\mathsf{LC},{(k)}}([\bm{v}]^{(\ell-1)})}{\left[\ell^{(k)}_{u} a_{u} {B}_{u}(\widetilde{\tau}_{b}^{\downarrow,{(k)}})\right]^{(\ell-1)}}\right)^{\hspace{-2mm}\frac{\left[\ell^{(k)}_{u} a_{u} {B}_{u}(\widetilde{\tau}_{b}^{\downarrow,{(k)}})\right]^{(\ell-1)}}{F^{\mathsf{LC},{(k)}}([\bm{v}]^{(\ell-1)})}}\nonumber\\
    &~~~~~~~~~~~~~~~~~~~~~~~~~~~~~~~~~~~\times\left(\frac{f^{(k)}_{u} F^{\mathsf{LC},{(k)}}([\bm{v}]^{(\ell-1)})}{\left[f^{(k)}_{u}\right]^{(\ell-1)}}\right)^{\frac{\left[f^{(k)}_{u}\right]^{(\ell-1)}}{F^{\mathsf{LC},{(k)}}([\bm{v}]^{(\ell-1)})}},
\end{align}
which gives the following approximation of~\eqref{app:cons:computation_latency_3}:
\begin{equation}\label{app:cons:computation_latency_6}
     \frac{f^{(k)}_{u}\tau_{u}^{\mathsf{LC},{(k)}}+f^{(k)}_{u}}{ \widehat{F}^{\mathsf{LC},{(k)}}(\bm{v};\ell)}\le 1.
\end{equation}
Similarly, we transform \eqref{app:cons:computation_latency_4} into GP format as follows:
\begin{align}\label{app:cons:computation_latency_5}
    G^{\mathsf{LC},{(k)}}(\bm{v}){=}& f^{(k)}_{u}\tau_{u}^{\mathsf{LC},{(k)}}+f^{(k)}_{u} {\geq} \widehat{G}^{\mathsf{LC},{(k)}}(\bm{v};\ell) {\triangleq} \left(\frac{f^{(k)}_{u}\tau_{u}^{\mathsf{LC},{(k)}} G^{\mathsf{LC},{(k)}}([\bm{v}]^{(\ell-1)})}{\left[f^{(k)}_{u}\tau_{u}^{\mathsf{LC},{(k)}}\right]^{(\ell-1)}}\right)^{\hspace{-2mm}\frac{\left[f^{(k)}_{u}\tau_{u}^{\mathsf{LC},{(k)}}\right]^{(\ell-1)}}{G^{\mathsf{LC},{(k)}}([\bm{v}]^{(\ell-1)})}}\nonumber\\
    &~~~~~~~~~~~~~~~~~~~~~~~~~~~~~~~~~~~\times\left(\frac{f^{(k)}_{u} G^{\mathsf{LC},{(k)}}([\bm{v}]^{(\ell-1)})}{\left[f^{(k)}_{u}\right]^{(\ell-1)}}\right)^{\frac{\left[f^{(k)}_{u}\right]^{(\ell-1)}}{G^{\mathsf{LC},{(k)}}([\bm{v}]^{(\ell-1)})}},
\end{align}
which gives the following approximation of~\eqref{app:cons:computation_latency_3}:
\begin{equation}\label{app:cons:computation_latency_6}
     \frac{\left(A^{\mathsf{LC},{(k)}}\right)^{-1}\left(\ell^{(k)}_{u} a_{u} {B}_{u}(\widetilde{\tau}_{b}^{\downarrow,{(k)}})+f^{(k)}_{u}\right)}{ \widehat{G}^{\mathsf{LC},{(k)}}(\bm{v};\ell)}\le 1.
\end{equation}
We finally approximate \eqref{app:cons:computation_latency_1} as follows:
\begin{tcolorbox}[ams align]
     & \frac{f^{(k)}_{u}\tau_{u}^{\mathsf{LC},{(k)}}+f^{(k)}_{u}}{ \widehat{F}^{\mathsf{LC},{(k)}}(\bm{v};\ell)}\le 1,~~~~\frac{\left(A^{\mathsf{LC},{(k)}}\right)^{-1}\left(\ell^{(k)}_{u} a_{u} {B}_{u}(\widetilde{\tau}_{b}^{\downarrow,{(k)}})+f^{(k)}_{u}\right)}{ \widehat{G}^{\mathsf{LC},{(k)}}(\bm{v};\ell)}\le 1,~~~~\frac{1}{A^{\mathsf{LC},{(k)}}}\le 1.\nonumber
\end{tcolorbox}

\noindent\textbf{GM Broadcasting latency (Proposition~\ref{propo:broadcast}).} Referring to Sec.~\ref{sec:communication_latency}, let us revisit the equations of GM broadcasting latency below:
\begin{equation}\label{app:eq:broadcast_latency_0}
    \sum_{r \in \mathcal{R}_{b}} \varphi_{b,r}\hspace{-0.3mm}(t_{x})= 1
\end{equation}
\begin{equation}\label{app:eq:broadcast_latency_1}
   \tau_{b}^{\downarrow,{(k)}}=\hspace{-3mm}\sum_{x{=}N^{(k{-}1)}{+}1}^{N^{(k)}}\max_{r{\in}\mathcal{R}_{b}}\Big\{\min\big\{\tau_{b,r}^{\downarrow}(t_{x}),\left(t_{x{+}1}-t_{x}\right)\big\}\Big\},
\end{equation}    
where
\begin{equation}\label{app:eq:broadcast_latency_2}
    \tau_{b,r}^{\downarrow}(t_{x}){=}\frac{\big(\alpha_{\bm{\omega}} M{-}M^{\downarrow}_{b}(x)\big)\varphi_{b,r}(t_{x})}{\mathfrak{R}^{\downarrow}_{b,r}(t_{x})+1-\beta^{\downarrow}_{b}\hspace{-0.3mm}(t_{x})}.
\end{equation}
In \eqref{app:eq:broadcast_latency_2}, $M^{\downarrow}_{b}(x)$ is given by
\begin{equation}\label{app:eq:broadcast_latency_3}
   M^{\downarrow}_{b}(x){=}\sum_{z{=}N^{(k{-}1)}{+}1}^{x}\sum_{r'
   {\in}\mathcal{R}_{b}}\min\big\{\tau_{b,r'}^{\downarrow}(t_{z}),\left(t_{z{+}1}{-}t_{z}\right)\big\}\mathfrak{R}^{\downarrow}_{b,r'}(t_{z}).
\end{equation} 

In the following, we first transform \eqref{app:eq:broadcast_latency_0} into the standard form of GP. In doing so, we split \eqref{app:eq:broadcast_latency_0} into the following three inequalities:
\begin{equation}\label{app:eq:broadcast_latency_0_2}
    \sum_{r \in \mathcal{R}_{b}} \varphi_{b,r}\hspace{-0.3mm}(t_{x})\le 1,
\end{equation}
\begin{equation}\label{app:eq:broadcast_latency_0_3}
    \frac{\left(\mathscr{B}_{b}(t_{x})\right)^{-1}}{\sum_{r \in \mathcal{R}_{b}} \varphi_{b,r}\hspace{-0.3mm}(t_{x})}\le 1,
\end{equation}
\begin{equation}
    \mathscr{B}_{b}(t_{x})\ge 1,
\end{equation}
where $\mathscr{B}_{b}(t_{x})$ is an auxiliary decision variable added with a large penalty term to the objective function to force $\mathscr{B}_{b}(t_{x}){\rightarrow}1^+$ at the optimal point. The fraction in~\eqref{app:eq:broadcast_latency_0_3} still needs transformation since it is inequality with a posynomial in its denominator, which is not a posynomial. We thus exploit arithmetic-geometric mean inequality (Lemma~\ref{Lemma:ArethmaticGeometric}) to approximate the denominator in \eqref{app:eq:broadcast_latency_0_3} with a monomial.
\begin{align}\label{app:eq:broadcast_latency_0_3_1}
    A_{b}(\bm{v})&=\sum_{r \in \mathcal{R}_{b}} \varphi_{b,r}\hspace{-0.3mm}(t_{x}) \geq \widehat{A}_{b}(\bm{v};\ell) \triangleq \prod_{r \in \mathcal{R}_{b}} \left(\frac{\varphi_{b,r}\hspace{-0.3mm}(t_{x}) A_{b}([\bm{v}]^{(\ell-1)})}{\left[\varphi_{b,r}\hspace{-0.3mm}(t_{x})\right]^{(\ell-1)}}\right)^{\frac{\left[\varphi_{b,r}\hspace{-0.3mm}(t_{x})\right]^{(\ell-1)}}{A_{b}([\bm{v}]^{(\ell-1)})}},
\end{align}
which gives us an approximation of~\eqref{app:eq:broadcast_latency_0_3} as follows:
\begin{equation}\label{app:eq:broadcast_latency_0_3_3}
    \frac{\left(\mathscr{B}_{b}(t_{x})\right)^{-1}}{\widehat{A}_{b}(\bm{v};\ell)}\le 1,
\end{equation}
We finally approximate constraint~\eqref{app:eq:broadcast_latency_0} as follows:
\begin{tcolorbox}[ams align]
     &\sum_{r \in \mathcal{R}_{b}} \varphi_{b,r}\hspace{-0.3mm}(t_{x})\le 1,~~~~~~~\frac{\left(\mathscr{B}_{b}(t_{x})\right)^{-1}}{\widehat{A}_{b}(\bm{v};\ell)}\le 1,~~~~~~~\frac{1}{\mathscr{B}_{b}(t_{x})}\le 1.\nonumber
\end{tcolorbox}

We next transform \eqref{app:eq:broadcast_latency_1} into the standard form of GP. To this end, we define $\tau_{b}^{\downarrow,{(k)}}$, $\tau_{b,r}^{\downarrow}(t_{x})$, and $M^{\downarrow}_{b}(x)$ as auxiliary decision variables that must satisfy constraints \eqref{app:eq:broadcast_latency_1}, \eqref{app:eq:broadcast_latency_2}, and \eqref{app:eq:broadcast_latency_3}, respectively. Furthermore, we define an auxiliary decision variable $T_{b,r}^{\downarrow}(t_{x})$ satisfying the following constraint:
\begin{equation}\label{app:eq:broadcast_latency_min_1}
    T_{b,r}^{\downarrow}(t_{x})=\min\big\{\tau_{b,r}^{\downarrow}(t_{x}),\left(t_{x{+}1}-t_{x}\right)\big\}
\end{equation}
Accordingly, \eqref{app:eq:broadcast_latency_1} and \eqref{app:eq:broadcast_latency_3} can be rewritten as follows:
\begin{equation}\label{app:eq:broadcast_latency_1_1}
   \tau_{b}^{\downarrow,{(k)}}=\hspace{-3mm}\sum_{x{=}N^{(k{-}1)}{+}1}^{N^{(k)}}\max_{r{\in}\mathcal{R}_{b}}\Big\{T_{b,r}^{\downarrow}(t_{x})\Big\},
\end{equation}    
\begin{equation}\label{app:eq:broadcast_latency_3_1}
   M^{\downarrow}_{b}(x){=}\sum_{z{=}N^{(k{-}1)}{+}1}^{x}\sum_{r'
   {\in}\mathcal{R}_{b}}T_{b,r}^{\downarrow}(t_{z})\mathfrak{R}^{\downarrow}_{b,r'}(t_{z}).
\end{equation} 
In the following, we first aim to transform \eqref{app:eq:broadcast_latency_min_1} into the standard form of GP. To this end, using the approximation $\min\{A, B\}\approx (A^{-p}+B^{-p})^{-\frac{1}{p}}$, which is tight when $p \gg 1$, gives us
\begin{equation}\label{app:eq:broadcast_latency_min_2}
    T_{b,r}^{\downarrow}(t_{x})+\epsilon=\left(\left(\tau_{b,r}^{\downarrow}(t_{x})+\epsilon\right)^{-p}+\left(t_{x{+}1}{-}t_{x}+\epsilon\right)^{-p}\right)^{-\frac{1}{p}}.
\end{equation}
where $\epsilon$ is added to both sides to avoid division by zero. To simplify \eqref{app:eq:broadcast_latency_min_2}, we define an auxiliary decision variable $\widetilde{T}_{b,r}^{\downarrow}(t_{x})$ that satisfy the following equality constraint:
\begin{equation}\label{app:eq:broadcast_latency_min_3}
    \widetilde{T}_{b,r}^{\downarrow}(t_{x})=\left(\tau_{b,r}^{\downarrow}(t_{x})+\epsilon\right)^{-p}+\left(t_{x{+}1}{-}t_{x}+\epsilon\right)^{-p}
\end{equation}
resulting in 
\begin{equation}\label{app:eq:broadcast_latency_min_4}
    T_{b,r}^{\downarrow}(t_{x})+\epsilon=\left(\widetilde{T}_{b,r}^{\downarrow}(t_{x})\right)^{-\frac{1}{p}}.
\end{equation}
Performing some algebraic operations gives us
\begin{equation}\label{app:eq:broadcast_latency_min_5}
    T_{b,r}^{\downarrow}(t_{x})\times \left(\widetilde{T}_{b,r}^{\downarrow}(t_{x})\right)^{\frac{1}{p}}+\epsilon\times \left(\widetilde{T}_{b,r}^{\downarrow}(t_{x})\right)^{\frac{1}{p}} = 1.
\end{equation}
We transform \eqref{app:eq:broadcast_latency_min_5} into the standard GP format via introducing the following three inequalities:
\begin{equation}\label{app:eq:broadcast_latency_min_6}
     T_{b,r}^{\downarrow}(t_{x})\times \left(\widetilde{T}_{b,r}^{\downarrow}(t_{x})\right)^{\frac{1}{p}}+\epsilon\times \left(\widetilde{T}_{b,r}^{\downarrow}(t_{x})\right)^{\frac{1}{p}}\le 1,
\end{equation}
\begin{equation}\label{app:eq:broadcast_latency_min_7}
    \frac{\left(\mathscr{B}^{\downarrow,\mathsf{min}}_{b,r}(t_{x})\right)^{-1}}{T_{b,r}^{\downarrow}(t_{x})\times \left(\widetilde{T}_{b,r}^{\downarrow}(t_{x})\right)^{\frac{1}{p}}+\epsilon\times \left(\widetilde{T}_{b,r}^{\downarrow}(t_{x})\right)^{\frac{1}{p}}}\le 1,
\end{equation}
\begin{equation}
    \mathscr{B}^{\downarrow,\mathsf{min}}_{b,r}(t_{x})\ge 1,
\end{equation}
where $\mathscr{B}^{\downarrow,\mathsf{min}}_{b,r}(t_{x})$ is an auxiliary decision variable added with a large penalty term to the objective function to force $\mathscr{B}^{\downarrow,\mathsf{min}}_{b,r}(t_{x}){\rightarrow}1^+$ at the optimal point. The fraction in~\eqref{app:eq:broadcast_latency_min_7} still needs transformation since it is an inequality with a posynomial in its denominator, which is not posynomial. We thus exploit arithmetic-geometric mean inequality (Lemma~\ref{Lemma:ArethmaticGeometric}) to approximate the denominator of \eqref{app:eq:broadcast_latency_min_7} with a monomial:
\begin{align}\label{app:eq:broadcast_latency_min_8}
    W^{\downarrow,\mathsf{min}}_{b,r}(\bm{v}){=}T_{b,r}^{\downarrow}(t_{x}) \left(\widetilde{T}_{b,r}^{\downarrow}(t_{x})\right)^{\frac{1}{p}}{+}\epsilon{\times} \left(\widetilde{T}_{b,r}^{\downarrow}(t_{x})\right)^{\frac{1}{p}}{\geq} \widehat{W}^{\downarrow,\mathsf{min}}_{b,r}(\bm{v};\ell) &{\triangleq} \left(\frac{T_{b,r}^{\downarrow}(t_{x}) \left(\widetilde{T}_{b,r}^{\downarrow}(t_{x})\right)^{\frac{1}{p}} W^{\downarrow,\mathsf{min}}_{b,r}([\bm{v}]^{(\ell-1)})}{\left[T_{b,r}^{\downarrow}(t_{x}) \left(\widetilde{T}_{b,r}^{\downarrow}(t_{x})\right)^{\frac{1}{p}}\right]^{(\ell-1)}}\right)^{\hspace{-2mm}\frac{\left[T_{b,r}^{\downarrow}(t_{x}) \left(\widetilde{T}_{b,r}^{\downarrow}(t_{x})\right)^{\frac{1}{p}}\right]^{(\ell-1)}}{W^{\downarrow,\mathsf{min}}_{b,r}([\bm{v}]^{(\ell-1)})}}\nonumber\\
    &\times\left(\frac{ \left(\widetilde{T}_{b,r}^{\downarrow}(t_{x})\right)^{\frac{1}{p}} W^{\downarrow,\mathsf{min}}_{b,r}([\bm{v}]^{(\ell-1)})}{\left[ \left(\widetilde{T}_{b,r}^{\downarrow}(t_{x})\right)^{\frac{1}{p}}\right]^{(\ell-1)}}\right)^{\frac{\left[\epsilon\times \left(\widetilde{T}_{b,r}^{\downarrow}(t_{x})\right)^{\frac{1}{p}}\right]^{(\ell-1)}}{W^{\downarrow,\mathsf{min}}_{b,r}([\bm{v}]^{(\ell-1)})}},
\end{align}
which gives us the following approximation of~\eqref{app:eq:broadcast_latency_min_7}:
\begin{equation}
     \frac{\left(\mathscr{B}^{\downarrow,\mathsf{min}}_{b,r}(t_{x})\right)^{-1}}{\widehat{W}^{\downarrow,\mathsf{min}}_{b,r}(\bm{v};\ell)}\le 1,
\end{equation}

We next aim to transform \eqref{app:eq:broadcast_latency_min_3} into standard GP format. In doing so, we rewrite \eqref{app:eq:broadcast_latency_min_3} as follows:
\begin{equation}\label{app:eq:broadcast_latency_min_9}
    \frac{\widetilde{T}_{b,r}^{\downarrow}(t_{x})}{\left(\tau_{b,r}^{\downarrow}(t_{x})+\epsilon\right)^{-p}+\left(t_{x{+}1}{-}t_{x}+\epsilon\right)^{-p}}= 1.
\end{equation}
Defining two auxiliary decision variable $\widetilde{t}_{x}$ and $\widetilde{\tau}_{b,r}^{\downarrow}(t_{x})$ that satisfy 
\begin{equation}\label{app:eq:broadcast_latency_min_10}
    \widetilde{t}_{x}=t_{x{+}1}{-}t_{x}+\epsilon
\end{equation}
and
\begin{equation}\label{app:eq:broadcast_latency_min_101}
    \widetilde{\tau}_{b,r}^{\downarrow}(t_{x})=\tau_{b,r}^{\downarrow}(t_{x})+\epsilon
\end{equation}
gives us
\begin{equation}\label{app:eq:broadcast_latency_min_11}
    \frac{\widetilde{T}_{b,r}^{\downarrow}(t_{x})}{\left(\widetilde{\tau}_{b,r}^{\downarrow}(t_{x})\right)^{-p}+\left(\widetilde{t}_{x}\right)^{-p}}= 1.
\end{equation}
This constraint is not in the format of GP. Therefore, we transform it through splitting it into the following three inequalities:
\begin{equation}\label{app:eq:broadcast_latency_min_12}
    \frac{\widetilde{T}_{b,r}^{\downarrow}(t_{x})}{\left(\widetilde{\tau}_{b,r}^{\downarrow}(t_{x})\right)^{-p}+\left(\widetilde{t}_{x}\right)^{-p}}\le 1,
\end{equation}
\begin{equation}\label{app:eq:broadcast_latency_min_13}
    \frac{\left(\mathscr{C}^{\downarrow,\mathsf{min}}_{b,r}(t_{x})\right)^{-1}\left(\left(\widetilde{\tau}_{b,r}^{\downarrow}(t_{x})\right)^{-p}+\left(\widetilde{t}_{x}\right)^{-p}\right)}{\widetilde{T}_{b,r}^{\downarrow}(t_{x})}\le 1,
\end{equation}
\begin{equation}
    \mathscr{C}^{\downarrow,\mathsf{min}}_{b,r}(t_{x})\ge 1,
\end{equation}
where $\mathscr{C}^{\downarrow,\mathsf{min}}_{b,r}(t_{x})$ is an auxiliary decision variable added with a large penalty term to the objective function to force $\mathscr{C}^{\downarrow,\mathsf{min}}_{b,r}(t_{x}){\rightarrow}1^+$ at the optimal point. The fraction in~\eqref{app:eq:broadcast_latency_min_12} still needs transformation since it is an inequality with a posynomial in the denominator, which is not a posynomial. We thus exploit arithmetic-geometric mean inequality (Lemma~\ref{Lemma:ArethmaticGeometric}) to approximate the denominator with a monomial:
\begin{align}\label{app:eq:broadcast_latency_min_14}
    G^{\downarrow,\mathsf{min}}_{b,r}(\bm{v})=\left(\widetilde{\tau}_{b,r}^{\downarrow}(t_{x})\right)^{-p}+\left(\widetilde{t}_{x}\right)^{-p}\geq \widehat{G}^{\downarrow,\mathsf{min}}_{b,r}(\bm{v};\ell) &\triangleq \left(\frac{\left(\widetilde{\tau}_{b,r}^{\downarrow}(t_{x})\right)^{-p} G^{\downarrow,\mathsf{min}}_{b,r}([\bm{v}]^{(\ell-1)})}{\left[\left(\widetilde{\tau}_{b,r}^{\downarrow}(t_{x})\right)^{-p}\right]^{(\ell-1)}}\right)^{\frac{\left[\left(\widetilde{\tau}_{b,r}^{\downarrow}(t_{x})\right)^{-p}\right]^{(\ell-1)}}{G^{\downarrow,\mathsf{min}}_{b,r}([\bm{v}]^{(\ell-1)})}}\nonumber\\
    &\times\left(\frac{\left(\widetilde{t}_{x}\right)^{-p} G^{\downarrow,\mathsf{min}}_{b,r}([\bm{v}]^{(\ell-1)})}{\left[\left(\widetilde{t}_{x}\right)^{-p}\right]^{(\ell-1)}}\right)^{\frac{\left[\left(\widetilde{t}_{x}\right)^{-p}\right]^{(\ell-1)}}{G^{\downarrow,\mathsf{min}}_{b,r}([\bm{v}]^{(\ell-1)})}},
\end{align}
which gives us an approximation of~\eqref{app:eq:broadcast_latency_min_12} as follows:
\begin{equation}\label{app:eq:broadcast_latency_min_15}
    \frac{\widetilde{T}_{b,r}^{\downarrow}(t_{x})}{\widehat{G}^{\downarrow,\mathsf{min}}_{b,r}(\bm{v};\ell)}\le 1,
\end{equation}

Using the same technique used above, we next transform \eqref{app:eq:broadcast_latency_min_10} into GP format. To this end, we rewrite \eqref{app:eq:broadcast_latency_min_10} as follows:
\begin{equation}\label{app:eq:broadcast_latency_min_16}
    \frac{\widetilde{t}_{x}+t_{x}}{t_{x{+}1}+\epsilon}=1.
\end{equation}
This constraint is not in the format of GP. Therefore, we transform it through splitting it into the following three inequalities:
\begin{equation}\label{app:eq:broadcast_latency_min_17}
    \frac{\widetilde{t}_{x}+t_{x}}{t_{x{+}1}+\epsilon}\le 1,
\end{equation}
\begin{equation}\label{app:eq:broadcast_latency_min_18}
    \frac{\left(\mathscr{E}^{\downarrow,\mathsf{min}}_{b,r}(t_{x})\right)^{-1}\left(t_{x{+}1}+\epsilon\right)}{\widetilde{t}_{x}+t_{x}}\le 1,
\end{equation}
\begin{equation}
    \mathscr{E}^{\downarrow,\mathsf{min}}_{b,r}(t_{x})\ge 1,
\end{equation}
where $\mathscr{E}^{\downarrow,\mathsf{min}}_{b,r}(t_{x})$ is an auxiliary decision variable added with a large penalty term to the objective function to force $\mathscr{E}^{\downarrow,\mathsf{min}}_{b,r}(t_{x}){\rightarrow}1^+$ at the optimal point. The fractions in~\eqref{app:eq:broadcast_latency_min_17} and \eqref{app:eq:broadcast_latency_min_18} still need transformation since they are inequalities with posynomials in their denominators, which are not posynomials. We thus exploit arithmetic-geometric mean inequality (Lemma~\ref{Lemma:ArethmaticGeometric}) to approximate the denominators of \eqref{app:eq:broadcast_latency_min_17} and \eqref{app:eq:broadcast_latency_min_18} with monomials. In doing so, we transform \eqref{app:eq:broadcast_latency_min_17} into GP format as follows:
\begin{align}\label{app:eq:broadcast_latency_min_192}
    H^{\downarrow,\mathsf{min}}_{b,r}(\bm{v})=t_{x{+}1}+\epsilon \geq \widehat{H}^{\downarrow,\mathsf{min}}_{b,r}(\bm{v};\ell) &\triangleq \left(\frac{t_{x{+}1} H^{\downarrow,\mathsf{min}}_{b,r}([\bm{v}]^{(\ell-1)})}{\left[t_{x{+}1}\right]^{(\ell-1)}}\right)^{\frac{\left[t_{x{+}1}\right]^{(\ell-1)}}{H^{\downarrow,\mathsf{min}}_{b,r}([\bm{v}]^{(\ell-1)})}}\times\left(H^{\downarrow,\mathsf{min}}_{b,r}([\bm{v}]^{(\ell-1)})\right)^{\frac{\displaystyle\epsilon}{H^{\downarrow,\mathsf{min}}_{b,r}([\bm{v}]^{(\ell-1)})}},
\end{align}
which gives us an approximation of~\eqref{app:eq:broadcast_latency_min_17} as follows:
\begin{equation}\label{app:eq:broadcast_latency_min_200}
    \frac{\widetilde{t}_{x}+t_{x}}{\widehat{H}^{\downarrow,\mathsf{min}}_{b,r}(\bm{v};\ell)}\le 1.
\end{equation}
Similarly, we transform \eqref{app:eq:broadcast_latency_min_18} into GP format as follows:
\begin{align}\label{app:eq:broadcast_latency_min_19}
    H^{\downarrow}_{b,r}(\bm{v})=\widetilde{t}_{x}+t_{x} \geq \widehat{H}^{\downarrow}_{b,r}(\bm{v};\ell) &\triangleq \left(\frac{\widetilde{t}_{x} H^{\downarrow}_{b,r}([\bm{v}]^{(\ell-1)})}{\left[\widetilde{t}_{x}\right]^{(\ell-1)}}\right)^{\frac{\left[\widetilde{t}_{x}\right]^{(\ell-1)}}{H^{\downarrow}_{b,r}([\bm{v}]^{(\ell-1)})}}\times\left(\frac{t_{x} H^{\downarrow}_{b,r}([\bm{v}]^{(\ell-1)})}{\left[t_{x}\right]^{(\ell-1)}}\right)^{\frac{\left[t_{x}\right]^{(\ell-1)}}{H^{\downarrow}_{b,r}([\bm{v}]^{(\ell-1)})}},
\end{align}
which gives us an approximation of~\eqref{app:eq:broadcast_latency_min_18} as follows:
\begin{equation}\label{app:eq:broadcast_latency_min_20}
    \frac{\left(\mathscr{E}^{\downarrow,\mathsf{min}}_{b,r}(t_{x})\right)^{-1}\left(t_{x{+}1}+\epsilon\right)}{\widehat{H}^{\downarrow}_{b,r}(\bm{v};\ell)}\le 1.
\end{equation}

We next transform \eqref{app:eq:broadcast_latency_min_101} into GP format. To this end, we rewrite \eqref{app:eq:broadcast_latency_min_101} as follows:
\begin{equation}\label{app:eq:broadcast_latency_min_102}
    \frac{\widetilde{\tau}_{b,r}^{\downarrow}(t_{x})}{\tau_{b,r}^{\downarrow}(t_{x})+\epsilon}=1.
\end{equation}
This constraint is not in the format of GP. Therefore, we transform it through splitting it into the following three inequalities:
\begin{equation}\label{app:eq:broadcast_latency_min_103}
    \frac{\widetilde{\tau}_{b,r}^{\downarrow}(t_{x})}{\tau_{b,r}^{\downarrow}(t_{x})+\epsilon}\le 1,
\end{equation}
\begin{equation}\label{app:eq:broadcast_latency_min_104}
    \frac{\left(\mathscr{F}^{\downarrow,\mathsf{min}}_{b,r}(t_{x})\right)^{-1}\left(\tau_{b,r}^{\downarrow}(t_{x})+\epsilon\right)}{\widetilde{\tau}_{b,r}^{\downarrow}(t_{x})}\le 1,
\end{equation}
\begin{equation}
    \mathscr{F}^{\downarrow,\mathsf{min}}_{b,r}(t_{x})\ge 1,
\end{equation}
where $\mathscr{F}^{\downarrow,\mathsf{min}}_{b,r}(t_{x})$ is an auxiliary decision variable added with a large penalty term to the objective function to force $\mathscr{F}^{\downarrow,\mathsf{min}}_{b,r}(t_{x}){\rightarrow}1^+$ at the optimal point. The fraction in~\eqref{app:eq:broadcast_latency_min_103} still needs transformation since it is inequality with posynomial in its denominator, which is not a posynomial. We thus exploit arithmetic-geometric mean inequality (Lemma~\ref{Lemma:ArethmaticGeometric}) to approximate the denominator of \eqref{app:eq:broadcast_latency_min_103} with a monomial.
\begin{align}\label{app:eq:broadcast_latency_min_105}
    J^{\downarrow,\mathsf{min}}_{b,r}(\bm{v})=\tau_{b,r}^{\downarrow}(t_{x})+\epsilon \geq \widehat{J}^{\downarrow,\mathsf{min}}_{b,r}(\bm{v};\ell) &\triangleq \left(\frac{\tau_{b,r}^{\downarrow}(t_{x}) J^{\downarrow,\mathsf{min}}_{b,r}([\bm{v}]^{(\ell-1)})}{\left[\tau_{b,r}^{\downarrow}(t_{x})\right]^{(\ell-1)}}\right)^{\frac{\left[\tau_{b,r}^{\downarrow}(t_{x})\right]^{(\ell-1)}}{J^{\downarrow,\mathsf{min}}_{b,r}([\bm{v}]^{(\ell-1)})}}\times\left(J^{\downarrow,\mathsf{min}}_{b,r}([\bm{v}]^{(\ell-1)})\right)^{\frac{\displaystyle\epsilon}{J^{\downarrow,\mathsf{min}}_{b,r}([\bm{v}]^{(\ell-1)})}},
\end{align}
which gives us an approximation of~\eqref{app:eq:broadcast_latency_min_101} as follows:
\begin{equation}\label{app:eq:broadcast_latency_min_106}
    \frac{\tau_{b,r}^{\downarrow}(t_{x})+\epsilon}{\widehat{J}^{\downarrow,\mathsf{min}}_{b,r}(\bm{v};\ell)}\le 1.
\end{equation}

We finally approximate constraint~\eqref{app:eq:broadcast_latency_min_1} as follows:
\begin{tcolorbox}[ams align]
     &T_{b,r}^{\downarrow}(t_{x})\times \left(\widetilde{T}_{b,r}^{\downarrow}(t_{x})\right)^{\frac{1}{p}}+\epsilon\times \left(\widetilde{T}_{b,r}^{\downarrow}(t_{x})\right)^{\frac{1}{p}}\le 1,~~\frac{\left(\mathscr{B}^{\downarrow,\mathsf{min}}_{b,r}(t_{x})\right)^{-1}}{\widehat{W}^{\downarrow,\mathsf{min}}_{b,r}(\bm{v};\ell)}\le 1,\nonumber\\
     &\frac{\left(\mathscr{C}^{\downarrow,\mathsf{min}}_{b,r}(t_{x})\right)^{-1}\left(\left(\widetilde{\tau}_{b,r}^{\downarrow}(t_{x})\right)^{-p}+\left(\widetilde{t}_{x}\right)^{-p}\right)}{\widetilde{T}_{b,r}^{\downarrow}(t_{x})}\le 1,~~\frac{\widetilde{T}_{b,r}^{\downarrow}(t_{x})}{\widehat{G}^{\downarrow,\mathsf{min}}_{b,r}(\bm{v};\ell)}\le 1,\nonumber\\
     &\frac{\widetilde{t}_{x}+t_{x}}{\widehat{H}^{\downarrow,\mathsf{min}}_{b,r}(\bm{v};\ell)}\le 1,~~\frac{\left(\mathscr{E}^{\downarrow,\mathsf{min}}_{b,r}(t_{x})\right)^{-1}\left(t_{x{+}1}+\epsilon\right)}{\widehat{H}^{\downarrow}_{b,r}(\bm{v};\ell)}\le 1,\nonumber\\
     &\frac{\tau_{b,r}^{\downarrow}(t_{x})+\epsilon}{\widehat{J}^{\downarrow,\mathsf{min}}_{b,r}(\bm{v};\ell)}\le 1,~~~~\frac{\left(\mathscr{F}^{\downarrow,\mathsf{min}}_{b,r}(t_{x})\right)^{-1}\left(\tau_{b,r}^{\downarrow}(t_{x})+\epsilon\right)}{\widetilde{\tau}_{b,r}^{\downarrow}(t_{x})}\le 1,\nonumber\\
     &\frac{1}{\mathscr{B}^{\downarrow,\mathsf{min}}_{b,r}(t_{x})}\le 1,~~\frac{1}{\mathscr{C}^{\downarrow,\mathsf{min}}_{b,r}(t_{x})}\le 1,~~\frac{1}{\mathscr{E}^{\downarrow,\mathsf{min}}_{b,r}(t_{x})}\le 1,~~~\frac{1}{\mathscr{F}^{\downarrow,\mathsf{min}}_{b,r}(t_{x})}\le 1.\nonumber
\end{tcolorbox}

We next aim to transform \eqref{app:eq:broadcast_latency_2} into the standard GP format. In doing so, performing some algebraic manipulations on \eqref{app:eq:broadcast_latency_2} gives us
\begin{equation}\label{app:eq:broadcast_latency_4}
    \frac{\mathfrak{R}^{\downarrow}_{b,r}(t_{x})\tau_{b,r}^{\downarrow}(t_{x})+\tau_{b,r}^{\downarrow}(t_{x})+\varphi_{b,r}(t_{x})M^{\downarrow}_{b}(x)+1}{\varphi_{b,r}(t_{x})\alpha_{\bm{\omega}} M +\beta^{\downarrow}_{b}\hspace{-0.3mm}(t_{x})\tau_{b,r}^{\downarrow}(t_{x})+1}{=} 1.
\end{equation}
However, this constraint is not in the format of GP. Therefore, we transform it into the standard GP format via introducing the following three inequalities:
\begin{equation}\label{app:eq:broadcast_latency_5}
     \frac{\mathfrak{R}^{\downarrow}_{b,r}(t_{x})\tau_{b,r}^{\downarrow}(t_{x})+\tau_{b,r}^{\downarrow}(t_{x})+\varphi_{b,r}(t_{x})M^{\downarrow}_{b}(x)+1}{\varphi_{b,r}(t_{x})\alpha_{\bm{\omega}} M +\beta^{\downarrow}_{b}\hspace{-0.3mm}(t_{x})\tau_{b,r}^{\downarrow}(t_{x})+1}\le 1,
\end{equation}
\begin{equation}\label{app:eq:broadcast_latency_6}
    \frac{\left(\mathscr{H}^{\downarrow,\mathsf{L}}_{b,r}(t_{x})\right)^{-1}\left(\varphi_{b,r}(t_{x})\alpha_{\bm{\omega}} M +\beta^{\downarrow}_{b}\hspace{-0.3mm}(t_{x})\tau_{b,r}^{\downarrow}(t_{x})+1\right)}{\mathfrak{R}^{\downarrow}_{b,r}(t_{x})\tau_{b,r}^{\downarrow}(t_{x})+\tau_{b,r}^{\downarrow}(t_{x})+\varphi_{b,r}(t_{x})M^{\downarrow}_{b}(x)+1}\le 1,
\end{equation}
\begin{equation}
    \mathscr{H}^{\downarrow,\mathsf{L}}_{b,r}(t_{x})\ge 1,
\end{equation}
where $\mathscr{H}^{\downarrow,\mathsf{L}}_{b,r}(t_{x})$ is an auxiliary decision variable added with a large penalty term to the objective function to force $\mathscr{H}^{\downarrow,\mathsf{L}}_{b,r}(t_{x}){\rightarrow}1^+$ at the optimal point. The fractions in~\eqref{app:eq:broadcast_latency_5} and \eqref{app:eq:broadcast_latency_6} still need transformation since they are inequalities with posynomials in their denominator, which are not posynomials. We thus exploit arithmetic-geometric mean inequality (Lemma~\ref{Lemma:ArethmaticGeometric}) to approximate the denominators of \eqref{app:eq:broadcast_latency_5} and \eqref{app:eq:broadcast_latency_6} with monomials. In doing so, we approximate the denominator in \eqref{app:eq:broadcast_latency_5} as follows:
\begin{align}\label{app:eq:broadcast_latency_7}
    L^{\downarrow,\mathsf{L}}_{b,r}(\bm{v}){=}&\varphi_{b,r}(t_{x})\alpha_{\bm{\omega}} M +\beta^{\downarrow}_{b}\hspace{-0.3mm}(t_{x})\tau_{b,r}^{\downarrow}(t_{x})+1{\geq} \widehat{L}^{\downarrow,\mathsf{L}}_{b,r}(\bm{v};\ell) {\triangleq} \left(\frac{\varphi_{b,r}(t_{x})\ L^{\downarrow,\mathsf{L}}_{b,r}([\bm{v}]^{(\ell-1)})}{\left[\varphi_{b,r}(t_{x})\right]^{(\ell-1)}}\right)^{\hspace{-2mm}\frac{\left[\varphi_{b,r}(t_{x})\alpha_{\bm{\omega}} M\right]^{(\ell-1)}}{L^{\downarrow,\mathsf{L}}_{b,r}([\bm{v}]^{(\ell-1)})}}\nonumber\\
    &~~~~~~~~~~~~~~~~~~~\times\left(\frac{\beta^{\downarrow}_{b}\hspace{-0.3mm}(t_{x})\tau_{b,r}^{\downarrow}(t_{x}) L^{\downarrow,\mathsf{L}}_{b,r}([\bm{v}]^{(\ell-1)})}{\left[\beta^{\downarrow}_{b}\hspace{-0.3mm}(t_{x})\tau_{b,r}^{\downarrow}(t_{x})\right]^{(\ell-1)}}\right)^{\frac{\left[\beta^{\downarrow}_{b}\hspace{-0.3mm}(t_{x})\tau_{b,r}^{\downarrow}(t_{x})\right]^{(\ell-1)}}{L^{\downarrow,\mathsf{L}}_{b,r}([\bm{v}]^{(\ell-1)})}}\times\left(L^{\downarrow,\mathsf{L}}_{b,r}([\bm{v}]^{(\ell-1)})\right)^{\frac{1}{L^{\downarrow,\mathsf{L}}_{b,r}([\bm{v}]^{(\ell-1)})}},
\end{align}
which gives us the following approximation of~\eqref{app:eq:broadcast_latency_5}:
\begin{equation}
     \frac{\mathfrak{R}^{\downarrow}_{b,r}(t_{x})\tau_{b,r}^{\downarrow}(t_{x})+\tau_{b,r}^{\downarrow}(t_{x})+\varphi_{b,r}(t_{x})M^{\downarrow}_{b}(x)+1}{\widehat{L}^{\downarrow,\mathsf{L}}_{b,r}(\bm{v};\ell)}\le 1,
\end{equation}
Similarly, we approximate the denominator in \eqref{app:eq:broadcast_latency_6} as follows:
\begin{align}\label{app:eq:broadcast_latency_8}
    &R^{\downarrow,\mathsf{L}}_{b,r}(\bm{v}){=}\mathfrak{R}^{\downarrow}_{b,r}(t_{x})\tau_{b,r}^{\downarrow}(t_{x})+\tau_{b,r}^{\downarrow}(t_{x})+\varphi_{b,r}(t_{x})M^{\downarrow}_{b}(x)+1{\geq} \widehat{R}^{\downarrow,\mathsf{L}}_{b,r}(\bm{v};\ell) {\triangleq} \left(\frac{\mathfrak{R}^{\downarrow}_{b,r}(t_{x})\tau_{b,r}^{\downarrow}(t_{x}) R^{\downarrow,\mathsf{L}}_{b,r}([\bm{v}]^{(\ell-1)})}{\left[\mathfrak{R}^{\downarrow}_{b,r}(t_{x})\tau_{b,r}^{\downarrow}(t_{x})\right]^{(\ell-1)}}\right)^{\hspace{-2mm}\frac{\left[\mathfrak{R}^{\downarrow}_{b,r}(t_{x})\tau_{b,r}^{\downarrow}(t_{x})\right]^{(\ell-1)}}{R^{\downarrow,\mathsf{L}}_{b,r}([\bm{v}]^{(\ell-1)})}}\nonumber\\
    &\times\left(\frac{\tau_{b,r}^{\downarrow}(t_{x}) R^{\downarrow,\mathsf{L}}_{b,r}([\bm{v}]^{(\ell-1)})}{\left[\tau_{b,r}^{\downarrow}(t_{x})\right]^{(\ell-1)}}\right)^{\frac{\left[\tau_{b,r}^{\downarrow}(t_{x})\right]^{(\ell-1)}}{R^{\downarrow,\mathsf{L}}_{b,r}([\bm{v}]^{(\ell-1)})}}\left(\frac{\varphi_{b,r}(t_{x})M^{\downarrow}_{b}(x) R^{\downarrow,\mathsf{L}}_{b,r}([\bm{v}]^{(\ell-1)})}{\left[\varphi_{b,r}(t_{x})M^{\downarrow}_{b}(x)\right]^{(\ell-1)}}\right)^{\frac{\left[\varphi_{b,r}(t_{x})M^{\downarrow}_{b}(x)\right]^{(\ell-1)}}{R^{\downarrow,\mathsf{L}}_{b,r}([\bm{v}]^{(\ell-1)})}}\left(R^{\downarrow,\mathsf{L}}_{b,r}([\bm{v}]^{(\ell-1)})\right)^{\frac{1}{R^{\downarrow,\mathsf{L}}_{b,r}([\bm{v}]^{(\ell-1)})}},
\end{align}
which gives the following approximation of~\eqref{app:eq:broadcast_latency_6}:
\begin{equation}
     \frac{\left(\mathscr{H}^{\downarrow,\mathsf{L}}_{b,r}(t_{x})\right)^{-1}\left(\varphi_{b,r}(t_{x})\alpha_{\bm{\omega}} M +\beta^{\downarrow}_{b}\hspace{-0.3mm}(t_{x})\tau_{b,r}^{\downarrow}(t_{x})+1\right)}{ \widehat{R}^{\downarrow,\mathsf{L}}_{b,r}(\bm{v};\ell)}\le 1.
\end{equation}
We finally approximate \eqref{app:eq:broadcast_latency_2} as follows:
\begin{tcolorbox}[ams align]
     & \frac{\mathfrak{R}^{\downarrow}_{b,r}(t_{x})\tau_{b,r}^{\downarrow}(t_{x})+\tau_{b,r}^{\downarrow}(t_{x})+\varphi_{b,r}(t_{x})M^{\downarrow}_{b}(x)+1}{\widehat{L}^{\downarrow,\mathsf{L}}_{b,r}(\bm{v};\ell)}\le 1,\nonumber\\
     & \frac{\left(\mathscr{H}^{\downarrow,\mathsf{L}}_{b,r}(t_{x})\right)^{-1}\left(\varphi_{b,r}(t_{x})\alpha_{\bm{\omega}} M +\beta^{\downarrow}_{b}\hspace{-0.3mm}(t_{x})\tau_{b,r}^{\downarrow}(t_{x})+1\right)}{ \widehat{R}^{\downarrow,\mathsf{L}}_{b,r}(\bm{v};\ell)}\le 1,\nonumber\\
     &\frac{1}{\mathscr{H}^{\downarrow,\mathsf{L}}_{b,r}(t_{x})}\le 1.\nonumber
\end{tcolorbox}

We next aim to transform \eqref{app:eq:broadcast_latency_3_1} into GP format. To this end, we rewrite \eqref{app:eq:broadcast_latency_3_1} as follows:

\begin{equation}\label{app:eq:broadcast_latency_3_1_1}
   \frac{M^{\downarrow}_{b}(x)+1}{\displaystyle\sum_{z{=}0}^{x}\sum_{r'
   {\in}\mathcal{R}_{b}}T_{b,r'}^{\downarrow}(t_{z})\mathfrak{R}^{\downarrow}_{b,r'}(t_{z})+1}{=} 1.
\end{equation} 
This constraint is not in the format of GP. Therefore, we transform it through splitting it into the following three inequalities:
\begin{equation}\label{app:eq:broadcast_latency_3_1_4}
    \frac{M^{\downarrow}_{b}(x)+1}{\displaystyle\sum_{z{=}0}^{x}\sum_{r'
   {\in}\mathcal{R}_{b}}T_{b,r'}^{\downarrow}(t_{z})\mathfrak{R}^{\downarrow}_{b,r'}(t_{z})+1}\le 1,
\end{equation}
\begin{equation}\label{app:eq:broadcast_latency_3_1_5}
    \frac{\left(\mathscr{L}^{\downarrow,\mathsf{M}}_{b}(t_{x})\right)^{-1}\left(\displaystyle\sum_{z{=}0}^{x}\sum_{r'
   {\in}\mathcal{R}_{b}}T_{b,r'}^{\downarrow}(t_{z})\mathfrak{R}^{\downarrow}_{b,r'}(t_{z})+1\right)}{M^{\downarrow}_{b}(x)+1}\le 1,
\end{equation}
\begin{equation}
    \mathscr{L}^{\downarrow,\mathsf{M}}_{b}(t_{x})\ge 1,
\end{equation}
where $\mathscr{L}^{\downarrow,\mathsf{M}}_{b}(t_{x})$ is an auxiliary decision variable added with a large penalty term to the objective function to force $\mathscr{L}^{\downarrow,\mathsf{M}}_{b}(t_{x}){\rightarrow}1^+$ at the optimal point. The fractions in~\eqref{app:eq:broadcast_latency_3_1_4} and \eqref{app:eq:broadcast_latency_3_1_5} still need transformation since they are inequalities with posynomials in their denominators, which are not posynomials. We thus exploit arithmetic-geometric mean inequality (Lemma~\ref{Lemma:ArethmaticGeometric}) to approximate the denominators in \eqref{app:eq:broadcast_latency_3_1_4} and \eqref{app:eq:broadcast_latency_3_1_5} with monomials. To this end, we approximate the denominator in \eqref{app:eq:broadcast_latency_3_1_4} as follows:
\begin{align}\label{app:eq:broadcast_latency_3_1_6}
    S^{\downarrow,\mathsf{M}}_{b}(\bm{v})&=\sum_{z{=}0}^{x}\sum_{r'
   {\in}\mathcal{R}_{b}}T_{b,r'}^{\downarrow}(t_{z})\mathfrak{R}^{\downarrow}_{b,r'}(t_{z})+1\nonumber\\
   &\geq \widehat{S}^{\downarrow,\mathsf{M}}_{b}(\bm{v};\ell) \triangleq \prod_{z{=}0}^{x}\prod_{r'
   {\in}\mathcal{R}_{b}}\left(\frac{T_{b,r'}^{\downarrow}(t_{z})\mathfrak{R}^{\downarrow}_{b,r'}(t_{z}) S^{\downarrow,\mathsf{M}}_{b}([\bm{v}]^{(\ell-1)})}{\left[T_{b,r'}^{\downarrow}(t_{z})\mathfrak{R}^{\downarrow}_{b,r'}(t_{z})\right]^{(\ell-1)}}\right)^{\frac{\left[T_{b,r'}^{\downarrow}(t_{z})\mathfrak{R}^{\downarrow}_{b,r'}(t_{z})\right]^{(\ell-1)}}{S^{\downarrow,\mathsf{M}}_{b}([\bm{v}]^{(\ell-1)})}}\nonumber\\
   &~~~~~~~~~~~\times\left(S^{\downarrow,\mathsf{M}}_{b}([\bm{v}]^{(\ell-1)})\right)^{\frac{1}{S^{\downarrow,\mathsf{M}}_{b}([\bm{v}]^{(\ell-1)})}},
\end{align}
which gives us an approximation of~\eqref{app:eq:broadcast_latency_3_1_4} as follows:
\begin{equation}\label{app:eq:broadcast_latency_3_1_7}
    \frac{M^{\downarrow}_{b}(x)+1}{\widehat{S}^{\downarrow,\mathsf{M}}_{b}(\bm{v};\ell)}\le 1.
\end{equation}

Similarly, we approximate the denominator in \eqref{app:eq:broadcast_latency_3_1_5} as follows:
\begin{align}\label{app:eq:broadcast_latency_3_1_8}
    &Q^{\downarrow,\mathsf{M}}_{b}(\bm{v}){=}M^{\downarrow}_{b}(x)+1{\geq} \widehat{Q}^{\downarrow,\mathsf{M}}_{b}(\bm{v};\ell) {\triangleq} \left(\frac{M^{\downarrow}_{b}(x) Q^{\downarrow,\mathsf{M}}_{b}([\bm{v}]^{(\ell-1)})}{\left[M^{\downarrow}_{b}(x)\right]^{(\ell-1)}}\right)^{\frac{\left[M^{\downarrow}_{b}(x)\right]^{(\ell-1)}}{Q^{\downarrow,\mathsf{M}}_{b}([\bm{v}]^{(\ell-1)})}}\times\left(Q^{\downarrow,\mathsf{M}}_{b}([\bm{v}]^{(\ell-1)})\right)^{\frac{1}{Q^{\downarrow,\mathsf{M}}_{b}([\bm{v}]^{(\ell-1)})}},
\end{align}
which gives the following approximation of~\eqref{app:eq:broadcast_latency_6}:
\begin{equation}
     \frac{\left(\mathscr{L}^{\downarrow,\mathsf{M}}_{b}(t_{x})\right)^{-1}\left(\displaystyle\sum_{z{=}0}^{x}\sum_{r'
   {\in}\mathcal{R}_{b}}T_{b,r'}^{\downarrow}(t_{z})\mathfrak{R}^{\downarrow}_{b,r'}(t_{z})+1\right)}{\widehat{Q}^{\downarrow,\mathsf{M}}_{b}(\bm{v};\ell)}\le 1.
\end{equation}

We finally approximate constraint~\eqref{app:eq:broadcast_latency_3_1} as follows:
\begin{tcolorbox}[ams align]
     &\frac{M^{\downarrow}_{b}(x)+1}{\widehat{S}^{\downarrow,\mathsf{M}}_{b}(\bm{v};\ell)}\le 1,~~~~~~~~\frac{M^{\downarrow}_{b}(x)+1}{\widehat{Q}^{\downarrow,\mathsf{M}}_{b}(\bm{v};\ell)}\le 1,~~~~~~~\frac{1}{\mathscr{L}^{\downarrow,\mathsf{M}}_{b}(t_{x})}\le 1,\nonumber
\end{tcolorbox}

We next aim to transform \eqref{app:eq:broadcast_latency_1_1} into the standard form of GP. In doing so, we rewrite \eqref{app:eq:broadcast_latency_1_1} as follows:
\begin{equation}\label{app:eq:broadcast_latency_1_2}
\begin{aligned}
   \tau_{b}^{\downarrow,{(k)}}&=\sum_{x{=}0}^{N}\max_{r{\in}\mathcal{R}_{b}}\Big\{T_{b,r}^{\downarrow}(t_{x})+\epsilon\Big\}-\sum_{x{=}0}^{N}\epsilon\\
   &=\sum_{x{=}0}^{N}\max_{r{\in}\mathcal{R}_{b}}\Big\{T_{b,r}^{\downarrow}(t_{x})+\epsilon\Big\}-\left(N^{(k)}-N^{(k{-}1)}\right)\epsilon
\end{aligned}
\end{equation}
Performing some algebraic manipulations gives us
\begin{equation}\label{app:eq:broadcast_latency_1_2}
   \frac{\tau_{b}^{\downarrow,{(k)}}+N^{(k)}\epsilon}{\displaystyle\sum_{x{=}0}^{N}\max_{r{\in}\mathcal{R}_{b}}\Big\{T_{b,r}^{\downarrow}(t_{x})+\epsilon\Big\}+N^{(k{-}1)}\epsilon}=1.
\end{equation} 
Using the approximation $\max\{A, B\}\approx (A^{p}+B^{p})^{-\frac{1}{p}}$, which is tight when $p \gg 1$, gives us
\begin{equation}\label{app:eq:broadcast_latency_1_3}
   \frac{\tau_{b}^{\downarrow,{(k)}}+N^{(k)}\epsilon}{\displaystyle\sum_{x{=}0}^{N}\left(\sum_{r{\in}\mathcal{R}_{b}}\left(T_{b,r}^{\downarrow}(t_{x})+\epsilon\right)^{p}\right)^{-\frac{1}{p}}+N^{(k{-}1)}\epsilon}=1.
\end{equation} 
Defining an auxiliary decision variable $T_{b}^{\downarrow,\mathsf{max}}(t_{x})$ that satisfies
\begin{equation}\label{app:eq:broadcast_latency_1_4}
    T_{b}^{\downarrow,\mathsf{max}}(t_{x}) =\sum_{r{\in}\mathcal{R}_{b}}\left(T_{b,r}^{\downarrow}(t_{x})+\epsilon\right)^{p}
\end{equation}
results in
\begin{equation}\label{app:eq:broadcast_latency_1_5}
   \frac{\tau_{b}^{\downarrow,{(k)}}+N^{(k)}\epsilon}{\displaystyle\sum_{x{=}0}^{N}\left(T_{b}^{\downarrow,\mathsf{max}}(t_{x})\right)^{-\frac{1}{p}}+N^{(k{-}1)}\epsilon}=1.
\end{equation} 
This constraint is not in the format of GP. Therefore, we transform it through splitting it into the following three inequalities:
\begin{equation}\label{app:eq:broadcast_latency_1_6}
    \frac{\tau_{b}^{\downarrow,{(k)}}+N^{(k)}\epsilon}{\displaystyle\sum_{x{=}0}^{N}\left(T_{b}^{\downarrow,\mathsf{max}}(t_{x})\right)^{-\frac{1}{p}}+N^{(k{-}1)}\epsilon}\le 1,
\end{equation}

\begin{equation}\label{app:eq:broadcast_latency_1_7}
    \frac{\left(\mathscr{Q}^{\downarrow}_{b}(t_{x})\right)^{-1}\left(\displaystyle\sum_{x{=}0}^{N}\left(T_{b}^{\downarrow,\mathsf{max}}(t_{x})\right)^{-\frac{1}{p}}+N^{(k{-}1)}\epsilon\right)}{\tau_{b}^{\downarrow,{(k)}}+N^{(k)}\epsilon}\le 1,
\end{equation}

\begin{equation}
    \mathscr{Q}^{\downarrow}_{b}(t_{x})\ge 1,
\end{equation}
where $\mathscr{Q}^{\downarrow}_{b}(t_{x})$ is an auxiliary decision variable added with a large penalty term to the objective function to force $\mathscr{Q}^{\downarrow}_{b}(t_{x}){\rightarrow}1^+$ at the optimal point. The fractions in~\eqref{app:eq:broadcast_latency_1_6} and \eqref{app:eq:broadcast_latency_1_7} still need transformation since they are inequalities with posynomials in their denominators, which are not posynomials. We thus exploit arithmetic-geometric mean inequality (Lemma~\ref{Lemma:ArethmaticGeometric}) to approximate the denominators in \eqref{app:eq:broadcast_latency_1_6} and \eqref{app:eq:broadcast_latency_1_7} with monomials. To this end, we approximate the denominator in \eqref{app:eq:broadcast_latency_1_6} as follows:
\begin{align}\label{app:eq:broadcast_latency_1_8}
    &V^{\downarrow}_{b}(\bm{v})=\sum_{x{=}0}^{N}\left(T_{b}^{\downarrow,\mathsf{max}}(t_{x})\right)^{-\frac{1}{p}}+N^{(k{-}1)}\epsilon\nonumber\\
   &\geq \widehat{V}^{\downarrow}_{b}(\bm{v};\ell) \triangleq \prod_{x{=}0}^{N}\left(\frac{\left(T_{b}^{\downarrow,\mathsf{max}}(t_{x})\right)^{-\frac{1}{p}} V^{\downarrow}_{b}([\bm{v}]^{(\ell-1)})}{\left[\left(T_{b}^{\downarrow,\mathsf{max}}(t_{x})\right)^{-\frac{1}{p}}\right]^{(\ell-1)}}\right)^{\frac{\left[\left(T_{b}^{\downarrow,\mathsf{max}}(t_{x})\right)^{-\frac{1}{p}}\right]^{(\ell-1)}}{V^{\downarrow}_{b}([\bm{v}]^{(\ell-1)})}}\times\left(V^{\downarrow}_{b}([\bm{v}]^{(\ell-1)})\right)^{\frac{\left[N^{(k{-}1)}\displaystyle\epsilon\right]^{(\ell-1)}}{V^{\downarrow}_{b}([\bm{v}]^{(\ell-1)})}},
\end{align}
which gives us an approximation of~\eqref{app:eq:broadcast_latency_1_6} as follows:
\begin{equation}\label{app:eq:broadcast_latency_1_9}
    \frac{\tau_{b}^{\downarrow,{(k)}}+N^{(k)}\epsilon}{\widehat{V}^{\downarrow}_{b}(\bm{v};\ell)}\le 1.
\end{equation}

Similarly, we approximate the denominator in \eqref{app:eq:broadcast_latency_1_7} as follows:
\begin{align}\label{app:eq:broadcast_latency_1_10}
    W^{\downarrow}_{b}(\bm{v})&=\tau_{b}^{\downarrow,{(k)}}+N^{(k)}\epsilon \geq \widehat{W}^{\downarrow}_{b}(\bm{v};\ell) \triangleq \left(\frac{\tau_{b}^{\downarrow,{(k)}} W^{\downarrow}_{b}([\bm{v}]^{(\ell-1)})}{\left[\tau_{b}^{\downarrow,{(k)}}\right]^{(\ell-1)}}\right)^{\frac{\left[\tau_{b}^{\downarrow,{(k)}}\right]^{(\ell-1)}}{W^{\downarrow}_{b}([\bm{v}]^{(\ell-1)})}}\times\left(W^{\downarrow}_{b}([\bm{v}]^{(\ell-1)})\right)^{\frac{\left[N^{(k)}\displaystyle\epsilon\right]^{(\ell-1)}}{W^{\downarrow}_{b}([\bm{v}]^{(\ell-1)})}},
\end{align}
which gives us an approximation of~\eqref{app:eq:broadcast_latency_1_7} as follows:
\begin{equation}\label{app:eq:broadcast_latency_1_11}
    \frac{\left(\mathscr{Q}^{\downarrow}_{b}(t_{x})\right)^{-1}\left(\displaystyle\sum_{x{=}0}^{N}\left(T_{b}^{\downarrow,\mathsf{max}}(t_{x})\right)^{-\frac{1}{p}}+N^{(k{-}1)}\epsilon\right)}{\widehat{W}^{\downarrow}_{b}(\bm{v};\ell)}\le 1,
\end{equation}

We next transform \eqref{app:eq:broadcast_latency_1_4} into the standard form of GP. To do so, we rewrite \eqref{app:eq:broadcast_latency_1_4} as follows:
\begin{equation}\label{app:eq:broadcast_latency_1_4_1}
    \frac{T_{b}^{\downarrow,\mathsf{max}}(t_{x})}{\sum_{r{\in}\mathcal{R}_{b}}\left(T_{b,r}^{\downarrow}(t_{x})+\epsilon\right)^{p}} =1.
\end{equation}
Defining an auxiliary decision variable $T_{b,r}^{\downarrow,\epsilon}(t_{x})$ satisfying
\begin{equation}\label{app:eq:broadcast_latency_1_4_2}
    T_{b,r}^{\downarrow,\epsilon}(t_{x})=T_{b,r}^{\downarrow}(t_{x})+\epsilon
\end{equation}
leading to
\begin{equation}\label{app:eq:broadcast_latency_1_4_3}
    \frac{T_{b}^{\downarrow,\mathsf{max}}(t_{x})}{\sum_{r{\in}\mathcal{R}_{b}}\left(T_{b,r}^{\downarrow,\epsilon}(t_{x})\right)^{p}} =1.
\end{equation}
This constraint is not in the format of GP. Therefore, we transform it through splitting it into the following three inequalities:
\begin{equation}\label{app:eq:broadcast_latency_1_4_4}
    \frac{T_{b}^{\downarrow,\mathsf{max}}(t_{x})}{\sum_{r{\in}\mathcal{R}_{b}}\left(T_{b,r}^{\downarrow,\epsilon}(t_{x})\right)^{p}}\le 1,
\end{equation}

\begin{equation}\label{app:eq:broadcast_latency_1_4_5}
    \frac{\left(\mathscr{R}^{\downarrow,\mathsf{max}}_{b}(t_{x})\right)^{-1}\left(\sum_{r{\in}\mathcal{R}_{b}}\left(T_{b,r}^{\downarrow,\epsilon}(t_{x})\right)^{p}\right)}{T_{b}^{\downarrow,\mathsf{max}}(t_{x})}\le 1,
\end{equation}

\begin{equation}
    \mathscr{R}^{\downarrow,\mathsf{max}}_{b}(t_{x})\ge 1,
\end{equation}
where $\mathscr{R}^{\downarrow,\mathsf{max}}_{b}(t_{x})$ is an auxiliary decision variable added with a large penalty term to the objective function to force $\mathscr{R}^{\downarrow,\mathsf{max}}_{b}(t_{x}){\rightarrow}1^+$ at the optimal point. The fraction in~\eqref{app:eq:broadcast_latency_1_4_4} still needs transformation since it is inequality with a posynomial in its denominator, which is not a posynomial. We thus exploit arithmetic-geometric mean inequality (Lemma~\ref{Lemma:ArethmaticGeometric}) to approximate the denominator in \eqref{app:eq:broadcast_latency_1_4_4} with a monomial.
\begin{align}\label{app:eq:broadcast_latency_1_4_6}
    &Y^{\downarrow,\mathsf{max}}_{b}(\bm{v})=\sum_{r{\in}\mathcal{R}_{b}}\left(T_{b,r}^{\downarrow,\epsilon}(t_{x})\right)^{p}\geq \widehat{Y}^{\downarrow,\mathsf{max}}_{b}(\bm{v};\ell) \triangleq \prod_{r{\in}\mathcal{R}_{b}}\left(\frac{\left(T_{b,r}^{\downarrow,\epsilon}(t_{x})\right)^{p} Y^{\downarrow,\mathsf{max}}_{b}([\bm{v}]^{(\ell-1)})}{\left[\left(T_{b,r}^{\downarrow,\epsilon}(t_{x})\right)^{p}\right]^{(\ell-1)}}\right)^{\frac{\left[\left(T_{b,r}^{\downarrow,\epsilon}(t_{x})\right)^{p}\right]^{(\ell-1)}}{Y^{\downarrow,\mathsf{max}}_{b}([\bm{v}]^{(\ell-1)})}},
\end{align}
which gives us an approximation of~\eqref{app:eq:broadcast_latency_1_4_4} as follows:
\begin{equation}\label{app:eq:broadcast_latency_1_4_6}
    \frac{T_{b}^{\downarrow,\mathsf{max}}(t_{x})}{\widehat{Y}^{\downarrow,\mathsf{max}}_{b}(\bm{v};\ell)}\le 1.
\end{equation}

We next transform \eqref{app:eq:broadcast_latency_1_4_2} into GP format. To this end, we rewrite \eqref{app:eq:broadcast_latency_1_4_2} as follows:
\begin{equation}\label{app:eq:broadcast_latency_1_4_7}
    \frac{T_{b,r}^{\downarrow,\epsilon}(t_{x})}{T_{b,r}^{\downarrow}(t_{x})+\epsilon}=1.
\end{equation}
This constraint is not in the format of GP. Therefore, we transform it through splitting it into the following three inequalities:
\begin{equation}\label{app:eq:broadcast_latency_1_4_4}
    \frac{T_{b,r}^{\downarrow,\epsilon}(t_{x})}{T_{b,r}^{\downarrow}(t_{x})+\epsilon}\le 1,
\end{equation}

\begin{equation}\label{app:eq:broadcast_latency_1_4_5}
    \frac{\left(\mathscr{S}^{\downarrow,\epsilon}_{b,r}(t_{x})\right)^{-1}\left(T_{b,r}^{\downarrow}(t_{x})+\epsilon\right)}{T_{b,r}^{\downarrow,\epsilon}(t_{x})}\le 1,
\end{equation}

\begin{equation}
    \mathscr{S}^{\downarrow,\epsilon}_{b,r}(t_{x})\ge 1,
\end{equation}
where $\mathscr{S}^{\downarrow,\epsilon}_{b,r}(t_{x})$ is an auxiliary decision variable added with a large penalty term to the objective function to force $\mathscr{S}^{\downarrow,\epsilon}_{b,r}(t_{x}){\rightarrow}1^+$ at the optimal point. The fraction in~\eqref{app:eq:broadcast_latency_1_4_4} still needs transformation since it is inequality with a posynomial in its denominator, which is not a posynomial. We thus exploit arithmetic-geometric mean inequality (Lemma~\ref{Lemma:ArethmaticGeometric}) to approximate the denominator in \eqref{app:eq:broadcast_latency_1_4_4} with a monomial.
\begin{align}\label{app:eq:broadcast_latency_1_4_6}
    Z^{\downarrow,\epsilon}_{b,r}(\bm{v})&=T_{b,r}^{\downarrow}(t_{x})+\epsilon \geq \widehat{Z}^{\downarrow,\epsilon}_{b,r}(\bm{v};\ell) \triangleq \left(\frac{T_{b,r}^{\downarrow}(t_{x}) Z^{\downarrow,\epsilon}_{b,r}([\bm{v}]^{(\ell-1)})}{\left[T_{b,r}^{\downarrow}(t_{x})\right]^{(\ell-1)}}\right)^{\frac{\left[T_{b,r}^{\downarrow}(t_{x})\right]^{(\ell-1)}}{Z^{\downarrow,\epsilon}_{b,r}([\bm{v}]^{(\ell-1)})}}\times\left(Z^{\downarrow,\epsilon}_{b,r}([\bm{v}]^{(\ell-1)})\right)^{\frac{\left[\displaystyle\epsilon\right]^{(\ell-1)}}{Z^{\downarrow,\epsilon}_{b,r}([\bm{v}]^{(\ell-1)})}},
\end{align}
which gives us an approximation of~\eqref{app:eq:broadcast_latency_1_4_4} as follows:
\begin{equation}\label{app:eq:broadcast_latency_1_4_7}
    \frac{T_{b,r}^{\downarrow,\epsilon}(t_{x})}{\widehat{Z}^{\downarrow,\epsilon}_{b,r}(\bm{v};\ell)}\le 1.
\end{equation}

We finally approximate constraint~\eqref{app:eq:broadcast_latency_1_1} as follows:
\begin{tcolorbox}[ams align]
     &\frac{\tau_{b}^{\downarrow,{(k)}}+N^{(k)}\epsilon}{\widehat{V}^{\downarrow}_{b}(\bm{v};\ell)}\le 1,~~~~~~\frac{\left(\mathscr{Q}^{\downarrow}_{b}(t_{x})\right)^{-1}\left(\displaystyle\sum_{x{=}0}^{N}\left(T_{b}^{\downarrow,\mathsf{max}}(t_{x})\right)^{-\frac{1}{p}}+N^{(k{-}1)}\epsilon\right)}{\widehat{W}^{\downarrow}_{b}(\bm{v};\ell)}\le 1,\nonumber\\
     &\frac{T_{b}^{\downarrow,\mathsf{max}}(t_{x})}{\widehat{Y}^{\downarrow,\mathsf{max}}_{b}(\bm{v};\ell)}\le 1,~~~~\frac{\left(\mathscr{R}^{\downarrow,\mathsf{max}}_{b}(t_{x})\right)^{-1}\left(\sum_{r{\in}\mathcal{R}_{b}}\left(T_{b,r}^{\downarrow,\epsilon}(t_{x})\right)^{p}\right)}{T_{b}^{\downarrow,\mathsf{max}}(t_{x})}\le 1,\nonumber\\
     & \frac{T_{b,r}^{\downarrow,\epsilon}(t_{x})}{\widehat{Z}^{\downarrow,\epsilon}_{b,r}(\bm{v};\ell)}\le 1,~~~\frac{\left(\mathscr{S}^{\downarrow,\epsilon}_{b,r}(t_{x})\right)^{-1}\left(T_{b,r}^{\downarrow}(t_{x})+\epsilon\right)}{T_{b,r}^{\downarrow,\epsilon}(t_{x})}\le 1,\nonumber\\
     &\frac{1}{\mathscr{Q}^{\downarrow}_{b}(t_{x})}\le 1,~~~~\frac{1}{\mathscr{R}^{\downarrow,\mathsf{max}}_{b}(t_{x})}\le 1,~~~~\frac{1}{\mathscr{S}^{\downarrow,\epsilon}_{b,r}(t_{x})}\le 1.\nonumber
\end{tcolorbox}

\noindent\textbf{GPs dispersion latency of DPUs (Proposition~\ref{propo:DPU_uplink}).} Referring to Sec.~\ref{sec:communication_latency}, let us revisit the equations of GPs dispersion latency of DPUs below:
\begin{equation}\label{app:eq:dispersion_1}
    \sum_{u'{\in} \mathcal{U}_{b}}\sum_{r{\in}\overline{\mathcal{R}}_{b}}\overline{\psi}_{u,u',r}(t_{x}){=}1,~x\in\mathcal{N}^{(k)},
\end{equation}

\begin{equation}\label{app:eq:dispersion_2}
    \overline{\tau}_{u}^{\uparrow,(k)}{=}\hspace{-3mm}\sum_{x{=}N^{(k{-}1)}{+}1}^{N^{(k)}}\max_{\substack{u'{\in}\mathcal{U}_{b},\\r{\in}\overline{\mathcal{R}}_{b}}}\Big\{\min\big\{\overline{\tau}_{u,u',r}^{\uparrow}(t_{x}),\left(t_{x{+}1}{-}t_{x}\right)\big\}\Big\}.
\end{equation}
In \eqref{app:eq:dispersion_2}, $\overline{\tau}_{u,u',r}^{\uparrow}(t_{x})$ is calculate as follows:
\begin{equation}\label{app:eq:dispersion_3}
    \overline{\tau}_{u,u',r}^{\uparrow}(t_{x}){=}\frac{(\alpha_{\bm{\omega}} M{-}\overline{M}^{\uparrow}_{u}(x))\overline{\psi}_{u,u',r}(t_{x})}{\overline{\mathfrak{R}}^{\uparrow}_{u,u',r}(t_{x})+1{-}\overline{\beta}^{\uparrow}_u(t_{x})},
\end{equation}
where
\begin{align}\label{app:eq:dispersion_4}
    \overline{M}^{\uparrow}_{u}(x)=\hspace{-3mm}\sum_{z{=}N^{(k{-}1)}{+}1}^{x}\sum_{u'{\in}\mathcal{U}_{b}}\sum_{r'{\in}\overline{\mathcal{R}}_{b}}\min\Big\{\overline{\tau}_{u,u',r'}^{\uparrow}(t_{z}),\left(t_{z{+}1}-t_{z}\right)\Big\}\overline{\mathfrak{R}}^{\uparrow}_{u,u',r'}(t_{z}).
\end{align}

In the following, we first transform \eqref{app:eq:dispersion_1} into the standard form of GP. In doing so, we split \eqref{app:eq:dispersion_1} into the following three inequalities:
\begin{equation}\label{app:eq:dispersion_1_2}
    \sum_{u'{\in} \mathcal{U}_{b}}\sum_{r{\in}\overline{\mathcal{R}}_{b}}\overline{\psi}_{u,u',r}(t_{x})\le 1,
\end{equation}
\begin{equation}\label{app:eq:dispersion_1_3}
    \frac{\left(\mathscr{B}_{u,u',r}(t_{x})\right)^{-1}}{\sum_{u'{\in} \mathcal{U}_{b}}\sum_{r{\in}\overline{\mathcal{R}}_{b}}\overline{\psi}_{u,u',r}(t_{x})}\le 1,
\end{equation}
\begin{equation}
    \mathscr{B}_{u,u',r}(t_{x})\ge 1,
\end{equation}
where $\mathscr{B}_{u,u',r}(t_{x})$ is an auxiliary decision variable added with a large penalty term to the objective function to force $\mathscr{B}_{u,u',r}(t_{x}){\rightarrow}1^+$ at the optimal point. The fraction in~\eqref{app:eq:dispersion_1_3} still needs transformation since it is inequality with a posynomial in its denominator, which is not a posynomial. We thus exploit arithmetic-geometric mean inequality (Lemma~\ref{Lemma:ArethmaticGeometric}) to approximate the denominator in \eqref{app:eq:dispersion_1_3} with a monomial.
\begin{align}\label{app:eq:dispersion_1_3_1}
    D_{u,u',r}(\bm{v})&=\sum_{u'{\in} \mathcal{U}_{b}}\sum_{r{\in}\overline{\mathcal{R}}_{b}}\overline{\psi}_{u,u',r}(t_{x}) \geq \widehat{D}_{u,u',r}(\bm{v};\ell) \triangleq \prod_{u'{\in} \mathcal{U}_{b}}\prod_{r{\in}\overline{\mathcal{R}}_{b}} \left(\frac{\overline{\psi}_{u,u',r}(t_{x}) D_{u,u',r}([\bm{v}]^{(\ell-1)})}{\left[\overline{\psi}_{u,u',r}(t_{x})\right]^{(\ell-1)}}\right)^{\frac{\left[\overline{\psi}_{u,u',r}(t_{x})\right]^{(\ell-1)}}{D_{u,u',r}([\bm{v}]^{(\ell-1)})}},
\end{align}
which gives us an approximation of~\eqref{app:eq:dispersion_1_3} as follows:
\begin{equation}\label{app:eq:dispersion_1_3_3}
    \frac{\left(\mathscr{B}_{u,u',r}(t_{x})\right)^{-1}}{\widehat{D}_{u,u',r}(\bm{v};\ell)}\le 1.
\end{equation}
We finally approximate constraint~\eqref{app:eq:dispersion_1} as follows:
\begin{tcolorbox}[ams align]
     &\sum_{u'{\in} \mathcal{U}_{b}}\sum_{r{\in}\overline{\mathcal{R}}_{b}}\overline{\psi}_{u,u',r}(t_{x})\le 1,~~~~~~~\frac{\left(\mathscr{B}_{u,u',r}(t_{x})\right)^{-1}}{\widehat{D}_{u,u',r}(\bm{v};\ell)}\le 1,~~~~~~~\frac{1}{\mathscr{B}_{u,u',r}(t_{x})}\le 1.\nonumber
\end{tcolorbox}

We next transform \eqref{app:eq:dispersion_2} into the standard form of GP. To this end, we define $\overline{\tau}_{u}^{\uparrow,(k)}$, $\overline{\tau}_{u,u',r}^{\uparrow}(t_{x})$, and $\overline{M}^{\uparrow}_{u}(x)$ as auxiliary decision variables that must satisfy constraints \eqref{app:eq:dispersion_2}, \eqref{app:eq:dispersion_3}, and \eqref{app:eq:dispersion_4}, respectively. Furthermore, we define an auxiliary decision variable $T_{u,u',r}^{\uparrow}(t_{x})$ satisfying the following constraint:
\begin{equation}\label{app:eq:dispersion_min_1}
    T_{u,u',r}^{\uparrow}(t_{x})=\min\big\{\overline{\tau}_{u,u',r}^{\uparrow}(t_{x}),\left(t_{x{+}1}{-}t_{x}\right)\big\}.
\end{equation}
Accordingly, \eqref{app:eq:dispersion_2} and \eqref{app:eq:dispersion_4} can be rewritten as follows:
\begin{equation}\label{app:eq:dispersion_2_1}
    \overline{\tau}_{u}^{\uparrow,(k)}{=}\hspace{-3mm}\sum_{x{=}N^{(k{-}1)}{+}1}^{N^{(k)}}\max_{\substack{u'{\in}\mathcal{U}_{b},\\r{\in}\overline{\mathcal{R}}_{b}}}\Big\{T_{u,u',r}^{\uparrow}(t_{x})\Big\},
\end{equation}
and
\begin{align}\label{app:eq:dispersion_4_1}
    \overline{M}^{\uparrow}_{u}(x)=\hspace{-3mm}\sum_{z{=}N^{(k{-}1)}{+}1}^{x}\sum_{u'{\in}\mathcal{U}_{b}}\sum_{r'{\in}\overline{\mathcal{R}}_{b}}T_{u,u',r'}^{\uparrow}(t_{z})\overline{\mathfrak{R}}^{\uparrow}_{u,u',r'}(t_{z}).
\end{align}

In the following, we first aim to transform \eqref{app:eq:dispersion_min_1} into the standard form of GP. To this end, using the approximation $\min\{A, B\}\approx (A^{-p}+B^{-p})^{-\frac{1}{p}}$, which is tight when $p \gg 1$, gives us
\begin{equation}\label{app:eq:dispersion_min_2}
    T_{u,u',r}^{\uparrow}(t_{x})+\epsilon=\left(\left(\overline{\tau}_{u,u',r}^{\uparrow}(t_{x})+\epsilon\right)^{-p}+\left(t_{x{+}1}{-}t_{x}+\epsilon\right)^{-p}\right)^{-\frac{1}{p}},
\end{equation}
where $\epsilon$ is added to both sides to avoid division by zero. To simplify \eqref{app:eq:dispersion_min_2}, we define an auxiliary decision variable $\widetilde{T}_{u,u',r}^{\uparrow}(t_{x})$ that satisfy the following equality constraint:
\begin{equation}\label{app:eq:dispersion_min_3}
    \widetilde{T}_{u,u',r}^{\uparrow}(t_{x})=\left(\overline{\tau}_{u,u',r}^{\uparrow}(t_{x})+\epsilon\right)^{-p}+\left(t_{x{+}1}{-}t_{x}+\epsilon\right)^{-p}
\end{equation}
resulting in 
\begin{equation}\label{app:eq:dispersion_min_4}
    T_{u,u',r}^{\uparrow}(t_{x})+\epsilon=\left(\widetilde{T}_{u,u',r}^{\uparrow}(t_{x})\right)^{-\frac{1}{p}}.
\end{equation}
Performing some algebraic operations gives us
\begin{equation}\label{app:eq:dispersion_min_5}
    T_{u,u',r}^{\uparrow}(t_{x})\times \left(\widetilde{T}_{u,u',r}^{\uparrow}(t_{x})\right)^{\frac{1}{p}}+\epsilon\times \left(\widetilde{T}_{u,u',r}^{\uparrow}(t_{x})\right)^{\frac{1}{p}} = 1.
\end{equation}
We transform \eqref{app:eq:dispersion_min_5} into the standard GP format via introducing the following three inequalities:
\begin{equation}\label{app:eq:dispersion_min_6}
     T_{u,u',r}^{\uparrow}(t_{x})\times \left(\widetilde{T}_{u,u',r}^{\uparrow}(t_{x})\right)^{\frac{1}{p}}+\epsilon\times \left(\widetilde{T}_{u,u',r}^{\uparrow}(t_{x})\right)^{\frac{1}{p}}\le 1,
\end{equation}
\begin{equation}\label{app:eq:dispersion_min_7}
    \frac{\left(\mathscr{B}^{\uparrow,\mathsf{min}}_{u,u',r}(t_{x})\right)^{-1}}{T_{u,u',r}^{\uparrow}(t_{x})\times \left(\widetilde{T}_{u,u',r}^{\uparrow}(t_{x})\right)^{\frac{1}{p}}+\epsilon\times \left(\widetilde{T}_{u,u',r}^{\uparrow}(t_{x})\right)^{\frac{1}{p}}}\le 1,
\end{equation}
\begin{equation}
    \mathscr{B}^{\uparrow,\mathsf{min}}_{u,u',r}(t_{x})\ge 1,
\end{equation}
where $\mathscr{B}^{\uparrow,\mathsf{min}}_{u,u',r}(t_{x})$ is an auxiliary decision variable added with a large penalty term to the objective function to force $\mathscr{B}^{\uparrow,\mathsf{min}}_{u,u',r}(t_{x}){\rightarrow}1^+$ at the optimal point. The fraction in~\eqref{app:eq:dispersion_min_7} still needs transformation since it is an inequality with a posynomial in its denominator, which is not posynomial. We thus exploit arithmetic-geometric mean inequality (Lemma~\ref{Lemma:ArethmaticGeometric}) to approximate the denominator of \eqref{app:eq:dispersion_min_7} with a monomial:
\begin{align}\label{app:eq:dispersion_min_8}
    W^{\uparrow,\mathsf{min}}_{u,u',r}(\bm{v})&{=}T_{u,u',r}^{\uparrow}(t_{x}) \left(\widetilde{T}_{u,u',r}^{\uparrow}(t_{x})\right)^{\frac{1}{p}}{+}\epsilon{\times} \left(\widetilde{T}_{u,u',r}^{\uparrow}(t_{x})\right)^{\frac{1}{p}}\nonumber\\
    &{\geq} \widehat{W}^{\uparrow,\mathsf{min}}_{u,u',r}(\bm{v};\ell) {\triangleq} \left(\frac{T_{u,u',r}^{\uparrow}(t_{x}) \left(\widetilde{T}_{u,u',r}^{\uparrow}(t_{x})\right)^{\frac{1}{p}} W^{\uparrow,\mathsf{min}}_{u,u',r}([\bm{v}]^{(\ell-1)})}{\left[T_{u,u',r}^{\uparrow}(t_{x}) \left(\widetilde{T}_{u,u',r}^{\uparrow}(t_{x})\right)^{\frac{1}{p}}\right]^{(\ell-1)}}\right)^{\hspace{-2mm}\frac{\left[T_{u,u',r}^{\uparrow}(t_{x}) \left(\widetilde{T}_{u,u',r}^{\uparrow}(t_{x})\right)^{\frac{1}{p}}\right]^{(\ell-1)}}{W^{\uparrow,\mathsf{min}}_{u,u',r}([\bm{v}]^{(\ell-1)})}}\nonumber\\
    &~~~~~~~~~~~~~~~~~\times\left(\frac{ \left(\widetilde{T}_{u,u',r}^{\uparrow}(t_{x})\right)^{\frac{1}{p}} W^{\uparrow,\mathsf{min}}_{u,u',r}([\bm{v}]^{(\ell-1)})}{\left[ \left(\widetilde{T}_{u,u',r}^{\uparrow}(t_{x})\right)^{\frac{1}{p}}\right]^{(\ell-1)}}\right)^{\frac{\left[\epsilon\times \left(\widetilde{T}_{u,u',r}^{\uparrow}(t_{x})\right)^{\frac{1}{p}}\right]^{(\ell-1)}}{W^{\uparrow,\mathsf{min}}_{u,u',r}([\bm{v}]^{(\ell-1)})}},
\end{align}
which gives us the following approximation of~\eqref{app:eq:dispersion_min_7}:
\begin{equation}
     \frac{\left(\mathscr{B}^{\uparrow,\mathsf{min}}_{u,u',r}(t_{x})\right)^{-1}}{\widehat{W}^{\uparrow,\mathsf{min}}_{u,u',r}(\bm{v};\ell)}\le 1.
\end{equation}

We next aim to transform \eqref{app:eq:dispersion_min_3} into standard GP format. In doing so, we rewrite \eqref{app:eq:dispersion_min_3} as follows:
\begin{equation}\label{app:eq:dispersion_min_9}
    \frac{\widetilde{T}_{u,u',r}^{\uparrow}(t_{x})}{\left(\overline{\tau}_{u,u',r}^{\uparrow}(t_{x})+\epsilon\right)^{-p}+\left(t_{x{+}1}{-}t_{x}+\epsilon\right)^{-p}}= 1.
\end{equation}
Defining an auxiliary decision variable $\widetilde{\tau}_{u,u',r}^{\uparrow}(t_{x})$ that satisfies 
\begin{equation}\label{app:eq:dispersion_min_101}
    \widetilde{\tau}_{u,u',r}^{\uparrow}(t_{x})=\overline{\tau}_{u,u',r}^{\uparrow}(t_{x})+\epsilon
\end{equation}
and considering $\widetilde{t}_{x}$ in \eqref{app:eq:broadcast_latency_min_10} give us
\begin{equation}\label{app:eq:dispersion_min_11}
    \frac{\widetilde{T}_{u,u',r}^{\uparrow}(t_{x})}{\left(\widetilde{\tau}_{u,u',r}^{\uparrow}(t_{x})\right)^{-p}+\left(\widetilde{t}_{x}\right)^{-p}}= 1.
\end{equation}
This constraint is not in the format of GP. Therefore, we transform it through splitting it into the following three inequalities:
\begin{equation}\label{app:eq:dispersion_min_12}
    \frac{\widetilde{T}_{u,u',r}^{\uparrow}(t_{x})}{\left(\widetilde{\tau}_{u,u',r}^{\uparrow}(t_{x})\right)^{-p}+\left(\widetilde{t}_{x}\right)^{-p}}\le 1,
\end{equation}
\begin{equation}\label{app:eq:dispersion_min_13}
    \frac{\left(\mathscr{C}^{\uparrow,\mathsf{min}}_{u,u',r}(t_{x})\right)^{-1}\left(\left(\widetilde{\tau}_{u,u',r}^{\uparrow}(t_{x})\right)^{-p}+\left(\widetilde{t}_{x}\right)^{-p}\right)}{\widetilde{T}_{u,u',r}^{\uparrow}(t_{x})}\le 1,
\end{equation}
\begin{equation}
    \mathscr{C}^{\uparrow,\mathsf{min}}_{u,u',r}(t_{x})\ge 1,
\end{equation}
where $\mathscr{C}^{\uparrow,\mathsf{min}}_{u,u',r}(t_{x})$ is an auxiliary decision variable added with a large penalty term to the objective function to force $\mathscr{C}^{\uparrow,\mathsf{min}}_{u,u',r}(t_{x}){\rightarrow}1^+$ at the optimal point. The fraction in~\eqref{app:eq:dispersion_min_12} still needs transformation since it is an inequality with a posynomial in the denominator, which is not a posynomial. We thus exploit arithmetic-geometric mean inequality (Lemma~\ref{Lemma:ArethmaticGeometric}) to approximate the denominator with a monomial:
\begin{align}\label{app:eq:dispersion_min_14}
    G^{\uparrow,\mathsf{min}}_{u,u',r}(\bm{v})=\left(\widetilde{\tau}_{u,u',r}^{\uparrow}(t_{x})\right)^{-p}+\left(\widetilde{t}_{x}\right)^{-p}\geq \widehat{G}^{\uparrow,\mathsf{min}}_{u,u',r}(\bm{v};\ell) &\triangleq \left(\frac{\left(\widetilde{\tau}_{u,u',r}^{\uparrow}(t_{x})\right)^{-p} G^{\uparrow,\mathsf{min}}_{u,u',r}([\bm{v}]^{(\ell-1)})}{\left[\left(\widetilde{\tau}_{u,u',r}^{\uparrow}(t_{x})\right)^{-p}\right]^{(\ell-1)}}\right)^{\frac{\left[\left(\widetilde{\tau}_{u,u',r}^{\uparrow}(t_{x})\right)^{-p}\right]^{(\ell-1)}}{G^{\uparrow,\mathsf{min}}_{u,u',r}([\bm{v}]^{(\ell-1)})}}\nonumber\\
    &\times\left(\frac{\left(\widetilde{t}_{x}\right)^{-p} G^{\uparrow,\mathsf{min}}_{u,u',r}([\bm{v}]^{(\ell-1)})}{\left[\left(\widetilde{t}_{x}\right)^{-p}\right]^{(\ell-1)}}\right)^{\frac{\left[\left(\widetilde{t}_{x}\right)^{-p}\right]^{(\ell-1)}}{G^{\uparrow,\mathsf{min}}_{u,u',r}([\bm{v}]^{(\ell-1)})}},
\end{align}
which gives us an approximation of~\eqref{app:eq:dispersion_min_12} as follows:
\begin{equation}\label{app:eq:dispersion_min_15}
    \frac{\widetilde{T}_{u,u',r}^{\uparrow}(t_{x})}{\widehat{G}^{\uparrow,\mathsf{min}}_{u,u',r}(\bm{v};\ell)}\le 1.
\end{equation}

We next transform \eqref{app:eq:dispersion_min_101} into GP format. To this end, we rewrite \eqref{app:eq:dispersion_min_101} as follows:
\begin{equation}\label{app:eq:dispersion_min_102}
    \frac{\widetilde{\tau}_{u,u',r}^{\uparrow}(t_{x})}{\overline{\tau}_{u,u',r}^{\uparrow}(t_{x})+\epsilon}=1.
\end{equation}
This constraint is not in the format of GP. Therefore, we transform it through splitting it into the following three inequalities:
\begin{equation}\label{app:eq:dispersion_min_103}
    \frac{\widetilde{\tau}_{u,u',r}^{\uparrow}(t_{x})}{\overline{\tau}_{u,u',r}^{\uparrow}(t_{x})+\epsilon}\le 1,
\end{equation}
\begin{equation}\label{app:eq:dispersion_min_104}
    \frac{\left(\mathscr{F}^{\uparrow,\mathsf{min}}_{u,u',r}(t_{x})\right)^{-1}\left(\overline{\tau}_{u,u',r}^{\uparrow}(t_{x})+\epsilon\right)}{\widetilde{\tau}_{u,u',r}^{\uparrow}(t_{x})}\le 1,
\end{equation}
\begin{equation}
    \mathscr{F}^{\uparrow,\mathsf{min}}_{u,u',r}(t_{x})\ge 1,
\end{equation}
where $\mathscr{F}^{\uparrow,\mathsf{min}}_{u,u',r}(t_{x})$ is an auxiliary decision variable added with a large penalty term to the objective function to force $\mathscr{F}^{\uparrow,\mathsf{min}}_{u,u',r}(t_{x}){\rightarrow}1^+$ at the optimal point. The fraction in~\eqref{app:eq:dispersion_min_103} still needs transformation since it is inequality with posynomial in its denominator, which is not a posynomial. We thus exploit arithmetic-geometric mean inequality (Lemma~\ref{Lemma:ArethmaticGeometric}) to approximate the denominator of \eqref{app:eq:dispersion_min_103} with a monomial.
\begin{align}\label{app:eq:dispersion_min_105}
    J^{\uparrow,\mathsf{min}}_{u,u',r}(\bm{v})=\overline{\tau}_{u,u',r}^{\uparrow}(t_{x})+\epsilon \geq \widehat{J}^{\uparrow,\mathsf{min}}_{u,u',r}(\bm{v};\ell) &\triangleq \left(\frac{\overline{\tau}_{u,u',r}^{\uparrow}(t_{x}) J^{\uparrow,\mathsf{min}}_{u,u',r}([\bm{v}]^{(\ell-1)})}{\left[\overline{\tau}_{u,u',r}^{\uparrow}(t_{x})\right]^{(\ell-1)}}\right)^{\frac{\left[\overline{\tau}_{u,u',r}^{\uparrow}(t_{x})\right]^{(\ell-1)}}{J^{\uparrow,\mathsf{min}}_{u,u',r}([\bm{v}]^{(\ell-1)})}}\times\left(J^{\uparrow,\mathsf{min}}_{u,u',r}([\bm{v}]^{(\ell-1)})\right)^{\frac{\displaystyle\epsilon}{J^{\uparrow,\mathsf{min}}_{u,u',r}([\bm{v}]^{(\ell-1)})}},
\end{align}
which gives us an approximation of~\eqref{app:eq:dispersion_min_101} as follows:
\begin{equation}\label{app:eq:dispersion_min_106}
    \frac{\overline{\tau}_{u,u',r}^{\uparrow}(t_{x})+\epsilon}{\widehat{J}^{\uparrow,\mathsf{min}}_{u,u',r}(\bm{v};\ell)}\le 1.
\end{equation}
We finally approximate constraint~\eqref{app:eq:dispersion_min_1} as follows:
\begin{tcolorbox}[ams align]
     &T_{u,u',r}^{\uparrow}(t_{x})\times \left(\widetilde{T}_{u,u',r}^{\uparrow}(t_{x})\right)^{\frac{1}{p}}+\epsilon\times \left(\widetilde{T}_{u,u',r}^{\uparrow}(t_{x})\right)^{\frac{1}{p}}\le 1,~~\frac{\left(\mathscr{B}^{\uparrow,\mathsf{min}}_{u,u',r}(t_{x})\right)^{-1}}{\widehat{W}^{\uparrow,\mathsf{min}}_{u,u',r}(\bm{v};\ell)}\le 1,\nonumber\\
     &\frac{\left(\mathscr{C}^{\uparrow,\mathsf{min}}_{u,u',r}(t_{x})\right)^{-1}\left(\left(\widetilde{\tau}_{u,u',r}^{\uparrow}(t_{x})\right)^{-p}+\left(\widetilde{t}_{x}\right)^{-p}\right)}{\widetilde{T}_{u,u',r}^{\uparrow}(t_{x})}\le 1,~~\frac{\widetilde{T}_{u,u',r}^{\uparrow}(t_{x})}{\widehat{G}^{\uparrow,\mathsf{min}}_{u,u',r}(\bm{v};\ell)}\le 1,\nonumber\\
     &\frac{\overline{\tau}_{u,u',r}^{\uparrow}(t_{x})+\epsilon}{\widehat{J}^{\uparrow,\mathsf{min}}_{u,u',r}(\bm{v};\ell)}\le 1,~~~~\frac{\left(\mathscr{F}^{\uparrow,\mathsf{min}}_{u,u',r}(t_{x})\right)^{-1}\left(\overline{\tau}_{u,u',r}^{\uparrow}(t_{x})+\epsilon\right)}{\widetilde{\tau}_{u,u',r}^{\uparrow}(t_{x})}\le 1,\nonumber\\
     &\frac{1}{\mathscr{B}^{\uparrow,\mathsf{min}}_{u,u',r}(t_{x})}\le 1,~~\frac{1}{\mathscr{C}^{\uparrow,\mathsf{min}}_{u,u',r}(t_{x})}\le 1,~~~\frac{1}{\mathscr{F}^{\uparrow,\mathsf{min}}_{u,u',r}(t_{x})}\le 1.\nonumber
\end{tcolorbox}

We next aim to transform \eqref{app:eq:dispersion_3} into the standard GP format. In doing so, performing some algebraic manipulations on \eqref{app:eq:dispersion_3} gives us
\begin{equation}\label{app:eq:dispersion_3_4}
    \frac{\overline{\mathfrak{R}}^{\uparrow}_{u,u',r}(t_{x})\overline{\tau}_{u,u',r}^{\uparrow}(t_{x})+\overline{\tau}_{u,u',r}^{\uparrow}(t_{x})+\overline{\psi}_{u,u',r}(t_{x})\overline{M}^{\uparrow}_{u}(x)+1}{\overline{\psi}_{u,u',r}(t_{x})\alpha_{\bm{\omega}} M +\overline{\beta}^{\uparrow}_u(t_{x})\overline{\tau}_{u,u',r}^{\uparrow}(t_{x})+1}{=} 1.
\end{equation}
However, this constraint is not in the format of GP. Therefore, we transform it into the standard GP format via introducing the following three inequalities:
\begin{equation}\label{app:eq:dispersion_3_5}
     \frac{\overline{\mathfrak{R}}^{\uparrow}_{u,u',r}(t_{x})\overline{\tau}_{u,u',r}^{\uparrow}(t_{x})+\overline{\tau}_{u,u',r}^{\uparrow}(t_{x})+\overline{\psi}_{u,u',r}(t_{x})\overline{M}^{\uparrow}_{u}(x)+1}{\overline{\psi}_{u,u',r}(t_{x})\alpha_{\bm{\omega}} M +\overline{\beta}^{\uparrow}_u(t_{x})\overline{\tau}_{u,u',r}^{\uparrow}(t_{x})+1}\le 1,
\end{equation}
\begin{equation}\label{app:eq:dispersion_3_6}
    \frac{\left(\mathscr{H}^{\uparrow,\mathsf{L}}_{u,u',r}(t_{x})\right)^{-1}\left(\overline{\psi}_{u,u',r}(t_{x})\alpha_{\bm{\omega}} M +\overline{\beta}^{\uparrow}_u(t_{x})\overline{\tau}_{u,u',r}^{\uparrow}(t_{x})+1\right)}{\overline{\mathfrak{R}}^{\uparrow}_{u,u',r}(t_{x})\overline{\tau}_{u,u',r}^{\uparrow}(t_{x})+\overline{\tau}_{u,u',r}^{\uparrow}(t_{x})+\overline{\psi}_{u,u',r}(t_{x})\overline{M}^{\uparrow}_{u}(x)+1}\le 1,
\end{equation}
\begin{equation}
    \mathscr{H}^{\uparrow,\mathsf{L}}_{u,u',r}(t_{x})\ge 1,
\end{equation}
where $\mathscr{H}^{\uparrow,\mathsf{L}}_{u,u',r}(t_{x})$ is an auxiliary decision variable added with a large penalty term to the objective function to force $\mathscr{H}^{\uparrow,\mathsf{L}}_{u,u',r}(t_{x}){\rightarrow}1^+$ at the optimal point. The fractions in~\eqref{app:eq:dispersion_3_5} and \eqref{app:eq:dispersion_3_6} still need transformation since they are inequalities with posynomials in their denominator, which are not posynomials. We thus exploit arithmetic-geometric mean inequality (Lemma~\ref{Lemma:ArethmaticGeometric}) to approximate the denominators of \eqref{app:eq:dispersion_3_5} and \eqref{app:eq:dispersion_3_6} with monomials. In doing so, we approximate the denominator in \eqref{app:eq:dispersion_3_5} as follows:
\begin{align}\label{app:eq:dispersion_3_7}
    L^{\uparrow,\mathsf{L}}_{u,u',r}(\bm{v}){=}&\overline{\psi}_{u,u',r}(t_{x})\alpha_{\bm{\omega}} M +\overline{\beta}^{\uparrow}_u(t_{x})\overline{\tau}_{u,u',r}^{\uparrow}(t_{x})+1{\geq} \widehat{L}^{\uparrow,\mathsf{L}}_{u,u',r}(\bm{v};\ell) {\triangleq} \left(\frac{\overline{\psi}_{u,u',r}(t_{x})\ L^{\uparrow,\mathsf{L}}_{u,u',r}([\bm{v}]^{(\ell-1)})}{\left[\overline{\psi}_{u,u',r}(t_{x})\right]^{(\ell-1)}}\right)^{\hspace{-2mm}\frac{\left[\overline{\psi}_{u,u',r}(t_{x})\alpha_{\bm{\omega}} M\right]^{(\ell-1)}}{L^{\uparrow,\mathsf{L}}_{u,u',r}([\bm{v}]^{(\ell-1)})}}\nonumber\\
    &~~~~~~~~~~~~~~~~~~~\times\left(\frac{\overline{\beta}^{\uparrow}_u(t_{x})\overline{\tau}_{u,u',r}^{\uparrow}(t_{x}) L^{\uparrow,\mathsf{L}}_{u,u',r}([\bm{v}]^{(\ell-1)})}{\left[\overline{\beta}^{\uparrow}_u(t_{x})\overline{\tau}_{u,u',r}^{\uparrow}(t_{x})\right]^{(\ell-1)}}\right)^{\frac{\left[\overline{\beta}^{\uparrow}_u(t_{x})\overline{\tau}_{u,u',r}^{\uparrow}(t_{x})\right]^{(\ell-1)}}{L^{\uparrow,\mathsf{L}}_{u,u',r}([\bm{v}]^{(\ell-1)})}}\times\left(L^{\uparrow,\mathsf{L}}_{u,u',r}([\bm{v}]^{(\ell-1)})\right)^{\frac{1}{L^{\uparrow,\mathsf{L}}_{u,u',r}([\bm{v}]^{(\ell-1)})}},
\end{align}
which gives us the following approximation of~\eqref{app:eq:dispersion_3_5}:
\begin{equation}
     \frac{\overline{\mathfrak{R}}^{\uparrow}_{u,u',r}(t_{x})\overline{\tau}_{u,u',r}^{\uparrow}(t_{x})+\overline{\tau}_{u,u',r}^{\uparrow}(t_{x})+\overline{\psi}_{u,u',r}(t_{x})\overline{M}^{\uparrow}_{u}(x)+1}{\widehat{L}^{\uparrow,\mathsf{L}}_{u,u',r}(\bm{v};\ell)}\le 1.
\end{equation}
Similarly, we approximate the denominator in \eqref{app:eq:dispersion_3_6} as follows:
\begin{align}\label{app:eq:dispersion_3_8}
    &R^{\uparrow,\mathsf{L}}_{u,u',r}(\bm{v}){=}\overline{\mathfrak{R}}^{\uparrow}_{u,u',r}(t_{x})\overline{\tau}_{u,u',r}^{\uparrow}(t_{x})+\overline{\tau}_{u,u',r}^{\uparrow}(t_{x})+\overline{\psi}_{u,u',r}(t_{x})\overline{M}^{\uparrow}_{u}(x)+1\nonumber\\
    &{\geq} \widehat{R}^{\uparrow,\mathsf{L}}_{u,u',r}(\bm{v};\ell) {\triangleq} \left(\frac{\overline{\mathfrak{R}}^{\uparrow}_{u,u',r}(t_{x})\overline{\tau}_{u,u',r}^{\uparrow}(t_{x}) R^{\uparrow,\mathsf{L}}_{u,u',r}([\bm{v}]^{(\ell-1)})}{\left[\overline{\mathfrak{R}}^{\uparrow}_{u,u',r}(t_{x})\overline{\tau}_{u,u',r}^{\uparrow}(t_{x})\right]^{(\ell-1)}}\right)^{\hspace{-2mm}\frac{\left[\overline{\mathfrak{R}}^{\uparrow}_{u,u',r}(t_{x})\overline{\tau}_{u,u',r}^{\uparrow}(t_{x})\right]^{(\ell-1)}}{R^{\uparrow,\mathsf{L}}_{u,u',r}([\bm{v}]^{(\ell-1)})}}\times\left(\frac{\overline{\tau}_{u,u',r}^{\uparrow}(t_{x}) R^{\uparrow,\mathsf{L}}_{u,u',r}([\bm{v}]^{(\ell-1)})}{\left[\overline{\tau}_{u,u',r}^{\uparrow}(t_{x})\right]^{(\ell-1)}}\right)^{\frac{\left[\overline{\tau}_{u,u',r}^{\uparrow}(t_{x})\right]^{(\ell-1)}}{R^{\uparrow,\mathsf{L}}_{u,u',r}([\bm{v}]^{(\ell-1)})}}\nonumber\\
    &~~~~~~~~~~~~~~\times\left(\frac{\overline{\psi}_{u,u',r}(t_{x})\overline{M}^{\uparrow}_{u}(x) R^{\uparrow,\mathsf{L}}_{u,u',r}([\bm{v}]^{(\ell-1)})}{\left[\overline{\psi}_{u,u',r}(t_{x})\overline{M}^{\uparrow}_{u}(x)\right]^{(\ell-1)}}\right)^{\frac{\left[\overline{\psi}_{u,u',r}(t_{x})\overline{M}^{\uparrow}_{u}(x)\right]^{(\ell-1)}}{R^{\uparrow,\mathsf{L}}_{u,u',r}([\bm{v}]^{(\ell-1)})}}\times\left(R^{\uparrow,\mathsf{L}}_{u,u',r}([\bm{v}]^{(\ell-1)})\right)^{\frac{1}{R^{\uparrow,\mathsf{L}}_{u,u',r}([\bm{v}]^{(\ell-1)})}},
\end{align}
which gives the following approximation of~\eqref{app:eq:dispersion_3_6}:
\begin{equation}
     \frac{\left(\mathscr{H}^{\uparrow,\mathsf{L}}_{u,u',r}(t_{x})\right)^{-1}\left(\overline{\psi}_{u,u',r}(t_{x})\alpha_{\bm{\omega}} M +\overline{\beta}^{\uparrow}_u(t_{x})\overline{\tau}_{u,u',r}^{\uparrow}(t_{x})+1\right)}{ \widehat{R}^{\uparrow,\mathsf{L}}_{u,u',r}(\bm{v};\ell)}\le 1.
\end{equation}
We finally approximate \eqref{app:eq:dispersion_3} as follows:
\begin{tcolorbox}[ams align]
     & \frac{\overline{\mathfrak{R}}^{\uparrow}_{u,u',r}(t_{x})\overline{\tau}_{u,u',r}^{\uparrow}(t_{x})+\overline{\tau}_{u,u',r}^{\uparrow}(t_{x})+\overline{\psi}_{u,u',r}(t_{x})\overline{M}^{\uparrow}_{u}(x)+1}{\widehat{L}^{\uparrow,\mathsf{L}}_{u,u',r}(\bm{v};\ell)}\le 1,\nonumber\\
     & \frac{\left(\mathscr{H}^{\uparrow,\mathsf{L}}_{u,u',r}(t_{x})\right)^{-1}\left(\overline{\psi}_{u,u',r}(t_{x})\alpha_{\bm{\omega}} M +\overline{\beta}^{\uparrow}_u(t_{x})\overline{\tau}_{u,u',r}^{\uparrow}(t_{x})+1\right)}{ \widehat{R}^{\uparrow,\mathsf{L}}_{u,u',r}(\bm{v};\ell)}\le 1,\nonumber\\
     &\frac{1}{\mathscr{H}^{\uparrow,\mathsf{L}}_{u,u',r}(t_{x})}\le 1.\nonumber
\end{tcolorbox}

We next aim to transform \eqref{app:eq:dispersion_4_1} into GP format. To this end, we rewrite \eqref{app:eq:dispersion_4_1} as follows:
\begin{equation}\label{app:eq:dispersion_4_1_1}
   \frac{\overline{M}^{\uparrow}_{u}(x)+1}{\displaystyle\sum_{z{=}0}^{x}\sum_{u'{\in}\mathcal{U}_{b}}\sum_{r'{\in}\overline{\mathcal{R}}_{b}}T_{u,u',r'}^{\uparrow}(t_{z})\overline{\mathfrak{R}}^{\uparrow}_{u,u',r'}(t_{z})+1}{=} 1.
\end{equation} 
This constraint is not in the format of GP. Therefore, we transform it through splitting it into the following three inequalities:
\begin{equation}\label{app:eq:dispersion_4_1_4}
    \frac{\overline{M}^{\uparrow}_{u}(x)+1}{\displaystyle\sum_{z{=}0}^{x}\sum_{u'{\in}\mathcal{U}_{b}}\sum_{r'{\in}\overline{\mathcal{R}}_{b}}T_{u,u',r'}^{\uparrow}(t_{z})\overline{\mathfrak{R}}^{\uparrow}_{u,u',r'}(t_{z})+1}\le 1,
\end{equation}
\begin{equation}\label{app:eq:dispersion_4_1_5}
    \frac{\left(\overline{\mathscr{L}}^{\uparrow,\mathsf{M}}_{u}(t_{x})\right)^{-1}\left(\displaystyle\sum_{z{=}0}^{x}\sum_{u'{\in}\mathcal{U}_{b}}\sum_{r'{\in}\overline{\mathcal{R}}_{b}}T_{u,u',r'}^{\uparrow}(t_{z})\overline{\mathfrak{R}}^{\uparrow}_{u,u',r'}(t_{z})+1\right)}{\overline{M}^{\uparrow}_{u}(x)+1}\le 1,
\end{equation}
\begin{equation}
    \overline{\mathscr{L}}^{\uparrow,\mathsf{M}}_{u}(t_{x})\ge 1,
\end{equation}
where $\overline{\mathscr{L}}^{\uparrow,\mathsf{M}}_{u}(t_{x})$ is an auxiliary decision variable added with a large penalty term to the objective function to force $\overline{\mathscr{L}}^{\uparrow,\mathsf{M}}_{u}(t_{x}){\rightarrow}1^+$ at the optimal point. The fractions in~\eqref{app:eq:dispersion_4_1_4} and \eqref{app:eq:dispersion_4_1_5} still need transformation since they are inequalities with posynomials in their denominators, which are not posynomials. We thus exploit arithmetic-geometric mean inequality (Lemma~\ref{Lemma:ArethmaticGeometric}) to approximate the denominators in \eqref{app:eq:dispersion_4_1_4} and \eqref{app:eq:dispersion_4_1_5} with monomials. To this end, we approximate the denominator in \eqref{app:eq:dispersion_4_1_4} as follows:
\begin{align}\label{app:eq:dispersion_4_1_6}
   S^{\uparrow,\mathsf{M}}_{u}(\bm{v})&=\sum_{z{=}0}^{x}\sum_{u'{\in}\mathcal{U}_{b}}\sum_{r'{\in}\overline{\mathcal{R}}_{b}}T_{u,u',r'}^{\uparrow}(t_{z})\overline{\mathfrak{R}}^{\uparrow}_{u,u',r'}(t_{z})+1\nonumber\\
   &\geq \widehat{S}^{\uparrow,\mathsf{M}}_{u}(\bm{v};\ell) \triangleq \prod_{z{=}0}^{x}\prod_{u'{\in}\mathcal{U}_{b}}\prod_{r'{\in}\overline{\mathcal{R}}_{b}}\left(\frac{T_{u,u',r'}^{\uparrow}(t_{z})\overline{\mathfrak{R}}^{\uparrow}_{u,u',r'}(t_{z}) S^{\uparrow,\mathsf{M}}_{u}([\bm{v}]^{(\ell-1)})}{\left[T_{u,u',r'}^{\uparrow}(t_{z})\overline{\mathfrak{R}}^{\uparrow}_{u,u',r'}(t_{z})\right]^{(\ell-1)}}\right)^{\frac{\left[T_{u,u',r'}^{\uparrow}(t_{z})\overline{\mathfrak{R}}^{\uparrow}_{u,u',r'}(t_{z})\right]^{(\ell-1)}}{S^{\uparrow,\mathsf{M}}_{u}([\bm{v}]^{(\ell-1)})}}\nonumber\\
   &~~~~~~~~~~~~~~~\times\left(S^{\uparrow,\mathsf{M}}_{u}([\bm{v}]^{(\ell-1)})\right)^{\frac{1}{S^{\uparrow,\mathsf{M}}_{u}([\bm{v}]^{(\ell-1)})}},
\end{align}
which gives us an approximation of~\eqref{app:eq:dispersion_4_1_4} as follows:
\begin{equation}\label{app:eq:dispersion_4_1_7}
    \frac{\overline{M}^{\uparrow}_{u}(x)+1}{\widehat{S}^{\uparrow,\mathsf{M}}_{u}(\bm{v};\ell)}\le 1.
\end{equation}

Similarly, we approximate the denominator in \eqref{app:eq:dispersion_4_1_5} as follows:
\begin{align}\label{app:eq:dispersion_4_1_8}
    &Q^{\uparrow,\mathsf{M}}_{u}(\bm{v}){=}\overline{M}^{\uparrow}_{u}(x)+1{\geq} \widehat{Q}^{\uparrow,\mathsf{M}}_{u}(\bm{v};\ell) {\triangleq} \left(\frac{\overline{M}^{\uparrow}_{u}(x) Q^{\uparrow,\mathsf{M}}_{u}([\bm{v}]^{(\ell-1)})}{\left[\overline{M}^{\uparrow}_{u}(x)\right]^{(\ell-1)}}\right)^{\hspace{-2mm}\frac{\left[\overline{M}^{\uparrow}_{u}(x)\right]^{(\ell-1)}}{Q^{\uparrow,\mathsf{M}}_{u}([\bm{v}]^{(\ell-1)})}}\times\left(Q^{\uparrow,\mathsf{M}}_{u}([\bm{v}]^{(\ell-1)})\right)^{\frac{1}{Q^{\uparrow,\mathsf{M}}_{u}([\bm{v}]^{(\ell-1)})}},
\end{align}
which gives the following approximation of~\eqref{app:eq:broadcast_latency_6}:
\begin{equation}
     \frac{\left(\overline{\mathscr{L}}^{\uparrow,\mathsf{M}}_{u}(t_{x})\right)^{-1}\left(\displaystyle\sum_{z{=}0}^{x}\sum_{u'{\in}\mathcal{U}_{b}}\sum_{r'{\in}\overline{\mathcal{R}}_{b}}T_{u,u',r'}^{\uparrow}(t_{z})\overline{\mathfrak{R}}^{\uparrow}_{u,u',r'}(t_{z})+1\right)}{\widehat{Q}^{\uparrow,\mathsf{M}}_{u}(\bm{v};\ell)}\le 1.
\end{equation}

We finally approximate constraint~\eqref{app:eq:dispersion_4} as follows:
\begin{tcolorbox}[ams align]
     &\frac{\overline{M}^{\uparrow}_{u}(x)+1}{\widehat{S}^{\uparrow,\mathsf{M}}_{u}(\bm{v};\ell)}\le 1,~~~~\frac{\left(\overline{\mathscr{L}}^{\uparrow,\mathsf{M}}_{u}(t_{x})\right)^{-1}\left(\displaystyle\sum_{z{=}0}^{x}\sum_{u'{\in}\mathcal{U}_{b}}\sum_{r'{\in}\overline{\mathcal{R}}_{b}}T_{u,u',r'}^{\uparrow}(t_{z})\overline{\mathfrak{R}}^{\uparrow}_{u,u',r'}(t_{z})+1\right)}{\widehat{Q}^{\uparrow,\mathsf{M}}_{u}(\bm{v};\ell)}\le 1,\nonumber\\
     &\frac{1}{\overline{\mathscr{L}}^{\uparrow,\mathsf{M}}_{u}(t_{x})}\le 1,\nonumber
\end{tcolorbox}

We next aim to transform \eqref{app:eq:dispersion_2_1} into the standard form of GP. In doing so, we rewrite \eqref{app:eq:dispersion_2_1} as follows:
\begin{equation}\label{app:eq:dispersion_2_2}
\begin{aligned}
   \overline{\tau}_{u}^{\uparrow,(k)}&=\sum_{x{=}0}^{N}\max_{\substack{u'{\in}\mathcal{U}_{b},\\r{\in}\overline{\mathcal{R}}_{b}}}\Big\{T_{u,u',r}^{\uparrow}(t_{x})+\epsilon\Big\}-\sum_{x{=}0}^{N}\epsilon\\
   &=\sum_{x{=}0}^{N}\max_{\substack{u'{\in}\mathcal{U}_{b},\\r{\in}\overline{\mathcal{R}}_{b}}}\Big\{T_{u,u',r}^{\uparrow}(t_{x})+\epsilon\Big\}-N\epsilon.
\end{aligned}
\end{equation}
Performing some algebraic manipulations gives us
\begin{equation}\label{app:eq:dispersion_2_2}
   \frac{\overline{\tau}_{u}^{\uparrow,(k)}+N^{(k)}\epsilon}{\displaystyle\sum_{x{=}0}^{N}\max_{\substack{u'{\in}\mathcal{U}_{b},\\r{\in}\overline{\mathcal{R}}_{b}}}\Big\{T_{u,u',r}^{\uparrow}(t_{x})+\epsilon\Big\}+N^{(k{-}1)}\epsilon}=1.
\end{equation} 
Using the approximation $\max\{A, B\}\approx (A^{p}+B^{p})^{-\frac{1}{p}}$, which is tight when $p \gg 1$, gives us
\begin{equation}\label{app:eq:dispersion_2_3}
   \frac{\overline{\tau}_{u}^{\uparrow,(k)}+N^{(k)}\epsilon}{\displaystyle\sum_{x{=}0}^{N}\left(\sum_{u'{\in}\mathcal{U}_{b}}\sum_{r{\in}\overline{\mathcal{R}}_{b}}\left(T_{u,u',r}^{\uparrow}(t_{x})+\epsilon\right)^{p}\right)^{-\frac{1}{p}}+N^{(k{-}1)}\epsilon}=1.
\end{equation} 
Defining an auxiliary decision variable $\overline{T}_{u}^{\uparrow,\mathsf{max}}(t_{x})$ that satisfies
\begin{equation}\label{app:eq:dispersion_2_4}
    \overline{T}_{u}^{\uparrow,\mathsf{max}}(t_{x}) =\sum_{u'{\in}\mathcal{U}_{b}}\sum_{r{\in}\overline{\mathcal{R}}_{b}}\left(T_{u,u',r}^{\uparrow}(t_{x})+\epsilon\right)^{p}
\end{equation}
results in
\begin{equation}\label{app:eq:dispersion_2_5}
   \frac{\overline{\tau}_{u}^{\uparrow,(k)}+N^{(k)}\epsilon}{\displaystyle\sum_{x{=}0}^{N}\left(\overline{T}_{u}^{\uparrow,\mathsf{max}}(t_{x})\right)^{-\frac{1}{p}}+N^{(k{-}1)}\epsilon}=1.
\end{equation} 
This constraint is not in the format of GP. Therefore, we transform it through splitting it into the following three inequalities:
\begin{equation}\label{app:eq:dispersion_2_6}
    \frac{\overline{\tau}_{u}^{\uparrow,(k)}+N^{(k)}\epsilon}{\displaystyle\sum_{x{=}0}^{N}\left(\overline{T}_{u}^{\uparrow,\mathsf{max}}(t_{x})\right)^{-\frac{1}{p}}+N^{(k{-}1)}\epsilon}\le 1,
\end{equation}

\begin{equation}\label{app:eq:dispersion_2_7}
    \frac{\left(\overline{\mathscr{Q}}^{\uparrow}_{u}(t_{x})\right)^{-1}\left(\displaystyle\sum_{x{=}0}^{N}\left(\overline{T}_{u}^{\uparrow,\mathsf{max}}(t_{x})\right)^{-\frac{1}{p}}+N^{(k{-}1)}\epsilon\right)}{\overline{\tau}_{u}^{\uparrow,(k)}+N^{(k)}\epsilon}\le 1,
\end{equation}

\begin{equation}
    \overline{\mathscr{Q}}^{\uparrow}_{u}(t_{x})\ge 1,
\end{equation}
where $\overline{\mathscr{Q}}^{\uparrow}_{u}(t_{x})$ is an auxiliary decision variable added with a large penalty term to the objective function to force $\overline{\mathscr{Q}}^{\uparrow}_{u}(t_{x}){\rightarrow}1^+$ at the optimal point. The fractions in~\eqref{app:eq:dispersion_2_6} and \eqref{app:eq:dispersion_2_7} still need transformation since they are inequalities with posynomials in their denominators, which are not posynomials. We thus exploit arithmetic-geometric mean inequality (Lemma~\ref{Lemma:ArethmaticGeometric}) to approximate the denominators in \eqref{app:eq:dispersion_2_6} and \eqref{app:eq:dispersion_2_7} with monomials. To this end, we approximate the denominator in \eqref{app:eq:dispersion_2_6} as follows:
\begin{align}\label{app:eq:dispersion_2_8}
   &V^{\uparrow}_{u}(\bm{v})=\sum_{x{=}0}^{N}\left(\overline{T}_{u}^{\uparrow,\mathsf{max}}(t_{x})\right)^{-\frac{1}{p}}+N^{(k{-}1)}\epsilon\nonumber\\
   &\geq \widehat{V}^{\uparrow}_{u}(\bm{v};\ell) \triangleq \prod_{x{=}0}^{N}\left(\frac{\left(\overline{T}_{u}^{\uparrow,\mathsf{max}}(t_{x})\right)^{-\frac{1}{p}} V^{\uparrow}_{u}([\bm{v}]^{(\ell-1)})}{\left[\left(\overline{T}_{u}^{\uparrow,\mathsf{max}}(t_{x})\right)^{-\frac{1}{p}}\right]^{(\ell-1)}}\right)^{\frac{\left[\left(\overline{T}_{u}^{\uparrow,\mathsf{max}}(t_{x})\right)^{-\frac{1}{p}}\right]^{(\ell-1)}}{V^{\uparrow}_{u}([\bm{v}]^{(\ell-1)})}}\times\left(V^{\uparrow}_{u}([\bm{v}]^{(\ell-1)})\right)^{\frac{\left[N^{(k{-}1)}\displaystyle\epsilon\right]^{(\ell-1)}}{V^{\uparrow}_{u}([\bm{v}]^{(\ell-1)})}},
\end{align}
which gives us an approximation of~\eqref{app:eq:dispersion_2_6} as follows:
\begin{equation}\label{app:eq:dispersion_2_9}
    \frac{\overline{\tau}_{u}^{\uparrow,(k)}+N^{(k)}\epsilon}{\widehat{V}^{\uparrow}_{u}(\bm{v};\ell)}\le 1.
\end{equation}

Similarly, we approximate the denominator in \eqref{app:eq:dispersion_2_7} as follows:
\begin{align}\label{app:eq:dispersion_2_10}
    W^{\uparrow}_{u}(\bm{v})&=\overline{\tau}_{u}^{\uparrow,(k)}+N^{(k)}\epsilon \geq \widehat{W}^{\uparrow}_{u}(\bm{v};\ell) \triangleq \left(\frac{\overline{\tau}_{u}^{\uparrow,(k)} W^{\uparrow}_{u}([\bm{v}]^{(\ell-1)})}{\left[\overline{\tau}_{u}^{\uparrow,(k)}\right]^{(\ell-1)}}\right)^{\frac{\left[\overline{\tau}_{u}^{\uparrow,(k)}\right]^{(\ell-1)}}{W^{\uparrow}_{u}([\bm{v}]^{(\ell-1)})}}\times\left(W^{\uparrow}_{u}([\bm{v}]^{(\ell-1)})\right)^{\frac{\left[N^{(k)}\displaystyle\epsilon\right]^{(\ell-1)}}{W^{\uparrow}_{u}([\bm{v}]^{(\ell-1)})}},
\end{align}
which gives us an approximation of~\eqref{app:eq:dispersion_2_7} as follows:
\begin{equation}\label{app:eq:dispersion_2_11}
    \frac{\left(\overline{\mathscr{Q}}^{\uparrow}_{u}(t_{x})\right)^{-1}\left(\displaystyle\sum_{x{=}0}^{N}\left(\overline{T}_{u}^{\uparrow,\mathsf{max}}(t_{x})\right)^{-\frac{1}{p}}+N^{(k{-}1)}\epsilon\right)}{\widehat{W}^{\uparrow}_{u}(\bm{v};\ell)}\le 1.
\end{equation}

We next transform \eqref{app:eq:dispersion_2_4} into the standard form of GP. To do so, we rewrite \eqref{app:eq:dispersion_2_4} as follows:
\begin{equation}\label{app:eq:dispersion_2_4_1}
    \frac{\overline{T}_{u}^{\uparrow,\mathsf{max}}(t_{x})}{\sum_{u'{\in}\mathcal{U}_{b}}\sum_{r{\in}\overline{\mathcal{R}}_{b}}\left(T_{u,u',r}^{\uparrow}(t_{x})+\epsilon\right)^{p}} =1.
\end{equation}
Defining an auxiliary decision variable $\overline{T}_{u,u',r}^{\uparrow,\epsilon}(t_{x})$ satisfying
\begin{equation}\label{app:eq:dispersion_2_4_2}
    \overline{T}_{u,u',r}^{\uparrow,\epsilon}(t_{x})=T_{u,u',r}^{\uparrow}(t_{x})+\epsilon
\end{equation}
leading to
\begin{equation}\label{app:eq:dispersion_2_4_3}
    \frac{\overline{T}_{u}^{\uparrow,\mathsf{max}}(t_{x})}{\sum_{u'{\in}\mathcal{U}_{b}}\sum_{r{\in}\overline{\mathcal{R}}_{b}}\left(\overline{T}_{u,u',r}^{\uparrow,\epsilon}(t_{x})\right)^{p}} =1.
\end{equation}
This constraint is not in the format of GP. Therefore, we transform it through splitting it into the following three inequalities:
\begin{equation}\label{app:eq:dispersion_2_4_4}
    \frac{\overline{T}_{u}^{\uparrow,\mathsf{max}}(t_{x})}{\sum_{u'{\in}\mathcal{U}_{b}}\sum_{r{\in}\overline{\mathcal{R}}_{b}}\left(\overline{T}_{u,u',r}^{\uparrow,\epsilon}(t_{x})\right)^{p}}\le 1,
\end{equation}

\begin{equation}\label{app:eq:dispersion_2_4_5}
    \frac{\left(\overline{\mathscr{R}}^{\uparrow,\mathsf{max}}_{u}(t_{x})\right)^{-1}\left(\sum_{u'{\in}\mathcal{U}_{b}}\sum_{r{\in}\overline{\mathcal{R}}_{b}}\left(\overline{T}_{u,u',r}^{\uparrow,\epsilon}(t_{x})\right)^{p}\right)}{\overline{T}_{u}^{\uparrow,\mathsf{max}}(t_{x})}\le 1,
\end{equation}

\begin{equation}
    \overline{\mathscr{R}}^{\uparrow,\mathsf{max}}_{u}(t_{x})\ge 1,
\end{equation}
where $\overline{\mathscr{R}}^{\uparrow,\mathsf{max}}_{u}(t_{x})$ is an auxiliary decision variable added with a large penalty term to the objective function to force $\overline{\mathscr{R}}^{\uparrow,\mathsf{max}}_{u}(t_{x}){\rightarrow}1^+$ at the optimal point. The fraction in~\eqref{app:eq:dispersion_2_4_4} still needs transformation since it is inequality with a posynomial in its denominator, which is not a posynomial. We thus exploit arithmetic-geometric mean inequality (Lemma~\ref{Lemma:ArethmaticGeometric}) to approximate the denominator in \eqref{app:eq:dispersion_2_4_4} with a monomial.
\begin{align}\label{app:eq:dispersion_2_4_6}
    &Y^{\uparrow,\mathsf{max}}_{u}(\bm{v})=\sum_{u'{\in}\mathcal{U}_{b}}\sum_{r{\in}\overline{\mathcal{R}}_{b}}\left(\overline{T}_{u,u',r}^{\uparrow,\epsilon}(t_{x})\right)^{p}\geq \widehat{Y}^{\uparrow,\mathsf{max}}_{u}(\bm{v};\ell) \triangleq \prod_{u'{\in}\mathcal{U}_{b}}\prod_{r{\in}\overline{\mathcal{R}}_{b}}\left(\frac{\left(\overline{T}_{u,u',r}^{\uparrow,\epsilon}(t_{x})\right)^{p} Y^{\uparrow,\mathsf{max}}_{u}([\bm{v}]^{(\ell-1)})}{\left[\left(\overline{T}_{u,u',r}^{\uparrow,\epsilon}(t_{x})\right)^{p}\right]^{(\ell-1)}}\right)^{\frac{\left[\left(\overline{T}_{u,u',r}^{\uparrow,\epsilon}(t_{x})\right)^{p}\right]^{(\ell-1)}}{Y^{\uparrow,\mathsf{max}}_{u}([\bm{v}]^{(\ell-1)})}},
\end{align}
which gives us an approximation of~\eqref{app:eq:dispersion_2_4_4} as follows:
\begin{equation}\label{app:eq:dispersion_2_4_6}
    \frac{\overline{T}_{u}^{\uparrow,\mathsf{max}}(t_{x})}{\widehat{Y}^{\uparrow,\mathsf{max}}_{u}(\bm{v};\ell)}\le 1.
\end{equation}

We next transform \eqref{app:eq:dispersion_2_4_2} into GP format. To this end, we rewrite \eqref{app:eq:dispersion_2_4_2} as follows:
\begin{equation}\label{app:eq:dispersion_2_4_7}
    \frac{\overline{T}_{u,u',r}^{\uparrow,\epsilon}(t_{x})}{T_{u,u',r}^{\uparrow}(t_{x})+\epsilon}=1.
\end{equation}
This constraint is not in the format of GP. Therefore, we transform it through splitting it into the following three inequalities:
\begin{equation}\label{app:eq:dispersion_2_4_4}
    \frac{\overline{T}_{u,u',r}^{\uparrow,\epsilon}(t_{x})}{T_{u,u',r}^{\uparrow}(t_{x})+\epsilon}\le 1,
\end{equation}

\begin{equation}\label{app:eq:dispersion_2_4_5}
    \frac{\left(\overline{\mathscr{S}}^{\uparrow,\epsilon}_{u,u',r}(t_{x})\right)^{-1}\left(T_{u,u',r}^{\uparrow}(t_{x})+\epsilon\right)}{\overline{T}_{u,u',r}^{\uparrow,\epsilon}(t_{x})}\le 1,
\end{equation}

\begin{equation}
    \overline{\mathscr{S}}^{\uparrow,\epsilon}_{u,u',r}(t_{x})\ge 1,
\end{equation}
where $\overline{\mathscr{S}}^{\uparrow,\epsilon}_{u,u',r}(t_{x})$ is an auxiliary decision variable added with a large penalty term to the objective function to force $\overline{\mathscr{S}}^{\uparrow,\epsilon}_{u,u',r}(t_{x}){\rightarrow}1^+$ at the optimal point. The fraction in~\eqref{app:eq:dispersion_2_4_4} still needs transformation since it is inequality with a posynomial in its denominator, which is not a posynomial. We thus exploit arithmetic-geometric mean inequality (Lemma~\ref{Lemma:ArethmaticGeometric}) to approximate the denominator in \eqref{app:eq:dispersion_2_4_4} with a monomial.
\begin{align}\label{app:eq:dispersion_2_4_6}
    Z^{\uparrow,\epsilon}_{u,u',r}(\bm{v})&=T_{u,u',r}^{\uparrow}(t_{x})+\epsilon \geq \widehat{Z}^{\uparrow,\epsilon}_{u,u',r}(\bm{v};\ell) \triangleq \left(\frac{T_{u,u',r}^{\uparrow}(t_{x}) Z^{\uparrow,\epsilon}_{u,u',r}([\bm{v}]^{(\ell-1)})}{\left[T_{u,u',r}^{\uparrow}(t_{x})\right]^{(\ell-1)}}\right)^{\frac{\left[T_{u,u',r}^{\uparrow}(t_{x})\right]^{(\ell-1)}}{Z^{\uparrow,\epsilon}_{u,u',r}([\bm{v}]^{(\ell-1)})}}\times\left(Z^{\uparrow,\epsilon}_{u,u',r}([\bm{v}]^{(\ell-1)})\right)^{\frac{\left[\displaystyle\epsilon\right]^{(\ell-1)}}{Z^{\uparrow,\epsilon}_{u,u',r}([\bm{v}]^{(\ell-1)})}},
\end{align}
which gives us an approximation of~\eqref{app:eq:dispersion_2_4_4} as follows:
\begin{equation}\label{app:eq:dispersion_2_4_7}
    \frac{\overline{T}_{u,u',r}^{\uparrow,\epsilon}(t_{x})}{\widehat{Z}^{\uparrow,\epsilon}_{u,u',r}(\bm{v};\ell)}\le 1.
\end{equation}
We finally approximate constraint~\eqref{app:eq:dispersion_2} as follows:
\begin{tcolorbox}[ams align]
     &\frac{\overline{\tau}_{u}^{\uparrow,(k)}+N^{(k)}\epsilon}{\widehat{V}^{\uparrow}_{u}(\bm{v};\ell)}\le 1,~~~~~~\frac{\left(\overline{\mathscr{Q}}^{\uparrow}_{u}(t_{x})\right)^{-1}\left(\displaystyle\sum_{x{=}0}^{N}\left(\overline{T}_{u}^{\uparrow,\mathsf{max}}(t_{x})\right)^{-\frac{1}{p}}+N^{(k{-}1)}\epsilon\right)}{\widehat{W}^{\uparrow}_{u}(\bm{v};\ell)}\le 1,\nonumber\\
     &\frac{\overline{T}_{u}^{\uparrow,\mathsf{max}}(t_{x})}{\widehat{Y}^{\uparrow,\mathsf{max}}_{u}(\bm{v};\ell)}\le 1,~~~~\frac{\left(\overline{\mathscr{R}}^{\uparrow,\mathsf{max}}_{u}(t_{x})\right)^{-1}\left(\sum_{u'{\in}\mathcal{U}_{b}}\sum_{r{\in}\overline{\mathcal{R}}_{b}}\left(\overline{T}_{u,u',r}^{\uparrow,\epsilon}(t_{x})\right)^{p}\right)}{\overline{T}_{u}^{\uparrow,\mathsf{max}}(t_{x})}\le 1,\nonumber\\
     & \frac{\overline{T}_{u,u',r}^{\uparrow,\epsilon}(t_{x})}{\widehat{Z}^{\uparrow,\epsilon}_{u,u',r}(\bm{v};\ell)}\le 1,~~~\frac{\left(\overline{\mathscr{S}}^{\uparrow,\epsilon}_{u,u',r}(t_{x})\right)^{-1}\left(T_{u,u',r}^{\uparrow}(t_{x})+\epsilon\right)}{\overline{T}_{u,u',r}^{\uparrow,\epsilon}(t_{x})}\le 1,\nonumber\\
     &\frac{1}{\overline{\mathscr{Q}}^{\uparrow}_{u}(t_{x})}\le 1,~~~~~\frac{1}{\overline{\mathscr{R}}^{\uparrow,\mathsf{max}}_{u}(t_{x})}\le 1,~~~~\frac{1}{\overline{\mathscr{S}}^{\uparrow,\epsilon}_{u,u',r}(t_{x})}\le 1.\nonumber
\end{tcolorbox}

\noindent\textbf{GPs dispatching latency of CHUs (Proposition~\ref{propo:CHU_uplink}).} Referring to Sec.~\ref{sec:communication_latency}, let us revisit the equations of GPs dispatching latency of DPUs below:
\begin{equation}\label{app:eq:dispatch_1}
    \sum_{r {\in} \mathcal{R}_{b}} \psi_{u,r}(t_{x})=1,~x\in\mathcal{N}^{(k)},
\end{equation}

\begin{equation}\label{app:eq:dispatch_2}
     \tau_{u}^{\uparrow,{(k)}}{=}\hspace{-3mm}\sum_{x{=}N^{(k{-}1)}{+}1}^{N^{(k)}}\max_{r{\in}\mathcal{R}_{b}}\Big\{\hspace{-1mm}\min\left\{\tau_{u,r}^{\uparrow}\hspace{-0.3mm}(t_{x}),\left(t_{x{+}1}-t_{x}\right)\right\}\hspace{-1mm}\Big\}.
\end{equation}
In \eqref{app:eq:dispatch_2}, $\tau_{u,r}^{\uparrow}(t_{x})$ is given by
\begin{equation}\label{app:eq:dispatch_3}
\tau_{u,r}^{\uparrow}(t_{x}){=}\frac{\big(\alpha_{\bm{\omega}} M{-}M^{\uparrow}_{u}(x)\big)\psi_{u,r}(t_{x})}{\mathfrak{R}^{\uparrow}_{u,r}(t_{x})+1{-}\beta^{\uparrow}_u(t_{x})},
\end{equation}
where
\begin{equation}\label{app:eq:dispatch_4}
    M^{\uparrow}_{u}(x){=}\sum_{z{=}N^{(k{-}1)}{+}1}^{x}\sum_{r'{\in}\mathcal{R}_{b}}\min\big\{\tau_{u,r'}^{\uparrow}(t_{z}),\left(t_{z{+}1}{-}t_{z}\right)\big\}\mathfrak{R}^{\uparrow}_{u,r'}(t_{z}).
\end{equation}

In the following, we first transform \eqref{app:eq:dispatch_1} into the standard form of GP. In doing so, we split \eqref{app:eq:dispatch_1} into the following three inequalities:
\begin{equation}\label{app:eq:dispatch_1_2}
    \sum_{r {\in} \mathcal{R}_{b}} \psi_{u,r}(t_{x})\le 1,
\end{equation}
\begin{equation}\label{app:eq:dispatch_1_3}
    \frac{\left(\mathscr{B}_{u,r}(t_{x})\right)^{-1}}{\sum_{r {\in} \mathcal{R}_{b}} \psi_{u,r}(t_{x})}\le 1,
\end{equation}
\begin{equation}
    \mathscr{B}_{u,r}(t_{x})\ge 1,
\end{equation}
where $\mathscr{B}_{u,r}(t_{x})$ is an auxiliary decision variable added with a large penalty term to the objective function to force $\mathscr{B}_{u,r}(t_{x}){\rightarrow}1^+$ at the optimal point. The fraction in~\eqref{app:eq:dispatch_1_3} still needs transformation since it is inequality with a posynomial in its denominator, which is not a posynomial. We thus exploit arithmetic-geometric mean inequality (Lemma~\ref{Lemma:ArethmaticGeometric}) to approximate the denominator in \eqref{app:eq:dispatch_1_3} with a monomial.
\begin{align}\label{app:eq:dispatch_1_3_1}
    D^{\mathsf{C}}_{u,r}(\bm{v})&=\sum_{r {\in} \mathcal{R}_{b}} \psi_{u,r}(t_{x}) \geq \widehat{D}^{\mathsf{C}}_{u,r}(\bm{v};\ell) \triangleq \prod_{r {\in} \mathcal{R}_{b}} \left(\frac{\psi_{u,r}(t_{x}) D^{\mathsf{C}}_{u,r}([\bm{v}]^{(\ell-1)})}{\left[\psi_{u,r}(t_{x})\right]^{(\ell-1)}}\right)^{\frac{\left[\psi_{u,r}(t_{x})\right]^{(\ell-1)}}{D^{\mathsf{C}}_{u,r}([\bm{v}]^{(\ell-1)})}},
\end{align}
which gives us an approximation of~\eqref{app:eq:dispatch_1_3} as follows:
\begin{equation}\label{app:eq:dispatch_1_3_3}
    \frac{\left(\mathscr{B}_{u,r}(t_{x})\right)^{-1}}{\widehat{D}^{\mathsf{C}}_{u,r}(\bm{v};\ell)}\le 1.
\end{equation}
We finally approximate constraint~\eqref{app:eq:dispatch_1} as follows:
\begin{tcolorbox}[ams align]
     &\sum_{r {\in} \mathcal{R}_{b}} \psi_{u,r}(t_{x})\le 1,~~~~~~~\frac{\left(\mathscr{B}_{u,r}(t_{x})\right)^{-1}}{\widehat{D}^{\mathsf{C}}_{u,r}(\bm{v};\ell)}\le 1,~~~~~~~\frac{1}{\mathscr{B}_{u,r}(t_{x})}\le 1.\nonumber
\end{tcolorbox}

We next aim to transform \eqref{app:eq:dispatch_2} into the standard form of GP. To this end, we define $\tau_{u}^{\uparrow,{(k)}}$, $\tau_{u,r}^{\uparrow}(t_{x})$, and $M^{\uparrow}_{u}(x)$ as auxiliary decision variables that must satisfy constraints \eqref{app:eq:dispatch_2}, \eqref{app:eq:dispatch_3}, and \eqref{app:eq:dispatch_4}, respectively. Furthermore, we define an auxiliary decision variable $T_{u,r}^{\uparrow}(t_{x})$ satisfying the following constraint:
\begin{equation}\label{app:eq:dispatch_min_1}
    T_{u,r}^{\uparrow}(t_{x})=\min\big\{\tau_{u,r}^{\uparrow}(t_{x}),\left(t_{x{+}1}{-}t_{x}\right)\big\}.
\end{equation}
Accordingly, \eqref{app:eq:dispatch_2} and \eqref{app:eq:dispatch_4} can be rewritten as follows:
\begin{equation}\label{app:eq:dispatch_2_1}
    \tau_{u}^{\uparrow,{(k)}}{=}\hspace{-3mm}\sum_{x{=}N^{(k{-}1)}{+}1}^{N^{(k)}}\max_{r{\in}\mathcal{R}_{b}}\Big\{T_{u,r}^{\uparrow}(t_{x})\Big\},
\end{equation}
and
\begin{align}\label{app:eq:dispatch_4_1}
    M^{\uparrow}_{u}(x)=\hspace{-3mm}\sum_{z{=}N^{(k{-}1)}{+}1}^{x}\sum_{r'{\in}\mathcal{R}_{b}}T_{u,r'}^{\uparrow,\mathsf{C}}(t_{z})\mathfrak{R}^{\uparrow}_{u,r'}(t_{z}).
\end{align}

In the following, we first aim to transform \eqref{app:eq:dispatch_min_1} into the standard form of GP. To this end, using the approximation $\min\{A, B\}\approx (A^{-p}+B^{-p})^{-\frac{1}{p}}$, which is tight when $p \gg 1$, gives us
\begin{equation}\label{app:eq:dispatch_min_2}
    T_{u,r}^{\uparrow}(t_{x})+\epsilon=\left(\left(\tau_{u,r}^{\uparrow}(t_{x})+\epsilon\right)^{-p}+\left(t_{x{+}1}{-}t_{x}+\epsilon\right)^{-p}\right)^{-\frac{1}{p}},
\end{equation}
where $\epsilon$ is added to both sides to avoid division by zero. To simplify \eqref{app:eq:dispatch_min_2}, we define an auxiliary decision variable $\widetilde{T}_{u,r}^{\uparrow,\mathsf{C}}(t_{x})$ that satisfy the following equality constraint:
\begin{equation}\label{app:eq:dispatch_min_3}
    \widetilde{T}_{u,r}^{\uparrow,\mathsf{C}}(t_{x})=\left(\tau_{u,r}^{\uparrow}(t_{x})+\epsilon\right)^{-p}+\left(t_{x{+}1}{-}t_{x}+\epsilon\right)^{-p}
\end{equation}
resulting in 
\begin{equation}\label{app:eq:dispatch_min_4}
    T_{u,r}^{\uparrow}(t_{x})+\epsilon=\left(\widetilde{T}_{u,r}^{\uparrow,\mathsf{C}}(t_{x})\right)^{-\frac{1}{p}}.
\end{equation}
Performing some algebraic operations gives us
\begin{equation}\label{app:eq:dispatch_min_5}
    T_{u,r}^{\uparrow}(t_{x})\times \left(\widetilde{T}_{u,r}^{\uparrow,\mathsf{C}}(t_{x})\right)^{\frac{1}{p}}+\epsilon\times \left(\widetilde{T}_{u,r}^{\uparrow,\mathsf{C}}(t_{x})\right)^{\frac{1}{p}} = 1.
\end{equation}
We transform \eqref{app:eq:dispatch_min_5} into the standard GP format via introducing the following three inequalities:
\begin{equation}\label{app:eq:dispatch_min_6}
     T_{u,r}^{\uparrow}(t_{x})\times \left(\widetilde{T}_{u,r}^{\uparrow,\mathsf{C}}(t_{x})\right)^{\frac{1}{p}}+\epsilon\times \left(\widetilde{T}_{u,r}^{\uparrow,\mathsf{C}}(t_{x})\right)^{\frac{1}{p}}\le 1,
\end{equation}
\begin{equation}\label{app:eq:dispatch_min_7}
    \frac{\left(\mathscr{B}^{\uparrow,\mathsf{min}}_{u,r}(t_{x})\right)^{-1}}{T_{u,r}^{\uparrow}(t_{x})\times \left(\widetilde{T}_{u,r}^{\uparrow,\mathsf{C}}(t_{x})\right)^{\frac{1}{p}}+\epsilon\times \left(\widetilde{T}_{u,r}^{\uparrow,\mathsf{C}}(t_{x})\right)^{\frac{1}{p}}}\le 1,
\end{equation}
\begin{equation}
    \mathscr{B}^{\uparrow,\mathsf{min}}_{u,r}(t_{x})\ge 1,
\end{equation}
where $\mathscr{B}^{\uparrow,\mathsf{min}}_{u,r}(t_{x})$ is an auxiliary decision variable added with a large penalty term to the objective function to force $\mathscr{B}^{\uparrow,\mathsf{min}}_{u,r}(t_{x}){\rightarrow}1^+$ at the optimal point. The fraction in~\eqref{app:eq:dispatch_min_7} still needs transformation since it is an inequality with a posynomial in its denominator, which is not posynomial. We thus exploit arithmetic-geometric mean inequality (Lemma~\ref{Lemma:ArethmaticGeometric}) to approximate the denominator of \eqref{app:eq:dispatch_min_7} with a monomial:
\begin{align}\label{app:eq:dispatch_min_8}
    W^{\uparrow,\mathsf{min}}_{u,r}(\bm{v})&{=}T_{u,r}^{\uparrow}(t_{x}) \left(\widetilde{T}_{u,r}^{\uparrow,\mathsf{C}}(t_{x})\right)^{\frac{1}{p}}{+}\epsilon{\times} \left(\widetilde{T}_{u,r}^{\uparrow,\mathsf{C}}(t_{x})\right)^{\frac{1}{p}}\nonumber\\
    &{\geq} \widehat{W}^{\uparrow,\mathsf{min}}_{u,r}(\bm{v};\ell) {\triangleq} \left(\frac{T_{u,r}^{\uparrow}(t_{x}) \left(\widetilde{T}_{u,r}^{\uparrow,\mathsf{C}}(t_{x})\right)^{\frac{1}{p}} W^{\uparrow,\mathsf{min}}_{u,r}([\bm{v}]^{(\ell-1)})}{\left[T_{u,r}^{\uparrow}(t_{x}) \left(\widetilde{T}_{u,r}^{\uparrow,\mathsf{C}}(t_{x})\right)^{\frac{1}{p}}\right]^{(\ell-1)}}\right)^{\hspace{-2mm}\frac{\left[T_{u,r}^{\uparrow}(t_{x}) \left(\widetilde{T}_{u,r}^{\uparrow,\mathsf{C}}(t_{x})\right)^{\frac{1}{p}}\right]^{(\ell-1)}}{W^{\uparrow,\mathsf{min}}_{u,r}([\bm{v}]^{(\ell-1)})}}\nonumber\\
    &~~~~~~~~~~~~~~~~~\times\left(\frac{ \left(\widetilde{T}_{u,r}^{\uparrow,\mathsf{C}}(t_{x})\right)^{\frac{1}{p}} W^{\uparrow,\mathsf{min}}_{u,r}([\bm{v}]^{(\ell-1)})}{\left[ \left(\widetilde{T}_{u,r}^{\uparrow,\mathsf{C}}(t_{x})\right)^{\frac{1}{p}}\right]^{(\ell-1)}}\right)^{\frac{\left[{\displaystyle\epsilon}\times \left(\widetilde{T}_{u,r}^{\uparrow,\mathsf{C}}(t_{x})\right)^{\frac{1}{p}}\right]^{(\ell-1)}}{W^{\uparrow,\mathsf{min}}_{u,r}([\bm{v}]^{(\ell-1)})}},
\end{align}
which gives us the following approximation of~\eqref{app:eq:dispatch_min_7}:
\begin{equation}
     \frac{\left(\mathscr{B}^{\uparrow,\mathsf{min}}_{u,r}(t_{x})\right)^{-1}}{\widehat{W}^{\uparrow,\mathsf{min}}_{u,r}(\bm{v};\ell)}\le 1.
\end{equation}

We next aim to transform \eqref{app:eq:dispatch_min_3} into standard GP format. In doing so, we rewrite \eqref{app:eq:dispatch_min_3} as follows:
\begin{equation}\label{app:eq:dispatch_min_9}
    \frac{\widetilde{T}_{u,r}^{\uparrow,\mathsf{C}}(t_{x})}{\left(\tau_{u,r}^{\uparrow}(t_{x})+\epsilon\right)^{-p}+\left(t_{x{+}1}{-}t_{x}+\epsilon\right)^{-p}}= 1.
\end{equation}
Defining an auxiliary decision variable $\widetilde{\tau}_{u,r}^{\uparrow}(t_{x})$ that satisfies 
\begin{equation}\label{app:eq:dispatch_min_101}
    \widetilde{\tau}_{u,r}^{\uparrow}(t_{x})=\tau_{u,r}^{\uparrow}(t_{x})+\epsilon
\end{equation}
and considering $\widetilde{t}_{x}$ in \eqref{app:eq:broadcast_latency_min_10} give us
\begin{equation}\label{app:eq:dispatch_min_11}
    \frac{\widetilde{T}_{u,r}^{\uparrow,\mathsf{C}}(t_{x})}{\left(\widetilde{\tau}_{u,r}^{\uparrow}(t_{x})\right)^{-p}+\left(\widetilde{t}_{x}\right)^{-p}}= 1.
\end{equation}
This constraint is not in the format of GP. Therefore, we transform it through splitting it into the following three inequalities:
\begin{equation}\label{app:eq:dispatch_min_12}
    \frac{\widetilde{T}_{u,r}^{\uparrow,\mathsf{C}}(t_{x})}{\left(\widetilde{\tau}_{u,r}^{\uparrow}(t_{x})\right)^{-p}+\left(\widetilde{t}_{x}\right)^{-p}}\le 1,
\end{equation}
\begin{equation}\label{app:eq:dispatch_min_13}
    \frac{\left(\mathscr{C}^{\uparrow,\mathsf{min}}_{u,r}(t_{x})\right)^{-1}\left(\left(\widetilde{\tau}_{u,r}^{\uparrow}(t_{x})\right)^{-p}+\left(\widetilde{t}_{x}\right)^{-p}\right)}{\widetilde{T}_{u,r}^{\uparrow,\mathsf{C}}(t_{x})}\le 1,
\end{equation}
\begin{equation}
    \mathscr{C}^{\uparrow,\mathsf{min}}_{u,r}(t_{x})\ge 1,
\end{equation}
where $\mathscr{C}^{\uparrow,\mathsf{min}}_{u,r}(t_{x})$ is an auxiliary decision variable added with a large penalty term to the objective function to force $\mathscr{C}^{\uparrow,\mathsf{min}}_{u,r}(t_{x}){\rightarrow}1^+$ at the optimal point. The fraction in~\eqref{app:eq:dispatch_min_12} still needs transformation since it is an inequality with a posynomial in the denominator, which is not a posynomial. We thus exploit arithmetic-geometric mean inequality (Lemma~\ref{Lemma:ArethmaticGeometric}) to approximate the denominator with a monomial:
\begin{align}\label{app:eq:dispatch_min_14}
    G^{\uparrow,\mathsf{min}}_{u,r}(\bm{v})=\left(\widetilde{\tau}_{u,r}^{\uparrow}(t_{x})\right)^{-p}+\left(\widetilde{t}_{x}\right)^{-p}\geq \widehat{G}^{\uparrow,\mathsf{min}}_{u,r}(\bm{v};\ell) &\triangleq \left(\frac{\left(\widetilde{\tau}_{u,r}^{\uparrow}(t_{x})\right)^{-p} G^{\uparrow,\mathsf{min}}_{u,r}([\bm{v}]^{(\ell-1)})}{\left[\left(\widetilde{\tau}_{u,r}^{\uparrow}(t_{x})\right)^{-p}\right]^{(\ell-1)}}\right)^{\frac{\left[\left(\widetilde{\tau}_{u,r}^{\uparrow}(t_{x})\right)^{-p}\right]^{(\ell-1)}}{G^{\uparrow,\mathsf{min}}_{u,r}([\bm{v}]^{(\ell-1)})}}\nonumber\\
    &\times\left(\frac{\left(\widetilde{t}_{x}\right)^{-p} G^{\uparrow,\mathsf{min}}_{u,r}([\bm{v}]^{(\ell-1)})}{\left[\left(\widetilde{t}_{x}\right)^{-p}\right]^{(\ell-1)}}\right)^{\frac{\left[\left(\widetilde{t}_{x}\right)^{-p}\right]^{(\ell-1)}}{G^{\uparrow,\mathsf{min}}_{u,r}([\bm{v}]^{(\ell-1)})}},
\end{align}
which gives us an approximation of~\eqref{app:eq:dispatch_min_12} as follows:
\begin{equation}\label{app:eq:dispatch_min_15}
    \frac{\widetilde{T}_{u,r}^{\uparrow,\mathsf{C}}(t_{x})}{\widehat{G}^{\uparrow,\mathsf{min}}_{u,r}(\bm{v};\ell)}\le 1.
\end{equation}

We next transform \eqref{app:eq:dispatch_min_101} into GP format. To this end, we rewrite \eqref{app:eq:dispatch_min_101} as follows:
\begin{equation}\label{app:eq:dispatch_min_102}
    \frac{\widetilde{\tau}_{u,r}^{\uparrow}(t_{x})}{\tau_{u,r}^{\uparrow}(t_{x})+\epsilon}=1.
\end{equation}
This constraint is not in the format of GP. Therefore, we transform it through splitting it into the following three inequalities:
\begin{equation}\label{app:eq:dispatch_min_103}
    \frac{\widetilde{\tau}_{u,r}^{\uparrow}(t_{x})}{\tau_{u,r}^{\uparrow}(t_{x})+\epsilon}\le 1,
\end{equation}
\begin{equation}\label{app:eq:dispatch_min_104}
    \frac{\left(\mathscr{F}^{\uparrow,\mathsf{min}}_{u,r}(t_{x})\right)^{-1}\left(\tau_{u,r}^{\uparrow}(t_{x})+\epsilon\right)}{\widetilde{\tau}_{u,r}^{\uparrow}(t_{x})}\le 1,
\end{equation}
\begin{equation}
    \mathscr{F}^{\uparrow,\mathsf{min}}_{u,r}(t_{x})\ge 1,
\end{equation}
where $\mathscr{F}^{\uparrow,\mathsf{min}}_{u,r}(t_{x})$ is an auxiliary decision variable added with a large penalty term to the objective function to force $\mathscr{F}^{\uparrow,\mathsf{min}}_{u,r}(t_{x}){\rightarrow}1^+$ at the optimal point. The fraction in~\eqref{app:eq:dispatch_min_103} still needs transformation since it is inequality with posynomial in its denominator, which is not a posynomial. We thus exploit arithmetic-geometric mean inequality (Lemma~\ref{Lemma:ArethmaticGeometric}) to approximate the denominator of \eqref{app:eq:dispatch_min_103} with a monomial.
\begin{align}\label{app:eq:dispatch_min_105}
    J^{\uparrow,\mathsf{min}}_{u,r}(\bm{v})=\tau_{u,r}^{\uparrow}(t_{x})+\epsilon \geq \widehat{J}^{\uparrow,\mathsf{min}}_{u,r}(\bm{v};\ell) &\triangleq \left(\frac{\tau_{u,r}^{\uparrow}(t_{x}) J^{\uparrow,\mathsf{min}}_{u,r}([\bm{v}]^{(\ell-1)})}{\left[\tau_{u,r}^{\uparrow}(t_{x})\right]^{(\ell-1)}}\right)^{\frac{\left[\tau_{u,r}^{\uparrow}(t_{x})\right]^{(\ell-1)}}{J^{\uparrow,\mathsf{min}}_{u,r}([\bm{v}]^{(\ell-1)})}}\times\left(J^{\uparrow,\mathsf{min}}_{u,r}([\bm{v}]^{(\ell-1)})\right)^{\frac{\displaystyle\epsilon}{J^{\uparrow,\mathsf{min}}_{u,r}([\bm{v}]^{(\ell-1)})}},
\end{align}
which gives us an approximation of~\eqref{app:eq:dispatch_min_101} as follows:
\begin{equation}\label{app:eq:dispatch_min_106}
    \frac{\tau_{u,r}^{\uparrow}(t_{x})+\epsilon}{\widehat{J}^{\uparrow,\mathsf{min}}_{u,r}(\bm{v};\ell)}\le 1.
\end{equation}

We finally approximate constraint~\eqref{app:eq:dispatch_min_1} as follows:
\begin{tcolorbox}[ams align]
     &T_{u,r}^{\uparrow}(t_{x})\times \left(\widetilde{T}_{u,r}^{\uparrow,\mathsf{C}}(t_{x})\right)^{\frac{1}{p}}+\epsilon\times \left(\widetilde{T}_{u,r}^{\uparrow,\mathsf{C}}(t_{x})\right)^{\frac{1}{p}}\le 1,~~\frac{\left(\mathscr{B}^{\uparrow,\mathsf{min}}_{u,r}(t_{x})\right)^{-1}}{\widehat{W}^{\uparrow,\mathsf{min}}_{u,r}(\bm{v};\ell)}\le 1,\nonumber\\
     &\frac{\left(\mathscr{C}^{\uparrow,\mathsf{min}}_{u,r}(t_{x})\right)^{-1}\left(\left(\widetilde{\tau}_{u,r}^{\uparrow}(t_{x})\right)^{-p}+\left(\widetilde{t}_{x}\right)^{-p}\right)}{\widetilde{T}_{u,r}^{\uparrow,\mathsf{C}}(t_{x})}\le 1,~~\frac{\widetilde{T}_{u,r}^{\uparrow,\mathsf{C}}(t_{x})}{\widehat{G}^{\uparrow,\mathsf{min}}_{u,r}(\bm{v};\ell)}\le 1,\nonumber\\
     &\frac{\tau_{u,r}^{\uparrow}(t_{x})+\epsilon}{\widehat{J}^{\uparrow,\mathsf{min}}_{u,r}(\bm{v};\ell)}\le 1,~~~~\frac{\left(\mathscr{F}^{\uparrow,\mathsf{min}}_{u,r}(t_{x})\right)^{-1}\left(\tau_{u,r}^{\uparrow}(t_{x})+\epsilon\right)}{\widetilde{\tau}_{u,r}^{\uparrow}(t_{x})}\le 1,\nonumber\\
     &\frac{1}{\mathscr{B}^{\uparrow,\mathsf{min}}_{u,r}(t_{x})}\le 1,~~\frac{1}{\mathscr{C}^{\uparrow,\mathsf{min}}_{u,r}(t_{x})}\le 1,~~~\frac{1}{\mathscr{F}^{\uparrow,\mathsf{min}}_{u,r}(t_{x})}\le 1.\nonumber
\end{tcolorbox}

We next aim to transform \eqref{app:eq:dispatch_3} into the standard GP format. In doing so, performing some algebraic manipulations on \eqref{app:eq:dispatch_3} gives us
\begin{equation}\label{app:eq:dispatch_3_4}
    \frac{\mathfrak{R}^{\uparrow}_{u,r}(t_{x})\tau_{u,r}^{\uparrow}(t_{x})+\tau_{u,r}^{\uparrow}(t_{x})+\psi_{u,r}(t_{x})M^{\uparrow}_{u}(x)+1}{\psi_{u,r}(t_{x})\alpha_{\bm{\omega}} M +\beta^{\uparrow}_u(t_{x})\tau_{u,r}^{\uparrow}(t_{x})+1}{=} 1.
\end{equation}
However, this constraint is not in the format of GP. Therefore, we transform it into the standard GP format via introducing the following three inequalities:
\begin{equation}\label{app:eq:dispatch_3_5}
     \frac{\mathfrak{R}^{\uparrow}_{u,r}(t_{x})\tau_{u,r}^{\uparrow}(t_{x})+\tau_{u,r}^{\uparrow}(t_{x})+\psi_{u,r}(t_{x})M^{\uparrow}_{u}(x)+1}{\psi_{u,r}(t_{x})\alpha_{\bm{\omega}} M +\beta^{\uparrow}_u(t_{x})\tau_{u,r}^{\uparrow}(t_{x})+1}\le 1,
\end{equation}
\begin{equation}\label{app:eq:dispatch_3_6}
    \frac{\left(\mathscr{H}^{\uparrow,\mathsf{L}}_{u,r}(t_{x})\right)^{-1}\left(\psi_{u,r}(t_{x})\alpha_{\bm{\omega}} M +\beta^{\uparrow}_u(t_{x})\tau_{u,r}^{\uparrow}(t_{x})+1\right)}{\mathfrak{R}^{\uparrow}_{u,r}(t_{x})\tau_{u,r}^{\uparrow}(t_{x})+\tau_{u,r}^{\uparrow}(t_{x})+\psi_{u,r}(t_{x})M^{\uparrow}_{u}(x)+1}\le 1,
\end{equation}
\begin{equation}
    \mathscr{H}^{\uparrow,\mathsf{L}}_{u,r}(t_{x})\ge 1,
\end{equation}
where $\mathscr{H}^{\uparrow,\mathsf{L}}_{u,r}(t_{x})$ is an auxiliary decision variable added with a large penalty term to the objective function to force $\mathscr{H}^{\uparrow,\mathsf{L}}_{u,r}(t_{x}){\rightarrow}1^+$ at the optimal point. The fractions in~\eqref{app:eq:dispatch_3_5} and \eqref{app:eq:dispatch_3_6} still need transformation since they are inequalities with posynomials in their denominator, which are not posynomials. We thus exploit arithmetic-geometric mean inequality (Lemma~\ref{Lemma:ArethmaticGeometric}) to approximate the denominators of \eqref{app:eq:dispatch_3_5} and \eqref{app:eq:dispatch_3_6} with monomials. In doing so, we approximate the denominator in \eqref{app:eq:dispatch_3_5} as follows:
\begin{align}\label{app:eq:dispatch_3_7}
    L^{\uparrow,\mathsf{L}}_{u,r}(\bm{v}){=}&\psi_{u,r}(t_{x})\alpha_{\bm{\omega}} M +\beta^{\uparrow}_u(t_{x})\tau_{u,r}^{\uparrow}(t_{x})+1{\geq} \widehat{L}^{\uparrow,\mathsf{L}}_{u,r}(\bm{v};\ell) {\triangleq} \left(\frac{\psi_{u,r}(t_{x})\ L^{\uparrow,\mathsf{L}}_{u,r}([\bm{v}]^{(\ell-1)})}{\left[\psi_{u,r}(t_{x})\right]^{(\ell-1)}}\right)^{\hspace{-2mm}\frac{\left[\psi_{u,r}(t_{x})\alpha_{\bm{\omega}} M\right]^{(\ell-1)}}{L^{\uparrow,\mathsf{L}}_{u,r}([\bm{v}]^{(\ell-1)})}}\nonumber\\
    &~~~~~~~~~~~~~~~~~~~\times\left(\frac{\beta^{\uparrow}_u(t_{x})\tau_{u,r}^{\uparrow}(t_{x}) L^{\uparrow,\mathsf{L}}_{u,r}([\bm{v}]^{(\ell-1)})}{\left[\beta^{\uparrow}_u(t_{x})\tau_{u,r}^{\uparrow}(t_{x})\right]^{(\ell-1)}}\right)^{\frac{\left[\beta^{\uparrow}_u(t_{x})\tau_{u,r}^{\uparrow}(t_{x})\right]^{(\ell-1)}}{L^{\uparrow,\mathsf{L}}_{u,r}([\bm{v}]^{(\ell-1)})}}\times\left(L^{\uparrow,\mathsf{L}}_{u,r}([\bm{v}]^{(\ell-1)})\right)^{\frac{1}{L^{\uparrow,\mathsf{L}}_{u,r}([\bm{v}]^{(\ell-1)})}},
\end{align}
which gives us the following approximation of~\eqref{app:eq:dispatch_3_5}:
\begin{equation}
     \frac{\mathfrak{R}^{\uparrow}_{u,r}(t_{x})\tau_{u,r}^{\uparrow}(t_{x})+\tau_{u,r}^{\uparrow}(t_{x})+\psi_{u,r}(t_{x})M^{\uparrow}_{u}(x)+1}{\widehat{L}^{\uparrow,\mathsf{L}}_{u,r}(\bm{v};\ell)}\le 1.
\end{equation}
Similarly, we approximate the denominator in \eqref{app:eq:dispatch_3_6} as follows:
\begin{align}\label{app:eq:dispatch_3_8}
    &R^{\uparrow,\mathsf{L}}_{u,r}(\bm{v}){=}\mathfrak{R}^{\uparrow}_{u,r}(t_{x})\tau_{u,r}^{\uparrow}(t_{x})+\tau_{u,r}^{\uparrow}(t_{x})+\psi_{u,r}(t_{x})M^{\uparrow}_{u}(x)+1\nonumber\\
    &{\geq} \widehat{R}^{\uparrow,\mathsf{L}}_{u,r}(\bm{v};\ell) {\triangleq} \left(\frac{\mathfrak{R}^{\uparrow}_{u,r}(t_{x})\tau_{u,r}^{\uparrow}(t_{x}) R^{\uparrow,\mathsf{L}}_{u,r}([\bm{v}]^{(\ell-1)})}{\left[\mathfrak{R}^{\uparrow}_{u,r}(t_{x})\tau_{u,r}^{\uparrow}(t_{x})\right]^{(\ell-1)}}\right)^{\hspace{-2mm}\frac{\left[\mathfrak{R}^{\uparrow}_{u,r}(t_{x})\tau_{u,r}^{\uparrow}(t_{x})\right]^{(\ell-1)}}{R^{\uparrow,\mathsf{L}}_{u,r}([\bm{v}]^{(\ell-1)})}}\times\left(\frac{\tau_{u,r}^{\uparrow}(t_{x}) R^{\uparrow,\mathsf{L}}_{u,r}([\bm{v}]^{(\ell-1)})}{\left[\tau_{u,r}^{\uparrow}(t_{x})\right]^{(\ell-1)}}\right)^{\frac{\left[\tau_{u,r}^{\uparrow}(t_{x})\right]^{(\ell-1)}}{R^{\uparrow,\mathsf{L}}_{u,r}([\bm{v}]^{(\ell-1)})}}\nonumber\\
    &~~~~~~~~~~~~~~\times\left(\frac{\psi_{u,r}(t_{x})M^{\uparrow}_{u}(x) R^{\uparrow,\mathsf{L}}_{u,r}([\bm{v}]^{(\ell-1)})}{\left[\psi_{u,r}(t_{x})M^{\uparrow}_{u}(x)\right]^{(\ell-1)}}\right)^{\frac{\left[\psi_{u,r}(t_{x})M^{\uparrow}_{u}(x)\right]^{(\ell-1)}}{R^{\uparrow,\mathsf{L}}_{u,r}([\bm{v}]^{(\ell-1)})}}\times\left(R^{\uparrow,\mathsf{L}}_{u,r}([\bm{v}]^{(\ell-1)})\right)^{\frac{1}{R^{\uparrow,\mathsf{L}}_{u,r}([\bm{v}]^{(\ell-1)})}},
\end{align}
which gives the following approximation of~\eqref{app:eq:dispatch_3_6}:
\begin{equation}
     \frac{\left(\mathscr{H}^{\uparrow,\mathsf{L}}_{u,r}(t_{x})\right)^{-1}\left(\psi_{u,r}(t_{x})\alpha_{\bm{\omega}} M +\beta^{\uparrow}_u(t_{x})\tau_{u,r}^{\uparrow}(t_{x})+1\right)}{ \widehat{R}^{\uparrow,\mathsf{L}}_{u,r}(\bm{v};\ell)}\le 1.
\end{equation}
We finally approximate \eqref{app:eq:dispatch_3} as follows:
\begin{tcolorbox}[ams align]
     & \frac{\mathfrak{R}^{\uparrow}_{u,r}(t_{x})\tau_{u,r}^{\uparrow}(t_{x})+\tau_{u,r}^{\uparrow}(t_{x})+\psi_{u,r}(t_{x})M^{\uparrow}_{u}(x)+1}{\widehat{L}^{\uparrow,\mathsf{L}}_{u,r}(\bm{v};\ell)}\le 1,\nonumber\\
     & \frac{\left(\mathscr{H}^{\uparrow,\mathsf{L}}_{u,r}(t_{x})\right)^{-1}\left(\psi_{u,r}(t_{x})\alpha_{\bm{\omega}} M +\beta^{\uparrow}_u(t_{x})\tau_{u,r}^{\uparrow}(t_{x})+1\right)}{ \widehat{R}^{\uparrow,\mathsf{L}}_{u,r}(\bm{v};\ell)}\le 1,\nonumber\\
     &\frac{1}{\mathscr{H}^{\uparrow,\mathsf{L}}_{u,r}(t_{x})}\le 1.\nonumber
\end{tcolorbox}

We next aim to transform \eqref{app:eq:dispatch_4_1} into GP format. To this end, we rewrite \eqref{app:eq:dispatch_4_1} as follows:
\begin{equation}\label{app:eq:dispatch_4_1_1}
   \frac{M^{\uparrow}_{u}(x)+1}{\displaystyle\sum_{z{=}0}^{x}\sum_{r'{\in}\mathcal{R}_{b}}T_{u,r'}^{\uparrow,\mathsf{C}}(t_{z})\mathfrak{R}^{\uparrow}_{u,r'}(t_{z})+1}{=} 1.
\end{equation} 
This constraint is not in the format of GP. Therefore, we transform it through splitting it into the following three inequalities:
\begin{equation}\label{app:eq:dispatch_4_1_4}
    \frac{M^{\uparrow}_{u}(x)+1}{\displaystyle\sum_{z{=}0}^{x}\sum_{r'{\in}\mathcal{R}_{b}}T_{u,r'}^{\uparrow,\mathsf{C}}(t_{z})\mathfrak{R}^{\uparrow}_{u,r'}(t_{z})+1}\le 1,
\end{equation}
\begin{equation}\label{app:eq:dispatch_4_1_5}
    \frac{\left(\mathscr{L}^{\uparrow,\mathsf{M},\mathsf{C}}_{u}(t_{x})\right)^{-1}\left(\displaystyle\sum_{z{=}0}^{x}\sum_{r'{\in}\mathcal{R}_{b}}T_{u,r'}^{\uparrow,\mathsf{C}}(t_{z})\mathfrak{R}^{\uparrow}_{u,r'}(t_{z})+1\right)}{M^{\uparrow}_{u}(x)+1}\le 1,
\end{equation}
\begin{equation}
    \mathscr{L}^{\uparrow,\mathsf{M},\mathsf{C}}_{u}(t_{x})\ge 1,
\end{equation}
where $\mathscr{L}^{\uparrow,\mathsf{M},\mathsf{C}}_{u}(t_{x})$ is an auxiliary decision variable added with a large penalty term to the objective function to force $\mathscr{L}^{\uparrow,\mathsf{M},\mathsf{C}}_{u}(t_{x}){\rightarrow}1^+$ at the optimal point. The fractions in~\eqref{app:eq:dispatch_4_1_4} and \eqref{app:eq:dispatch_4_1_5} still need transformation since they are inequalities with posynomials in their denominators, which are not posynomials. We thus exploit arithmetic-geometric mean inequality (Lemma~\ref{Lemma:ArethmaticGeometric}) to approximate the denominators in \eqref{app:eq:dispatch_4_1_4} and \eqref{app:eq:dispatch_4_1_5} with monomials. To this end, we approximate the denominator in \eqref{app:eq:dispatch_4_1_4} as follows:
\begin{align}\label{app:eq:dispatch_4_1_6}
   S^{\uparrow,\mathsf{M},\mathsf{C}}_{u}(\bm{v})&=\sum_{z{=}0}^{x}\sum_{r'{\in}\mathcal{R}_{b}}T_{u,r'}^{\uparrow,\mathsf{C}}(t_{z})\mathfrak{R}^{\uparrow}_{u,r'}(t_{z})+1\nonumber\\
   &\geq \widehat{S}^{\uparrow,\mathsf{M},\mathsf{C}}_{u}(\bm{v};\ell) \triangleq \prod_{z{=}0}^{x}\prod_{r'{\in}\mathcal{R}_{b}}\left(\frac{T_{u,r'}^{\uparrow,\mathsf{C}}(t_{z})\mathfrak{R}^{\uparrow}_{u,r'}(t_{z}) S^{\uparrow,\mathsf{M},\mathsf{C}}_{u}([\bm{v}]^{(\ell-1)})}{\left[T_{u,r'}^{\uparrow,\mathsf{C}}(t_{z})\mathfrak{R}^{\uparrow}_{u,r'}(t_{z})\right]^{(\ell-1)}}\right)^{\frac{\left[T_{u,r'}^{\uparrow,\mathsf{C}}(t_{z})\mathfrak{R}^{\uparrow}_{u,r'}(t_{z})\right]^{(\ell-1)}}{S^{\uparrow,\mathsf{M},\mathsf{C}}_{u}([\bm{v}]^{(\ell-1)})}}\nonumber\\
   &~~~~~~~~~~~~~~~\times\left(S^{\uparrow,\mathsf{M},\mathsf{C}}_{u}([\bm{v}]^{(\ell-1)})\right)^{\frac{1}{S^{\uparrow,\mathsf{M},\mathsf{C}}_{u}([\bm{v}]^{(\ell-1)})}},
\end{align}
which gives us an approximation of~\eqref{app:eq:dispatch_4_1_4} as follows:
\begin{equation}\label{app:eq:dispatch_4_1_7}
    \frac{M^{\uparrow}_{u}(x)+1}{\widehat{S}^{\uparrow,\mathsf{M},\mathsf{C}}_{u}(\bm{v};\ell)}\le 1.
\end{equation}

Similarly, we approximate the denominator in \eqref{app:eq:dispatch_4_1_5} as follows:
\begin{align}\label{app:eq:dispatch_4_1_8}
    &Q^{\uparrow,\mathsf{M},\mathsf{C}}_{u}(\bm{v}){=}M^{\uparrow}_{u}(x)+1{\geq} \widehat{Q}^{\uparrow,\mathsf{M},\mathsf{C}}_{u}(\bm{v};\ell) {\triangleq} \left(\frac{M^{\uparrow}_{u}(x) Q^{\uparrow,\mathsf{M},\mathsf{C}}_{u}([\bm{v}]^{(\ell-1)})}{\left[M^{\uparrow}_{u}(x)\right]^{(\ell-1)}}\right)^{\hspace{-2mm}\frac{\left[M^{\uparrow}_{u}(x)\right]^{(\ell-1)}}{Q^{\uparrow,\mathsf{M},\mathsf{C}}_{u}([\bm{v}]^{(\ell-1)})}}\times\left(Q^{\uparrow,\mathsf{M},\mathsf{C}}_{u}([\bm{v}]^{(\ell-1)})\right)^{\frac{1}{Q^{\uparrow,\mathsf{M},\mathsf{C}}_{u}([\bm{v}]^{(\ell-1)})}},
\end{align}
which gives the following approximation of~\eqref{app:eq:broadcast_latency_6}:
\begin{equation}
     \frac{\left(\mathscr{L}^{\uparrow,\mathsf{M},\mathsf{C}}_{u}(t_{x})\right)^{-1}\left(\displaystyle\sum_{z{=}0}^{x}\sum_{r'{\in}\mathcal{R}_{b}}T_{u,r'}^{\uparrow,\mathsf{C}}(t_{z})\mathfrak{R}^{\uparrow}_{u,r'}(t_{z})+1\right)}{\widehat{Q}^{\uparrow,\mathsf{M},\mathsf{C}}_{u}(\bm{v};\ell)}\le 1.
\end{equation}

We finally approximate constraint~\eqref{app:eq:dispatch_4} as follows:
\begin{tcolorbox}[ams align]
     &\frac{M^{\uparrow}_{u}(x)+1}{\widehat{S}^{\uparrow,\mathsf{M},\mathsf{C}}_{u}(\bm{v};\ell)}\le 1,~~~~\frac{\left(\mathscr{L}^{\uparrow,\mathsf{M},\mathsf{C}}_{u}(t_{x})\right)^{-1}\left(\displaystyle\sum_{z{=}0}^{x}\sum_{r'{\in}\mathcal{R}_{b}}T_{u,r'}^{\uparrow,\mathsf{C}}(t_{z})\mathfrak{R}^{\uparrow}_{u,r'}(t_{z})+1\right)}{\widehat{Q}^{\uparrow,\mathsf{M},\mathsf{C}}_{u}(\bm{v};\ell)}\le 1,~~~~~\frac{1}{\mathscr{L}^{\uparrow,\mathsf{M},\mathsf{C}}_{u}(t_{x})}\le 1,\nonumber
\end{tcolorbox}

We next aim to transform \eqref{app:eq:dispatch_2_1} into the standard form of GP. In doing so, we rewrite \eqref{app:eq:dispatch_2_1} as follows:
\begin{equation}\label{app:eq:dispatch_2_2}
\begin{aligned}
   \tau_{u}^{\uparrow,{(k)}}&=\sum_{x{=}0}^{N}\max_{r{\in}\mathcal{R}_{b}}\Big\{T_{u,r}^{\uparrow}(t_{x})+\epsilon\Big\}-\sum_{x{=}0}^{N}\epsilon\\
   &=\sum_{x{=}0}^{N}\max_{r{\in}\mathcal{R}_{b}}\Big\{T_{u,r}^{\uparrow}(t_{x})+\epsilon\Big\}-\left(N^{(k)}-N^{(k{-}1)}\right)\epsilon.
\end{aligned}
\end{equation}
Performing some algebraic manipulations gives us
\begin{equation}\label{app:eq:dispatch_2_2}
   \frac{\tau_{u}^{\uparrow,{(k)}}+N^{(k)}\epsilon}{\displaystyle\sum_{x{=}0}^{N}\max_{r{\in}\mathcal{R}_{b}}\Big\{T_{u,r}^{\uparrow}(t_{x})+\epsilon\Big\}+N^{(k{-}1)}\epsilon}=1.
\end{equation} 
Using the approximation $\max\{A, B\}\approx (A^{p}+B^{p})^{-\frac{1}{p}}$, which is tight when $p \gg 1$, gives us
\begin{equation}\label{app:eq:dispatch_2_3}
   \frac{\tau_{u}^{\uparrow,{(k)}}+N^{(k)}\epsilon}{\displaystyle\sum_{x{=}0}^{N}\left(\sum_{r{\in}\mathcal{R}_{b}}\left(T_{u,r}^{\uparrow}(t_{x})+\epsilon\right)^{p}\right)^{-\frac{1}{p}}+N^{(k{-}1)}\epsilon}=1.
\end{equation} 
Defining an auxiliary decision variable $T_{u}^{\uparrow,\mathsf{max}}(t_{x})$ that satisfies
\begin{equation}\label{app:eq:dispatch_2_4}
    T_{u}^{\uparrow,\mathsf{max}}(t_{x}) =\sum_{r{\in}\mathcal{R}_{b}}\left(T_{u,r}^{\uparrow}(t_{x})+\epsilon\right)^{p}
\end{equation}
results in
\begin{equation}\label{app:eq:dispatch_2_5}
   \frac{\tau_{u}^{\uparrow,{(k)}}+N^{(k)}\epsilon}{\displaystyle\sum_{x{=}0}^{N}\left(T_{u}^{\uparrow,\mathsf{max}}(t_{x})\right)^{-\frac{1}{p}}+N^{(k{-}1)}\epsilon}=1.
\end{equation} 
This constraint is not in the format of GP. Therefore, we transform it through splitting it into the following three inequalities:
\begin{equation}\label{app:eq:dispatch_2_6}
    \frac{\tau_{u}^{\uparrow,{(k)}}+N^{(k)}\epsilon}{\displaystyle\sum_{x{=}0}^{N}\left(T_{u}^{\uparrow,\mathsf{max}}(t_{x})\right)^{-\frac{1}{p}}+N^{(k{-}1)}\epsilon}\le 1,
\end{equation}

\begin{equation}\label{app:eq:dispatch_2_7}
    \frac{\left(\mathscr{Q}^{\uparrow,\mathsf{C}}_{u}(t_{x})\right)^{-1}\left(\displaystyle\sum_{x{=}0}^{N}\left(T_{u}^{\uparrow,\mathsf{max}}(t_{x})\right)^{-\frac{1}{p}}+N^{(k{-}1)}\epsilon\right)}{\tau_{u}^{\uparrow,{(k)}}+N^{(k)}\epsilon}\le 1,
\end{equation}

\begin{equation}
    \mathscr{Q}^{\uparrow,\mathsf{C}}_{u}(t_{x})\ge 1,
\end{equation}
where $\mathscr{Q}^{\uparrow,\mathsf{C}}_{u}(t_{x})$ is an auxiliary decision variable added with a large penalty term to the objective function to force $\mathscr{Q}^{\uparrow,\mathsf{C}}_{u}(t_{x}){\rightarrow}1^+$ at the optimal point. The fractions in~\eqref{app:eq:dispatch_2_6} and \eqref{app:eq:dispatch_2_7} still need transformation since they are inequalities with posynomials in their denominators, which are not posynomials. We thus exploit arithmetic-geometric mean inequality (Lemma~\ref{Lemma:ArethmaticGeometric}) to approximate the denominators in \eqref{app:eq:dispatch_2_6} and \eqref{app:eq:dispatch_2_7} with monomials. To this end, we approximate the denominator in \eqref{app:eq:dispatch_2_6} as follows:
\begin{align}\label{app:eq:dispatch_2_8}
   &V^{\uparrow,\mathsf{C}}_{u}(\bm{v})=\sum_{x{=}0}^{N}\left(T_{u}^{\uparrow,\mathsf{max}}(t_{x})\right)^{-\frac{1}{p}}+N^{(k{-}1)}\epsilon\nonumber\\
   &\geq \widehat{V}^{\uparrow,\mathsf{C}}_{u}(\bm{v};\ell) \triangleq \prod_{x{=}0}^{N}\left(\frac{\left(T_{u}^{\uparrow,\mathsf{max}}(t_{x})\right)^{-\frac{1}{p}} V^{\uparrow,\mathsf{C}}_{u}([\bm{v}]^{(\ell-1)})}{\left[\left(T_{u}^{\uparrow,\mathsf{max}}(t_{x})\right)^{-\frac{1}{p}}\right]^{(\ell-1)}}\right)^{\frac{\left[\left(T_{u}^{\uparrow,\mathsf{max}}(t_{x})\right)^{-\frac{1}{p}}\right]^{(\ell-1)}}{V^{\uparrow,\mathsf{C}}_{u}([\bm{v}]^{(\ell-1)})}}\times\left(V^{\uparrow,\mathsf{C}}_{u}([\bm{v}]^{(\ell-1)})\right)^{\frac{\left[N^{(k{-}1)}\displaystyle\epsilon\right]^{(\ell-1)}}{V^{\uparrow,\mathsf{C}}_{u}([\bm{v}]^{(\ell-1)})}},
\end{align}
which gives us an approximation of~\eqref{app:eq:dispatch_2_6} as follows:
\begin{equation}\label{app:eq:dispatch_2_9}
    \frac{\tau_{u}^{\uparrow,{(k)}}+N^{(k)}\epsilon}{\widehat{V}^{\uparrow,\mathsf{C}}_{u}(\bm{v};\ell)}\le 1.
\end{equation}

Similarly, we approximate the denominator in \eqref{app:eq:dispatch_2_7} as follows:
\begin{align}\label{app:eq:dispatch_2_10}
    W^{\uparrow,\mathsf{C}}_{u}(\bm{v})&=\tau_{u}^{\uparrow,{(k)}}+N^{(k)}\epsilon \geq \widehat{W}^{\uparrow,\mathsf{C}}_{u}(\bm{v};\ell) \triangleq \left(\frac{\tau_{u}^{\uparrow,{(k)}} W^{\uparrow,\mathsf{C}}_{u}([\bm{v}]^{(\ell-1)})}{\left[\tau_{u}^{\uparrow,{(k)}}\right]^{(\ell-1)}}\right)^{\frac{\left[\tau_{u}^{\uparrow,{(k)}}\right]^{(\ell-1)}}{W^{\uparrow,\mathsf{C}}_{u}([\bm{v}]^{(\ell-1)})}}\times\left(W^{\uparrow,\mathsf{C}}_{u}([\bm{v}]^{(\ell-1)})\right)^{\frac{\left[N^{(k)}\displaystyle\epsilon\right]^{(\ell-1)}}{W^{\uparrow,\mathsf{C}}_{u}([\bm{v}]^{(\ell-1)})}},
\end{align}
which gives us an approximation of~\eqref{app:eq:dispatch_2_7} as follows:
\begin{equation}\label{app:eq:dispatch_2_11}
    \frac{\left(\mathscr{Q}^{\uparrow,\mathsf{C}}_{u}(t_{x})\right)^{-1}\left(\displaystyle\sum_{x{=}0}^{N}\left(T_{u}^{\uparrow,\mathsf{max}}(t_{x})\right)^{-\frac{1}{p}}+N^{(k{-}1)}\epsilon\right)}{\widehat{W}^{\uparrow,\mathsf{C}}_{u}(\bm{v};\ell)}\le 1.
\end{equation}

We next transform \eqref{app:eq:dispatch_2_4} into the standard form of GP. To do so, we rewrite \eqref{app:eq:dispatch_2_4} as follows:
\begin{equation}\label{app:eq:dispatch_2_4_1}
    \frac{T_{u}^{\uparrow,\mathsf{max}}(t_{x})}{\sum_{r{\in}\mathcal{R}_{b}}\left(T_{u,r}^{\uparrow}(t_{x})+\epsilon\right)^{p}} =1.
\end{equation}
Defining an auxiliary decision variable $T_{u,r}^{\uparrow,\epsilon}(t_{x})$ satisfying
\begin{equation}\label{app:eq:dispatch_2_4_2}
    T_{u,r}^{\uparrow,\epsilon}(t_{x})=T_{u,r}^{\uparrow}(t_{x})+\epsilon
\end{equation}
leads to
\begin{equation}\label{app:eq:dispatch_2_4_3}
    \frac{T_{u}^{\uparrow,\mathsf{max}}(t_{x})}{\sum_{r{\in}\mathcal{R}_{b}}\left(T_{u,r}^{\uparrow,\epsilon}(t_{x})\right)^{p}} =1.
\end{equation}
This constraint is not in the format of GP. Therefore, we transform it through splitting it into the following three inequalities:
\begin{equation}\label{app:eq:dispatch_2_4_4}
    \frac{T_{u}^{\uparrow,\mathsf{max}}(t_{x})}{\sum_{r{\in}\mathcal{R}_{b}}\left(T_{u,r}^{\uparrow,\epsilon}(t_{x})\right)^{p}}\le 1,
\end{equation}

\begin{equation}\label{app:eq:dispatch_2_4_5}
    \frac{\left(\mathscr{R}^{\uparrow,\mathsf{max}}_{u}(t_{x})\right)^{-1}\left(\sum_{r{\in}\mathcal{R}_{b}}\left(T_{u,r}^{\uparrow,\epsilon}(t_{x})\right)^{p}\right)}{T_{u}^{\uparrow,\mathsf{max}}(t_{x})}\le 1,
\end{equation}

\begin{equation}
    \mathscr{R}^{\uparrow,\mathsf{max}}_{u}(t_{x})\ge 1,
\end{equation}
where $\mathscr{R}^{\uparrow,\mathsf{max}}_{u}(t_{x})$ is an auxiliary decision variable added with a large penalty term to the objective function to force $\mathscr{R}^{\uparrow,\mathsf{max}}_{u}(t_{x}){\rightarrow}1^+$ at the optimal point. The fraction in~\eqref{app:eq:dispatch_2_4_4} still needs transformation since it is inequality with a posynomial in its denominator, which is not a posynomial. We thus exploit arithmetic-geometric mean inequality (Lemma~\ref{Lemma:ArethmaticGeometric}) to approximate the denominator in \eqref{app:eq:dispatch_2_4_4} with a monomial.
\begin{align}\label{app:eq:dispatch_2_4_6}
    &Y^{\uparrow,\mathsf{max},\mathsf{C}}_{u}(\bm{v})=\sum_{r{\in}\mathcal{R}_{b}}\left(T_{u,r}^{\uparrow,\epsilon}(t_{x})\right)^{p}\geq \widehat{Y}^{\uparrow,\mathsf{max},\mathsf{C}}_{u}(\bm{v};\ell) \triangleq \prod_{r{\in}\mathcal{R}_{b}}\left(\frac{\left(T_{u,r}^{\uparrow,\epsilon}(t_{x})\right)^{p} Y^{\uparrow,\mathsf{max},\mathsf{C}}_{u}([\bm{v}]^{(\ell-1)})}{\left[\left(T_{u,r}^{\uparrow,\epsilon}(t_{x})\right)^{p}\right]^{(\ell-1)}}\right)^{\frac{\left[\left(T_{u,r}^{\uparrow,\epsilon}(t_{x})\right)^{p}\right]^{(\ell-1)}}{Y^{\uparrow,\mathsf{max},\mathsf{C}}_{u}([\bm{v}]^{(\ell-1)})}},
\end{align}
which gives us an approximation of~\eqref{app:eq:dispatch_2_4_4} as follows:
\begin{equation}\label{app:eq:dispatch_2_4_6}
    \frac{T_{u}^{\uparrow,\mathsf{max}}(t_{x})}{\widehat{Y}^{\uparrow,\mathsf{max},\mathsf{C}}_{u}(\bm{v};\ell)}\le 1.
\end{equation}

We next transform \eqref{app:eq:dispatch_2_4_2} into GP format. To this end, we rewrite \eqref{app:eq:dispatch_2_4_2} as follows:
\begin{equation}\label{app:eq:dispatch_2_4_7}
    \frac{T_{u,r}^{\uparrow,\epsilon}(t_{x})}{T_{u,r}^{\uparrow}(t_{x})+\epsilon}=1.
\end{equation}
This constraint is not in the format of GP. Therefore, we transform it through splitting it into the following three inequalities:
\begin{equation}\label{app:eq:dispatch_2_4_4}
    \frac{T_{u,r}^{\uparrow,\epsilon}(t_{x})}{T_{u,r}^{\uparrow}(t_{x})+\epsilon}\le 1,
\end{equation}

\begin{equation}\label{app:eq:dispatch_2_4_5}
    \frac{\left(\mathscr{S}^{\uparrow,\epsilon}_{u,r}(t_{x})\right)^{-1}\left(T_{u,r}^{\uparrow}(t_{x})+\epsilon\right)}{T_{u,r}^{\uparrow,\epsilon}(t_{x})}\le 1,
\end{equation}

\begin{equation}
    \mathscr{S}^{\uparrow,\epsilon}_{u,r}(t_{x})\ge 1,
\end{equation}
where $\mathscr{S}^{\uparrow,\epsilon}_{u,r}(t_{x})$ is an auxiliary decision variable added with a large penalty term to the objective function to force $\mathscr{S}^{\uparrow,\epsilon}_{u,r}(t_{x}){\rightarrow}1^+$ at the optimal point. The fraction in~\eqref{app:eq:dispatch_2_4_4} still needs transformation since it is inequality with a posynomial in its denominator, which is not a posynomial. We thus exploit arithmetic-geometric mean inequality (Lemma~\ref{Lemma:ArethmaticGeometric}) to approximate the denominator in \eqref{app:eq:dispatch_2_4_4} with a monomial.
\begin{align}\label{app:eq:dispatch_2_4_6}
    Z^{\uparrow,\epsilon}_{u,r}(\bm{v})&=T_{u,r}^{\uparrow}(t_{x})+\epsilon \geq \widehat{Z}^{\uparrow,\epsilon}_{u,r}(\bm{v};\ell) \triangleq \left(\frac{T_{u,r}^{\uparrow}(t_{x}) Z^{\uparrow,\epsilon}_{u,r}([\bm{v}]^{(\ell-1)})}{\left[T_{u,r}^{\uparrow}(t_{x})\right]^{(\ell-1)}}\right)^{\frac{\left[T_{u,r}^{\uparrow}(t_{x})\right]^{(\ell-1)}}{Z^{\uparrow,\epsilon}_{u,r}([\bm{v}]^{(\ell-1)})}}\times\left(Z^{\uparrow,\epsilon}_{u,r}([\bm{v}]^{(\ell-1)})\right)^{\frac{\left[\displaystyle\epsilon\right]^{(\ell-1)}}{Z^{\uparrow,\epsilon}_{u,r}([\bm{v}]^{(\ell-1)})}},
\end{align}
which gives us an approximation of~\eqref{app:eq:dispatch_2_4_4} as follows:
\begin{equation}\label{app:eq:dispatch_2_4_7}
    \frac{T_{u,r}^{\uparrow,\epsilon}(t_{x})}{\widehat{Z}^{\uparrow,\epsilon}_{u,r}(\bm{v};\ell)}\le 1.
\end{equation}

We finally approximate constraint~\eqref{app:eq:dispatch_2} as follows:
\begin{tcolorbox}[ams align]
     &\frac{\tau_{u}^{\uparrow,{(k)}}+N^{(k)}\epsilon}{\widehat{V}^{\uparrow,\mathsf{C}}_{u}(\bm{v};\ell)}\le 1,~~~~~~\frac{\left(\mathscr{Q}^{\uparrow,\mathsf{C}}_{u}(t_{x})\right)^{-1}\left(\displaystyle\sum_{x{=}0}^{N}\left(T_{u}^{\uparrow,\mathsf{max}}(t_{x})\right)^{-\frac{1}{p}}+N^{(k{-}1)}\epsilon\right)}{\widehat{W}^{\uparrow,\mathsf{C}}_{u}(\bm{v};\ell)}\le 1,\nonumber\\
     &\frac{T_{u}^{\uparrow,\mathsf{max}}(t_{x})}{\widehat{Y}^{\uparrow,\mathsf{max},\mathsf{C}}_{u}(\bm{v};\ell)}\le 1,~~~~\frac{\left(\mathscr{R}^{\uparrow,\mathsf{max}}_{u}(t_{x})\right)^{-1}\left(\sum_{r{\in}\mathcal{R}_{b}}\left(T_{u,r}^{\uparrow,\epsilon}(t_{x})\right)^{p}\right)}{T_{u}^{\uparrow,\mathsf{max}}(t_{x})}\le 1,\nonumber\\
     & \frac{T_{u,r}^{\uparrow,\epsilon}(t_{x})}{\widehat{Z}^{\uparrow,\epsilon}_{u,r}(\bm{v};\ell)}\le 1,~~~\frac{\left(\mathscr{S}^{\uparrow,\epsilon}_{u,r}(t_{x})\right)^{-1}\left(T_{u,r}^{\uparrow}(t_{x})+\epsilon\right)}{T_{u,r}^{\uparrow,\epsilon}(t_{x})}\le 1,\nonumber\\
     &\frac{1}{\mathscr{Q}^{\uparrow,\mathsf{C}}_{u}(t_{x})}\le 1,~~~~~\frac{1}{\mathscr{R}^{\uparrow,\mathsf{max}}_{u}(t_{x})}\le 1,~~~~\frac{1}{\mathscr{S}^{\uparrow,\epsilon}_{u,r}(t_{x})}\le 1.\nonumber
\end{tcolorbox}

\noindent\textbf{Waiting time of CHUs (Proposition~\ref{propo:waiting_time}).}  Referring to Sec.~\ref{sec:communication_latency}, let us revisit the equation for waiting time of CHU $u$ below:
\begin{equation}\label{app:eq:wating_time_1}
    \tau_{u}^{\mathsf{W},{(k)}}=\max_{u'\in\mathcal{U}_{b}}\Big\{ \tau_{u'}^{\mathsf{LC},{(k)}}{+}\sum_{x{=}N^{(k{-}1)}{+}1}^{N^{(k)}}\max_{r{\in}\overline{\mathcal{R}}_{b}}\left\{\min\big\{\overline{\tau}_{u',u,r}^{\uparrow}(t_{x}),\left(t_{x{+}1}{-}t_{x}\right)\big\}\right\}\Big\}.
\end{equation}
The above equation can be rewritten as follows:
\begin{equation}\label{app:eq:wating_time_2}
    \tau_{u}^{\mathsf{W},{(k)}}=\max_{u'\in\mathcal{U}_{b}}\Big\{ \tau_{u'}^{\mathsf{LC},{(k)}}{+}\sum_{x{=}N^{(k{-}1)}{+}1}^{N^{(k)}}\max_{r{\in}\overline{\mathcal{R}}_{b}}\left\{T_{u',u,r}^{\uparrow}(t_{x})\right\}\Big\},
\end{equation}
where $T_{u',u,r}^{\uparrow}(t_{x})$ is given in \eqref{app:eq:dispersion_min_1}. Defining an auxiliary decision variable $\tau_{u,u'}^{\uparrow,{(k)}}$ satisfying
\begin{equation}\label{app:eq:wating_time_3}
    \tau_{u',u}^{\uparrow,{(k)}}=\sum_{x{=}N^{(k{-}1)}{+}1}^{N^{(k)}}\max_{r{\in}\overline{\mathcal{R}}_{b}}\left\{T_{u',u,r}^{\uparrow}(t_{x})\right\}
\end{equation}
gives us
\begin{equation}\label{app:eq:wating_time_4}
    \tau_{u}^{\mathsf{W},{(k)}}=\max_{u'\in\mathcal{U}_{b}}\Big\{ \tau_{u'}^{\mathsf{LC},{(k)}}{+}\tau_{u',u}^{\uparrow,{(k)}}\Big\}.
\end{equation}
Using the approximation $\max\{A, B\}\approx (A^{p}+B^{p})^{-\frac{1}{p}}$, which is tight when $p \gg 1$, and performing some algebraic manipulations give us
\begin{equation}\label{app:eq:wating_time_5}
    \frac{\tau_{u}^{\mathsf{W},{(k)}}}{\left(\displaystyle\sum_{u'\in\mathcal{U}_{b}}\left(\tau_{u'}^{\mathsf{LC},{(k)}}{+}\tau_{u',u}^{\uparrow,{(k)}}\right)^{p}\right)^{-\frac{1}{p}}}= 1.
\end{equation}
Defining an auxiliary decision variable $T^{\uparrow,(k)}_{u}$ satisfying
\begin{equation}\label{app:eq:wating_time_6}
    T^{\uparrow,(k)}_{u} = \sum_{u'\in\mathcal{U}_{b}}\left(\tau_{u'}^{\mathsf{LC},{(k)}}{+}\tau_{u',u}^{\uparrow,{(k)}}\right)^{p}
\end{equation}
results in 
\begin{equation}\label{app:eq:wating_time_7}
    \left(T^{\uparrow,(k)}_{u}\right)^{\frac{1}{p}}\tau_{u}^{\mathsf{W},{(k)}}= 1.
\end{equation}
This constraint is not in the format of GP. Therefore, we transform it through splitting it into the following three inequalities:
\begin{equation}\label{app:eq:wating_time_6_1_4_4}
    \left(T^{\uparrow,(k)}_{u}\right)^{\frac{1}{p}}\tau_{u}^{\mathsf{W},{(k)}}\le 1,
\end{equation}
\begin{equation}\label{app:eq:wating_time_6_1_4_5}
    \frac{1}{\mathscr{J}^{\uparrow,(k)}_{u}\left(T^{\uparrow,(k)}_{u}\right)^{\frac{1}{p}}\tau_{u}^{\mathsf{W},{(k)}}}\le 1,
\end{equation}
\begin{equation}
    \mathscr{J}^{\uparrow,(k)}_{u}\ge 1,
\end{equation}
where $\mathscr{J}^{\uparrow,(k)}_{u}$ is an auxiliary decision variable added with a large penalty term to the objective function to force $\mathscr{J}^{\uparrow,(k)}_{u}{\rightarrow}1^+$ at the optimal point.

We finally approximate constraint~\eqref{app:eq:wating_time_7} as follows:
\begin{tcolorbox}[ams align]
     &\left(T^{\uparrow,(k)}_{u}\right)^{\frac{1}{p}}\tau_{u}^{\mathsf{W},{(k)}}\le 1,~~~~\frac{1}{\mathscr{J}^{\uparrow,(k)}_{u}\left(T^{\uparrow,(k)}_{u}\right)^{\frac{1}{p}}\tau_{u}^{\mathsf{W},{(k)}}}\le 1, \frac{1}{\mathscr{J}^{\uparrow,(k)}_{u}}\le 1.\nonumber
\end{tcolorbox}

We next transform \eqref{app:eq:wating_time_6} into the standard form of GP. To do so, we rewrite \eqref{app:eq:wating_time_6} as follows:
\begin{equation}\label{app:eq:wating_time_6_1_4_1}
    \frac{T^{\uparrow,(k)}_{u}}{\sum_{u'\in\mathcal{U}_{b}}\left(\tau_{u'}^{\mathsf{LC},{(k)}}{+}\tau_{u',u}^{\uparrow,{(k)}}\right)^{p}} = 1.
\end{equation}
Defining an auxiliary decision variable $T_{u',u}^{\uparrow,(k)}$ satisfying
\begin{equation}\label{app:eq:wating_time_6_1_4_2}
    T_{u',u}^{\uparrow,(k)}=\tau_{u'}^{\mathsf{LC},{(k)}}{+}\tau_{u',u}^{\uparrow,{(k)}}
\end{equation}
leads to
\begin{equation}\label{app:eq:wating_time_6_1_4_3}
    \frac{T^{\uparrow,(k)}_{u}}{\sum_{u'\in\mathcal{U}_{b}}\left(T_{u',u}^{\uparrow,(k)}\right)^{p}} =1.
\end{equation}
This constraint is not in the format of GP. Therefore, we transform it through splitting it into the following three inequalities:
\begin{equation}\label{app:eq:wating_time_6_1_4_4}
    \frac{T^{\uparrow,(k)}_{u}}{\sum_{u'\in\mathcal{U}_{b}}\left(T_{u',u}^{\uparrow,(k)}\right)^{p}}\le 1,
\end{equation}
\begin{equation}\label{app:eq:wating_time_6_1_4_5}
    \frac{\left(\mathscr{C}^{\uparrow,(k)}_{u',u}\right)^{-1}\left(\sum_{u'\in\mathcal{U}_{b}}\left(T_{u',u}^{\uparrow,(k)}\right)^{p}\right)}{T^{\uparrow,(k)}_{u}}\le 1,
\end{equation}
\begin{equation}
    \mathscr{C}^{\uparrow,(k)}_{u',u}\ge 1,
\end{equation}
where $\mathscr{C}^{\uparrow,(k)}_{u',u}$ is an auxiliary decision variable added with a large penalty term to the objective function to force $\mathscr{C}^{\uparrow,(k)}_{u',u}{\rightarrow}1^+$ at the optimal point. The fraction in~\eqref{app:eq:wating_time_6_1_4_4} still needs transformation since it is inequality with a posynomial in its denominator, which is not a posynomial. We thus exploit arithmetic-geometric mean inequality (Lemma~\ref{Lemma:ArethmaticGeometric}) to approximate the denominator in \eqref{app:eq:wating_time_6_1_4_4} with a monomial.
\begin{align}\label{app:eq:wating_time_6_1_4_6}
    &F^{\uparrow,(k)}_{u',u}(\bm{v})=\sum_{u'\in\mathcal{U}_{b}}\left(T_{u',u}^{\uparrow,(k)}\right)^{p}\geq \widehat{F}^{\uparrow,(k)}_{u',u}(\bm{v};\ell) \triangleq \prod_{u'\in\mathcal{U}_{b}}\left(\frac{\left(T_{u',u}^{\uparrow,(k)}\right)^{p} F^{\uparrow,(k)}_{u',u}([\bm{v}]^{(\ell-1)})}{\left[\left(T_{u',u}^{\uparrow,(k)}\right)^{p}\right]^{(\ell-1)}}\right)^{\frac{\left[\left(T_{u',u}^{\uparrow,(k)}\right)^{p}\right]^{(\ell-1)}}{F^{\uparrow,(k)}_{u',u}([\bm{v}]^{(\ell-1)})}},
\end{align}
which gives us an approximation of~\eqref{app:eq:wating_time_6_1_4_4} as follows:
\begin{equation}\label{app:eq:wating_time_6_1_4_6}
    \frac{T^{\uparrow,(k)}_{u}}{\widehat{F}^{\uparrow,(k)}_{u',u}(\bm{v};\ell)}\le 1.
\end{equation}

We finally approximate constraint~\eqref{app:eq:wating_time_6} as follows:
\begin{tcolorbox}[ams align]
     &\frac{T^{\uparrow,(k)}_{u}}{\widehat{F}^{\uparrow,(k)}_{u',u}(\bm{v};\ell)}\le 1,~~~~\frac{\left(\mathscr{C}^{\uparrow,(k)}_{u',u}\right)^{-1}\left(\sum_{u'\in\mathcal{U}_{b}}\left(T_{u',u}^{\uparrow,(k)}\right)^{p}\right)}{T^{\uparrow,(k)}_{u}}\le 1,~~~~~\frac{1}{\mathscr{C}^{\uparrow,(k)}_{u',u}}\le 1.\nonumber
\end{tcolorbox}

We next transform \eqref{app:eq:wating_time_6_1_4_2} into GP format. To this end, we rewrite \eqref{app:eq:wating_time_6_1_4_2} as follows:
\begin{equation}\label{app:eq:wating_time_6_1_4_7}
    \frac{T_{u',u}^{\uparrow,(k)}}{\tau_{u'}^{\mathsf{LC},{(k)}}{+}\tau_{u',u}^{\uparrow,{(k)}}}=1.
\end{equation}
This constraint is not in the format of GP. Therefore, we transform it through splitting it into the following three inequalities:
\begin{equation}\label{app:eq:wating_time_6_1_4_4}
    \frac{T_{u',u}^{\uparrow,(k)}}{\tau_{u'}^{\mathsf{LC},{(k)}}{+}\tau_{u',u}^{\uparrow,{(k)}}}\le 1,
\end{equation}
\begin{equation}\label{app:eq:wating_time_6_1_4_5}
    \frac{\left(\mathscr{H}^{\uparrow,(k)}_{u',u}\right)^{-1}\left(\tau_{u'}^{\mathsf{LC},{(k)}}{+}\tau_{u',u}^{\uparrow,{(k)}}\right)}{T_{u',u}^{\uparrow,(k)}}\le 1,
\end{equation}
\begin{equation}
    \mathscr{H}^{\uparrow,(k)}_{u',u}\ge 1,
\end{equation}
where $\mathscr{H}^{\uparrow,(k)}_{u',u}$ is an auxiliary decision variable added with a large penalty term to the objective function to force $\mathscr{H}^{\uparrow,(k)}_{u',u}{\rightarrow}1^+$ at the optimal point. The fraction in~\eqref{app:eq:wating_time_6_1_4_4} still needs transformation since it is inequality with a posynomial in its denominator, which is not a posynomial. We thus exploit arithmetic-geometric mean inequality (Lemma~\ref{Lemma:ArethmaticGeometric}) to approximate the denominator in \eqref{app:eq:wating_time_6_1_4_4} with a monomial.
\begin{align}\label{app:eq:wating_time_6_1_4_6}
    J^{\uparrow,(k)}_{u',u}(\bm{v})&=\tau_{u'}^{\mathsf{LC},{(k)}}{+}\tau_{u',u}^{\uparrow,{(k)}} \geq \widehat{J}^{\uparrow,(k)}_{u',u}(\bm{v};\ell) \triangleq \left(\frac{\tau_{u'}^{\mathsf{LC},{(k)}} J^{\uparrow,(k)}_{u',u}([\bm{v}]^{(\ell-1)})}{\left[\tau_{u'}^{\mathsf{LC},{(k)}}\right]^{(\ell-1)}}\right)^{\frac{\left[\tau_{u'}^{\mathsf{LC},{(k)}}\right]^{(\ell-1)}}{J^{\uparrow,(k)}_{u',u}([\bm{v}]^{(\ell-1)})}}\times \left(\frac{\tau_{u',u}^{\uparrow,{(k)}} J^{\uparrow,(k)}_{u',u}([\bm{v}]^{(\ell-1)})}{\left[\tau_{u',u}^{\uparrow,{(k)}}\right]^{(\ell-1)}}\right)^{\frac{\left[\tau_{u',u}^{\uparrow,{(k)}}\right]^{(\ell-1)}}{J^{\uparrow,(k)}_{u',u}([\bm{v}]^{(\ell-1)})}},
\end{align}
which gives us an approximation of~\eqref{app:eq:wating_time_6_1_4_4} as follows:
\begin{equation}\label{app:eq:wating_time_6_1_4_7}
    \frac{T_{u',u}^{\uparrow,(k)}}{\widehat{J}^{\uparrow,(k)}_{u',u}(\bm{v};\ell)}\le 1.
\end{equation}

We finally approximate constraint~\eqref{app:eq:wating_time_6_1_4_2} as follows:
\begin{tcolorbox}[ams align]
     &\frac{T_{u',u}^{\uparrow,(k)}}{\widehat{J}^{\uparrow,(k)}_{u',u}(\bm{v};\ell)}\le 1,~~~~\frac{\left(\mathscr{H}^{\uparrow,(k)}_{u',u}\right)^{-1}\left(\tau_{u'}^{\mathsf{LC},{(k)}}{+}\tau_{u',u}^{\uparrow,{(k)}}\right)}{T_{u',u}^{\uparrow,(k)}}\le 1,~~~~~\frac{1}{\mathscr{H}^{\uparrow,(k)}_{u',u}}\le 1.\nonumber
\end{tcolorbox}

We next aim to transform \eqref{app:eq:wating_time_3} into the standard form of GP. In doing so, we rewrite \eqref{app:eq:wating_time_3} as follows:
\begin{equation}\label{app:eq:wating_time_1_2}
\begin{aligned}
   \tau_{u',u}^{\uparrow,{(k)}}&=\sum_{x{=}0}^{N}\max_{r{\in}\overline{\mathcal{R}}_{b}}\Big\{T_{u',u,r}^{\uparrow}(t_{x})+\epsilon\Big\}-\sum_{x{=}0}^{N}\epsilon\\
   &=\sum_{x{=}0}^{N}\max_{r{\in}\overline{\mathcal{R}}_{b}}\Big\{T_{u',u,r}^{\uparrow}(t_{x})+\epsilon\Big\}-N\epsilon.
\end{aligned}
\end{equation}
Performing some algebraic manipulations gives us
\begin{equation}\label{app:eq:wating_time_1_2}
   \frac{\tau_{u',u}^{\uparrow,{(k)}}+N^{(k)}\epsilon}{\displaystyle\sum_{x{=}0}^{N}\max_{r{\in}\overline{\mathcal{R}}_{b}}\Big\{T_{u',u,r}^{\uparrow}(t_{x})+\epsilon\Big\}+N^{(k{-}1)}\epsilon}=1.
\end{equation} 
Using the approximation $\max\{A, B\}\approx (A^{p}+B^{p})^{-\frac{1}{p}}$, which is tight when $p \gg 1$, gives us
\begin{equation}\label{app:eq:wating_time_1_3}
   \frac{\tau_{u',u}^{\uparrow,{(k)}}+N^{(k)}\epsilon}{\displaystyle\sum_{x{=}0}^{N}\left(\sum_{r{\in}\overline{\mathcal{R}}_{b}}\left(T_{u',u,r}^{\uparrow}(t_{x})+\epsilon\right)^{p}\right)^{-\frac{1}{p}}+N^{(k{-}1)}\epsilon}=1.
\end{equation} 
Defining an auxiliary decision variable $T_{u',u}^{\uparrow,\mathsf{max}}(t_{x})$ that satisfies
\begin{equation}\label{app:eq:wating_time_1_4}
    T_{u',u}^{\uparrow,\mathsf{max}}(t_{x}) =\sum_{r{\in}\overline{\mathcal{R}}_{b}}\left(T_{u',u,r}^{\uparrow}(t_{x})+\epsilon\right)^{p}
\end{equation}
results in
\begin{equation}\label{app:eq:wating_time_1_5}
   \frac{\tau_{u',u}^{\uparrow,{(k)}}+N^{(k)}\epsilon}{\displaystyle\sum_{x{=}0}^{N}\left(T_{u',u}^{\uparrow,\mathsf{max}}(t_{x})\right)^{-\frac{1}{p}}+N^{(k{-}1)}\epsilon}=1.
\end{equation} 
This constraint is not in the format of GP. Therefore, we transform it through splitting it into the following three inequalities:
\begin{equation}\label{app:eq:wating_time_1_6}
    \frac{\tau_{u',u}^{\uparrow,{(k)}}+N^{(k)}\epsilon}{\displaystyle\sum_{x{=}0}^{N}\left(T_{u',u}^{\uparrow,\mathsf{max}}(t_{x})\right)^{-\frac{1}{p}}+N^{(k{-}1)}\epsilon}\le 1,
\end{equation}

\begin{equation}\label{app:eq:wating_time_1_7}
    \frac{\left(\mathscr{Q}^{\uparrow}_{u',u}(t_{x})\right)^{-1}\left(\displaystyle\sum_{x{=}0}^{N}\left(T_{u',u}^{\uparrow,\mathsf{max}}(t_{x})\right)^{-\frac{1}{p}}+N^{(k{-}1)}\epsilon\right)}{\tau_{u',u}^{\uparrow,{(k)}}+N^{(k)}\epsilon}\le 1,
\end{equation}

\begin{equation}
    \mathscr{Q}^{\uparrow}_{u',u}(t_{x})\ge 1,
\end{equation}
where $\mathscr{Q}^{\uparrow}_{u',u}(t_{x})$ is an auxiliary decision variable added with a large penalty term to the objective function to force $\mathscr{Q}^{\uparrow}_{u',u}(t_{x}){\rightarrow}1^+$ at the optimal point. The fractions in~\eqref{app:eq:wating_time_1_6} and \eqref{app:eq:wating_time_1_7} still need transformation since they are inequalities with posynomials in their denominators, which are not posynomials. We thus exploit arithmetic-geometric mean inequality (Lemma~\ref{Lemma:ArethmaticGeometric}) to approximate the denominators in \eqref{app:eq:wating_time_1_6} and \eqref{app:eq:wating_time_1_7} with monomials. To this end, we approximate the denominator in \eqref{app:eq:wating_time_1_6} as follows:
\begin{align}\label{app:eq:wating_time_1_8}
   &V^{\uparrow}_{u',u}(\bm{v})=\sum_{x{=}0}^{N}\left(T_{u',u}^{\uparrow,\mathsf{max}}(t_{x})\right)^{-\frac{1}{p}}+N^{(k{-}1)}\epsilon\nonumber\\
   &\geq \widehat{V}^{\uparrow}_{u',u}(\bm{v};\ell) \triangleq \prod_{x{=}0}^{N}\left(\frac{\left(T_{u',u}^{\uparrow,\mathsf{max}}(t_{x})\right)^{-\frac{1}{p}} V^{\uparrow}_{u',u}([\bm{v}]^{(\ell-1)})}{\left[\left(T_{u',u}^{\uparrow,\mathsf{max}}(t_{x})\right)^{-\frac{1}{p}}\right]^{(\ell-1)}}\right)^{\frac{\left[\left(T_{u',u}^{\uparrow,\mathsf{max}}(t_{x})\right)^{-\frac{1}{p}}\right]^{(\ell-1)}}{V^{\uparrow}_{u',u}([\bm{v}]^{(\ell-1)})}}\times\left(V^{\uparrow}_{u',u}([\bm{v}]^{(\ell-1)})\right)^{\frac{\left[N^{(k{-}1)}\displaystyle\epsilon\right]^{(\ell-1)}}{V^{\uparrow}_{u',u}([\bm{v}]^{(\ell-1)})}},
\end{align}
which gives us an approximation of~\eqref{app:eq:wating_time_1_6} as follows:
\begin{equation}\label{app:eq:wating_time_1_9}
    \frac{\tau_{u',u}^{\uparrow,{(k)}}+N^{(k)}\epsilon}{\widehat{V}^{\uparrow}_{u',u}(\bm{v};\ell)}\le 1.
\end{equation}

Similarly, we approximate the denominator in \eqref{app:eq:wating_time_1_7} as follows:
\begin{align}\label{app:eq:wating_time_1_10}
    W^{\uparrow}_{u',u}(\bm{v})&=\tau_{u',u}^{\uparrow,{(k)}}+N^{(k)}\epsilon \geq \widehat{W}^{\uparrow}_{u',u}(\bm{v};\ell) \triangleq \left(\frac{\tau_{u',u}^{\uparrow,{(k)}} W^{\uparrow}_{u',u}([\bm{v}]^{(\ell-1)})}{\left[\tau_{u',u}^{\uparrow,{(k)}}\right]^{(\ell-1)}}\right)^{\frac{\left[\tau_{u',u}^{\uparrow,{(k)}}\right]^{(\ell-1)}}{W^{\uparrow}_{u',u}([\bm{v}]^{(\ell-1)})}}\times\left(W^{\uparrow}_{u',u}([\bm{v}]^{(\ell-1)})\right)^{\frac{\left[N^{(k)}\displaystyle\epsilon\right]^{(\ell-1)}}{W^{\uparrow}_{u',u}([\bm{v}]^{(\ell-1)})}},
\end{align}
which gives us an approximation of~\eqref{app:eq:wating_time_1_7} as follows:
\begin{equation}\label{app:eq:wating_time_1_11}
    \frac{\left(\mathscr{Q}^{\uparrow}_{u',u}(t_{x})\right)^{-1}\left(\displaystyle\sum_{x{=}0}^{N}\left(T_{u',u}^{\uparrow,\mathsf{max}}(t_{x})\right)^{-\frac{1}{p}}+N^{(k{-}1)}\epsilon\right)}{\widehat{W}^{\uparrow}_{u',u}(\bm{v};\ell)}\le 1.
\end{equation}

We next transform \eqref{app:eq:wating_time_1_4} into the standard form of GP. To do so, we rewrite \eqref{app:eq:wating_time_1_4} as follows:
\begin{equation}\label{app:eq:wating_time_1_4_1}
    \frac{T_{u',u}^{\uparrow,\mathsf{max}}(t_{x})}{\sum_{r{\in}\overline{\mathcal{R}}_{b}}\left(T_{u',u,r}^{\uparrow}(t_{x})+\epsilon\right)^{p}} =1.
\end{equation}
Defining an auxiliary decision variable $T_{u',u,r}^{\uparrow,\epsilon}(t_{x})$ satisfying
\begin{equation}\label{app:eq:wating_time_1_4_2}
    T_{u',u,r}^{\uparrow,\epsilon}(t_{x})=T_{u',u,r}^{\uparrow}(t_{x})+\epsilon
\end{equation}
leads to
\begin{equation}\label{app:eq:wating_time_1_4_3}
    \frac{T_{u',u}^{\uparrow,\mathsf{max}}(t_{x})}{\sum_{r{\in}\overline{\mathcal{R}}_{b}}\left(T_{u',u,r}^{\uparrow,\epsilon}(t_{x})\right)^{p}} =1.
\end{equation}
This constraint is not in the format of GP. Therefore, we transform it through splitting it into the following three inequalities:
\begin{equation}\label{app:eq:wating_time_1_4_4}
    \frac{T_{u',u}^{\uparrow,\mathsf{max}}(t_{x})}{\sum_{r{\in}\overline{\mathcal{R}}_{b}}\left(T_{u',u,r}^{\uparrow,\epsilon}(t_{x})\right)^{p}}\le 1,
\end{equation}

\begin{equation}\label{app:eq:wating_time_1_4_5}
    \frac{\left(\mathscr{R}^{\uparrow,\mathsf{max}}_{u',u}(t_{x})\right)^{-1}\left(\sum_{r{\in}\overline{\mathcal{R}}_{b}}\left(T_{u',u,r}^{\uparrow,\epsilon}(t_{x})\right)^{p}\right)}{T_{u',u}^{\uparrow,\mathsf{max}}(t_{x})}\le 1,
\end{equation}

\begin{equation}
    \mathscr{R}^{\uparrow,\mathsf{max}}_{u',u}(t_{x})\ge 1,
\end{equation}
where $\mathscr{R}^{\uparrow,\mathsf{max}}_{u',u}(t_{x})$ is an auxiliary decision variable added with a large penalty term to the objective function to force $\mathscr{R}^{\uparrow,\mathsf{max}}_{u',u}(t_{x}){\rightarrow}1^+$ at the optimal point. The fraction in~\eqref{app:eq:wating_time_1_4_4} still needs transformation since it is inequality with a posynomial in its denominator, which is not a posynomial. We thus exploit arithmetic-geometric mean inequality (Lemma~\ref{Lemma:ArethmaticGeometric}) to approximate the denominator in \eqref{app:eq:wating_time_1_4_4} with a monomial.
\begin{align}\label{app:eq:wating_time_1_4_6}
    &Y^{\uparrow,\mathsf{max}}_{u',u}(\bm{v})=\sum_{r{\in}\overline{\mathcal{R}}_{b}}\left(T_{u',u,r}^{\uparrow,\epsilon}(t_{x})\right)^{p}\geq \widehat{Y}^{\uparrow,\mathsf{max}}_{u',u}(\bm{v};\ell) \triangleq \prod_{r{\in}\overline{\mathcal{R}}_{b}}\left(\frac{\left(T_{u',u,r}^{\uparrow,\epsilon}(t_{x})\right)^{p} Y^{\uparrow,\mathsf{max}}_{u',u}([\bm{v}]^{(\ell-1)})}{\left[\left(T_{u',u,r}^{\uparrow,\epsilon}(t_{x})\right)^{p}\right]^{(\ell-1)}}\right)^{\frac{\left[\left(T_{u',u,r}^{\uparrow,\epsilon}(t_{x})\right)^{p}\right]^{(\ell-1)}}{Y^{\uparrow,\mathsf{max}}_{u',u}([\bm{v}]^{(\ell-1)})}},
\end{align}
which gives us an approximation of~\eqref{app:eq:wating_time_1_4_4} as follows:
\begin{equation}\label{app:eq:wating_time_1_4_6}
    \frac{T_{u',u}^{\uparrow,\mathsf{max}}(t_{x})}{\widehat{Y}^{\uparrow,\mathsf{max}}_{u',u}(\bm{v};\ell)}\le 1.
\end{equation}

We next transform \eqref{app:eq:wating_time_1_4_2} into GP format. To this end, we rewrite \eqref{app:eq:wating_time_1_4_2} as follows:
\begin{equation}\label{app:eq:wating_time_1_4_7}
    \frac{T_{u',u,r}^{\uparrow,\epsilon}(t_{x})}{T_{u',u,r}^{\uparrow}(t_{x})+\epsilon}=1.
\end{equation}
This constraint is not in the format of GP. Therefore, we transform it through splitting it into the following three inequalities:
\begin{equation}\label{app:eq:wating_time_1_4_4}
    \frac{T_{u',u,r}^{\uparrow,\epsilon}(t_{x})}{T_{u',u,r}^{\uparrow}(t_{x})+\epsilon}\le 1,
\end{equation}

\begin{equation}\label{app:eq:wating_time_1_4_5}
    \frac{\left(\mathscr{S}^{\uparrow,\epsilon}_{u',u,r}(t_{x})\right)^{-1}\left(T_{u',u,r}^{\uparrow}(t_{x})+\epsilon\right)}{T_{u',u,r}^{\uparrow,\epsilon}(t_{x})}\le 1,
\end{equation}

\begin{equation}
    \mathscr{S}^{\uparrow,\epsilon}_{u',u,r}(t_{x})\ge 1,
\end{equation}
where $\mathscr{S}^{\uparrow,\epsilon}_{u',u,r}(t_{x})$ is an auxiliary decision variable added with a large penalty term to the objective function to force $\mathscr{S}^{\uparrow,\epsilon}_{u',u,r}(t_{x}){\rightarrow}1^+$ at the optimal point. The fraction in~\eqref{app:eq:wating_time_1_4_4} still needs transformation since it is inequality with a posynomial in its denominator, which is not a posynomial. We thus exploit arithmetic-geometric mean inequality (Lemma~\ref{Lemma:ArethmaticGeometric}) to approximate the denominator in \eqref{app:eq:wating_time_1_4_4} with a monomial.
\begin{align}\label{app:eq:wating_time_1_4_6}
    Z^{\uparrow,\epsilon}_{u',u,r}(\bm{v})&=T_{u',u,r}^{\uparrow}(t_{x})+\epsilon \geq \widehat{Z}^{\uparrow,\epsilon}_{u',u,r}(\bm{v};\ell) \triangleq \left(\frac{T_{u',u,r}^{\uparrow}(t_{x}) Z^{\uparrow,\epsilon}_{u',u,r}([\bm{v}]^{(\ell-1)})}{\left[T_{u',u,r}^{\uparrow}(t_{x})\right]^{(\ell-1)}}\right)^{\frac{\left[T_{u',u,r}^{\uparrow}(t_{x})\right]^{(\ell-1)}}{Z^{\uparrow,\epsilon}_{u',u,r}([\bm{v}]^{(\ell-1)})}}\times\left(Z^{\uparrow,\epsilon}_{u',u,r}([\bm{v}]^{(\ell-1)})\right)^{\frac{\left[\displaystyle\epsilon\right]^{(\ell-1)}}{Z^{\uparrow,\epsilon}_{u',u,r}([\bm{v}]^{(\ell-1)})}},
\end{align}
which gives us an approximation of~\eqref{app:eq:wating_time_1_4_4} as follows:
\begin{equation}\label{app:eq:wating_time_1_4_7}
    \frac{T_{u',u,r}^{\uparrow,\epsilon}(t_{x})}{\widehat{Z}^{\uparrow,\epsilon}_{u',u,r}(\bm{v};\ell)}\le 1.
\end{equation}

We finally approximate constraint~\eqref{app:eq:wating_time_3} as follows:
\begin{tcolorbox}[ams align]
     &\frac{\tau_{u',u}^{\uparrow,{(k)}}+N^{(k)}\epsilon}{\widehat{V}^{\uparrow}_{u',u}(\bm{v};\ell)}\le 1,~~~~~~\frac{\left(\mathscr{Q}^{\uparrow}_{u',u}(t_{x})\right)^{-1}\left(\displaystyle\sum_{x{=}0}^{N}\left(T_{u',u}^{\uparrow,\mathsf{max}}(t_{x})\right)^{-\frac{1}{p}}+N^{(k{-}1)}\epsilon\right)}{\widehat{W}^{\uparrow}_{u',u}(\bm{v};\ell)}\le 1,\nonumber\\
     &\frac{T_{u',u}^{\uparrow,\mathsf{max}}(t_{x})}{\widehat{Y}^{\uparrow,\mathsf{max}}_{u',u}(\bm{v};\ell)}\le 1,~~~~\frac{\left(\mathscr{R}^{\uparrow,\mathsf{max}}_{u',u}(t_{x})\right)^{-1}\left(\sum_{r{\in}\overline{\mathcal{R}}_{b}}\left(T_{u',u,r}^{\uparrow,\epsilon}(t_{x})\right)^{p}\right)}{T_{u',u}^{\uparrow,\mathsf{max}}(t_{x})}\le 1,\nonumber\\
     & \frac{T_{u',u,r}^{\uparrow,\epsilon}(t_{x})}{\widehat{Z}^{\uparrow,\epsilon}_{u',u,r}(\bm{v};\ell)}\le 1,~~~\frac{\left(\mathscr{S}^{\uparrow,\epsilon}_{u',u,r}(t_{x})\right)^{-1}\left(T_{u',u,r}^{\uparrow}(t_{x})+\epsilon\right)}{T_{u',u,r}^{\uparrow,\epsilon}(t_{x})}\le 1,\nonumber\\
     &\frac{1}{\mathscr{Q}^{\uparrow}_{u',u}(t_{x})}\le 1,~~~~\frac{1}{\mathscr{R}^{\uparrow,\mathsf{max}}_{u',u}(t_{x})}\le 1,~~~~\frac{1}{\mathscr{S}^{\uparrow,\epsilon}_{u',u,r}(t_{x})}\le 1.\nonumber
\end{tcolorbox}

\textbf{Completion time of global round $k$.} Referring to Sec.~\ref{sec:communication_latency}, let us revisit the equation for completion time of global round $k$ below:
\begin{equation}\label{app:cons:completion_time_1}
    T^{(k)}= T^{(k{-}1)}+\max_{b\in \Omega,u\in \mathcal{U}_{b}}\{\Psi(\mathscr{D}_{u}^{\nuparrow,(k)})\},
\end{equation}
where 
\begin{equation}\label{app:cons:completion_time_2}
    \Psi(\mathscr{D}_{u}^{\nuparrow,(k)}){=}\lambda_{u}^{(k)}\left(\tau_{b}^{\downarrow,{(k)}}{+}\max\{\tau_{u}^{\mathsf{LC},{(k)}}, \tau_{u}^{\mathsf{W},{(k)}}\}{+} \tau_{u}^{\uparrow,{(k)}}\right).
\end{equation}

In the following we first transform \eqref{app:cons:completion_time_2} into standard GP format. In doing so, we define $\Psi(\mathscr{D}_{u}^{\nuparrow,(k)})$ as an auxiliary decision variable that must satisfy the following constraint (derived through performing some algebraic manipulations on \eqref{app:cons:completion_time_2}):
\begin{equation}\label{app:cons:completion_time_3}
    \frac{\Psi(\mathscr{D}_{u}^{\nuparrow,(k)})}{\lambda_{u}^{(k)}\tau_{b}^{\downarrow,{(k)}}{+}\max\{\lambda_{u}^{(k)}\tau_{u}^{\mathsf{LC},{(k)}}+\epsilon, \lambda_{u}^{(k)}\tau_{u}^{\mathsf{W},{(k)}}+\epsilon\}{+} \lambda_{u}^{(k)}\tau_{u}^{\uparrow,{(k)}}}{=}1.
\end{equation}
Subsequently, we rewrite \eqref{app:cons:completion_time_1} as follows:
\begin{equation}\label{app:cons:completion_time_4}
    T^{(k)}+\epsilon= T^{(k{-}1)}+\max_{b\in \Omega,u\in \mathcal{U}_{b}}\{\Psi(\mathscr{D}_{u}^{\nuparrow,(k)})\},
\end{equation}
where $\epsilon$ is added to \eqref{app:cons:completion_time_3} and \eqref{app:cons:completion_time_4} to avoid division by zero. Using the approximation $\max\{A, B\}\approx (A^{p}+B^{p})^{-\frac{1}{p}}$, which is tight when $p \gg 1$, gives us
\begin{equation}\label{app:cons:completion_time_5}
    \frac{\Psi(\mathscr{D}_{u}^{\nuparrow,(k)})}{\lambda_{u}^{(k)}\tau_{b}^{\downarrow,{(k)}} + \left(\left(\lambda_{u}^{(k)}\tau_{u}^{\mathsf{LC},{(k)}} + \epsilon\right)^{p}+ \left(\lambda_{u}^{(k)}\tau_{u}^{\mathsf{W},{(k)}}+\epsilon\right)^{p}\right)^{-\frac{1}{p}} + \lambda_{u}^{(k)}\tau_{u}^{\uparrow,{(k)}}}{=}1.
\end{equation}
Defining an auxiliary decision variable $T^{\mathsf{LW},(k)}_{u}$ satisfying 
\begin{equation}\label{app:cons:completion_time_6}
    T^{\mathsf{LW},(k)}_{u}=\left(\lambda_{u}^{(k)}\tau_{u}^{\mathsf{LC},{(k)}} + \epsilon\right)^{p}+ \left(\lambda_{u}^{(k)}\tau_{u}^{\mathsf{W},{(k)}}+\epsilon\right)^{p}
\end{equation}
gives us
\begin{equation}\label{app:cons:completion_time_7}
    \frac{\Psi(\mathscr{D}_{u}^{\nuparrow,(k)})}{\lambda_{u}^{(k)}\tau_{b}^{\downarrow,{(k)}} + \left(T^{\mathsf{LW},(k)}_{u}\right)^{-\frac{1}{p}} + \lambda_{u}^{(k)}\tau_{u}^{\uparrow,{(k)}}}{=}1.
\end{equation}
This constraint is not in the format of GP. Therefore, we transform it via introducing an auxiliary variable as two inequalities:
\begin{equation}\label{app:cons:completion_time_8}
    \frac{\Psi(\mathscr{D}_{u}^{\nuparrow,(k)})}{\lambda_{u}^{(k)}\tau_{b}^{\downarrow,{(k)}} + \left(T^{\mathsf{LW},(k)}_{u}\right)^{-\frac{1}{p}} + \lambda_{u}^{(k)}\tau_{u}^{\uparrow,{(k)}}}\le 1,
\end{equation}
\begin{equation}\label{app:cons:completion_time_9}
    \frac{\lambda_{u}^{(k)}\tau_{b}^{\downarrow,{(k)}} + \left(T^{\mathsf{LW},(k)}_{u}\right)^{-\frac{1}{p}} + \lambda_{u}^{(k)}\tau_{u}^{\uparrow,{(k)}}}{\mathscr{B}^{\nuparrow,(k)}_{u}\Psi(\mathscr{D}_{u}^{\nuparrow,(k)})}\le 1,
\end{equation}
\begin{equation}
    \mathscr{B}^{\nuparrow,(k)}_{u}\ge 1,
\end{equation}
where $\mathscr{B}^{\nuparrow,(k)}_{u}$ is added with a large penalty term to the objective function to force $\mathscr{B}^{\nuparrow,(k)}_{u}{\rightarrow}1^+$ at the optimal point. The fraction in~\eqref{app:cons:completion_time_8} still needs transformation since it is an inequality with a posynomial in the denominator, which is not a posynomial. We thus exploit arithmetic-geometric mean inequality (Lemma~\ref{Lemma:ArethmaticGeometric}) to approximate the denominator with a monomial:
\begin{align}\label{app:cons:completion_time_10}
    F^{\nuparrow,(k)}_{u}(\bm{v})&=\lambda_{u}^{(k)}\tau_{b}^{\downarrow,{(k)}} + \left(T^{\mathsf{LW},(k)}_{u}\right)^{-\frac{1}{p}} + \lambda_{u}^{(k)}\tau_{u}^{\uparrow,{(k)}}\geq \widehat{F}^{\nuparrow,(k)}_{u}(\bm{v};\ell) \triangleq \left(\frac{\lambda_{u}^{(k)}\tau_{b}^{\downarrow,{(k)}} F^{\nuparrow,(k)}_{u}([\bm{v}]^{(\ell-1)})}{\left[\lambda_{u}^{(k)}\tau_{b}^{\downarrow,{(k)}}\right]^{(\ell-1)}}\right)^{\frac{\left[\lambda_{u}^{(k)}\tau_{b}^{\downarrow,{(k)}}\right]^{(\ell-1)}}{F^{\nuparrow,(k)}_{u}([\bm{v}]^{(\ell-1)})}}\nonumber\\
    &\times \left(\frac{\left(T^{\mathsf{LW},(k)}_{u}\right)^{-\frac{1}{p}} F^{\nuparrow,(k)}_{u}([\bm{v}]^{(\ell-1)})}{\left[\left(T^{\mathsf{LW},(k)}_{u}\right)^{-\frac{1}{p}}\right]^{(\ell-1)}}\right)^{\frac{\left[\left(T^{\mathsf{LW},(k)}_{u}\right)^{-\frac{1}{p}}\right]^{(\ell-1)}}{F^{\nuparrow,(k)}_{u}([\bm{v}]^{(\ell-1)})}}\times \left(\frac{\lambda_{u}^{(k)}\tau_{u}^{\uparrow,{(k)}} F^{\nuparrow,(k)}_{u}([\bm{v}]^{(\ell-1)})}{\left[\lambda_{u}^{(k)}\tau_{u}^{\uparrow,{(k)}}\right]^{(\ell-1)}}\right)^{\frac{\left[\lambda_{u}^{(k)}\tau_{u}^{\uparrow,{(k)}}\right]^{(\ell-1)}}{F^{\nuparrow,(k)}_{u}([\bm{v}]^{(\ell-1)})}}
\end{align}
which gives us an approximation of~\eqref{app:cons:completion_time_8} as follows:
\begin{equation}\label{app:cons:completion_time_11}
    \frac{\Psi(\mathscr{D}_{u}^{\nuparrow,(k)})}{\widehat{F}^{\nuparrow,(k)}_{u}(\bm{v};\ell)}\le 1.
\end{equation}

We finally approximate constraint~\eqref{app:cons:completion_time_7} as follows:
\begin{tcolorbox}[ams align]
     &\frac{\Psi(\mathscr{D}_{u}^{\nuparrow,(k)})}{\widehat{F}^{\nuparrow,(k)}_{u}(\bm{v};\ell)}\le 1,~~~~~ \frac{\lambda_{u}^{(k)}\tau_{b}^{\downarrow,{(k)}} + \left(T^{\mathsf{LW},(k)}_{u}\right)^{-\frac{1}{p}} + \lambda_{u}^{(k)}\tau_{u}^{\uparrow,{(k)}}}{\mathscr{B}^{\nuparrow,(k)}_{u}\Psi(\mathscr{D}_{u}^{\nuparrow,(k)})}\le 1,~~~~~\frac{1}{\mathscr{B}^{\nuparrow,(k)}_{u}}\le 1.\nonumber
\end{tcolorbox}

We next aim to transform \eqref{app:cons:completion_time_6} into standard form of GP. In doing so, let us rewrite \eqref{app:cons:completion_time_6} as follows:
\begin{equation}\label{app:cons:completion_time_6_1}
    \frac{T^{\mathsf{LW},(k)}_{u}}{\left(\lambda_{u}^{(k)}\tau_{u}^{\mathsf{LC},{(k)}} + \epsilon\right)^{p}+ \left(\lambda_{u}^{(k)}\tau_{u}^{\mathsf{W},{(k)}}+\epsilon\right)^{p}}=1.
\end{equation}
Defining two auxiliary decision variables $T^{\mathsf{LC},(k)}_{u}$ and $T^{\mathsf{W},(k)}_{u}$ satisfying
\begin{equation}\label{app:cons:completion_time_6_2}
    T^{\mathsf{LC},(k)}_{u}=\lambda_{u}^{(k)}\tau_{u}^{\mathsf{LC},{(k)}} + \epsilon
\end{equation}
and 
\begin{equation}\label{app:cons:completion_time_6_3}
    T^{\mathsf{W},(k)}_{u}=\lambda_{u}^{(k)}\tau_{u}^{\mathsf{W},{(k)}}+\epsilon
\end{equation}
result in
\begin{equation}\label{app:cons:completion_time_6_4}
    \frac{T^{\mathsf{LW},(k)}_{u}}{\left(T^{\mathsf{LC},(k)}_{u}\right)^{p}+ \left(T^{\mathsf{W},(k)}_{u}\right)^{p}}=1.
\end{equation}
This constraint is not in the format of GP. Therefore, we transform it via introducing an auxiliary variable as two inequalities:
\begin{equation}\label{app:cons:completion_time_6_4_8}
    \frac{T^{\mathsf{LW},(k)}_{u}}{\left(T^{\mathsf{LC},(k)}_{u}\right)^{p}+ \left(T^{\mathsf{W},(k)}_{u}\right)^{p}}\le 1,
\end{equation}
\begin{equation}\label{app:cons:completion_time_6_4_9}
    \frac{\left(T^{\mathsf{LC},(k)}_{u}\right)^{p}+ \left(T^{\mathsf{LC},(k)}_{u}\right)^{p}}{\mathscr{C}^{\mathsf{W},(k)}_{u}T^{\mathsf{LW},(k)}_{u}}\le 1,
\end{equation}
\begin{equation}
    \mathscr{C}^{\mathsf{LW},(k)}_{u}\ge 1,
\end{equation}
where $\mathscr{C}^{\mathsf{LW},(k)}_{u}$ is added with a large penalty term to the objective function to force $\mathscr{C}^{\mathsf{LW},(k)}_{u}{\rightarrow}1^+$ at the optimal point. The fraction in~\eqref{app:cons:completion_time_6_4_8} still needs transformation since it is an inequality with a posynomial in the denominator, which is not a posynomial. We thus exploit arithmetic-geometric mean inequality (Lemma~\ref{Lemma:ArethmaticGeometric}) to approximate the denominator with a monomial:
\begin{align}\label{app:cons:completion_time_6_4_10}
    G^{\mathsf{LW},(k)}_{u}(\bm{v})=\left(T^{\mathsf{LC},(k)}_{u}\right)^{p}+ \left(T^{\mathsf{W},(k)}_{u}\right)^{p}\geq \widehat{G}^{\mathsf{LW},(k)}_{u}(\bm{v};\ell) &\triangleq \left(\frac{\left(T^{\mathsf{LC},(k)}_{u}\right)^{p} G^{\mathsf{LW},(k)}_{u}([\bm{v}]^{(\ell-1)})}{\left[\left(T^{\mathsf{LC},(k)}_{u}\right)^{p}\right]^{(\ell-1)}}\right)^{\frac{\left[\left(T^{\mathsf{LC},(k)}_{u}\right)^{p}\right]^{(\ell-1)}}{G^{\mathsf{LW},(k)}_{u}([\bm{v}]^{(\ell-1)})}}\nonumber\\
    &\times \left(\frac{\left(T^{\mathsf{W},(k)}_{u}\right)^{p} G^{\mathsf{LW},(k)}_{u}([\bm{v}]^{(\ell-1)})}{\left[\left(T^{\mathsf{LC},(k)}_{u}\right)^{p}\right]^{(\ell-1)}}\right)^{\frac{\left[\left(T^{\mathsf{LC},(k)}_{u}\right)^{p}\right]^{(\ell-1)}}{G^{\mathsf{LW},(k)}_{u}([\bm{v}]^{(\ell-1)})}},\nonumber
\end{align}
which gives us an approximation of~\eqref{app:cons:completion_time_6_4_8} as follows:
\begin{equation}\label{app:cons:completion_time_6_4_11}
    \frac{\Psi(\mathscr{D}_{u}^{\nuparrow,(k)})}{\widehat{G}^{\mathsf{LW},(k)}_{u}(\bm{v};\ell)}\le 1.
\end{equation}

We finally approximate constraint~\eqref{app:cons:completion_time_6_4} as follows:
\begin{tcolorbox}[ams align]
     &\frac{\Psi(\mathscr{D}_{u}^{\nuparrow,(k)})}{\widehat{G}^{\mathsf{LW},(k)}_{u}(\bm{v};\ell)}\le 1,~~~~~\frac{\left(T^{\mathsf{LC},(k)}_{u}\right)^{p}+ \left(T^{\mathsf{LW},(k)}_{u}\right)^{p}}{\mathscr{C}^{\mathsf{LW},(k)}_{u}T^{\mathsf{LW},(k)}_{u}}\le 1,~~~~~\frac{1}{\mathscr{C}^{\mathsf{LW},(k)}_{u}}\le 1.\nonumber
\end{tcolorbox}

We next aim to transform \eqref{app:cons:completion_time_6_2} into GP format. To do so, we rewrite \eqref{app:cons:completion_time_6_2} as follows:
\begin{equation}\label{app:cons:completion_time_6_2_1}
    \frac{T^{\mathsf{LC},(k)}_{u}}{\lambda_{u}^{(k)}\tau_{u}^{\mathsf{LC},{(k)}} + \epsilon}= 1.
\end{equation}

This constraint is not in the format of GP. Therefore, we transform it via introducing an auxiliary variable as two inequalities:
\begin{equation}\label{app:cons:completion_time_6_2_1_8}
    \frac{T^{\mathsf{LC},(k)}_{u}}{\lambda_{u}^{(k)}\tau_{u}^{\mathsf{LC},{(k)}} + \epsilon}\le 1,
\end{equation}
\begin{equation}\label{app:cons:completion_time_6_2_1_9}
    \frac{\lambda_{u}^{(k)}\tau_{u}^{\mathsf{LC},{(k)}} + \epsilon}{\mathscr{F}^{\mathsf{L},(k)}_{u} T^{\mathsf{LC},(k)}_{u}}\le 1,
\end{equation}
\begin{equation}
    \mathscr{F}^{\mathsf{L},(k)}_{u}\ge 1,
\end{equation}
where $\mathscr{F}^{\mathsf{L},(k)}_{u}$ is added with a large penalty term to the objective function to force $\mathscr{F}^{\mathsf{L},(k)}_{u}{\rightarrow}1^+$ at the optimal point. The fraction in~\eqref{app:cons:completion_time_6_2_1_8} still needs transformation since it is an inequality with a posynomial in the denominator, which is not a posynomial. We thus exploit arithmetic-geometric mean inequality (Lemma~\ref{Lemma:ArethmaticGeometric}) to approximate the denominator with a monomial:
\begin{align}\label{app:cons:completion_time_6_2_1_10}
    H^{\mathsf{LC},(k)}_{u}(\bm{v})=\lambda_{u}^{(k)}\tau_{u}^{\mathsf{LC},{(k)}} + \epsilon\geq \widehat{H}^{\mathsf{LC},(k)}_{u}(\bm{v};\ell) &\triangleq \left(\frac{\lambda_{u}^{(k)}\tau_{u}^{\mathsf{LC},{(k)}} H^{\mathsf{LC},(k)}_{u}([\bm{v}]^{(\ell-1)})}{\left[\lambda_{u}^{(k)}\tau_{u}^{\mathsf{LC},{(k)}}\right]^{(\ell-1)}}\right)^{\frac{\left[\lambda_{u}^{(k)}\tau_{u}^{\mathsf{LC},{(k)}}\right]^{(\ell-1)}}{H^{\mathsf{LC},(k)}_{u}([\bm{v}]^{(\ell-1)})}}\nonumber\\
    &\times \left(H^{\mathsf{LC},(k)}_{u}([\bm{v}]^{(\ell-1)})\right)^{\frac{\left[\displaystyle \epsilon \right]^{(\ell-1)}}{H^{\mathsf{LC},(k)}_{u}([\bm{v}]^{(\ell-1)})}},
\end{align}
which gives us an approximation of~\eqref{app:cons:completion_time_6_2_1_8} as follows:
\begin{equation}\label{app:cons:completion_time_6_2_1_11}
    \frac{T^{\mathsf{LC},(k)}_{u}}{\widehat{H}^{\mathsf{LC},(k)}_{u}(\bm{v};\ell)}\le 1.
\end{equation}

We finally approximate constraint~\eqref{app:cons:completion_time_6_2_1} as follows:
\begin{tcolorbox}[ams align]
     &\frac{T^{\mathsf{LC},(k)}_{u}}{\widehat{H}^{\mathsf{LC},(k)}_{u}(\bm{v};\ell)}\le 1,~~~~~\frac{\lambda_{u}^{(k)}\tau_{u}^{\mathsf{LC},{(k)}} + \epsilon}{\mathscr{F}^{\mathsf{L},(k)}_{u} T^{\mathsf{LC},(k)}_{u}}\le 1,~~~~~\frac{1}{\mathscr{F}^{\mathsf{L},(k)}_{u}}\le 1.\nonumber
\end{tcolorbox}

We next transform \eqref{app:cons:completion_time_6_3} into GP format. To do so, we rewrite \eqref{app:cons:completion_time_6_3} as follows:
\begin{equation}\label{app:cons:completion_time_6_3_1}
    \frac{T^{\mathsf{W},(k)}_{u}}{\lambda_{u}^{(k)}\tau_{u}^{\mathsf{W},{(k)}}+\epsilon}=1.
\end{equation}
This constraint is not in the format of GP. Therefore, we transform it via introducing an auxiliary variable as two inequalities:
\begin{equation}\label{app:cons:completion_time_6_3_1_8}
    \frac{T^{\mathsf{W},(k)}_{u}}{\lambda_{u}^{(k)}\tau_{u}^{\mathsf{W},{(k)}}+\epsilon}\le 1,
\end{equation}
\begin{equation}\label{app:cons:completion_time_6_3_1_9}
    \frac{\lambda_{u}^{(k)}\tau_{u}^{\mathsf{W},{(k)}}+\epsilon}{\mathscr{H}^{\mathsf{W},(k)}_{u}T^{\mathsf{W},(k)}_{u}}\le 1,
\end{equation}
\begin{equation}
    \mathscr{H}^{\mathsf{W},(k)}_{u}\ge 1,
\end{equation}
where $\mathscr{H}^{\mathsf{W},(k)}_{u}$ is added with a large penalty term to the objective function to force $\mathscr{H}^{\mathsf{W},(k)}_{u}{\rightarrow}1^+$ at the optimal point. The fraction in~\eqref{app:cons:completion_time_6_3_1_8} still needs transformation since it is an inequality with a posynomial in the denominator, which is not a posynomial. We thus exploit arithmetic-geometric mean inequality (Lemma~\ref{Lemma:ArethmaticGeometric}) to approximate the denominator with a monomial:
\begin{align}\label{app:cons:completion_time_6_3_1_10}
    J^{\mathsf{W},(k)}_{u}(\bm{v})= \lambda_{u}^{(k)}\tau_{u}^{\mathsf{W},{(k)}}+\epsilon \geq \widehat{J}^{\mathsf{W},(k)}_{u}(\bm{v};\ell) &\triangleq \left(\frac{\lambda_{u}^{(k)}\tau_{u}^{\mathsf{W},{(k)}} J^{\mathsf{W},(k)}_{u}([\bm{v}]^{(\ell-1)})}{\left[\lambda_{u}^{(k)}\tau_{u}^{\mathsf{W},{(k)}}\right]^{(\ell-1)}}\right)^{\frac{\left[\lambda_{u}^{(k)}\tau_{u}^{\mathsf{W},{(k)}}\right]^{(\ell-1)}}{J^{\mathsf{W},(k)}_{u}([\bm{v}]^{(\ell-1)})}}\nonumber\\
    &\times \left(J^{\mathsf{W},(k)}_{u}([\bm{v}]^{(\ell-1)})\right)^{\frac{\left[\displaystyle \epsilon \right]^{(\ell-1)}}{J^{\mathsf{W},(k)}_{u}([\bm{v}]^{(\ell-1)})}},
\end{align}
which gives us an approximation of~\eqref{app:cons:completion_time_6_3_1_8} as follows:
\begin{equation}\label{app:cons:completion_time_6_3_1_11}
    \frac{T^{\mathsf{W},(k)}_{u}}{\widehat{J}^{\mathsf{W},(k)}_{u}(\bm{v};\ell)}\le 1.
\end{equation}

We finally approximate constraint~\eqref{app:cons:completion_time_6_3_1} as follows:
\begin{tcolorbox}[ams align]
     &\frac{T^{\mathsf{W},(k)}_{u}}{\widehat{J}^{\mathsf{W},(k)}_{u}(\bm{v};\ell)}\le 1,~~~~~\frac{\lambda_{u}^{(k)}\tau_{u}^{\mathsf{W},{(k)}} + \epsilon}{\mathscr{H}^{\mathsf{W},(k)}_{u} T^{\mathsf{W},(k)}_{u}}\le 1,~~~~~\frac{1}{\mathscr{H}^{\mathsf{W},(k)}_{u}}\le 1.\nonumber
\end{tcolorbox}

We next transform \eqref{app:cons:completion_time_1} into standard GP format. To do so, we define $T^{(k)}$ as an auxiliary decision variable satisfying 
\begin{equation}\label{app:cons:completion_time_1_1}
    \frac{T^{(k)}}{T^{(k{-}1)}+\max_{b\in \Omega,u\in \mathcal{U}_{b}}\{\Psi(\mathscr{D}_{u}^{\nuparrow,(k)})\}}= 1,
\end{equation}
Using the approximation $\max\{A, B\}\approx (A^{p}+B^{p})^{-\frac{1}{p}}$, which is tight when $p \gg 1$, gives us
\begin{equation}\label{app:cons:completion_time_1_5}
    \frac{T^{(k)}}{T^{(k{-}1)}+\left(\sum_{b\in \Omega}\sum_{u\in \mathcal{U}_{b}}\left(\Psi(\mathscr{D}_{u}^{\nuparrow,(k)})\right)^{p}\right)^{-\frac{1}{p}}}= 1,
\end{equation}
Defining an auxiliary decision variable $T^{(k)}_{u}$ satisfying 
\begin{equation}\label{app:cons:completion_time_1_6}
    T^{(k)}_{u}=\sum_{b\in \Omega}\sum_{u\in \mathcal{U}_{b}}\left(\Psi(\mathscr{D}_{u}^{\nuparrow,(k)})\right)^{p}
\end{equation}
gives us
\begin{equation}\label{app:cons:completion_time_1_7}
    \frac{T^{(k)}}{T^{(k{-}1)}+\left(T^{(k)}_{u}\right)^{-\frac{1}{p}}}= 1.
\end{equation}
This constraint is not in the format of GP. Therefore, we transform it via introducing an auxiliary variable as two inequalities:
\begin{equation}\label{app:cons:completion_time_1_8}
    \frac{T^{(k)}}{T^{(k{-}1)}+\left(T^{(k)}_{u}\right)^{-\frac{1}{p}}}\le 1,
\end{equation}
\begin{equation}\label{app:cons:completion_time_1_9}
    \frac{T^{(k{-}1)}+\left(T^{(k)}_{u}\right)^{-\frac{1}{p}}}{\mathscr{G}^{(k)}_{u}T^{(k)}}\le 1,
\end{equation}
\begin{equation}
    \mathscr{G}^{(k)}_{u}\ge 1,
\end{equation}
where $\mathscr{G}^{(k)}_{u}$ is added with a large penalty term to the objective function to force $\mathscr{G}^{(k)}_{u}{\rightarrow}1^+$ at the optimal point. The fraction in~\eqref{app:cons:completion_time_1_8} still needs transformation since it is an inequality with a posynomial in the denominator, which is not a posynomial. We thus exploit arithmetic-geometric mean inequality (Lemma~\ref{Lemma:ArethmaticGeometric}) to approximate the denominator with a monomial:
\begin{align}\label{app:cons:completion_time_1_10}
    L^{(k)}_{u}(\bm{v})&=T^{(k{-}1)}+\left(T^{(k)}_{u}\right)^{-\frac{1}{p}}\geq \widehat{L}^{(k)}_{u}(\bm{v};\ell) \triangleq \left(\frac{T^{(k{-}1)} L^{(k)}_{u}([\bm{v}]^{(\ell-1)})}{\left[T^{(k{-}1)}\right]^{(\ell-1)}}\right)^{\frac{\left[T^{(k{-}1)}\right]^{(\ell-1)}}{L^{(k)}_{u}([\bm{v}]^{(\ell-1)})}}\nonumber\\
    &\times \left(\frac{\left(T^{(k)}_{u}\right)^{-\frac{1}{p}} L^{(k)}_{u}([\bm{v}]^{(\ell-1)})}{\left[\left(T^{(k)}_{u}\right)^{-\frac{1}{p}}\right]^{(\ell-1)}}\right)^{\frac{\left[\left(T^{(k)}_{u}\right)^{-\frac{1}{p}}\right]^{(\ell-1)}}{L^{(k)}_{u}([\bm{v}]^{(\ell-1)})}},
\end{align}
which gives us an approximation of~\eqref{app:cons:completion_time_1_8} as follows:
\begin{equation}\label{app:cons:completion_time_1_11}
    \frac{T^{(k)}}{\widehat{L}^{(k)}_{u}(\bm{v};\ell)}\le 1.
\end{equation}

We finally approximate constraint~\eqref{app:cons:completion_time_1_7} as follows:
\begin{tcolorbox}[ams align]
     &\frac{T^{(k)}}{\widehat{L}^{(k)}_{u}(\bm{v};\ell)}\le 1,~~~~~ \frac{T^{(k{-}1)}+\left(T^{(k)}_{u}\right)^{-\frac{1}{p}}}{\mathscr{G}^{(k)}_{u}T^{(k)}}\le 1,~~~~~\frac{1}{\mathscr{G}^{(k)}_{u}}\le 1.\nonumber
\end{tcolorbox}

We next aim to transform \eqref{app:cons:completion_time_1_6} into standard form of GP. In doing so, let us rewrite \eqref{app:cons:completion_time_1_6} as follows:
\begin{equation}\label{app:cons:completion_time_1_6_1}
    \frac{T^{(k)}_{u}}{\sum_{b\in \Omega}\sum_{u\in \mathcal{U}_{b}}\left(\Psi(\mathscr{D}_{u}^{\nuparrow,(k)})\right)^{p}}= 1.
\end{equation}
This constraint is not in the format of GP. Therefore, we transform it via introducing an auxiliary variable as two inequalities:
\begin{equation}\label{app:cons:completion_time_1_6_4_8}
    \frac{T^{(k)}_{u}}{\sum_{b\in \Omega}\sum_{u\in \mathcal{U}_{b}}\left(\Psi(\mathscr{D}_{u}^{\nuparrow,(k)})\right)^{p}}\le 1,
\end{equation}
\begin{equation}\label{app:cons:completion_time_1_6_4_9}
    \frac{\sum_{b\in \Omega}\sum_{u\in \mathcal{U}_{b}}\left(\Psi(\mathscr{D}_{u}^{\nuparrow,(k)})\right)^{p}}{\mathscr{E}^{(k)}_{u}T^{(k)}_{u}}\le 1,
\end{equation}
\begin{equation}
    \mathscr{E}^{(k)}_{u}\ge 1,
\end{equation}
where $\mathscr{E}^{(k)}_{u}$ is added with a large penalty term to the objective function to force $\mathscr{E}^{(k)}_{u}{\rightarrow}1^+$ at the optimal point. The fraction in~\eqref{app:cons:completion_time_1_6_4_8} still needs transformation since it is an inequality with a posynomial in the denominator, which is not a posynomial. We thus exploit arithmetic-geometric mean inequality (Lemma~\ref{Lemma:ArethmaticGeometric}) to approximate the denominator with a monomial:
\begin{align}\label{app:cons:completion_time_1_6_4_10}
    Q^{(k)}_{u}(\bm{v})=\sum_{b\in \Omega}\sum_{u\in \mathcal{U}_{b}}\left(\Psi(\mathscr{D}_{u}^{\nuparrow,(k)})\right)^{p}\geq \widehat{Q}^{(k)}_{u}(\bm{v};\ell) &\triangleq \prod_{b\in \Omega}\prod_{u\in \mathcal{U}_{b}}\left(\frac{\left(\Psi(\mathscr{D}_{u}^{\nuparrow,(k)})\right)^{p} Q^{(k)}_{u}([\bm{v}]^{(\ell-1)})}{\left[\left(\Psi(\mathscr{D}_{u}^{\nuparrow,(k)})\right)^{p}\right]^{(\ell-1)}}\right)^{\frac{\left[\left(\Psi(\mathscr{D}_{u}^{\nuparrow,(k)})\right)^{p}\right]^{(\ell-1)}}{Q^{(k)}_{u}([\bm{v}]^{(\ell-1)})}},
\end{align}
which gives us an approximation of~\eqref{app:cons:completion_time_1_6_4_8} as follows:
\begin{equation}\label{app:cons:completion_time_1_6_4_11}
    \frac{T^{(k)}_{u}}{\widehat{Q}^{(k)}_{u}(\bm{v};\ell)}\le 1.
\end{equation}

We finally approximate constraint~\eqref{app:cons:completion_time_1_6} as follows:
\begin{tcolorbox}[ams align]
     &\frac{T^{(k)}_{u}}{\widehat{Q}^{(k)}_{u}(\bm{v};\ell)}\le 1,~~~~~\frac{\sum_{b\in \Omega}\sum_{u\in \mathcal{U}_{b}}\left(\Psi(\mathscr{D}_{u}^{\nuparrow,(k)})\right)^{p}}{\mathscr{E}^{(k)}_{u}T^{(k)}_{u}}\le 1,~~~~~\frac{1}{\mathscr{E}^{(k)}_{u}}\le 1.\nonumber
\end{tcolorbox}

\newpage
\subsection{Energy Consumption}\label{app:energy_consumption}

\noindent\textbf{LM training:}  Referring to Sec.~\ref{sec:energy_consumption}, let us revisit the equation for computation energy (consumed energy for LM training) of FLU $u$ below:
\begin{equation}\label{app:cons:computation_energy_1}
    E_{u}^{\mathsf{LC},{(k)}}= \frac{\alpha_{u}}{2}(f^{(k)}_{u})^{3}\tau_{u}^{\mathsf{LC},{(k)}}.
\end{equation}

In the following we transform \eqref{app:cons:computation_energy_1} into standard GP format. In doing so, we define $E_{u}^{\mathsf{LC},{(k)}}$ as an auxiliary decision variable that must satisfy the following constraint (derived through performing some algebraic manipulations on \eqref{app:cons:computation_energy_1}):
\begin{equation}\label{app:cons:computation_energy_2}
    \frac{2 E_{u}^{\mathsf{LC},{(k)}}}{\alpha_{u}(f^{(k)}_{u})^{3}\tau_{u}^{\mathsf{LC},{(k)}}}= 1.
\end{equation}
We transform \eqref{app:cons:computation_energy_2} into the format of GP by splitting it into the following three inequalities:
\begin{equation}\label{app:cons:computation_energy_3}
    \frac{2 E_{u}^{\mathsf{LC},{(k)}}}{\alpha_{u}(f^{(k)}_{u})^{3}\tau_{u}^{\mathsf{LC},{(k)}}}\le 1,
\end{equation}
\begin{equation}\label{app:cons:computation_energy_4}
    \frac{\left(\mathscr{E}^{\mathsf{LC},{(k)}}_{u}\right)^{-1}\alpha_{u}(f^{(k)}_{u})^{3}\tau_{u}^{\mathsf{LC},{(k)}}}{2 E_{u}^{\mathsf{LC},{(k)}}}\le 1,
\end{equation}
\begin{equation}
    \mathscr{E}^{\mathsf{LC},{(k)}}_{u}\ge 1,
\end{equation}
where $\mathscr{E}^{\mathsf{LC},{(k)}}_{u}$ is an auxiliary decision variable and will be added with a large penalty term to the objective function to force $A\mathscr{E}^{\mathsf{LC},{(k)}}_{u}{\rightarrow}1^+$ at the optimal point.

Accordingly, we approximate \eqref{app:cons:computation_energy_1} as follows:
\begin{tcolorbox}[ams align]
     & \frac{2 E_{u}^{\mathsf{LC},{(k)}}}{\alpha_{u}(f^{(k)}_{u})^{3}\tau_{u}^{\mathsf{LC},{(k)}}}\le 1,~~~~\frac{\left(\mathscr{E}^{\mathsf{LC},{(k)}}_{u}\right)^{-1}\alpha_{u}(f^{(k)}_{u})^{3}\tau_{u}^{\mathsf{LC},{(k)}}}{2 E_{u}^{\mathsf{LC},{(k)}}}\le 1,~~~~\frac{1}{\mathscr{E}^{\mathsf{LC},{(k)}}_{u}}\le 1.\nonumber
\end{tcolorbox}

\noindent\textbf{GM broadcasting of O-RUs:} Referring to Sec.~\ref{sec:energy_consumption}, let us revisit the equation for GM broadcasting latency of O-RU $b$ below:
\begin{equation}\label{app:cons:GM_broadcasting_energy_0}
    E_{b}^{\downarrow,{(k)}}{=}\sum_{x{=}N^{(k{-}1)}{+}1}^{N^{(k)}}\sum_{r{\in} \mathcal{R}_{b}}\min\big\{\tau_{b,r}^{\downarrow}(t_{x}),\left(t_{x{+}1}{-}t_{x}\right)\big\}\rho^{\downarrow}_{b,r}\left(t_{x}\right) P^{\mathsf{max}}_{b}.
\end{equation}
The above equation can be rewritten as follows:
\begin{equation}\label{app:cons:GM_broadcasting_energy_1}
    E_{b}^{\downarrow,{(k)}}{=}\sum_{x{=}0}^{N}\sum_{r{\in} \mathcal{R}_{b}}T_{b,r}^{\downarrow}(t_{x})\rho^{\downarrow}_{b,r}\left(t_{x}\right) P^{\mathsf{max}}_{b},
\end{equation}
where $T_{b,r}^{\downarrow}(t_{x})$ is given in \eqref{app:eq:broadcast_latency_min_1}. We transform \eqref{app:cons:GM_broadcasting_energy_1} into GP format. To this end, we rewrite \eqref{app:cons:GM_broadcasting_energy_1} as follows:

\begin{equation}\label{app:cons:GM_broadcasting_energy_3_1_1}
   \frac{E_{b}^{\downarrow,{(k)}}}{\displaystyle\sum_{x{=}0}^{N}\sum_{r{\in} \mathcal{R}_{b}}T_{b,r}^{\downarrow}(t_{x})\rho^{\downarrow}_{b,r}\left(t_{x}\right) P^{\mathsf{max}}_{b}} = 1.
\end{equation} 
This constraint is not in the format of GP. Therefore, we transform it through splitting it into the following three inequalities:
\begin{equation}\label{app:cons:GM_broadcasting_energy_3_1_4}
   \frac{E_{b}^{\downarrow,{(k)}}}{\displaystyle\sum_{x{=}0}^{N}\sum_{r{\in} \mathcal{R}_{b}}T_{b,r}^{\downarrow}(t_{x})\rho^{\downarrow}_{b,r}\left(t_{x}\right) P^{\mathsf{max}}_{b}}\le 1,
\end{equation}
\begin{equation}\label{app:cons:GM_broadcasting_energy_3_1_5}
    \frac{\left(\mathscr{B}^{\downarrow,\mathsf{E}}_{b}(t_{x})\right)^{-1}\left(\displaystyle\sum_{x{=}0}^{N}\sum_{r{\in} \mathcal{R}_{b}}T_{b,r}^{\downarrow}(t_{x})\rho^{\downarrow}_{b,r}\left(t_{x}\right) P^{\mathsf{max}}_{b}\right)}{E_{b}^{\downarrow,{(k)}}}\le 1,
\end{equation}
\begin{equation}
    \mathscr{B}^{\downarrow,\mathsf{E}}_{b}(t_{x})\ge 1,
\end{equation}
where $\mathscr{B}^{\downarrow,\mathsf{E}}_{b}(t_{x})$ is an auxiliary decision variable added with a large penalty term to the objective function to force $\mathscr{B}^{\downarrow,\mathsf{E}}_{b}(t_{x}){\rightarrow}1^+$ at the optimal point. The fraction in~\eqref{app:cons:GM_broadcasting_energy_3_1_4} still needs transformation since it is inequality with a posynomial in its denominator, which is not a posynomial. We thus exploit arithmetic-geometric mean inequality (Lemma~\ref{Lemma:ArethmaticGeometric}) to approximate the denominator in \eqref{app:cons:GM_broadcasting_energy_3_1_4} with a monomial.
\begin{align}\label{app:cons:GM_broadcasting_energy_3_1_6}
   G^{\downarrow,\mathsf{E}}_{b}(\bm{v})&=\sum_{x{=}0}^{N}\sum_{r{\in} \mathcal{R}_{b}}T_{b,r}^{\downarrow}(t_{x})\rho^{\downarrow}_{b,r}\left(t_{x}\right) P^{\mathsf{max}}_{b}\nonumber\\
   &\geq \widehat{G}^{\downarrow,\mathsf{E}}_{b}(\bm{v};\ell) \triangleq \prod_{x{=}0}^{N}\prod_{r{\in} \mathcal{R}_{b}}\left(\frac{T_{b,r}^{\downarrow}(t_{x})\rho^{\downarrow}_{b,r}\left(t_{x}\right)  G^{\downarrow,\mathsf{E}}_{b}([\bm{v}]^{(\ell-1)})}{\left[T_{b,r}^{\downarrow}(t_{x})\rho^{\downarrow}_{b,r}\left(t_{x}\right) \right]^{(\ell-1)}}\right)^{\frac{\left[T_{b,r}^{\downarrow}(t_{x})\rho^{\downarrow}_{b,r}\left(t_{x}\right) P^{\mathsf{max}}_{b}\right]^{(\ell-1)}}{G^{\downarrow,\mathsf{E}}_{b}([\bm{v}]^{(\ell-1)})}}
\end{align}
which gives us an approximation of~\eqref{app:cons:GM_broadcasting_energy_3_1_4} as follows:
\begin{equation}\label{app:cons:GM_broadcasting_energy_3_1_7}
    \frac{E_{b}^{\downarrow,{(k)}}}{\widehat{G}^{\downarrow,\mathsf{E}}_{b}(\bm{v};\ell)}\le 1.
\end{equation}

We finally approximate constraint~\eqref{app:cons:GM_broadcasting_energy_1} as follows:
\begin{tcolorbox}[ams align]
     &\frac{E_{b}^{\downarrow,{(k)}}}{\widehat{G}^{\downarrow,\mathsf{E}}_{b}(\bm{v};\ell)}\le 1,~~~~~~~~\frac{\left(\mathscr{B}^{\downarrow,\mathsf{E}}_{b}(t_{x})\right)^{-1}\left(\displaystyle\sum_{x{=}0}^{N}\sum_{r{\in} \mathcal{R}_{b}}T_{b,r}^{\downarrow}(t_{x})\rho^{\downarrow}_{b,r}\left(t_{x}\right) P^{\mathsf{max}}_{b}\right)}{E_{b}^{\downarrow,{(k)}}}\le 1,~~~~~~~\frac{1}{\mathscr{B}^{\downarrow,\mathsf{E}}_{b}(t_{x})}\le 1,\nonumber
\end{tcolorbox}

\noindent\textbf{GPs dispersion by DPUs:} Referring to Sec.~\ref{sec:energy_consumption}, let us revisit the equation for GPs dispersion latency of DPU $u$ below:
\begin{equation}\label{app:cons:GPs_dispersion_latency_0}
    \overline{E}_{u}^{\uparrow,(k)}{=}\sum_{x{=}N^{(k{-}1)}{+}1}^{N^{(k)}}\sum_{u'{\in}\mathcal{U}_{b}}\sum_{r{\in} \overline{\mathcal{R}}_{b}}\min\big\{\overline{\tau}_{u,u',r}^{\uparrow}(t_{x}),\left(t_{x{+}1}{-}t_{x}\right)\big\}\overline{\rho}^{\uparrow}_{u,r}(t_{x})P^{\mathsf{max}}_{u}.
\end{equation}
The above equation can be rewritten as follows:
\begin{equation}\label{app:cons:GPs_dispersion_latency_1}
   \overline{E}_{u}^{\uparrow,(k)}{=}\sum_{x{=}0}^{N}\sum_{u'{\in}\mathcal{U}_{b}}\sum_{r{\in} \overline{\mathcal{R}}_{b}}T_{u,u',r}^{\uparrow}(t_{x})\overline{\rho}^{\uparrow}_{u,r}(t_{x})P^{\mathsf{max}}_{u},
\end{equation}
where $T_{u,u',r}^{\uparrow}(t_{x})$ is given in \eqref{app:eq:dispersion_min_1}. We transform \eqref{app:cons:GPs_dispersion_latency_1} into GP format. To this end, we rewrite \eqref{app:cons:GPs_dispersion_latency_1} as follows:
\begin{equation}\label{app:cons:GPs_dispersion_latency_3_1_1}
   \frac{\overline{E}_{u}^{\uparrow,(k)}}{\sum_{x{=}0}^{N}\sum_{u'{\in}\mathcal{U}_{b}}\sum_{r{\in} \overline{\mathcal{R}}_{b}}T_{u,u',r}^{\uparrow}(t_{x})\overline{\rho}^{\uparrow}_{u,r}(t_{x})P^{\mathsf{max}}_{u}}{=} 1.
\end{equation} 
This constraint is not in the format of GP. Therefore, we transform it through splitting it into the following three inequalities:
\begin{equation}\label{app:cons:GPs_dispersion_latency_3_1_4}
   \frac{\overline{E}_{u}^{\uparrow,(k)}}{\sum_{x{=}0}^{N}\sum_{u'{\in}\mathcal{U}_{b}}\sum_{r{\in} \overline{\mathcal{R}}_{b}}T_{u,u',r}^{\uparrow}(t_{x})\overline{\rho}^{\uparrow}_{u,r}(t_{x})P^{\mathsf{max}}_{u}}\le 1,
\end{equation}
\begin{equation}\label{app:cons:GPs_dispersion_latency_3_1_5}
    \frac{\left(\overline{\mathscr{B}}^{\uparrow,\mathsf{E}}_{u}(t_{x})\right)^{-1}\left(\sum_{x{=}0}^{N}\sum_{u'{\in}\mathcal{U}_{b}}\sum_{r{\in} \overline{\mathcal{R}}_{b}}T_{u,u',r}^{\uparrow}(t_{x})\overline{\rho}^{\uparrow}_{u,r}(t_{x})P^{\mathsf{max}}_{u}\right)}{\overline{E}_{u}^{\uparrow,(k)}}\le 1,
\end{equation}
\begin{equation}
    \overline{\mathscr{B}}^{\uparrow,\mathsf{E}}_{u}(t_{x})\ge 1,
\end{equation}
where $\overline{\mathscr{B}}^{\uparrow,\mathsf{E}}_{u}(t_{x})$ is an auxiliary decision variable added with a large penalty term to the objective function to force $\overline{\mathscr{B}}^{\uparrow,\mathsf{E}}_{u}(t_{x}){\rightarrow}1^+$ at the optimal point. The fraction in~\eqref{app:cons:GPs_dispersion_latency_3_1_4} still needs transformation since it is inequality with a posynomial in its denominator, which is not a posynomial. We thus exploit arithmetic-geometric mean inequality (Lemma~\ref{Lemma:ArethmaticGeometric}) to approximate the denominator in \eqref{app:cons:GPs_dispersion_latency_3_1_4} with a monomial.
\begin{align}\label{app:cons:GPs_dispersion_latency_3_1_6}
   G^{\uparrow,\mathsf{E},\mathsf{D}}_{u}(\bm{v})&=\sum_{x{=}0}^{N}\sum_{u'{\in}\mathcal{U}_{b}}\sum_{r{\in} \overline{\mathcal{R}}_{b}}T_{u,u',r}^{\uparrow}(t_{x})\overline{\rho}^{\uparrow}_{u,r}(t_{x})P^{\mathsf{max}}_{u}\nonumber\\
   &\geq \widehat{G}^{\uparrow,\mathsf{E},\mathsf{D}}_{u}(\bm{v};\ell) \triangleq \prod_{x{=}0}^{N}\prod_{u'{\in}\mathcal{U}_{b}}\prod_{r{\in} \overline{\mathcal{R}}_{b}}\left(\frac{T_{u,u',r}^{\uparrow}(t_{x})\overline{\rho}^{\uparrow}_{u,r}(t_{x})  G^{\uparrow,\mathsf{E},\mathsf{D}}_{u}([\bm{v}]^{(\ell-1)})}{\left[T_{u,u',r}^{\uparrow}(t_{x})\overline{\rho}^{\uparrow}_{u,r}(t_{x}) \right]^{(\ell-1)}}\right)^{\frac{\left[T_{u,u',r}^{\uparrow}(t_{x})\overline{\rho}^{\uparrow}_{u,r}(t_{x})P^{\mathsf{max}}_{u} P^{\mathsf{max}}_{b}\right]^{(\ell-1)}}{G^{\uparrow,\mathsf{E},\mathsf{D}}_{u}([\bm{v}]^{(\ell-1)})}}
\end{align}
which gives us an approximation of~\eqref{app:cons:GPs_dispersion_latency_3_1_4} as follows:
\begin{equation}\label{app:cons:GPs_dispersion_latency_3_1_7}
    \frac{\overline{E}_{u}^{\uparrow,(k)}}{\widehat{G}^{\uparrow,\mathsf{E},\mathsf{D}}_{u}(\bm{v};\ell)}\le 1.
\end{equation}

We finally approximate constraint~\eqref{app:cons:GPs_dispersion_latency_1} as follows:
\begin{tcolorbox}[ams align]
     &\frac{\overline{E}_{u}^{\uparrow,(k)}}{\widehat{G}^{\uparrow,\mathsf{E},\mathsf{D}}_{u}(\bm{v};\ell)}\le 1,~~~~~~~~\frac{\left(\overline{\mathscr{B}}^{\uparrow,\mathsf{E}}_{u}(t_{x})\right)^{-1}\left(\sum_{x{=}0}^{N}\sum_{u'{\in}\mathcal{U}_{b}}\sum_{r{\in} \overline{\mathcal{R}}_{b}}T_{u,u',r}^{\uparrow}(t_{x})\overline{\rho}^{\uparrow}_{u,r}(t_{x})P^{\mathsf{max}}_{u}\right)}{\overline{E}_{u}^{\uparrow,(k)}}\le 1,~~~~~~~\frac{1}{\overline{\mathscr{B}}^{\uparrow,\mathsf{E}}_{u}(t_{x})}\le 1,\nonumber
\end{tcolorbox}

\noindent\textbf{GPs dispatching of CHUs:} Referring to Sec.~\ref{sec:energy_consumption}, let us revisit the equation for GPs dispatching latency of CHU $u$ below:
\begin{equation}\label{app:cons:GPs_dispatching_latency_0}
    E_{u}^{\uparrow,(k)}{=}\sum_{x=N^{(k{-}1)}{+}1}^{N^{(k)}}\sum_{r{\in}\mathcal{R}_{b}}\min\big\{\tau_{u,r}^{\uparrow}(t_{x}),\left(t_{x{+}1}{-}t_{x}\right)\big\}\rho^{\uparrow}_{u,r}(t_{x})P^{\mathsf{max}}_{u}.
\end{equation}
The above equation can be rewritten as follows:
\begin{equation}\label{app:cons:GPs_dispatching_latency_1}
   E_{u}^{\uparrow,(k)}{=}\sum_{x{=}0}^{N}\sum_{r{\in}\mathcal{R}_{b}}T_{u,r}^{\uparrow}(t_{x})\rho^{\uparrow}_{u,r}(t_{x})P^{\mathsf{max}}_{u},
\end{equation}
where $T_{u,r}^{\uparrow}(t_{x})$ is given in \eqref{app:eq:dispatch_min_1}. We transform \eqref{app:cons:GPs_dispatching_latency_1} into GP format. To this end, we rewrite \eqref{app:cons:GPs_dispatching_latency_1} as follows:
\begin{equation}\label{app:cons:GPs_dispatching_latency_3_1_1}
   \frac{E_{u}^{\uparrow,(k)}}{\sum_{x{=}0}^{N}\sum_{r{\in}\mathcal{R}_{b}}T_{u,r}^{\uparrow}(t_{x})\rho^{\uparrow}_{u,r}(t_{x})P^{\mathsf{max}}_{u}}{=} 1.
\end{equation} 
This constraint is not in the format of GP. Therefore, we transform it through splitting it into the following three inequalities:
\begin{equation}\label{app:cons:GPs_dispatching_latency_3_1_4}
   \frac{E_{u}^{\uparrow,(k)}}{\sum_{x{=}0}^{N}\sum_{r{\in}\mathcal{R}_{b}}T_{u,r}^{\uparrow}(t_{x})\rho^{\uparrow}_{u,r}(t_{x})P^{\mathsf{max}}_{u}}\le 1,
\end{equation}
\begin{equation}\label{app:cons:GPs_dispatching_latency_3_1_5}
    \frac{\left(\mathscr{B}^{\uparrow,\mathsf{E}}_{u}(t_{x})\right)^{-1}\left(\sum_{x{=}0}^{N}\sum_{r{\in}\mathcal{R}_{b}}T_{u,r}^{\uparrow}(t_{x})\rho^{\uparrow}_{u,r}(t_{x})P^{\mathsf{max}}_{u}\right)}{E_{u}^{\uparrow,(k)}}\le 1,
\end{equation}
\begin{equation}
    \mathscr{B}^{\uparrow,\mathsf{E}}_{u}(t_{x})\ge 1,
\end{equation}
where $\mathscr{B}^{\uparrow,\mathsf{E}}_{u}(t_{x})$ is an auxiliary decision variable added with a large penalty term to the objective function to force $\mathscr{B}^{\uparrow,\mathsf{E}}_{u}(t_{x}){\rightarrow}1^+$ at the optimal point. The fraction in~\eqref{app:cons:GPs_dispatching_latency_3_1_4} still needs transformation since it is inequality with a posynomial in its denominator, which is not a posynomial. We thus exploit arithmetic-geometric mean inequality (Lemma~\ref{Lemma:ArethmaticGeometric}) to approximate the denominator in \eqref{app:cons:GPs_dispatching_latency_3_1_4} with a monomial.
\begin{align}\label{app:cons:GPs_dispatching_latency_3_1_6}
   G^{\uparrow,\mathsf{E},\mathsf{C}}_{u}(\bm{v})&=\sum_{x{=}0}^{N}\sum_{r{\in}\mathcal{R}_{b}}T_{u,r}^{\uparrow}(t_{x})\rho^{\uparrow}_{u,r}(t_{x})P^{\mathsf{max}}_{u}\nonumber\\
   &\geq \widehat{G}^{\uparrow,\mathsf{E},\mathsf{C}}_{u}(\bm{v};\ell) \triangleq \prod_{x{=}0}^{N}\prod_{r{\in}\mathcal{R}_{b}}\left(\frac{T_{u,u',r}^{\uparrow}(t_{x})\overline{\rho}^{\uparrow}_{u,r}(t_{x})  G^{\uparrow,\mathsf{E},\mathsf{C}}_{u}([\bm{v}]^{(\ell-1)})}{\left[T_{u,u',r}^{\uparrow}(t_{x})\overline{\rho}^{\uparrow}_{u,r}(t_{x}) \right]^{(\ell-1)}}\right)^{\frac{\left[T_{u,u',r}^{\uparrow}(t_{x})\overline{\rho}^{\uparrow}_{u,r}(t_{x})P^{\mathsf{max}}_{u} P^{\mathsf{max}}_{b}\right]^{(\ell-1)}}{G^{\uparrow,\mathsf{E},\mathsf{C}}_{u}([\bm{v}]^{(\ell-1)})}}
\end{align}
which gives us an approximation of~\eqref{app:cons:GPs_dispatching_latency_3_1_4} as follows:
\begin{equation}\label{app:cons:GPs_dispatching_latency_3_1_7}
    \frac{E_{u}^{\uparrow,(k)}}{\widehat{G}^{\uparrow,\mathsf{E},\mathsf{C}}_{u}(\bm{v};\ell)}\le 1.
\end{equation}

We finally approximate constraint~\eqref{app:cons:GPs_dispatching_latency_1} as follows:
\begin{tcolorbox}[ams align]
     &\frac{E_{u}^{\uparrow,(k)}}{\widehat{G}^{\uparrow,\mathsf{E},\mathsf{C}}_{u}(\bm{v};\ell)}\le 1,~~~~~~~~\frac{\left(\mathscr{B}^{\uparrow,\mathsf{E}}_{u}(t_{x})\right)^{-1}\left(\sum_{x{=}0}^{N}\sum_{r{\in}\mathcal{R}_{b}}T_{u,r}^{\uparrow}(t_{x})\rho^{\uparrow}_{u,r}(t_{x})P^{\mathsf{max}}_{u}\right)}{E_{u}^{\uparrow,(k)}}\le 1,~~~~~~~\frac{1}{\mathscr{B}^{\uparrow,\mathsf{E}}_{u}(t_{x})}\le 1,\nonumber
\end{tcolorbox}

\newpage
\subsection{Transformation of the Objective Function of Optimization Problem \texorpdfstring{$\mathbf{\mathcal{P}}$}{TEXT}}\label{app:objective_fucntion}
Let us revisit the objective function of optimization problem $\bm{\mathcal{P}}$:

\begin{align}
    &\mathscr{O}=\underbrace{c_1 \hspace{-.1mm}\frac{1}{K}\sum_{k=0}^{K-1}\mathbb E\left[ \big\Vert \nabla \mathfrak{L}^{({k})}(\bm{\omega}^{({k})})\big\Vert^2\right]}_{(x_1)}+ \underbrace{c_2 T^{(K)}}_{(x_2)}+\underbrace{c_3\sum_{k=0}^{K-1}\sum_{b \in \Omega} E_{b}^{\downarrow,{(k)}}}_{(x_3)}{+}\underbrace{c_3\sum_{k=0}^{K-1}\sum_{b \in \Omega}\sum_{u{\in}\mathcal{U}_{b}}\overline{\lambda}_{u}^{(k)}\big( E_{u}^{\mathsf{LC},{(k)}}+\overline{E}_{u}^{\uparrow,(k)}\big)}_{(x_4)}\nonumber \\
    &~~~~~~~~~~~~~~~~~~~~~~~~~~~~~~~~~~~~~~~+\underbrace{c_3\sum_{k=0}^{K-1}\sum_{b \in \Omega} \sum_{u{\in}\mathcal{U}_{b}} \lambda_{u}^{(k)}\big(E_{u}^{\mathsf{LC},{(k)}}+E_{u}^{\uparrow,(k)}\big)}_{(x_5)}.
\end{align}

In the above objective function, $T^{(K)}$, $E_{b}^{\downarrow,{(k)}}$, $\overline{\lambda}_{u}^{(k)}$, $E_{u}^{\mathsf{LC},{(k)}}$, $\overline{E}_{u}^{\uparrow,(k)}$, $\lambda_{u}^{(k)}$, and $E_{u}^{\uparrow,(k)}$ are decision variables, and thus, terms $(x_2)$, $(x_3)$, $(x_4)$, and $(x_5)$ are posynomials. Therefore, no additional transformations on these terms are required. In the following, we transform term $(x_1)$ of the above objective function into the standard form of GP. 

\textbf{The ML bound in the objective function:} We now turn our attention to the term associated with the ML performance in the objective function (i.e., term $(x_1)$ of the above objective function). Consider the bound presented in \eqref{eq:gen_conv}, which is revisited below.

{\footnotesize
\begin{align}\label{eq:GP_gen_conv_1}
\vspace{-5mm}
    &\frac{1}{K} \sum_{k=0}^{K-1}\mathbb{E}\left[\left\Vert\nabla{\mathfrak{L}^{({k})}(\bm{\omega}^{(k)})}\right\Vert^2\right] {\leq} \frac{4}{K} \sum_{k=0}^{K-1}\Bigg(\frac{\mathbb{E}_k\left[\mathfrak{L}^{(k-1)}(\bm{\omega}^{(k)})\right]-\mathbb{E}_k\left[\mathfrak{L}^{({k})}(\bm{\omega}^{(k+1)})\right]}{\eta_{_k}\mathfrak{B}_k\left(1-\zeta^{(k)}\right)}+\frac{\sum_{b \in \Omega}\sum_{u\in \mathcal{U}_{b}} \mathfrak{D}^{(k)}_u\left(\left((T^{(k)}-T^{(k{-}1)})-\widehat{\lambda}_{u}^{(k)}\tau_{u}^{\mathsf{LC},{(k)}}\right)\right) }{\eta_{_k}\mathfrak{B}_k\left(1-\zeta^{(k)}\right)}\Bigg)\nonumber\\
    &+\frac{8}{K} \sum_{k=0}^{K-1}\Bigg[\frac{1}{{\left(1-\zeta^{(k)}\right)}}\Bigg({\beta^2\Theta^2 \eta_k^2}\sum_{b \in \Omega} \sum_{u\in \mathcal{U}_{b}}\frac{|\Upsilon_{u}(\widetilde{\tau}_{b}^{\downarrow,{(k)}})|}{|\Upsilon(\widetilde{\bm{\tau}}^{\downarrow,{(k)}})|}\frac{\left(\ell_u^{(k)}-1\right)}{1- 4\eta_k^2\beta^2 \ell_u^{(k)}\left(\ell_u^{(k)}-1\right)}\left(1-\frac{{B}_{u}(\widetilde{\tau}_{b}^{\downarrow,{(k)}})}{|\Upsilon_{u}(\widetilde{\tau}_{b}^{\downarrow,{(k)}})|} \right)  \frac{{(|\Upsilon_{u}(\widetilde{\tau}_{b}^{\downarrow,{(k)}})|-1)}\left(\sigma_{u}(\widetilde{\tau}_{b}^{\downarrow,{(k)}})\right)^2}{|\Upsilon_{u}(\widetilde{\tau}_{b}^{\downarrow,{(k)}})|{B}_{u}(\widetilde{\tau}_{b}^{\downarrow,{(k)}})}\Bigg)\nonumber\\
    & + \frac{1}{\left(1-\zeta^{(k)}\right)}\Bigg(\frac{\mathfrak{X}_2 \eta_k^2\beta^2 \left(\ell_{\mathsf{max}}^{(k)}\right)\left(\ell_{\mathsf{max}}^{(k)}-1\right)}{1- 4\eta_k^2\beta^2\ell_{\mathsf{max}}^{(k)}\left(\ell_{\mathsf{max}}^{(k)}-1\right)}+ \frac{\Theta^2\beta\eta_{_k}\mathfrak{B}_k}{2} \sum_{b \in \Omega} \sum_{u\in \mathcal{U}_{b}}\frac{\left(\widehat{\lambda}_{u}^{(k)}|\Upsilon_{u}(\widetilde{\tau}_{b}^{\downarrow,{(k)}})|\right)^2}{\left(|\Upsilon^{\mathsf{s}}(\widetilde{\bm{\tau}}^{\downarrow,{(k)}})|\right)^2 \ell^{(k)}_{u}}\left(1-\frac{{B}_{u}(\widetilde{\tau}_{b}^{\downarrow,{(k)}})}{|\Upsilon_{u}(\widetilde{\tau}_{b}^{\downarrow,{(k)}})|} \right) \frac{(|\Upsilon_{u}(\widetilde{\tau}_{b}^{\downarrow,{(k)}})|-1)\left(\sigma_{u}(\widetilde{\tau}_{b}^{\downarrow,{(k)}})\right)^2}{|\Upsilon_{u}(\widetilde{\tau}_{b}^{\downarrow,{(k)}})|{B}_{u}(\widetilde{\tau}_{b}^{\downarrow,{(k)}})}\Bigg)\nonumber\\
    &+\frac{24}{\left(1-\zeta^{(k)}\right)}\left (\frac{|\Upsilon(\widetilde{\bm{\tau}}^{\downarrow,{(k)}})|-|\Upsilon^{\mathsf{s}}(\widetilde{\bm{\tau}}^{\downarrow,{(k)}})|}{|\Upsilon(\widetilde{\bm{\tau}}^{\downarrow,{(k)}})|}\right)^2\sum_{b \in \Omega} \sum_{u\in \mathcal{U}_{b}}\Bigg(\frac{\Theta^2 \beta^2 \eta_k^2 \left(\ell_u^{(k)}-1\right)}{1- 4\eta_k^2\beta^2 \ell_u^{(k)}\left(\ell_u^{(k)}-1\right)} \left(1-\frac{{B}_{u}(\widetilde{\tau}_{b}^{\downarrow,{(k)}})}{|\Upsilon_{u}(\widetilde{\tau}_{b}^{\downarrow,{(k)}})|} \right)  \frac{{(|\Upsilon_{u}(\widetilde{\tau}_{b}^{\downarrow,{(k)}})|-1)}\left(\sigma_{u}(\widetilde{\tau}_{b}^{\downarrow,{(k)}})\right)^2}{|\Upsilon_{u}(\widetilde{\tau}_{b}^{\downarrow,{(k)}})|{B}_{u}(\widetilde{\tau}_{b}^{\downarrow,{(k)}})}\Bigg)\nonumber\\
    &+\frac{6}{\left(1-\zeta^{(k)}\right)}\Bigg (\frac{|\Upsilon(\widetilde{\bm{\tau}}^{\downarrow,{(k)}})|-|\Upsilon^{\mathsf{s}}(\widetilde{\bm{\tau}}^{\downarrow,{(k)}})|}{|\Upsilon(\widetilde{\bm{\tau}}^{\downarrow,{(k)}})|}\Bigg)^2\frac{|\Upsilon(\widetilde{\bm{\tau}}^{\downarrow,{(k)}})|}{|\Upsilon_{\mathsf{min}}(\widetilde{\bm{\tau}}^{\downarrow,{(k)}})|}\Bigg(\frac{\mathfrak{X}_2}{1- 4\eta_k^2\beta^2 \ell_{\mathsf{max}}^{(k)}\left(\ell_{\mathsf{max}}^{(k)}-1\right)}\Bigg)\Bigg].
 \end{align}
}%

In addition to the assumptions made in Theorem~\ref{th:main}, assume $\max_{k{\in}\mathcal{K}} \big\{\zeta^{(k)}\big\} \leq \zeta_{\mathsf{max}}<1$ and $\eta_k = \frac{\alpha}{\sqrt{{\ell}^{(k)}_{\mathsf{sum}} K/N}}$. The above inequality can be simplified as follows:

{\footnotesize
\begin{align}\label{eq:GP_gen_conv_2}
\vspace{-5mm}
    &\frac{1}{K} \sum_{k=0}^{K-1}\mathbb{E}\left[\left\Vert\nabla{\mathfrak{L}^{({k})}(\bm{\omega}^{(k)})}\right\Vert^2\right] {\leq}\frac{4 \sqrt{\widehat{\ell}_{\mathsf{max}}}\left(\mathfrak{L}^{(-1)}(\bm{\omega}^{(0)}) - \mathfrak{L}^{(K)^\star}\right)}{\mathfrak{B}_k\left(1-\zeta_{\mathsf{max}}\right)\alpha \sqrt{KN}}+\frac{4}{K} \sum_{k=0}^{K-1}\frac{\sum_{b \in \Omega}\sum_{u\in \mathcal{U}_{b}} \mathfrak{D}^{(k)}_u\left(\left((T^{(k)}-T^{(k{-}1)})-\widehat{\lambda}_{u}^{(k)}\tau_{u}^{\mathsf{LC},{(k)}}\right)\right) }{\eta_{_k}\mathfrak{B}_k\left(1-\zeta^{(k)}\right)}\nonumber\\
    &+\frac{8}{K} \sum_{k=0}^{K-1}\Bigg[\frac{1}{{\left(1-\zeta^{(k)}\right)}}\Bigg({\beta^2\Theta^2 \eta_k^2}\sum_{b \in \Omega} \sum_{u\in \mathcal{U}_{b}}\frac{|\Upsilon_{u}(\widetilde{\tau}_{b}^{\downarrow,{(k)}})|}{|\Upsilon(\widetilde{\bm{\tau}}^{\downarrow,{(k)}})|}\frac{\left(\ell_u^{(k)}-1\right)}{1- 4\eta_k^2\beta^2 \ell_u^{(k)}\left(\ell_u^{(k)}-1\right)}\left(1-\frac{{B}_{u}(\widetilde{\tau}_{b}^{\downarrow,{(k)}})}{|\Upsilon_{u}(\widetilde{\tau}_{b}^{\downarrow,{(k)}})|} \right)  \frac{{(|\Upsilon_{u}(\widetilde{\tau}_{b}^{\downarrow,{(k)}})|-1)}\left(\sigma_{u}(\widetilde{\tau}_{b}^{\downarrow,{(k)}})\right)^2}{|\Upsilon_{u}(\widetilde{\tau}_{b}^{\downarrow,{(k)}})|{B}_{u}(\widetilde{\tau}_{b}^{\downarrow,{(k)}})}\Bigg)\nonumber\\
    & + \frac{1}{\left(1-\zeta^{(k)}\right)}\Bigg(\frac{\mathfrak{X}_2 \eta_k^2\beta^2 \left(\ell_{\mathsf{max}}^{(k)}\right)\left(\ell_{\mathsf{max}}^{(k)}-1\right)}{1- 4\eta_k^2\beta^2\ell_{\mathsf{max}}^{(k)}\left(\ell_{\mathsf{max}}^{(k)}-1\right)}+ \frac{\Theta^2\beta\eta_{_k}\mathfrak{B}_k}{2} \sum_{b \in \Omega} \sum_{u\in \mathcal{U}_{b}}\frac{\left(\widehat{\lambda}_{u}^{(k)}|\Upsilon_{u}(\widetilde{\tau}_{b}^{\downarrow,{(k)}})|\right)^2}{\left(|\Upsilon^{\mathsf{s}}(\widetilde{\bm{\tau}}^{\downarrow,{(k)}})|\right)^2 \ell^{(k)}_{u}}\left(1-\frac{{B}_{u}(\widetilde{\tau}_{b}^{\downarrow,{(k)}})}{|\Upsilon_{u}(\widetilde{\tau}_{b}^{\downarrow,{(k)}})|} \right) \frac{(|\Upsilon_{u}(\widetilde{\tau}_{b}^{\downarrow,{(k)}})|-1)\left(\sigma_{u}(\widetilde{\tau}_{b}^{\downarrow,{(k)}})\right)^2}{|\Upsilon_{u}(\widetilde{\tau}_{b}^{\downarrow,{(k)}})|{B}_{u}(\widetilde{\tau}_{b}^{\downarrow,{(k)}})}\Bigg)\nonumber\\
    &+\frac{24}{\left(1-\zeta^{(k)}\right)}\left (\frac{|\Upsilon(\widetilde{\bm{\tau}}^{\downarrow,{(k)}})|-|\Upsilon^{\mathsf{s}}(\widetilde{\bm{\tau}}^{\downarrow,{(k)}})|}{|\Upsilon(\widetilde{\bm{\tau}}^{\downarrow,{(k)}})|}\right)^2\sum_{b \in \Omega} \sum_{u\in \mathcal{U}_{b}}\Bigg(\frac{\Theta^2 \beta^2 \eta_k^2 \left(\ell_u^{(k)}-1\right)}{1- 4\eta_k^2\beta^2 \ell_u^{(k)}\left(\ell_u^{(k)}-1\right)} \left(1-\frac{{B}_{u}(\widetilde{\tau}_{b}^{\downarrow,{(k)}})}{|\Upsilon_{u}(\widetilde{\tau}_{b}^{\downarrow,{(k)}})|} \right)  \frac{{(|\Upsilon_{u}(\widetilde{\tau}_{b}^{\downarrow,{(k)}})|-1)}\left(\sigma_{u}(\widetilde{\tau}_{b}^{\downarrow,{(k)}})\right)^2}{|\Upsilon_{u}(\widetilde{\tau}_{b}^{\downarrow,{(k)}})|{B}_{u}(\widetilde{\tau}_{b}^{\downarrow,{(k)}})}\Bigg)\nonumber\\
    &+\frac{6}{\left(1-\zeta^{(k)}\right)}\Bigg (\frac{|\Upsilon(\widetilde{\bm{\tau}}^{\downarrow,{(k)}})|-|\Upsilon^{\mathsf{s}}(\widetilde{\bm{\tau}}^{\downarrow,{(k)}})|}{|\Upsilon(\widetilde{\bm{\tau}}^{\downarrow,{(k)}})|}\Bigg)^2\frac{|\Upsilon(\widetilde{\bm{\tau}}^{\downarrow,{(k)}})|}{|\Upsilon_{\mathsf{min}}(\widetilde{\bm{\tau}}^{\downarrow,{(k)}})|}\Bigg(\frac{\mathfrak{X}_2}{1- 4\eta_k^2\beta^2 \ell_{\mathsf{max}}^{(k)}\left(\ell_{\mathsf{max}}^{(k)}-1\right)}\Bigg)\Bigg].
 \end{align}
}%

As opposed to the bound given in \eqref{eq:GP_gen_conv_1} which depends on consecutive loss gains (i.e., term $(a)$ in \eqref{eq:GP_gen_conv_2}), the above bound only depends on the initial error $\mathfrak{L}^{(-1)}(\bm{\omega}^{(0)}) - \mathfrak{L}^{{(K)}^\star}$ (i.e., term $(a)$ in \eqref{eq:GP_gen_conv_2}). We can further simplify the above inequality as follows:

\begin{align}\label{eq:GP_gen_conv_3}
    &\frac{1}{K} \sum_{k=0}^{K-1}\mathbb{E}\left[\left\Vert\nabla{\mathfrak{L}^{({k})}(\bm{\omega}^{(k)})}\right\Vert^2\right] {\leq} \Phi,
\end{align}
where $\Phi$ is given by
{\footnotesize
\begin{align}\label{eq:GP_gen_conv_phi_1}
    &\Phi=\frac{4 \sqrt{\widehat{\ell}_{\mathsf{max}}}\left(\mathfrak{L}^{(-1)}(\bm{\omega}^{(0)})\right)}{\mathfrak{B}_k\left(1-\zeta_{\mathsf{max}}\right)\alpha \sqrt{KN}}+\frac{4}{K} \sum_{k=0}^{K-1}\frac{\sum_{b \in \Omega}\sum_{u\in \mathcal{U}_{b}} \mathfrak{D}^{(k)}_u\left(\left((T^{(k)}-T^{(k{-}1)})-\widehat{\lambda}_{u}^{(k)}\tau_{u}^{\mathsf{LC},{(k)}}\right)\right) }{\eta_{_k}\mathfrak{B}_k\left(1-\zeta^{(k)}\right)}\nonumber\\
    &+\frac{8}{K} \sum_{k=0}^{K-1}\Bigg[\frac{1}{{\left(1-\zeta^{(k)}\right)}}\Bigg({\beta^2\Theta^2 \eta_k^2}\sum_{b \in \Omega} \sum_{u\in \mathcal{U}_{b}}\frac{|\Upsilon_{u}(\widetilde{\tau}_{b}^{\downarrow,{(k)}})|}{|\Upsilon(\widetilde{\bm{\tau}}^{\downarrow,{(k)}})|}\frac{\left(\ell_u^{(k)}-1\right)}{1- 4\eta_k^2\beta^2 \ell_u^{(k)}\left(\ell_u^{(k)}-1\right)}\left(1-\frac{{B}_{u}(\widetilde{\tau}_{b}^{\downarrow,{(k)}})}{|\Upsilon_{u}(\widetilde{\tau}_{b}^{\downarrow,{(k)}})|} \right)  \frac{{(|\Upsilon_{u}(\widetilde{\tau}_{b}^{\downarrow,{(k)}})|-1)}\left(\sigma_{u}(\widetilde{\tau}_{b}^{\downarrow,{(k)}})\right)^2}{|\Upsilon_{u}(\widetilde{\tau}_{b}^{\downarrow,{(k)}})|{B}_{u}(\widetilde{\tau}_{b}^{\downarrow,{(k)}})}\Bigg)\nonumber\\
    & + \frac{1}{\left(1-\zeta^{(k)}\right)}\Bigg(\frac{\mathfrak{X}_2 \eta_k^2\beta^2 \left(\ell_{\mathsf{max}}^{(k)}\right)\left(\ell_{\mathsf{max}}^{(k)}-1\right)}{1- 4\eta_k^2\beta^2\ell_{\mathsf{max}}^{(k)}\left(\ell_{\mathsf{max}}^{(k)}-1\right)}+ \frac{\Theta^2\beta\eta_{_k}\mathfrak{B}_k}{2} \sum_{b \in \Omega} \sum_{u\in \mathcal{U}_{b}}\frac{\left(\widehat{\lambda}_{u}^{(k)}|\Upsilon_{u}(\widetilde{\tau}_{b}^{\downarrow,{(k)}})|\right)^2}{\left(|\Upsilon^{\mathsf{s}}(\widetilde{\bm{\tau}}^{\downarrow,{(k)}})|\right)^2 \ell^{(k)}_{u}}\left(1-\frac{{B}_{u}(\widetilde{\tau}_{b}^{\downarrow,{(k)}})}{|\Upsilon_{u}(\widetilde{\tau}_{b}^{\downarrow,{(k)}})|} \right) \frac{(|\Upsilon_{u}(\widetilde{\tau}_{b}^{\downarrow,{(k)}})|-1)\left(\sigma_{u}(\widetilde{\tau}_{b}^{\downarrow,{(k)}})\right)^2}{|\Upsilon_{u}(\widetilde{\tau}_{b}^{\downarrow,{(k)}})|{B}_{u}(\widetilde{\tau}_{b}^{\downarrow,{(k)}})}\Bigg)\nonumber\\
    &+\frac{24}{\left(1-\zeta^{(k)}\right)}\left (\frac{|\Upsilon(\widetilde{\bm{\tau}}^{\downarrow,{(k)}})|-|\Upsilon^{\mathsf{s}}(\widetilde{\bm{\tau}}^{\downarrow,{(k)}})|}{|\Upsilon(\widetilde{\bm{\tau}}^{\downarrow,{(k)}})|}\right)^2\sum_{b \in \Omega} \sum_{u\in \mathcal{U}_{b}}\Bigg(\frac{\Theta^2 \beta^2 \eta_k^2 \left(\ell_u^{(k)}-1\right)}{1- 4\eta_k^2\beta^2 \ell_u^{(k)}\left(\ell_u^{(k)}-1\right)} \left(1-\frac{{B}_{u}(\widetilde{\tau}_{b}^{\downarrow,{(k)}})}{|\Upsilon_{u}(\widetilde{\tau}_{b}^{\downarrow,{(k)}})|} \right)  \frac{{(|\Upsilon_{u}(\widetilde{\tau}_{b}^{\downarrow,{(k)}})|-1)}\left(\sigma_{u}(\widetilde{\tau}_{b}^{\downarrow,{(k)}})\right)^2}{|\Upsilon_{u}(\widetilde{\tau}_{b}^{\downarrow,{(k)}})|{B}_{u}(\widetilde{\tau}_{b}^{\downarrow,{(k)}})}\Bigg)\nonumber\\
    &+\frac{6}{\left(1-\zeta^{(k)}\right)}\Bigg (\frac{|\Upsilon(\widetilde{\bm{\tau}}^{\downarrow,{(k)}})|-|\Upsilon^{\mathsf{s}}(\widetilde{\bm{\tau}}^{\downarrow,{(k)}})|}{|\Upsilon(\widetilde{\bm{\tau}}^{\downarrow,{(k)}})|}\Bigg)^2\frac{|\Upsilon(\widetilde{\bm{\tau}}^{\downarrow,{(k)}})|}{|\Upsilon_{\mathsf{min}}(\widetilde{\bm{\tau}}^{\downarrow,{(k)}})|}\Bigg(\frac{\mathfrak{X}_2}{1- 4\eta_k^2\beta^2 \ell_{\mathsf{max}}^{(k)}\left(\ell_{\mathsf{max}}^{(k)}-1\right)}\Bigg)\Bigg].
 \end{align}
}%

In the following, we aim to transform \eqref{eq:GP_gen_conv_phi_1} into standard GP format.



\textbf{Convergence bound $\Phi$.}
From Sec.~\ref{sec:dynamic_dataset}, we have
\begin{align}
(\sigma_{u}(t))^2=\frac{1}{|\Upsilon_{u}(t)|-1}\sum_{\xi\in \Upsilon_{u}(t)} \Vert \bm{\xi}- \bm{\mu}_{u}(t)\Vert^2.
\end{align}
Assuming $\sum_{\xi\in \Upsilon_{u}(t)} \Vert \bm{\xi}- \bm{\mu}_{u}(t)\Vert^2\le \sigma^{\mathsf{max}}_{u}$, considering that ${B}_{u}(\widetilde{\tau}_{b}^{\downarrow,{(k)}})=\varsigma^{(k)}_u |\Upsilon_{u}(\widetilde{\tau}_{b}^{\downarrow,{(k)}})|$, and performing some algebraic manipulations on \eqref{eq:GP_gen_conv_phi_1} give us
\begin{align}\label{eq:GP_gen_conv_phi_2}
    &\Phi=\frac{4 \sqrt{\widehat{\ell}_{\mathsf{max}}}\left(\mathfrak{L}^{(-1)}(\bm{\omega}^{(0)})\right)}{\mathfrak{B}_k\left(1-\zeta_{\mathsf{max}}\right)\alpha \sqrt{KN}}+\frac{4}{K} \sum_{k=0}^{K-1}\frac{\sum_{b \in \Omega}\sum_{u\in \mathcal{U}_{b}} \mathfrak{D}^{(k)}_u\left(\left((T^{(k)}-T^{(k{-}1)})-\widehat{\lambda}_{u}^{(k)}\tau_{u}^{\mathsf{LC},{(k)}}\right)\right) }{\eta_{_k}\mathfrak{B}_k\left(1-\zeta^{(k)}\right)}\nonumber\\
    &+\frac{8}{K} \sum_{k=0}^{K-1}\Bigg[\frac{1}{{\left(1-\zeta^{(k)}\right)}}\Bigg({\beta^2\Theta^2 \eta_k^2}\sum_{b \in \Omega} \sum_{u\in \mathcal{U}_{b}}\frac{1}{|\Upsilon(\widetilde{\bm{\tau}}^{\downarrow,{(k)}})|}\frac{\left(\ell_u^{(k)}-1\right)}{1- 4\eta_k^2\beta^2 \ell_u^{(k)}\left(\ell_u^{(k)}-1\right)}\frac{\left(1-\varsigma^{(k)}_u\right)\sigma^{\mathsf{max}}_{u}}{\varsigma^{(k)}_u|\Upsilon_{u}(\widetilde{\tau}_{b}^{\downarrow,{(k)}})|}\Bigg)\nonumber\\
    & + \frac{1}{\left(1-\zeta^{(k)}\right)}\Bigg(\frac{\mathfrak{X}_2 \eta_k^2\beta^2 \left(\ell_{\mathsf{max}}^{(k)}\right)\left(\ell_{\mathsf{max}}^{(k)}-1\right)}{1- 4\eta_k^2\beta^2\ell_{\mathsf{max}}^{(k)}\left(\ell_{\mathsf{max}}^{(k)}-1\right)}+ \frac{\Theta^2\beta\eta_{_k}\mathfrak{B}_k}{2} \sum_{b \in \Omega} \sum_{u\in \mathcal{U}_{b}}\frac{\left(\widehat{\lambda}_{u}^{(k)}\right)^2}{\left(|\Upsilon^{\mathsf{s}}(\widetilde{\bm{\tau}}^{\downarrow,{(k)}})|\right)^2 \ell^{(k)}_{u}}\frac{\left(1-\varsigma^{(k)}_u\right)\sigma^{\mathsf{max}}_{u}}{\varsigma^{(k)}_u}\Bigg)\nonumber\\
    &+\frac{24}{\left(1-\zeta^{(k)}\right)}\left (\frac{|\Upsilon(\widetilde{\bm{\tau}}^{\downarrow,{(k)}})|-|\Upsilon^{\mathsf{s}}(\widetilde{\bm{\tau}}^{\downarrow,{(k)}})|}{|\Upsilon(\widetilde{\bm{\tau}}^{\downarrow,{(k)}})|}\right)^2\sum_{b \in \Omega} \sum_{u\in \mathcal{U}_{b}}\Bigg(\frac{\Theta^2 \beta^2 \eta_k^2 \left(\ell_u^{(k)}-1\right)}{1- 4\eta_k^2\beta^2 \ell_u^{(k)}\left(\ell_u^{(k)}-1\right)} \frac{\left(1-\varsigma^{(k)}_u\right)\sigma^{\mathsf{max}}_{u}}{\varsigma^{(k)}_u|\Upsilon_{u}(\widetilde{\tau}_{b}^{\downarrow,{(k)}})|^2}\Bigg)\nonumber\\
    &+\frac{6}{\left(1-\zeta^{(k)}\right)}\Bigg (\frac{|\Upsilon(\widetilde{\bm{\tau}}^{\downarrow,{(k)}})|-|\Upsilon^{\mathsf{s}}(\widetilde{\bm{\tau}}^{\downarrow,{(k)}})|}{|\Upsilon(\widetilde{\bm{\tau}}^{\downarrow,{(k)}})|}\Bigg)^2\frac{|\Upsilon(\widetilde{\bm{\tau}}^{\downarrow,{(k)}})|}{|\Upsilon_{\mathsf{min}}(\widetilde{\bm{\tau}}^{\downarrow,{(k)}})|}\Bigg(\frac{\mathfrak{X}_2}{1- 4\eta_k^2\beta^2 \ell_{\mathsf{max}}^{(k)}\left(\ell_{\mathsf{max}}^{(k)}-1\right)}\Bigg)\Bigg].
\end{align}

Furthermore, we assume that the dynamic of dataset size  of FLU $u$ is described by the following ordinary differential equation.
\begin{equation}\label{app:eq:dynamic_dataset_size2_ODE}
\frac{d|\Upsilon_{u}(t)|}{dt}{=}
\begin{cases}
   C_{u}^{\downarrow,(k)},&t{\in}[T^{(k{-}1)}, T^{(k{-}1)}{+}\tau_{b}^{\downarrow,{(k)}}),\\
   0,&t{\in}[T^{(k{-}1)}{+}\tau_{b}^{\downarrow,{(k)}},\Psi^{(k)}_{u}],\\
   C_{u}^{\uparrow,(k)},&t{\in}(\Psi^{(k)}_{u},T^{(k)}].
\end{cases}
\end{equation}
where $\Psi^{(k)}_{u}$ is the time at which LM training of FLU $u$ is completed, which is given below:
\begin{equation}\label{app:eq:psi_u_k}
    \Psi^{(k)}_{u}=T^{(k{-}1)}+\tau_{b}^{\downarrow,{(k)}}+\tau_{u}^{\mathsf{LC},{(k)}}.
\end{equation}
Integrating both hand sides of \eqref{app:eq:dynamic_dataset_size2_ODE} yields
\begin{equation}\label{eq:dynamic_dataset_size2}
\hspace{-5mm}
|\Upsilon_{u}(t)|{=}
\begin{cases}
   A^{\downarrow,(k)}_{u}+C_{u}^{\downarrow,(k)}\times \left(t-T^{(k{-}1)}\right), &t{\in}[T^{(k{-}1)}, T^{(k{-}1)}{+}\tau_{b}^{\downarrow,{(k)}}),\\
   A^{\mathsf{LC},(k)}_{u},&t{\in}[T^{(k{-}1)}{+}\tau_{b}^{\downarrow,{(k)}},\Psi^{(k)}_{u}],\\
   A^{\uparrow,(k)}_{u}+C_{u}^{\uparrow,(k)}\times \left(t-\Psi^{(k)}_{u}\right),&t{\in}(\Psi^{(k)}_{u},T^{(k)}].
\end{cases}
\hspace{-3mm}
\end{equation}
Considering boundary dataset size of FLU $u$ at time $t=T^{(k{-}1)}$ (i.e., $|\Upsilon_{u}(T^{(k{-}1)})|$) results in
\begin{equation}\label{app:eq:dataset_size_recursive}
|\Upsilon_{u}(t)|{=}
\begin{cases}
    |\Upsilon_{u}(T^{(k{-}1)})| + C_{u}^{\downarrow,(k)} \times\left(t-T^{(k{-}1)}\right), &t{\in}[T^{(k{-}1)}, T^{(k{-}1)}{+}\tau_{b}^{\downarrow,{(k)}}),\\
    |\Upsilon_{u}(T^{(k{-}1)})| + C_{u}^{\downarrow,(k)} \times \tau_{b}^{\downarrow,{(k)}},&t{\in}[T^{(k{-}1)}{+}\tau_{b}^{\downarrow,{(k)}},\Psi^{(k)}_{u}],\\
    |\Upsilon_{u}(T^{(k{-}1)})| + C_{u}^{\downarrow,(k)} \times \tau_{b}^{\downarrow,{(k)}} + C_{u}^{\uparrow,(k)} \times\left(t-\Psi^{(k)}_{u}\right),&t{\in}(\Psi^{(k)}_{u},T^{(k)}].
\end{cases}
\end{equation}
To prevent negative dataset size, we rewrite \eqref{app:eq:dataset_size_recursive} as follows:
\begin{equation}\label{app:eq:dataset_size_recursive2}
|\Upsilon_{u}(t)|{=}
\begin{cases}
    \max\left\{|\Upsilon_{u}(T^{(k{-}1)})| + C_{u}^{\downarrow,(k)} \times\left(t-T^{(k{-}1)}\right),0\right\}, &t{\in}[T^{(k{-}1)}, T^{(k{-}1)}{+}\tau_{b}^{\downarrow,{(k)}}),\\
    \max\left\{|\Upsilon_{u}(T^{(k{-}1)})| + C_{u}^{\downarrow,(k)} \times \tau_{b}^{\downarrow,{(k)}},0\right\},&t{\in}[T^{(k{-}1)}{+}\tau_{b}^{\downarrow,{(k)}},\Psi^{(k)}_{u}],\\
    \max\left\{|\Upsilon_{u}(T^{(k{-}1)})| + C_{u}^{\downarrow,(k)} \times \tau_{b}^{\downarrow,{(k)}} + C_{u}^{\uparrow,(k)} \times\left(t-\Psi^{(k)}_{u}\right),0\right\},&t{\in}(\Psi^{(k)}_{u},T^{(k)}].
\end{cases}
\end{equation}
The recursive function represented in \eqref{app:eq:dataset_size_recursive2} can be expanded to obtain a closed-form equation for the dataset size of FLU $u$, which is given below.
\begin{equation}\label{app:eq:dataset_size_closed_form_1}
|\Upsilon_{u}(t)|{=}
\begin{cases}
    \max\left\{H^{(k)}_u + C_{u}^{\downarrow,(k)} \times\left(t-T^{(k{-}1)}\right),0\right\}, &t{\in}[T^{(k{-}1)}, T^{(k{-}1)}{+}\tau_{b}^{\downarrow,{(k)}}),\\
    \max\left\{H^{(k)}_u + C_{u}^{\downarrow,(k)} \times \tau_{b}^{\downarrow,{(k)}},0\right\},&t{\in}[T^{(k{-}1)}{+}\tau_{b}^{\downarrow,{(k)}},\Psi^{(k)}_{u}],\\
    \max\left\{H^{(k)}_u + C_{u}^{\downarrow,(k)} \times\tau_{b}^{\downarrow,{(k)}} + C_{u}^{\uparrow,(k)} \times\left(t-\Psi^{(k)}_{u}\right),0\right\}, &t{\in}(\Psi^{(k)}_{u},T^{(k)}],
\end{cases}
\end{equation}
where $H^{(k)}_u$ is calculated as follows:
\begin{equation}\label{app:eq:H_u}
    H^{(k)}_u=|\Upsilon_{u}(0)|+\sum_{k'{=}1}^{k{-}1}C_{u}^{\downarrow,(k')} \times \tau_{b}^{\downarrow,{(k')}}+C_{u}^{\uparrow,(k')} \times\left(T^{(k')}{-}\Psi^{(k')}_{u}\right).
\end{equation}
In \eqref{app:eq:H_u}, $|\Upsilon_{u}(0)|$ is the initial dataset size of FLU $u$. The piecewise equation represented in \eqref{app:eq:dataset_size_closed_form_1} can be succinctly expressed as follows:
\begin{equation}\label{app:eq:dataset_size_compact_closed_form}
    |\Upsilon_{u}(t)|{=}\max\left(H^{(k)}_u{+}C_{u}^{\downarrow,(k)}\min\left(t-T^{(k{-}1)},\tau_{b}^{\downarrow,{(k)}}\right){+}C_{u}^{\uparrow,(k)}\left(\max\left(t,\Psi^{(k)}_{u}\right){-}\Psi^{(k)}_{u}\right),0\right).
\end{equation}
Let us refer to $|\Upsilon_{u}^{(k)}|$ as the dataset size of FLU $u$ utilized for LM training at global round $k$ (i.e., dataset size at time $\widetilde{\tau}_{b}^{\downarrow,{(k)}}{=}T^{(k{-}1)}{+}\tau_{b}^{\downarrow,{(k)}}$). Considering \eqref{app:eq:dataset_size_compact_closed_form}, $|\Upsilon_{u}^{(k)}|$ can be calculated as follows:
\begin{align}\label{app:eq:upsilon_u_closed}
    |\Upsilon_{u}^{(k)}|&=|\Upsilon_{u}(\widetilde{\tau}_{b}^{\downarrow,{(k)}})|=|\Upsilon_{u}(T^{(k{-}1)}{+}\tau_{b}^{\downarrow,{(k)}})|{=}\max\left(H^{(k)}_u{+}C_{u}^{\downarrow,(k)}\times\tau_{b}^{\downarrow,{(k)}},0\right).
\end{align}
Subsequently, we refer to 
\begin{equation}\label{app:eq:upsilon_total_closed}
    |\Upsilon^{(k)}|=|\Upsilon(\widetilde{\bm{\tau}}^{\downarrow,{(k)}})|=\sum_{b\in \Omega,u\in\mathcal{U}_{b}}|\Upsilon_{u}(\widetilde{\tau}_{b}^{\downarrow,{(k)}})|=\sum_{b\in \Omega,u\in\mathcal{U}_{b}}|\Upsilon_{u}^{(k)}|
\end{equation} 
and 
\begin{equation}\label{app:eq:upsilon_recruited_closed}
\hspace{-5mm}
    |\Upsilon^{\mathsf{s},(k)}|{=}|\Upsilon^{\mathsf{s}}(\widetilde{\bm{\tau}}^{\downarrow,{(k)}})|{=}\sum_{b\in \Omega,u\in\mathcal{U}_{b}}\widehat{\lambda}_{u}^{(k)} |\Upsilon_{u}(\widetilde{\tau}_{b}^{\downarrow,{(k)}})|{=}\sum_{b\in \Omega,u\in\mathcal{U}_{b}}\widehat{\lambda}_{u}^{(k)} |\Upsilon_{u}^{(k)}|
\hspace{-3mm}
\end{equation}
as the size of cumulative dataset of all FLUs and recruited FLUs, respectively, utilized at global training round $k$.

Let us define $H^{(k)}_u$, $|\Upsilon_{u}^{(k)}|$, $|\Upsilon^{(k)}|$, and $|\Upsilon^{\mathsf{s},(k)}|$ as three auxiliary decision variables satisfying \eqref{app:eq:H_u}, \eqref{app:eq:upsilon_u_closed}, \eqref{app:eq:upsilon_total_closed}, and \eqref{app:eq:upsilon_recruited_closed}, respectively. We finally rewrite \eqref{eq:GP_gen_conv_phi_1} as the following posynomial, which is in the standard form of GP:
\begin{align}\label{eq:GP_gen_conv_phi_3}
    &\Phi=\frac{4 \sqrt{\widehat{\ell}_{\mathsf{max}}}\left(\mathfrak{L}^{(-1)}(\bm{\omega}^{(0)})\right)}{\mathfrak{B}_k\left(1-\zeta_{\mathsf{max}}\right)\alpha \sqrt{KN}}+\frac{4}{K} \sum_{k=0}^{K-1}\frac{\sum_{b \in \Omega}\sum_{u\in \mathcal{U}_{b}} \mathfrak{D}^{(k)}_u\widehat{\tau}_{u}^{\mathsf{LC},{(k)}} }{\eta_{_k}\mathfrak{B}_k\left(1-\zeta^{(k)}\right)}+\frac{8}{K} \sum_{k=0}^{K-1}\Bigg[\frac{\beta^2\Theta^2 \eta_k^2}{{\left(1-\zeta^{(k)}\right)}}\sum_{b \in \Omega} \sum_{u\in \mathcal{U}_{b}}\frac{\overline{\ell}_u^{(k)}\overline{\varsigma}^{(k)}_u\sigma^{\mathsf{max}}_{u}}{|\Upsilon^{(k)}||\Upsilon_{u}^{(k)}|}\nonumber\\
    & + \frac{1}{\left(1-\zeta^{(k)}\right)}\Bigg(\frac{\mathfrak{X}_2 \eta_k^2\beta^2 \left(\ell_{\mathsf{max}}^{(k)}\right)^2}{\overline{\ell}_{\mathsf{max}}^{(k)}}+ \frac{\Theta^2\beta\eta_{_k}\mathfrak{B}_k}{2} \sum_{b \in \Omega} \sum_{u\in \mathcal{U}_{b}}\frac{\left(\widehat{\lambda}_{u}^{(k)}\right)^2 \overline{\varsigma}^{(k)}_u\sigma^{\mathsf{max}}_{u}}{\left(|\Upsilon^{\mathsf{s},(k)}|\right)^2 \ell^{(k)}_{u}}\Bigg)\nonumber\\
    &+\frac{24\left(|\overline{\Upsilon}^{(k)}|\right)^2}{\left(1-\zeta^{(k)}\right)\left(|\Upsilon^{(k)}|\right)^2}\sum_{b \in \Omega} \sum_{u\in \mathcal{U}_{b}}\Bigg( \frac{\Theta^2 \beta^2 \eta_k^2 \overline{\ell}_u^{(k)}\overline{\varsigma}^{(k)}_u\sigma^{\mathsf{max}}_{u}}{\left(|\Upsilon_{u}^{(k)}|\right)^2}\Bigg)+\frac{6\left(|\overline{\Upsilon}^{(k)}|\right)^2\mathfrak{X}_2}{\left(1-\zeta^{(k)}\right)|\Upsilon^{(k)}||\Upsilon_{\mathsf{min}}^{(k)}|\overline{\ell}_{\mathsf{max}}^{(k)}}\Bigg]\nonumber\\
    &=\frac{4 \sqrt{\widehat{\ell}_{\mathsf{max}}}\left(\mathfrak{L}^{(-1)}(\bm{\omega}^{(0)})\right)}{\alpha \sqrt{KN}\mathfrak{B}_k\left(1-\zeta_{\mathsf{max}}\right)}+\frac{4}{K} \sum_{k=0}^{K-1}\frac{\sum_{b \in \Omega}\sum_{u\in \mathcal{U}_{b}} \mathfrak{D}^{(k)}_u\widehat{\tau}_{u}^{\mathsf{LC},{(k)}} }{\eta_{_k}\mathfrak{B}_k\left(1-\zeta_{\mathsf{max}}\right)}+\frac{8}{K\left(1-\zeta_{\mathsf{max}}\right)} \sum_{k=0}^{K-1}\Bigg[\beta^2\Theta^2 \eta_k^2\sum_{b \in \Omega} \sum_{u\in \mathcal{U}_{b}}\frac{\ell_u^{(k)}\overline{\varsigma}^{(k)}_u\sigma^{\mathsf{max}}_{u}}{\overline{\ell}_{\mathsf{max}}^{(k)}|\Upsilon^{(k)}||\Upsilon_{u}^{(k)}|}\nonumber\\
    & +\Bigg(\frac{\mathfrak{X}_2 \eta_k^2\beta^2 \left(\ell_{\mathsf{max}}^{(k)}\right)^2}{\overline{\ell}_{\mathsf{max}}^{(k)}}+ \frac{\Theta^2\beta\eta_{_k}\mathfrak{B}_k}{2} \sum_{b \in \Omega} \sum_{u\in \mathcal{U}_{b}}\frac{\left(\widehat{\lambda}_{u}^{(k)}\right)^2 \overline{\varsigma}^{(k)}_u\sigma^{\mathsf{max}}_{u}}{\left(|\Upsilon^{\mathsf{s},(k)}|\right)^2 \ell^{(k)}_{u}}\Bigg)\nonumber\\
    &+\frac{24\left(|\overline{\Upsilon}^{(k)}|\right)^2}{\left(|\Upsilon^{(k)}|\right)^2}\sum_{b \in \Omega} \sum_{u\in \mathcal{U}_{b}}\Bigg( \frac{\Theta^2 \beta^2 \eta_k^2 \ell_u^{(k)}\overline{\varsigma}^{(k)}_u\sigma^{\mathsf{max}}_{u}}{\overline{\ell}_{\mathsf{max}}^{(k)}\left(|\Upsilon_{u}^{(k)}|\right)^2}\Bigg)+\frac{6\left(|\overline{\Upsilon}^{(k)}|\right)^2\mathfrak{X}_2}{|\Upsilon^{(k)}||\Upsilon_{\mathsf{min}}^{(k)}|\overline{\ell}_{\mathsf{max}}^{(k)}}\Bigg],
\end{align}
where $|\overline{\Upsilon}^{(k)}|$, $\overline{\ell}_u^{(k)}$, $\overline{\ell}_{\mathsf{max}}^{(k)}$,  $\widehat{\tau}_{u}^{\mathsf{LC},{(k)}}$, $\overline{\varsigma}^{(k)}_u$, $|\Upsilon_{\mathsf{min}}^{(k)}|$, and $\ell^{(k)}_{\mathsf{max}}$ are decision variables satisfying the following constraints: 

\begin{equation}\label{app:eq:upsilon_overline}
    |\overline{\Upsilon}^{(k)}|=|\Upsilon^{(k)}|-|\Upsilon^{\mathsf{s},(k)}|,
\end{equation}
\begin{equation}\label{app:eq:e_max_overline}
   \overline{\ell}_{\mathsf{max}}^{(k)}= 1- 4\eta_k^2\beta^2 \left(\ell_{\mathsf{max}}^{(k)}\right)^2,
\end{equation}
\begin{equation}\label{app:eq:tau_LC_hat}
    \widehat{\tau}_{u}^{\mathsf{LC},{(k)}}=(T^{(k)}-T^{(k{-}1)})-\widehat{\lambda}_{u}^{(k)}\tau_{u}^{\mathsf{LC},{(k)}},
\end{equation}
\begin{equation}\label{app:eq:varsigma_overline}
    \overline{\varsigma}^{(k)}_u=\frac{1}{\varsigma^{(k)}_u} - 1.
\end{equation}
Moreover, from Theorem~\ref{th:main}, we have
\begin{equation}\label{app:eq:upsilon_min}
    |\Upsilon_{\mathsf{min}}^{(k)}|=\min_{b{\in}\Omega,u{\in} \mathcal{U}_{b}}\{|\Upsilon_{u}^{(k)}|\},
\end{equation}
\begin{equation}\label{app:eq:ell_max}
    \ell^{(k)}_{\mathsf{max}}=\max_{b\in\Omega,u{\in} \mathcal{U}_{b}}\{\ell^{(k)}_u\},
\end{equation}

and from Theorem~\ref{th:sufficient_conditions}, the following sufficient conditions must be met:

$\bullet~\bm{\mathfrak{C}^{(L)}}$: \textbf{Sufficient Condition on Training Latency.} 
\begin{align}\label{app:suf:main:gamma_upper}
    T^{(k)}-T^{(k{-}1)} &\le \frac{\sum_{b \in \Omega}\sum_{u\in \mathcal{U}_{b}} \mathfrak{D}^{(k)}_u\widehat{\lambda}_{u}^{(k)}\tau_{u}^{\mathsf{LC},{(k)}}}{\sum_{b \in \Omega}\sum_{u\in \mathcal{U}_{b}} \mathfrak{D}^{(k)}_u}+ \frac{\eta_{_k}\mathfrak{B}_k\left(1-\zeta^{(k)}\right)\vartheta^k}{4\sum_{b \in \Omega}\sum_{u\in \mathcal{U}_{b}} \mathfrak{D}^{(k)}_u}.
\end{align}

$\bullet~\bm{\mathfrak{C}^{(\Upsilon)}}$: \textbf{Sufficient Condition on $|\Upsilon^{\mathsf{s},(k)}|$.}
\begin{equation}\label{app:suf:main:recruitment_overline}
    \left(|\overline{\Upsilon}^{(k)}|\right)^2{<}|\Upsilon^{(k)}|\min\left\{\frac{\widehat{\zeta}^{(k)}|\Upsilon_{\mathsf{min}}^{(k)}|}{48\mathfrak{X}_1},\frac{\varpi^k \left(1-\zeta^{(k)}\right)|\Upsilon_{\mathsf{min}}^{(k)}|\overline{\ell}_{\mathsf{max}}^{(k)}}{48\mathfrak{X}_2}\right\}.
\end{equation}


$\bullet~\bm{\mathfrak{C}^{(\sigma)}}$: \textbf{Sufficient Condition on Gradient Sampling Noise.}
\begin{equation}\label{app:suf:main:sigma}
    \max_{k\in\mathcal{K}}\max_{b{\in}\Omega}\max_{u{\in}\mathcal{U}_{b}}\left\{\frac{\overline{\varsigma}^{(k)}_u\sigma^{\mathsf{max}}_{u}}{|\Upsilon_{u}^{(k)}|}\right\}{\leq} \sigma_{\mathsf{max}}.
\end{equation}

We next aim to transform the above constraints into standard form of GP:

\textbf{Constraint~\eqref{app:eq:H_u}:} Performing some algebraic manipulations on \eqref{app:eq:H_u} gives us
\begin{equation}\label{app:eq:H_u_1}
    \frac{H^{(k)}_u}{|\Upsilon_{u}(0)|+\sum_{k'{=}2}^{k{-}1}\max\left\{C_{u}^{\downarrow,(k')} \times \tau_{b}^{\downarrow,{(k')}}+C_{u}^{\uparrow,(k')} \times\left(T^{(k')}{-}\Psi^{(k')}_{u}\right),0\right\}}= 1.
\end{equation}
Defining two auxiliary decision variables $Q_{u}^{+,(k')}$ and $Q_{u}^{(k')}$ satisfying 
\begin{equation}\label{app:eq:H_u_2}
    Q_{u}^{+,(k')}=\max\left\{Q_{u}^{(k')},0\right\}
\end{equation}
and
\begin{equation}\label{app:eq:H_u_3}
    Q_{u}^{(k')} = C_{u}^{\downarrow,(k')} \times \tau_{b}^{\downarrow,{(k')}}+C_{u}^{\uparrow,(k')} \times\left(T^{(k')}{-}\Psi^{(k')}_{u}\right)
\end{equation}
results in 
\begin{equation}\label{app:eq:H_u_4}
    \frac{H^{(k)}_u}{|\Upsilon_{u}(0)|+\sum_{k'{=}2}^{k{-}1}Q_{u}^{+,(k')}}= 1.
\end{equation}
This constraint is not in the format of GP; therefore, we transform it by splitting it into the following three inequalities:
\begin{equation}\label{app:eq:H_u_4_2}
    \frac{H^{(k)}_u}{|\Upsilon_{u}(0)|+\sum_{k'{=}2}^{k{-}1}Q_{u}^{+,(k')}}\le 1,
\end{equation}
\begin{equation}\label{app:eq:H_u_4_3}
    \frac{|\Upsilon_{u}(0)|+\sum_{k'{=}2}^{k{-}1}Q_{u}^{+,(k')}}{\mathscr{C}^{(k)}_{u} H^{(k)}_u}\le 1,
\end{equation}
\begin{equation}
    \mathscr{C}^{(k)}_{u}\ge 1,
\end{equation}
where $\mathscr{C}^{(k)}_{u}$ is added with a large penalty term to the objective function to force $\mathscr{C}^{(k)}_{u}{\rightarrow}1^+$ at the optimal point. The fraction in~\eqref{app:eq:H_u_4_3} still needs transformation since it is an inequality with a posynomial in the denominator, which is not a posynomial. We thus exploit arithmetic-geometric mean inequality (Lemma~\ref{Lemma:ArethmaticGeometric}) to approximate the denominator with a monomial:
\begin{align}\label{app:eq:H_u_4_333}
    G^{(k)}_{u}(\bm{v})=|\Upsilon_{u}(0)|+\sum_{k'{=}2}^{k{-}1}Q_{u}^{+,(k')} &\geq \widehat{G}^{(k)}_{u}(\bm{v};\ell) \triangleq \left(G^{(k)}_{u}([\bm{v}]^{(\ell-1)})\right)^{\frac{\left[|\Upsilon_{u}(0)|\right]^{(\ell-1)}}{G^{(k)}_{u}([\bm{v}]^{(\ell-1)})}}\times\prod_{k'{=}2}^{k{-}1}\left(\frac{Q_{u}^{+,(k')} G^{(k)}_{u}([\bm{v}]^{(\ell-1)})}{\left[Q_{u}^{+,(k')}\right]^{(\ell-1)}}\right)^{\frac{\left[Q_{u}^{+,(k')}\right]^{(\ell-1)}}{G^{(k)}_{u}([\bm{v}]^{(\ell-1)})}},
\end{align}
which gives us an approximation of~\eqref{app:eq:H_u_4_3} as follows:
\begin{equation}\label{app:eq:H_u_4_4}
    \frac{H^{(k)}_u}{\widehat{G}^{(k)}_{u}(\bm{v};\ell)}\le 1.
\end{equation}

We finally approximate constraint~\eqref{app:eq:H_u_4} as follows:
\begin{tcolorbox}[ams align]
     &\frac{H^{(k)}_u}{\widehat{G}^{(k)}_{u}(\bm{v};\ell)}\le 1,\nonumber\\
     &\frac{|\Upsilon_{u}(0)|+\sum_{k'{=}2}^{k{-}1}Q_{u}^{+,(k')}}{\mathscr{C}^{(k)}_{u} H^{(k)}_u}\le 1,\nonumber\\
     &\frac{1}{\mathscr{C}^{(k)}_{u}}\le 1.\nonumber
\end{tcolorbox}

We next aim to approximate \eqref{app:eq:H_u_2}. To this end, we rewrite \eqref{app:eq:H_u_2} as follows:
\begin{equation}\label{app:eq:H_u_2_1}
    \frac{Q_{u}^{+,(k')}}{\max\left\{Q_{u}^{(k')},\epsilon\right\}} = 1,
\end{equation}
where $\epsilon$ is added to the denominator of the above equation to avoid division by zero.
Using the approximation $\max\{A, B\}\approx (A^{p}+B^{p})^{-\frac{1}{p}}$, which is tight when $p \gg 1$, gives us
\begin{equation}\label{app:eq:H_u_2_2}
    \frac{Q_{u}^{+,(k')}}{\left(\left(Q_{u}^{(k')}\right)^{p}+(\epsilon)^{p}\right)^{-\frac{1}{p}}} = 1.
\end{equation}
The above equation can be rewritten as the following standard monomial equality constraint:
\begin{equation}\label{app:eq:H_u_2_3}
    \left(Q_{u}^{(k'),\epsilon}\right)^{\frac{1}{p}}Q_{u}^{+,(k')} = 1,
\end{equation}
where $Q_{u}^{(k'),\epsilon}$ is an auxiliary decision variable satisfying
\begin{equation}\label{app:eq:H_u_2_4}
    Q_{u}^{(k'),\epsilon} = \left(Q_{u}^{(k')}\right)^{p}+(\epsilon)^{p}.
\end{equation}

However, \eqref{app:eq:H_u_2_4} is not in the format of GP and needs to be transformed to satisfy the GP requirements. In doing so, we rewrite \eqref{app:eq:H_u_2_4} as follows:
\begin{equation}\label{app:eq:H_u_2_4_6}
        \frac{Q_{u}^{(k'),\epsilon}}{\left(Q_{u}^{(k')}\right)^{p}+(\epsilon)^{p}} = 1.
\end{equation}
This constraint is still not in the format of GP. Therefore, we transform it via splitting it into the following three inequalities:
\begin{equation}\label{app:eq:H_u_2_4_7}
    \frac{Q_{u}^{(k'),\epsilon}}{\left(Q_{u}^{(k')}\right)^{p}+(\epsilon)^{p}}\le 1,
\end{equation}
\begin{equation}\label{app:eq:H_u_2_4_8}
    \frac{\left(Q_{u}^{(k')}\right)^{p}+(\epsilon)^{p}}{\mathscr{F}^{(k)}_{u}Q_{u}^{(k'),\epsilon}}\le 1,
\end{equation}
\begin{equation}
    \mathscr{F}^{(k)}_{u}\ge 1,
\end{equation}
where $\mathscr{F}^{(k)}_{u}$ is added with a large penalty term to the objective function to force $\mathscr{F}^{(k)}_{u}{\rightarrow}1^+$ at the optimal point. The fraction in~\eqref{app:eq:H_u_2_4_7} still needs transformation since it is an inequality with a posynomial in the denominator, which is not a posynomial. We thus exploit arithmetic-geometric mean inequality (Lemma~\ref{Lemma:ArethmaticGeometric}) to approximate the denominator with a monomial:
\begin{align}\label{app:eq:H_u_2_4_8}
    V^{(k)}_{u}(\bm{v})=\left(Q_{u}^{(k')}\right)^{p}+(\epsilon)^{p} &\geq \widehat{V}^{(k)}_{u}(\bm{v};\ell) \triangleq \left(\frac{\left(Q_{u}^{(k')}\right)^{p} V^{(k)}_{u}([\bm{v}]^{(\ell-1)})}{\left[\left(Q_{u}^{(k')}\right)^{p}\right]^{(\ell-1)}}\right)^{\frac{\left[\left(Q_{u}^{(k')}\right)^{p}\right]^{(\ell-1)}}{V^{(k)}_{u}([\bm{v}]^{(\ell-1)})}}\times \left( V^{(k)}_{u}([\bm{v}]^{(\ell-1)})\right)^{\frac{\left[(\epsilon)^{p}\right]^{(\ell-1)}}{V^{(k)}_{u}([\bm{v}]^{(\ell-1)})}},
\end{align}
which gives us an approximation of~\eqref{app:eq:H_u_2_4_7} as follows:
\begin{equation}\label{app:eq:H_u_2_4_9}
    \frac{Q_{u}^{(k'),\epsilon}}{\widehat{V}^{(k)}_{u}(\bm{v};\ell)}\le 1.
\end{equation}
We finally approximate constraint~\eqref{app:eq:H_u_2} as follows:
\begin{tcolorbox}[ams align]
     &\left(Q_{u}^{(k'),\epsilon}\right)^{\frac{1}{p}}Q_{u}^{+,(k')} = 1,~~~~~\frac{Q_{u}^{(k'),\epsilon}}{\widehat{V}^{(k)}_{u}(\bm{v};\ell)}\le 1,~~~~~\frac{\left(Q_{u}^{(k')}\right)^{p}+(\epsilon)^{p}}{\mathscr{F}^{(k)}_{u}Q_{u}^{(k'),\epsilon}}\le 1,~~~~~\frac{1}{\mathscr{F}^{(k)}_{u}}\le 1.\nonumber
\end{tcolorbox}

Next, we transform \eqref{app:eq:H_u_3} into standard format of GP. In doing so, considering $\Psi^{(k)}_{u}$ presented in \eqref{app:eq:psi_u_k}, we rewrite \eqref{app:eq:H_u_3} as follows:
\begin{equation}\label{app:eq:H_u_3_1}
    \frac{Q_{u}^{(k')} +  C_{u}^{\uparrow,(k')}T^{(k'{-}1)}+ C_{u}^{\uparrow,(k')}\tau_{b}^{\downarrow,{(k')}}+ C_{u}^{\uparrow,(k')}\tau_{u}^{\mathsf{LC},{(k')}}}{C_{u}^{\downarrow,(k')}  \tau_{b}^{\downarrow,{(k')}}+ C_{u}^{\uparrow,(k')}T^{(k')}} = 1.
\end{equation}
This constraint is still not in the format of GP. Therefore, we transform it via splitting it into the following three inequalities:
\begin{equation}\label{app:eq:H_u_3_7}
    \frac{Q_{u}^{(k')} +  C_{u}^{\uparrow,(k')}T^{(k'{-}1)}+ C_{u}^{\uparrow,(k')}\tau_{b}^{\downarrow,{(k')}}+ C_{u}^{\uparrow,(k')}\tau_{u}^{\mathsf{LC},{(k')}}}{C_{u}^{\downarrow,(k')}  \tau_{b}^{\downarrow,{(k')}}+ C_{u}^{\uparrow,(k')}T^{(k')}}\le 1,
\end{equation}
\begin{equation}\label{app:eq:H_u_3_8}
    \frac{\left(\mathscr{J}^{(k)}_{u}\right)^{-1}\left(C_{u}^{\downarrow,(k')}  \tau_{b}^{\downarrow,{(k')}}+ C_{u}^{\uparrow,(k')}T^{(k')}\right)}{Q_{u}^{(k')} +  C_{u}^{\uparrow,(k')}T^{(k'{-}1)}+ C_{u}^{\uparrow,(k')}\tau_{b}^{\downarrow,{(k')}}+ C_{u}^{\uparrow,(k')}\tau_{u}^{\mathsf{LC},{(k')}}}\le 1,
\end{equation}
\begin{equation}
    \mathscr{J}^{(k)}_{u}\ge 1,
\end{equation}
where $\mathscr{J}^{(k)}_{u}$ is added with a large penalty term to the objective function to force $\mathscr{J}^{(k)}_{u}{\rightarrow}1^+$ at the optimal point. The fractions in \eqref{app:eq:H_u_3_7} and \eqref{app:eq:H_u_3_8} still need transformation since they are inequalities with posynomials in their denominators, which are not posynomials. We thus exploit arithmetic-geometric mean inequality (Lemma~\ref{Lemma:ArethmaticGeometric}) to approximate the denominator of \eqref{app:eq:H_u_3_7} and \eqref{app:eq:H_u_3_8} with monomials. To this end, we approximate \eqref{app:eq:H_u_3_7} as follows:
\begin{align}\label{app:eq:H_u_3_9}
    &M^{(k')}_{u}(\bm{v})= C_{u}^{\downarrow,(k')}\tau_{b}^{\downarrow,{(k')}}+ C_{u}^{\uparrow,(k')}T^{(k')} \nonumber\\
    &\geq \widehat{M}^{(k')}_{u}(\bm{v};\ell) \triangleq \left(\frac{\tau_{b}^{\downarrow,{(k')}} M^{(k')}_{u}([\bm{v}]^{(\ell-1)})}{\left[\tau_{b}^{\downarrow,{(k')}}\right]^{(\ell-1)}}\right)^{\frac{\left[C_{u}^{\downarrow,(k')}\tau_{b}^{\downarrow,{(k')}}\right]^{(\ell-1)}}{M^{(k')}_{u}([\bm{v}]^{(\ell-1)})}}\times \left( M^{(k')}_{u}([\bm{v}]^{(\ell-1)})\right)^{\frac{\left[(\epsilon)^{p}\right]^{(\ell-1)}}{M^{(k')}_{u}([\bm{v}]^{(\ell-1)})}}\nonumber\\
    &\times \left(\frac{T^{(k')} M^{(k')}_{u}([\bm{v}]^{(\ell-1)})}{\left[T^{(k')}\right]^{(\ell-1)}}\right)^{\frac{\left[C_{u}^{\uparrow,(k')}T^{(k')}\right]^{(\ell-1)}}{M^{(k')}_{u}([\bm{v}]^{(\ell-1)})}}\times \left( M^{(k')}_{u}([\bm{v}]^{(\ell-1)})\right)^{\frac{\left[(\epsilon)^{p}\right]^{(\ell-1)}}{M^{(k')}_{u}([\bm{v}]^{(\ell-1)})}},
\end{align}
which gives us an approximation of~\eqref{app:eq:H_u_3_7} as follows:
\begin{equation}\label{app:eq:H_u_3_10}
    \frac{Q_{u}^{(k')} +  C_{u}^{\uparrow,(k')}T^{(k'{-}1)}+ C_{u}^{\uparrow,(k')}\tau_{b}^{\downarrow,{(k')}}+ C_{u}^{\uparrow,(k')}\tau_{u}^{\mathsf{LC},{(k')}}}{\widehat{M}^{(k')}_{u}(\bm{v};\ell)}\le 1.
\end{equation}

Similarly, we approximate \eqref{app:eq:H_u_3_8} as follows:
\begin{align}\label{app:eq:H_u_3_11}
    &N^{(k')}_{u}(\bm{v})= Q_{u}^{(k')} +  C_{u}^{\uparrow,(k')}T^{(k'{-}1)}+ C_{u}^{\uparrow,(k')}\tau_{b}^{\downarrow,{(k')}}+ C_{u}^{\uparrow,(k')}\tau_{u}^{\mathsf{LC},{(k')}} \nonumber\\
    &\geq \widehat{N}^{(k')}_{u}(\bm{v};\ell) \triangleq \left(\frac{Q_{u}^{(k')} N^{(k')}_{u}([\bm{v}]^{(\ell-1)})}{\left[Q_{u}^{(k')}\right]^{(\ell-1)}}\right)^{\frac{\left[Q_{u}^{(k')}\right]^{(\ell-1)}}{N^{(k')}_{u}([\bm{v}]^{(\ell-1)})}}\times \left(\frac{T^{(k'{-}1)} N^{(k')}_{u}([\bm{v}]^{(\ell-1)})}{\left[T^{(k'{-}1)}\right]^{(\ell-1)}}\right)^{\frac{\left[ C_{u}^{\uparrow,(k')}T^{(k'{-}1)}\right]^{(\ell-1)}}{N^{(k')}_{u}([\bm{v}]^{(\ell-1)})}}\nonumber\\
    &\times \left(\frac{\tau_{b}^{\downarrow,{(k')}} N^{(k')}_{u}([\bm{v}]^{(\ell-1)})}{\left[\tau_{b}^{\downarrow,{(k')}}\right]^{(\ell-1)}}\right)^{\frac{\left[C_{u}^{\uparrow,(k')}\tau_{b}^{\downarrow,{(k')}}\right]^{(\ell-1)}}{N^{(k')}_{u}([\bm{v}]^{(\ell-1)})}} \times \left(\frac{\tau_{u}^{\mathsf{LC},{(k')}} N^{(k')}_{u}([\bm{v}]^{(\ell-1)})}{\left[\tau_{u}^{\mathsf{LC},{(k')}}\right]^{(\ell-1)}}\right)^{\frac{\left[C_{u}^{\uparrow,(k')}\tau_{u}^{\mathsf{LC},{(k')}}\right]^{(\ell-1)}}{N^{(k')}_{u}([\bm{v}]^{(\ell-1)})}},
\end{align}
which gives us an approximation of~\eqref{app:eq:H_u_3_8} as follows:
\begin{equation}\label{app:eq:H_u_3_12}
    \frac{\left(\mathscr{J}^{(k)}_{u}\right)^{-1}\left(C_{u}^{\downarrow,(k')}  \tau_{b}^{\downarrow,{(k')}}+ C_{u}^{\uparrow,(k')}T^{(k')}\right)}{\widehat{N}^{(k')}_{u}(\bm{v};\ell)}\le 1.
\end{equation}
We finally approximate constraint~\eqref{app:eq:H_u_3} as follows:
\begin{tcolorbox}[ams align]
     &\frac{Q_{u}^{(k')} +  C_{u}^{\uparrow,(k')}T^{(k'{-}1)}+ C_{u}^{\uparrow,(k')}\tau_{b}^{\downarrow,{(k')}}+ C_{u}^{\uparrow,(k')}\tau_{u}^{\mathsf{LC},{(k')}}}{\widehat{M}^{(k')}_{u}(\bm{v};\ell)}\le 1,\nonumber\\
     &\frac{\left(\mathscr{J}^{(k)}_{u}\right)^{-1}\left(C_{u}^{\downarrow,(k')}  \tau_{b}^{\downarrow,{(k')}}+ C_{u}^{\uparrow,(k')}T^{(k')}\right)}{\widehat{N}^{(k')}_{u}(\bm{v};\ell)}\le 1,\nonumber\\
     & \frac{1}{\mathscr{J}^{(k)}_{u}}\le 1.\nonumber
\end{tcolorbox}

\textbf{Constraint~\eqref{app:eq:upsilon_u_closed}:} Performing some algebraic manipulations on \eqref{app:eq:upsilon_u_closed} gives us
\begin{align}\label{app:eq:upsilon_u_closed_0}
    \frac{|\Upsilon_{u}^{(k)}|}{\max\left(H^{(k)}_u{+}C_{u}^{\downarrow,(k)}\times\tau_{b}^{\downarrow,{(k)}},0\right)}= 1.
\end{align}
Defining positive auxiliary decision variable $S^{(k)}_u$ satisfying
\begin{equation}\label{app:eq:upsilon_u_closed_1}
    S^{(k)}_u = H^{(k)}_u{+}C_{u}^{\downarrow,(k)}\times\tau_{b}^{\downarrow,{(k)}}
\end{equation}
gives us
\begin{align}\label{app:eq:upsilon_u_closed_2}
    \frac{|\Upsilon_{u}^{(k)}|}{S^{(k)}_u}= 1.
\end{align}

We transform \eqref{app:eq:upsilon_u_closed_1} into standard GP format through rewrite \eqref{app:eq:upsilon_u_closed_1} as follows:
\begin{equation}\label{app:eq:upsilon_u_closed_1_1}
    \frac{S^{(k)}_u}{H^{(k)}_u{+}C_{u}^{\downarrow,(k)}\times\tau_{b}^{\downarrow,{(k)}}} = 1.
\end{equation}
This constraint is still not in the format of GP. Therefore, we transform it via splitting it into the following three inequalities:
\begin{equation}\label{app:eq:upsilon_u_closed_1_7}
    \frac{S^{(k)}_u}{H^{(k)}_u{+}C_{u}^{\downarrow,(k)}\tau_{b}^{\downarrow,{(k)}}}\le 1,
\end{equation}
\begin{equation}\label{app:eq:upsilon_u_closed_1_8}
    \frac{H^{(k)}_u{+}C_{u}^{\downarrow,(k)}\times\tau_{b}^{\downarrow,{(k)}}}{\mathscr{Y}^{(k)}_{u}S^{(k)}_u}\le 1,
\end{equation}
\begin{equation}
    \mathscr{Y}^{(k)}_{u}\ge 1,
\end{equation}
where $\mathscr{Y}^{(k)}_{u}$ is added with a large penalty term to the objective function to force $\mathscr{Y}^{(k)}_{u}{\rightarrow}1^+$ at the optimal point. The fraction in~\eqref{app:eq:upsilon_u_closed_1_7} still needs transformation since it is an inequality with a posynomial in the denominator, which is not a posynomial. We thus exploit arithmetic-geometric mean inequality (Lemma~\ref{Lemma:ArethmaticGeometric}) to approximate the denominator with a monomial:
\begin{align}\label{app:eq:upsilon_u_closed_1_8}
    Y^{(k)}_{u}(\bm{v})=H^{(k)}_u{+}C_{u}^{\downarrow,(k)}\tau_{b}^{\downarrow,{(k)}} &\geq \widehat{Y}^{(k)}_{u}(\bm{v};\ell) \triangleq \left(\frac{H^{(k)}_u Y^{(k)}_{u}([\bm{v}]^{(\ell-1)})}{\left[H^{(k)}_u\right]^{(\ell-1)}}\right)^{\frac{\left[H^{(k)}_u\right]^{(\ell-1)}}{Y^{(k)}_{u}([\bm{v}]^{(\ell-1)})}}\times \left(\frac{\tau_{b}^{\downarrow,{(k)}} Y^{(k)}_{u}([\bm{v}]^{(\ell-1)})}{\left[\tau_{b}^{\downarrow,{(k)}}\right]^{(\ell-1)}}\right)^{\frac{\left[C_{u}^{\downarrow,(k)}\tau_{b}^{\downarrow,{(k)}}\right]^{(\ell-1)}}{Y^{(k)}_{u}([\bm{v}]^{(\ell-1)})}},
\end{align}
which gives us an approximation of~\eqref{app:eq:upsilon_u_closed_1_7} as follows:
\begin{equation}\label{app:eq:upsilon_u_closed_1_9}
    \frac{S^{(k)}_u}{\widehat{Y}^{(k)}_{u}(\bm{v};\ell)}\le 1.
\end{equation}
We finally approximate constraint~\eqref{app:eq:upsilon_u_closed_1} as follows:
\begin{tcolorbox}[ams align]
     &\frac{S^{(k)}_u}{\widehat{Y}^{(k)}_{u}(\bm{v};\ell)}\le 1,~~~~~\frac{H^{(k)}_u{+}C_{u}^{\downarrow,(k)}\times\tau_{b}^{\downarrow,{(k)}}}{\mathscr{Y}^{(k)}_{u}S^{(k)}_u}\le 1,~~~~~\frac{1}{\mathscr{Y}^{(k)}_{u}}\le 1.\nonumber
\end{tcolorbox}

\textbf{Constraint~\eqref{app:eq:upsilon_total_closed}:} Performing some algebraic manipulations on \eqref{app:eq:upsilon_total_closed} gives us
\begin{equation}\label{app:eq:upsilon_total_closed_1}
    \frac{|\Upsilon^{(k)}|}{\sum_{b\in \Omega,u\in\mathcal{U}_{b}}|\Upsilon_{u}^{(k)}|}= 1.
\end{equation} 
This constraint is not in the format of GP; therefore, we transform it by splitting it into the following three inequalities:
\begin{equation}\label{app:eq:upsilon_total_closed_2}
    \frac{|\Upsilon^{(k)}|}{\sum_{b\in \Omega,u\in\mathcal{U}_{b}}|\Upsilon_{u}^{(k)}|}\le 1,
\end{equation}
\begin{equation}\label{app:eq:upsilon_total_closed_3}
    \frac{\sum_{b\in \Omega,u\in\mathcal{U}_{b}}|\Upsilon_{u}^{(k)}|}{\mathscr{O}^{(k)}_{u}|\Upsilon^{(k)}|}\le 1,
\end{equation}
\begin{equation}
    \mathscr{O}^{(k)}_{u}\ge 1,
\end{equation}
where $\mathscr{O}^{(k)}_{u}$ is added with a large penalty term to the objective function to force $\mathscr{O}^{(k)}_{u}{\rightarrow}1^+$ at the optimal point. The fraction in~\eqref{app:eq:upsilon_total_closed_2} still needs transformation since it is an inequality with a posynomial in the denominator, which is not a posynomial. We thus exploit arithmetic-geometric mean inequality (Lemma~\ref{Lemma:ArethmaticGeometric}) to approximate the denominator with a monomial:
\begin{align}\label{app:eq:upsilon_total_closed_333}
    R^{(k)}_{u}(\bm{v})=\sum_{b\in\Omega}\sum_{u\in \mathcal{U}_{b}}|\Upsilon_{u}^{(k)}| &\geq \widehat{R}^{(k)}_{u}(\bm{v};\ell) \triangleq \prod_{b\in\Omega}\prod_{u\in \mathcal{U}_{b}}\left(\frac{|\Upsilon_{u}^{(k)}| R^{(k)}_{u}([\bm{v}]^{(\ell-1)})}{\left[|\Upsilon_{u}^{(k)}|\right]^{(\ell-1)}}\right)^{\frac{\left[|\Upsilon_{u}^{(k)}|\right]^{(\ell-1)}}{R^{(k)}_{u}([\bm{v}]^{(\ell-1)})}},
\end{align}
which gives us an approximation of~\eqref{app:eq:upsilon_total_closed_2} as follows:
\begin{equation}\label{app:eq:upsilon_total_closed_4}
    \frac{|\Upsilon^{(k)}|}{\widehat{R}^{(k)}_{u}(\bm{v};\ell)}\le 1.
\end{equation}
We finally approximate constraint~\eqref{app:eq:upsilon_total_closed} as follows:
\begin{tcolorbox}[ams align]
     & \frac{|\Upsilon^{(k)}|}{\widehat{R}^{(k)}_{u}(\bm{v};\ell)}\le 1,~~~~~~\frac{\sum_{b\in \Omega,u\in\mathcal{U}_{b}}|\Upsilon_{u}^{(k)}|}{\mathscr{O}^{(k)}_{u}|\Upsilon^{(k)}|}\le 1,~~~~~~\frac{1}{\mathscr{O}^{(k)}_{u}}\le 1.\nonumber
\end{tcolorbox}

\textbf{Constraint~\eqref{app:eq:upsilon_recruited_closed}:} Performing some algebraic manipulations on \eqref{app:eq:upsilon_recruited_closed} gives us
\begin{equation}\label{app:eq:upsilon_recruited_closed_1}
    \frac{|\Upsilon^{\mathsf{s},(k)}|}{\sum_{b\in \Omega,u\in\mathcal{U}_{b}}\widehat{\lambda}_{u}^{(k)} |\Upsilon_{u}^{(k)}|}{=} 1.
\end{equation}
This constraint is not in the format of GP; therefore, we transform it by splitting it into the following three inequalities:
\begin{equation}\label{app:eq:upsilon_recruited_closed_2}
    \frac{|\Upsilon^{\mathsf{s},(k)}|}{\sum_{b\in \Omega,u\in\mathcal{U}_{b}}\widehat{\lambda}_{u}^{(k)} |\Upsilon_{u}^{(k)}|}\le 1,
\end{equation}
\begin{equation}\label{app:eq:upsilon_recruited_closed_3}
    \frac{\sum_{b\in \Omega,u\in\mathcal{U}_{b}}\widehat{\lambda}_{u}^{(k)} |\Upsilon_{u}^{(k)}|}{\mathscr{O}^{\mathsf{s},(k)}_{u}|\Upsilon^{\mathsf{s},(k)}|}\le 1,
\end{equation}
\begin{equation}
    \mathscr{O}^{\mathsf{s},(k)}_{u}\ge 1,
\end{equation}
where $\mathscr{O}^{\mathsf{s},(k)}_{u}$ is added with a large penalty term to the objective function to force $\mathscr{O}^{\mathsf{s},(k)}_{u}{\rightarrow}1^+$ at the optimal point. The fraction in~\eqref{app:eq:upsilon_recruited_closed_2} still needs transformation since it is an inequality with a posynomial in the denominator, which is not a posynomial. We thus exploit arithmetic-geometric mean inequality (Lemma~\ref{Lemma:ArethmaticGeometric}) to approximate the denominator with a monomial:
\begin{align}\label{app:eq:upsilon_recruited_closed_333}
    R^{\mathsf{s},(k)}_{u}(\bm{v})=\sum_{b\in\Omega}\sum_{u\in \mathcal{U}_{b}}|\Upsilon_{u}^{(k)}| &\geq \widehat{R}^{\mathsf{s},(k)}_{u}(\bm{v};\ell) \triangleq \prod_{b\in\Omega}\prod_{u\in \mathcal{U}_{b}}\left(\frac{|\Upsilon_{u}^{(k)}| R^{\mathsf{s},(k)}_{u}([\bm{v}]^{(\ell-1)})}{\left[|\Upsilon_{u}^{(k)}|\right]^{(\ell-1)}}\right)^{\frac{\left[|\Upsilon_{u}^{(k)}|\right]^{(\ell-1)}}{R^{\mathsf{s},(k)}_{u}([\bm{v}]^{(\ell-1)})}},
\end{align}
which gives us an approximation of~\eqref{app:eq:upsilon_recruited_closed_2} as follows:
\begin{equation}\label{app:eq:upsilon_recruited_closed_4}
    \frac{|\Upsilon^{\mathsf{s},(k)}|}{\widehat{R}^{\mathsf{s},(k)}_{u}(\bm{v};\ell)}\le 1.
\end{equation}
We finally approximate constraint~\eqref{app:eq:upsilon_recruited_closed} as follows:
\begin{tcolorbox}[ams align]
     & \frac{|\Upsilon^{\mathsf{s},(k)}|}{\widehat{R}^{\mathsf{s},(k)}_{u}(\bm{v};\ell)}\le 1,~~~~~~\frac{\sum_{b\in \Omega,u\in\mathcal{U}_{b}}\widehat{\lambda}_{u}^{(k)} |\Upsilon_{u}^{(k)}|}{\mathscr{O}^{\mathsf{s},(k)}_{u}|\Upsilon^{\mathsf{s},(k)}|}\le 1,~~~~~~\frac{1}{\mathscr{O}^{\mathsf{s},(k)}_{u}}\le 1.\nonumber
\end{tcolorbox}

\textbf{Constraint~\eqref{app:eq:upsilon_overline}:} Performing some algebraic manipulations on \eqref{app:eq:upsilon_overline} gives us
\begin{equation}\label{app:eq:upsilon_overline_1}
    \frac{|\overline{\Upsilon}^{(k)}|+|\Upsilon^{\mathsf{s},(k)}|}{|\Upsilon^{(k)}|}=1.
\end{equation}
This constraint is not in the format of GP; therefore, we transform it by splitting it into the following three inequalities:
\begin{equation}\label{app:eq:upsilon_overline_2}
    \frac{|\overline{\Upsilon}^{(k)}|+|\Upsilon^{\mathsf{s},(k)}|}{|\Upsilon^{(k)}|}\le 1,
\end{equation}
\begin{equation}\label{app:eq:upsilon_overline_3}
    \frac{\left(\mathscr{B}^{(k)}\right)^{-1}|\Upsilon^{(k)}|}{|\overline{\Upsilon}^{(k)}|+|\Upsilon^{\mathsf{s},(k)}|}\le 1,
\end{equation}
\begin{equation}
    \mathscr{B}^{(k)}\ge 1,
\end{equation}
where $\mathscr{B}^{(k)}$ is added with a large penalty term to the objective function to force $\mathscr{B}^{(k)}{\rightarrow}1^+$ at the optimal point. The fraction in~\eqref{app:eq:upsilon_overline_3} still needs transformation since it is an inequality with a posynomial in the denominator, which is not a posynomial. We thus exploit arithmetic-geometric mean inequality (Lemma~\ref{Lemma:ArethmaticGeometric}) to approximate the denominator with a monomial:
\begin{align}\label{app:eq:upsilon_overline_333}
    F^{(k)}(\bm{v})=|\overline{\Upsilon}^{(k)}|+|\Upsilon^{\mathsf{s},(k)}| &\geq \widehat{F}^{(k)}(\bm{v};\ell) \triangleq \left(\frac{|\overline{\Upsilon}^{(k)}| F^{(k)}([\bm{v}]^{(\ell-1)})}{\left[|\overline{\Upsilon}^{(k)}|\right]^{(\ell-1)}}\right)^{\frac{\left[|\overline{\Upsilon}^{(k)}|\right]^{(\ell-1)}}{F^{(k)}([\bm{v}]^{(\ell-1)})}}\times \left(\frac{|\Upsilon^{\mathsf{s},(k)}| F^{(k)}([\bm{v}]^{(\ell-1)})}{\left[|\Upsilon^{\mathsf{s},(k)}|\right]^{(\ell-1)}}\right)^{\frac{\left[|\Upsilon^{\mathsf{s},(k)}|\right]^{(\ell-1)}}{F^{(k)}([\bm{v}]^{(\ell-1)})}},
\end{align}
which gives us an approximation of~\eqref{app:eq:upsilon_overline_3} as follows:
\begin{equation}\label{app:eq:upsilon_overline_4}
    \frac{|\Upsilon^{(k)}|}{\widehat{F}^{(k)}(\bm{v};\ell)}\le 1.
\end{equation}
We finally approximate constraint~\eqref{app:eq:upsilon_overline} as follows:
\begin{tcolorbox}[ams align]
     & \frac{|\overline{\Upsilon}^{(k)}|+|\Upsilon^{\mathsf{s},(k)}|}{|\Upsilon^{(k)}|}\le 1,~~~~~\frac{|\Upsilon^{(k)}|}{\widehat{F}^{(k)}(\bm{v};\ell)}\le 1,~~~~~\frac{1}{\mathscr{B}^{(k)}}\le 1.\nonumber
\end{tcolorbox}

\textbf{Constraint~\eqref{app:eq:e_max_overline}:} Performing some algebraic manipulations on \eqref{app:eq:e_max_overline} gives us
\begin{equation}\label{app:eq:e_max_overline_1}
   \overline{\ell}_{\mathsf{max}}^{(k)}+4\eta_k^2\beta^2 \ell_{\mathsf{max}}^{(k)}\widehat{\ell}_{\mathsf{max}}^{(k)}= 1.
\end{equation}
This constraint is not in the format of GP; therefore, we transform it by splitting it into the following three inequalities:
\begin{equation}\label{app:eq:e_max_overline_2}
    \overline{\ell}_{\mathsf{max}}^{(k)}+4\eta_k^2\beta^2 \ell_{\mathsf{max}}^{(k)}\widehat{\ell}_{\mathsf{max}}^{(k)}\le 1,
\end{equation}
\begin{equation}\label{app:eq:e_max_overline_3}
    \frac{\left(\mathscr{B}^{(k)}_{u,3}\right)^{-1}}{\overline{\ell}_{\mathsf{max}}^{(k)}+4\eta_k^2\beta^2 \ell_{\mathsf{max}}^{(k)}\widehat{\ell}_{\mathsf{max}}^{(k)}}\le 1,
\end{equation}
\begin{equation}
    \mathscr{B}^{(k)}_{u,3}\ge 1,
\end{equation}
where $\mathscr{B}^{(k)}_{u,3}$ is added with a large penalty term to the objective function to force $\mathscr{B}^{(k)}_{u,3}{\rightarrow}1^+$ at the optimal point. The fraction in~\eqref{app:eq:e_max_overline_3} still needs transformation since it is an inequality with a posynomial in the denominator, which is not a posynomial. We thus exploit arithmetic-geometric mean inequality (Lemma~\ref{Lemma:ArethmaticGeometric}) to approximate the denominator with a monomial:
\begin{align}\label{app:eq:e_max_overline_333}
    F^{(k)}_{u,3}(\bm{v})=\overline{\ell}_{\mathsf{max}}^{(k)}+4\eta_k^2\beta^2 \ell_{\mathsf{max}}^{(k)}\widehat{\ell}_{\mathsf{max}}^{(k)} &\geq \widehat{F}^{(k)}_{u,3}(\bm{v};\ell) \triangleq \left(\frac{\overline{\ell}_{\mathsf{max}}^{(k)} F^{(k)}_{u,3}([\bm{v}]^{(\ell-1)})}{\left[\overline{\ell}_{\mathsf{max}}^{(k)}\right]^{(\ell-1)}}\right)^{\frac{\left[\overline{\ell}_{\mathsf{max}}^{(k)}\right]^{(\ell-1)}}{F^{(k)}_{u,3}([\bm{v}]^{(\ell-1)})}}\times  \left(\frac{\eta_k^2\ell_{\mathsf{max}}^{(k)}\widehat{\ell}_{\mathsf{max}}^{(k)} F^{(k)}_{u,3}([\bm{v}]^{(\ell-1)})}{\left[\eta_k^2\ell_{\mathsf{max}}^{(k)}\widehat{\ell}_{\mathsf{max}}^{(k)}\right]^{(\ell-1)}}\right)^{\frac{\left[4\eta_k^2\beta^2 \ell_{\mathsf{max}}^{(k)}\widehat{\ell}_{\mathsf{max}}^{(k)}\right]^{(\ell-1)}}{F^{(k)}_{u,3}([\bm{v}]^{(\ell-1)})}},
\end{align}
which gives us an approximation of~\eqref{app:eq:e_max_overline_3} as follows:
\begin{equation}\label{app:eq:e_max_overline_4}
    \frac{\left(\mathscr{B}^{(k)}_{u,3}\right)^{-1}}{\widehat{F}^{(k)}_{u,3}(\bm{v};\ell)}\le 1.
\end{equation}
We finally approximate constraint~\eqref{app:eq:e_max_overline} as follows:
\begin{tcolorbox}[ams align]
     &\overline{\ell}_{\mathsf{max}}^{(k)}+4\eta_k^2\beta^2 \ell_{\mathsf{max}}^{(k)}\widehat{\ell}_{\mathsf{max}}^{(k)}\le 1,~~~~~\frac{\left(\mathscr{B}^{(k)}_{u,3}\right)^{-1}}{\widehat{F}^{(k)}_{u,3}(\bm{v};\ell)}\le 1,~~~~~~\frac{1}{\mathscr{B}^{(k)}_{u,3}}\le 1.\nonumber
\end{tcolorbox}

\textbf{Constraint~\eqref{app:eq:tau_LC_hat}:} Performing some algebraic manipulations on \eqref{app:eq:tau_LC_hat} gives us
\begin{equation}\label{app:eq:tau_LC_hat_1}
    \frac{\widehat{\tau}_{u}^{\mathsf{LC},{(k)}}+T^{(k{-}1)}+\widehat{\lambda}_{u}^{(k)}\tau_{u}^{\mathsf{LC},{(k)}}}{T^{(k)}} = 1.
\end{equation}
This constraint is not in the format of GP; therefore, we transform it by splitting it into the following three inequalities:
\begin{equation}\label{app:eq:tau_LC_hat_2}
    \frac{\widehat{\tau}_{u}^{\mathsf{LC},{(k)}}+T^{(k{-}1)}+\widehat{\lambda}_{u}^{(k)}\tau_{u}^{\mathsf{LC},{(k)}}}{T^{(k)}}\le 1,
\end{equation}
\begin{equation}\label{app:eq:tau_LC_hat_3}
    \frac{\left(\mathscr{B}^{(k)}_{u,4}\right)^{-1}T^{(k)}}{\widehat{\tau}_{u}^{\mathsf{LC},{(k)}}+T^{(k{-}1)}+\widehat{\lambda}_{u}^{(k)}\tau_{u}^{\mathsf{LC},{(k)}}}\le 1,
\end{equation}
\begin{equation}
    \mathscr{B}^{(k)}_{u,4}\ge 1,
\end{equation}
where $\mathscr{B}^{(k)}_{u,4}$ is added with a large penalty term to the objective function to force $\mathscr{B}^{(k)}_{u,4}{\rightarrow}1^+$ at the optimal point. The fraction in~\eqref{app:eq:tau_LC_hat_3} still needs transformation since it is an inequality with a posynomial in the denominator, which is not a posynomial. We thus exploit arithmetic-geometric mean inequality (Lemma~\ref{Lemma:ArethmaticGeometric}) to approximate the denominator with a monomial:
\begin{align}\label{app:eq:tau_LC_hat_333}
    F^{(k)}_{u,4}(\bm{v})=\widehat{\tau}_{u}^{\mathsf{LC},{(k)}}&+T^{(k{-}1)}+\widehat{\lambda}_{u}^{(k)}\tau_{u}^{\mathsf{LC},{(k)}} \geq \widehat{F}^{(k)}_{u,4}(\bm{v};\ell) \triangleq \left(\frac{\widehat{\tau}_{u}^{\mathsf{LC},{(k)}} F^{(k)}_{u,4}([\bm{v}]^{(\ell-1)})}{\left[\widehat{\tau}_{u}^{\mathsf{LC},{(k)}}\right]^{(\ell-1)}}\right)^{\frac{\left[\widehat{\tau}_{u}^{\mathsf{LC},{(k)}}\right]^{(\ell-1)}}{F^{(k)}_{u,4}([\bm{v}]^{(\ell-1)})}}\nonumber\\
    &\times \left(\frac{T^{(k{-}1)} F^{(k)}_{u,4}([\bm{v}]^{(\ell-1)})}{\left[T^{(k{-}1)}\right]^{(\ell-1)}}\right)^{\frac{\left[T^{(k{-}1)}\right]^{(\ell-1)}}{F^{(k)}_{u,4}([\bm{v}]^{(\ell-1)})}}\times \left(\frac{\widehat{\lambda}_{u}^{(k)}\tau_{u}^{\mathsf{LC},{(k)}} F^{(k)}_{u,4}([\bm{v}]^{(\ell-1)})}{\left[\widehat{\lambda}_{u}^{(k)}\tau_{u}^{\mathsf{LC},{(k)}}\right]^{(\ell-1)}}\right)^{\frac{\left[\widehat{\lambda}_{u}^{(k)}\tau_{u}^{\mathsf{LC},{(k)}}\right]^{(\ell-1)}}{F^{(k)}_{u,4}([\bm{v}]^{(\ell-1)})}},
\end{align}
which gives us an approximation of~\eqref{app:eq:tau_LC_hat_3} as follows:
\begin{equation}\label{app:eq:tau_LC_hat_4}
    \frac{\left(\mathscr{B}^{(k)}_{u,4}\right)^{-1}T^{(k)}}{\widehat{F}^{(k)}_{u,4}(\bm{v};\ell)}\le 1.
\end{equation}
We finally approximate constraint~\eqref{app:eq:tau_LC_hat} as follows:
\begin{tcolorbox}[ams align]
     &\frac{\widehat{\tau}_{u}^{\mathsf{LC},{(k)}}+T^{(k{-}1)}+\widehat{\lambda}_{u}^{(k)}\tau_{u}^{\mathsf{LC},{(k)}}}{T^{(k)}}\le 1,~~~~~\frac{\left(\mathscr{B}^{(k)}_{u,4}\right)^{-1}T^{(k)}}{\widehat{\tau}_{u}^{\mathsf{LC},{(k)}}+T^{(k{-}1)}+\widehat{\lambda}_{u}^{(k)}\tau_{u}^{\mathsf{LC},{(k)}}}\le 1,~~~~~\frac{1}{\mathscr{B}^{(k)}_{u,4}}\le 1.\nonumber
\end{tcolorbox}

\textbf{Constraint~\eqref{app:eq:varsigma_overline}:} Performing some algebraic manipulations on \eqref{app:eq:varsigma_overline} gives us
\begin{equation}\label{app:eq:varsigma_overline_1}
    \overline{\varsigma}^{(k)}_u\varsigma^{(k)}_u+\varsigma^{(k)}_u=1.
\end{equation}
This constraint is not in the format of GP; therefore, we transform it by splitting it into the following three inequalities:
\begin{equation}\label{app:eq:varsigma_overline_2}
    \overline{\varsigma}^{(k)}_u\varsigma^{(k)}_u+\varsigma^{(k)}_u\le 1,
\end{equation}
\begin{equation}\label{app:eq:varsigma_overline_3}
    \frac{\left(\mathscr{B}^{(k)}_{u,5}\right)^{-1}}{\overline{\varsigma}^{(k)}_u\varsigma^{(k)}_u+\varsigma^{(k)}_u}\le 1,
\end{equation}
\begin{equation}
    \mathscr{B}^{(k)}_{u,5}\ge 1,
\end{equation}
where $\mathscr{B}^{(k)}_{u,5}$ is added with a large penalty term to the objective function to force $\mathscr{B}^{(k)}_{u,5}{\rightarrow}1^+$ at the optimal point. The fraction in~\eqref{app:eq:varsigma_overline_3} still needs transformation since it is an inequality with a posynomial in the denominator, which is not a posynomial. We thus exploit arithmetic-geometric mean inequality (Lemma~\ref{Lemma:ArethmaticGeometric}) to approximate the denominator with a monomial:
\begin{align}\label{app:eq:varsigma_overline_333}
    F^{(k)}_{u,5}(\bm{v})=\overline{\varsigma}^{(k)}_u\varsigma^{(k)}_u+\varsigma^{(k)}_u &\geq \widehat{F}^{(k)}_{u,5}(\bm{v};\ell) \triangleq \left(\frac{\overline{\varsigma}^{(k)}_u\varsigma^{(k)}_u F^{(k)}_{u,5}([\bm{v}]^{(\ell-1)})}{\left[\overline{\varsigma}^{(k)}_u\varsigma^{(k)}_u\right]^{(\ell-1)}}\right)^{\frac{\left[\overline{\varsigma}^{(k)}_u\varsigma^{(k)}_u\right]^{(\ell-1)}}{F^{(k)}_{u,5}([\bm{v}]^{(\ell-1)})}}\times \left(\frac{\varsigma^{(k)}_u F^{(k)}_{u,5}([\bm{v}]^{(\ell-1)})}{\left[\varsigma^{(k)}_u\right]^{(\ell-1)}}\right)^{\frac{\left[\varsigma^{(k)}_u\right]^{(\ell-1)}}{F^{(k)}_{u,5}([\bm{v}]^{(\ell-1)})}},
\end{align}
which gives us an approximation of~\eqref{app:eq:varsigma_overline_3} as follows:
\begin{equation}\label{app:eq:varsigma_overline_4}
    \frac{\left(\mathscr{B}^{(k)}_{u,5}\right)^{-1}}{\widehat{F}^{(k)}_{u,5}(\bm{v};\ell)}\le 1.
\end{equation}
We finally approximate constraint~\eqref{app:eq:varsigma_overline} as follows:
\begin{tcolorbox}[ams align]
     &\overline{\varsigma}^{(k)}_u\varsigma^{(k)}_u+\varsigma^{(k)}_u\le 1,~~~~~~\frac{\left(\mathscr{B}^{(k)}_{u,5}\right)^{-1}}{\widehat{F}^{(k)}_{u,5}(\bm{v};\ell)}\le 1,~~~~~~\frac{1}{\mathscr{B}^{(k)}_{u,5}}\le 1.\nonumber
\end{tcolorbox}

\textbf{Constraint~\eqref{app:eq:upsilon_min}:} Performing some algebraic manipulations on \eqref{app:eq:upsilon_min} gives us
\begin{equation}\label{app:eq:upsilon_min_1}
    \frac{|\Upsilon_{\mathsf{min}}^{(k)}|}{\displaystyle\min_{b{\in}\Omega,u{\in} \mathcal{U}_{b}}\{|\Upsilon_{u}^{(k)}|\}}=1.
\end{equation}
Using the approximation $\min\{A, B\}\approx (A^{-p}+B^{-p})^{-\frac{1}{p}}$, which is tight when $p \gg 1$, gives us
\begin{equation}\label{app:eq:upsilon_min_2}
    \frac{|\Upsilon_{\mathsf{min}}^{(k)}|}{\left(\displaystyle\sum_{b{\in}\Omega}\sum_{u{\in} \mathcal{U}_{b}}(|\Upsilon_{u}^{(k)}|)^{-p}\right)^{-\frac{1}{p}}}=1.
\end{equation}
The above equation can be rewritten as the following standard monomial equality constraint:
\begin{equation}\label{app:eq:upsilon_min_3}
    \left(|\widetilde{\Upsilon}_{\mathsf{min}}^{(k)}|\right)^{\frac{1}{p}}|\Upsilon_{\mathsf{min}}^{(k)}| = 1,
\end{equation}
where $|\widetilde{\Upsilon}_{\mathsf{min}}^{(k)}|$ is an auxiliary decision variable satisfying
\begin{equation}\label{app:eq:upsilon_min_4}
    |\widetilde{\Upsilon}_{\mathsf{min}}^{(k)}| = \sum_{b{\in}\Omega}\sum_{u{\in} \mathcal{U}_{b}}(|\Upsilon_{u}^{(k)}|)^{-p}.
\end{equation}

However, \eqref{app:eq:upsilon_min_4} is not in the format of GP and needs to be transformed to satisfy the GP requirements. In doing so, we rewrite \eqref{app:eq:upsilon_min_4} as follows:
\begin{equation}\label{app:eq:upsilon_min_4_6}
        \frac{|\widetilde{\Upsilon}_{\mathsf{min}}^{(k)}|}{\sum_{b{\in}\Omega}\sum_{u{\in} \mathcal{U}_{b}}(|\Upsilon_{u}^{(k)}|)^{-p}} = 1.
\end{equation}
This constraint is still not in the format of GP. Therefore, we transform it via splitting it into the following three inequalities:
\begin{equation}\label{app:eq:upsilon_min_4_7}
    \frac{|\widetilde{\Upsilon}_{\mathsf{min}}^{(k)}|}{\sum_{b{\in}\Omega}\sum_{u{\in} \mathcal{U}_{b}}(|\Upsilon_{u}^{(k)}|)^{-p}}\le 1,
\end{equation}
\begin{equation}\label{app:eq:upsilon_min_4_8}
    \frac{\sum_{b{\in}\Omega}\sum_{u{\in} \mathcal{U}_{b}}(|\Upsilon_{u}^{(k)}|)^{-p}}{\mathscr{B}^{(k)}_{\mathsf{min}}|\widetilde{\Upsilon}_{\mathsf{min}}^{(k)}|}\le 1,
\end{equation}
\begin{equation}
    \mathscr{B}^{(k)}_{\mathsf{min}}\ge 1,
\end{equation}
where $\mathscr{B}^{(k)}_{\mathsf{min}}$ is added with a large penalty term to the objective function to force $\mathscr{B}^{(k)}_{\mathsf{min}}{\rightarrow}1^+$ at the optimal point. The fraction in~\eqref{app:eq:upsilon_min_4_7} still needs transformation since it is an inequality with a posynomial in the denominator, which is not a posynomial. We thus exploit arithmetic-geometric mean inequality (Lemma~\ref{Lemma:ArethmaticGeometric}) to approximate the denominator with a monomial:
\begin{align}\label{app:eq:upsilon_min_4_8}
    F^{(k)}_{\mathsf{min}}(\bm{v})=\sum_{b{\in}\Omega}\sum_{u{\in} \mathcal{U}_{b}}(|\Upsilon_{u}^{(k)}|)^{-p} &\geq \widehat{F}^{(k)}_{\mathsf{min}}(\bm{v};\ell) \triangleq \prod_{b{\in}\Omega}\prod_{u{\in} \mathcal{U}_{b}}\left(\frac{(|\Upsilon_{u}^{(k)}|)^{-p} F^{(k)}_{\mathsf{min}}([\bm{v}]^{(\ell-1)})}{\left[(|\Upsilon_{u}^{(k)}|)^{-p}\right]^{(\ell-1)}}\right)^{\frac{\left[(|\Upsilon_{u}^{(k)}|)^{-p}\right]^{(\ell-1)}}{F^{(k)}_{\mathsf{min}}([\bm{v}]^{(\ell-1)})}},
\end{align}
which gives us an approximation of~\eqref{app:eq:upsilon_min_4_7} as follows:
\begin{equation}\label{app:eq:upsilon_min_4_9}
    \frac{|\widetilde{\Upsilon}_{\mathsf{min}}^{(k)}|}{\widehat{F}^{(k)}_{\mathsf{min}}(\bm{v};\ell)}\le 1.
\end{equation}
We finally approximate constraint~\eqref{app:eq:upsilon_min} as follows:
\begin{tcolorbox}[ams align]
     &\frac{|\widetilde{\Upsilon}_{\mathsf{min}}^{(k)}|}{\widehat{F}^{(k)}_{\mathsf{min}}(\bm{v};\ell)}\le 1,~~~~~\frac{\sum_{b{\in}\Omega}\sum_{u{\in} \mathcal{U}_{b}}(|\Upsilon_{u}^{(k)}|)^{-p}}{\mathscr{B}^{(k)}_{\mathsf{min}}|\widetilde{\Upsilon}_{\mathsf{min}}^{(k)}|}\le 1,~~~~~\frac{1}{\mathscr{B}^{(k)}_{\mathsf{min}}}\le 1.\nonumber
\end{tcolorbox}

\textbf{Constraint~\eqref{app:eq:ell_max}:} Performing some algebraic manipulations on \eqref{app:eq:ell_max} gives us
\begin{equation}\label{app:eq:ell_max_1}
    \frac{\ell^{(k)}_{\mathsf{max}}}{\max_{b\in\Omega,u{\in} \mathcal{U}_{b}}\{\ell^{(k)}_u\}}=1.
\end{equation}
Using the approximation $\max\{A, B\}\approx (A^{p}+B^{p})^{-\frac{1}{p}}$, which is tight when $p \gg 1$, gives us
\begin{equation}\label{app:eq:ell_max_2}
    \frac{\ell^{(k)}_{\mathsf{max}}}{\left(\displaystyle\sum_{b{\in}\Omega}\sum_{u{\in} \mathcal{U}_{b}}(\ell^{(k)}_u)^{p}\right)^{-\frac{1}{p}}}=1.
\end{equation}
The above equation can be rewritten as the following standard monomial equality constraint:
\begin{equation}\label{app:eq:ell_max_3}
    \left(\widetilde{\ell}^{(k)}_{\mathsf{max}}\right)^{\frac{1}{p}}\ell^{(k)}_{\mathsf{max}} = 1,
\end{equation}
where $\widetilde{\ell}^{(k)}_{\mathsf{max}}$ is an auxiliary decision variable satisfying
\begin{equation}\label{app:eq:ell_max_4}
    \widetilde{\ell}^{(k)}_{\mathsf{max}} = \sum_{b{\in}\Omega}\sum_{u{\in} \mathcal{U}_{b}}(\ell^{(k)}_u)^{p}.
\end{equation}

However, \eqref{app:eq:ell_max_4} is not in the format of GP and needs to be transformed to satisfy the GP requirements. In doing so, we rewrite \eqref{app:eq:ell_max_4} as follows:
\begin{equation}\label{app:eq:ell_max_4_6}
        \frac{\widetilde{\ell}^{(k)}_{\mathsf{max}}}{\sum_{b{\in}\Omega}\sum_{u{\in} \mathcal{U}_{b}}(\ell^{(k)}_u)^{p}} = 1.
\end{equation}
This constraint is still not in the format of GP. Therefore, we transform it via splitting it into the following three inequalities:
\begin{equation}\label{app:eq:ell_max_4_7}
    \frac{\widetilde{\ell}^{(k)}_{\mathsf{max}}}{\sum_{b{\in}\Omega}\sum_{u{\in} \mathcal{U}_{b}}(\ell^{(k)}_u)^{p}}\le 1,
\end{equation}
\begin{equation}\label{app:eq:ell_max_4_8}
    \frac{\sum_{b{\in}\Omega}\sum_{u{\in} \mathcal{U}_{b}}(\ell^{(k)}_u)^{p}}{\mathscr{C}^{(k)}_{\mathsf{max}}\widetilde{\ell}^{(k)}_{\mathsf{max}}}\le 1,
\end{equation}
\begin{equation}
    \mathscr{C}^{(k)}_{\mathsf{max}}\ge 1,
\end{equation}
where $\mathscr{C}^{(k)}_{\mathsf{max}}$ is added with a large penalty term to the objective function to force $\mathscr{C}^{(k)}_{\mathsf{max}}{\rightarrow}1^+$ at the optimal point. The fraction in~\eqref{app:eq:ell_max_4_7} still needs transformation since it is an inequality with a posynomial in the denominator, which is not a posynomial. We thus exploit arithmetic-geometric mean inequality (Lemma~\ref{Lemma:ArethmaticGeometric}) to approximate the denominator with a monomial:
\begin{align}\label{app:eq:ell_max_4_8}
    G^{(k)}_{\mathsf{max}}(\bm{v})=\sum_{b{\in}\Omega}\sum_{u{\in} \mathcal{U}_{b}}(\ell^{(k)}_u)^{p} &\geq \widehat{G}^{(k)}_{\mathsf{max}}(\bm{v};\ell) \triangleq \prod_{b{\in}\Omega}\prod_{u{\in} \mathcal{U}_{b}}\left(\frac{(\ell^{(k)}_u)^{p} G^{(k)}_{\mathsf{max}}([\bm{v}]^{(\ell-1)})}{\left[(\ell^{(k)}_u)^{p}\right]^{(\ell-1)}}\right)^{\frac{\left[(\ell^{(k)}_u)^{p}\right]^{(\ell-1)}}{G^{(k)}_{\mathsf{max}}([\bm{v}]^{(\ell-1)})}},
\end{align}
which gives us an approximation of~\eqref{app:eq:ell_max_4_7} as follows:
\begin{equation}\label{app:eq:ell_max_4_9}
    \frac{\widetilde{\ell}^{(k)}_{\mathsf{max}}}{\widehat{G}^{(k)}_{\mathsf{max}}(\bm{v};\ell)}\le 1.
\end{equation}
We finally approximate constraint~\eqref{app:eq:ell_max} as follows:
\begin{tcolorbox}[ams align]
     &\frac{\widetilde{\ell}^{(k)}_{\mathsf{max}}}{\widehat{G}^{(k)}_{\mathsf{max}}(\bm{v};\ell)}\le 1,~~~~~\frac{\sum_{b{\in}\Omega}\sum_{u{\in} \mathcal{U}_{b}}(\ell^{(k)}_u)^{p}}{\mathscr{C}^{(k)}_{\mathsf{max}}\widetilde{\ell}^{(k)}_{\mathsf{max}}}\le 1,~~~~~\frac{1}{\mathscr{C}^{(k)}_{\mathsf{max}}}\le 1.\nonumber
\end{tcolorbox}

\textbf{Constraint~\eqref{app:suf:main:gamma_upper}:} Performing some algebraic manipulations on \eqref{app:suf:main:gamma_upper} gives us
\begin{align}\label{app:suf:main:gamma_upper2}
    \frac{4T^{(k)}\sum_{b \in \Omega}\sum_{u\in \mathcal{U}_{b}} \mathfrak{D}^{(k)}_u}{4T^{(k{-}1)}\sum_{b \in \Omega}\sum_{u\in \mathcal{U}_{b}} \mathfrak{D}^{(k)}_u+4\sum_{b \in \Omega}\sum_{u\in \mathcal{U}_{b}} \mathfrak{D}^{(k)}_u\widehat{\lambda}_{u}^{(k)}\tau_{u}^{\mathsf{LC},{(k)}}+\eta_{_k}\mathfrak{B}_k(1-\zeta^{(k)})\vartheta^k} &\le 1.
\end{align}
The fraction in~\eqref{app:suf:main:gamma_upper2} is not in the format of GP since it is an inequality with a posynomial in the denominator, which is not a posynomial. We thus exploit arithmetic-geometric mean inequality (Lemma~\ref{Lemma:ArethmaticGeometric}) to approximate the denominator with a monomial:
\begin{align}\label{app:suf:main:gamma_upper3}
    F^{(k)}_1(\bm{v})&=\sum_{b \in \Omega}\sum_{u\in \mathcal{U}_{b}} 4T^{(k{-}1)}\mathfrak{D}^{(k)}_u+\sum_{b \in \Omega}\sum_{u\in \mathcal{U}_{b}} 4\mathfrak{D}^{(k)}_u\widehat{\lambda}_{u}^{(k)}\tau_{u}^{\mathsf{LC},{(k)}}+\eta_{_k}\mathfrak{B}_k(1-\zeta^{(k)})\vartheta^k \nonumber\\
    &\geq \widehat{F}^{(k)}_1(\bm{v};\ell) \triangleq  \prod_{b \in \Omega}\prod_{u\in \mathcal{U}_{b}} \left(\frac{T^{(k{-}1)} F^{(k)}_1([\bm{v}]^{(\ell-1)})}{\left[T^{(k{-}1)}\right]^{(\ell-1)}}\right)^{\frac{\left[4T^{(k{-}1)}\mathfrak{D}^{(k)}_u\right]^{(\ell-1)} }{F^{(k)}_1([\bm{v}]^{(\ell-1)})}} \nonumber\\
    &{\times}\prod_{b \in \Omega}\prod_{u\in \mathcal{U}_{b}} \left(\frac{\widehat{\lambda}_{u}^{(k)}\tau_{u}^{\mathsf{LC},{(k)}} F^{(k)}_1([\bm{v}]^{(\ell-1)})}{\left[\widehat{\lambda}_{u}^{(k)}\tau_{u}^{\mathsf{LC},{(k)}}\right]^{(\ell-1)}}\right)^{\frac{\left[4\mathfrak{D}^{(k)}_u\widehat{\lambda}_{u}^{(k)}\tau_{u}^{\mathsf{LC},{(k)}}\right]^{(\ell-1)} }{F^{(k)}_1([\bm{v}]^{(\ell-1)})}} {\times}\left(\frac{\mathfrak{B}_k F^{(k)}_1([\bm{v}]^{(\ell-1)})}{\left[\mathfrak{B}_k\right]^{(\ell-1)}}\right)^{\frac{\left[\eta_{_k}\mathfrak{B}_k(1-\zeta^{(k)})\vartheta^k\right]^{(\ell-1)}}{F^{(k)}_1([\bm{v}]^{(\ell-1)})}}.
\end{align}

We finally approximate \eqref{app:suf:main:gamma_upper} as follows:
\begin{tcolorbox}[ams align]
    \frac{4T^{(k)}\sum_{b \in \Omega}\sum_{u\in \mathcal{U}_{b}} \mathfrak{D}^{(k)}_u}{\widehat{F}^{(k)}_1(\bm{v};\ell)}\le 1 \nonumber
\end{tcolorbox}

\textbf{Constraint~\eqref{app:suf:main:recruitment_overline}:} performing some algebraic manipulations on \eqref{app:suf:main:recruitment_overline} results in the following two monomial inequalities, which are in the standrad form of GP:

\begin{tcolorbox}[ams align]
     &\frac{\left(|\overline{\Upsilon}^{(k)}|\right)^2(48\mathfrak{X}_1)}{|\Upsilon^{(k)}|\widehat{\zeta}^{(k)}|\Upsilon_{\mathsf{min}}^{(k)}|}\le 1.~~~~~~~\frac{\left(|\overline{\Upsilon}^{(k)}|\right)^2(48\mathfrak{X}_1)}{|\Upsilon^{(k)}|\varpi^k \left(1-\zeta^{(k)}\right)|\Upsilon_{\mathsf{min}}^{(k)}|\overline{\ell}_{\mathsf{max}}^{(k)}}\le 1\nonumber.
\end{tcolorbox}

\textbf{Constraint~\eqref{app:suf:main:sigma}:} Performing some algebraic manipulations on \eqref{app:suf:main:sigma} gives us
\begin{equation}\label{app:suf:main:sigma_2}
    \frac{\displaystyle\max_{k\in\mathcal{K}}\max_{b{\in}\Omega}\max_{u{\in}\mathcal{U}_{b}}\left\{\frac{\overline{\varsigma}^{(k)}_u\sigma^{\mathsf{max}}_{u}}{\varsigma^{(k)}_u|\Upsilon_{u}^{(k)}|}\right\}}{\sigma_{\mathsf{max}}}{\leq} 1.
\end{equation}
Using the approximation $\max\{A, B\}\approx (A^{p}+B^{p})^{-\frac{1}{p}}$, which is tight when $p \gg 1$, gives us
\begin{equation}\label{app:suf:main:sigma_3}
    \frac{(\sigma_{\mathsf{max}})^{-1}}{\left(\displaystyle\sum_{k\in\mathcal{K}}\sum_{b{\in}\Omega}\sum_{u{\in}\mathcal{U}_{b}}\left(\frac{\overline{\varsigma}^{(k)}_u\sigma^{\mathsf{max}}_{u}}{\varsigma^{(k)}_u|\Upsilon_{u}^{(k)}|}\right)^{p}\right)^{\frac{1}{p}}}{\leq} 1.
\end{equation}
The above inequality is not in the format of GP since it is not a posynomial. In order to transform \eqref{app:suf:main:sigma_3} into GP format, we write it via introducing an auxiliary decision variable $A^{(k)}_{u}$ satisfying
\begin{equation}\label{app:suf:main:sigma_4}
    A^{(k)}_{u}=\sum_{k\in\mathcal{K}}\sum_{b{\in}\Omega}\sum_{u{\in}\mathcal{U}_{b}}\left(\frac{\overline{\varsigma}^{(k)}_u\sigma^{\mathsf{max}}_{u}}{\varsigma^{(k)}_u|\Upsilon_{u}^{(k)}|}\right)^{p},
\end{equation}
which gives us the following posynomial:
\begin{equation}\label{app:suf:main:sigma_5}
    \left(A^{(k)}_{u}\right)^{-\frac{1}{p}}(\sigma_{\mathsf{max}})^{-1}{\leq} 1.
\end{equation}
However, \eqref{app:suf:main:sigma_4} is not in the format of GP and needs to be transformed to satisfy the GP requirements. In doing so, we rewrite \eqref{app:suf:main:sigma_4} as follows:
\begin{equation}\label{app:suf:main:sigma_6}
    \frac{A^{(k)}_{u}}{\displaystyle\sum_{k\in\mathcal{K}}\sum_{b{\in}\Omega}\sum_{u{\in}\mathcal{U}_{b}}\left(\frac{\overline{\varsigma}^{(k)}_u\sigma^{\mathsf{max}}_{u}}{\varsigma^{(k)}_u|\Upsilon_{u}^{(k)}|}\right)^{p}}=1.
\end{equation}
This constraint is still not in the format of GP. Therefore, we transform it via splitting it into the following three inequalities:
\begin{equation}\label{app:suf:main:sigma_7}
    \frac{A^{(k)}_{u}}{\displaystyle\sum_{k\in\mathcal{K}}\sum_{b{\in}\Omega}\sum_{u{\in}\mathcal{U}_{b}}\left(\frac{\overline{\varsigma}^{(k)}_u\sigma^{\mathsf{max}}_{u}}{\varsigma^{(k)}_u|\Upsilon_{u}^{(k)}|}\right)^{p}}\le 1,
\end{equation}
\begin{equation}\label{app:suf:main:sigma_8}
    \frac{\displaystyle\sum_{k\in\mathcal{K}}\sum_{b{\in}\Omega}\sum_{u{\in}\mathcal{U}_{b}}\left(\frac{\overline{\varsigma}^{(k)}_u\sigma^{\mathsf{max}}_{u}}{\varsigma^{(k)}_u|\Upsilon_{u}^{(k)}|}\right)^{p}}{\mathscr{B}^{(k)}_{u}A^{(k)}_{u}}\le 1,
\end{equation}
\begin{equation}
    \mathscr{B}^{(k)}_{u}\ge 1,
\end{equation}
where $\mathscr{B}^{(k)}_{u}$ is added with a large penalty term to the objective function to force $\mathscr{B}^{(k)}_{u}{\rightarrow}1^+$ at the optimal point. The fraction in~\eqref{app:suf:main:sigma_7} still needs transformation since it is an inequality with a posynomial in the denominator, which is not a posynomial. We thus exploit arithmetic-geometric mean inequality (Lemma~\ref{Lemma:ArethmaticGeometric}) to approximate the denominator with a monomial:
\begin{align}\label{app:suf:main:sigma_8}
    F^{(k)}_{u}(\bm{v})=\sum_{k\in\mathcal{K}}\sum_{b{\in}\Omega}\sum_{u{\in}\mathcal{U}_{b}}\left(\frac{\overline{\varsigma}^{(k)}_u\sigma^{\mathsf{max}}_{u}}{\varsigma^{(k)}_u|\Upsilon_{u}^{(k)}|}\right)^{p} &\geq \widehat{F}^{(k)}_{u}(\bm{v};\ell) \triangleq \prod_{k\in\mathcal{K}}\prod_{b{\in}\Omega}\prod_{u{\in}\mathcal{U}_{b}}\left(\frac{\left(\frac{\overline{\varsigma}^{(k)}_u}{\varsigma^{(k)}_u|\Upsilon_{u}^{(k)}|}\right)^{p} F^{(k)}_{u}([\bm{v}]^{(\ell-1)})}{\left[\left(\frac{\overline{\varsigma}^{(k)}_u}{\varsigma^{(k)}_u|\Upsilon_{u}^{(k)}|}\right)^{p}\right]^{(\ell-1)}}\right)^{\frac{\left[\left(\frac{\overline{\varsigma}^{(k)}_u\sigma^{\mathsf{max}}_{u}}{\varsigma^{(k)}_u|\Upsilon_{u}^{(k)}|}\right)^{p}\right]^{(\ell-1)}}{F^{(k)}_{u}([\bm{v}]^{(\ell-1)})}},
\end{align}
which gives us an approximation of~\eqref{app:suf:main:sigma_7} as follows:
\begin{equation}\label{app:suf:main:sigma_9}
    \frac{A^{(k)}_{u}}{\widehat{F}^{(k)}_{u}(\bm{v};\ell)}\le 1.
\end{equation}

We finally approximate constraint~\eqref{app:suf:main:sigma_4} as follows:
\begin{tcolorbox}[ams align]
     &\left(A^{(k)}_{u}\right)^{-\frac{1}{p}}(\sigma_{\mathsf{max}})^{-1}{\leq} 1,~~~~\frac{A^{(k)}_{u}}{\widehat{F}^{(k)}_{u}(\bm{v};\ell)}\le 1,~~~~\frac{\displaystyle\sum_{k\in\mathcal{K}}\sum_{b{\in}\Omega}\sum_{u{\in}\mathcal{U}_{b}}\left(\frac{\overline{\varsigma}^{(k)}_u\sigma^{\mathsf{max}}_{u}}{\varsigma^{(k)}_u|\Upsilon_{u}^{(k)}|}\right)^{p}}{\mathscr{B}^{(k)}_{u}A^{(k)}_{u}}\le 1,~~~~\frac{1}{\mathscr{B}^{(k)}_{u}}\le 1.\nonumber
\end{tcolorbox}

\newpage
\subsection{Transformation of Scheduling Decisions Constraints}\label{app:cons:scheduling_decisions}
Let us revisit the constraints of Sec.~\ref{sec:scheduling_decisions} in the following:

\textbf{$\bm{\mathsf{SDC}_1}$: GM broadcasting by O-RUs.}  
\begin{equation}\label{app:cons:mac_1}
   \bigg(\frac{\tau_{b}^{\downarrow,{(k)}}}{t_{x}}{-}1\bigg)\times \left(1{-}\beta^{\downarrow}_{b}(t_{x})\right) \leq  0.
\end{equation}
\begin{equation}\label{app:cons:mac_2}
    \bigg(1{-}\frac{\tau_{b}^{\downarrow,{(k)}}}{t_{x}}\bigg)\times\beta^{\downarrow}_{b}(t_{x})\le 0.
\end{equation}

\textbf{$\bm{\mathsf{SDC}_2}$: Local GPs uploading by DPUs.}
\begin{equation}\label{app:cons:mac_4}
   \bigg(\frac{\overline{\lambda}_{u}^{(k)}\Psi(\overline{\mathscr{D}}_{u}^{\nuparrow,(k)})}{t_{x}}{-}1\bigg)\bigg(1{-}\frac{\overline{\lambda}_{u}^{(k)}\Psi(\overline{\mathscr{D}}_{u}^{\uparrow,(k)})}{t_{x}}\bigg)\big(1{-}\overline{\beta}^{\uparrow}_u(t_{x})\big){\leq} 0,
\end{equation}
\begin{equation}\label{app:cons:mac_5}
 \max\bigg\{\frac{\overline{\lambda}_{u}^{(k)}\Psi(\overline{\mathscr{D}}_{u}^{\uparrow,(k)})}{t_{x}}{-}1,1{-}\frac{\overline{\lambda}_{u}^{(k)}\Psi(\overline{\mathscr{D}}_{u}^{\nuparrow,(k)})}{t_{x}}\bigg\}\times\overline{\beta}^{\uparrow}_u(t_{x}){\leq} 0.
\end{equation}
\par\textbf{$\bm{\mathsf{SDC}_3}$: Local GPs uploading by CHUs.}
\begin{equation}\label{app:cons:mac_6}
    \beta^{\uparrow}_u(t_{x})-\lambda_{u}^{(k)} \le 0.
\end{equation}
\begin{equation}\label{app:cons:mac_7}
   \bigg(\frac{\Psi(\mathscr{D}_{u}^{\nuparrow,(k)})}{t_{x}}{-}1\bigg)\bigg(1{-}\frac{\Psi(\mathscr{D}_{u}^{\uparrow,(k)})}{t_{x}}\bigg)\left(1{-}\beta^{\uparrow}_u(t_{x})\right){\leq}0,
\end{equation}
\begin{equation}\label{app:cons:mac_8}
\max\bigg\{\frac{\Psi(\mathscr{D}_{u}^{\uparrow,(k)})}{t_{x}}{-}1,1{-}\frac{\Psi(\mathscr{D}_{u}^{\nuparrow,(k)})}{t_{x}}\bigg\}\times\beta^{\uparrow}_u(t_{x}){\leq} 0.
\end{equation}

In the following, we aim to transform \eqref{app:cons:mac_1}, \eqref{app:cons:mac_2}, 
 \eqref{app:cons:mac_4}, \eqref{app:cons:mac_5}, \eqref{app:cons:mac_6}, \eqref{app:cons:mac_7}, and \eqref{app:cons:mac_8} into GP format:

\noindent\textbf{Constraint~\eqref{app:cons:mac_1}:} We first rewrite \eqref{app:cons:mac_1} as follows:
\begin{equation}\label{app:cons:mac_1_1}
   \frac{\tau_{b}^{\downarrow,{(k)}}}{t_{x}}-\frac{\tau_{b}^{\downarrow,{(k)}}}{t_{x}}\beta^{\downarrow}_{b}(t_{x})-1+\beta^{\downarrow}_{b}(t_{x}) \leq  0.
\end{equation}
Performing some algebraic operations gives us
\begin{equation}\label{app:cons:mac_1_2}
   \frac{\tau_{b}^{\downarrow,{(k)}}+t_{x}\beta^{\downarrow}_{b}(t_{x})}{t_{x}+\tau_{b}^{\downarrow,{(k)}}\beta^{\downarrow}_{b}(t_{x})} \leq 1.
\end{equation}
The fraction in~\eqref{app:cons:mac_1_2} is not in the format of GP since it is an inequality with a posynomial in the denominator, which is not a posynomial. We thus exploit arithmetic-geometric mean inequality (Lemma~\ref{Lemma:ArethmaticGeometric}) to approximate the denominator with a monomial:
\begin{align}\label{app:cons:mac_1_3}
    H_1(\bm{v})=t_{x}+\tau_{b}^{\downarrow,{(k)}}\beta^{\downarrow}_{b}(t_{x})&\geq \widehat{H}_1(\bm{v};\ell) \triangleq \left(H_1([\bm{v}]^{(\ell-1)})\right)^{\frac{\left[t_{x}\right]^{(\ell-1)} }{H_1([\bm{v}]^{(\ell-1)})}} \times\left(\frac{\tau_{b}^{\downarrow,{(k)}}\beta^{\downarrow}_{b}(t_{x}) H_1([\bm{v}]^{(\ell-1)})}{\left[\tau_{b}^{\downarrow,{(k)}}\beta^{\downarrow}_{b}(t_{x})\right]^{(\ell-1)}}\right)^{\frac{\left[\tau_{b}^{\downarrow,{(k)}}\beta^{\downarrow}_{b}(t_{x})\right]^{(\ell-1)}}{H_1([\bm{v}]^{(\ell-1)})}}
\end{align}

We finally approximate the constraint as follows:
\begin{tcolorbox}[ams align]
    \frac{\tau_{b}^{\downarrow,{(k)}}+t_{x}\beta^{\downarrow}_{b}(t_{x})}{\widehat{H}_1(\bm{v};\ell)} \leq 1,~\forall b\in\Omega,~\forall x\in\mathcal{N}^{(k)}\nonumber
\end{tcolorbox}

\noindent\textbf{Constraint~\eqref{app:cons:mac_2}:} We first rewrite \eqref{app:cons:mac_2} as follows:
\begin{equation}\label{app:cons:mac_2_1}
   \bigg(1{-}\frac{\tau_{b}^{\downarrow,{(k)}}}{t_{x}}\bigg)\times\beta^{\downarrow}_{b}(t_{x})\le \epsilon.
\end{equation}
Performing some algebraic operations gives us
\begin{equation}\label{app:cons:mac_2_2}
    \frac{t_{x}\beta^{\downarrow}_{b}(t_{x})}{\epsilon t_{x}+\beta^{\downarrow}_{b}(t_{x})\tau_{b}^{\downarrow,{(k)}}}\le 1.
\end{equation}
The fraction in~\eqref{app:cons:mac_2_2} is not in the format of GP since it is an inequality with a posynomial in the denominator, which is not a posynomial. We thus exploit arithmetic-geometric mean inequality (Lemma~\ref{Lemma:ArethmaticGeometric}) to approximate the denominator with a monomial:
\begin{align}\label{app:cons:mac_2_3}
    H_2(\bm{v})=\epsilon t_{x}+\beta^{\downarrow}_{b}(t_{x})\tau_{b}^{\downarrow,{(k)}}&\geq \widehat{H}_2(\bm{v};\ell) \triangleq \left( H_2([\bm{v}]^{(\ell-1)})\right)^{\frac{\left[\epsilon t_{x}\right]^{(\ell-1)} }{H_2([\bm{v}]^{(\ell-1)})}} \times\left(\frac{\tau_{b}^{\downarrow,{(k)}}\beta^{\downarrow}_{b}(t_{x}) H_2([\bm{v}]^{(\ell-1)})}{\left[\tau_{b}^{\downarrow,{(k)}}\beta^{\downarrow}_{b}(t_{x})\right]^{(\ell-1)}}\right)^{\frac{\left[\tau_{b}^{\downarrow,{(k)}}\beta^{\downarrow}_{b}(t_{x})\right]^{(\ell-1)}}{H_2([\bm{v}]^{(\ell-1)})}}.
\end{align}

We finally approximate the constraint as follows:
\begin{tcolorbox}[ams align]
    \frac{t_{x}\beta^{\downarrow}_{b}(t_{x})}{\widehat{H}_2(\bm{v};\ell)} \leq 1,~\forall b\in\Omega,~\forall x\in\mathcal{N}^{(k)}\nonumber
\end{tcolorbox}



\noindent\textbf{Constraint~\eqref{app:cons:mac_4}:} Performing some algebraic operations on \eqref{app:cons:mac_4} gives us
\begin{equation}\label{app:cons:mac_4_2}
     \frac{t_{x}\overline{\lambda}_{u}^{(k)}\Psi(\overline{\mathscr{D}}_{u}^{\nuparrow,(k)}){+}t_{x}\overline{\lambda}_{u}^{(k)}\Psi(\overline{\mathscr{D}}_{u}^{\uparrow,(k)}){+}(t_{x})^2\overline{\beta}^{\uparrow}_u(t_{x}){+}\overline{\beta}^{\uparrow}_u(t_{x})(\overline{\lambda}_{u}^{(k)})^2\Psi(\overline{\mathscr{D}}_{u}^{\uparrow,(k)})\Psi(\overline{\mathscr{D}}_{u}^{\nuparrow,(k)})}{(t_{x})^2{+}(\overline{\lambda}_{u}^{(k)})^2\Psi(\overline{\mathscr{D}}_{u}^{\uparrow,(k)})\Psi(\overline{\mathscr{D}}_{u}^{\nuparrow,(k)}){+}t_{x}\overline{\beta}^{\uparrow}_u(t_{x})\overline{\lambda}_{u}^{(k)}\Psi(\overline{\mathscr{D}}_{u}^{\nuparrow,(k)}){+}t_{x}\overline{\beta}^{\uparrow}_u(t_{x})\overline{\lambda}_{u}^{(k)}\Psi(\overline{\mathscr{D}}_{u}^{\uparrow,(k)})}{\leq}1.
\end{equation}
The fraction in~\eqref{app:cons:mac_4_2} is not in the format of GP since it is an inequality with a posynomial in the denominator, which is not a posynomial. We thus exploit arithmetic-geometric mean inequality (Lemma~\ref{Lemma:ArethmaticGeometric}) to approximate the denominator with a monomial:
\begin{align}\label{app:cons:mac_4_3}
    &H_4(\bm{v})=(t_{x})^2{+}(\overline{\lambda}_{u}^{(k)})^2\Psi(\overline{\mathscr{D}}_{u}^{\uparrow,(k)})\Psi(\overline{\mathscr{D}}_{u}^{\nuparrow,(k)}){+}t_{x}\overline{\beta}^{\uparrow}_u(t_{x})\overline{\lambda}_{u}^{(k)}\Psi(\overline{\mathscr{D}}_{u}^{\nuparrow,(k)}){+}t_{x}\overline{\beta}^{\uparrow}_u(t_{x})\overline{\lambda}_{u}^{(k)}\Psi(\overline{\mathscr{D}}_{u}^{\uparrow,(k)})\nonumber\\
    &\geq \widehat{H}_4(\bm{v};\ell) \triangleq \left(H_4([\bm{v}]^{(\ell-1)})\right)^{\frac{\left[(t_{x})^2\right]^{(\ell-1)} }{H_4([\bm{v}]^{(\ell-1)})}} \times\left(\frac{(\overline{\lambda}_{u}^{(k)})^2\Psi(\overline{\mathscr{D}}_{u}^{\uparrow,(k)})\Psi(\overline{\mathscr{D}}_{u}^{\nuparrow,(k)}) H_4([\bm{v}]^{(\ell-1)})}{\left[(\overline{\lambda}_{u}^{(k)})^2\Psi(\overline{\mathscr{D}}_{u}^{\uparrow,(k)})\Psi(\overline{\mathscr{D}}_{u}^{\nuparrow,(k)})\right]^{(\ell-1)}}\right)^{\frac{\left[(\overline{\lambda}_{u}^{(k)})^2\Psi(\overline{\mathscr{D}}_{u}^{\uparrow,(k)})\Psi(\overline{\mathscr{D}}_{u}^{\nuparrow,(k)})\right]^{(\ell-1)}}{H_4([\bm{v}]^{(\ell-1)})}}\nonumber\\
    &\times \left(\frac{\overline{\beta}^{\uparrow}_u(t_{x})\overline{\lambda}_{u}^{(k)}\Psi(\overline{\mathscr{D}}_{u}^{\nuparrow,(k)}) H_4([\bm{v}]^{(\ell-1)})}{\left[\overline{\beta}^{\uparrow}_u(t_{x})\overline{\lambda}_{u}^{(k)}\Psi(\overline{\mathscr{D}}_{u}^{\nuparrow,(k)})\right]^{(\ell-1)}}\right)^{\frac{\left[t_{x}\overline{\beta}^{\uparrow}_u(t_{x})\overline{\lambda}_{u}^{(k)}\Psi(\overline{\mathscr{D}}_{u}^{\nuparrow,(k)})\right]^{(\ell-1)} }{H_4([\bm{v}]^{(\ell-1)})}} \times \left(\frac{\overline{\beta}^{\uparrow}_u(t_{x})\overline{\lambda}_{u}^{(k)}\Psi(\overline{\mathscr{D}}_{u}^{\uparrow,(k)}) H_4([\bm{v}]^{(\ell-1)})}{\left[\overline{\beta}^{\uparrow}_u(t_{x})\overline{\lambda}_{u}^{(k)}\Psi(\overline{\mathscr{D}}_{u}^{\uparrow,(k)})\right]^{(\ell-1)}}\right)^{\frac{\left[t_{x}\overline{\beta}^{\uparrow}_u(t_{x})\overline{\lambda}_{u}^{(k)}\Psi(\overline{\mathscr{D}}_{u}^{\uparrow,(k)})\right]^{(\ell-1)} }{H_4([\bm{v}]^{(\ell-1)})}}.
\end{align}

We finally approximate the constraint as follows:
\begin{tcolorbox}[ams align]
\hspace{-4mm}
    \frac{t_{x}\overline{\lambda}_{u}^{(k)}\Psi(\overline{\mathscr{D}}_{u}^{\nuparrow,(k)}){+}t_{x}\overline{\lambda}_{u}^{(k)}\Psi(\overline{\mathscr{D}}_{u}^{\uparrow,(k)}){+}(t_{x})^2\overline{\beta}^{\uparrow}_u(t_{x}){+}\overline{\beta}^{\uparrow}_u(t_{x})(\overline{\lambda}_{u}^{(k)})^2\Psi(\overline{\mathscr{D}}_{u}^{\uparrow,(k)})\Psi(\overline{\mathscr{D}}_{u}^{\nuparrow,(k)})}{\widehat{H}_4(\bm{v};\ell)} \leq 1.\nonumber
\hspace{-3mm}
\end{tcolorbox}

\noindent\textbf{Constraint~\eqref{app:cons:mac_5}:} We first rewrite \eqref{app:cons:mac_5} as the following two inequalities:
\begin{equation}\label{app:cons:mac_5_1}
  \left(\frac{\Psi(\overline{\mathscr{D}}_{u}^{\uparrow,(k)})}{t_{x}}{-}1\right)\times\overline{\beta}^{\uparrow}_u(t_{x}){\leq} \epsilon,
\end{equation}
and
\begin{equation}\label{app:cons:mac_5_2}
  \left(1{-}\frac{\Psi(\overline{\mathscr{D}}_{u}^{\nuparrow,(k)})}{t_{x}}\right)\times\overline{\beta}^{\uparrow}_u(t_{x}){\leq} \epsilon.
\end{equation}
Performing some algebraic operations on the above two inequalities gives us
\begin{equation}\label{app:cons:mac_5_3}
  \frac{\overline{\beta}^{\uparrow}_u(t_{x})\Psi(\overline{\mathscr{D}}_{u}^{\uparrow,(k)})}{\epsilon t_{x}+t_{x}\overline{\beta}^{\uparrow}_u(t_{x})}{\leq} 1,
\end{equation}
and
\begin{equation}\label{app:cons:mac_5_4}
  \frac{t_{x}\overline{\beta}^{\uparrow}_u(t_{x})}{\epsilon t_{x}+\overline{\beta}^{\uparrow}_u(t_{x})\Psi(\overline{\mathscr{D}}_{u}^{\nuparrow,(k)})}{\leq} 1.
\end{equation}
The fraction in~\eqref{app:cons:mac_5_3} and~\eqref{app:cons:mac_5_4} are not in the format of GP since they are inequalities with a posynomial in the denominator, which are not posynomials. We thus exploit arithmetic-geometric mean inequality (Lemma~\ref{Lemma:ArethmaticGeometric}) to approximate the denominator of the above inequalities with monomials:
\begin{align}\label{app:cons:mac_5_5}
    H_5(\bm{v})=\epsilon t_{x}+t_{x}\overline{\beta}^{\uparrow}_u(t_{x})&\geq \widehat{H}_5(\bm{v};\ell) \triangleq \left( H_5([\bm{v}]^{(\ell-1)})\right)^{\frac{\left[\epsilon t_{x}\right]^{(\ell-1)} }{H_5([\bm{v}]^{(\ell-1)})}} \times\left(\frac{\overline{\beta}^{\uparrow}_u(t_{x}) H_5([\bm{v}]^{(\ell-1)})}{\left[\overline{\beta}^{\uparrow}_u(t_{x})\right]^{(\ell-1)}}\right)^{\frac{\left[t_{x}\overline{\beta}^{\uparrow}_u(t_{x})\right]^{(\ell-1)}}{H_5([\bm{v}]^{(\ell-1)})}},
\end{align}
and
\begin{align}\label{app:cons:mac_5_6}
    H_6(\bm{v}){=}\epsilon t_{x}{+}\overline{\beta}^{\uparrow}_u(t_{x})\Psi(\overline{\mathscr{D}}_{u}^{\nuparrow,(k)})&{\geq} \widehat{H}_6(\bm{v};\ell) {\triangleq} \left(H_6([\bm{v}]^{(\ell-1)})\right)^{\frac{\left[\epsilon t_{x}\right]^{(\ell-1)} }{H_6([\bm{v}]^{(\ell-1)})}} \times\left(\frac{\overline{\beta}^{\uparrow}_u(t_{x})\Psi(\overline{\mathscr{D}}_{u}^{\nuparrow,(k)}) H_6([\bm{v}]^{(\ell-1)})}{\left[\overline{\beta}^{\uparrow}_u(t_{x})\Psi(\overline{\mathscr{D}}_{u}^{\nuparrow,(k)})\right]^{(\ell-1)}}\right)^{\frac{\left[\overline{\beta}^{\uparrow}_u(t_{x})\Psi(\overline{\mathscr{D}}_{u}^{\nuparrow,(k)})\right]^{(\ell-1)}}{H_6([\bm{v}]^{(\ell-1)})}}.
\end{align}

We finally approximate \eqref{app:cons:mac_5_3} and \eqref{app:cons:mac_5_4}, respectively, as follows:
\begin{tcolorbox}[ams align]
    \frac{\overline{\beta}^{\uparrow}_u(t_{x})\Psi(\overline{\mathscr{D}}_{u}^{\uparrow,(k)})}{\widehat{H}_5(\bm{v};\ell)} \leq 1,~\forall u \in \mathcal{U}_{b},~\forall b\in\Omega,~\forall x\in\mathcal{N}^{(k)}, \forall k\in\mathcal{K}.\nonumber
\end{tcolorbox}
and
\begin{tcolorbox}[ams align]
    \frac{t_{x}\overline{\beta}^{\uparrow}_u(t_{x})}{\widehat{H}_6(\bm{v};\ell)} \leq 1,~\forall u \in \mathcal{U}_{b},~\forall b\in\Omega,~\forall x\in\mathcal{N}^{(k)}, \forall k\in\mathcal{K}.\nonumber
\end{tcolorbox}

\noindent\textbf{Constraint~\eqref{app:cons:mac_6}:} We first rewrite \eqref{app:cons:mac_6} as follows:
\begin{equation}\label{app:cons:mac_6_1}
  \beta^{\uparrow}_u(t_{x})\le\epsilon + \lambda_{u}^{(k)} 
\end{equation}
Performing some algebraic operations gives us
\begin{equation}\label{app:cons:mac_6_2}
    \frac{\beta^{\uparrow}_u(t_{x})}{\epsilon + \lambda_{u}^{(k)}}\le 1.
\end{equation}
The fraction in~\eqref{app:cons:mac_6_2} is not in the format of GP since it is an inequality with a posynomial in the denominator, which is not a posynomial. We thus exploit arithmetic-geometric mean inequality (Lemma~\ref{Lemma:ArethmaticGeometric}) to approximate the denominator with a monomial:
\begin{align}\label{app:cons:mac_6_3}
    H_6(\bm{v})=\epsilon + \lambda_{u}^{(k)}&\geq \widehat{H}_6(\bm{v};\ell) \triangleq \left(H_6([\bm{v}]^{(\ell-1)})\right)^{\frac{\displaystyle\epsilon}{H_6([\bm{v}]^{(\ell-1)})}} \times\left(\frac{\lambda_{u}^{(k)} H_6([\bm{v}]^{(\ell-1)})}{\left[\lambda_{u}^{(k)}\right]^{(\ell-1)}}\right)^{\frac{\left[\lambda_{u}^{(k)}\right]^{(\ell-1)}}{H_6([\bm{v}]^{(\ell-1)})}}.
\end{align}

We finally approximate the constraint as follows:
\begin{tcolorbox}[ams align]
    \frac{\beta^{\uparrow}_u(t_{x})}{\widehat{H}_6(\bm{v};\ell)} \leq 1,~\forall u \in \mathcal{U}_{b},~\forall b\in\Omega,~\forall x\in\mathcal{N}^{(k)}, \forall k\in\mathcal{K}.\nonumber
\end{tcolorbox}

\noindent\textbf{Constraint~\eqref{app:cons:mac_7}:} Performing some algebraic operations on \eqref{app:cons:mac_7} gives us
\begin{equation}\label{app:cons:mac_7_2}
     \frac{t_{x}\Psi(\mathscr{D}_{u}^{\nuparrow,(k)}){+}t_{x}\Psi(\mathscr{D}_{u}^{\uparrow,(k)}){+}(t_{x})^2\beta^{\uparrow}_u(t_{x}){+}\beta^{\uparrow}_u(t_{x})\Psi(\mathscr{D}_{u}^{\uparrow,(k)})\Psi(\mathscr{D}_{u}^{\nuparrow,(k)})}{(t_{x})^2{+}\Psi(\mathscr{D}_{u}^{\uparrow,(k)})\Psi(\mathscr{D}_{u}^{\nuparrow,(k)}){+}t_{x}\beta^{\uparrow}_u(t_{x})\Psi(\mathscr{D}_{u}^{\nuparrow,(k)}){+}t_{x}\beta^{\uparrow}_u(t_{x})\Psi(\mathscr{D}_{u}^{\uparrow,(k)})}{\leq}1.
\end{equation}
The fraction in~\eqref{app:cons:mac_7_2} is not in the format of GP since it is an inequality with a posynomial in the denominator, which is not a posynomial. We thus exploit arithmetic-geometric mean inequality (Lemma~\ref{Lemma:ArethmaticGeometric}) to approximate the denominator with a monomial:
\begin{align}\label{app:cons:mac_7_3}
    H_7(\bm{v})&=(t_{x})^2{+}\Psi(\mathscr{D}_{u}^{\uparrow,(k)})\Psi(\mathscr{D}_{u}^{\nuparrow,(k)}){+}t_{x}\beta^{\uparrow}_u(t_{x})\Psi(\mathscr{D}_{u}^{\nuparrow,(k)}){+}t_{x}\beta^{\uparrow}_u(t_{x})\Psi(\mathscr{D}_{u}^{\uparrow,(k)})\nonumber\\
    &\geq \widehat{H}_7(\bm{v};\ell) \triangleq \left(H_7([\bm{v}]^{(\ell-1)})\right)^{\frac{\left[(t_{x})^2\right]^{(\ell-1)} }{H_7([\bm{v}]^{(\ell-1)})}} \times\left(\frac{\Psi(\mathscr{D}_{u}^{\uparrow,(k)})\Psi(\mathscr{D}_{u}^{\nuparrow,(k)}) H_7([\bm{v}]^{(\ell-1)})}{\left[\Psi(\mathscr{D}_{u}^{\uparrow,(k)})\Psi(\mathscr{D}_{u}^{\nuparrow,(k)})\right]^{(\ell-1)}}\right)^{\frac{\left[\Psi(\mathscr{D}_{u}^{\uparrow,(k)})\Psi(\mathscr{D}_{u}^{\nuparrow,(k)})\right]^{(\ell-1)}}{H_7([\bm{v}]^{(\ell-1)})}}\nonumber\\
    &\times \left(\frac{\beta^{\uparrow}_u(t_{x})\Psi(\mathscr{D}_{u}^{\nuparrow,(k)}) H_7([\bm{v}]^{(\ell-1)})}{\left[\beta^{\uparrow}_u(t_{x})\Psi(\mathscr{D}_{u}^{\nuparrow,(k)})\right]^{(\ell-1)}}\right)^{\frac{\left[t_{x}\beta^{\uparrow}_u(t_{x})\Psi(\mathscr{D}_{u}^{\nuparrow,(k)})\right]^{(\ell-1)} }{H_7([\bm{v}]^{(\ell-1)})}} \times \left(\frac{\beta^{\uparrow}_u(t_{x})\Psi(\mathscr{D}_{u}^{\uparrow,(k)}) H_7([\bm{v}]^{(\ell-1)})}{\left[\beta^{\uparrow}_u(t_{x})\Psi(\mathscr{D}_{u}^{\uparrow,(k)})\right]^{(\ell-1)}}\right)^{\frac{\left[t_{x}\beta^{\uparrow}_u(t_{x})\Psi(\mathscr{D}_{u}^{\uparrow,(k)})\right]^{(\ell-1)} }{H_7([\bm{v}]^{(\ell-1)})}}.
\end{align}

We finally approximate the constraint as follows:
\begin{tcolorbox}[ams align]
\hspace{-4mm}
    \frac{t_{x}\Psi(\mathscr{D}_{u}^{\nuparrow,(k)}){+}t_{x}\Psi(\mathscr{D}_{u}^{\uparrow,(k)}){+}(t_{x})^2\beta^{\uparrow}_u(t_{x}){+}\beta^{\uparrow}_u(t_{x})\Psi(\mathscr{D}_{u}^{\uparrow,(k)})\Psi(\mathscr{D}_{u}^{\nuparrow,(k)})}{\widehat{H}_7(\bm{v};\ell)} \leq 1,~\forall u \in \mathcal{U}_{b},~\forall b\in\Omega,~\forall x\in\mathcal{N}^{(k)}, \forall k\in\mathcal{K}.\nonumber
\hspace{-3mm}
\end{tcolorbox}

\noindent\textbf{Constraint~\eqref{app:cons:mac_8}:} We first rewrite \eqref{app:cons:mac_8} as the following two inequalities:
\begin{equation}\label{app:cons:mac_8_1}
  \left(\frac{\overline{\lambda}_{u}^{(k)}\Psi(\mathscr{D}_{u}^{\uparrow,(k)})}{t_{x}}{-}1\right)\times\beta^{\uparrow}_u(t_{x})+1{\leq} 1,
\end{equation}
and
\begin{equation}\label{app:cons:mac_8_2}
  \left(1{-}\frac{\overline{\lambda}_{u}^{(k)}\Psi(\mathscr{D}_{u}^{\nuparrow,(k)})}{t_{x}}\right)\times\beta^{\uparrow}_u(t_{x})+1{\leq} 1.
\end{equation}
Performing some algebraic operations on the above two inequalities gives us
\begin{equation}\label{app:cons:mac_8_3}
    \frac{\beta^{\uparrow}_u(t_{x})\overline{\lambda}_{u}^{(k)}\Psi(\mathscr{D}_{u}^{\uparrow,(k)})+t_{x}}{t_{x}+t_{x}\beta^{\uparrow}_u(t_{x})}{\leq} 1,
\end{equation}
and
\begin{equation}\label{app:cons:mac_8_4}
    \frac{\beta^{\uparrow}_u(t_{x})t_{x}+t_{x}}{t_{x}+\beta^{\uparrow}_u(t_{x})\overline{\lambda}_{u}^{(k)}\Psi(\mathscr{D}_{u}^{\nuparrow,(k)})}{\leq} 1.
\end{equation}
The fraction in~\eqref{app:cons:mac_8_3} and~\eqref{app:cons:mac_8_4} are not in the format of GP since they are inequalities with a posynomial in the denominator, which are not posynomials. We thus exploit arithmetic-geometric mean inequality (Lemma~\ref{Lemma:ArethmaticGeometric}) to approximate the denominator of the above inequalities with monomials:
\begin{align}\label{app:cons:mac_8_5}
    H_8(\bm{v})= t_{x}+t_{x}\beta^{\uparrow}_u(t_{x})&\geq \widehat{H}_8(\bm{v};\ell) \triangleq \left(H_8([\bm{v}]^{(\ell-1)})\right)^{\frac{\left[ t_{x}\right]^{(\ell-1)} }{H_8([\bm{v}]^{(\ell-1)})}} \times\left(\frac{\beta^{\uparrow}_u(t_{x}) H_8([\bm{v}]^{(\ell-1)})}{\left[\beta^{\uparrow}_u(t_{x})\right]^{(\ell-1)}}\right)^{\frac{\left[t_{x}\beta^{\uparrow}_u(t_{x})\right]^{(\ell-1)}}{H_8([\bm{v}]^{(\ell-1)})}},
\end{align}
and
\begin{align}\label{app:cons:mac_8_6}
    H_9(\bm{v})&{=} t_{x}{+}\beta^{\uparrow}_u(t_{x})\overline{\lambda}_{u}^{(k)}\Psi(\mathscr{D}_{u}^{\nuparrow,(k)})\nonumber\\
    &{\geq} \widehat{H}_9(\bm{v};\ell) {\triangleq} \left(H_9([\bm{v}]^{(\ell-1)})\right)^{\frac{\left[ t_{x}\right]^{(\ell-1)} }{H_9([\bm{v}]^{(\ell-1)})}} \left(\frac{\beta^{\uparrow}_u(t_{x})\overline{\lambda}_{u}^{(k)}\Psi(\mathscr{D}_{u}^{\nuparrow,(k)}) H_9([\bm{v}]^{(\ell-1)})}{\left[\beta^{\uparrow}_u(t_{x})\overline{\lambda}_{u}^{(k)}\Psi(\mathscr{D}_{u}^{\nuparrow,(k)})\right]^{(\ell-1)}}\right)^{\frac{\left[\beta^{\uparrow}_u(t_{x})\overline{\lambda}_{u}^{(k)}\Psi(\mathscr{D}_{u}^{\nuparrow,(k)})\right]^{(\ell-1)}}{H_9([\bm{v}]^{(\ell-1)})}}.
\end{align}

We finally approximate \eqref{app:cons:mac_8_3} and \eqref{app:cons:mac_8_4},respectively, as follows:
\begin{tcolorbox}[ams align]
    \frac{\beta^{\uparrow}_u(t_{x})\overline{\lambda}_{u}^{(k)}\Psi(\mathscr{D}_{u}^{\uparrow,(k)})+t_{x}}{\widehat{H}_8(\bm{v};\ell)} \leq 1,~\forall u \in \mathcal{U}_{b},~\forall b\in\Omega,~\forall x\in\mathcal{N}^{(k)}, \forall k\in\mathcal{K}.\nonumber
\end{tcolorbox}
and
\begin{tcolorbox}[ams align]
    \frac{\beta^{\uparrow}_u(t_{x})t_{x}+t_{x}}{\widehat{H}_9(\bm{v};\ell)} \leq 1,~\forall u \in \mathcal{U}_{b},~\forall b\in\Omega,~\forall x\in\mathcal{N}^{(k)}, \forall k\in\mathcal{K}.\nonumber
\end{tcolorbox}

\newpage
\subsection{Transformation of PRB/Power Allocations Constraints}\label{app:cons:PRB_Power_allocation}
Let us revisit the constraints of Sec.~\ref{sec:resource_allocation_mac} and Sec.~\ref{sec:power_allocation} in the following:

\textbf{Dynamic PRB allocation:}
\begin{equation}\label{app:cons:mac_r_1}
        \overline{\varrho}_{u,u',r}(t_{x}){-}\min\big\{\overline{\lambda}_{u}^{(k)},\overline{\beta}^{\uparrow}_u(t_{x}),\lambda_{u}^{(k)}\big\} {\le} 0.
\end{equation}
\begin{equation}\label{app:cons:mac_r_2}
        \overline{\varrho}_{u,u',r}(t_{x}){+}\max_{z{\in}\{\mathcal{N}^{(k{-}1)}{+}1,\cdots,x\}}\{\beta^{\uparrow}_u(t_{z})\} {\le} 1.
\end{equation}
\begin{equation}\label{app:cons:mac_r_3}
\begin{aligned}
     \sum_{u{\in} \mathcal{U}_{b}}\sum_{r{\in}\overline{\mathcal{R}}_{b}} \overline{\varrho}_{u,u',r}(t_{x})\ge 1.
\end{aligned}
\end{equation}
\begin{equation}\label{app:cons:mac_r_4}
        \varrho_{u,r}(t_{x}){-}\min\big\{\lambda_{u}^{(k)},\beta^{\uparrow}_u(t_{x})\big\}{\le}0.
\end{equation}

\textbf{Dynamic Power Allocation:}
\begin{equation}\label{app:cons:mac_p_1}
    \begin{aligned}
        0\le \beta^{\downarrow}_{b}(t_{x})-\rho^{\downarrow}_{b,r}(t_{x})<1.
    \end{aligned}
\end{equation}
\begin{equation}\label{app:cons:mac_p_2}
    \begin{aligned}
      0\le \varrho_{u,r}(t_{x})-\rho^{\uparrow}_{u,r}(t_{x})< 1.
    \end{aligned}
\end{equation}
\begin{equation}\label{app:cons:mac_p_3}
    \begin{aligned}
      0\le \overline{\varrho}_{u,u',r}(t_{x})-\overline{\rho}^{\uparrow}_{u,r}(t_{x})< 1.
    \end{aligned}
\end{equation}

In the following, we aim to transform \eqref{app:cons:mac_r_1}, \eqref{app:cons:mac_r_2}, \eqref{app:cons:mac_r_3}, 
 \eqref{app:cons:mac_r_4}, \eqref{app:cons:mac_p_1}, \eqref{app:cons:mac_p_2}, and \eqref{app:cons:mac_p_3} into GP format:

\noindent\textbf{Constraint~\eqref{app:cons:mac_r_1}:} We first rewrite \eqref{app:cons:mac_r_1} as the following three inequalities:
\begin{equation}\label{app:cons:mac_r_1_1}
        \overline{\varrho}_{u,u',r}(t_{x}){-}\overline{\lambda}_{u}^{(k)} {\le} \epsilon,
\end{equation}
\begin{equation}\label{app:cons:mac_r_1_2}
        \overline{\varrho}_{u,u',r}(t_{x}){-}\overline{\beta}^{\uparrow}_u(t_{x}){\le} \epsilon,
\end{equation}
and
\begin{equation}\label{app:cons:mac_r_1_3}
        \overline{\varrho}_{u,u',r}(t_{x}){-}\lambda_{u}^{(k)} {\le} \epsilon.
\end{equation}

Performing some algebraic operations gives us
\begin{equation}\label{app:cons:mac_r_1_4}
        \frac{\overline{\varrho}_{u,u',r}(t_{x})}{\epsilon+\overline{\lambda}_{u}^{(k)}} {\le} 1,
\end{equation}
\begin{equation}\label{app:cons:mac_r_1_5}
        \frac{\overline{\varrho}_{u,u',r}(t_{x})}{\epsilon+\overline{\beta}^{\uparrow}_u(t_{x})}{\le} 1,
\end{equation}
and
\begin{equation}\label{app:cons:mac_r_1_6}
        \frac{\overline{\varrho}_{u,u',r}(t_{x})}{\epsilon+\lambda_{u}^{(k)}} {\le} 1.
\end{equation}
The fraction in~\eqref{app:cons:mac_r_1_4}, \eqref{app:cons:mac_r_1_5}, and \eqref{app:cons:mac_r_1_6} are not in the format of GP since they are inequalities with a posynomial in their denominators, which are not posynomials. We thus exploit arithmetic-geometric mean inequality (Lemma~\ref{Lemma:ArethmaticGeometric}) to approximate the denominator of the above inequalities with the following monomials, respectively:
\begin{align}\label{app:cons:mac_r_1_7}
    G_{1}(\bm{v})=\epsilon+\overline{\lambda}_{u}^{(k)}&\geq \widehat{G}_{1}(\bm{v};\ell) \triangleq \left(G_{1}([\bm{v}]^{(\ell-1)})\right)^{\frac{\displaystyle\epsilon}{G_{1}([\bm{v}]^{(\ell-1)})}} \times\left(\frac{\overline{\lambda}_{u}^{(k)} G_{1}([\bm{v}]^{(\ell-1)})}{\left[\overline{\lambda}_{u}^{(k)}\right]^{(\ell-1)}}\right)^{\frac{\left[\overline{\lambda}_{u}^{(k)}\right]^{(\ell-1)}}{G_{1}([\bm{v}]^{(\ell-1)})}},
\end{align}
\begin{align}\label{app:cons:mac_r_1_8}
    G_{2}(\bm{v})=\epsilon+\overline{\beta}^{\uparrow}_u(t_{x})&\geq \widehat{G}_{2}(\bm{v};\ell) \triangleq \left(G_{2}([\bm{v}]^{(\ell-1)})\right)^{\frac{\displaystyle\epsilon}{G_{2}([\bm{v}]^{(\ell-1)})}} \times\left(\frac{\overline{\beta}^{\uparrow}_u(t_{x}) G_{2}([\bm{v}]^{(\ell-1)})}{\left[\overline{\beta}^{\uparrow}_u(t_{x})\right]^{(\ell-1)}}\right)^{\frac{\left[\overline{\beta}^{\uparrow}_u(t_{x})\right]^{(\ell-1)}}{G_{2}([\bm{v}]^{(\ell-1)})}},
\end{align}
and
\begin{align}\label{app:cons:mac_r_1_9}
    G_{3}(\bm{v})=\epsilon+\lambda_{u}^{(k)}&\geq \widehat{G}_{3}(\bm{v};\ell) \triangleq \left(G_{3}([\bm{v}]^{(\ell-1)})\right)^{\frac{\displaystyle\epsilon}{G_{3}([\bm{v}]^{(\ell-1)})}} \times\left(\frac{\lambda_{u}^{(k)} G_{3}([\bm{v}]^{(\ell-1)})}{\left[\lambda_{u}^{(k)}\right]^{(\ell-1)}}\right)^{\frac{\left[\lambda_{u}^{(k)}\right]^{(\ell-1)}}{G_{3}([\bm{v}]^{(\ell-1)})}}.
\end{align}

We finally approximate \eqref{app:cons:mac_r_1_4}, \eqref{app:cons:mac_r_1_5}, and \eqref{app:cons:mac_r_1_6} as follows:
\begin{tcolorbox}[ams align]
    \frac{\overline{\varrho}_{u,u',r}(t_{x})}{\widehat{G}_{1}(\bm{v};\ell)} \leq 1,~\forall u,u' \in \mathcal{U}_{b},~\forall b\in\Omega,~\forall x\in\mathcal{N}^{(k)}, \forall k\in\mathcal{K}.\nonumber
\end{tcolorbox}
\begin{tcolorbox}[ams align]
    \frac{\overline{\varrho}_{u,u',r}(t_{x})}{\widehat{G}_{2}(\bm{v};\ell)} \leq 1,~\forall u,u' \in \mathcal{U}_{b},~\forall b\in\Omega,~\forall x\in\mathcal{N}^{(k)}, \forall k\in\mathcal{K}.\nonumber
\end{tcolorbox}
\begin{tcolorbox}[ams align]
    \frac{\overline{\varrho}_{u,u',r}(t_{x})}{\widehat{G}_{3}(\bm{v};\ell)} \leq 1,~\forall u,u' \in \mathcal{U}_{b},~\forall b\in\Omega,~\forall x\in\mathcal{N}^{(k)}, \forall k\in\mathcal{K}.\nonumber
\end{tcolorbox}

\noindent\textbf{Constraint~\eqref{app:cons:mac_r_2}:} We first rewrite \eqref{app:cons:mac_r_2} as follows:
\begin{equation}\label{app:cons:mac_r_2_1}
    \overline{\varrho}_{u,u',r}(t_{x}){+}\max_{z{\in}\{\mathcal{N}^{(k{-}1)}{+}1,\cdots,x\}}\{\beta^{\uparrow}_u(t_{z})+\epsilon\} {\le} 1+\epsilon.
\end{equation}
To transform \eqref{app:cons:mac_r_2_1}, we use the approximation $\max\{A, B\}\approx (A^{p}+B^{p})^{-\frac{1}{p}}$, which is tight when $p \gg 1$ that gives us
\begin{equation}\label{app:cons:mac_r_2_2}
    \overline{\varrho}_{u,u',r}(t_{x}){+}\underbrace{\left(\sum_{z{=}N^{(k{-}1)}{+}1}^{x}\left(\beta^{\uparrow}_u(t_{z})+\epsilon\right)^{p}\right)^{-\frac{1}{p}}}_{(a)} {\le} 1+\epsilon.
\end{equation}
Term $(a)$ in the above inequality is not in the format of GP since it is not a posynomial. In order to transform \eqref{app:cons:mac_r_2_1} into GP format, we write it via introducing an auxiliary decision variable $B_{u,x}$ satisfying
\begin{equation}\label{app:cons:mac_r_2_3}
    B_{u,x}=\sum_{z{=}0}^{x}\left(\beta^{\uparrow}_u(t_{z})+\epsilon\right)^{p},
\end{equation}
which gives us the following posynomial:
\begin{equation}\label{app:cons:mac_r_2_4}
    \frac{\overline{\varrho}_{u,u',r}(t_{x})}{1+\epsilon}{+}\frac{\left(B_{u,x}\right)^{-\frac{1}{p}}}{1+\epsilon} {\le} 1.
\end{equation}
However, \eqref{app:cons:mac_r_2_3} is not in the format of GP and needs to be transformed in order to conform to the GP requirements. In doing so, we rewrite \eqref{app:cons:mac_r_2_3} as follows:
\begin{equation}\label{app:cons:mac_r_2_5}
    \frac{B_{u,x}}{\sum_{z{=}0}^{x}\left(\beta^{\uparrow}_u(t_{z})+\epsilon\right)^{p}}=1.
\end{equation}
This constraint is still not in the format of GP. Therefore, we transform it via introducing an auxiliary variable as two inequalities:
\begin{equation}\label{app:cons:mac_r_2_6}
    \frac{B_{u,x}}{\sum_{z{=}0}^{x}\left(\beta^{\uparrow}_u(t_{z})+\epsilon\right)^{p}}\le 1,
\end{equation}
\begin{equation}\label{app:cons:mac_r_2_7}
    \frac{\sum_{z{=}0}^{x}\left(\beta^{\uparrow}_u(t_{z})+\epsilon\right)^{p}}{A^{\mathsf{U}}\times B_{u,x}}\le 1,
\end{equation}
\begin{equation}
    A^{\mathsf{U}}\ge 1,
\end{equation}
where $A^{\mathsf{U}}$ is added with a large penalty term to the objective function to force $ A^{\mathsf{U}}{\rightarrow}1^+$ at the optimal point. The fraction in~\eqref{app:cons:mac_r_2_6} still needs transformation since it is an inequality with a posynomial in the denominator, which is not a posynomial. We thus exploit arithmetic-geometric mean inequality (Lemma~\ref{Lemma:ArethmaticGeometric}) to approximate the denominator with a monomial:
\begin{align}\label{app:cons:mac_r_2_8}
    G_{4}(\bm{v})=\sum_{z{=}0}^{x}\left(\beta^{\uparrow}_u(t_{z})+\epsilon\right)^{p}&\geq \widehat{G}_{4}(\bm{v};\ell) \triangleq \prod_{z{=}0}^{x}\left(\frac{\left(\beta^{\uparrow}_u(t_{z})+\epsilon\right)^{p} G_{4}([\bm{v}]^{(\ell-1)})}{\left[\left(\beta^{\uparrow}_u(t_{z})+\epsilon\right)^{p}\right]^{(\ell-1)}}\right)^{\frac{\left[\left(\beta^{\uparrow}_u(t_{z})+\epsilon\right)^{p}\right]^{(\ell-1)}}{G_{4}([\bm{v}]^{(\ell-1)})}},
\end{align}
which gives us an approximation of~\eqref{app:cons:mac_r_2_6} as follows:
\begin{equation}\label{app:cons:mac_r_2_9}
    \frac{B_{u,x}}{\widehat{G}_{4}(\bm{v};\ell)}\le 1.
\end{equation}

We finally approximate constraint~\eqref{app:cons:mac_r_2} as follows:
\begin{tcolorbox}[ams align]
     &\frac{\overline{\varrho}_{u,u',r}(t_{x})}{1+\epsilon}{+}\frac{\left(B_{u,x}\right)^{-\frac{1}{p}}}{1+\epsilon} {\le} 1 ,~\forall u,u' \in \mathcal{U}_{b},~\forall b\in\Omega,~\forall x\in\mathcal{N}^{(k)}, \forall k\in\mathcal{K},\nonumber\\
     &\frac{\sum_{z{=}0}^{x}\left(\beta^{\uparrow}_u(t_{z})+\epsilon\right)^{p}}{A^{\mathsf{U}}\times B_{u,x}}\le 1,\\
     &\frac{B_{u,x}}{\widehat{G}_{4}(\bm{v};\ell)}\le 1,\nonumber\\
     & \frac{1}{A^{\mathsf{U}}}\le 1.\nonumber
\end{tcolorbox}

\noindent\textbf{Constraint~\eqref{app:cons:mac_r_3}:} We first rewrite \eqref{app:cons:mac_r_3} as follows:
\begin{equation}\label{app:cons:mac_r_3_1}
    \frac{1}{\sum_{u{\in} \mathcal{U}_{b}}\sum_{r{\in}\overline{\mathcal{R}}_{b}} \overline{\varrho}_{u,u',r}(t_{x})}\le 1.
\end{equation}
The fraction in~\eqref{app:cons:mac_r_3_1} is not in the format of GP since it is an inequality with a posynomial in the denominator, which is not a posynomial. We thus exploit arithmetic-geometric mean inequality (Lemma~\ref{Lemma:ArethmaticGeometric}) to approximate the denominator with a monomial:
\begin{align}\label{app:cons:mac_r_3_2}
    G_{5}(\bm{v})=\sum_{u{\in} \mathcal{U}_{b}}\sum_{r{\in}\overline{\mathcal{R}}_{b}} \overline{\varrho}_{u,u',r}(t_{x})&\geq \widehat{G}_{5}(\bm{v};\ell) \triangleq \prod_{u{\in} \mathcal{U}_{b}}\prod_{r{\in}\overline{\mathcal{R}}_{b}}\left(\frac{\overline{\varrho}_{u,u',r}(t_{x}) G_{5}([\bm{v}]^{(\ell-1)})}{\left[\overline{\varrho}_{u,u',r}(t_{x})\right]^{(\ell-1)}}\right)^{\frac{\left[\overline{\varrho}_{u,u',r}(t_{x})\right]^{(\ell-1)}}{G_{5}([\bm{v}]^{(\ell-1)})}}.
\end{align}

We finally approximate the constraint as follows:
\begin{tcolorbox}[ams align]
    \frac{1}{\widehat{G}_{5}(\bm{v};\ell)} \leq 1,~u'\in\mathcal{U}_{b},~\forall b\in\Omega,~\forall x\in\mathcal{N}^{(k)},~k\in\mathcal{K}.\nonumber
\end{tcolorbox}

\noindent\textbf{Constraint~\eqref{app:cons:mac_r_4}:} We first rewrite \eqref{app:cons:mac_r_4} as the following two inequalities:
\begin{equation}\label{app:cons:mac_r_4_1}
        \varrho_{u,r}(t_{x}){-}\lambda_{u}^{(k)}{\le}\epsilon,
\end{equation}
and
\begin{equation}\label{app:cons:mac_r_4_2}
        \varrho_{u,r}(t_{x}){-}\beta^{\uparrow}_u(t_{x}){\le}\epsilon.
\end{equation}

Performing some algebraic operations on the above two inequalities gives us
\begin{equation}\label{app:cons:mac_r_4_1}
        \frac{\varrho_{u,r}(t_{x})}{\epsilon+\lambda_{u}^{(k)}}{\le}1,
\end{equation}
and
\begin{equation}\label{app:cons:mac_r_4_2}
        \frac{\varrho_{u,r}(t_{x})}{\epsilon+\beta^{\uparrow}_u(t_{x})}{\le} 1.
\end{equation}
The fraction in~\eqref{app:cons:mac_r_4_1} and \eqref{app:cons:mac_r_4_2} are not in the format of GP since they are inequalities with a posynomial in their denominators, which are not posynomials. We thus exploit arithmetic-geometric mean inequality (Lemma~\ref{Lemma:ArethmaticGeometric}) to approximate the denominator of the above inequalities with the following two monomials, respectively:
\begin{align}\label{app:cons:mac_r_4_3}
    G_{6}(\bm{v})=\epsilon+\lambda_{u}^{(k)}&\geq \widehat{G}_{6}(\bm{v};\ell) \triangleq \left(G_{6}([\bm{v}]^{(\ell-1)})\right)^{\frac{\displaystyle\epsilon}{G_{6}([\bm{v}]^{(\ell-1)})}} \times\left(\frac{\lambda_{u}^{(k)} G_{6}([\bm{v}]^{(\ell-1)})}{\left[\lambda_{u}^{(k)}\right]^{(\ell-1)}}\right)^{\frac{\left[\lambda_{u}^{(k)}\right]^{(\ell-1)}}{G_{6}([\bm{v}]^{(\ell-1)})}}
\end{align}
and
\begin{align}\label{app:cons:mac_r_4_4}
    G_{7}(\bm{v})=\epsilon+\beta^{\uparrow}_u(t_{x})&\geq \widehat{G}_{7}(\bm{v};\ell) \triangleq \left(G_{7}([\bm{v}]^{(\ell-1)})\right)^{\frac{\displaystyle\epsilon}{G_{7}([\bm{v}]^{(\ell-1)})}} \times\left(\frac{\beta^{\uparrow}_u(t_{x}) G_{7}([\bm{v}]^{(\ell-1)})}{\left[\beta^{\uparrow}_u(t_{x})\right]^{(\ell-1)}}\right)^{\frac{\left[\beta^{\uparrow}_u(t_{x})\right]^{(\ell-1)}}{G_{7}([\bm{v}]^{(\ell-1)})}}.
\end{align}

We finally approximate \eqref{app:cons:mac_r_4_1} and \eqref{app:cons:mac_r_4_2} as follows:
\begin{tcolorbox}[ams align]
    \frac{\varrho_{u,r}(t_{x})}{\widehat{G}_{6}(\bm{v};\ell)} \leq 1,~\forall u,u' \in \mathcal{U}_{b},~\forall b\in\Omega,~\forall x\in\mathcal{N}^{(k)}, \forall k\in\mathcal{K}.\nonumber
\end{tcolorbox}
\begin{tcolorbox}[ams align]
    \frac{\varrho_{u,r}(t_{x})}{\widehat{G}_{7}(\bm{v};\ell)} \leq 1,~\forall u,u' \in \mathcal{U}_{b},~\forall b\in\Omega,~\forall x\in\mathcal{N}^{(k)}, \forall k\in\mathcal{K}.\nonumber
\end{tcolorbox}

\noindent\textbf{Constraint~\eqref{app:cons:mac_p_1}:} We first rewrite \eqref{app:cons:mac_p_1} as the following two inequalities:
\begin{equation}\label{app:cons:mac_p_1_1}
    \begin{aligned}
        0\le \beta^{\downarrow}_{b}(t_{x})+\epsilon-\rho^{\downarrow}_{b,r}(t_{x})-\epsilon,
    \end{aligned}
\end{equation}
and
\begin{equation}\label{app:cons:mac_p_1_2}
    \begin{aligned}
        \beta^{\downarrow}_{b}(t_{x})-\rho^{\downarrow}_{b,r}(t_{x})\le 1-\epsilon.
    \end{aligned}
\end{equation}
Performing some algebraic operations on the above two inequalities gives us

\begin{equation}\label{app:cons:mac_p_1_1}
    \begin{aligned}
        \frac{\rho^{\downarrow}_{b,r}(t_{x})+\epsilon}{\beta^{\downarrow}_{b}(t_{x})+\epsilon}\le 1
    \end{aligned}
\end{equation}
and
\begin{equation}\label{app:cons:mac_p_1_2}
    \begin{aligned}
        \frac{\beta^{\downarrow}_{b}(t_{x})+\epsilon}{1+\rho^{\downarrow}_{b,r}(t_{x})}\le 1.
    \end{aligned}
\end{equation}

The fraction in~\eqref{app:cons:mac_p_1_2} is not in the format of GP since it is an inequality with a posynomial in the denominator, which is not a posynomial. We thus exploit arithmetic-geometric mean inequality (Lemma~\ref{Lemma:ArethmaticGeometric}) to approximate the denominator with a monomial:
\begin{align}\label{app:cons:mac_p_1_3}
    G_8(\bm{v})=1+\rho^{\downarrow}_{b,r}(t_{x})&\geq \widehat{G}_8(\bm{v};\ell) \triangleq \left(G_8([\bm{v}]^{(\ell-1)})\right)^{\frac{1}{G_8([\bm{v}]^{(\ell-1)})}}\left(\frac{\rho^{\downarrow}_{b,r}(t_{x}) G_8([\bm{v}]^{(\ell-1)})}{\left[\rho^{\downarrow}_{b,r}(t_{x})\right]^{(\ell-1)}}\right)^{\frac{\left[\rho^{\downarrow}_{b,r}(t_{x})\right]^{(\ell-1)} }{G_8([\bm{v}]^{(\ell-1)})}}.
\end{align}
We finally approximate the constraint \eqref{app:cons:mac_p_1} as follows:
\begin{tcolorbox}[ams align]
    &\frac{\rho^{\downarrow}_{b,r}(t_{x})+\epsilon}{\beta^{\downarrow}_{b}(t_{x})+\epsilon}\le 1,~\frac{\beta^{\downarrow}_{b}(t_{x})+\epsilon}{\widehat{G}_8(\bm{v};\ell)}{\le}1,~\forall r\in\mathcal{R}_{b},~\forall b\in\Omega,~\forall x\in\mathcal{N}^{(k)},~\forall k\in\mathcal{K}.\nonumber
\end{tcolorbox}

\noindent\textbf{Constraint~\eqref{app:cons:mac_p_1}:} We first rewrite \eqref{app:cons:mac_p_1} as the following two inequalities:
\begin{equation}\label{app:cons:mac_p_1_1}
    \begin{aligned}
        0\le \beta^{\downarrow}_{b}(t_{x})+\epsilon-\rho^{\downarrow}_{b,r}(t_{x})-\epsilon
    \end{aligned}
\end{equation}
and
\begin{equation}\label{app:cons:mac_p_1_2}
    \begin{aligned}
        \beta^{\downarrow}_{b}(t_{x})-\rho^{\downarrow}_{b,r}(t_{x})\le 1-\epsilon.
    \end{aligned}
\end{equation}
Performing some algebraic operations on the above two inequalities gives us

\begin{equation}\label{app:cons:mac_p_1_1}
    \begin{aligned}
        \frac{\rho^{\downarrow}_{b,r}(t_{x})+\epsilon}{\beta^{\downarrow}_{b}(t_{x})+\epsilon}\le 1
    \end{aligned}
\end{equation}
and
\begin{equation}\label{app:cons:mac_p_1_2}
    \begin{aligned}
        \frac{\beta^{\downarrow}_{b}(t_{x})+\epsilon}{1+\rho^{\downarrow}_{b,r}(t_{x})}\le 1.
    \end{aligned}
\end{equation}

The fraction in~\eqref{app:cons:mac_p_1_2} is not in the format of GP since it is an inequality with a posynomial in the denominator, which is not a posynomial. We thus exploit arithmetic-geometric mean inequality (Lemma~\ref{Lemma:ArethmaticGeometric}) to approximate the denominator with a monomial:
\begin{align}\label{app:cons:mac_p_1_3}
    G_8(\bm{v})=1+\rho^{\downarrow}_{b,r}(t_{x})&\geq \widehat{G}_8(\bm{v};\ell) \triangleq \left(G_8([\bm{v}]^{(\ell-1)})\right)^{\frac{1}{G_8([\bm{v}]^{(\ell-1)})}}\left(\frac{\rho^{\downarrow}_{b,r}(t_{x}) G_8([\bm{v}]^{(\ell-1)})}{\left[\rho^{\downarrow}_{b,r}(t_{x})\right]^{(\ell-1)}}\right)^{\frac{\left[\rho^{\downarrow}_{b,r}(t_{x})\right]^{(\ell-1)} }{G_8([\bm{v}]^{(\ell-1)})}}.
\end{align}
We finally approximate the constraint \eqref{app:cons:mac_p_1} as follows:
\begin{tcolorbox}[ams align]
    &\frac{\rho^{\downarrow}_{b,r}(t_{x})+\epsilon}{\beta^{\downarrow}_{b}(t_{x})+\epsilon}\le 1,~\frac{\beta^{\downarrow}_{b}(t_{x})+\epsilon}{\widehat{G}_8(\bm{v};\ell)}{\le}1,~\forall r\in\mathcal{R}_{b},~\forall b\in\Omega,~\forall x\in\mathcal{N}^{(k)},~\forall k\in\mathcal{K}.\nonumber
\end{tcolorbox}

\noindent\textbf{Constraint~\eqref{app:cons:mac_p_2}:} We first rewrite \eqref{app:cons:mac_p_2} as the following two inequalities:
\begin{equation}\label{app:cons:mac_p_2_1}
    \begin{aligned}
        0\le \varrho_{u,r}(t_{x})+\epsilon-\rho^{\uparrow}_{u,r}(t_{x})-\epsilon
    \end{aligned}
\end{equation}
and
\begin{equation}\label{app:cons:mac_p_2_2}
    \begin{aligned}
         \varrho_{u,r}(t_{x})-\rho^{\uparrow}_{u,r}(t_{x})\le 1-\epsilon.
    \end{aligned}
\end{equation}
Performing some algebraic operations on the above two inequalities gives us

\begin{equation}\label{app:cons:mac_p_2_1}
    \begin{aligned}
        \frac{\rho^{\uparrow}_{u,r}(t_{x})+\epsilon}{\varrho_{u,r}(t_{x})+\epsilon}\le 1
    \end{aligned}
\end{equation}
and
\begin{equation}\label{app:cons:mac_p_2_2}
    \begin{aligned}
        \frac{\varrho_{u,r}(t_{x})+\epsilon}{1+\rho^{\uparrow}_{u,r}(t_{x})}\le 1.
    \end{aligned}
\end{equation}

The fraction in~\eqref{app:cons:mac_p_2_2} is not in the format of GP since it is an inequality with a posynomial in the denominator, which is not a posynomial. We thus exploit arithmetic-geometric mean inequality (Lemma~\ref{Lemma:ArethmaticGeometric}) to approximate the denominator with a monomial:
\begin{align}\label{app:cons:mac_p_2_3}
    G_9(\bm{v})=1+\rho^{\uparrow}_{u,r}(t_{x})&\geq \widehat{G}_9(\bm{v};\ell) \triangleq \left(G_9([\bm{v}]^{(\ell-1)})\right)^{\frac{1}{G_9([\bm{v}]^{(\ell-1)})}}\left(\frac{\rho^{\uparrow}_{u,r}(t_{x}) G_9([\bm{v}]^{(\ell-1)})}{\left[\rho^{\uparrow}_{u,r}(t_{x})\right]^{(\ell-1)}}\right)^{\frac{\left[\rho^{\uparrow}_{u,r}(t_{x})\right]^{(\ell-1)} }{G_9([\bm{v}]^{(\ell-1)})}}.
\end{align}
We finally approximate the constraint \eqref{app:cons:mac_p_2} as follows:
\begin{tcolorbox}[ams align]
    &\frac{\rho^{\uparrow}_{u,r}(t_{x})+\epsilon}{\varrho_{u,r}(t_{x})+\epsilon}\le 1,~\frac{\varrho_{u,r}(t_{x})+\epsilon}{\widehat{G}_9(\bm{v};\ell)}{\le}1,~\forall r\in\mathcal{R}_{b},~\forall b\in\Omega,~\forall x\in\mathcal{N}^{(k)},~\forall k\in\mathcal{K}.\nonumber
\end{tcolorbox}

\noindent\textbf{Constraint~\eqref{app:cons:mac_p_3}:} We first rewrite \eqref{app:cons:mac_p_3} as the following two inequalities:
\begin{equation}\label{app:cons:mac_p_3_1}
    \begin{aligned}
        0\le \overline{\varrho}_{u,u',r}(t_{x})+\epsilon-\overline{\rho}^{\uparrow}_{u,r}(t_{x})-\epsilon
    \end{aligned}
\end{equation}
and
\begin{equation}\label{app:cons:mac_p_3_2}
    \begin{aligned}
        \overline{\varrho}_{u,u',r}(t_{x})-\overline{\rho}^{\uparrow}_{u,r}(t_{x})\le 1-\epsilon.
    \end{aligned}
\end{equation}
Performing some algebraic operations on the above two inequalities gives us

\begin{equation}\label{app:cons:mac_p_3_1}
    \begin{aligned}
         \frac{\overline{\rho}^{\uparrow}_{u,r}(t_{x})+\epsilon}{\overline{\varrho}_{u,u',r}(t_{x})+\epsilon}\le 1
    \end{aligned}
\end{equation}
and
\begin{equation}\label{app:cons:mac_p_3_2}
    \begin{aligned}
        \frac{\overline{\varrho}_{u,u',r}(t_{x})+\epsilon}{1+\overline{\rho}^{\uparrow}_{u,r}(t_{x})}\le 1. 
    \end{aligned}
\end{equation}

The fraction in~\eqref{app:cons:mac_p_3_2} is not in the format of GP since it is an inequality with a posynomial in the denominator, which is not a posynomial. We thus exploit arithmetic-geometric mean inequality (Lemma~\ref{Lemma:ArethmaticGeometric}) to approximate the denominator with a monomial:
\begin{align}\label{app:cons:mac_p_3_3}
    G_{10}(\bm{v})=1+\overline{\rho}^{\uparrow}_{u,r}(t_{x})&\geq \widehat{G}_{10}(\bm{v};\ell) \triangleq \left(G_{10}([\bm{v}]^{(\ell-1)})\right)^{\frac{1}{G_{10}([\bm{v}]^{(\ell-1)})}}\left(\frac{\overline{\rho}^{\uparrow}_{u,r}(t_{x}) G_{10}([\bm{v}]^{(\ell-1)})}{\left[\overline{\rho}^{\uparrow}_{u,r}(t_{x})\right]^{(\ell-1)}}\right)^{\frac{\left[\overline{\rho}^{\uparrow}_{u,r}(t_{x})\right]^{(\ell-1)} }{G_{10}([\bm{v}]^{(\ell-1)})}}.
\end{align}
We finally approximate the constraint \eqref{app:cons:mac_p_3} as follows:
\begin{tcolorbox}[ams align]
    &\frac{\overline{\rho}^{\uparrow}_{u,r}(t_{x})+\epsilon}{\overline{\varrho}_{u,u',r}(t_{x})+\epsilon}\le 1,~\frac{\overline{\varrho}_{u,u',r}(t_{x})+\epsilon}{\widehat{G}_{10}(\bm{v};\ell)}{\le}1,~\forall r\in\mathcal{R}_{b},~\forall b\in\Omega,~\forall x\in\mathcal{N}^{(k)},~\forall k\in\mathcal{K}.\nonumber
\end{tcolorbox}

\newpage
\subsection{Transformation of Constraints of GM and GPs Transmissions}\label{app:cons:GM_GP_transmissions}
Let us revisit the constraints of Sec.~\ref{sec:GM_GPS_transmissions} in the following:

\begin{equation}\label{app:cons:LGPD0}
    0\le \beta^{\downarrow}_{b}\hspace{-0.3mm}(t_{x})-\varphi_{b,r}\hspace{-0.3mm}(t_{x})<1,~x{\in}\mathcal{N}^{(k)}.
\end{equation}
\begin{equation}\label{app:cons:LGPD1}
\begin{aligned}
   0 \le \overline{\varrho}_{u,u',r}(t_{x})-\overline{\psi}_{u,u',r}(t_{x})< 1,~x{\in}\mathcal{N}^{(k)}.
\end{aligned}
\end{equation}
\begin{equation}\label{app:cons:LGPD2}
\begin{aligned}
   \overline{\psi}_{u,u,r}(t_{x})= 0,~~x{\in}\mathcal{N}^{(k)},\forall r \in \overline{\mathcal{R}}_{b}.
\end{aligned}
\end{equation}
\begin{equation}\label{app:cons:LMD1}
        0\le \varrho_{u,r}(t_{x})-\psi_{u,r}(t_{x})< 1,~x{\in}\mathcal{N}^{(k)}.
\end{equation}

In the following, we aim to transform \eqref{app:cons:LGPD0}, \eqref{app:cons:LGPD1}, \eqref{app:cons:LGPD2}, and \eqref{app:cons:LMD1} into GP format:

\noindent\textbf{Constraint~\eqref{app:cons:LGPD0}:} We first rewrite \eqref{app:cons:LGPD0} as the following two inequalities:
\begin{equation}\label{app:cons:LGPD0_1}
\begin{aligned}
   \varphi_{b,r}\hspace{-0.3mm}(t_{x}) \le \beta^{\downarrow}_{b}\hspace{-0.3mm}(t_{x}),
\end{aligned}
\end{equation}
and
\begin{equation}\label{app:cons:LGPD0_2}
\begin{aligned}
    \beta^{\downarrow}_{b}\hspace{-0.3mm}(t_{x})-\varphi_{b,r}\hspace{-0.3mm}(t_{x})\le 1-\epsilon.
\end{aligned}
\end{equation}
Performing some algebraic operations on\eqref{app:cons:LGPD0_2} gives us
\begin{equation}\label{app:cons:LGPD0_4}
\begin{aligned}
    \frac{\beta^{\downarrow}_{b}\hspace{-0.3mm}(t_{x})+\epsilon}{1+\varphi_{b,r}\hspace{-0.3mm}(t_{x})}\le 1.
\end{aligned}
\end{equation}

The fraction in~\eqref{app:cons:LGPD0_4} is not in the format of GP since it is an inequality with a posynomial in the denominator, which is not a posynomial. We thus exploit arithmetic-geometric mean inequality (Lemma~\ref{Lemma:ArethmaticGeometric}) to approximate the denominator with a monomial:
\begin{align}\label{app:cons:LGPD0_5}
    J_{1}(\bm{v})=1+\varphi_{b,r}\hspace{-0.3mm}(t_{x})&\geq \widehat{J}_{1}(\bm{v};\ell) \triangleq \left(J_{1}([\bm{v}]^{(\ell-1)})\right)^{\frac{1}{J_{1}([\bm{v}]^{(\ell-1)})}}\left(\frac{\varphi_{b,r}\hspace{-0.3mm}(t_{x}) J_{1}([\bm{v}]^{(\ell-1)})}{\left[\varphi_{b,r}\hspace{-0.3mm}(t_{x})\right]^{(\ell-1)}}\right)^{\frac{\left[\varphi_{b,r}\hspace{-0.3mm}(t_{x})\right]^{(\ell-1)} }{J_{1}([\bm{v}]^{(\ell-1)})}}.
\end{align}
We finally approximate the constraint \eqref{app:cons:LGPD0} as follows:
\begin{tcolorbox}[ams align]
    &\frac{\varphi_{b,r}\hspace{-0.3mm}(t_{x})}{\beta^{\downarrow}_{b}\hspace{-0.3mm}(t_{x})} \le 1,~\frac{\beta^{\downarrow}_{b}\hspace{-0.3mm}(t_{x})+\epsilon}{\widehat{J}_{1}(\bm{v};\ell)}{\le}1,~\forall r\in\mathcal{R}_{b},~\forall b\in\Omega,~\forall x\in\mathcal{N}^{(k)},~\forall k\in\mathcal{K}.\nonumber
\end{tcolorbox}

\noindent\textbf{Constraint~\eqref{app:cons:LGPD1}:} We first rewrite \eqref{app:cons:LGPD1} as the following two inequalities:
\begin{equation}\label{app:cons:LGPD1_1}
\begin{aligned}
   \overline{\psi}_{u,u',r}(t_{x}) \le \overline{\varrho}_{u,u',r}(t_{x})
\end{aligned}
\end{equation}
and
\begin{equation}\label{app:cons:LGPD1_2}
\begin{aligned}
    \overline{\varrho}_{u,u',r}(t_{x})-\overline{\psi}_{u,u',r}(t_{x})\le 1-\epsilon.
\end{aligned}
\end{equation}
Performing some algebraic operations on \eqref{app:cons:LGPD1_2} gives us
\begin{equation}\label{app:cons:LGPD1_4}
\begin{aligned}
    \frac{\overline{\varrho}_{u,u',r}(t_{x})+\epsilon}{1+\overline{\psi}_{u,u',r}(t_{x})}\le 1.
\end{aligned}
\end{equation}

The fraction in~\eqref{app:cons:LGPD1_4} is not in the format of GP since it is an inequality with a posynomial in the denominator, which is not a posynomial. We thus exploit arithmetic-geometric mean inequality (Lemma~\ref{Lemma:ArethmaticGeometric}) to approximate the denominator with a monomial:
\begin{align}\label{app:cons:LGPD1_5}
    J_{1}(\bm{v})=1+\overline{\psi}_{u,u',r}(t_{x})&\geq \widehat{J}_{1}(\bm{v};\ell) \triangleq \left(J_{1}([\bm{v}]^{(\ell-1)})\right)^{\frac{1}{J_{1}([\bm{v}]^{(\ell-1)})}}\left(\frac{\overline{\psi}_{u,u',r}(t_{x}) J_{1}([\bm{v}]^{(\ell-1)})}{\left[\overline{\psi}_{u,u',r}(t_{x})\right]^{(\ell-1)}}\right)^{\frac{\left[\overline{\psi}_{u,u',r}(t_{x})\right]^{(\ell-1)} }{J_{1}([\bm{v}]^{(\ell-1)})}}.
\end{align}
We finally approximate the constraint \eqref{app:cons:mac_p_3} as follows:
\begin{tcolorbox}[ams align]
    &\frac{\overline{\psi}_{u,u',r}(t_{x})}{\overline{\varrho}_{u,u',r}(t_{x})} \le 1,~\frac{\overline{\varrho}_{u,u',r}(t_{x})+\epsilon}{\widehat{J}_{1}(\bm{v};\ell)}{\le}1,~\forall r\in\mathcal{R}_{b},~\forall b\in\Omega,~\forall x\in\mathcal{N}^{(k)},~\forall k\in\mathcal{K}.\nonumber
\end{tcolorbox}

\noindent\textbf{Constraint~\eqref{app:cons:LMD1}:} We first rewrite \eqref{app:cons:LMD1} as the following two inequalities:
\begin{equation}\label{app:cons:LMD1_1}
        \psi_{u,r}(t_{x})\le \varrho_{u,r}(t_{x})
\end{equation}
and
\begin{equation}\label{app:cons:LMD1_2}
        \varrho_{u,r}(t_{x})-\psi_{u,r}(t_{x})\le 1-\epsilon.
\end{equation}
Performing some algebraic operations on \eqref{app:cons:LMD1_2} gives us

\begin{equation}\label{app:cons:LMD1_4}
        \frac{\varrho_{u,r}(t_{x})+\epsilon}{1+\psi_{u,r}(t_{x})}\le 1.
\end{equation}

The fraction in~\eqref{app:cons:LMD1_4} is not in the format of GP since it is an inequality with a posynomial in the denominator, which is not a posynomial. We thus exploit arithmetic-geometric mean inequality (Lemma~\ref{Lemma:ArethmaticGeometric}) to approximate the denominator with a monomial:
\begin{align}\label{app:cons:LMD1_5}
    J_{2}(\bm{v})=1+\psi_{u,r}(t_{x})&\geq \widehat{J}_{2}(\bm{v};\ell) \triangleq \left(J_{2}([\bm{v}]^{(\ell-1)})\right)^{\frac{1}{J_{2}([\bm{v}]^{(\ell-1)})}}\left(\frac{\psi_{u,r}(t_{x})}{\left[\psi_{u,r}(t_{x})\right]^{(\ell-1)}}\right)^{\frac{\left[\psi_{u,r}(t_{x})\right]^{(\ell-1)} }{J_{2}([\bm{v}]^{(\ell-1)})}}.
\end{align}
We finally approximate the constraint \eqref{app:cons:LMD1} as follows:
\begin{tcolorbox}[ams align]
    &\frac{\psi_{u,r}(t_{x})}{\varrho_{u,r}(t_{x})}\le 1,~\frac{\varrho_{u,r}(t_{x})+\epsilon}{\widehat{J}_{1}(\bm{v};\ell)}{\le}1,~\forall r\in\mathcal{R}_{b},~\forall b\in\Omega,~\forall x\in\mathcal{N}^{(k)},~\forall k\in\mathcal{K}.\nonumber
\end{tcolorbox}

\newpage
\subsection{Pseudo Code of Our GP Optimization Solver}\label{app:cons:sudo}
\begin{algorithm}[h]
 	\caption{Proposed Signomial programming and GP-based optimization solver for problem~$\bm{\mathcal{P}}$}\label{alg:cent}
 	\SetKwFunction{Union}{Union}\SetKwFunction{FindCompress}{FindCompress}
 	\SetKwInOut{Input}{input}\SetKwInOut{Output}{output}
 	 	{\footnotesize
 	\Input{Convergence criterion.}
 	 Set the algorithm iteration counter to be $\ell=0$.\\
 	 Choose an initial feasible point as the starting point $\bm{v}^{[0]}$.\\
          Obtain the highlighted approximations, indicated by green boxes, from page \pageref{app:data_transformation_rate} to page \pageref{app:cons:LMD1_5}, using the current solution $\bm{v}^{[\ell]}$.\label{midAlg1}\\
          Use those approximations instead of their original constraints in Problem~$\bm{\mathcal{P}}$, which results in a GP problem.\\
          Utilize logarithmic variable transformation to convert the resultant GP problem into a convex problem, following the instructions in Appendix~\ref{sec:GPtransConv}.\\
          Increment $\ell$ by one, i.e., $\ell=\ell+1$. \\
          Utilize state-of-the-art convex optimization solver framework (such as CVXPY~\cite{diamond2016cvxpy}) to solve the obtained convex problem and derive $\bm{v}^{[\ell]}$.\label{Alg:Gpconvexste}\\
 	 \If{successive solutions $\bm{v}^{[\ell-1]}$ and $\bm{v}^{[\ell]}$ fail to satisfy the defined convergence criterion}{
 	\textrm{Go to line~\ref{midAlg1} and re-iterate over the steps using the most recent solution $\bm{v}^{[\ell]}$.}\\\Else{Set the most recent solution as the final solution of the problem $\bm{v}^{\star}=\bm{v}^{[\ell]}$.\label{Alg:point2}\\
 	}}
 	  }
 	  \vspace{-.1mm}
  \end{algorithm}

 \begin{proposition}[Convergence of the Optimization Solver]\label{propo:KKT}
    Our GP-based solver (i.e. Algorithm~\ref{alg:cent} presented in Appendix.~\ref{app:cons:sudo}) generates a sequence of solutions for $\bm{\mathcal{P}}$ using the approximations~ (a-i) arithmetic-geometric mean inequality \eqref{eq:approxPosMonMain}, (a-ii) Taylor-power approximation \eqref{eq:taylor_e}, and (a-iii) sum-power approximations $\min\{A, B\}\approx (A^{-p}+B^{-p})^{-\frac{1}{p}}$ and $\max\{A, B\}\approx (A^{p}+B^{p})^{-\frac{1}{p}}$. Under tight approximations of (a-ii) and (a-iii), these solutions (i.e., $\bm{v}^\star$) converge to the Karush–Kuhn–Tucker (KKT) conditions of $\bm{\mathcal{P}}$ upon using (a-i).
\end{proposition}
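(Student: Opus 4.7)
The plan is to frame Algorithm~\ref{alg:cent} as an instance of the classical successive convex approximation (SCA) / condensation method for signomial programming (Marks and Wright, 1978) and then verify the three standard sufficient conditions for KKT convergence. Under the assumption that the power-sum approximations (a-ii) and (a-iii) are taken to be tight (i.e., $p$ and $C$ are large enough that $\min$, $\max$, and the exponential are replaced by asymptotically exact surrogates), the problem obtained after applying (a-ii) and (a-iii) to $\bm{\mathcal{P}}$ is a genuine signomial program (SP), whose only source of non-convexity is the presence of posynomials appearing in the denominators of ratio-form constraints and in the lower-bound direction of equality constraints. The only remaining step needed to reduce this SP to a standard geometric program (GP) is the monomial approximation of these posynomials, which is exactly what (a-i) provides via the arithmetic-geometric mean inequality of Lemma~\ref{Lemma:ArethmaticGeometric}.

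Next, I would verify that the AGM condensation $\hat g(\bm{v};\bm{z})$ of a posynomial $g(\bm{v})$ around a point $\bm{z}$ satisfies the three Marks--Wright conditions that make SCA converge to KKT points of the original problem: (P1) \emph{tight lower bound}, $\hat g(\bm{v};\bm{z}) \le g(\bm{v})$ for all $\bm{v}>0$; (P2) \emph{value matching at the linearization point}, $\hat g(\bm{z};\bm{z}) = g(\bm{z})$; and (P3) \emph{gradient matching at the linearization point}, $\nabla \hat g(\bm{z};\bm{z}) = \nabla g(\bm{z})$. Property (P1) is immediate from \eqref{eq:approxPosMonMain}; (P2) follows by substituting $\bm{v}=\bm{z}$ into the weighted geometric mean and noting that the exponents $\alpha_i(\bm{z}) = u_i(\bm{z})/g(\bm{z})$ sum to one; and (P3) is a short computation using the chain rule on $\log \hat g$ together with $\sum_i \alpha_i(\bm{z}) = 1$ and the identity $\alpha_i(\bm{z})\,\nabla \log u_i(\bm{z}) = u_i(\bm{z})\nabla \log u_i(\bm{z})/g(\bm{z}) = \nabla u_i(\bm{z})/g(\bm{z})$. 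Because every inequality in the transformed problem uses $\hat g$ on the side where it lower-bounds $g$ (i.e., in denominators and in the auxiliary $\mathscr{B},\mathscr{C},\ldots \ge 1$ constraints), the feasible set of the GP subproblem at iteration $\ell$ is an inner approximation of the SP feasible set, and feasibility is preserved across iterations.

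Given (P1)--(P3), I would then invoke the standard SCA convergence theorem (Beck--Tetruashvili style, or equivalently the inner-approximation theorem of Marks--Wright): starting from a feasible $\bm{v}^{[0]}$, the sequence $\{\bm{v}^{[\ell]}\}$ produced by Step~\ref{Alg:Gpconvexste} is (i) feasible for the original SP at every iteration, (ii) monotonically nonincreasing in the original objective, hence convergent in objective value since the objective is bounded below, and (iii) any limit point $\bm{v}^\star$ satisfies the KKT conditions of the SP. The argument for (iii) goes by writing the KKT system of the GP subproblem at iteration $\ell+1$ around $\bm{v}^{[\ell]}$, taking $\ell\to\infty$, and using (P2)--(P3) to show that the limiting stationarity and complementary-slackness equations coincide exactly with those of the original SP at $\bm{v}^\star$; Slater-type constraint qualification is inherited from the feasibility of the iterates.

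The main obstacle, and the place where I would spend most of the proof effort, is handling the interaction between the AGM approximation (a-i) and the power-sum surrogates (a-ii)--(a-iii). These surrogates are themselves only asymptotically exact as the sharpness parameters ($p$ in $\min/\max$ and $C$ in the Taylor-based exponential) tend to infinity, and they introduce their own auxiliary variables and equality constraints (e.g., $\tilde T^{\downarrow}_{b,r}(t_x)$, $\widetilde{\ell}^{(k)}_{\mathsf{max}}$, $|\widetilde\Upsilon^{(k)}_{\mathsf{min}}|$) that are repeatedly passed through the same AGM condensation. To keep the statement clean, I would separate the argument into two layers: first fix the surrogates (fix $p$ and $C$) and prove KKT convergence of the SCA outer loop for the resulting SP via (P1)--(P3); second, argue via an epi-convergence or continuity argument that, as $p,C\to\infty$, the KKT points of the surrogate SP converge to KKT points of $\bm{\mathcal{P}}$. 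The first layer is the substantive one and is what Proposition~\ref{propo:KKT} actually claims; the second is a sharpness remark that justifies the qualifier ``under tight approximations of (a-ii) and (a-iii)'' in the statement.
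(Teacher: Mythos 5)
Your proposal follows essentially the same route as the paper's proof, which simply notes that under tight (a-ii)/(a-iii) the algorithm is an inner approximation of $\bm{\mathcal{P}}$ and that one need only verify the three Marks--Wright conditions for the AGM condensation (the same (P1)--(P3) you list), citing the standard signomial-programming convergence result. Your write-up is more explicit than the paper's one-line argument — in particular the verification of (P1)--(P3) and the two-layer treatment of the sharpness parameters $p,C$ — but the underlying idea is identical.
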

\begin{proof}

Under the tight approximations of (a-ii) and (a-iii), Algorithm~\ref{alg:cent} leads to an \textit{inner approximation}~\cite{GeneralInnerApp} of~$\bm{\mathcal{P}}$. It is thus sufficient to examine the three characteristics mentioned in~\cite{GeneralInnerApp} for our solver to prove the convergence, which can be done using the methods in~\cite{SignomialGlobal} and omitted for brevity.
\end{proof}

\newpage
\section{Complexity Analysis and Feasibility of {\tt DCLM}}\label{app:complexity}
\subsection{Complexity Analysis of {\tt DCLM}}
{ As we demonstrated in Appendix~\ref{app:optTransform}, our optimization problem $(\bm{\mathcal{P}})$, presented in \eqref{opt:main}, can be transformed to a GP optimization problem. This GP can be further reformulated as a convex optimization problem through a logarithmic transformation~\cite{nesterov1994interior, chiang2005geometric}.  CVXPY then leverages interior-point methods (IPMs) to solve the convex reformulation of this GP problem by iteratively updating the solution using the Karush-Kuhn-Tucker (KKT) conditions (see Proposition~5 in Appendix~\ref{app:optTransform}). In this approach, each iteration will involve solving a Newton system derived from the KKT conditions, which requires solving a linear system with a complexity of approximately \(O(n^2)\), where \(n\) is the size of the problem, i.e., the sum of the number of variables and constraints (for details refer to \cite{nesterov1994interior}). Subsequently, the number of iterations required for IPMs in convex optimization to achieve a desired solution accuracy of \(\epsilon > 0\)  is given by \(O\left(\sqrt{n} \log\left(\frac{1}{\epsilon}\right)\right)\) \cite{nesterov1994interior}. As a result, the complexity of solving GP optimization problem using CVXPY is \(O\left(n^{2.5} \log\left(\frac{1}{\epsilon}\right)\right)\), which is polynomial with respect to $n$ making it a good candidate for large problem sizes. Further, note that implementing {\tt DCLM} does not need following all the detailed mathematical steps presented in Appendix~\ref{app:optTransform}; instead, \textit{it relies solely on the closed-form expressions}, which are highlighted in green boxes throughout Appendix~\ref{app:optTransform}, ensuring that implementation of {\tt DCLM} is mathematically tractable.}

\subsection{Feasibility of Implementing {\tt DCLM} in Real-World Scenarios}\label{app:complexity:feasibility}
{ Feasibility of realizing our proposed {\tt DCLM} method in real world scenarios can be investigated from two angles. First, as we discuss in Appendix Sec. A of Appendix~\ref{app:complexity}, the complexity of solving optimization problem $\bm{\mathcal{P}}$ using CVXPY is \textit{polynomial}, making {\tt DCLM} computationally feasible. It is worth noting that modeling an engineering optimization problem using geometric programming (GP), particularly for large-scale wireless networks entailing large numbers of variables and constraints, is widely exercised~\cite{chiang2005geometric}.

Second, implementing {\tt DCLM} does not require conducting of  all the detailed mathematical steps presented in Appendix~\ref{app:optTransform}; instead, it relies \underline{\textit{solely on the closed-form expressions}}, which are highlighted in green boxes throughout Appendix~\ref{app:optTransform}, ensuring that implementation of {\tt DCLM} is practical. Please note that the detailed mathematical steps in Appendix~\ref{app:optTransform} are intended to provide a comprehensive theoretical reference to demonstrate how such closed-form expressions can be obtained for the complex optimization problems of the form considered in this paper. This step-by-step analysis can be further used as a reference for future works.}

\end{document}